\keywords{Denotational semantics, game semantics, concurrent games}
\newtheorem*{question}{Question}
\newcommand{\IPA}{\mathsf{IA}_{\sslash}}
\newcommand{\IA}{\mathsf{IA}}
\newcommand{\PCF}{\mathsf{PCF}}
\newcommand{\PCFpar}{\mathsf{PCF}_{\sslash}}
\newcommand{\tx}{\mathbb{X}}
\newcommand{\ty}{\mathbb{Y}}
\newcommand{\tunit}{\mathbb{U}}
\newcommand{\tbool}{\mathbb{B}}
\newcommand{\tnat}{\mathbb{N}}
\newcommand{\var}{\mathbf{ref}}
\newcommand{\sem}{\mathbf{sem}}
\newcommand{\tskip}{\mathbf{skip}}
\newcommand{\ttrue}{\mathbf{t\!t}}
\newcommand{\tfalse}{\mathbf{f\!f}}
\newcommand{\ite}[3]{\mathbf{if}\,#1\,#2\,#3}
\newcommand{\tsucc}{\mathbf{succ}}
\newcommand{\pred}{\mathbf{pred}}
\newcommand{\iszero}{\mathbf{iszero}}
\newcommand{\tlet}[3]{\mathbf{let}~ #1 =#2\,\mathbf{in}~#3}
\newcommand{\Y}{\mathcal{Y}}
\newcommand{\newref}{\mathbf{newref}}
\newcommand{\tin}{\mathbf{in}}
\newcommand{\newsem}{\mathbf{newsem}}
\newcommand{\grab}{\mathbf{grab}}
\newcommand{\release}{\mathbf{release}}
\newcommand{\mkvar}{\mathbf{mkvar}}
\newcommand{\mksem}{\mathbf{mksem}}
\newcommand{\plet}[5]{\mathbf{let}~\left(\begin{array}{rcl} #1 &=&#2\\#3 &=&#4 \end{array}\right)\,\mathbf{in}~#5}
\newcommand{\tpred}{\mathbf{pred}}
\newcommand{\pto}{\rightharpoonup}
\newcommand{\tuple}[1]{\langle #1 \rangle}
\newcommand{\dom}{\mathsf{dom}}
\newcommand{\eval}{\Downarrow}
\renewcommand{\obs}{\sim}
\renewcommand{\L}{\mathcal{L}}
\newcommand{\intr}[1]{\llbracket #1 \rrbracket}
\newcommand{\tensor}{\otimes}
\newcommand{\lin}{\multimap}
\newcommand{\ees}{\text{\boldmath$1$}}
\newcommand{\Alt}{\text{$\downuparrows$-$\mathsf{Plays}$}}
\newcommand{\prefix}{\sqsubseteq}
\newcommand{\bij}{\simeq}
\newcommand{\sym}{\cong}
\renewcommand{\dom}{\mathsf{dom}}
\newcommand{\cod}{\mathsf{cod}}
\newcommand{\emptyar}{\text{\boldmath$1$}}
\newcommand{\iso}{\cong}
\newcommand{\simstrat}{\approx}
\newcommand{\ot}{\leftarrow}
\newcommand{\restrict}{\upharpoonright}
\newcommand{\inter}{\circledast}
\renewcommand{\cc}{\mathrm{\,c\!\!\!\!c\,}}
\newcommand{\evm}{\mathsf{ev}}
\newcommand{\der}{\mathsf{der}}
\newcommand{\dig}{\mathsf{dig}}
\newcommand{\mon}{\mathsf{mon}}
\newcommand{\C}{\mathcal{C}}
\renewcommand{\div}{\Uparrow}
\newcommand{\PAlt}{\text{$\mathscr{P}$-$\Alt$}}
\newcommand{\aintr}[1]{[#1]}
\newcommand{\lbl}{\mathrm{lbl}}
\newcommand\mybox[2][]{\tikz[overlay]\node[fill=blue!20,inner sep=2pt,
anchor=text, rectangle, rounded corners=1mm,#1] {#2};\phantom{#2}}
\newcommand{\ensquare}[1]{\mybox[fill=blue!20]{\ensuremath{#1}}}
\newcommand{\Qu}{\mathcal{Q}}
\newcommand{\An}{\mathcal{A}}
\newcommand{\mwrite}[1]{\mathbf{w}#1}
\newcommand{\ok}{\checkmark}
\renewcommand{\read}{\mathbf{r}}
\newcommand{\mgrab}{\mathbf{g}}
\newcommand{\mrelease}{\mathbf{r\!l}}
\newcommand{\assign}{\mathsf{assign}}
\newcommand{\deref}{\mathsf{deref}}
\newcommand{\sgrab}{\mathsf{grab}}
\newcommand{\srelease}{\mathsf{release}}
\newcommand{\cell}{\mathsf{cell}}
\newcommand{\ssem}{\mathsf{lock}}
\newcommand{\Inn}{\mathrm{Inn}}
\newcommand{\comp}{\mathrm{comp}}
\newcommand{\NAlt}{\text{$\circlearrowleft$-$\mathsf{Plays}$}}
\newcommand{\ntilde}[1]{\mathscr{S}_-(#1)}
\newcommand{\ptilde}[1]{\mathscr{S}_+(#1)}
\newcommand{\bsigma}{{\boldsymbol{\sigma}}}
\newcommand{\pr}{\partial}
\newcommand{\btau}{{\boldsymbol{\tau}}}
\newcommand{\NAltStrat}{\text{$\circlearrowleft$-$\mathsf{Unf}$}}
\newcommand{\bmu}{{\boldsymbol{\mu}}}
\newcommand{\bnu}{{\boldsymbol{\nu}}}
\newcommand{\labr}{\mathsf{r}}
\newcommand{\labl}{\mathsf{l}}
\newcommand{\confp}[1]{\mathscr{C}^+(#1)}
\newcommand{\tildep}[1]{\mathscr{S}^+(#1)}
\newcommand{\bcc}{\mathbf{\,c\!\!\!\!c\,}}
\newcommand{\balpha}{{\boldsymbol{\alpha}}}
\newcommand{\brho}{{\boldsymbol{\rho}}}
\newcommand{\blambda}{{\boldsymbol{\lambda}}}
\newcommand{\bpi}{{\boldsymbol{\pi}}}
\newcommand{\init}{\mathsf{init}}
\newcommand{\bder}{\mathbf{der}}
\newcommand{\bdig}{\mathbf{dig}}
\newcommand{\bmon}{\mathbf{mon}}
\newcommand{\cleq}{\trianglelefteq}
\newcommand{\D}{\mathcal{D}}
\newcommand{\force}{\mathsf{force}}
\newcommand{\preval}{\mathsf{eval}}
\newcommand{\bcell}{\mathbf{cell}}
\newcommand{\bsem}{\mathbf{lock}}
\newcommand{\id}{\mathrm{id}}
\newcommand{\rintr}[1]{\llparenthesis #1 \rrparenthesis}
\newcommand{\fv}{\mathsf{fv}}
\newcommand{\new}{\mathbf{new}}
\newcommand{\assrt}{\mathbf{assume}}
\newcommand{\tnot}{\mathbf{not}}
\newcommand{\PNAlt}{\text{$\mathscr{P}$-$\NAlt$}}
\newcommand{\Con}{\mathrm{Con}}
\newcommand{\gcc}{\mathsf{gcc}}
\newcommand{\gce}{\mathsf{gce}}
\newcommand{\just}{\mathsf{j}}
\newcommand{\oedges}{\text{$O$-$\mathsf{edges}$}}
\newcommand{\pedges}{\text{$P$-$\mathsf{edges}$}}
\newcommand{\x}{\mathsf{x}}
\newcommand{\y}{\mathsf{y}}
\newcommand{\coll}{{\smallint\!}}
\newcommand{\collo}{{\smallint\hspace{-7.4pt}\circ}}
\newcommand{\depth}{\mathsf{depth}}
\newcommand{\ind}{\mathrm{ind}}
\newcommand{\rf}[1]{\mathsf{mf}(#1)}
\newcommand{\longcov}[1]{{\stackrel{#1}{\mathrel-\joinrel\relbar\joinrel\subset\,}}}
\newcommand{\fo}{\mathsf{f\!o}}
\newcommand{\fst}{{\mathsf{{sh}}}}
\newcommand{\Q}{\mathscr{Q}}
\newcommand{\flow}{\mathsf{flow}}
\newcommand{\q}{q}
\newcommand{\inj}{\mathsf{inj}}
\newcommand{\recomp}{\mathsf{recomp}}
\newcommand{\bdelta}{{\boldsymbol{\delta}}}
\renewcommand{\v}{\mathsf{v}}
\newcommand{\confc}[1]{\mathscr{C}^{\mathsf{c}}(#1)}
\newcommand{\w}{\mathsf{w}}
\newcommand{\labm}{\mathsf{m}}
\newcommand{\scott}{\sqsubseteq}
\newcommand{\confv}[1]{\mathscr{C}^v(#1)}
\newcommand{\tildev}[1]{\mathscr{S}^v(#1)}
\newcommand{\proj}[1]{#1\!\!\downarrow}
\newcommand{\slice}{\mathsf{slice}}
\newcommand{\apred}{\mathsf{pred}}
\newcommand{\ev}[1]{|#1|}
\newcommand{\conflict}{\mathrel{\#}}
\newcommand{\pol}{\mathrm{pol}}
\newcommand{\imc}{\rightarrowtriangle}
\newcommand{\conf}[1]{\mathscr{C}(#1)}
\newcommand{\cov}{{{\,\mathrel-\joinrel\subset\,}}}
\renewcommand{\tilde}[1]{\mathscr{S}(#1)}
\renewcommand{\qu}{\mathbf{q}}
\newcommand{\run}{\qu}
\newcommand{\done}{\checkmark}
\definecolor{grey}{rgb}{.7,.7,.7}
\newcommand{\grey}[1]{{\color{grey}#1}}
\definecolor{Red}{cmyk}{0,1,2,0}
\newcommand{\red}[1]{{\color{Red}#1}}
\definecolor{Green}{rgb}{0,1,0}
\newcommand{\green}[1]{{\color{Green}#1}}
\definecolor{Cyan}{cmyk}{1,0,0,0}
\newcommand{\cyan}[1]{{\color{Cyan}#1}}
\definecolor{purple}{rgb}{.7, 0,1}
\newcommand{\purple}[1]{{\color{purple}#1}}
\newcommand{\NegAlt}{\text{$\downuparrows$-$\mathsf{Strat}$}}
\newcommand{\NegAltwb}{\NegAlt^{\mathsf{wb}}}
\newcommand{\NegAltinn}{\NegAlt^{\mathsf{inn}}}
\newcommand{\NegAltinnwb}{\NegAlt^{\mathsf{wb,inn}}}
\newcommand{\NegAltvis}{\NegAlt^{\mathsf{vis}}}
\newcommand{\NegAltviswb}{\NegAlt^{\mathsf{wb,vis}}}
\newcommand{\Games}{\mathsf{Vis}}
\newcommand{\CG}{\text{$\imc$-$\mathsf{Strat}$}}
\newcommand{\NNegAltwb}{\NNegAlt^{\mathsf{wb}}}
\newcommand{\NNegAlt}{\text{$\circlearrowleft$-$\mathsf{Strat}$}}
\newcommand{\CGwb}{\CG^{\mathsf{wb}}}
\newcommand{\GM}{\mathsf{GM}}
\newcommand{\PNStrat}{\text{$\mathscr{P}$-$\NAltStrat$}}
\newcommand{\CGwbinn}{\CG^{\mathsf{wb},\mathsf{inn}}}
\newcommand{\AltStrat}{\text{$\downuparrows$-$\mathsf{Unf}$}}
\newcommand{\CGseq}{\CG^{\mathsf{seq}}}
\newcommand{\CGwbseq}{\CG^{\mathsf{wb},\mathsf{seq}}}
\newcommand{\CGwbseqinn}{\CG^{\mathsf{wb},\mathsf{seq},\mathsf{inn}}}
\newcommand{\CGseqinn}{\CG^{\mathsf{seq},\mathsf{inn}}}
\newcommand{\Rel}{\mathsf{Rel}}
\newcommand{\CGwbvis}{\CG^{\mathsf{wb}, \mathsf{vis}}}
\newcommand{\pview}[1]{\ulcorner#1\urcorner}
\newcommand{\pviews}[1]{\ulcorner\!\ulcorner #1 \urcorner\!\urcorner}
\def\pb#1{\save[]+<16 pt,0 pt>:a(#1)\ar@{pb{}}[]\restore}
\theoremstyle{plain}
\begin{document}

\title[Disentangling Parallelism and Interference in Game Semantics]{Disentangling
Parallelism and Interference\texorpdfstring{\\}{} in Game Semantics}

\author[S. Castellan]{Simon
Castellan\lmcsorcid{0000-0001-5886-5793}}[a]
\address{Inria, Univ Rennes, CNRS, IRISA}
\email{Simon.Castellan@inria.fr} 

\author[P. Clairambault]{Pierre
Clairambault\lmcsorcid{0000-0002-3285-6028}}[b]
\address{CNRS, Aix Marseille Univ, LIS, Marseille, France}
\email{Pierre.Clairambault@cnrs.fr}

\begin{abstract}
\noindent Game semantics is a denotational semantics presenting compositionally
the computational behaviour of various kinds of effectful programs. One
of its celebrated achievement is to have obtained full abstraction
results for programming languages with a variety of computational
effects, in a \emph{single} framework. This is known as the
\emph{semantic cube} or \emph{Abramsky's cube}, which for sequential
deterministic programs establishes a correspondence between certain
conditions on strategies (``innocence'', ``well-bracketing'',
``visibility'') and the absence of matching computational effects.

Outside of the sequential deterministic realm, there are still a wealth
of game semantics-based full abstraction results; but they no
longer fit in a unified canvas. In particular, Ghica and Murawski's
fully abstract model for shared state concurrency ($\IPA$) does
\emph{not}
have a matching notion of pure parallel program -- we say that
parallelism and interference (\emph{i.e.} state plus semaphores) are
\emph{entangled}.
In this paper we construct a causal version of Ghica and Murawski's
model, also
fully abstract for $\IPA$. We provide compositional conditions
\emph{parallel innocence} and
\emph{sequentiality}, respectively banning interference and parallelism,
and leading to four full abstraction results. To our knowledge, this is
the first extension of Abramsky's \emph{semantic cube} programme beyond
the sequential deterministic world.
\end{abstract}

\maketitle

\section*{Introduction}

How to \emph{prove} that a program $P$ is correct, or
equivalent to $P'$? This simple question,
prerequisite for formally validating software, lies at
the heart of decades of work in semantics.
Its study prompted a wealth of developments, each with its
methodology and scope. \emph{Operational semantics} axiomatizes
execution
directly on syntax, while \emph{denotational semantics} gives meaning
to programs by embedding them in a syntax-independent
mathematical space.

\emph{Operational semantics} is powerful and extensible,
perfectly fit for
formalization in a proof assistant -- it is, for instance, behind the
celebrated
CompCert project \cite{DBLP:journals/cacm/Leroy09}. On the other
hand, its deployment often follows from
ad-hoc choices, and it is not robust to variations in the language. It
is tied
to syntax and struggles with compositionality\footnote{Operational
semantics \emph{can} be made compositional, but behind lie denotational
structures: for instance, the operational semantics behind the recent
\emph{Compositional CompCert} \cite{DBLP:conf/popl/StewartBCA15} ``bears
much in common'' (quoting the paper) with Ghica and Tzevelekos'
operational reconstruction of game semantics
\cite{DBLP:journals/entcs/GhicaT12}.}. \emph{Denotational semantics} is
syntax-independent, and often more principled. It is a great tool to
reason about program equivalence (two programs being equivalent if they
denote the same object), to prove general properties of languages
(\emph{e.g.} termination), and it comes with compositional reasoning
principles. The wider mathematical space in which programs are embedded
sometimes suggests new useful constructs (it is the birth story of
Linear Logic \cite{DBLP:journals/tcs/Girard87}). In exchange, it is more
mathematically demanding and often quite brittle: distinct fragments of
the same language may require radically different representations.
Traditional denotational semantics (\emph{e.g.} Scott domains) model
programs as functions, through their input/output behaviour. Effects
(\emph{e.g.} state, non-determinism, \emph{etc}) can be captured via
monads which do not readily combine. Though combining effects has been a
driving question in denotational semantics these past decades, it is
hardly a streamlined process. For instance, though there is significant
recent research activity around domain settings supporting probabilities
and higher-order
\cite{DBLP:conf/lics/StatonYWHK16,DBLP:journals/pacmpl/VakarKS19}, it is
unclear how they combine with non-determinism
\cite{DBLP:journals/mscs/Goubault-Larrecq17}, let alone concurrency; nor
how all these models relate together. 

\emph{Game semantics} \cite{ho,ajm}, though also denotational, takes a
different approach: instead of a function it represents a
program as a \emph{strategy}, a collection of
(representations of) its interactions against execution
environments. Once executions are first-class citizens
(called \emph{plays}) one can characterise those achievable with
specific effects. This led to a wealth of fully abstract models,
rewarded in 2017 by the Alonzo Church Award (from the ACM SIGLOG, the
EATCS, the EACSL, and the Kurt G\"odel Society). To cite the
announcement:

\begin{quote}
``Game semantics has changed
the landscape of programming language semantics by giving a unified
view of the denotational universes of many different languages. This is
a remarkable achievement that was not previously thought to be within
reach.''
\end{quote}

But are games models truly ``unified''? For \emph{deterministic
sequential} programs, absolutely: various degrees of control
and state are indeed captured as additional conditions on one single
canvas \cite{abramsky1999game} -- this is the \emph{semantic
cube} or \emph{Abramsky cube}. But beyond the sequential
deterministic world, the picture is not so clear. The classic fully
abstract models for \emph{finite non-determinism}
\cite{DBLP:conf/lics/HarmerM99}, for \emph{probabilistic choice}
\cite{DBLP:conf/lics/DanosH00} or for \emph{parallelism} \cite{gm} all
rely on the presence of state. Until recently, there were no fully
abstract model for any of these \emph{without state} -- or in
the language of game semantics, there were no notions of
non-deterministic, probabilistic or parallel \emph{innocence}. Following
the phrasing of the title, our understanding of these
effects was \emph{entangled} with state.

However, this picture is currently shifting.
Recently, two notions of non-deterministic innocence were proposed
independently \cite{lics14,DBLP:conf/lics/TsukadaO15} -- the two
settings also handling probabilistic innocence
\cite{DBLP:journals/corr/TsukadaO14,lics18}. Technically, these settings
differ significantly. But conceptually, both enrich strategies with
explicit \emph{branching} information. Though the novelty may seem
minor, this is in fact a major schism
with respect to traditional game semantics, in that this branching
information is typically not observable. So instead of a strategy being
merely a formal description of how a program is observed by a certain
type of contexts, the model starts to carry more intensional,
\emph{causal} information, typically inaccessible to the environment but
which nonetheless finds its use in capturing compositionally the
computational behaviour expressible by certain programming features.
This suggests that to disentangle parallelism and state, we must
adequately represent the \emph{branching structure} of parallel
computation, the (non-observable) causal patterns of pure
parallel programs.

Enter \emph{concurrent games}. Concurrent games are a family of game
semantics models questioning in various ways the premise that the basic
building block should be totally, chronologically ordered
\emph{plays}. Pioneered by Melli\`es and others
\cite{DBLP:conf/lics/AbramskyM99,DBLP:conf/concur/Mellies04,DBLP:conf/concur/MelliesM07,DBLP:conf/tlca/FaggianP09},
they have lately been under intense development, prompted by
new definitions due to Rideau and Winskel \cite{lics11}. The name comes from their relationship with the
so-called \emph{true concurrency} approach to concurrency theory,
following which one represents causal dependence and independence of
events explicitly rather than resorting to
interleavings. Besides making concurrent games a natural target to
model concurrent languages and process calculi
\cite{DBLP:conf/concur/CastellanC16,DBLP:journals/pacmpl/CastellanY19},
it provides us with the required causal description of programs.

\paragraph{\textbf{Contributions.}} We \emph{disentangle} parallelism
and state -- or rather parallelism and \emph{interference}, which we
intend to also encompass semaphores. More precisely, we provide a fully
abstract model of Idealized Parallel Algol ($\IPA$), the paradigmatic
language used in the game semantics literature to study shared memory
concurrency on top of a higher-order language. Our model is a causal
version of that of Ghica and Murawski \cite{gm}, which
additionally supports compositional conditions of \emph{parallel
innocence} and \emph{sequentiality} respectively eliminating
interference and parallelism. Accordingly the paper presents \emph{four}
full abstraction results, following all combinations of parallelism and
interference on top of the pure language $\PCF$. Thus this is a
\emph{semantic square} \cite{abramsky1999game}, the first such result
pushing Abramsky's programme beyond the sequential deterministic world.

Of the four full abstraction results glued together, three
are classics: Hyland and Ong's full abstraction
for $\PCF$ \cite{ho}, Abramsky and McCusker's full abstraction for
Idealized Algol ($\IA$) \cite{am}, and Ghica and Murawski's full
abstraction for $\IPA$ \cite{gm}. The fourth result is a
variation of the full abstraction for $\PCF$ with respect to
parallel evaluation initially presented in conference format
in \cite{lics15} -- in particular, the notion
of parallel innocence comes from there\footnote{The paper \cite{lics15}
had two main contributions: a new games model called
\emph{thin concurrent games}, and \emph{parallel innocence}. The
detailed construction of the former appears in \cite{cg2}, but not
parallel innocence. The present paper provides, among other things,
detailed proofs for the second contribution of \cite{lics15}.} and was
developed as part as the first author's PhD thesis
\cite{DBLP:phd/hal/Castellan17}. 

These four results \cite{ho,am,gm,lics15} vary significantly in their
technical underpinnings. For the purposes of this paper, this left us
with the task, more challenging than anticipated, of providing the
\emph{glue}. Accordingly, a significant part of the paper revisits the
results of \cite{ho} and \cite{ajm} in a language closer to concurrent
games, mixing ideas from HO \cite{ho}, AJM \cite{ajm} and
\emph{asynchronous} \cite{DBLP:conf/lics/Mellies05} games.  In
doing so we hope that this paper, gathering in a single framework
several important developments of the field, could also serve as a
modern entry point to game semantics.  Accordingly we wrote it with the
newcomer in mind, not assuming prior knowledge on game
semantics. The development is self-contained, with however a number of
details postponed to the appendix. We also take the
time to show how our model relates to other game semantics frameworks,
hopefully helping the reader get a panoramic perspective on the field. More
generally, we try to keep the text as pedagogical as
possible. This of course, has a cost in that the paper is intimidatingly
lengthy; and we hope the readers will excuse us for that.

\paragraph{\textbf{Outline.}} In Section~\ref{sec:syntax}, we start
by describing $\IPA$ and its fragments.
In Section~\ref{sec:gs}, we introduce our version of alternating games,
its interpretation of $\PCF$, and
link with more traditional game semantics.
In Section~\ref{sec:seq_effects}, we show how (the absence of) control
and state may be captured via conditions of strategies -- we present
\emph{Abramsky's cube} and some of its consequences. In Section~\ref{sec:cg}, we present our causal fully abstract model for $\IPA$,
based on \emph{thin concurrent games}. In Section~\ref{sec:par_inn} we
develop one of the key contributions of this paper, \emph{parallel
innocence}: we leverage the causal description of programs offered by
thin concurrent games to characterize the causal shapes definable with
pure parallel higher-order programs. In Section~\ref{sec:fa_ia_pcf}, we
study the
\emph{sequential} fragment of our causal games model, and by
linking it with the sequential model of Sections~\ref{sec:gs} and
\ref{sec:seq_effects} we show full abstraction results for $\IA$ and
$\PCF$. Finally, in Section~\ref{sec:definability} we prove our last
full abstraction result, for $\PCFpar$.

\section{\texorpdfstring{$\IPA$}{IA//} and its fragments}\label{sec:syntax}
\emph{Idealized Parallel Algol} ($\IPA$) is a higher-order, simply-typed,
call-by-name concurrent language with shared memory and semaphores. We
also introduce fragments: 
\[
\begin{array}{cl}
\PCFpar & \text{is the fragment without interference,}\\
\IA & \text{is the fragment without parallelism, and}\\
\PCF & \text{has neither interference nor parallelism.}
\end{array}
\]

\subsection{Types} The \textbf{types} of $\IPA$ are the following, 
highlighting types relative to interference.
\[
\begin{array}{rcll}
A, B &::=& \tunit \mid \tbool \mid \tnat \mid A \to B & \PCF\\
&&\mid \var \mid \sem &\text{+interference}
\end{array}
\]

Above, $\tunit$ is a \emph{unit} type with only one value, and
$\tbool$ and $\tnat$ are types for \emph{booleans} and
\emph{natural numbers}.
In the presence of interference,
$\var$ is a type for \emph{references} storing natural numbers, while
$\sem$ is for \emph{semaphores}. We refer to
$\tunit, \tbool$ and $\tnat$ as \emph{ground types}, and
use $\tx, \ty$ to range over those.
Let us now give the term constructions and typing rules.

\subsection{Terms and Typing} We
define the terms of the language directly via typing rules.

\begin{figure}
\boxit[$\PCF$]{
\begin{mathpar}
\inferrule
	{ }
	{ \Gamma \vdash \tskip : \tunit }
\and
\inferrule
	{ }
	{ \Gamma \vdash \ttrue : \tbool }
\and
\inferrule
	{ }
	{ \Gamma \vdash \tfalse : \tbool }
\and
\inferrule
	{ }
	{ \Gamma \vdash n : \tnat }
\and
\inferrule
	{ }
	{ \Gamma, x : A \vdash x : A }
\and
\inferrule
	{ \Gamma, x : A \vdash M : B }
	{ \Gamma \vdash \lambda x^A.\,M : A\to B }
\and
\inferrule
	{ \Gamma \vdash M : A \to B \\
	  \Gamma \vdash N : A }
	{ \Gamma \vdash M\,N : B }
\and
\inferrule
	{ \Gamma \vdash M : \tunit \\
	  \Gamma \vdash N : \tx }
	{ \Gamma \vdash M;\,N : \tx }
\and
\inferrule
	{ \Gamma \vdash M : \tbool \\
	  \Gamma \vdash N_1 : \tx \\
	  \Gamma \vdash N_2 : \tx }
	{ \Gamma \vdash \ite{M}{N_1}{N_2} : \tx }
\and
\inferrule
	{ \Gamma \vdash M : \tnat }
	{ \Gamma \vdash \tsucc\,M : \tnat }
\and
\inferrule
	{ \Gamma \vdash M : \tnat }
	{ \Gamma \vdash \tpred\,M : \tnat }
\and
\inferrule
	{ \Gamma \vdash M : \tnat }
	{ \Gamma \vdash \iszero\,M : \tbool }
\and
\inferrule
	{ \Gamma, x : \tx \vdash M : \ty \\ \Gamma \vdash N : \tx }
	{ \Gamma \vdash \tlet{x}{N}{M} : \ty }
\and
\inferrule
	{ \Gamma \vdash M : A \to A }
	{ \Gamma \vdash \Y\,M : A }
\end{mathpar}
}
\boxit[$\mathsf{+interference}$]{
\begin{mathpar}
\inferrule
{ \Gamma, x : \var \vdash M : \tx }
{ \Gamma \vdash \newref\,x\!\!:=\!n\,\tin\,M : \tx }
\and
\inferrule
{ \Gamma \vdash M : \var \\
  \Gamma \vdash N : \tnat }
{ \Gamma \vdash M\!\!:=\!N : \tunit }
\and
\inferrule
{ \Gamma \vdash M : \var }
{ \Gamma \vdash !M : \tnat }
\and
\inferrule
{ \Gamma, x : \sem \vdash M : \tx }
{ \Gamma \vdash \newsem\,x\!\!:=\!n\,\tin\,M : \tx }
\and
\inferrule
{ \Gamma \vdash M : \sem }
{ \Gamma \vdash \grab\,M : \tunit }
\and
\inferrule
{ \Gamma \vdash N : \sem }
{ \Gamma \vdash \release\,N : \tunit }
\and
\inferrule
{ \Gamma \vdash M : \tnat \to \tunit \\
  \Gamma \vdash N : \tnat }
{ \Gamma \vdash \mkvar\,M\,N : \var }
\and
\inferrule
{ \Gamma \vdash M : \tunit \\
  \Gamma \vdash N : \tunit }
{ \Gamma \vdash \mksem\,M\,N : \sem }
\end{mathpar}
}
\boxit[$\mathsf{+parallelism}$]{
\begin{mathpar}
\inferrule
{ \Gamma, x_1 : \tx, x_2 : \tx \vdash M : \ty \\
  \Gamma \vdash N_1 : \tx \\
  \Gamma \vdash N_2 : \tx }
{ \Gamma \vdash \plet{x_1}{N_1}{x_2}{N_2}{M} : \ty }
\end{mathpar}
}
\caption{Typing rules for $\IPA$}
\label{fig:typing}
\end{figure}

\textbf{Contexts} are lists $x_1 : A_1, \dots,
x_n : A_n$. \textbf{Typing judgments} have the form $\Gamma \vdash M :
A$ with $\Gamma$ a context and $A$ a type. In addition to 
Figure \ref{fig:typing}, we consider present an explicit
exchange rule allowing us to permute the order of variable declarations
in contexts.  The eliminator rules for basic datatypes are restricted to
eliminate only to ground types -- general eliminators are defined as
syntactic sugar: \emph{e.g.} a conditional to $\var$ may be obtained as
\[
\inferrule
{ \Gamma \vdash M : \tbool \\ 
  \Gamma \vdash N_1 : \var \\
  \Gamma \vdash N_2 : \var }
{ \Gamma \vdash \mkvar\,(\lambda
x.\,\ite{M}{(N_1\!\!:=\!x)}{(N_2\!\!:=\!x)})\,(\ite{M}{!N_1}{!N_2})
: \var } 
\]

The \textbf{bad variable} and \textbf{bad semaphore} constructs
$\mkvar$ and $\mksem$ are a common occurrence in the game semantical
literature.  While a ``\emph{good}'' reference is tied to a memory
location, many game models also comprise so-called ``\emph{bad
variables}'' inhabiting $\var$ but not behaving as actual variables.
Full abstraction results \cite{am,gm}
often require a corresponding syntactic construct $\mkvar$ allowing one
to \emph{form} bad variables by appending arbitrary read and write
methods\footnote{Though McCusker proved that equational full
abstraction holds for $\IA$ without $\mkvar$
\cite{DBLP:conf/mfps/McCusker03}.}. The same holds for semaphores.

\subsection{Further syntactic sugar.}\label{subsec:sugar}
First of all, for any type $A$ there is a \textbf{divergence}
$\vdash \bot_A : A$, any looping program.
Given $\Gamma \vdash M, N : \tunit$,
an equality test $\Gamma \vdash M =_\tunit N : \tbool$
may be defined as $M;\,N;\,\ttrue$. Likewise, for $\Gamma \vdash M, N :
\tbool$ we define $\Gamma \vdash M =_\tbool N : \tbool$ as
$\ite{M}{N}{(\ite{N}{\tfalse}{\ttrue})}$, and $\Gamma \vdash M=_\tnat N
: \tbool$ similarly, with the obvious recursive program.
\newpage

We refer to constants of ground type as \textbf{values}; we use $v$ to
range over those, and $n, b$ or $c$ to range over
values of respective types $\tnat, \tbool$ or $\tunit$. We introduce
a $n$-ary case construct branching on all
values of ground types. By abuse of notation, we write $V
\subseteq_f \tx$ for any finite subset of the values of ground type
$\tx$. Writing $V = \{v_1, \dots, v_n\}$, we set
\[
\begin{array}{l}
\mathbf{case}\,M\,\mathbf{of}\\
\hspace{20pt}v_1 \mapsto N_1\\
\hspace{20pt}v_2 \mapsto N_2\\
\hspace{20pt}\dots\\
\hspace{20pt}v_n \mapsto N_n
\end{array}
\qquad
\stackrel{\text{def}}{=}
\qquad
\begin{array}{l}
\mathbf{let}\,x\,=\,M\,\mathbf{in}\\
\hspace{35pt}\mathbf{if}\,x\,=_\tx\,v_1\,\mathbf{then}\,N_1\\
\hspace{16pt}\mathbf{else}\,\mathbf{if}\,x\,=_\tx\,v_2\,\mathbf{then}\,N_2\\
\hspace{40pt}\dots\\
\hspace{16pt}\mathbf{else}\,\mathbf{if}\,x\,=_\tx\,v_n\,\mathbf{then}\,N_n\\
\hspace{16pt}\mathbf{else}\,\bot
\end{array}
\]
of type $\ty$ in context $\Gamma$ if $\Gamma \vdash M : \tx$ and $\Gamma
\vdash N_i : \ty$ for all $1\leq i \leq n$.

The $\mathbf{let}$ construct is crucial in this paper: as we shall see
later on, strategies may evaluate a variable \emph{once}, and provide a
different continuation for each possible value. This behaviour cannot be
replicated strictly without $\mathbf{let}$, see Section~\ref{subsubsec:def_inn} for a more detailed discussion.

\subsection{Operational semantics.} We recall the small-step
operational semantics \cite{gm}. We fix a
countable set $\L$ of \textbf{memory locations}. A \textbf{store} is a
partial map $s : \L \pto \mathbb{N}$ with finite domain where
$\mathbb{N}$ stands, overloading notations, for natural
numbers.  \textbf{Configurations} of the operational semantics are
tuples $\tuple{M, s}$ where $s$ is a store with $\dom(s) = \{\ell_1,
\dots, \ell_n\}$ and $\Sigma \vdash M : A$ with $\Sigma = \ell_1 : \var,
\dots, \ell_i : \var, \ell_{i+1} : \sem, \dots, \ell_n : \sem$.

Reduction rules have the form $\tuple{M, s} \leadsto \tuple{M', s'}$
where $\dom(s) = \dom(s')$; we write $\leadsto^*$ for the reflexive
transitive closure.
If $\vdash M : \tx$, we write $M\eval$ if $\tuple{M, \emptyset}
\leadsto^* \tuple{v, \emptyset}$ for some value $v$.
We give in Figure \ref{fig:operational} the reduction rules -- there and
from now on in the paper we use the notation $\uplus$ to denote the
usual set-theoretic union, when it is known disjoint. For rules which
do not interact with the state, we omit the state component -- it is
simply left unchanged by stateless basic reductions, and propagated
upwards by stateless context rules.

\begin{figure}
\begin{minipage}{0.34\linewidth}
\boxit[Basic red. for $\PCF$]{
\vspace{9pt}
\[
\scalebox{.96}{$
\begin{array}{rcl}
(\lambda x^A.\, M)\,N &\leadsto& M[N/x] \\
\tskip;\,N &\leadsto& N \\
\ite{b}{N_{\ttrue}}{N_{\tfalse}} &\leadsto& N_b \\
\tsucc\,n &\leadsto& n+1 \\
\pred\,0 & \leadsto& 0\\
\pred\,(n+1) & \leadsto& n\\
\iszero\,0 &\leadsto& \ttrue \\
\iszero\,(n+1) &\leadsto& \tfalse \\
\Y\,M &\leadsto& M\,(\Y\,M)\\
\!\!\tlet{x}{v}{M} &\leadsto& M[v/x]
\end{array}
$}
\]
\vspace{8.5pt}
}
\end{minipage}
\hfill
\begin{minipage}{0.645\linewidth}
\boxit[Basic reductions for $\mathsf{interference}$]{
\vspace{5pt}
\[
\scalebox{.96}{$
\begin{array}{rcl}
\newref\,x\,\tin\,v &\leadsto& v \\
\newsem\,x\,\tin\,v &\leadsto& v \\
(\mkvar\,M\,N)\!\!:=\!n &\leadsto& M\,n \\
!(\mkvar\,M\,N) &\leadsto& N \\
\grab(\mksem\,M\,N) &\leadsto& M \\
\release(\mksem\,M\,N) &\leadsto& N
\end{array}$}
\]
}
\boxit[Interfering reductions]{
\[
\scalebox{.96}{$
\begin{array}{rcl}
\tuple{!\ell, s \uplus \{\ell \mapsto n\}} &\leadsto& \tuple{n, s
\uplus \{\ell \mapsto n\}} \\
\tuple{\ell\!\!:=\!n, s \uplus \{\ell \mapsto \_\}} &\leadsto&
\tuple{\tskip, s \uplus \{\ell \mapsto n\}} \\
\tuple{\grab(\ell), s \uplus \{\ell \mapsto 0\}} &\leadsto&
\tuple{\tskip, s \uplus \{\ell \mapsto 1\}} \\
\!\!\tuple{\release(\ell), s \uplus \{\ell \mapsto n\}} &\leadsto&
\tuple{\tskip, s \uplus \{\ell \mapsto 0\}} ~~~~\text{($n>0$)}
\end{array}$}
\]
}
\end{minipage}
\boxit[Basic reduction for $\mathsf{parallelism}$]{
\vspace{5pt}
\[
\scalebox{.96}{$
\begin{array}{rcl}
\plet{x_1}{v_1}{x_2}{v_2}{M} &\leadsto& M[v_1/x_1,v_2/x_2]
\end{array}$}
\]
}
\boxit[Stateless context rules]{
\begin{mathpar}
\scalebox{.96}{$
\inferrule
{ M \leadsto M' }
{ M\,N \leadsto M'\,N }
$}
\and
\scalebox{.96}{$
\inferrule
{ M \leadsto M' }
{ \ite{M}{N_1}{N_2} \leadsto \ite{M'}{N_1}{N_2} }
$}
\and
\scalebox{.96}{$
\inferrule
{ M \leadsto M' }
{ \tsucc\,M \leadsto \tsucc\,M' }
$}
\and
\scalebox{.96}{$
\inferrule
{ M \leadsto M' }
{ !M \leadsto\,!M' }
$}
\and
\scalebox{.96}{$
\inferrule
{ M \leadsto M' }
{ \iszero\,M \leadsto \iszero\,M' }
$}
\and
\scalebox{.96}{$
\inferrule
{ N \leadsto N' }
{ M\!\!:=\!N \leadsto M\!\!:=\!N' }
$}
\and
\scalebox{.96}{$
\inferrule
{ M \leadsto M' }
{ \grab(M) \leadsto \grab(M') }
$}
\and
\scalebox{.96}{$
\inferrule
{ M \leadsto M' }
{ \release(M) \leadsto \release(M') }
$}
\and
\scalebox{.96}{$
\inferrule
{ M \leadsto M' }
{ M\!\!:=\!v \leadsto M'\!\!:=\!v }
$}
\and
\scalebox{.96}{$
\inferrule
{ N \leadsto N' }
{ \tlet{x}{N}{M} \leadsto \tlet{x}{N'}{M} }
$}
\and
\scalebox{.96}{$
\inferrule
{ N_1 \leadsto N'_1 }
{ \plet{x_1}{N_1}{x_2}{N_2}{M} \leadsto \plet{x_1}{N'_1}{x_2}{N_2}{M} }
$}
\and
\scalebox{.96}{$
\inferrule
{ N_2 \leadsto N'_2 }
{ \plet{x_1}{N_1}{x_2}{N_2}{M} \leadsto \plet{x_1}{N_1}{x_2}{N'_2}{M} }
$}
\end{mathpar}
}
\boxit[Stateful context rules]{
\vspace{5pt}
\begin{mathpar}
\scalebox{.96}{$
\inferrule*[right=$\text{($\ell\in \L~\mathrm{fresh}$)}$]
{ \tuple{M[\ell/x], s\uplus \{\ell \mapsto n\}} \leadsto
\tuple{M'[\ell/x], s'\uplus \{\ell \mapsto n'\}} }
{ \tuple{\newref\,x\!\!:=\!n\,\tin\,M,s} \leadsto
\tuple{\newref\,x\!\!:=\!n'\,\tin\,M',s'} }
$}
\and
\scalebox{.96}{$
\inferrule*[right=$\text{($\ell\in \L~\mathrm{fresh}$)}$]
{ \tuple{M[\ell/x], s\uplus \{\ell \mapsto n\}} \leadsto
\tuple{M'[\ell/x], s'\uplus \{\ell \mapsto n'\}} }
{ \tuple{\newsem\,x\!\!:=\!n\,\tin\,M,s} \leadsto
\tuple{\newsem\,x\!\!:=\!n'\,\tin\,M',s'} }
$}
\end{mathpar}
}
\caption{Operational semantics of $\IPA$}
\label{fig:operational}
\end{figure}

\subsection{Fragment languages}
Besides $\PCF$, we consider three main languages of interest:
\[
\begin{array}{rcl}
\PCFpar &=& \PCF + \mathsf{parallelism}\\
\IA &=& \PCF + \mathsf{interference}\\
\IPA &=& \PCF + \mathsf{interference} + \mathsf{parallelism}
\end{array}
\]

$\IA$ is a variant of \emph{Idealized Algol with active expressions}
\cite{am}, differing only in that it has semaphores. This is not a
significant difference, as semaphores are definable from state in a
sequential language.
Likewise, $\IPA$ is close to the language of \cite{gm}: it differs only
in that the parallelism operation is more general.  For $\Gamma \vdash M
: \tunit$ and $\Gamma \vdash N : \tunit$ we may define their parallel
composition $\Gamma \vdash M \parallel N : \tunit$ (as in \cite{gm}) by
\[
M \parallel N ~~ = ~~ \plet{x}{M}{y}{N}{\tskip}\,.
\]

Conversely, for \emph{e.g.} $\Gamma \vdash N_1 : \tnat$, $\Gamma
\vdash N_2 : \tnat$ and $\Gamma, x_1 : \tnat, x_2 : \tnat \vdash M : A$,
the present parallel let construction is definable via state and
parallel composition of commands:
\[
\begin{array}{rcl}
\plet{x_1}{N_1}{x_2}{N_2}{M} &=& \newref\,v_1\!\!:=\!0\,\tin\\
                     & & \newref\,v_2\!\!:=\!0\,\tin\,
(v_1\!\!:=\!N_1 \parallel v_2\!\!:=\!N_2);\,M[!v_1/x_1, !v_2/x_2]
\end{array}
\]

\subsection{Observational Equivalence and Full
Abstraction}\label{sec:def_obs}

Here, $\L$ may refer to any of the fragments above. A \textbf{$\L$-context} for the judgment $\Gamma
\vdash A$ is a term $C[]$ of $\L$ with a hole, s.t. for any $\Gamma
\vdash M : A$ in $\L$ we have $\vdash C[M] : \tunit$ obtained by
replacing the hole with $M$. Two terms $\Gamma \vdash M, N : A$ of $\L$
are \textbf{$\L$-observationally equivalent} iff
\[
M \obs_\L N \qquad \Leftrightarrow \qquad \text{for all $C[]$ a
$\L$-context for $\Gamma \vdash A$, $\quad(C[M]\eval \quad
\Leftrightarrow \quad C[N]\eval)$}
\]

We omit $\L$ when it is clear from the context.
Observational equivalence is usually regarded as the canonical
equivalence on
programs: $\L$-observationally equivalent programs are intercheangeable
as long as the evaluation context is in $\L$.
Accordingly, denotational semantics often aims to capture observational
equivalence.
An interpretation of programs $\intr{-}$ into some mathematical universe
is called
\textbf{fully abstract} whenever
\[
M \obs N \qquad \Leftrightarrow \qquad \intr{M} = \intr{N}
\]
for all $\Gamma \vdash M, N : A$.
\emph{Full abstraction} is a gold standard in denotational
semantics, as it captures the best possible match between a language and
its semantics, ensuring that the denotational semantics is
\emph{complete} for proving equivalence between programs.

\section{Game Semantics for $\PCF$}
\label{sec:gs}

\subsection{Games and Strategies} We present first a game semantics of
$\PCF$. Though it is sequential, our presentation is non-standard, mixing
features of AJM \cite{ajm}, HO \cite{ho} and asynchronous games
\cite{DBLP:conf/lics/Mellies05} -- this is to facilitate the interplay
between all the games models involved. We skip a number
of details, found in Appendix \ref{app:alt}.

Game semantics presents higher-order computation as an exchange of
tokens between two players, called ``Player'' and ``Opponent''.
\emph{Player} stands for the program under evaluation -- events/moves
attributed to Player are observable computational events resulting from
its execution: calls to variables, program phrases converging to a
value.  \emph{Opponent} stands for the execution environment. Their
interaction follows rules depending on the type of the program under
scrutiny. In setting up a game semantics the first step is to extract
from the type a structure, called a \emph{game} or an \emph{arena},
which presents all the observable computational events available when
interacting on this type, along with their respective causal
dependencies.

\subsubsection{Affine arenas} We first introduce our representation of
types as games in the \emph{affine} case, \emph{i.e.} if any
computational event can appear at most once -- this is merely 
to first help the reader build up intuition before handling replication.

Consider $(\tunit \to \tunit) \to \tbool$, where affineness implies that
each argument may be called at most once. Once a call-by-name execution
on that type is initiated, the available observable events are the
following: \emph{(1)} the term may directly converge to $\ttrue$ or
$\tfalse$, without evaluating its argument; \emph{(2)} it may call its
argument (\emph{i.e.} it evaluates to $\lambda f^{\tunit \to
\tunit}.\,M$ with $M$ having $f$ in head position). In the case
\emph{(2)} the control goes back to the environment, which plays for
$f$: it may prompt $f$ to return the unique value $\tskip$, or to itself
call its argument.  Finally, if $f$ calls its argument, the
corresponding sub-term may reduce to a value.

\begin{figure}
\begin{minipage}{.3\linewidth}
\[
\xymatrix@R=8pt@C=5pt{
&&&\qu^-
\ar@{.}[dll]
\ar@{.}[d]
\ar@{.}[drr]\\
&\run^+
\ar@{.}[dl]
\ar@{.}[d]&&\ttrue^+\ar@{~}[rr]&&\tfalse^+\\
\run^-  \ar@{.}[d]&\done^-\\
\done^+\\
}
\]
\caption{An affine arena}
\label{ex:arena1}
\end{minipage}
\hfill
\begin{minipage}{.3\linewidth}
\[
\xymatrix@R=11.5pt@C=10pt{
&&\qu^-&\qu^-\\
\qu^+   \ar@{.}[urr]
\ar@{~}[r]&
\qu^+   \ar@{.}[urr]&
\done^+ \ar@{.}[u]&
\done^+ \ar@{.}[u]\\
\done^- \ar@{.}[u]&
\done^- \ar@{.}[u]\\~
}
\]
\caption{$\tunit \lin (\tunit \tensor \tunit)$}
\label{ex:ar_lin_tens}
\end{minipage}
\hfill
\begin{minipage}{.33\linewidth}
\[
\xymatrix@R=-3.5pt{
(\tunit
\ar@{}[r]|\to&
\tunit)   \ar@{}[r]|\to&
\tbool\\
&&{{\qu^{-}}}\\
&{\run^{+}}\\
{\run^{-}}&\\
{\done^+}&&\\
&{\done^-}\\
&&{\ttrue^+}}
\]
\caption{An alternating play}
\label{ex:play1}
\end{minipage}
\end{figure}

Overall, these events along with their causal dependencies give rise to
the diagram in Figure \ref{ex:arena1}. It is read from top to bottom,
with the dashed lines representing the dependency relation. Nodes are
called \emph{moves} or \emph{events}, and are labeled with a polarity,
$-$ for events due to the environment, and $+$ for events due to the
program.  Finally, the wiggly line between $\ttrue^+$ and $\tfalse^+$
indicates \emph{conflict}: it represents the fact that only one of these
two values may be observed in one execution, whereas all the other pairs
of events could conceivably appear together. The reader may convince
themselves that indeed, the diagram does represent the observable events
in a call-by-name evaluation of $(\tunit \to \tunit) \to \tbool$ as
outlined in the previous paragraph.  We insist that those are the
computational events that are \emph{observable} in the interface with
the environment: the program may perform internal computation; a program
in an extension of $\PCF$ with state could possibly store and read
from a local variable, \emph{etc}. But those are not
\emph{observable} by a context, thus are not represented in the arena.

To formalize the arena as a mathematical structure, we use \emph{event
structures}\footnote{More precisely, those are \emph{prime event
structures with binary conflict}.}:

\begin{defi}\label{def:es}
An \textbf{event structure (\emph{es})} is a triple $E = (\ev{E},
\leq_E, \conflict_E)$, where $\ev{E}$ is a (countable) set of
\textbf{events}, $\leq_E$ is a partial order called \textbf{causal
dependency} and $\conflict_E$ is an irreflexive symmetric binary
relation on $\ev{E}$ called \textbf{conflict}, satisfying:
\[
\begin{array}{rl}
\text{\emph{finite causes}:}& \forall e\in \ev{E},~[e]_E = \{e'\in
\ev{E} \mid
e'\leq_E e\}~\text{is finite}\\
\text{\emph{conflict inheritance}:}&
\forall e_1 \conflict_E e_2,~\forall e_2 \leq_E e'_2,~e_1 \conflict_E
e'_2\,.
\end{array}
\]

An \textbf{event structure with polarities (\emph{esp})} is an event
structure $A$ together with a function $\pol_A : \ev{A} \to \{-,+\}$
assigning to each event a \emph{polarity}.
\end{defi}

Figure \ref{ex:arena1} displays an esp. The wiggly line indicates
conflict, but we will not put wiggly lines between all conflicting pairs
of events, as long as missing conflicts may be deduced by \emph{conflict
inheritance}. A conflict that cannot be deduced by inheriting an earlier
conflict is called a \textbf{minimal conflict}.  As with Figure
\ref{ex:arena1}, we will represent types as esps.  In fact, the event
structures arising via the interpretation of types have a very
restricted form.  In the definition below, we use the notation $e \imc_E
e'$ in an event structure $E$ to mean \textbf{immediate causality},
\emph{i.e.} $e <_E e'$ with no other event strictly in between.

\begin{defi}\label{def:arena}
An \textbf{arena} is an esp $(A, \leq_A, \conflict_A, \pol_A)$
satisfying:
\[
\begin{array}{rl}
\text{\emph{alternating:}} & \text{if $a_1 \imc_A a_2$, $\pol_A(a_1)
\neq \pol_A(a_2)$,}\\ 
\text{\emph{forestial:}} & \text{if $a_1 \leq_A a$ and $a_2 \leq_A a$,
then $a_1 \leq_A a_2$ or $a_2 \leq_A a_1$,}\\ 
\text{\emph{race-free:}} & \text{if $a_1, a_2 \in \ev{A}$ are in minimal
conflict, then $\pol_A(a_1) = \pol_A(a_2)$.} 
\end{array}
\]

Besides, a \textbf{$-$-arena} additionally satisfies the condition:
\[
\begin{array}{rl}
\text{\emph{negative:}} & \text{if $a \in \min(A)$, then $\pol_A(a) =
-$,}
\end{array}
\]
where $\min(A)$ stands for the set of minimal events of $A$.
\end{defi}

Types will only yield $-$-arenas, but throughout the paper we will
use the general case. Finally, though we motivated Definition
\ref{def:es} with arenas, event structures will have other uses.
Notably, from Section~\ref{sec:cg} onwards, strategies will also be
event structures.

\subsubsection{Basic Constructions}\label{subsubsec:basic_constr} We
give a few basic constructions on event structures and arenas which will
allow us to construct in a systematic way, from any type of $\PCF$, a
$-$-arena.

\begin{figure}
\begin{minipage}{0.15\linewidth}
\[
\xymatrix@R=10pt@C=10pt{
\qu^-
\ar@{.}[d]\\
\done^+
}
\]
\vspace{2pt}
\caption{$\tunit$}
\label{fig:ar_unit}
\end{minipage}
\hfill
\begin{minipage}{.24\linewidth}
\[
\xymatrix@R=15pt@C=10pt{
&\qu^-
\ar@{.}[dl]
\ar@{.}[dr]\\
\ttrue^+\ar@{~}[rr]&&
\tfalse^+
}
\]
\caption{$\tbool$}
\label{fig:ar_bool}
\end{minipage}
\hfill
\begin{minipage}{.24\linewidth}
\[
\xymatrix@R=15pt@C=10pt{
&&\qu^-
\ar@{.}[dll]
\ar@{.}[dl]
\ar@{.}[d]
\ar@{.}[dr]\\
0^+     \ar@{~}[r]&
1^+     \ar@{~}[r]&
2^+     \ar@{~}[r]&
\dots
}
\]
\caption{$\tnat$}
\label{fig:ar_nat}
\end{minipage}
\hfill
\begin{minipage}{.24\linewidth}
\[
\xymatrix@R=7pt@C=5pt{
\qu_{\grey{0}}^-        \ar@{.}[d]&
\qu_{\grey{1}}^-        \ar@{.}[d]&
\qu_{\grey{2}}^-        \ar@{.}[d]&
\dots\\
\done^+_{\grey{0}}&\done^+_{\grey{1}}&\done^+_{\grey{2}}&\dots
}
\]
\caption{$\oc \tunit$}
\label{fig:oc_unit}
\end{minipage}
\end{figure}

We give $-$-arenas for the ground types of $\PCF$, in
Figures \ref{fig:ar_unit}, \ref{fig:ar_bool} and \ref{fig:ar_nat}, using
the same notations $\tunit, \tbool$ and $\tnat$ for the arenas as for
the types. For $\tnat$, even though the picture only shows conflict
between neighbours, all positive events are meant to be in pairwise
conflict.

We write $\ees$ for the \textbf{empty \emph{es}}, with no event. If $A$
is an \emph{esp}, we write $A^\perp$ for its \textbf{dual}, the
\emph{esp} with same events, causality and conflict, but the opposite
polarities, \emph{i.e.} $\pol_{A^\perp}(a) = -\pol_A(a)$ for all $a \in
\ev{A}$.  The \emph{simple parallel composition} is defined as follows.

\begin{defi}
If $E_1, E_2$ are two es, their \textbf{simple parallel
composition} $E_1 \parallel E_2$ has 
\[
\begin{array}{rcrcl}
\text{\emph{events}:} &~~~& \ev{E_1\parallel E_2} &=& \{1\}\times
\ev{E_1} ~~ \uplus ~~ \{2\}\times \ev{E_2}\\
\text{\emph{causality}:} && (i, e) \leq_{E_1 \parallel E_2} (j, e')
&\Leftrightarrow&
i=j~\&~e \leq_{E_i}
e'\\
\text{\emph{conflict}:} && (i, e) \conflict_{E_1 \parallel E_2} (j, e')
&\Leftrightarrow& i=j ~\&~ e \conflict_{E_i} e'\,.
\end{array}
\]

Moreover, if $E_1$ and $E_2$ have polarities (\emph{i.e.} are esp), then
$E_1 \parallel E_2$ also has polarities, defined as $\pol_{E_1 \parallel
E_2}(1, e) = \pol_{E_1}(e)$ and $\pol_{E_1 \parallel E_2}(2, e) =
\pol_{E_2}(e)$.
\end{defi}

By extension, we often write $X \parallel Y$ for the tagged disjoint
union $(\{1\} \times X) \uplus (\{2\} \times Y)$ of two sets $X$ and
$Y$. In the simple parallel composition of arenas $A$ and $B$, the two
are side by side with no interaction.  The arena $A \parallel B$
adequately represents a tensor type $A \tensor B$ where the
two resources $A$ and $B$ may be accessed in any order -- although
$\PCF$ does not have such a type, this construction will play an
important role in the sequel. We also introduce the \textbf{product}
$A_1 \with A_2$ of $A_1$ and $A_2$ $-$-arenas, defined as for
$A_1\parallel A_2$ with conflict
\[
(i, e) \conflict_{A_1 \with A_2} (j, e') ~~ \Leftrightarrow ~~ i\neq j
~\vee~ (i = j \wedge e \conflict_{A_i} e')\,,
\]
\emph{i.e.} $A_1$ and $A_2$ are in conflict. The constructions
$\parallel$ and $\with$ have obvious $n$-ary generalizations.  We also
introduce another construction on arenas, the affine arrow $A\lin B$:

\begin{defi}\label{def:lin_wo}
Let $A, B$ be arenas with $B$ \textbf{pointed}, \emph{i.e.} with
\emph{exactly one} minimal $b_0 \in \ev{B}$. 

The \textbf{affine arrow} $A\lin B$ has the components of $A^\perp \parallel B$ except for
causality, set as:
\[  
\begin{array}{rcl}
\leq_{A\lin B}  &=& \leq_{A^\perp \parallel B} \cup\,\{((2, b_0), (1, a))
\mid a \in
\ev{A}\}\,.
\end{array}
\]
\end{defi}

This completes an interpretation of $\PCF$ types as pointed
$-$-arenas capturing the causal dependency between computational events
in an affine evaluation. For instance, on $A \to B$
computation starts in $B$, but as soon as the initial move of $B$ has
been played computation in $A$ may start, with polarity reversed. At
this point, the reader may verify that indeed, the arena $(\tunit \lin
\tunit) \lin \tbool$ obtained by applying these constructions is indeed
the one in Figure \ref{ex:arena1}.

\subsubsection{General arrow}\label{subsubsec:gen_arr}
Definition \ref{def:lin_wo} suffices for the types of $\PCF$ (which
yield pointed arenas). But we aim to show that strategies have the
structure of a \emph{Seely category}, a traditional categorical model
for Intuitionistic Linear Logic -- and that structure includes tensors,
which do not preserve pointedness.  To generalize $A \lin B$ for $B$
non-pointed, it is natural to set one copy of $A$ for each initial move
of $B$. More concretely, $A \lin B$ has events and polarities
\[
\ev{A \lin B} = (\parallel_{b \in \min(B)} A)^\perp \parallel B\,,
\]
where $\min(B)$ is the set of \emph{minimal events} of $B$. The order
has $(2, b) \leq (2, b')$ iff $b \leq_B b'$, $(1, (b, a)) \leq (1, (b',
a'))$ iff $b=b'$ and $a \leq_A a'$, $(2, b) \leq (1, (b', a))$ iff
$b=b'$, and $(1, (b, a)) \leq (2, b')$ never, exactly matching the arrow
arena of HO games \cite{ho}. But having two copies of $A$ is in tension
with affineness, so we use conflict to tame
this copying.  The construction is illustrated in Figure
\ref{ex:ar_lin_tens}, displaying the arena $\tunit \lin (\tunit \tensor
\tunit)$. There are two copies of the $\tunit$ on the left, but still,
linearity is guaranteed by the addition of conflict.

To define conflict, writing $\chi_{A, B} : \ev{A\lin B}
\to \ev{A^\perp \parallel B}$ for the obvious map, we use:

\begin{restatable}{lem}{conflictlin}\label{lem:conflict_lin}
Consider $A$ and $B$ two $-$-arenas.

Then, there is a unique $\conflict_{A\lin B}$ making $A\lin B$ a
$-$-arena such that for all down-closed finite $x \subseteq \ev{A\lin
B}$, $x \in \conf{A\lin B}$ iff $\chi_{A, B}\,x \in \conf{A^\perp
\parallel B}$ with $\chi_{A, B}$ injective on $x$. 
\end{restatable}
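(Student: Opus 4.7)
The plan is to first establish uniqueness by a standard event-structure argument and then to construct $\conflict_{A\lin B}$ explicitly, verifying each axiom in turn. For uniqueness, recall that in any event structure two distinct events $e_1, e_2$ satisfy $e_1 \conflict e_2$ iff the finite down-closed set $[e_1] \cup [e_2]$ is inconsistent; thus once the consistency predicate on finite down-closed sets is pinned down by the stated equivalence, the binary conflict relation is forced, so there is at most one such $\conflict_{A\lin B}$.

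For existence, I would define $e_1 \conflict_{A\lin B} e_2$ to hold precisely when the set $[e_1]_{A\lin B} \cup [e_2]_{A\lin B}$ violates the target condition: either $\chi_{A,B}$ is not injective on it, or its $\chi_{A,B}$-image is inconsistent in $A^\perp \parallel B$. Irreflexivity is immediate, since the down-closure of a single event either lies entirely in $B$ on the right component or lies in one $A$-copy together with its root in $B$, and $\chi_{A,B}$ is injective on such a set with consistent image. Symmetry is obvious, and conflict inheritance holds because both failure modes are preserved when enlarging $[e_2]$ to $[e_2']$ for $e_2 \leq_{A\lin B} e_2'$. Forestiality and alternation of $A\lin B$ transfer directly from $A$ and $B$ through the explicit description of $\leq_{A\lin B}$ given just above the statement. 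Negativity follows since $\min(A\lin B) = \{(2, b) \mid b \in \min(B)\}$, all of which are negative because $B$ is. Race-freeness requires a case analysis of minimal conflicts: those coming from a minimal $\conflict_A$-conflict inside a single copy retain their common polarity by race-freeness of $A$ combined with the uniform polarity reversal on the left component, similarly for minimal $\conflict_B$-conflicts, and the new conflicts between distinct copies $(1, (b, a))$ and $(1, (b', a))$ with $b \neq b'$ satisfy race-freeness since both events have polarity $-\pol_A(a)$.

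For the consistency characterization itself, the forward direction is essentially by definition: any failure of injectivity or image-consistency on a down-closed $x$ is witnessed by a pair $e_1, e_2 \in x$ for which $[e_1] \cup [e_2] \subseteq x$ already violates the condition, so $e_1 \conflict_{A\lin B} e_2$ by definition, contradicting the consistency of $x$. Conversely, if $x$ satisfies both properties then so does every $[e_1] \cup [e_2] \subseteq x$, so no pair in $x$ conflicts; since pairwise consistency of a down-closed finite set implies consistency in an event structure, $x$ is consistent. I expect the main obstacle to be cleanly cataloguing the minimal conflicts of $A \lin B$ when verifying race-freeness, since one must carefully separate the new copy-conflicts from conflicts inherited from $A$ in potentially overlapping cases.
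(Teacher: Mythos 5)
Your proof is correct, but it takes a mildly different route from the paper's. The paper (Appendix, ``Arrow arena'') defines $\conflict_{A\lin B}$ by an explicit five-clause case analysis -- conflicts inherited from $A$ within a copy, conflicts inherited from $B$, and new conflicts between copies $(1,(b_1,a_1)), (1,(b_2,a_2))$ when $b_1 \conflict_B b_2$, $\min(a_1)=\min(a_2)$, or $a_1 \conflict_A a_2$ -- declares the arena axioms ``routine'', and then proves the configuration characterisation, the only non-trivial point being the two failure modes for left-left pairs in the ``if'' direction; uniqueness is dispatched exactly as you do, by noting that an event structure on a fixed event set is determined by its configurations. You instead take the characterisation as the \emph{definition}: $e_1 \conflict e_2$ iff $[e_1]\cup[e_2]$ violates injectivity of $\chi_{A,B}$ or consistency of its image. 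One can check that this unfolds to precisely the paper's five clauses (injectivity fails on $[e_1]\cup[e_2]$ exactly when $\min(a_1)=\min(a_2)$, since $A$ is forestial), so the two constructions agree; what your route buys is that the characterisation and uniqueness become essentially tautological, at the price of pushing the work into the event-structure and arena axioms, in particular the catalogue of minimal conflicts for race-freeness -- which is exactly the point you flag. That catalogue does close as you expect, with one case your sketch omits: a minimal conflict between \emph{distinct} copies can also arise from a minimal $A$-conflict $a_1 \conflict_A a_2$ with $\min(a_1)\neq\min(a_2)$ (not only from the copy-sharing pairs $(1,(b,a)),(1,(b',a))$), but it is handled by the same argument, race-freeness of $A$ plus the uniform polarity reversal on the left; and mixed left/right conflicts are never minimal, being inherited from the underlying $(2,b_1)\conflict(2,b_2)$. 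With that case added, your argument is complete and at least as detailed as the paper's.
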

\begin{proof}
See Appendix \ref{app:arrow}.
\end{proof}

\subsubsection{Playing on Arenas}
Now we formulate a notion of execution, relying on the fact that event
structures support a natural notion of \emph{state} or \emph{position},
called \emph{configuration}.

\begin{defi}
A (finite) \textbf{configuration} of es $E$ is a finite set $x \subseteq
\ev{E}$ which is 
\[  
\begin{array}{ll}
\text{\emph{down-closed}:}& \forall e\in x,~\forall e' \in
\ev{E},~e'\leq_E e\implies
e' \in x\\
\text{\emph{consistent}:}& \forall e,e'\in x,~\neg(e \conflict_E e')\,.
\end{array}
\]

We write $\conf{E}$ for the set of finite configurations on $E$.
\end{defi}

For $x, y \in \conf{E}$, we write $x \cov y$ if there is $e \in \ev{E}$
such that $e \not \in x$ and $y = x \cup \{e\}$; $\cov$ is the
\textbf{covering relation}. If $x \cov x \cup \{e\}$, we say that $x$
\textbf{enables} $e$ or \textbf{extends by} $e$, written $x \vdash_E e$.
Configurations of an arena represent \emph{valid
execution states}. We may now leverage this to define \emph{plays},
which provide a mathematical notion of \emph{execution}.

\begin{defi}\label{def:alt_play}
An \textbf{alternating play} on arena $A$ is a sequence $s = s_1
\dots s_n$ which is:
\[  
\begin{array}{ll}
\text{\emph{valid}:}& \forall 1\leq i \leq n,~\{s_1, \dots, s_i\} \in
\conf{A}\,,\\
\text{\emph{non-repetitive}:}& \forall 1\leq i,j \leq n,~s_i =s_j
\implies i=j\,,\\
\text{\emph{alternating}:}& \forall 1\leq i \leq n-1,~\pol_A(s_i)\neq
\pol_A(s_{i+1})\,,\\
\text{\emph{negative}:} & \text{if $n\geq 1$, then $\pol_A(s_1) = -$.}
\end{array}
\]

We write $\Alt(A)$ for the set of alternating plays on $A$.
\end{defi}

The notation $\Alt(A)$ means to suggest that an alternating play
has two possible states: O if $s$ has even length and the
last move (if any) is by Player, and P otherwise: each new move
transitions between them.  We denote the empty play with $\varepsilon$,
and the prefix ordering with $\prefix$. In the sequel we sometimes
apply $\Alt(-)$ to esps other than arenas.

Plays record individual executions, by giving a
chronological account of events observed throughout computation. For
instance, Figure \ref{ex:play1} displays a play on the arena $(\tunit
\lin \tunit) \lin \tbool$ of Figure \ref{ex:arena1}. It is also read
from top to bottom. Each move corresponds to a node in Figure
\ref{ex:arena1} -- as each move in the arena corresponds to a given type
component, the identity of each move in Figure \ref{ex:play1} is
signified by its position under the matching type component.

\subsubsection{Strategies} Given a term of type $A$ we may, given the
adequate technical machinery, ask whether a given play describes a valid
execution for that term. The play of Figure \ref{ex:play1}, for
instance, describes a valid execution for $\lambda f^{\tunit
\to \tunit}.\,f\,\tskip;\,\ttrue : (\tunit \to \tunit) \to \tbool$:
after Opponent starts computation, reduction immediately
gets stuck with a variable $f$ in head position. This is an observable
event, corresponding to Player calling its argument with $\run^+$. Then,
Opponent proceeds to call his argument with $\run^-$, triggering the
evaluation of the subterm $\tskip$. This (trivially) converges to a
value, which is observable and corresponds to $\done^+$. The control
goes back to $f$ (Opponent), which evaluates to $\tskip$ as well via
observable $\done^-$. This triggers the evaluation of $\ttrue$, leading
to the observable $\ttrue^+$ that terminates computation.

Figure \ref{ex:play1} represents one possible execution of
$\lambda f^{\tunit \to \tunit}.\,f\,\tskip;\,\ttrue : (\tunit \to
\tunit) \to \tbool$. In general a term is represented by a
\emph{strategy}, which aggregates all possible executions.

\begin{defi}\label{def:alt_strat}
A \textbf{alternating strategy} $\sigma : A$ on $-$-arena $A$ is
$\sigma \subseteq \Alt(A)$ which is:
\[  
\begin{array}{ll}
\text{\emph{non-empty}:} & \varepsilon \in \sigma\\
\text{\emph{prefix-closed}:} & \forall s\sqsubseteq s' \in \sigma,~s \in
\sigma\\
\text{\emph{deterministic}:} & \forall s a_1^+, s a_2^+ \in \sigma,~a_1
= a_2\\
\text{\emph{receptive}:} & \forall s \in \sigma,~sa^- \in \Alt(A)
\implies sa \in
\sigma
\end{array}
\]

An \textbf{alternating prestrategy} $\sigma : A$ satisfies
all these conditions except for \emph{receptive}.
\end{defi}

In this definition we have started using a convention followed
throughout this paper: when introducing an event, we sometimes
annotate it with a superscript to indicate its polarity. For instance,
``$\forall s a_1^+ \in \sigma, \dots$ is shorthand for $\forall s a_1
\in \sigma$ s.t. $\pol(a_1) = +, \dots$''.

We will see later on how to compute the strategy for a
term. It is a strength of game semantics that this may be done either
compositionally by induction on the syntax following the methodology of
denotational semantics, or operationally via an abstract machine
\cite{DBLP:journals/entcs/GhicaT12}.

\subsection{Replication and symmetry} \label{subsec:replication}
In this paper we introduce early on the machinery for
replication.  It requires a small jump in abstraction, but
fixes the arenas once and for all.

\subsubsection{Arenas with symmetry}\label{subsubsec:ar_sym} 
Figure \ref{ex:arena1} displays the arena corresponding to affine
executions\footnote{Affineness is enforced by \emph{non-repetitive}
in Definition \ref{def:alt_play}. Rather than expand arenas,
it is tempting to simply lift it. For this to be sound, it
becomes then necessary to include additional structure in plays: the
\emph{justification pointers}. This is the choice made in \emph{HO
games} \cite{ho}. This will be detailed in Section~\ref{subsec:pointers}.} on type $(\tunit \to \tunit) \to \tbool$.  To go
beyond affineness, we \emph{expand} the arena to allow multiple calls to
arguments -- for $(\tunit \to \tunit) \to \tbool$, we obtain an infinite
arena as drawn in Figure \ref{fig:replication}.

\begin{figure}
\begin{minipage}{.47\linewidth}
\begin{center}
\includegraphics[scale=.6]{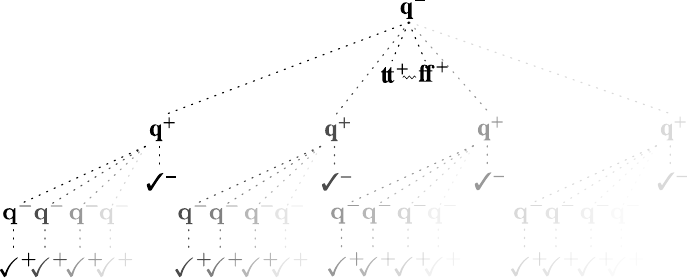}
\end{center}
\caption{$(\tunit \to \tunit) \to \tbool$ with replication}
\label{fig:replication}
\end{minipage}
\hfill
\begin{minipage}{.48\linewidth}
\[
\scalebox{.71}{$
\xymatrix@R=10pt@C=10pt{
&&&\qu^-
\ar@{.}[dll]
\ar@{.}[d]
\ar@{.}[drr]\\
&\qu^+_{\grey{0}}       \ar@{.}[dl]
\ar@{.}[d]
\ar@{.}[dr]
&&\ttrue^+&&\qu^+_{\grey{1}}
\ar@{.}[dl]
\ar@{.}[d]
\ar@{.}[dr]\\
\qu^-_{\grey{{0,0}}}
\ar@{.}[d]
&\qu^-_{\grey{0,1}}
\ar@{.}[d]
&\done^-_{\grey{0}}&&\qu^-_{\grey{1,0}}
\ar@{.}[d]
&\qu^-_{\grey{1,1}}
\ar@{.}[d]&\done^-_{\grey{1}}\\
\done^+_{\grey{0,0}}&\done^+_{\grey{0,1}}&&&
\done^+_{\grey{1,0}}&\done^+_{\grey{1,1}}
}$}
\]
\caption{Configuration of $\oc (\oc \tunit \lin \tunit) \lin \tbool$}
\label{fig:exp_conf}
\end{minipage}
\end{figure}
In the picture, it seems like \emph{e.g.} all moves $\qu^+$ are
interchangeable. This is true in spirit but every move must
be a distinct event of the arena.
Concretely, the expanded arena is computed following the methodology of
linear logic: the type $(\tunit \to \tunit) \to \tbool$ is represented
by $\oc (\oc \tunit \lin \tunit) \lin \tbool$ rather than
$(\tunit \lin \tunit) \lin \tbool$. Here, $\oc$ is an \emph{exponential
modality} in the sense of linear logic. The full definition of $\oc A$
will appear in Definition \ref{def:bang}, but its events are $\ev{\oc A}
= \mathbb{N} \times \ev{A}$, pairs $(n,a)$ where $n$ is called a
\emph{copy index}. So in reality, a precise picture of the arena for
$(\tunit \to \tunit) \to \tbool$ with replication would be a version of
Figure \ref{fig:replication} where some events are tagged by 
\emph{copy index} -- see Figure \ref{fig:exp_conf} for an
example of a configuration of $\oc (\oc \tunit \lin \tunit) \lin \tbool$
with explicit indices pictured as grey subscripts.

Expanding the arena so opens up the way to replication
without compromising the \emph{non-repetitive} condition: a strategy may
replay the ``same'' move but with different copy indices.
But then, it is necessary to identify strategies behaving in the same way
save for the choice of copy indices. To that end, following
the approach initiated in \cite{lics14} we enrich arenas with a notion
of \emph{symmetry}, capturing reindexings between configurations. 

\begin{defi}\label{def:isofam}
An \textbf{isomorphism family} on event structure $E$ is a set
$\tilde{E}$ of
bijections between configurations of $E$, satisfying the additional
conditions:
\[
\begin{array}{rl}
\text{\emph{groupoid:}}& \tilde E\text{~contains identity bijections;
is closed
under composition and inverse.}\\
\text{\emph{restriction:}}&
\text{for all~$\theta : x \bij y \in \tilde{E}$ and $x \supseteq x' \in
\conf{E}$,}\\
&\text{there is a (necessarily) unique $\theta \supseteq \theta' \in
\tilde{E}$
such that $\theta' : x' \bij y'$.}\\
\text{\emph{extension:}}&\text{for all $\theta : x \bij y \in
\tilde{E}$, $x
\subseteq x' \in \conf{E}$,}\\
&\text{there is a (not necessarily unique) $\theta \subseteq \theta' \in
\tilde{E}$ such that $\theta' : x' \bij y'$.}
\end{array}
\]

Then $(E, \tilde E)$ is an \textbf{event structure with
symmetry (\emph{ess})}. If $A$ has polarities preserved by $\tilde{A}$,
$A$ is an \textbf{event structure with symmetry and polarities
(\emph{essp})}.
\end{defi}

If $A$ is an ess, we refer to the elements of $\tilde{A}$ as
\textbf{symmetries}. We write $\theta : x \sym_{A} y$ to mean that
$\theta : x \simeq y$ is a bijection such that $\theta \in \tilde{A}$,
and write $x = \dom(\theta)$ and $y = \cod(\theta)$.  It is an easy
exercise to prove that symmetries are automatically order-isomorphisms
\cite{symmetry}, where configurations inherit a partially ordered
structure from the causal dependency of $A$.  We regard isomorphism
families as \emph{proof-relevant} equivalence relations: they convey the
information of which configurations are interchangeable, witnessed by an
explicit bijection.

From now on, \textbf{arenas} have an isomorphism
family. It comprises only identity symmetries on basic arenas $\tunit,
\tbool, \tnat$ and $\emptyar$. The
previous constructions on arenas extend transparently:
$A^\perp$ has the same symmetries as $A$. The symmetries on $A\parallel
B$ are those of the form
\[
\begin{array}{rcrcl}
\theta_A \parallel \theta_B &:& x_A \parallel x_B &\sym& y_A \parallel
y_B\\
&& (1, a)&\mapsto &(1, \theta_A(a))\\
&& (2, b)&\mapsto &(2, \theta_B(b))
\end{array}
\]
for $\theta_A : x_A \sym_A y_A$ and $\theta_B : x_B \sym_B y_B$. Those
on $A \with B$ are the symmetries on $A\parallel B$ that are bijections
between configurations of $A\with B$, \emph{i.e.} one of $\theta_A$ and
$\theta_B$ must be empty. Note that these constructions
$\parallel$ and $\with$ apply to arbitrary event structures with
symmetry.

For $A \lin B$, if $x, y \in
\conf{A\lin B}$ and $\theta : x \simeq y$ is any bijection, defining
first $\chi_{A, B}\,\theta$ as
\[
\chi_{A,B}\,x \stackrel{\chi_{A,B}^{-1}}{\simeq} x
\stackrel{\theta}{\simeq} y \stackrel{\chi_{A,B}}{\simeq}
\chi_{A,B}\,y\,,
\]
we set $\theta : x \sym_{A\lin B} y$ when $\theta$ is an
order-isomorphism satisfying $\chi_{A,B}\,\theta : \chi_{A,B}\,x
\sym_{A^\perp \parallel B} \chi_{A,B}\,y$.

The main arena construction introducing new symmetric events is the
\textbf{exponential}:

\begin{defi}\label{def:bang}
Let $A$ be a $-$-arena. The $-$-arena $\oc A$ has components
\[
\begin{array}{rrcl}
\text{\emph{events}:} & \ev{\oc A} &=& \tnat \times \ev{A}\\
\text{\emph{causality}:} & (i, a) \leq_{\oc A} (j, a') &\Leftrightarrow&
i=j~\&~a\leq_A a'\\
\text{\emph{conflict}:} & (i, a) \conflict_{\oc A} (j, a')
&\Leftrightarrow&
i=j~\&~a \conflict_A a'\\
\text{\emph{polarities}:} & \pol_{\oc A}(i, a) &=& \pol_A(a)
\end{array}
\]
along with isomorphism family comprising as symmetries those bijections
of the form
\[
\begin{array}{rcrcl}
\theta &:& \parallel_{n\in \mathbb{N}} x_n &\iso& \parallel_{n\in
\mathbb{N}} y_n\\
&& (n, a) &\mapsto& (\pi(n), \theta_n\,a)
\end{array}
\]
for some permutation $\pi \in \varsigma(\mathbb{N})$ and some family
$(\theta_n)_{n\in \mathbb{N}}$ with $\theta_n : x_n \sym_{A} y_{\pi(n)}$
for all $n \in \mathbb{N}$.
\end{defi}

This definition applies in general to any ess.  Figure \ref{fig:oc_unit}
shows the plain \emph{esp} of $\oc \tunit$ with copy
indices indicated as grey subscripts -- its symmetries are all
order-isomorphisms between configurations.  While $\oc(-)$ does not
match a type construction of $\PCF$, we shall follow
Girard \cite{DBLP:journals/tcs/Girard87} and define the arrow type of
arenas \emph{with} replication as $A \to B  = ~ \oc A \lin B$.

\subsubsection{Symmetry on plays and strategies} Symmetry will allow us
to identify strategies, but it should also
affect how strategies play. In the presence of explicit copy indices, a
fundamental property is \emph{uniformity}. Intuitively, a strategy is
uniform if its behaviour does not depend (up to symmetry) on the
specific copy indices used by its environment. 

The first step towards uniformity is to transport symmetry to plays.

\begin{defi}\label{def:alt_play_equiv}
Let $A$ be an arena and $s, t \in \Alt(A)$. We say that $s$ and $t$ are
\textbf{symmetric}, written $s \sym_A t$, if $s$ and $t$ have the same
length, and we have
\[
\theta_{s,t}^j = \{(s_i,t_i)\mid 1\leq i \leq j\} : \{s_1, \dots, s_j\}
\sym_A \{t_1,
\dots, t_j\}
\]
a symmetry in $\tilde{A}$ for all $1\leq j \leq n$; 
writing $s = s_1 \dots s_n$ and $t = t_1 \dots t_n$.
\end{defi}

Those readers familiar with AJM games may find comfort in the following
fact.

\begin{fact}\label{fact:alt_ajm}
For an arena $A$, the tuple $\tuple{\ev{A}, \pol_A, \Alt(A), \sym_A}$ is
an AJM game \cite{ajm}.
\end{fact}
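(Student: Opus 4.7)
The plan is to verify the defining axioms of an AJM game against the data $\tuple{\ev{A}, \pol_A, \Alt(A), \sym_A}$. There are three packages to check: \emph{(i)} $\Alt(A)$ is a valid set of plays; \emph{(ii)} $\sym_A$ is an equivalence relation on $\Alt(A)$; \emph{(iii)} the coherence axioms linking $\sym_A$ and $\Alt(A)$.

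For \emph{(i)}, all the required properties are immediate from Definition \ref{def:alt_play}: non-emptiness ($\varepsilon \in \Alt(A)$), prefix-closure (prefixes of valid, non-repetitive, alternating, negatively-started sequences are of the same form), alternation, and the condition that a non-empty play starts with a move of polarity $-$. The labelling $\pol_A$ is the one carried by the arena. For \emph{(ii)}, reflexivity of $\sym_A$ follows from the \emph{groupoid} axiom of $\tilde A$ (identity bijections on configurations are symmetries), while symmetry and transitivity follow respectively from closure of $\tilde A$ under inverses and composition; one just needs to observe that the intermediate bijections $\theta^j_{s,t}$ are themselves restrictions of the top one, hence in $\tilde A$ by the \emph{restriction} axiom, so the equivalence holds uniformly at every prefix.

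For \emph{(iii)}, equivalent plays have equal length by definition of $\sym_A$. They have matching polarities at each position since symmetries on an \emph{essp} preserve polarities. The \emph{prefix axiom} (if $sa \sym_A tb$ then $s \sym_A t$) follows from the \emph{restriction} axiom: the symmetry $\theta^{n-1}_{s,t}$ witnessing $s \sym_A t$ is literally the restriction of $\theta^n_{sa,tb}$ to the configuration $\{s_1,\dots,s_{n-1}\}$. The crux is the \emph{extension axiom}: given $s \sym_A t$ in $\Alt(A)$ and $sa \in \Alt(A)$, produce $b$ such that $sa \sym_A tb \in \Alt(A)$. Apply the \emph{extension} property of $\tilde A$ to $\theta^n_{s,t} : \{s_1,\dots,s_n\} \sym_A \{t_1,\dots,t_n\}$ along the inclusion into $\{s_1,\dots,s_n,a\} \in \conf{A}$; this yields $\theta' \in \tilde A$ extending $\theta^n_{s,t}$, and one sets $b = \theta'(a)$.

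The main obstacle, and really the only step needing attention, is verifying that the resulting $tb$ does belong to $\Alt(A)$ and genuinely satisfies $sa \sym_A tb$. Validity holds because $\cod(\theta') = \{t_1,\dots,t_n,b\} \in \conf{A}$; non-repetition of $b$ in $tb$ follows from $\theta'$ being a bijection (so $b \notin \{t_1,\dots,t_n\}$); alternation of $tb$ follows from alternation of $sa$ combined with $\pol_A(b) = \pol_A(a)$, which in turn holds because symmetries in an \emph{essp} preserve polarity; and the negative-start condition transfers from $s$ to $t$ since their first moves (if any) match in polarity. The family $(\theta^j_{sa,tb})_{j \le n+1}$ is given by $\theta^j_{s,t}$ for $j \le n$ and by $\theta'$ for $j = n+1$, and each lies in $\tilde A$ by the earlier observations, so $sa \sym_A tb$ as required. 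Everything else amounts to bookkeeping against the groupoid/restriction/extension axioms of the isomorphism family.
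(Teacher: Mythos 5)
Your verification is correct and is exactly the "straightforward exercise" the paper leaves to the reader: prefix/groupoid properties of $\tilde A$ give the equivalence-relation and equal-length/label/prefix axioms, and the \emph{extension} axiom of the isomorphism family (plus polarity preservation and injectivity, as you note) delivers AJM's extension axiom (e3), with the Question/Answer labelling set aside as the paper itself does. Nothing is missing.
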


This ignores the Question/Answer labeling in
AJM games, which we shall handle later on. The proof
is a straightforward exercise. For the experts, we mention that this
association of arenas to AJM games does not respect the arena
constructions because constructions on AJM games enforce local
alternation, while $\Alt(-)$ does not. As in HO games \cite{ho}, in our
presentation local alternation will only follow from the P-visibility
condition.

From the connection with AJM games it seems natural to import the AJM
uniformity:

\begin{defi}\label{def:simstrat}
For $A$ an arena and $\sigma, \tau : A$ alternating prestrategies,
we write
$\sigma \simstrat \tau$ iff:
\[
\begin{array}{rl}
\text{$\to$-\emph{simulation:}} & \forall sa^+ \in \sigma,~t\in \tau,~s
\sym_A t
\implies \exists b^+,~tb^+ \in \tau~\&~sa^+ \sym_A tb^+\\
\text{$\ot$-\emph{simulation:}} & \forall s \in \sigma,~tb^+\in \tau,~s
\sym_A t
\implies \exists a^+,~sa^+ \in \sigma~\&~sa^+ \sym_A tb^+\\
\text{$\to$-\emph{receptive:}} & \forall sa^- \in \sigma,~t \in
\tau,~sa^-\,\sym_A\,tb^- \implies tb^- \in \tau\\ 
\text{$\ot$-\emph{receptive:}} & \forall s \in \sigma,~tb^- \in
\tau,~sa^-\,\sym_A\,tb^- \implies sa^- \in \sigma
\end{array}
\]

This defines a per $\simstrat$ on prestrategies\footnote{For 
\emph{strategies}, $\to, \ot$-\emph{receptive} are
subsumed by \emph{receptive}. But these are necessary for uniformity to
apply to \emph{prestrategies} which might not be receptive -- this
generalization
will be used in the technical development.} on $A$.
A prestrategy $\sigma : A$ is \textbf{uniform} iff $\sigma \simstrat
\sigma$.
\end{defi}

Uniformity is crucial. For the interpretation 
to respect $\beta$-equivalence we must identify strategies that play the
``same moves'', but with different copy indices. For instance, we must
consider equal the two strategies $\tau_0, \tau_1 : \tunit \to \tunit$
with unique maximal play:
\[
\xymatrix@R=-5pt@C=15pt{
\tau_0&:& \oc \tunit    \ar@{}[r]|\lin& \tunit\\
&&&\qu^-\\
&&\qu_{\grey{0}}^+\\
&&\done^-_{\grey{0}}\\
&&&\done^+
}
\qquad
\qquad
\qquad
\qquad
\xymatrix@R=-5pt@C=15pt{
\tau_1 &:& \oc \tunit      \ar@{}[r]|\lin& \tunit\\
&&&\qu^-\\
&&\qu_{\grey{1}}^+\\
&&\done^-_{\grey{1}}\\
&&&\done^+
}
\]

But this quotient is risky. Let us apply both $\tau_0$ and $\tau_1$ to
$\sigma : \oc \tunit$ with only maximal play $\qu_{\grey{0}}^-
\done^+_{\grey{0}}$.  Though we have yet to define composition, 
the application of $\tau_0$ to $\sigma$ must
converge, while that of $\tau_1$ to $\sigma$ must diverge. So
$\tau_0$ and $\tau_1$ cannot be safely identified as they are
distinguishable.  In fact here, the
culprit is $\sigma$: it is not \emph{uniform}. Since $\qu^-_{\grey{0}}
\done_{\grey{0}}^+ \in \sigma$, uniformity of $\sigma$ would imply that
$\qu^-_{\grey{1}} \done_{\grey{1}}^+ \in \sigma$ as well, breaking the
counter-example.

From now on, all \textbf{alternating (pre)strategies} are assumed
uniform.

\subsection{Interpretation of $\PCF$}
The interpretation follows the
methodology of denotational semantics, resting on the fact that arenas
and strategies form a category with adequate structure.  In the main
text we only outline this fairly routine construction -- though this
should be enough to read the paper -- but the construction
is detailed in Appendix \ref{app:alt}.

\subsubsection{Category}
The category $\NegAlt$ has objects $-$-arenas, and morphisms 
the \emph{alternating strategies} (\emph{strategies} for
short) on $A \lin B$. The \textbf{composition} of $\sigma : A \lin B$ and $\tau : B \lin C$
\[
\tau \odot \sigma : A \lin C
\]
follows the usual game semantics process of \emph{parallel
interaction} followed by \emph{hiding}.

First, the \textbf{pre-interactions} are
sequences $u \in \ev{(A\lin B) \lin C}^*$ satisfying
\emph{valid} of Definition \ref{def:alt_play}. A pre-interaction $u$
has three \textbf{restrictions}, with the following types:
\[
u \restrict A, B \in \ev{A\lin B}^*\,,
\qquad
u \restrict B, C \in \ev{B\lin C}^*\,,
\qquad
u \restrict A, C \in \ev{A\lin C}^*\,,
\]
defined in the obvious way -- see Appendix \ref{app:alt_comp}. Given
prestrategies $\sigma : A \lin B$ and $\tau : B \lin C$, an
\textbf{interaction} $u \in \tau \inter \sigma$ is a pre-interaction $u
\in \ev{(A \lin B) \lin C}^*$ satisfying:
\[
u \restrict A, B \in \sigma\,,
\qquad
u \restrict B, C \in \tau\,,
\qquad
u \restrict A, C \in \Alt(A\lin C)\,.
\]

The \textbf{composition} of $\sigma$ and $\tau$ comprises
all $s \in \Alt(A\lin C)$ with a \textbf{witness}:
\[
\tau \odot \sigma = \{u \restrict A, C \mid u \in \tau \inter
\sigma\}\,;
\]
it follows that $\tau \odot \sigma : A \lin C$ is a prestrategy; and a
strategy if $\sigma$ and $\tau$ are.

Composition is associative on prestrategies, but admits identities only
for strategies: the \textbf{copycat} strategies. If $A$ is a
$-$-arena and $s \in \ev{A \lin A}$, there are left and right
\emph{restrictions}
\[
s \restrict \labl \in \ev{A}^*\,,
\qquad
\qquad
s \restrict \labr \in \ev{A}^*\,,
\]
defined in the obvious way (see Appendix \ref{app:alt_id}). For $s \in
\Alt(A\lin A)$, $s$ is a \emph{copycat play} iff \emph{(1)} for all
even-length prefix $s' \prefix s$ we have $s' \restrict \labl = s'
\restrict \labr$, and \emph{(2)} for all $(1, (a_1, a_2)) \in
\ev{s}$, if $a_2 \in \min(A)$, then $a_1 = a_2$ -- a move initial on the
left must be justified by the same move on the right. Writing
$\cc_A$ for the set of all copycat plays, we have $\cc_A : A \lin A$ a
strategy as required. For any strategy $\sigma : A \lin B$ we have
$\cc_B \odot \sigma \odot \cc_A = \sigma$, making $\NegAlt$ a category.

\begin{rem}
Our model shares with AJM games \cite{ajm} the management
of the equivalence $\simstrat$ on strategies.  All our
constructions on strategies must preserve $\simstrat$. For
most of them it is clear, but composition requires some care (see
Appendix \ref{app:basic_alt_sym}). Operations on strategies therefore
lift transparently to $\simstrat$-equivalence classes, and one can then
consider $\NegAlt$ to have as morphisms $\simstrat$-equivalence classes
of strategies (as is done in \cite{ajm}). This is fine, but it does
contrast with how we (also following the practice in AJM games) often
refer to specific concrete strategies as being ``the interpretation of''
specific terms. So we refrain from quotienting, and consider $\NegAlt$
as having \emph{concrete} strategies as morphisms, and homsets
$\NegAlt(A, B)$ additionally equipped with an equivalence relation
$\simstrat$ which all operations preserve.
This way the interpretation of terms yields \emph{concrete}
representatives, but categorical laws only hold up to $\simstrat$. In
the sequel we refer only to the plain algebraic structures (as in
``symmetric monoidal closed category'', ``cartesian closed category'',
\emph{etc}), with it being understood that laws for these algebraic
structures only hold up to $\simstrat$ and that for any construction we
consider, there is a proof obligation that it preserves $\simstrat$.
\end{rem}

\subsubsection{Further structure}
If $A$ and $B$ are $-$-arenas, their \emph{tensor} is simply
$A\tensor B = A\parallel B$ their parallel composition. For $\sigma_1
: A_1 \lin B_1$ and $\sigma_2 : A_2 \lin B_2$, the strategy
\[
\sigma_1 \tensor \sigma_2 : A_1 \tensor A_2 \lin B_1 \tensor B_2\,,
\]
defined via adequate restrictions (see Appendix
\ref{app:alt_tensor}), plays as $\sigma_1$ on $A_1, B_1$ and
$\sigma_2$ on $A_2, B_2$ -- this gives a symmetric monoidal
structure, with structural isomorphisms copycat strategies.
Moreover, $\NegAlt$ is \emph{cartesian}. Its \emph{terminal object} is
the empty $-$-arena $\ees$; the \textbf{product} of $A$ and $B$ is
the $A \with B$. This forms a \emph{cartesian product}: there
are projections
\[
\pi_A : A \with B \lin A
\qquad
\qquad
\pi_B : A \with B \lin B
\]
acting as copycat, and for $\sigma : C \lin A$ and $\tau : C \lin B$,
their \textbf{pairing} $\tuple{\sigma, \tau} : C \lin A\with B$ is
defined simply as the as the set-theoretic union of $\sigma$ and $\tau$
(modulo the obvious relabeling).

Finally, for any $-$-arenas $\Gamma, A$ and
$B$, there is an iso
$\Gamma \tensor A \lin B \iso \Gamma \lin (A \lin B)$,
\emph{i.e.} a bijection on events preserving and respecting all
structure. This yields a bijection
\[
\Lambda_{\Gamma, A, B} 
:
\NegAlt(\Gamma \tensor A, B) 
\simeq
\NegAlt(\Gamma, A \lin B)
\]
between the corresponding sets of strategies. Exploiting this,
we define the \textbf{evaluation}
\[
\evm_{A, B} = \Lambda_{A\lin B, A, B}^{-1}(\cc_{A\lin B}) : (A\lin B)
\tensor A
\lin B\,,
\]
and the universal property for monoidal closure is then a direct
verification. We conclude:

\begin{prop}
The category $\NegAlt$ is cartesian and symmetric monoidal closed.
\end{prop}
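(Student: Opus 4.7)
The plan is to verify the three pieces of structure---cartesian, symmetric monoidal, and monoidal closed---in turn, exploiting the fact that all structural morphisms can be given as copycat strategies and that all coherence and universality conditions reduce to the law $\cc_B \odot \sigma \odot \cc_A = \sigma$ plus a few bijective identifications on the underlying arenas. Since the paper has established that all constructions on strategies preserve $\simstrat$, every equality below is up to $\simstrat$.

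First I would dispatch the cartesian structure. That $\ees$ is terminal is immediate from determinism plus receptivity: there is a unique alternating strategy $A \lin \ees$, the empty one (there are no negative moves to receive on $\ees$). For products, given $\sigma : C \lin A$ and $\tau : C \lin B$, I check that $\tuple{\sigma, \tau}$---the set-theoretic union after the evident relabeling making them live in $C \lin A \with B$---is an alternating strategy; the key point is that minimal conflict in $A \with B$ separates its two summands, so any alternating play sits entirely in one summand, and the prefix-closure, determinism, receptivity and uniformity conditions then reduce componentwise. The equations $\pi_A \odot \tuple{\sigma, \tau} = \sigma$ and $\pi_B \odot \tuple{\sigma, \tau} = \tau$, together with uniqueness of the pairing, follow by unfolding copycat composition on plays restricted to each summand.

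For the symmetric monoidal structure, the associator, unitors and symmetry are copycats on the appropriate parallel compositions, which are literally isomorphic as essps via evident reindexings of the disjoint union. Bifunctoriality of $\tensor$ relies on the observation that an interaction witnessing $(\sigma_1 \tensor \sigma_2) \odot (\tau_1 \tensor \tau_2)$ decomposes into independent interactions on each parallel component; the coherence diagrams likewise reduce to identities between compositions of copycats, which hold by the identity law applied componentwise. For closure, I would use the stated iso $\Gamma \tensor A \lin B \iso \Gamma \lin (A \lin B)$ of underlying arenas (verifying directly that it preserves polarity, causality, conflict and symmetry) to transport the five conditions of Definition \ref{def:alt_strat} and so obtain $\Lambda_{\Gamma, A, B}$ as a bijection between the corresponding homsets; naturality in $\Gamma$, $A$ and $B$ is again routine copycat algebra. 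The universal property of $\evm_{A, B} = \Lambda_{A \lin B, A, B}^{-1}(\cc_{A \lin B})$ then reads as $\evm_{A, B} \odot (\Lambda(\sigma) \tensor \id_A) = \sigma$, which is obtained by direct computation using $\cc_{A \lin B} \odot \Lambda(\sigma) = \Lambda(\sigma)$ and unfolding the definition of $\Lambda$.

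The main obstacle is not any single step but the accumulated bookkeeping around three subtle issues that run through all verifications: first, ensuring that each newly built strategy really is \emph{receptive} and \emph{uniform} (particularly delicate for composition, as already flagged in the paper and Appendix \ref{app:basic_alt_sym}); second, tracking the distinction between equality and $\simstrat$-equivalence, since concrete representatives do not coincide with their copycat post- and pre-compositions on the nose; and third, correctly transporting symmetries along the arena-level bijections underlying the rearrangements of $A \lin B$ versus $A^\perp \parallel B$, for which Lemma \ref{lem:conflict_lin} provides the machinery. A fully detailed verification of each piece is relegated to Appendix \ref{app:alt}.
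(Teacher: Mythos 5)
Your proposal is correct and follows essentially the same route as the paper: structural morphisms as (relabelled) copycats, componentwise/state-based verifications for tensor and product, the arena-level iso $(A \tensor B) \lin C \iso A \lin (B \lin C)$ giving $\Lambda$, evaluation as the uncurried copycat, and the universal property reduced to a copycat-neutrality computation, with receptivity, determinism (unique witness) and $\simstrat$-preservation handled in the detailed appendix development. The only difference is presentational: the paper factors the coherence and naturality checks through its notion of renamings and a lifting lemma, where you invoke ad-hoc ``copycat algebra''.
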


\subsubsection{Exponential}\label{subsubsec:seely}
On $\NegAlt$, $\oc$ gives an \emph{exponential} in the sense of Linear
Logic \cite{DBLP:journals/tcs/Girard87}: a functor, with
natural transformations $\der_A : \oc A \lin A$ and $\dig_A : \oc A \lin
\oc \oc A$ making $(\oc, \der, \dig)$ a comonad.  Moreover, there are
natural isomorphisms $\mon_{A, B}^2 : \oc A \tensor \oc B \lin \oc (A
\with B)$ and $\mon_{A, B}^0 : \ees \lin \oc \ees$, satisfying the
coherence laws of a \emph{Seely category} \cite{panorama}. So
the Kleisli category $\NegAlt_\oc$ is cartesian closed, and hence a
model of the simply-typed $\lambda$-calculus.  The construction is
routine, and follows the lines of AJM games \cite{ajm} -- see Appendix
\ref{app:alt_sym}.

In the sequel, given a Seely category $\C$ and a
morphism $f \in \C(\oc A, B)$, we shall write $f^\dagger \in \C(\oc A,
\oc B)$ for its \textbf{promotion}, defined as $\oc f \circ\, \dig_A$ --
in particular, recall that Kleisli composition of $f \in \C_\oc(A,B)$
and $g \in \C_\oc(B, C)$ may be defined as $g \circ_\oc f = g \circ
f^\dagger \in \C_\oc(A, C)$.

\subsubsection{Recursion}\label{subsubsec:recursion}
Strategies on arena $A$ may be partially ordered by inclusion; this
forms a \emph{pointed dcpo}. All operations on strategies are continuous
with respect to $\subseteq$.

Writing $\NegAlt_\oc(A, B)$ for the dcpo of strategies on $A \to B
=\,\oc A \lin B$, the operation
\[
\begin{array}{rcrcl}
F &:& \NegAlt_\oc(\emptyar, (A\to A)\to A) &\to& \NegAlt_\oc(\emptyar, (A\to A)\to A)\\
&& \sigma &\mapsto& \lambda f^{A \to A}.\,f\,(\sigma\,f)
\end{array}
\]
written in $\lambda$-calculus syntax following the cartesian closed
structure of $\NegAlt_\oc$, is continuous. Its least fixed point $\Y_A
\in \NegAlt_\oc(\emptyar, (A\to A)\to A)$ is transported to
$\NegAlt_\oc(\Gamma, (A\to A)\to A)$ by composition with the terminal
projection.  For each $\sigma \in \NegAlt_\oc(\Gamma, A \to A)$,
\[
\Y_A\,\sigma \simstrat \sigma\,(\Y_A\,\sigma)
\]
so a fixed point operator up to $\simstrat$, as needed to interpret
recursion.

It is a curiosity already in AJM games \cite{ajm} that the
recursive equation for the fixpoint combinator must be solved in the
domain of \emph{concrete} strategies, rather than
$\simstrat$-equivalence classes.  To the best of our knowledge it is not
known if the partial order induced by inclusion on
$\simstrat$-equivalence classes of strategies has the adequate
completeness properties to solve this, \emph{i.e.} if the
quotient of $\NegAlt$ and $\NegAlt_\oc$ by $\simstrat$ are dcpo-enriched
categories.

\subsubsection{Interpretation}\label{subsubsec:intr_pcf}
\emph{Types} of $\PCF$ are interpreted as
$-$-arenas: we set $\intr{\tunit} = \tunit$, $\intr{\tbool} = \tbool$,
$\intr{\tnat} = \tnat$ and $\intr{A\to B} = \oc \intr{A} \lin \intr{B}$
yielding for any type $A$ an arena $\intr{A}$. A \emph{context} $\Gamma
= x_1 :
A_1, \dots, x_n : A_n$ is interpreted as $\intr{\Gamma} =
\bigwith_{1\leq i
\leq n} \intr{A_i}$. A term $\Gamma \vdash M : A$ yields
\[
\intr{M} \in \NegAlt_\oc(\intr{\Gamma}, \intr{A})\,.
\]

We skip the details of the interpretation of the $\lambda$-calculus
combinators, which follows the standard interpretation of
the simply-typed $\lambda$-calculus in a cartesian closed category
\cite{lambekscott}.

We specify strategies for $\PCF$ combinators. For constants,
$\intr{\tskip} : \tunit, \intr{\ttrue} : \tbool$, $\intr{\tfalse} :
\tbool$ and $\intr{n} : \tnat$ are the corresponding obvious strategies
replying immediately the corresponding value. For the others the
interpretation is in Figure \ref{fig:intr_basic_pcf}, annotating
strategy operations with $\oc$ to emphasize that they are in the Kleisli
category $\NegAlt_\oc$. 
\begin{figure}
\[
\scalebox{.9}{$
\xymatrix@R=-5pt@C=0pt{
\mathsf{seq} &\!\!\!\!\!\!:\!\!\!\!\!\!& \oc (\tunit
&\with&\tunit)&\lin& \tunit\\
&&&&&&\qu^-\\
&&\qu^+_{\grey{0}}\\
&&\done^-_{\grey{0}}\\
&&&&\qu^+_{\grey{1}}\\
&&&&\done^-_{\grey{1}}\\
&&&&&&\done^+
}$}
\qquad
\scalebox{.9}{$
\xymatrix@R=-5pt@C=-1pt{
\mathsf{if} &\!\!\!\!\!\!:\!\!\!\!\!\!& \oc (\tbool & \with& \tx & \with
&\tx)
&\lin& \tx\\
&&&&&&&&\qu^-\\
&&\qu^+_{\grey{0}}\\
&&\ttrue^-_{\grey{0}}\\
&&&&\qu^+_{\grey{1}}\\
&&&&v^-_{\grey{1}}\\
&&&&&&&&v^+
}$}
\qquad
\scalebox{.9}{$
\xymatrix@R=-5pt@C=-1pt{
\mathsf{if} &\!\!\!\!\!\!:\!\!\!\!\!\!& \oc (\tbool &\with & \tx & \with
&\tx)
&\lin& \tx\\
&&&&&&&&\qu^-\\
&&\qu^+_{\grey{0}}\\
&&\tfalse^-_{\grey{0}}\\
&&&&&&\qu^+_{\grey{1}}\\
&&&&&&v^-_{\grey{1}}\\
&&&&&&&&v^+
}$}
\]
\medskip
\[
\scalebox{.9}{$
\xymatrix@R=-5pt@C=0pt{
\mathsf{succ} &\!\!\!\!\!\!:\!\!\!\!\!\!&
\oc \tnat &\lin& \tnat\\
&&&&\qu^-\\
&&\qu^+_{\grey{0}}\\
&&n^-_{\grey{0}}\\
&&&&(n+1)^+
}$}
\qquad\qquad
\scalebox{.9}{$
\xymatrix@R=-5pt@C=0pt{
\mathsf{iszero} &\!\!\!\!\!\!:\!\!\!\!\!\!&
\oc \tnat &\lin & \tbool\\
&&&&\qu^-\\
&&\qu^+_{\grey{0}}\\
&&0_{\grey{0}}^-\\
&&&&\ttrue^+
}$}
\qquad\qquad
\scalebox{.9}{$
\xymatrix@R=-5pt@C=5pt{
\mathsf{iszero} &\!\!\!\!\!\!:\!\!\!\!\!\!&
\oc \tnat &\lin &\tbool\\
&&&&\qu^-\\
&&\qu^+_{\grey{0}}\\
&&(n+1)_{\grey{0}}^-\\
&&&&\tfalse^+
}$}
\]
\caption{Basic strategies for $\PCF$}
\label{fig:intr_pcf}
\end{figure}
\begin{figure}
\begin{minipage}{.5\linewidth}
\begin{eqnarray*}
\intr{M;\,N} &=& \mathsf{seq}\odot_\oc \tuple{\intr{M}, \intr{N}}\\
\intr{\ite{M}{N_1}{N_2}} &=& \mathsf{if}\odot_\oc \tuple{\intr{M},
\intr{N_1},
\intr{N_2}}\\
\intr{\tsucc\,M} &=& \mathsf{succ} \odot_\oc \intr{M}\\
\intr{\tpred\,M} &=& \mathsf{pred} \odot_\oc \intr{M}\\
\intr{\iszero\,M} &=& \mathsf{iszero} \odot_\oc \intr{M}\\
\intr{\tlet{x}{N}{M}} &=& \mathsf{let}_{\tx,\ty} \odot_\oc
\tuple{\intr{N},
\Lambda^\oc(\intr{M})}
\end{eqnarray*}
\caption{Interpretation of basic combinators}
\label{fig:intr_basic_pcf}
\end{minipage}
\hfill
\begin{minipage}{.46\linewidth}
\vspace{0pt}
\[
\scalebox{.9}{$
\xymatrix@R=-1pt@C=-3pt{~\\
\mathsf{pred} &\!\!\!\!\!\!:\!\!\!\!\!\!&
\oc \tnat &\lin& \tnat\\
&&&&\qu^-\\
&&\qu^+_{\grey{0}}\\
&&0^-_{\grey{0}}\\
&&&&0^+\\~\\~
}$}
\qquad
\scalebox{.9}{$
\xymatrix@R=-1pt@C=-3pt{~\\
\mathsf{pred} &\!\!\!\!\!\!:\!\!\!\!\!\!&
\oc \tnat &\lin& \tnat\\
&&&&\qu^-\\
&&\qu^+_{\grey{0}}\\
&&(n+1)^-_{\grey{0}}\\
&&&&n^+\\~\\~
}$}
\]
\caption{Strategy for $\pred$}
\label{fig:strat_pred}
\end{minipage}
\end{figure}
The strategies used are
in Figures \ref{fig:intr_pcf}, \ref{fig:strat_pred} and
\ref{fig:let}. Save for $\mathbf{let}$, the diagram displays
exhaustively their maximal plays, defining them completely. For
$\mathbf{let}$, the strategy implements a \emph{memoization} mechanism:
it evaluates on $\tx$ obtaining a value $v$, which is then fed to the
function argument each call, without re-evaluating it.
The play shown for $\mathbf{let}$ is not
maximal as Opponent could play some $\qu_{\grey{1, i_{n+1}}}^-$ after.
We will see in Section~\ref{subsec:vis_inn} that it \emph{is} fully
informative: there is only one \emph{innocent} strategy that includes
these plays. Finally, the interpretation of recursion is set with
$\intr{\Y_A\,M} = \Y_A \odot_\oc \intr{M}$, completing the definition.

This interpretation satisfies the main property expected of a
denotational semantics:

\begin{prop}[Adequacy]\label{prop:adequacy}
For any $\vdash M : \tunit$, $M \eval$ if and only if $\intr{M}\eval$.
\end{prop}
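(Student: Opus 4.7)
The plan is to prove Proposition~\ref{prop:adequacy} in the traditional two-step fashion: the easy direction (soundness, $M \eval \Rightarrow \intr{M}\eval$) is obtained by showing that the interpretation is invariant under reduction, while the hard direction (computational adequacy, $\intr{M}\eval \Rightarrow M\eval$) is obtained by a logical-relations argument.

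\paragraph{Soundness.} First I would prove that if $\vdash M, M' : \tunit$ and $M \leadsto M'$ then $\intr{M} \simstrat \intr{M'}$, by induction on the derivation of the reduction. The $\beta$-rule $(\lambda x.\,M)\,N \leadsto M[N/x]$ requires the standard substitution lemma $\intr{M[N/x]} \simstrat \intr{M} \circ_\oc \tuple{\id_\Gamma, \intr{N}}$, proved by an inner induction on $M$ from the cartesian closed structure of $\NegAlt_\oc$. The unfolding $\Y\,M \leadsto M\,(\Y\,M)$ is direct from the fixpoint equation of Section~\ref{subsubsec:recursion}; the remaining basic $\delta$-rules are immediate checks against the strategies specified in Figures~\ref{fig:intr_pcf} and~\ref{fig:strat_pred}; and context rules are handled by functoriality of composition and pairing. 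Invariance then gives $M \eval \tskip \Rightarrow \intr{M} \simstrat \intr{\tskip}$, and $\intr{\tskip}$ clearly contains the complete play $\qu^- \done^+$; a trivial check confirms that $\simstrat$-equivalent strategies agree on whether they do.

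\paragraph{Adequacy.} For the converse I would define, by induction on types, a family of relations $\mathcal{R}_A$ between uniform strategies $\sigma : \intr{A}$ and closed terms $\vdash M : A$. At a ground type $\tx$, I set $(\sigma, M) \in \mathcal{R}_\tx$ iff for every value $v$ with $\qu^- v^+ \in \sigma$, one has $M \eval v$. At arrow type, I set $(\sigma, M) \in \mathcal{R}_{A \to B}$ iff for every $(\tau, N) \in \mathcal{R}_A$, the pair formed of the Kleisli application $\sigma \cdot \tau$ and the term $M\,N$ lies in $\mathcal{R}_B$. The fundamental lemma I aim for reads: for every $\Gamma \vdash M : A$ with $\Gamma = x_1{:}A_1,\dots,x_n{:}A_n$, and any family $(\tau_i, N_i) \in \mathcal{R}_{A_i}$, the pair formed of $\intr{M}$ applied to $\tuple{\tau_1,\dots,\tau_n}$ and $M[N_1/x_1,\dots,N_n/x_n]$ lies in $\mathcal{R}_A$. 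Proved by induction on the typing derivation, this immediately specialises (to a closed $\vdash M : \tunit$ with empty substitution) to $\intr{M}\eval \Rightarrow M \eval$.

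\paragraph{Main obstacle.} The genuine difficulty lies in the case of $\Y$ in the fundamental lemma, which requires \emph{admissibility} of $\mathcal{R}_A$: its first component must be closed under directed unions of strategies for the dcpo structure of Section~\ref{subsubsec:recursion}, and must contain $\bot_A$ paired with any diverging $M$. Admissibility is straightforward at ground type (a converging play $\qu^- v^+$ in a directed union already appears in some approximant) and propagates to higher types by a standard argument using continuity of Kleisli application. With admissibility in hand, and since $\Y_A$ is the least fixed point of the continuous functional $F$ of Section~\ref{subsubsec:recursion}, the $\Y$-case is treated by induction along the finite Kleene approximants $F^n(\bot)$, using the induction hypothesis for the argument at each step and concluding by admissibility at the limit. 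A last routine point is compatibility of $\mathcal{R}_A$ with $\simstrat$, which is clear at ground types (where symmetries on $\tunit, \tbool, \tnat$ are identities on values) and propagates inductively.
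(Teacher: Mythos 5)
Your proposal is correct and follows exactly the route the paper has in mind: the paper omits the proof, noting only that it is "standard using logical relations, see e.g.\ \cite{ho}", and your two-step argument (invariance of $\intr{-}$ under reduction for soundness, plus an admissible logical relation with the usual Kleene-approximant treatment of $\Y$ for adequacy) is that standard argument, with the relevant subtleties of this setting ($\simstrat$-compatibility, the fixpoint equation holding only up to $\simstrat$ in the dcpo of concrete strategies) correctly identified.
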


Note there are only two strategies on $\tunit$: the minimal
$\{\varepsilon, \qu^-\}$ matching any
\emph{diverging} program, and the \emph{converging}
$\{\varepsilon, \qu^-, \qu^-\done^+\}$. For $\sigma : \tunit$,
we write $\sigma \eval$ if $\sigma$ converges and $\sigma \div$ if
$\sigma$ diverges. We omit the proof which is standard using
logical relations, see \emph{e.g.} \cite{ho}.

This immediately entails soundness for observational equivalence:

\begin{cor}
Let $\Gamma \vdash M, N : A$ be any terms of $\PCF$. If $\intr{M}
\simstrat \intr{N}$,
then $M \obs N$.
\end{cor}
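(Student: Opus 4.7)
The plan is the standard soundness-via-adequacy argument, but carried out up to the per $\simstrat$ rather than strict equality. Assume $\intr{M} \simstrat \intr{N}$ for $\Gamma \vdash M, N : A$, and fix an arbitrary $\PCF$-context $C[\cdot]$ for $\Gamma \vdash A$, so that $\vdash C[M], C[N] : \tunit$. The goal is to show $C[M]\eval$ iff $C[N]\eval$.

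The first step is compositionality up to $\simstrat$: for any context $C[\cdot]$,
\[
\intr{M} \simstrat \intr{N} \quad\Longrightarrow\quad \intr{C[M]} \simstrat \intr{C[N]}\,.
\]
This is proved by induction on $C[\cdot]$. Each clause in the interpretation of Section \ref{subsubsec:intr_pcf} is built from the categorical operations of $\NegAlt_\oc$ (composition, pairing, tensor, promotion, currying, $\Y$), and by the remark following the construction of $\NegAlt$ each of these operations preserves $\simstrat$ (for promotion and $\Y$, this uses that Kleisli composition and the fixpoint $\Y_A$ are built from operations already known to preserve $\simstrat$; continuity of $F$ ensures the fixpoint step does as well). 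A straightforward induction then propagates $\simstrat$ through the context.

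Applying this to $C[\cdot]$ yields $\intr{C[M]} \simstrat \intr{C[N]}$ in $\NegAlt_\oc(\emptyar, \tunit)$. Now, as noted after Proposition \ref{prop:adequacy}, there are exactly two strategies on $\tunit$, the converging and the diverging one; and since the arena $\tunit$ carries only identity symmetries, the per $\simstrat$ on strategies over $\tunit$ is trivial, so $\simstrat$-equivalent strategies on $\tunit$ are in fact equal. Hence $\intr{C[M]}\eval$ iff $\intr{C[N]}\eval$. By adequacy (Proposition \ref{prop:adequacy}) applied to $\vdash C[M] : \tunit$ and $\vdash C[N] : \tunit$, this is equivalent to $C[M]\eval$ iff $C[N]\eval$. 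Since $C[\cdot]$ was arbitrary, $M \obs N$ follows.

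The only nontrivial obstacle is the compositionality step, and even there the work is really bookkeeping: the serious content, namely that each categorical operation on strategies preserves $\simstrat$, has already been discharged in Appendix \ref{app:basic_alt_sym}. Everything else (collapse of $\simstrat$ on $\tunit$, adequacy, unfolding the definition of $\obs$) is immediate.
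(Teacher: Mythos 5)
Your proof is correct and follows essentially the same route as the paper: soundness via adequacy, reducing to the fact that $\intr{C[M]} \simstrat \intr{C[N]}$ because all operations used in the interpretation preserve $\simstrat$. The only difference is presentational — the paper packages the compositionality step as $\intr{C[M]} = \intr{C[-]} \odot_\oc \intr{M}$ and invokes congruence of composition, whereas you unfold it by induction on the context (and you make explicit the collapse of $\simstrat$ to equality on $\tunit$, which the paper leaves implicit).
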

\begin{proof}
Assume $\intr{M} \simstrat \intr{N}$, and consider a context $C[-]$ such
that $C[M]\eval$. By Proposition \ref{prop:adequacy},
$\intr{C[M]}\eval$.  But $\intr{C[M]} = \intr{C[-]}\odot_\oc \intr{M}
\simstrat \intr{C[-]}\odot_\oc \intr{N} = \intr{C[N]}$, so $C[N] \eval$
by Proposition \ref{prop:adequacy}. The other direction also holds,
hence $M \obs N$.
\end{proof}

Computational adequacy is the standard to express that a model
accurately describes computation in the language. In fact in game
semantics the connection with operational semantics is much stronger, as
highlighted earlier. We will elaborate on that in Section~\ref{sec:seq_effects}.

\subsection{HO games}\label{subsec:pointers}
Before exploring this computational content, we highlight
the connection with HO games \cite{ho}, based on representing
plays up to symmetry as \emph{plays with pointers}.

\subsubsection{Plays with pointers.}
First, a convention. For $A$ a $-$-arena and $s \in
\Alt(A)$, then $\ev{s} \in \conf{A}$ has two order structures: it is
totally ordered chronologically as prescribed by $s$, and has a partial
order imported from $\leq_A$. When representing plays, we often
annotate them with the immediate causal dependency generating
$\leq_A$. For instance, Figure \ref{fig:plays_pointers} shows it for 
\begin{figure}
\begin{minipage}{.32\linewidth}
\[
\xymatrix@C=0pt@R=-5.5pt{
\oc(\tx &\with& (\tx &\to& \ty)) &\lin& \ty\\
&&&&&&\qu^-\\
\qu^+_{\grey{0}}\\
v^-_{\grey{0}}\\
&&&&\qu^+_{\grey{1}}\\
&&\qu^-_{\grey{1,i_1}}\\
&&v^+_{\grey{1, i_1}}\\~\\~\\
&&\dots\\~\\~\\~\\
&&\dots\\~\\~\\~\\
&&\qu^-_{\grey{1, i_n}}\\
&&v^+_{\grey{1, i_n}}\\
&&&&w_{\grey{1}}^-\\
&&&&&&w^+
}
\]
\caption{Typical play of $\mathsf{let}$}
\label{fig:let}
\end{minipage}
\hfill
\begin{minipage}{.6\linewidth}
\[
\scalebox{.7}{$
\raisebox{100pt}{$
\xymatrix@R=-4pt{
(\tunit
        \ar@{}[r]|\to&
\tunit)   \ar@{}[r]|\to&
\tbool\\
&&\qu^-\\
&\qu^+_{\grey{0}}   
\\
\qu^-_{\grey{{0,0}}}
\\
\done^+_{\grey{{0,0}}}
\\
\qu^-_{\grey{0,1}}
\\
\done^+_{\grey{{0,1}}}
\\
&\done^-_{\grey{0}}
\\
&\qu^+_{\grey{1}}   
\\
\qu^-_{\grey{1, 0}}
\\
\done^+_{\grey{1, 0}}
\\
\qu^-_{\grey{1, 1}}
\\
\done^+_{\grey{1, 1}}
\\
&\done^-_{\grey{1}}
\\
&&\ttrue^+
}$}
\qquad\qquad
\leadsto
\qquad\qquad
\raisebox{100pt}{$
\xymatrix@R=-4pt{
(\tunit
        \ar@{}[r]|\to&
\tunit)   \ar@{}[r]|\to&
\tbool\\
&&\qu^-\\
&\qu^+_{\grey{0}}
        \ar@{.}@/^/[ur]
\\
\qu^-_{\grey{{0,0}}}
        \ar@{.}@/^/[ur]
\\
\done^+_{\grey{0,0}}
        \ar@{.}@/^/[u]
\\
\qu^-_{\grey{0,1}}
        \ar@{.}@/_/[uuur]
\\
\done^+_{\grey{0,1}}
        \ar@{.}@/^/[u]
\\
&\done^-_{\grey{0}}
        \ar@{.}@/^/[uuuuu]
\\
&\qu^+_{\grey{1}}
        \ar@{.}@/_/[uuuuuuur]
\\
\qu^-_{\grey{1, 0}}
        \ar@{.}@/^/[ur]
\\
\done^+_{\grey{{1, 0}}}
        \ar@{.}@/^/[u]
\\
\qu^-_{\grey{1, 1}}
        \ar@{.}@/_/[uuur]
\\
\done^+_{\grey{1, 1}}
        \ar@{.}@/^/[u]
\\
&\done^-_{\grey{1}}
        \ar@{.}@/^/[uuuuu]
\\
&&\ttrue^+
        \ar@{.}@/_/[uuuuuuuuuuuuu]
}$}$}
\]
\caption{Pointer annotations on plays}
\label{fig:plays_pointers}
\end{minipage}
\end{figure}
$s\in \Alt(\intr{(\tunit \to \tunit) \to \tbool})$ with $\ev{s}$ 
displayed in Figure \ref{fig:exp_conf}. The dashed lines represent
immediate causal dependency in $\leq_A$, omitted when it coincides with
juxtaposition. We call these dashed lines \emph{pointers}, going
upwards from one event to its predecessor in $A$.  As arenas are
forestial, any move has at most one pointer and only minimal
events have none.

It is worth, just this once, being extremely pedantic about the
representation used in Figure \ref{fig:plays_pointers} and others.
Recall that $\intr{(\tunit \to \tunit) \to \tbool} = \oc (\oc \tunit
\lin \tunit) \lin \tbool$. Accordingly, 
\[
\ev{\intr{(\tunit \to \tunit) \to \tbool}} = \tnat \times (\tnat \times
\ev{\tunit} + \ev{\tunit}) + \ev{\tbool}
\]
with $+$ the tagged disjoint union $A + B = \{1\} \times A \cup \{2\}
\times B$, previously also written $\parallel$.  So an event of
$\intr{(\tunit \to \tunit) \to
\tbool}$ carries a \emph{move} from $\tunit$ or $\tbool$, \emph{tags}
originating from the disjoint unions and indicating one type component,
and natural numbers, the \emph{copy indices}. In Figure
\ref{fig:plays_pointers} the information of the \emph{moves} is conveyed by the
label, \emph{i.e.} $\qu^-, \done^+$, \emph{etc}. The \emph{tag} is
conveyed by the position of the move under the corresponding type
component.  Finally, the \emph{copy indices} are given as a sequence in
grey, with the leftmost integer corresponding to the outermost $\oc$.
For instance, the move $\qu^-_{\grey{0,1}}$ really stands for $(1,
(\grey{0}, (1, (\grey{1}, \qu^-))))$.

It is often convenient to display pointers, but they are
not part of the structure of plays. If they are imported into
plays, then copy indices become essentially disposable (up to $\sym$).
To make this formal, we start by defining a notion of \emph{plays with
pointers} on a $-$-arena.

\begin{defi}\label{def:play_pointers}
An \textbf{alternating play with pointers} on $A$ is $s_1
\dots s_n \in \ev{A}^*$ which is:
\[
\begin{array}{ll}
\text{\emph{alternating}:}& \forall 1\leq i \leq n-1,~\pol_A(s_i)\neq
\pol_A(s_{i+1})\,,
\end{array}
\]
together with, for all $1\leq j \leq n$ s.t. $s_j$ is non-minimal
in $A$, the data of a \textbf{pointer} to some earlier $s_i$ s.t.
$s_i \imc_A s_j$.  We write $\PAlt(A)$ for the set of plays with
pointers on $A$.
\end{defi}

The \emph{non-repetitive} condition of
Definition \ref{def:alt_play} would make pointers redundant as
each move has a unique predecessor, and the existence of pointers would
boil down to the fact that plays reach only down-closed sets of events.
It is a useful exercise to show that non-repetitive plays with
pointers are in bijection with alternating plays, on arenas without
conflict.

Reciprocally, since repetitions are now allowed, we may use them to
represent executions with replication even without the expansion process
of Section~\ref{subsec:replication}.

\subsubsection{Meager and concrete arenas}
Definition
\ref{def:play_pointers} applies to arenas in the sense of Section~\ref{subsubsec:ar_sym}, but it ignores part of their
structure: it takes no account of \emph{conflict}, and \emph{symmetry}.
Indeed, plays with pointers originate from HO games, where
arenas are much simpler:

\begin{defi}\label{def:meagerarena}
A \textbf{meager arena} is a partial order with polarities $(A, \leq_A,
\pol_A)$ s.t.:
\[
\begin{array}{rl}
\text{\emph{alternating:}} & \text{if $a_1 \imc_A a_2$, $\pol_A(a_1)
\neq
\pol_A(a_2)$,}\\
\text{\emph{forestial:}} & \text{if $a_1 \leq_A a$ and $a_2 \leq_A a$,
then $a_1
\leq_A a_2$ or $a_2 \leq_A a_1$,}
\end{array}
\]
without conflict or symmetry. A \textbf{meager $-$-arena} additionally
satisfies:
\[
\begin{array}{rl}
\text{\emph{negative:}} & \text{if $a \in \min(A)$, then $\pol_A(a) =
-$.}
\end{array}
\]
\end{defi}

Clearly, Definition \ref{def:play_pointers} applies to meager arenas.
Each $\PCF$ type $A$ may be interpreted as a meager arena
$\aintr{A}$, setting $\aintr{\tunit} = \tunit$, $\aintr{\tbool} =
\tbool$, $\aintr{\tnat} = \tnat$ and $\aintr{A\to B} = \aintr{A} \lin
\aintr{B}$; \emph{i.e.} as for $\intr{-}$ but without the $\oc$ -- this
is exactly the interpretation in \cite{ho}. The arena $\intr{A}$ is
then an \emph{expansion} of $\aintr{A}$ -- the notion of
\emph{concrete arena} makes this explicit:

\begin{defi}\label{def:concrete-arena}
A \textbf{concrete arena} is $(A, A^0, \lbl)$ with $A$ an arena,
$A^0$ a \emph{meager arena},
\[
\lbl : \ev{A} \to \ev{A^0}\,,
\]
a \textbf{labelling} function, together satisfying the following additional
requirements:
\[
\begin{array}{rl}
\text{\emph{locally pointed:}} & \text{for all $x \in \conf{A}$, $x$ has
at most one minimal event of each polarity,}\\
\text{\emph{rigid:}} & \text{$\lbl$ preserves minimality, and preserves
immediate causality $\imc$,}\\
\text{\emph{transparent:}} & \text{for any $x, y \in \conf{A}$ and
bijection $\theta : x \simeq y$,}\\
&\text{then $\theta \in \tilde{A}$ iff $\theta$ is an order-iso
preserving $\lbl$.}
\end{array}
\]
\end{defi}

We shall update this in Section~\ref{subsubsec:concrete_pol_sym}, when further structure becomes
required.  \emph{Locally pointed} is phrased so as to allow
non-negative arenas of the form $A^\perp \parallel B$. In most
cases, for negative arenas, configurations $x \in \conf{A}$ will have at
most one minimal event.

Every basic arena $\tx$ may be regarded as the
concrete arena $(\tx, \tx, \lbl_\tx)$ with $\lbl_\tx$ the identity
function. Concrete arenas support the arena constructions $\with$ and
$\to$ with $(A\with B)^0 = A^0 \tensor B^0$, and $(A\to B)^0 =
A^0 \lin B^0$. By induction, for every type $A$
this gives us $(\intr{A}, \aintr{A}, \lbl_A)$, a pointed concrete
$-$-arena with $\lbl_A$ simply forgetting all copy indices.

\begin{rem}
\emph{Transparent} makes explicit the nature of symmetries
on arenas arising from types: as they leave all
components unchanged except \emph{copy indices}, they are exactly all
reindexings. This does not always hold outside the types
considered here. In particular, concrete arenas do not support
$\tensor$: of course condition \emph{locally pointed} fails, but more
fundamentally, valid symmetries in $\oc
(A \tensor B)$ must send $(\grey{i}, (1, a))$ and $(\grey{i}, (2, b))$
to the same
copy index $\grey{j}$, a non-local constraint, not reflected by
condition
\emph{transparent}. This is why we do not consider all arenas to be
\emph{concrete}: they fail to cover the full Seely category structure.
\end{rem}

In the sequel, we only assume arenas to be concrete when it is
explicitly mentioned.

\subsubsection{Pointers and symmetry}
Plays with pointers represent plays up to symmetry:

\begin{prop}\label{prop:pointer_plays}
Consider $A$ a concrete arena. Then, there is a function
\[
\mathscr{P} : \Alt(A)/\!\!\sym 
\quad
\to 
\quad
\PAlt(A^0)\,,
\]
injective and preserving length and prefix.
\end{prop}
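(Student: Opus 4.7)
The plan is to construct $\mathscr{P}$ explicitly on representatives, check that it factors through $\sym_A$, then verify injectivity by recovering a symmetry from the data of a pointer play.

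First I would define the underlying map $\mathscr{P}_0 : \Alt(A) \to \PAlt(A^0)$. Given $s = s_1 \dots s_n \in \Alt(A)$, I set the underlying move sequence to be $\lbl(s_1)\dots \lbl(s_n)$. Alternation in $A^0$ follows from alternation of $s$ since $\lbl$ preserves polarity (concrete arenas require $\lbl$ to preserve the arena structure, in particular polarity). For pointers: for each non-minimal $s_j$, by \emph{forestial} the set $[s_j]_A \cap \{s_1, \dots, s_j\}$ is totally ordered, so $s_j$ has a unique immediate predecessor $s_i$ in $A$ with $i<j$; by \emph{rigid}, $\lbl(s_i) \imc_{A^0} \lbl(s_j)$, and by rigidity $\lbl(s_j)$ is non-minimal in $A^0$ exactly when $s_j$ is non-minimal in $A$. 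This gives the required pointer in $\PAlt(A^0)$. Preservation of length and prefix is immediate from the pointwise definition.

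Next I would show $\mathscr{P}_0$ factors through $\sym_A$: if $s \sym_A t$, the witnessing family $\theta_{s,t}^n : \{s_1,\dots,s_n\} \sym_A \{t_1,\dots,t_n\}$ sends $s_i$ to $t_i$. By \emph{transparent}, $\theta_{s,t}^n$ preserves $\lbl$, hence $\lbl(s_i) = \lbl(t_i)$ for all $i$. Since symmetries are order-isomorphisms (a fact invoked earlier in the text) they preserve $\imc_A$, so the pointer chosen for $s_j$ corresponds via $\theta$ to the pointer chosen for $t_j$, and since labels match these give the same pointer in $A^0$. Hence $\mathscr{P}_0(s) = \mathscr{P}_0(t)$, so $\mathscr{P}_0$ descends to $\mathscr{P}$ on equivalence classes.

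For injectivity, suppose $\mathscr{P}([s]) = \mathscr{P}([t])$, so in particular $s$ and $t$ have the same length $n$, $\lbl(s_i) = \lbl(t_i)$ for each $i$, and their pointers agree. I want to produce $s \sym_A t$. Define the candidate bijection $\theta : \{s_1,\dots,s_n\} \to \{t_1,\dots,t_n\}$ by $\theta(s_i) = t_i$; this is a well-defined bijection because plays in $\Alt(A)$ are \emph{non-repetitive}. It preserves $\lbl$ by hypothesis. To apply \emph{transparent}, I still need $\theta$ to be an order-isomorphism. By \emph{forestial}, the order on each configuration $\{s_1,\dots,s_n\} \in \conf{A}$ is freely generated by its immediate causal edges; these edges are exactly the ones recorded by the pointers of $\mathscr{P}_0(s)$ (read back through the labels, using rigidity of $\lbl$ to lift $\imc_{A^0}$-edges to $\imc_A$-edges since both configurations have the same $\lbl$-image). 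Since the pointer structures of $\mathscr{P}_0(s)$ and $\mathscr{P}_0(t)$ coincide, $\theta$ preserves and reflects $\imc_A$, and hence is an order-iso. By \emph{transparent}, $\theta \in \tilde{A}$, and I conclude $s \sym_A t$ using the inductively restricted symmetries $\theta^j$.

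The main obstacle, I expect, is the back-direction of injectivity: one must show that the bare data of $\lbl$-values and $\imc_{A^0}$-pointers is enough to recover the causal structure of the original configuration in $A$. This rests crucially on the combination of \emph{forestial} (so that $\leq_A$ on a configuration is determined by $\imc_A$), \emph{rigid} (so $\imc_{A^0}$-edges lift uniquely to $\imc_A$-edges between events with the prescribed labels), and \emph{transparent} (so that any label-preserving order-iso is automatically a symmetry). The other steps are essentially bookkeeping once these three properties of concrete arenas are leveraged.
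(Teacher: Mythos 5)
Your proof is correct and takes essentially the same route as the paper's: relabel the moves by $\lbl$, import $\imc_A$ as positional pointers, use \emph{transparent} for invariance under $\sym_A$, and for injectivity build the index-wise bijection and show it is a label-preserving order-isomorphism, hence a symmetry. One small remark: in the injectivity step you neither need nor can in general use \emph{rigid} to ``lift'' $\imc_{A^0}$-edges back to $\imc_A$-edges (rigidity only preserves, not reflects, immediate causality); since pointers in $\PAlt(A^0)$ are positional and your construction records for each $s_j$ its unique immediate $\imc_A$-predecessor in the configuration, equality of the two pointer assignments already yields directly that $\theta$ matches the $\imc_A$-edges, which by \emph{forestial} and down-closure generate the order.
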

\begin{proof}
For $s \in \Alt(A)$, we first construct $s^\to \in \PAlt(A)$ by
importing $\imc_{A}$.  Then, $\mathscr{P}(s)$ is obtained by applying
$\lbl_A$ pointwise. That pointers on $\mathscr{P}(s)$ are
well-formed (\emph{i.e.} that if $s_j$ points to $s_i$, then $s_i
\imc_{A^0} s_j$) follows from $\lbl$ preserving minimality and the
immediate causal order. That $\mathscr{P}$ is invariant under $\sym$
boils down to \emph{transparent}. By construction,
$\mathscr{P}$ preserves length and prefix. For injectivity,
take $s, s' \in \Alt(A)$ such that $\mathscr{P}(s) = \mathscr{P}(s')$.
Since $\mathscr{P}$ is length-preserving, $s$ and $s'$ have the same
length $n$.  Consider
\[
\theta = \{(s_i, s'_i) \mid 1\leq i \leq n\} : \ev{s} \simeq \ev{s'}
\]
the induced bijection. Since $\mathscr{P}(s) = \mathscr{P}(s')$, in
particular $s$ and $s'$ have the same pointers, so $\theta$ is an
order-isomorphism, and moreover since $\mathscr{P}(s) = \mathscr{P}(s')$
again we also have $\lbl_A(s_i)$ = $\lbl_A(s'_i)$ for all $1\leq i \leq
n$. Hence, $\theta$ is a symmetry, so by \emph{transparent}, $s \sym s'$
as required.
\end{proof}

However, $\mathscr{P}$ is not surjective. Writing $A =
\tunit \to \tbool$, the play $s \in \PAlt(\aintr{A})$ set as
\[
\xymatrix@R=-4pt{
\tunit  \ar@{}[r]|\to&
\tbool\\
&\qu^-\\
\qu^+   \ar@{.}@/^/[ur]\\
\done^- \\
&\ttrue^+
        \ar@{.}@/_/[uuu]\\
\done^- \ar@{.}@/^1pc/[uuu]\\
&\tfalse^+
        \ar@{.}@/_1pc/[uuuuu]
}
\]
is not the image of any play in $\Alt(\intr{A})$, for two reasons:
\emph{(1)} not every move is duplicated in $\intr{A}$, \emph{e.g.} there
there is only one copy of $\done^-$ for every copy of $\qu^+$ -- this
linearity discipline is enforced by \emph{non-repetitive},
which is absent in $\PAlt(\aintr{A})$; and \emph{(2)} likewise,
$\aintr{A}$ and $\PAlt(\aintr{A})$ do not account for \emph{conflict}
between $\ttrue^+$ and $\tfalse^+$ in $\intr{A}$.

\subsubsection{HO strategies} This extends to \emph{strategies}.  For
concrete arena $A$ and $\sigma : A$, then $\mathscr{P}(\sigma) =
\{\mathscr{P}(s) \mid s \in \sigma\}$ is essentially a strategy on $A^0$ in the
Hyland-Ong sense, \emph{i.e.} a prefix-closed, deterministic set of
plays with pointers. We have:

\begin{prop}\label{prop:strat_ho}
Consider $A$ a concrete arena, and prestrategies $\sigma, \tau :
\intr{A}$.

Then, $\sigma \simstrat \tau$ iff $\mathscr{P}(\sigma) =
\mathscr{P}(\tau)$.
\end{prop}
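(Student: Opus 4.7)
The plan is to prove both implications directly, making essential use of the convention (introduced above Proposition~\ref{prop:strat_ho}) that prestrategies are uniform, i.e.\ that $\sigma \simstrat \sigma$ and $\tau \simstrat \tau$.

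For the forward direction, suppose $\sigma \simstrat \tau$; I would show $\mathscr{P}(\sigma) \subseteq \mathscr{P}(\tau)$ by induction on plays $s \in \sigma$, proving there exists $t \in \tau$ with $s \sym_A t$ (which gives $\mathscr{P}(s) = \mathscr{P}(t) \in \mathscr{P}(\tau)$ by Proposition~\ref{prop:pointer_plays}). The empty play handles the base. For the step $s = s'a$, the induction hypothesis yields $t' \in \tau$ with $s' \sym t'$. If $\pol(a) = +$, the \emph{$\to$-simulation} clause directly gives $t = t'b^+ \in \tau$ with $s'a^+ \sym t'b^+$. If $\pol(a) = -$, the \emph{extension} axiom of the isomorphism family $\tilde A$ extends the symmetry $\theta_{s', t'}$ to some $\theta'$ on $\ev{s'a}$, producing $b$ with $s'a \sym t'b$; since symmetries preserve polarities, $t'b^-$ is alternating, hence a valid play, and \emph{$\to$-receptive} gives $t'b \in \tau$. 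The reverse inclusion follows by symmetry of $\simstrat$.

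For the backward direction, suppose $\mathscr{P}(\sigma) = \mathscr{P}(\tau)$; I would verify the four clauses of $\simstrat$. Consider \emph{$\to$-simulation}: given $sa^+ \in \sigma$, $t \in \tau$ and $s \sym t$, we have $\mathscr{P}(sa^+) \in \mathscr{P}(\sigma) = \mathscr{P}(\tau)$, so by injectivity of $\mathscr{P}$ on $\Alt(A)/\sym$ (Proposition~\ref{prop:pointer_plays}) there is $u'c^+ \in \tau$ with $u'c^+ \sym sa^+$; in particular $u' \in \tau$ (by prefix closure of the prestrategy) and $u' \sym s \sym t$. Now I invoke uniformity of $\tau$: applying \emph{$\to$-simulation} within $\tau \simstrat \tau$ to $u'c^+ \in \tau$, $t \in \tau$ and $u' \sym t$ produces $td^+ \in \tau$ with $u'c^+ \sym td^+$, whence $sa^+ \sym td^+$ by transitivity of $\sym_A$. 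The \emph{$\ot$-simulation} clause is dual. For \emph{$\to$-receptive}, given $sa^- \in \sigma$, $t \in \tau$ and $sa^- \sym tb^-$, one lifts $sa^-$ to some $u'c^- \in \tau$ with $u'c^- \sym sa^-$; then $u'c^- \sym tb^-$ by transitivity, and \emph{$\to$-receptive} of $\tau \simstrat \tau$ applied to $u'c^- \in \tau$ and $t \in \tau$ concludes $tb^- \in \tau$; \emph{$\ot$-receptive} is dual.

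The principal subtlety lies in the backward direction: naively, $\mathscr{P}(\sigma) = \mathscr{P}(\tau)$ lets us find \emph{some} $u'c^+ \in \tau$ symmetric to our $sa^+ \in \sigma$, but the clause of $\simstrat$ demands a continuation of the \emph{specific} $t \in \tau$ given to us, which can differ from $u'$. Uniformity of $\tau$ is exactly the bridge: it transports the positive continuation $c^+$ after $u'$ into a positive continuation $d^+$ after $t$. Without the uniformity assumption the equivalence would in fact fail -- two non-uniform prestrategies canonically choosing distinct copy indices (e.g.\ in $\oc \tunit$, one always playing copy~$0$ and the other always copy~$1$) have equal $\mathscr{P}$-image yet are not $\simstrat$-related.
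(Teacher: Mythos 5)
Your proof is correct and follows essentially the same route as the paper: the direction $\sigma \simstrat \tau \Rightarrow \mathscr{P}(\sigma) = \mathscr{P}(\tau)$ by induction on plays (using simulation for positive moves, the \emph{extension} axiom plus receptivity for negative ones), and the converse by lifting a $\mathscr{P}$-witness $u'c \in \tau$ via injectivity of $\mathscr{P}$ (Proposition~\ref{prop:pointer_plays}) and then transporting it to the given $t$ by uniformity of $\tau$. Your closing remark on why uniformity is indispensable also matches the paper's earlier counterexample with copy indices.
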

\begin{proof}
\emph{If.} Consider $\sigma, \tau : A$ s.t. $\mathscr{P}(\sigma) =
\mathscr{P}(\tau)$. For $\sigma \simstrat \tau$ we first
check $\to$-simulation. Consider $sa^+ \in \sigma, t \in \tau$ s.t.
$s \sym_A t$. But $\mathscr{P}(sa) \in \mathscr{P}(\tau)$, so there is
$t' b' \in \tau$ s.t. $\mathscr{P}(t' b') = \mathscr{P}(sa)$. Hence
by Proposition \ref{prop:pointer_plays}, $sa \sym_A t' b'$. So $t, t'
\in \tau$ and $t \sym_A t'$, with $t'b' \in \tau$. By uniformity of
$\tau$, $t b\in \tau$ for some $b$ with $t'b' \sym_A tb$, so $tb \sym_A
sa$ as well.  The condition $\ot$-\emph{simulation} is
symmetric. For $\to$-\emph{receptive}, assume $sa^- \in \sigma, t \in
\tau$ and $sa^- \sym_A tb^-$. Since $\mathscr{P}(sa^-) \in
\mathscr{\tau}$, there is $t' b' \in \tau$ s.t. $\mathscr{P}(sa) =
\mathscr{P}(t'b')$, \emph{i.e.} $sa \sym_A tb$. But then $t'b' \in \tau$
and $t' b' \sym_A t b$, so by uniformity of $\tau$ we have $tb \in
\tau$. Finally, $\ot$-\emph{receptive} is symmetric.

\emph{Only if.} Consider $\sigma, \sigma' : A$ s.t. $\sigma
\simstrat \sigma'$, and take $\mathscr{P}(s) \in \mathscr{P}(\sigma)$
for some $s \in \sigma$. By induction on $s$, we build some $s' \in
\sigma'$ s.t. $s \sym_A s'$: for positive extensions this follows
from $\sigma \simstrat \sigma'$; for negative extensions 
from the \emph{extension} condition on isomorphism families and
the $\to$-\emph{receptive} condition on uniformity. But then by
Proposition \ref{prop:pointer_plays} we have $\mathscr{P}(s) =
\mathscr{P}(s')$, so $\mathscr{P}(\sigma) \subseteq
\mathscr{P}(\sigma')$. The argument is symmetric, so
$\mathscr{P}(\sigma) = \mathscr{P}(\sigma')$ as desired.
\end{proof}

Plays with pointers permit a presentation of strategies up to
$\simstrat$, avoiding copy indices. They provide the foundation for HO
games \cite{ho}, where the interpretation of types is essentially
$\aintr{-}$ (without conflict), and plays carry pointers. We
include the classical example showing that though one may
choose copy indices or pointers, one cannot avoid both.

\begin{exa} The \emph{Kierstead terms} $\vdash K_x, K_y :
((\tbool \to
\tbool) \to \tbool) \to \tbool$ are defined as
\[
K_x = \lambda F^{(\tbool \to \tbool) \to \tbool}.\,F\,(\lambda
x.\,F\,(\lambda y.\,x))\,,
\qquad
\qquad
K_y = \lambda F^{(\tbool \to \tbool) \to \tbool}.\,F\,(\lambda
x.\,F\,(\lambda y.\,y))\,.
\]

Their respective interpretations in $\NegAlt_\oc$ have distinctive
plays:
\[
\xymatrix@R=-8pt@C=5pt{
K_x&:&((\tbool  \ar@{}[r]|\to& 
\tbool)         \ar@{}[r]|\to&
\tbool)         \ar@{}[r]|\to&
 \tbool\\
&&&&&\qu^-\\
&&&&\qu^+_{\grey{0}}    \ar@{.}@/^/[ur]\\
&&&\qu^-_{\grey{0,i}}   \ar@{.}@/^/[ur]\\
&&&&\qu^+_{\grey{i+1}}  \ar@{.}@/^/[uuur]\\
&&&\qu^-_{\grey{i+1,j}} \ar@{.}@/^/[ur]\\
&&\qu^+_{\grey{0,i,j}}  \ar@{.}@/^/[uuur]
}
\qquad
\qquad
\xymatrix@R=-8pt@C=5pt{
K_y&:&((\tbool  \ar@{}[r]|\to& 
\tbool)         \ar@{}[r]|\to&
\tbool)         \ar@{}[r]|\to&
 \tbool\\
&&&&&\qu^-\\
&&&&\qu^+_{\grey{0}}    \ar@{.}@/^/[ur]\\
&&&\qu^-_{\grey{0,i}}   \ar@{.}@/^/[ur]\\
&&&&\qu^+_{\grey{i+1}}  \ar@{.}@/^/[uuur]\\
&&&\qu^-_{\grey{i+1,j}} \ar@{.}@/^/[ur]\\
&&\qu^+_{\grey{i+1,j,0}} \ar@{.}@/^/[ur]
}
\]

Here pointers are redundant, and computed from the identity
of moves. In particular, in both plays the $\qu^+$ ``points to'' the
unique $\qu^-$ with compatible copy indices. Mapping these 
through $\mathscr{P}$, we get two plays with pointers that only differ
through their pointers. In HO games, the Kierstead terms are only
distinguished by pointers\footnote{It is necessary to go up to
third-order types to find such examples. Pointers are redundant up to
second-order types, which is the starting point of \emph{algorithmic
game semantics} \cite{DBLP:journals/tcs/GhicaM03}.}. It is 
crucial to keep them separate: it is a surprisingly challenging exercise
to find a $\PCF$ context that separates them.
\end{exa}

Plays with pointers are powerful, and indeed the game semantics literature
is strongly biaised towards HO games (as opposed to
AJM games). This, however, has two costs. Firstly, plays
with pointers are not a natural inductive structure, making their
manipulation sometimes inelegant or unwieldy (so-called ``pointer
surgery''). Proposals have been made for clean formalizations,
\emph{e.g.} through nominal sets
\cite{DBLP:journals/entcs/GabbayG12}. Another cost 
is that replication is so hard-wired into the model that
it does not enjoy a clean linear decomposition. Enforcing linearity is
slightly awkward and relies on additional structure
\cite{DBLP:books/daglib/0094282}.

In this work we stick with $\NegAlt$ rather than
adopting plays with explicit pointers. Among other things this will ease
the relationship with the forthcoming \emph{thin concurrent games},
which we do not know how to formulate with pointers in general. Besides, in $\NegAlt$, pointers \emph{can} be directly obtained from the arena, and as such may be used as in
HO games\footnote{Another work blurring the lines between HO and AJM is
\cite{DBLP:journals/entcs/AbramskyJ09} where AJM games are 
equipped with a \emph{function} able to rebuild pointers without
the need to explicitly integrate them in plays. All the data of a game
in the sense of \cite{DBLP:journals/entcs/AbramskyJ09} can be computed
from an arena in our sense, but our arenas are more primitive.}. In
fact, pointers play a central role in this paper. From now on, all
representations of plays will display pointers. In contrast, we will
often omit copy indices as most of the time they convey no useful
information; one can regard this convention as drawing $\mathscr{P}(s)$
rather than $s$.

\section{Sequential Computational Effects in Game Semantics}
\label{sec:seq_effects}

We now explore the model constructed above, introducing
the traditional ``semantic cube''.

The plays of a term are computed denotationally, by induction on
syntax. However, given a term, an experienced game semanticist will be
able to directly list its plays, without going through the intricate
definition of the interpretation.  This is because as discussed before,
plays represent the operational behaviour of the term: rather than
denotationally, they can be obtained directly from the term by
operational means
\cite{DBLP:conf/lics/DanosHR96,DBLP:conf/fossacs/Jaber15,DBLP:journals/entcs/GhicaT12,DBLP:conf/csl/LevyS14}.
\begin{figure}
\[
\xymatrix@R=-4pt{
(\tbool \ar@{}[r]|\to&
\tbool  \ar@{}[r]|\to&
\tunit) \ar@{}[r]|\to&
\tbool\\
&&&\qu^-&&
\ensquare{\lambda f^{\tbool \to \tbool \to
\tunit}.\,f\,((f\,\tfalse\,\tfalse);\,\ttrue)\,\tfalse;\,\ttrue}\\
&&\qu^+ \ar@{.}@/^/[ur]&&&
\lambda f^{\tbool \to \tbool \to
\tunit}.\,\underline{f}\,((f\,\tfalse\,\tfalse);\,\ttrue)\,\tfalse;\,\ttrue\\
\qu^-   \ar@{.}@/^/[urr]&&&&&
\lambda f^{\tbool \to \tbool \to
\tunit}.\,f\,\ensquare{((f\,\tfalse\,\tfalse);\,\ttrue)}\,\tfalse;\,\ttrue\\
&&\qu^+ \ar@{.}@/^/[uuur]&&&
\lambda f^{\tbool \to \tbool \to
\tunit}.\,f\,((\underline{f}\,\tfalse\,\tfalse);\,\ttrue)\,\tfalse;\,\ttrue\\
\qu^-   \ar@{.}@/^/[urr]&&&&&
\lambda f^{\tbool \to \tbool \to
\tunit}.\,f\,((f\,\ensquare{\tfalse}\,\tfalse);\,\ttrue)\,\tfalse;\,\ttrue\\
\tfalse^+&&&&&
\lambda f^{\tbool \to \tbool \to
\tunit}.\,f\,((f\,\underline{\tfalse}\,\tfalse);\,\ttrue)\,\tfalse;\,\ttrue\\
&&\done^-       \ar@/_/@{.}[uuu]&&&
\lambda f^{\tbool \to \tbool \to
\tunit}.\,f\,((f\,\tfalse\,\tfalse);\,\ensquare{\ttrue})\,\tfalse;\,\ttrue\\
\ttrue^+        \ar@{.}@/^1pc/[uuuuu]
&&&&&\lambda f^{\tbool \to \tbool \to
\tunit}.\,f\,((f\,\tfalse\,\tfalse);\,\underline{\ttrue})\,\tfalse;\,\ttrue\\
&\qu^-  \ar@{.}@/^/[uuuuuuur]&&&&
\lambda f^{\tbool \to \tbool \to
\tunit}.\,f\,((f\,\tfalse\,\tfalse);\,\ttrue)\,\ensquare{\tfalse};\,\ttrue\\
&\tfalse^+&&&&
\lambda f^{\tbool \to \tbool \to
\tunit}.\,f\,((f\,\tfalse\,\tfalse);\,\ttrue)\,\underline{\tfalse};\,\ttrue\\
&&\done^-       \ar@{.}@/_1.5pc/[uuuuuuuuu]&&&
\lambda f^{\tbool \to \tbool \to
\tunit}.\,f\,((f\,\tfalse\,\tfalse);\,\ttrue)\,\tfalse;\,\ensquare{\ttrue}\\
&&&\ttrue^+     \ar@{.}@/_1.5pc/[uuuuuuuuuuu]&&
\lambda f^{\tbool \to \tbool \to
\tunit}.\,f\,((f\,\tfalse\,\tfalse);\,\ttrue)\,\tfalse;\,\underline{\ttrue}
}
\]
\caption{Illustration of the operational contents of game semantics}
\label{fig:op_gam_ex}
\end{figure}
This is illustrated in Figure \ref{fig:op_gam_ex}.  Opponent moves
trigger the evaluation of a subterm, which appears boxed.  The following
Player move then corresponds to the head (\emph{i.e.} leftmost) variable
occurrence (or constant) of the subterm being evaluated. The pointers
from Player moves correspond to the stage where the variable in head
position was abstracted, or to the function call being returned by the
value in head position. More specifically, Figure \ref{fig:op_gam_ex}
represents the interaction of the term under study with the applicative
context:
\[
C[] = []\,(\lambda x^\tbool.\,\lambda
y^\tbool.\,\ite{x}{(\ite{y}{\tskip}{\tskip})}{\tskip}) 
\]

Figure \ref{fig:op_gam_ex} is strongly inspired by the \emph{Pointer
Abstract Machine (PAM)} \cite{DBLP:conf/lics/DanosHR96}.

\subsection{Well-Bracketing} \label{subsec:wb}
Now that \emph{executions as plays} are first-class citizens,
independent of programs, we may start classifying them according to the
computational capabilities that they witness.  For instance, is the
following play a possible execution of a term?
\[
\xymatrix@R=-5pt{
(\tunit \ar@{}[r]|\to&
\tunit) \ar@{}[r]|\to&
\tbool\\
&&\qu^- && \ensquare{\lambda f^{\tunit \to \tunit}.\,f\,M}\\
&\qu^+  \ar@{.}@/^/[ur]&&& \lambda f^{\tunit \to
\tunit}.\,\underline{f}\,M\\
\qu^-   \ar@{.}@/^/[ur]&&&&\lambda f^{\tunit \to
\tunit}.\,f\,\ensquare{M}\\
&&\ttrue^+ 
        \ar@{.}@/_/[uuu]
&& \lambda f^{\tunit \to \tunit}.\,\underline{\ttrue}\hspace{15pt}
}
\]

We argue informally why this cannot be an execution in $\PCF$.  The
first action of the term is to ask its argument, so it has the form
$\lambda f^{\tunit \to \tunit}.\,f\,M$; we annotate the figure with the
corresponding operational state as in Figure \ref{fig:op_gam_ex}. In the
last line, $\ttrue$ at toplevel indicates the overall computation has
terminated to $\ttrue$. This is confusing, since operationally the
Opponent move in the third line corresponded to triggering the
evaluation of the argument of $f$. How can evaluating the argument of
$f$ cause the whole computation to terminate?

Nevertheless, this play is indeed a realistic execution, for the term
\[
\lambda f^{\tunit \to \tunit}.\,\mathbf{callcc}\,(\lambda k^{\tbool \to
\tunit}.\,f\,(k\,\ttrue);\,\bot) : (\tunit \to \tunit)\to \tbool
\]
where $\mathbf{callcc}$ is the \emph{call-with-current-continuation}
primitive originating in \emph{Scheme}, and which
famously may be typed with Peirce's law \cite{DBLP:conf/popl/Griffin90}.
The precise operational semantics of $\mathbf{callcc}$ will not
be useful for this paper, but informally $\mathbf{callcc}\,M$
immediately calls $M$, feeding it a special function $k$, the
``continuation''. When the continuation is called with value $v$,
$\mathbf{callcc}$ interrupts $M$ and returns $v$ at
toplevel, breaking the call stack discipline.

Can the play above be realised without $\mathbf{callcc}$ (or some other
\emph{control operator}, as such primitives are called)? We can show
that the answer is no, by capturing plays that ``respect the call
stack discipline'', and refining the whole interpretation 
to show that this invariant is preserved. This is the goal of the
notion of \emph{well-bracketing}. First we enrich arenas:

\begin{defi}\label{def:qa_arena}
A \textbf{Question/Answer labeling} on arena $A$ is a function
\[
\lambda_A : \ev{A} \to \{\Qu, \An\}\,
\]
invariant under symmetry (if $\theta : x \sym_A y$, then for
all $a \in x$, $\lambda_A(a) = \lambda_A(\theta(a))$) and satisfying:
\[
\begin{array}{ll}
\text{\emph{question-opening:}} & \text{if $a \in \ev{A}$ is minimal,
then $\lambda_A(a) = \Qu$,}\\
\text{\emph{answer-closing:}} & \text{if $\lambda_A(a) = \An$, then $a$
is maximal for $\leq_A$,}\\ 
\text{\emph{answer-linear:}} & \text{if $\lambda_A(a_1) = \lambda_A(a_2)
= \An$ with $a \imc_A a_1, a_2$, then $a_1 = a_2$ or $a_1 \conflict_A a_2$.}
\end{array}
\]
\end{defi}

From now on, \textbf{arenas} have a Question/Answer
labeling.  \emph{Questions} intuitively correspond to variable calls,
while \emph{Answers} correspond to returns.
\begin{figure}
\hfill
\begin{minipage}{0.1\linewidth}
\[
\xymatrix@R=15pt@C=10pt{
\qu^{-,\Qu}
        \ar@{.}[d]\\
\done^{+,\An}
}
\]
\end{minipage}
\hfill
\begin{minipage}{.3\linewidth}
\[
\xymatrix@R=15pt@C=10pt{
&\qu^{-,\Qu}
        \ar@{.}[dl]
        \ar@{.}[dr]\\
\ttrue^{+,\An}\ar@{~}[rr]&&
\tfalse^{+,\An}
}
\]
\end{minipage}
\hfill
\begin{minipage}{.4\linewidth}
\[
\xymatrix@R=15pt@C=10pt{
&&\qu^{-,\Qu}
        \ar@{.}[dll]
        \ar@{.}[dl]
        \ar@{.}[d]
        \ar@{.}[dr]\\
0^{+,\An}       \ar@{~}[r]&
1^{+,\An}       \ar@{~}[r]&
2^{+,\An}       \ar@{~}[r]&
\dots
}
\]
\end{minipage}
\hfill
\caption{Question/Answer labeling on basic arenas}
\label{fig:qa_arena}
\end{figure}
Basic arenas are enriched as shown in Figure \ref{fig:qa_arena}. For
other constructions the labeling is inherited transparently, with
$\lambda_{\oc A}(\grey{i}, a) = \lambda_A(a)$, $\lambda_{A_1\tensor
A_2}(i, a) = \lambda_{A_i}(a)$, $\lambda_{A\lin B}(2, b) =
\lambda_B(b)$, and $\lambda_{A \lin B}(1, (b, a)) = \lambda_A(a)$.

If $s \in \Alt(A)$ and $s_i$ is an answer, it cannot be minimal in $A$
by \emph{question-opening}. Its antecedent in $A$ -- its
\emph{justifier} -- must appear in $s$ as some $s_j$ with $j<i$, and
is a question by \emph{answer-closing}. We say that \textbf{$s_i$
answers $s_j$}. If a question in $s$ has an answer in $s$ we say it is
\textbf{answered} in $s$. The last unanswered question of $s$, if any,
is the \textbf{pending question}.

We now capture executions respecting the call stack
discipline as \emph{well-bracketed} plays.

\begin{defi}\label{def:alt_play_wb}
Let $s \in \Alt(A)$ be an alternating play.

It is \textbf{well-bracketed} if for all prefix $ta^{\An}
\sqsubseteq s$, $a$ answers the pending question of $t$.
\end{defi}

All plays encountered in the paper until now are well-bracketed,
with the exception of the example at the beginning of Section~\ref{subsec:wb}.
We can then define well-bracketed strategies:

\begin{defi}\label{def:alt_strat_wb}
Let $\sigma : A$ be a strategy on $A$.

It is \textbf{well-bracketed} iff for all $sa^+ \in \sigma$, if
$s$ is well-bracketed then $sa$ is well-bracketed.
\end{defi}

In other words, a well-bracketed strategy is never the first to
break the call stack discipline. Asking all plays to be well-bracketed
is too strict, as illustrated by the play
\[
\xymatrix@R=-5pt@C=10pt{
(\tunit \ar@{}[r]|\lin&\tunit)
\ar@{}[r]|\lin&(\tunit\ar@{}[r]|\lin&\tunit)\\
&&&\qu^{-,\Qu}\\
&\qu^{+,\Qu}
        \ar@{.}@/^/[urr]\\
\qu^{-,\Qu}
        \ar@{.}@/^/[ur]\\
&&\qu^{+,\Qu}
        \ar@{.}@/^/[uuur]\\
&\done^{-,\An}
        \ar@{.}@/_/[uuu]\\
&&&\done^{+,\An}
        \ar@{.}@/_/[uuuuu]
}
\]
of copycat: the last move does not answer the pending question, but
because Opponent broke the normal control flow first. There is a lluf
subcategory $\NegAltwb$ of $\NegAlt$, having well-bracketed strategies
as morphisms.  The interpretation of $\PCF$ in $\NegAlt_\oc$ in fact
yields only well-bracketed strategies, \emph{i.e.} has target
$\NegAltwb_\oc$.  This shows that indeed, the execution at the beginning
of Section~\ref{subsec:wb} cannot be realised in $\PCF$.

\subsection{Visibility and Innocence}  \label{subsec:vis_inn}
Likewise, is this play a possible execution of a term?
\[
\xymatrix@R=-6pt@C=30pt{
(\tbool \ar@{}[r]|\to&\tunit)\ar@{}[r]|\to&\tunit\\
&&\qu^-                                 && \ensquare{\lambda f^{\tbool
\to \tunit}.\,f\,M}\\
&\qu^+   \ar@{.}@/^/[ur] &              && \lambda f^{\tbool \to
\tunit}.\,\underline{f}\,M\\
\qu^-    \ar@{.}@/^/[ur]&&              && \lambda f^{\tbool \to
\tunit}.\,f\,\ensquare{M}\\
\tfalse^+ &             &               && \lambda f^{\tbool \to
\tunit}.\,f\,\underline{\tfalse}\\
\qu^-    \ar@{.}@/^/[uuur] &&           && \lambda f^{\tbool \to
\tunit}.\,f\,\ensquare{M}\\
\ttrue^+ &&                             && \lambda f^{\tbool \to
\tunit}.\,f\,\underline{\ttrue}
}
\]

Again, this seems unfeasible in $\PCF$. Again, on the right hand side we
show, assuming a term realising this play, its corresponding
operational states.  At the third and fifth moves, the \emph{same}
subterm is being evaluated; yet we get two distinct answers.
In an extension of $\PCF$ with a primitive $+$ for non-deterministic
choice, this play would be realisable by $\lambda f^{\tbool \to
\tunit}.\,f\,(\ttrue + \tfalse)$. But does it make computational sense
in a \emph{deterministic} language?

Once more, the answer is yes: the play above describes a valid execution
of the term
\[
\lambda f^{\tbool \to
\tunit}.\,\newref\,r\,\tin\,f\,(\mathbf{let}\,x=!r\,\mathbf{in}\,r:=1;\,(x>0))
: (\tbool \to \tunit) \to \tunit
\]
in $\PCF$ extended with references: $\newref\,r\,\tin\,M$
allocates a reference $r$ initialized to $0$.
\begin{figure}
\begin{minipage}{.65\linewidth}
\[
\scalebox{.85}{$
\xymatrix@R=-5pt@C=15pt{
(\tbool \ar@{}[r]|\to&\tunit)\ar@{}[r]|\to&\tunit\\
&&\qu^-                                  & \ensquare{\lambda f^{\tbool
\to
\tunit}.\,f\,(\mathbf{let}\,x=!r\,\mathbf{in}\,r:=1;\,(x>0))} & r\mapsto
0\\
&\qu^+   \ar@{.}@/^/[ur] &              & \lambda f^{\tbool \to
\tunit}.\,\underline{f}\,(\mathbf{let}\,x=!r\,\mathbf{in}\,r:=1;\,(x>0))&
r\mapsto
0\\
\qu^-    \ar@{.}@/^/[ur]&&              & \lambda f^{\tbool \to
\tunit}.\,f\,\ensquare{(\mathbf{let}\,x=!r\,\mathbf{in}\,r:=1;\,(x>0))}
& r \mapsto
0\\
\tfalse^+ &             &               & \lambda f^{\tbool
\to\tunit}.\,f\,\underline{\tfalse}
\hspace{120pt}& r \mapsto 1\\
\qu^-    \ar@{.}@/^/[uuur] &&           & \lambda f^{\tbool \to
\tunit}.\,f\,\ensquare{(\mathbf{let}\,x=!r\,\mathbf{in}\,r:=1;\,(x>0))}
& r \mapsto
1\\
\ttrue^+ &&                             & \lambda f^{\tbool \to
\tunit}.\,f\,\underline{\ttrue} \hspace{120pt}& r \mapsto 1
}$}
\]
\caption{A strategy with references}
\label{fig:games_ref_op}
\end{minipage}
\hfill
\begin{minipage}{.34\linewidth}
\[
\xymatrix@R=-3.5pt{
(\tunit \ar@{}[r]|\to&
\tunit) \ar@{}[r]|\to&
\tunit\\
&&\qu^-\\
&\qu^+  \ar@{.}@/^/[ur]\\
\grey{\qu^-}    \ar@[grey]@{.}@/^/[ur]\\
&\grey{\qu^+}   \ar@{.}@[grey]@/^/[uuur]\\
\qu^-   \ar@{.}@/^/[uuur]\\
\done^+ \ar@{.}@/^1pc/[uuu]
}
\]
\caption{Non P-visible play}
\label{fig:non_pvis}
\end{minipage}
\end{figure}
We show in Figure \ref{fig:games_ref_op} an operational description as
to how this term indeed realises this play.

Again, this cannot be realised in $\PCF$. To show this, we
give a version of \emph{innocence} \cite{ho}, formalizing 
that without state, evaluating the same subterm yields
the same response. The first step is a mathematical way to state
that two plays ``correspond to the same subterm'', like the two prefixes
of the play of Figure \ref{fig:games_ref_op} terminating with a $\qu^-$
on the left.

The operation computing (a mathematical notion of) ``current subterm''
is the \emph{P-view}:

\begin{defi}\label{def:pview}
Let $s \in \Alt(A)$. Its \textbf{P-view} is the subsequence
defined by induction:
\[
\begin{array}{rcll}
\pview{\varepsilon} &=& \varepsilon\\
\pview{s a^+} &=& \pview{s} a^+\\
\pview{s a^+_1 s' a_2^-} &=& \pview{s} a_1^+ a_2^- &\text{if $a_1^+
\imc_A
a_2^-$}\\
\pview{s a^+_1 s' a_2^-} &=& a_2^-                 &\text{if $a_2$ is
negative minimal
in $A$}
\end{array}
\]
\end{defi}

We take the immediate prefix for $P$-ending plays and follow the pointer
for $O$-ending plays. For instance, the prefixes of length 3 and 5 of
the play on Figure \ref{fig:games_ref_op} have the same P-view,
capturing that they correspond to the same subterm.  This is
a powerful definition -- really, the distinguishing feature of HO games
\cite{ho} -- and it often takes newcomers a while to digest.
Interestingly, our forthcoming \emph{parallel innocence} will be phrased
quite differently.

But this is not yet conclusive: if $s \in \Alt(A)$, it might
be that $\pview{s} \not \in \Alt(A)$.  For instance, in Figure
\ref{fig:non_pvis} we gray out moves not selected in computing the
P-view of $s \in \Alt(A)$ for $A = \intr{(\tunit \to \tunit) \to
\tunit}$. The subsequence of $\pview{s}$ in black is an
alternating sequence of $\ev{A}$, but fails \emph{valid} of
Definition \ref{def:alt_play}. Indeed, the ``justifier'' of $\done^+$,
its immediate dependency in $A$, is not selected -- thus 
$\ev{\pview{s}}$ is not down-closed. Accordingly, we say:

\begin{defi}\label{def:pvis}
A play $ \in \Alt(A)$ is \textbf{P-visible} if for all prefix $\forall
t \sqsubseteq s$, $\pview{t} \in \Alt(A)$.

Likewise, a strategy $\sigma : A$ is \textbf{P-visible} iff all its
plays are P-visible.
\end{defi}

So, ``computing P-views never drops pointers'', or ``Player
always points in the P-view''.  On P-visible $s \in \Alt(A)$, the P-view
always yields a well-formed (P-visible) play.

We now define \emph{innocent
strategies} as those that behave the same in any situation where
the same subterm is being evaluated, \emph{i.e.} whose behaviour only
depends on the \emph{P-view}:

\begin{defi}\label{def:innocence}
A $P$-visible alternating strategy $\sigma : A$ is \textbf{innocent} if
it satisfies:
\[
\begin{array}{ll}
\text{\emph{innocence:}} & \text{for all $sa^+ \in \sigma$, for all $t
\in \sigma$,
if $\pview{s} = \pview{t}$ then $ta^+ \in \sigma$.}
\end{array}
\]
\end{defi}

That $ta^+ \in \Alt(A)$ is well-formed relies on $\pview{s} =
\pview{t}$, so that the causal dependencies of $a$ in $A$
appear in $t$\footnote{In traditional Hyland-Ong games based on plays
with points, one would conclude the above definition 
with something like ``\dots then $ta \in \sigma$, where $a$
has the same pointer as in $sa$'', which is rarely made very formal.
Here, because pointers are derived
the above definition is rigorous and self-contained.}.  All
structural morphisms of $\NegAlt$ are innocent. Innocent strategies
compose -- though this is infamously tricky to prove,
prompting a significant line of work
investigating the structures arising from the composition of innocent
strategies \cite{curien,hhm,DBLP:conf/fossacs/ClairambaultD15}. We do
not review here the proof of stability under composition.

The interpretation of $\PCF$ yields only innocent
strategies, \emph{i.e.} targets the cartesian closed lluf
subcategory $\NegAltinn_\oc$ of innocent strategies. 
Hence, the play at the beginning of Figure \ref{subsec:vis_inn} cannot
be realised in $\PCF$. We also get a cartesian closed lluf subcategory
$\NegAltinnwb_\oc$ with well-bracketing. Finally, the weaker
\emph{P-visibility} is also preserved under the categorical operations,
forming lluf sub-cartesian closed categories $\NegAltvis_\oc$ and
$\NegAltviswb_\oc$.

\subsection{Full Abstraction for $\PCF$} We have now
eliminated all non $\PCF$-definable behaviour. We review the
corresponding \emph{definability} and \emph{intensional full
abstraction} arguments.

\subsubsection{Definability}\label{subsubsec:def_inn}
Call a \textbf{P-view} on arena $A$ any $s \in \Alt(A)$ 
invariant under P-view, \emph{i.e.} $\pview{s} = s$ -- 
those are exactly the $s \in \Alt(A)$ such that for all $t
s_i^+ s_{i+1}^- \sqsubseteq s$, we have $s_i \imc_A s_{i+1}$, in other
words \emph{Opponent always points to the previous move}.  We motivated
P-views as a way to address specific ``\emph{subterms}'' of 
a strategy -- it might therefore not be a surprise that those are the
key to reconstruct a term from an innocent strategy. We write
\[
\pviews{\sigma} = \{\pview{s} \mid s \in \sigma\}
\]
for the set of P-views of $\sigma$. If $\sigma$ is innocent,
then it is simple that $\pviews{\sigma} \subseteq \sigma$.
Moreover, $\sigma$ can then be recovered as the set of P-visible $s \in
\Alt(A)$ such that for all $t \sqsubseteq s$, $\pview{t} \in
\pviews{\sigma}$.

For $\sigma : A$ innocent, $\pviews{\sigma}$ is not a
strategy as in general it fails receptivity. It is however easily
verified to be a \emph{prestrategy} -- and in particular \emph{uniform}.
Moreover, we have:

\begin{prop}\label{prop:causal_inn}
For $\sigma, \tau : A$ innocent strategies on $A$, we have
$\pviews{\sigma} = \pviews{\tau}$ iff $\sigma = \tau$.

Likewise, $\pviews{\sigma} \simstrat \pviews{\tau}$ if and only if
$\sigma \simstrat \tau$.
\end{prop}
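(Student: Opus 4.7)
The plan is to reduce both equivalences to the recovery formula stated just before the proposition: an innocent $\sigma$ is exactly the set of $P$-visible $s \in \Alt(A)$ such that for every $t \sqsubseteq s$, $\pview{t} \in \pviews{\sigma}$. The first equivalence follows immediately: if $\pviews{\sigma} = \pviews{\tau}$, the recovery formula yields $\sigma = \tau$, and the converse is trivial.

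For the $\simstrat$ equivalence I check the four clauses in each direction. For $(\Rightarrow)$, given $\sigma \simstrat \tau$, the key observation is that a positive extension $tb$ of a P-view $t$ is itself a P-view (since $\pview{tb} = \pview{t}b = tb$), and the defining property of P-views --- ``Opponent always points to the immediately preceding move'' --- is preserved by symmetries because these are order-isomorphisms respecting chronological order. So any witness provided by the four clauses of $\sigma \simstrat \tau$ stays within $\pviews{\tau}$.

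For $(\Leftarrow)$, I focus on $\to$-simulation (the other clauses being symmetric): consider $sa^+ \in \sigma$, $t \in \tau$ with $s \sym_A t$. By innocence of $\sigma$ applied to $sa^+ \in \sigma$ and $\pview{s} \in \sigma$ (and using $\pview{\pview{s}} = \pview{s}$), I obtain $\pview{s}a \in \sigma$, which is itself a P-view and hence belongs to $\pviews{\sigma}$. Since $P$-view computation only depends on the causal structure preserved by $\theta_{s,t}$, the selected indices in $s$ and $t$ coincide, giving $\pview{s} \sym_A \pview{t}$. Applying the hypothesis $\pviews{\sigma} \simstrat \pviews{\tau}$ yields $b^+$ with $\pview{t}b \in \pviews{\tau}$ and $\pview{s}a \sym_A \pview{t}b$; innocence of $\tau$ then lifts this to $tb \in \tau$.

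The technical heart is to verify $sa \sym_A tb$, i.e., that $\theta_{s,t} \cup \{(a,b)\}$ lies in $\tilde{A}$. This is the main obstacle, because the extension axiom for isomorphism families is non-unique and does not directly produce the desired extension of $\theta_{s,t}$ by $(a,b)$. The argument instead leverages $P$-visibility of both $\sigma$ and $\tau$: it confines the causal predecessors of $a$ (resp. $b$) in $A$ to $\ev{\pview{s}}$ (resp. $\ev{\pview{t}}$), and on this restriction the new bijection already coincides with the known symmetry $\theta_{\pview{s}a, \pview{t}b}$. This forces the combined bijection to be an order-isomorphism; on concrete arenas (the setting relevant to types), label preservation --- inherited from both constituents --- together with \emph{transparent} then concludes.
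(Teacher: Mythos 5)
Your proof follows essentially the same route as the paper's: the forward direction uses $\pviews{\sigma}\subseteq\sigma$ together with the fact that the bisimulation game preserves P-views, and the converse restricts $sa^+ \in \sigma$ to its P-view, applies the hypothesis $\pviews{\sigma}\simstrat\pviews{\tau}$, and lifts the resulting extension back to $\tau$ by innocence, with receptivity handling the negative clauses. The only difference is that you spell out the final check $sa^+ \sym_A tb^+$, which the paper's proof treats as immediate, and your way of discharging it (P-visibility confines the causal dependencies of $a$ and $b$ to the respective P-views, where the candidate bijection agrees with the known symmetry $\theta_{\pview{s}a,\pview{t}b}$, and \emph{transparent} concludes on concrete arenas) is a sound elaboration of that implicit step.
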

\begin{proof}
We only detail the second statement. Firstly, if $\sigma \simstrat
\tau$, it is direct that $\pviews{\sigma} \simstrat \pviews{\tau}$ as
$\pviews{\sigma} \subseteq \sigma$ and $\pviews{\tau} \subseteq \tau$
and the bisimulation game of Definition \ref{def:simstrat} preserves
P-views.

If $\pviews{\sigma} \simstrat \pviews{\tau}$, take $sa^+ \in \sigma,
t \in \tau$ s.t. $s \sym_A t$. In particular $\pview{s} \sym_A
\pview{t}$ and $\pview{s} a^+ \in \pviews{\sigma}$. By $\to$-extension,
there is $b^+$ s.t. $\pview{t} b^+ \in \pviews{\tau}$, so $t b^+ \in
\tau$ by innocence. This proves $\to$-extension, $\ot$-extension
is symmetric and $\to, \ot$-receptivity follow by receptivity of
$\sigma, \tau$.
\end{proof}

So innocent strategies have two representations: a full $\sigma
: A$ satisfying Definition \ref{def:innocence}; or, following
Proposition \ref{prop:causal_inn}, the set $\pviews{\sigma}$.
Anticipating on later developments, we refer to $\pviews{\sigma}$ as
the
\emph{causal} presentation of $\sigma$. In traditional innocent game
semantics, the forest of P-views is called (notably by Curien
\cite{curien2006notes}) the \emph{meager} representation, while the
set of plays is \emph{fat}. Here this 
is misleading, because plays in $\pviews{\sigma}$ still carry explicit
copy indices. In particular $\pviews{\sigma}$ has branches matching 
all copyable Opponent moves, which is ``fat''.

To recover the meager representation, we show:

\begin{prop}\label{prop:meager_inn}
Consider $A$ a concrete arena and $\sigma, \tau : A$ innocent
strategies.

If $\mathscr{P}(\pviews{\sigma}) = \mathscr{P}(\pviews{\tau})$, then
$\sigma \simstrat \tau$.
\end{prop}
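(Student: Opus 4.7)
The plan is to chain together Propositions \ref{prop:strat_ho} and \ref{prop:causal_inn}, reducing the statement to two previously established facts with no new combinatorial work.

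First, recall from the discussion immediately preceding Proposition \ref{prop:causal_inn} that, although $\pviews{\sigma}$ and $\pviews{\tau}$ typically fail receptivity, they are both \emph{uniform prestrategies} on $A$. Now, Proposition \ref{prop:strat_ho} is stated for \emph{prestrategies}, and inspection of its proof shows that it uses only the four clauses of Definition \ref{def:simstrat} (the $\to,\ot$-simulation and $\to,\ot$-receptive clauses of uniformity), never the receptivity condition of strategies. Hence it applies verbatim to $\pviews{\sigma}$ and $\pviews{\tau}$, and from the hypothesis $\mathscr{P}(\pviews{\sigma}) = \mathscr{P}(\pviews{\tau})$ we obtain
\[
\pviews{\sigma} \simstrat \pviews{\tau}\,.
\]

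Second, the second part of Proposition \ref{prop:causal_inn} asserts that, for innocent strategies $\sigma$ and $\tau$, $\pviews{\sigma} \simstrat \pviews{\tau}$ is equivalent to $\sigma \simstrat \tau$. Applying this to the equivalence just obtained gives $\sigma \simstrat \tau$, as required.

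The only point requiring vigilance is verifying that Proposition \ref{prop:strat_ho} does apply to uniform prestrategies that are not strategies; but this is exactly the reason the $\to,\ot$-receptive clauses were built into the definition of $\simstrat$ (cf.\ the footnote following Definition \ref{def:simstrat}), so no extra work is needed. I do not expect any genuine obstacle: no symmetry manipulation, pointer surgery, or reasoning about P-views beyond what is already packaged in the two preceding propositions is required.
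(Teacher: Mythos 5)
Your proof is correct and follows exactly the paper's own argument: the paper likewise deduces $\pviews{\sigma} \simstrat \pviews{\tau}$ from Proposition \ref{prop:strat_ho} (which is indeed stated for prestrategies, and $\pviews{\sigma}, \pviews{\tau}$ are uniform prestrategies) and then concludes by the second part of Proposition \ref{prop:causal_inn}. No gap.
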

\begin{proof}
Let $\sigma, \tau : \intr{A}$ be innocent strategies on $A$ and assume
that $\mathscr{P}(\pviews{\sigma}) = \mathscr{P}(\pviews{\tau})$.

By Proposition \ref{prop:strat_ho}, $\pviews{\sigma} \simstrat
\pviews{\tau}$. Then, by Proposition \ref{prop:causal_inn}, it follows
that $\sigma \simstrat \tau$.
\end{proof}

This, at last, provides the \emph{meager} representation.

These representations have distinct advantages: composition is only
directly defined on the fat representation; but it is the
meager one that bridges innocent strategies and syntax and
allows definability. An innocent alternating strategy
$\sigma : A$ is \textbf{finite} iff $\mathscr{P}(\pviews{\sigma})$ is
finite. Its \textbf{size} is simply the cardinal of that set.
Definability simply follows the meager form:

\begin{thm}\label{th:definability}
Let $A$ be a $\PCF$ type, and $\sigma : \intr{A}$ be a finite
well-bracketed innocent
strategy.

Then, there is a $\PCF$ term $\vdash M : A$ s.t. $\intr{M} \simstrat
\sigma$.
\end{thm}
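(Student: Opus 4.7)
The plan is to prove the statement by induction on the size of $\sigma$, working with the meager representation: by Proposition~\ref{prop:meager_inn} it suffices to build $M$ whose innocent strategy has the same image under $\mathscr{P}(\pviews{-})$ as $\sigma$. First, I would reduce to a canonical form. Every $\PCF$ type $A$ may be written as $A_1 \to \dots \to A_n \to \tx$ with $\tx$ ground, and correspondingly every innocent strategy on $\intr{A}$ is the curryfication of one on $\intr{A_1}, \dots, \intr{A_n} \vdash \tx$. So I reduce to building, by induction, for any context $\Gamma = x_1:A_1,\dots,x_n:A_n$ and any ground $\tx$, a term $\Gamma \vdash M : \tx$ matching a given finite well-bracketed innocent $\sigma$ on $\intr{\Gamma} \to \intr{\tx}$.

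The induction dispatches on the initial behaviour of $\sigma$ at the root of its P-view forest, determined by its unique response (if any) to the minimal Opponent question $\qu^-$ on $\tx$. There are three cases. \emph{(a) No response:} then $\pviews{\sigma}$ is just $\{\varepsilon, \qu^-\}$ and we set $M = \bot_\tx$. \emph{(b) Answer move:} $\sigma$ responds with a value $v^{+,\An} \in \ev{\tx}$; by \emph{answer-closing} no further move is possible, so $M = v$ works. \emph{(c) Question in $\Gamma$:} $\sigma$ responds with a $\qu^+$ justified by the minimal move of some $\intr{A_i}$, where $A_i = B_1 \to \dots \to B_m \to \ty$ with $\ty$ ground. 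Here the real work happens: the branching structure of the P-view forest after $\qu^- \cdot \qu^+_{x_i}$ consists of two orthogonal pieces of data. On the one hand, for each argument position $1\le j\le m$ of the call to $x_i$, the residual P-views under an Opponent query on the root of $\intr{B_j}$ form a smaller innocent strategy $\tau_j$ on $\intr{\Gamma} \to \intr{B_j}$; by induction this gives us $\Gamma \vdash N_j : B_j$. On the other hand, well-bracketing and \emph{answer-linear} force Opponent's return at the $\ty$-position of $A_i$ to match the pending question $\qu^+_{x_i}$; by determinism and finiteness, only finitely many values $v_1,\dots,v_k \in \ev{\ty}$ are explored, and each continuation $\sigma_\ell$ (the P-views of $\sigma$ beginning with $\qu^- \cdot \qu^+_{x_i} \cdot v_\ell^-$, prefix stripped) is again a strictly smaller finite well-bracketed innocent strategy on $\intr{\Gamma} \to \intr{\tx}$, yielding by induction some $\Gamma \vdash M_\ell : \tx$. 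I then set
\[
M \;=\; \mathbf{let}\, y = x_i\, N_1\, \dots\, N_m\, \mathbf{in}\,
\mathbf{case}\, y\, \mathbf{of}\, v_1 \mapsto M_1 \mid \dots \mid v_k \mapsto M_k.
\]

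It remains to check that $\intr{M} \simstrat \sigma$. Compositionally, $\intr{M}$ dispatches the initial $\qu^-$ to the application of $x_i$, which (by the strategy for $\mathsf{let}_{\ty,\tx}$ recalled in Figure~\ref{fig:let}) produces a call $\qu^+$ pointing to $x_i$, explores each $N_j$ on demand of Opponent, and upon receiving a value $v_\ell^-$ feeds it once into the case, reproducing $M_\ell$; by induction this matches each branch of $\sigma$ exactly. Well-bracketedness of $\intr{M}$ aligns with that of $\sigma$ through the $\mathbf{let}/\mathbf{case}$ translation. Uniqueness of continuations after a given Opponent return is exactly innocence, which is why the $\mathbf{let}$ construct is indispensable here, as forewarned in Section~\ref{subsec:sugar}: without memoisation via $\mathbf{let}$, re-evaluating $x_i\, N_1\dots N_m$ for each case branch would multiply calls to $x_i$, producing a strategy distinct from $\sigma$.

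The main obstacle I anticipate is case (c), and more precisely ensuring that the decomposition is genuinely into strictly smaller innocent strategies in the size measure $|\mathscr{P}(\pviews{-})|$, and that the recomposition via $\mathbf{let}$ and $\mathbf{case}$ faithfully reassembles $\sigma$ at the level of meager P-views. This hinges on three ingredients working in concert: innocence (continuations are determined by P-views, so residuals $\sigma_\ell$ and $\tau_j$ are well-defined innocent strategies), well-bracketing (Opponent's returns after the call to $x_i$ are precisely the answer moves of $\ty$, justifying the $\mathbf{case}$ on $y$), and the specific form of the denotation of $\mathbf{let}$ exhibited in Figure~\ref{fig:let} (evaluating its argument once and distributing the unique returned value). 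Once these fit together, the induction closes.
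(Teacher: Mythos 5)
Your proposal is correct and follows essentially the same route as the paper's proof: induction on the size of the meager P-view representation, case analysis on the unique first Player move (divergence, value, or a call to some $x_i$), extraction of strictly smaller residual innocent strategies for both the argument positions and the post-return continuations, and reassembly via the $\mathbf{let}$/$\mathbf{case}$ construct whose memoisation is exactly what makes the interpretation match $\sigma$ up to $\simstrat$. The only differences are cosmetic (you keep the context $\Gamma$ explicit and apply $x_i$ to terms $N_j$, whereas the paper abstracts the residuals as closed terms and re-applies them to the variables).
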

\begin{proof}
We describe the argument -- for more details, the reader is
referred to \cite{ho}.

Without loss of generality, $A$ has the form $A_1 \to \dots \to A_n \to
\tx$ where
for each $1\leq i \leq n$,
\[
A_i = A_{i, 1} \to \dots \to A_{i, p_i} \to \tx_i\,.
\]

We reason on $\pviews{\sigma}$, by induction on the size of $\sigma$.
If $\sigma$ has no reaction to the (unique) minimal $\qu^-$ in $\tx$
(\emph{i.e.} $\sigma = \{\varepsilon, \qu^-\}$), any diverging term will
do. Otherwise, by determinism there is exactly one move $a^+$ s.t.
$\qu^- a^+ \in \sigma$. If $a^+$ is an answer $v^+$ on $\tx$, then $M$
is the matching constant.  Otherwise, $a^+$ is the initial 
$\qu^+_{i_0}$ in some $A_{i_0}$\footnote{Here the subscripts
indicate the type component and not copy indices, which are left
un-specified.}. The situation is drawn as
\[
\xymatrix@C=0pt@R=-5pt{
A_1 &\to& \dots &\to& (A_{i, 1} &\to & \dots & \to & A_{i, p_i} & \to &
\tx_i)&\to&
\dots & \to & A_n & \to & \tx\\
&&&&&&&&&&&&&&&&\qu^-\\
&&&&&&&&&&\qu^+_{i_0}
        \ar@{.}@/^/[urrrrrr]\\
&&&&\grey{\qu_{i_0,1}^-}
        \ar@{.}@[grey]@/^/[urrrrrr]&&&&
\grey{\qu_{i_0, p_{i_0}}^-}
        \ar@{.}@[grey]@/^/[urr]&&
\grey{v^-}
}
\]
with, in grey, the possible extended P-views. For each extension
there is a residual substrategy.

We extract those -- first, if $\qu_{i_0}^+$ immediately
returns. For each value $v$ in $\tx_i$, we form
\[
\pviews{\sigma_v} = \{\qu^- s \mid \qu^- \qu_{i_0}^+ v^- s \in
\pview{\sigma}\}\,,
\]
a causal innocent strategy on $\intr{A}$ of size strictly lesser than
$\sigma$. By induction hypothesis there is $\vdash M_v : A$ with
$\intr{M_v} \simstrat \sigma_v$. As $\sigma$ is finite, there are
finite many $v$ s.t. $\sigma_v$ is non-diverging.

Alternatively, for all $1\leq j \leq p_{i_0}$, we consider P-views
$\qu^- \qu_{i_0}^+ \qu_{i_0, j}^- s \in \pviews{\sigma}$
where as a P-view, $s$  answers neither $\qu_{i_0}^+$, nor $\qu^-$
by well-bracketing. Such a P-view yields
\[
\qu_{i_0, j}^- s \in \Alt(\intr{A_1 \to \dots \to A_n \to A_{i_0, j}})
\]
a P-view where moves in $s$ formerly depending on $\qu^-$ in $\intr{A}$
are set to depend on $\qu_{i_0, j}^-$. Considering all such P-views
generates a causal innocent strategy of size strictly lesser than
$\sigma$, hence by induction hypothesis there is $\vdash M_{i_0, j} :
A_1 \to \dots \to A_n \to A_{i_0, j}$ s.t.
$\intr{M_{i_0, j}} \simstrat \sigma_{i_0, j}$.

Finally, with all this data we may form $\vdash M : A$ as
\[
\begin{array}{l}
\lambda x_1^{A_1}\dots x_n^{A_n}.\, 
\mathbf{case}\,x_{i_0}\,(M_{i_0,1}\,x_1\,\dots\,x_n) \dots (M_{i_0,
p_{i_0}}\,x_1\,\dots\,x_n)\,\mathbf{of}\\
\hspace{60pt} v_1 \mapsto M_{v_1}\\
\hspace{70pt} \dots\\
\hspace{60pt} v_p \mapsto M_{v_p}
\end{array}
\]
where $p$ is such that every $\sigma_{v_i}$ with $i >p$ is diverging.
We get, as needed $\intr{M} \simstrat \sigma$.
\end{proof}

The final statement is a careful verification following the definition
of the interpretation, see \cite{ho}. Here, $\mathbf{case}$ is the
syntax introduced in Section~\ref{subsec:sugar}, involving the
$\mathbf{let}$ construct. Without that, simply iterating $\mathbf{if}$
constructs would yield a strategy that re-computes
\[
x_{i_0}\,(M_{i_0,1}\,x_1\,\dots\,x_n) \dots (M_{i_0,
p_{i_0}}\,x_1\,\dots\,x_n)
\]
each time it matches it against a value. This is what is done in
\cite{ho} as their version of $\PCF$ does not include a $\mathbf{let}$
construct. This yields a term that is not $\simstrat$-equivalent to
$\sigma$, but is nonetheless $\obs$-equivalent (see Section~\ref{sec:def_obs}), which suffices for full abstraction. We prefer the
present more intensional definability result, and hence have included
the $\mathbf{let}$ construct\footnote{An alternative is to include a
primitive $\mathbf{case}$ evaluating its argument exactly once. The
terms then obtained via definability are easily characterised
syntactically -- dubbed \emph{$\PCF$ B\"ohm trees} by Curien,
and are studied in \cite{curien}. The definability process 
informs a concrete order-isomorphism between finite meager innocent
strategies and finite $\PCF$ B\"ohm trees, emphasizing that meager
innocent strategies \emph{are} syntax.}. 

\subsubsection{Intensional full
abstraction}\label{subsubsec:intensional_fa} Full abstraction of a
denotational model with respect to a language was defined in Section~\ref{sec:def_obs}. Of course, $\NegAlt_\oc$ is not
fully abstract for $\PCF$ as it stands. For instance, 
$\intr{\lambda x^{\tunit}.\,x;\,x} \not \simstrat \intr{\lambda
x^{\tunit}.\,x}$: game semantics displays explicitly individual calls
to $x$, so we see that the term on the left hand side evaluates $x$
twice whereas the other evaluates it once.
However, we do of course have $\lambda x^{\tunit}.\,x;\,x \obs \lambda
x^{\tunit}.\,x$; this can for instance be deduced from them having the
same interpretation in Scott domains \cite{plotkin}.

The celebrated ``full abstraction for $\PCF$'' results are in fact what
(following \cite{ajm}) we call \emph{intensional full
abstraction}.  Fixing an interpretation $\intr{-}$ of $\PCF$ into a
$\C$, we set
\[
f \obs g \qquad\Leftrightarrow\qquad \forall \alpha \in \C(A \to B,
\intr{\tunit}),
\quad (\alpha \circ \overline{f} =  \alpha \circ \overline{g})\,,
\]
for $f,g \in \C(A, B)$, with $\overline{f}, \overline{g} \in
\C(\emptyar,A \to B)$ obtained via cartesian closure, and $\emptyar$ the
terminal object of $\C$. This mimics the definition of observational
equivalence.  We say that $\C$ is \textbf{intensionally
fully abstract} for $\PCF$ iff the quotiented model $\C^\obs$ is fully
abstract.

\begin{thm}\label{th:fa_pcf}
The model $\NegAltinnwb_\oc$ is intensionally fully abstract for $\PCF$.
\end{thm}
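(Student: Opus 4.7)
The plan is to establish the equivalence $M \obs N \,\Leftrightarrow\, \intr{M} \obs \intr{N}$, where the right-hand side is the semantic observational equivalence on the hom-set $\NegAltinnwb_\oc(\intr{\Gamma}, \intr{A})$ defined just before the theorem. Intensional full abstraction of the quotient model then holds tautologically. The required ingredients are already in place: adequacy (Proposition \ref{prop:adequacy}), compositionality of the interpretation, and definability for finite innocent well-bracketed strategies (Theorem \ref{th:definability}).

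For the implication $\intr{M} \obs \intr{N} \Rightarrow M \obs N$, I would argue by contraposition. Given a context $C[-]$ with, say, $C[M] \eval$ and $C[N] \div$, one interprets $C[-]$ by treating its hole as a free variable, yielding a strategy $\alpha \in \NegAltinnwb_\oc(\intr{\Gamma} \to \intr{A}, \intr{\tunit})$. Compositionality of $\intr{-}$ then gives $\intr{C[M]} \simstrat \alpha \odot_\oc \overline{\intr{M}}$ and similarly for $N$, where $\overline{-}$ is the cartesian-closure name. Applying adequacy (Proposition \ref{prop:adequacy}) in both directions transports the syntactic convergence/divergence to $\alpha \odot_\oc \overline{\intr{M}} \eval$ and $\alpha \odot_\oc \overline{\intr{N}} \div$, so $\alpha$ separates $\intr{M}$ from $\intr{N}$ semantically.

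The more delicate direction $M \obs N \Rightarrow \intr{M} \obs \intr{N}$ also proceeds by contraposition, relying on definability and a finite-approximation argument. Suppose $\alpha \in \NegAltinnwb_\oc(\intr{\Gamma} \to \intr{A}, \intr{\tunit})$ separates $\intr{M}$ from $\intr{N}$; since there are only two strategies on $\tunit$, assume without loss of generality $\alpha \odot_\oc \overline{\intr{M}} \eval$ and $\alpha \odot_\oc \overline{\intr{N}} \div$. By Proposition \ref{prop:causal_inn}, $\alpha$ is determined by its set of P-views $\pviews{\alpha}$, which is the directed union of its finite down-closed subsets. Each such finite subset generates a finite innocent well-bracketed sub-strategy $\alpha_0 \subseteq \alpha$; by continuity of composition with respect to inclusion, the unique successful play witnessing $\alpha \odot_\oc \overline{\intr{M}} \eval$ already lies in $\alpha_0 \odot_\oc \overline{\intr{M}}$ for some such $\alpha_0$. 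Monotonicity meanwhile gives $\alpha_0 \odot_\oc \overline{\intr{N}} \subseteq \alpha \odot_\oc \overline{\intr{N}}$, so divergence of the latter forces divergence of the former. Definability (Theorem \ref{th:definability}) then presents $\alpha_0$ as $\intr{C'}$ for a closed $\PCF$ term $C'$; recasting $C'$ as a one-hole context $C[-]$ and applying adequacy a last time yields $C[M] \eval$ and $C[N] \div$, contradicting $M \obs N$.

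The main obstacle is the finite-approximation step underpinning the second direction. One must verify that every innocent well-bracketed strategy is the directed sup of its finite innocent well-bracketed sub-strategies in a manner compatible with $\simstrat$, and that composition is continuous for this order. This is precisely what the causal presentation $\pviews{\sigma}$ of innocent strategies is designed to deliver: a finite down-closed subset of P-views determines a finite innocent prestrategy whose receptive closure is a well-bracketed innocent strategy, and the very small size of $\tunit$ -- a converging play has length two -- ensures that only finitely many P-views of $\alpha$ need be retained to preserve convergence of $\alpha \odot_\oc \overline{\intr{M}}$. Everything else is a standard assembly of adequacy, compositionality, and definability.
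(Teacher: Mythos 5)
Your proposal is correct and follows essentially the same route as the paper: by contraposition, reduce a separating semantic test $\alpha$ to a finite innocent well-bracketed approximant, apply finite definability (Theorem \ref{th:definability}) to turn it into a $\PCF$ context, and conclude with adequacy (Proposition \ref{prop:adequacy}); the easy direction is likewise handled via adequacy and compositionality as in the corollary following adequacy. The only difference is that you spell out the finiteness reduction via directed approximation of $\pviews{\alpha}$ and continuity of composition, where the paper simply cites the finiteness of the witnessing interaction from \cite{ho} -- the same underlying argument.
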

\begin{proof}
Consider $\vdash M, N : A$ s.t. $M \obs N$, and assume $\intr{M}
\not \obs \intr{N}$, \emph{i.e.} there is a test $\alpha \in
\NegAltinnwb_\oc(\intr{A}, \intr{\tunit})$ s.t. $\alpha \odot_\oc
\intr{M} \neq \alpha \odot_\oc \intr{N}$ -- say \emph{w.l.o.g.} that
$\alpha \odot_\oc \intr{M} \eval$ converges while $\alpha \odot_\oc
\intr{N} \div$.  One may prove (see \cite{ho} for details)
that the corresponding interactions expose only a finite part of
$\alpha$, so \emph{w.l.o.g.} we can assume $\alpha$ finite. By Theorem
\ref{th:definability}, $\alpha$ is defined via a $\PCF$ term,
providing a context $C[-]$ s.t. $\intr{C[M]} = \alpha \odot_\oc
\intr{M}$ and $\intr{C[N]} = \alpha \odot_\oc \intr{N}$. But then, we
must have $C[M] \eval$ while $C[N]$ diverges by Proposition
\ref{prop:adequacy}; contradiction.
\end{proof}

Intensional full abstraction is full abstraction for an \emph{a
priori} non effective quotiented model: it
does not directly provide effective tools to reason about
observational equivalence. Instead, it is a
way of stating that we have faithfully captured the intensional
behaviour of programs, in the sense that the added tests in the model
are not able to distinguish more -- there is no ``abstraction
leak''. Often, it follows from adequacy and finite definability.

\emph{Full abstraction} is of course the preferred notion when the
quotiented model is sufficiently effective and the interpretation
computable (\emph{i.e. effectively presentable} \cite{pisanotes}). But
when it requires an undecidable quotient\footnote{For $\PCF$ this is
unavoidable as observational equivalence is undecidable already for
finitary $\PCF$ \cite{DBLP:journals/tcs/Loader01}.}, we believe it
preferable to use a different terminology: 
``\emph{intensional full abstraction}'' puts the emphasis on the model
pre-quotient. In game semantics, it is that model pre-quotient that had
the most impact.  In particular it then led to
\emph{effective} fully abstract models for stateful languages,
leveraging the results and insights above.

\subsection{Full Abstraction for $\IA$} \label{subsec:fa_ia}
The exposition in Section~\ref{subsec:vis_inn} suggests that also
without innocence, strategies are computationally relevant for
programs with mutable state. We now focus on the
game semantics of $\IA$, namely $\PCF$ extended with interference (see
Section~\ref{sec:syntax}).

\subsubsection{Interpretation of types} With respect to $\PCF$, $\IA$
adds the type $\var$ of \emph{integer references}, and the
type $\sem$ of \emph{semaphores}. Their usual game semantic interpretation
is \emph{behavioural}, in the sense that it represents
how one may interact on those types: one may read a reference or write a
new value in it; and likewise one may grab a semaphore, or release it.

\begin{figure}
\begin{minipage}{.35\linewidth}
\[
\scalebox{.9}{$
\xymatrix@C=10pt@R=10pt{
\mwrite{0}^{-,\Qu}      \ar@{~}[r]
                \ar@{.}[d]&
\mwrite{1}^{-,\Qu}      \ar@{~}[r]
                \ar@{.}[d]&
\dots           \ar@{~}[r]&
\mwrite{n}^{-,\Qu}      \ar@{~}[r]
                \ar@{.}[d]&
\dots\\
\ok^{+,\An}&
\ok^{+,\An}&&
\ok^{+,\An}
}$}
\]
\caption{$\var_w$}
\label{fig:varw}
\end{minipage}
\hfill
\begin{minipage}{.3\linewidth}
\[
\scalebox{.9}{$
\xymatrix@C=10pt@R=10pt{
&&\read^{-,\Qu}
        \ar@{.}[dll]
        \ar@{.}[dl]
        \ar@{.}[d]
        \ar@{.}[dr]\\
0^{+,\An}       \ar@{~}[r]&
1^{+,\An}       \ar@{~}[r]&
2^{+,\An}       \ar@{~}[r]&
\dots
}$}
\]
\caption{$\var_r$}
\label{fig:varr}
\end{minipage}
\hfill
\begin{minipage}{.18\linewidth}
\[
\scalebox{.9}{$
\xymatrix@C=10pt@R=9pt{
\mgrab^{-,\Qu}  \ar@{~}[r]
                \ar@{.}[d]&
\mrelease^{-,\Qu}\ar@{.}[d]\\
\ok^{+,\An}&\ok^{+,\An}
}$}
\]
\caption{$\sem$}
\label{fig:sem}
\end{minipage}
\end{figure}
To capture this, we define $-$-arenas: $\var_w = \bigwith_{n \in
\mathbb{N}} \tunit$, $\var_r = \tnat$ and $\sem   = \tunit \with
\tunit$, and set $\intr{\var} = \var_w \with \var_r$ and $\intr{\sem} =
\sem$. Although we reuse the arena constructions for $\tunit$ and
$\tnat$, for specific moves in these arenas we use the naming
conventions of Figures \ref{fig:varw}, \ref{fig:varr} and \ref{fig:sem}
-- in Figures \ref{fig:varw} and \ref{fig:varr} all distinct moves in
the same row are in pairwise conflict.

\subsubsection{Interacting with Memory and
Semaphores}\label{subsubsec:seq_inter_mem} The idea behind Abramsky and
McCusker's interpretation of state is that it is not
the operations of reading, writing, grabing or releasing a semaphore
that are effectful -- indeed, those are just requests via the interface
provided by the $\var$ and $\sem$ types and associated commands. The
strategy for a program with free reference or semaphore variables will
simply record their accesses leaving the memory and semaphores
uninterpreted. For instance, the strategy for $x : \var \vdash x:=0;\,!x
: \tnat$ includes:
\[
\xymatrix@R=-5pt@C=15pt{
\oc \var &\lin& \tnat\\
&&\qu^-\\
\mwrite{0}^+\\
\ok^-\\
\read^+\\
42^-
}
\]
a play where the value read is not the value just written. The actual
effectful computation will be handled in Section~\ref{subsubsec:newrs}
with the creation of new references and semaphores.

Accordingly, we set the interpretation of memory and semaphore accesses
as:
\begin{eqnarray*}
\intr{M:=N} &=& \assign \odot_\oc \tuple{\intr{N}, \intr{M}}\\
\intr{!M} &=& \deref \odot_\oc \intr{M}\\
\intr{\grab(M)} &=& \sgrab \odot_\oc \intr{M}\\
\intr{\release(M)} &=& \srelease \odot_\oc \intr{M}
\end{eqnarray*}
\begin{figure}
\begin{minipage}{.26\linewidth}
\[
\xymatrix@R=-7pt@C=0pt{
\oc (\tnat & \with & \var) & \lin & \tunit\\
&&&&\qu^-\\
\qu^+_{\grey{0}}\\
n^-_{\grey{0}}\\
&&\mwrite{n}^+_{\grey{1}}\\
&&\ok^-_{\grey{1}}\\
&&&&\done^+
}
\]
\caption{$\assign$}
\label{fig:assign}
\end{minipage}
\hfill
\begin{minipage}{.22\linewidth}
\[
\xymatrix@R=-.5pt@C=0pt{
\oc \var &\lin & \tnat\\
&& \qu^-\\
\read^+_{\grey{0}}\\
n^-_{\grey{0}}\\
&& n^+
}
\]
\caption{$\deref$}
\label{fig:deref}
\end{minipage}
\hfill
\begin{minipage}{.22\linewidth}
\[
\xymatrix@R=-.5pt@C=0pt{
\oc \sem &\lin& \tunit\\
&& \qu^-\\
\mgrab^+_{\grey{0}}\\
\ok^-_{\grey{0}}\\
&&\done^+
}
\]
\caption{$\sgrab$}
\label{fig:grab}
\end{minipage}
\hfill
\begin{minipage}{.22\linewidth}
\[
\xymatrix@R=-.5pt@C=0pt{
\oc \sem &\lin& \tunit\\
&& \qu^-\\
\mrelease^+_{\grey{0}}\\
\ok^-_{\grey{0}}\\
&&\done^+
}
\]
\caption{$\srelease$}
\label{fig:release}
\end{minipage}
\end{figure}
using the (innocent well-bracketed) strategies of Figures
\ref{fig:assign}, \ref{fig:deref}, \ref{fig:grab}, and
\ref{fig:release}. Finally, we set
\[
\intr{\mkvar\,M\,N} = \tuple{\tuple{\intr{M}\,n\mid n \in \mathbb{N}},
\intr{N}}\,,
\qquad
\qquad
\intr{\mksem\,M\,N} = \tuple{\intr{M}, \intr{N}}\,,
\]
where $\intr{M}\,n$ is $\intr{M}$ applied to the constant strategy $n$
(using the cartesian closed structure of $\NegAlt_\oc$), and using
implicitly the isomorphisms $\var \iso (\with_{n\in \mathbb{N}} \tunit)
\with \tnat$ and $\sem \iso \tunit \with \tunit$.

Finite definability of finite innocent well-bracketed
strategies still holds -- the proof (see \cite{am}) directly extends
that of Theorem \ref{th:definability}, using bad variables and
semaphores.

\begin{prop}\label{prop:def_ia_inn}
Consider $A$ a type of $\IA$, and $\sigma : \intr{A}$ finite innocent
well-bracketed.

Then, there is a $\IA$ term $\vdash M : A$ \emph{not using $\newref$ or
$\newsem$}, such that $\intr{M} \simstrat \sigma$.
\end{prop}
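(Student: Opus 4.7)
The plan is to extend the proof of Theorem~\ref{th:definability} by structural induction on the size of $\sigma$, adding two new cases to handle the $\var$ and $\sem$ types. Normalising $A$ as $A_1 \to \dots \to A_n \to \theta$ with $\theta$ a primitive type ($\tx$, $\var$, or $\sem$), I would split the argument by the shape of $\theta$.

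For $\theta$ a ground type $\tx$, the proof proceeds as in Theorem~\ref{th:definability}, with the only update that Player's initial move may now also access an argument $x_{i_0}$ of type $\var$ or $\sem$ via $\read^+$, $\mwrite{n}^+$, $\mgrab^+$ or $\mrelease^+$, corresponding respectively to subterms $!x_{i_0}$, $x_{i_0} := n$, $\grab(x_{i_0})$ and $\release(x_{i_0})$. The residual substrategies following Opponent's reply have strictly smaller size and are assembled via $\mathbf{case}$ as before; the arguments fed to $x_{i_0}$ in the case of a function call are likewise obtained from the induction hypothesis.

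For $\theta = \var$, the minimal Opponent events of $\intr{A}$ are $\read^-$ and $\mwrite{n}^-$ for each $n \in \mathbb{N}$. Using the cartesian/pairing structure of $\NegAlt_\oc$ together with innocence, $\sigma$ decomposes into a read-substrategy $\sigma_r : \intr{A_1 \to \dots \to A_n \to \tnat}$ and, for each $n$, a write-substrategy $\sigma_w^n : \intr{A_1 \to \dots \to A_n \to \tunit}$, each of strictly smaller size. Finiteness of $\sigma$ ensures that $\sigma_w^n$ is diverging for all but finitely many $n$, so the induction hypothesis provides terms $\vdash N_r : A_1 \to \dots \to A_n \to \tnat$ and $\vdash N_w^n : A_1 \to \dots \to A_n \to \tunit$ (for the relevant values $n = v_1, \dots, v_p$) with matching interpretations. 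I would then set
\[
M \;=\; \lambda x_1^{A_1} \dots x_n^{A_n}.\,\mkvar\,(\lambda k^\tnat.\,\mathbf{case}\,k\,\mathbf{of}\,(v_i \mapsto N_w^{v_i}\,x_1\,\dots\,x_n)_{i=1}^{p})\,(N_r\,x_1\,\dots\,x_n)\,,
\]
and the equations for $\mkvar$ from Section~\ref{subsubsec:seq_inter_mem} together with the induction hypothesis give $\intr{M} \simstrat \sigma$. The case $\theta = \sem$ is entirely symmetric, using $\mksem$ to recombine the two substrategies on $\intr{A_1 \to \dots \to A_n \to \tunit}$ obtained from the residuals following $\mgrab^-$ and $\mrelease^-$.

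The main obstacle I anticipate is bookkeeping: verifying that the decompositions in the $\var$ and $\sem$ cases yield innocent, well-bracketed strategies of strictly smaller size (so the induction terminates), and checking that $\mkvar$ and $\mksem$ reassemble them correctly up to $\simstrat$. The constraint that $M$ avoids $\newref$ and $\newsem$ is automatic: the construction introduces only $\mkvar$ and $\mksem$, and these primitives appear nowhere in the inductively obtained subterms either.
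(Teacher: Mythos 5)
Your proposal is correct and follows essentially the same route as the paper, which only sketches this result by noting that the proof of Theorem~\ref{th:definability} ``directly extends\ldots using bad variables and semaphores'' and deferring details to \cite{am}. Your added cases — head occurrences of $\var$/$\sem$-typed variables handled via $!$, $:=$, $\grab$, $\release$, and strategies at $\var$/$\sem$ result types decomposed into read/write (resp.\ grab/release) substrategies reassembled with $\mkvar$/$\mksem$ — are exactly that standard Abramsky--McCusker extension.
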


\subsubsection{Creation of References and Semaphores}\label{subsubsec:newrs}
Finally, we introduce the actual effectful behaviour. The idea is to use
non-innocent strategies $\cell_n {:} \oc \var$, $\ssem_n {:} \oc \sem$ implementing
interference. For instance, $\cell_n$ is a memory cell with $n$
currently stored. When read it returns $n$, and upon a write request for
$k \in \mathbb{N}$, it acknowledges it and proceeds as $\cell_k$.
Likewise $\ssem_0$ is the strategy for a free semaphore, and $\ssem_n$
for $n>0$ represents a semaphore in use.  Those may be simply described
as the language of prefixes of the infinite trees:
\[
\begin{array}{rclcl}
\cell_n^{\grey{I}} &=& 
\read_{\grey{i}}^- \cdot n_{\grey{i}}^+ \cdot \cell_n^{\grey{I\uplus
\{i\}}} \mid
\mwrite{k}_{\grey{i}}^- \cdot \ok_{\grey{i}}^+ \cdot \cell_k^{\grey{I
\uplus \{i\}}}
        && (i \not \in I)\\
\ssem_0^{\grey{I}} &=& \mgrab_{\grey{i}}^- \cdot \ok_{\grey{i}}^+ \cdot
\ssem_1^{\grey{I\uplus \{i\}}} \mid \mrelease_{\grey{i}}^-
        && (i \not \in I)\\
\ssem_n^{\grey{I}} &=& \mgrab_{\grey{i}}^- \mid \mrelease_{\grey{i}}^-
\cdot
\ok_{\grey{i}}^+ \cdot \ssem_0^{\grey{I \uplus \{i\}}}
        && (i \not \in I,~n>0)
\end{array}
\]
where symbols are moves in $\oc \var$ and $\oc \sem$ respectively,
separated via $\cdot$ for readability. Here, $I \subseteq_f \mathbb{N}$
collects the copy indices already used, ensuring \emph{non-repetitive}.
We set $\cell_n$ as (the prefix language of)
$\cell_n^{\grey{\emptyset}}$ and $\ssem_n$ as (the prefix language
of) $\ssem^{\grey{\emptyset}}_n$; it is direct that
$\cell_n {:} \oc \var$ and $\ssem_n {:} \oc \sem$.  However,
they are not innocent. Considering the two plays:
\[
\read_{\grey{0}}^- \cdot 0_{\grey{0}}^+,\quad\mwrite{1}_{\grey{1}}^-
\cdot
\ok_{\grey{1}}^+ \cdot \read_{\grey{0}}^- \cdot 1_{\grey{0}}^+
\quad\in\quad \cell_0\,,
\]
as $\pview{\read_{\grey{0}}^-} =
\pview{\mwrite{1}_{\grey{1}}^-\cdot \ok_{\grey{1}}^+ \cdot
\read_{\grey{0}}^-}$, innocence requires $\read_{\grey{0}}^- \cdot
1_{\grey{0}}^+ \in \cell_0$ as well, which is not the case. Of course,
it is precisely the role of $\cell$ and $\ssem$ to break innocence and
transfer information across distinct copies -- however, $\cell$ and
$\ssem$ remain \emph{P-visible} in the sense of Definition
\ref{def:pvis}.

We now complete the interpretation of $\IA$.
Consider $\Gamma, x: \var \vdash M : A$ with
\[
\intr{M} \in \NegAltviswb_\oc(\Gamma \with \var, A)
\]
omitting some brackets. Using the cartesian closed structure of
$\NegAltviswb_\oc$, we consider
\[
\Lambda_{\Gamma}^\oc(\intr{M}) \in \NegAltviswb(\oc \var, \oc \Gamma
\lin A)
\]
which we compose with the memory cell.
Summing up, for references and semaphores,
\[
\begin{array}{rclcl}
\intr{\newref\,x\!\!:=\!n\,\tin\,M}
        &=& {\Lambda_{\Gamma}^\oc}^{-1}(\Lambda_{\Gamma}^\oc(\intr{M})
\odot \cell_n)
                &\in& \NegAltviswb_\oc(\Gamma, A)\\
\intr{\newsem\,x\!\!:=\!n\,\tin\,M}
        &=& {\Lambda_{\Gamma}^\oc}^{-1}(\Lambda_{\Gamma}^\oc(\intr{M})
\odot \ssem_n)
                &\in& \NegAltviswb_\oc(\Gamma, A)\,.
\end{array}
\]

This concludes the interpretation of $\IA$ in $\NegAltviswb$.
Adequacy proceeds as in \cite{am}, undisturbed by the slightly
different technical setting of the present paper.

\begin{prop}[Adequacy]\label{prop:adequacy_ia}
For any $\vdash M : \tunit$ in $\IA$, $M \eval$ if and only if
$\intr{M}\eval$.
\end{prop}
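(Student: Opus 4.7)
The plan is to follow the logical relations argument of Abramsky and McCusker \cite{am}, adapted to the present technical setting. The easy direction (if $M \eval$ then $\intr{M} \eval$) is soundness with respect to the small-step operational semantics: one shows by induction on the reduction rules of Figure \ref{fig:operational} that if $\tuple{M, s} \leadsto \tuple{M', s'}$, then the denotation $\intr{M}$ composed with strategies representing $s$ is $\simstrat$-equivalent to that of $\intr{M'}$ composed with strategies representing $s'$. For the stateless reductions this follows from the categorical equations governing the interpretation combinators; for the stateful reductions one must unfold the effect of composing with $\cell_n$ or $\ssem_n$ on the plays of the continuation, and check it agrees with the store update.

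The difficult direction (if $\intr{M} \eval$ then $M \eval$) is handled by a type-indexed logical relation $\mathcal{R}_A$ between strategies $\sigma \in \NegAltviswb_\oc(\emptyar, \intr{A})$ and closed $\IA$-terms $\vdash M : A$. At a ground type $\tx$, we set $\sigma \mathrel{\mathcal{R}_\tx} M$ when $\sigma$ converging to the answer $v$ implies $M \leadsto^* v$. At an arrow type, we take the pointwise implication: $\sigma \mathrel{\mathcal{R}_{A \to B}} M$ whenever $\tau \mathrel{\mathcal{R}_A} N$ implies $\sigma\,\tau \mathrel{\mathcal{R}_B} M\,N$. For $\var$ and $\sem$, a Kripke-style variant is required, relating strategies to concrete locations in an ambient store. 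The core technical content is then a fundamental lemma: for every $\Gamma \vdash M : A$ and every substitution $\gamma$ sending each $x : B$ to a pair $(\tau_x, N_x)$ with $\tau_x \mathrel{\mathcal{R}_B} N_x$, one has $\intr{M}[\gamma] \mathrel{\mathcal{R}_A} M[\gamma]$. This is shown by induction on the typing derivation; the case of $\Y_A$ uses that $\mathcal{R}_A$ is admissible with respect to the $\subseteq$-order on strategies (so that the least-fixed-point argument of Section \ref{subsubsec:recursion} carries through), and the cases of $\newref$ and $\newsem$ use that composition with $\cell_n$, resp.\ $\ssem_n$, models the allocation of a fresh location initialised to $n$. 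Applied to $\vdash M : \tunit$, the fundamental lemma and the definition of $\mathcal{R}_\tunit$ immediately give $\intr{M}\eval \implies M\eval$.

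The main obstacle will be setting up the relation at the reference and semaphore types, since $\intr{\var}$ and $\intr{\sem}$ treat dereferences, assignments, grabs and releases as uninterpreted negative requests, with no intrinsic link to a concrete store. The standard remedy, following \cite{am}, is to restrict the relation at $\var$ to so-called ``good variable'' strategies built from $\cell_n$ (and similarly at $\sem$ from $\ssem_n$), and then cover the general case through the definability of bad variables and bad semaphores via $\mkvar$ and $\mksem$ (already exploited in Proposition \ref{prop:def_ia_inn}). A second subtlety is that the strategies $\cell_n$ and $\ssem_n$ are only $P$-visible, not innocent, so the whole argument must take place inside $\NegAltviswb_\oc$ rather than $\NegAltinnwb_\oc$; this is compatible with our setting since $\NegAltviswb_\oc$ is cartesian closed and the interpretation indeed factors through it.
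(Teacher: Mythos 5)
Your overall route is the one the paper itself takes: Proposition \ref{prop:adequacy_ia} is not proved from scratch there but deferred to the Abramsky--McCusker argument, i.e.\ soundness of reductions plus a store-indexed logical relation (computability predicate) closed under the approximants of $\Y$, and what you say about the ground, arrow and recursion cases is fine. The genuine problem is your clause at $\var$ and $\sem$. The fundamental lemma must cover \emph{every} term of reference type --- in particular $\mkvar\,M\,N$, and $\lambda$-bound identifiers of type $\var$ instantiated by arbitrary related pairs --- so the relation at $\var$ cannot be restricted to ``good variable'' strategies built from $\cell_n$, nor to concrete locations of an ambient store: with either restriction the induction step for $\mkvar$ (and for applications at types such as $\var \to \tunit$) is simply unavailable, since $\intr{\mkvar\,M\,N}$ is in general not of that shape. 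Invoking the definability of bad variables does not repair this: definability is the ingredient of full abstraction, not of adequacy, and it only tells you that \emph{some} term denotes a given strategy, not that the denotation of the term at hand is logically related to that term, which is what the induction needs.

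The standard fix --- and the reason $\var$ and $\sem$ are unproblematic in \cite{am} and here --- is to define the relation at $\var$ behaviourally, through its read/write interface, exploiting that $\intr{\var} = \var_w \with \var_r$ is a product and $\mkvar$ is interpreted essentially as a pairing: set $\sigma \mathrel{\mathcal{R}_\var} M$ iff for all $n$, $\assign \odot_\oc \tuple{\intr{n}, \sigma} \mathrel{\mathcal{R}_\tunit} (M\!\!:=\!n)$ and $\deref \odot_\oc \sigma \mathrel{\mathcal{R}_\tnat}\,!M$, and similarly at $\sem$ via $\sgrab$ and $\srelease$. Then the $\mkvar$/$\mksem$ cases of the fundamental lemma are immediate from the induction hypothesis, and the whole weight of state is carried by the $\newref$/$\newsem$ cases, where one shows that, relative to a store with $\ell \mapsto n$, the strategy $\cell_n$ (resp.\ $\ssem_n$) is related to the location $\ell$; this is exactly where your store-indexed (Kripke) structure belongs --- relating strategies over $\oc \intr{\Sigma}$, composed with the tensor of cells and locks encoding $s$, to configurations $\tuple{M,s}$ --- and where one checks the cell's plays against the interfering reductions of Figure \ref{fig:operational}. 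With that clause in place, the rest of your outline goes through and coincides with the proof the paper appeals to.
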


\subsubsection{Full Abstraction} \label{subsubsec:fa_ia}
We now review the full abstraction result of \cite{am}. The argument
revolves around a fundamental \emph{factorisation} theorem, stated as
follows.

\begin{thm}[Factorisation] \label{th:factor}
Let $A$ be a type of $\IA$, and $\sigma :
\intr{A}$ be P-visible well-bracketed.

Then, there is an innocent well-bracketed $\Inn(\sigma) \in
\NegAltinnwb(\oc \var, \intr{A})$ such that
\[
\sigma \simstrat \Inn(\sigma) \odot \cell_0\,,
\]
and $\Inn(\sigma)$ is finite if $\sigma$ is finite.
\end{thm}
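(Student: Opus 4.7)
The plan is to construct $\Inn(\sigma) : \oc \var \lin \intr{A}$ as a history-tracking refinement of $\sigma$ that outsources all stateful dependence to the cell $\cell_0$, following the standard factorisation idea of \cite{am}. Intuitively, $\Inn(\sigma)$ simulates $\sigma$ on $\intr{A}$, but immediately after each Opponent move on $\intr{A}$ it reads the cell to retrieve the current encoded history, and immediately before each Player move on $\intr{A}$ it writes back the updated encoded history. Hence the cell acts as a scratchpad where $\Inn(\sigma)$ stores, between consecutive moves on $\intr{A}$, the full play so far: precisely the information that $\sigma$ may depend on beyond the P-view.

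Concretely, I would fix an injective encoding of alternating plays on $\intr{A}$ into $\mathbb{N}$, and define $\Inn(\sigma)$ by describing its P-views. For each $\pview{s} \in \pviews{\sigma}$ with $s = s_1 \dots s_n$, I insert between $s_i$ and $s_{i+1}$ an appropriate $\mwrite{k}^+\!\cdot\!\ok^-$ or $\read^+\!\cdot\!k^-$ block on $\oc \var$ (using a fresh copy index and with value $k$ the code of $s_1\dots s_i$), laid out so that Player drives the cell interaction before each of its own moves on $\intr{A}$ and after each Opponent move on $\intr{A}$. Then I close under receptivity. Uniformity, alternation and the negative condition are immediate by construction.

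Next I verify four properties. \emph{(i) Innocence:} at each Player decision point, the P-view of the current play in $\Inn(\sigma)$ contains both the P-view of the underlying play on $\intr{A}$ (which determines Player pointers by P-visibility of $\sigma$) and the most recent value read from the cell (which encodes the full history on $\intr{A}$), so the next Player move is determined. \emph{(ii) Well-bracketing:} the inserted $\read/n$ and $\mwrite{k}/\ok$ Q/A pairs nest strictly inside each ``move step'' on $\intr{A}$, so well-bracketing transfers from $\sigma$ to $\Inn(\sigma)$. \emph{(iii) Factorisation} $\sigma \simstrat \Inn(\sigma) \odot \cell_0$: unfolding composition, in any $u \in \Inn(\sigma) \inter \cell_0$ each $\read$ returns exactly the value last written, so the restriction of $u$ to $\intr{A}$ is a play of $\sigma$, and conversely every $s \in \sigma$ is the image of some such $u$; this yields the desired $\simstrat$-equivalence. \emph{(iv) Finiteness:} if $\mathscr{P}(\sigma)$ is finite, only finitely many histories arise, so $\mathscr{P}(\pviews{\Inn(\sigma)})$ is finite as well.

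The main obstacle is verifying innocence cleanly, since $\sigma$'s reply to $s$ \emph{a priori} depends on the whole of $s$ rather than $\pview{s}$: innocence of $\Inn(\sigma)$ hinges on the scratchpad exposing the full history in the P-view of $\Inn(\sigma)$ at the exact decision point, which in turn hinges on P-visibility of $\sigma$. Making this rigorous demands careful bookkeeping: each cell access must use a fresh copy index of $\oc \var$ to preserve non-repetitiveness; the Q/A labels on the inserted read/write blocks must be chosen so as not to perturb the bracketing of $\intr{A}$; and one must check invariance under $\sym_{\oc \var \lin \intr{A}}$, so that $\Inn(\sigma)$ is well-defined up to $\simstrat$. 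The composition step also requires care, since interleavings in $\Inn(\sigma) \inter \cell_0$ are not unique on the cell side; one must check that all of them project to the same play on $\intr{A}$.
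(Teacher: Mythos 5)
Your proposal is correct and follows essentially the same route as the paper's proof (itself a sketch deferring to \cite{am}): after each Opponent move, $\Inn(\sigma)$ reads the encoded history from the cell, writes back the updated history, and then plays the move dictated by $\sigma$, so that innocence holds because the P-view of $\Inn(\sigma)$ always exposes the last value read. The only nuance is that the paper stores (a code of) $\mathscr{P}(s)$, the pointer representation quotienting out copy indices, which directly resolves the $\sym$-invariance issue you flag when encoding the raw play $s_1\dots s_i$.
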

\begin{proof}
For any O-ending $sa^- \in
\sigma$, we wish $\Inn(\sigma)$ to act like $\sigma$, but as an innocent
strategy it may only depend on $\pview{sa^-}$. However,
$\Inn(\sigma)$ may also access the reference, so we will maintain the
invariant that the reference contains (an encoding of) the full history,
or more precisely of $\mathscr{P}(s)$.
Between $\pview{sa^-}$ and $\mathscr{P}(s)$, $\sigma$
knows the full play (up to symmetry).

Upon being called with $a^-$, $\Inn(\sigma)$ reads $\mathscr{P}(s)$
from the reference, then stores $\mathscr{P}(sab)$ in the reference (for
$s a^- b^+ \in \sigma$) and then plays $b$.
See \cite{am} for more details.
\end{proof}

Finiteness of $\Inn(\sigma)$ follows the definition of finite
\emph{innocent} strategies from Section~\ref{subsubsec:def_inn}: having
finitely many ($\sym$-equivalence classes of) P-ending P-views. However,
finiteness of non-innocent strategies has not yet been defined. We
define it now: a strategy in $\NegAlt(A)$ is \textbf{finite} iff the set
of ($\sym$-equivalence classes of) P-ending plays of $\sigma$ is finite.
Despite the common terminology, these two notions are distinct: an
innocent strategy may be finite as an innocent strategy while being
non-finite as a non-innocent strategy. This mismatch comes from the fact
that these two notions both coincide with the domain-theoretic notion of
\emph{compactness}, but in the distinct domains $\NegAltinnwb(A, B)$ and
$\NegAltwb(A, B)$ (ordered by inclusion) for $-$-arenas $A, B$. By
Proposition \ref{prop:pointer_plays}, these statements involving
$\sym$-equivalence classes may be instead phrased with plays with
pointers.

From Theorem \ref{th:factor} and Proposition \ref{prop:def_ia_inn} it is
direct that finite definability holds for $\IA$. We can deduce
immediately intensional full abstraction for $\IA$, proved as
Theorem \ref{th:factor}.

\begin{thm}\label{th:fa_ia}
The model $\NegAltviswb_\oc$ is intensionally fully abstract for $\IA$.
\end{thm}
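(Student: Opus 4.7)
The plan is to mimic the proof of Theorem \ref{th:fa_pcf} for $\PCF$, using the factorisation theorem (Theorem \ref{th:factor}) as the key additional ingredient that reduces the problem to finite innocent definability in $\IA$ (Proposition \ref{prop:def_ia_inn}).

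First I would take $\vdash M, N : A$ in $\IA$ with $M \obs N$, assume for a contradiction that $\intr{M} \not\obs \intr{N}$ in $\NegAltviswb_\oc$, and pick a test $\alpha \in \NegAltviswb_\oc(\intr{A}, \intr{\tunit})$ witnessing the inequality. Without loss of generality, $\alpha \odot_\oc \intr{M} \eval$ while $\alpha \odot_\oc \intr{N} \div$. As in the $\PCF$ case, the converging interaction on $\tunit$ has finite length, so the portion of $\alpha$ exercised by either composition involves only finitely many ($\sym$-equivalence classes of) P-ending plays; hence we may replace $\alpha$ by a finite P-visible well-bracketed substrategy without affecting either of the two outcomes. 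So assume $\alpha$ finite.

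Next I would apply the factorisation theorem: there is a \emph{finite} innocent well-bracketed strategy
\[
\Inn(\alpha) \in \NegAltinnwb(\oc \var, \intr{A} \lin \intr{\tunit})
\]
such that $\alpha \simstrat \Inn(\alpha) \odot \cell_0$, transported through the Seely structure as in Section \ref{subsubsec:newrs}. By Proposition \ref{prop:def_ia_inn}, since $\Inn(\alpha)$ is finite innocent well-bracketed and lives on a type of $\IA$ (the variable $x : \var$ being threaded through the $\oc \var$ component), it is denoted by some $\IA$ term $x : \var \vdash P : A \to \tunit$ not using $\newref$ or $\newsem$, with $\intr{P} \simstrat \Lambda^\oc_{-}(\Inn(\alpha))$. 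This gives us the $\IA$ context
\[
C[-] = \newref\,x\!:=\!0\,\tin\,P\,[-]
\]
satisfying $\intr{C[M]} \simstrat \alpha \odot_\oc \intr{M}$ and $\intr{C[N]} \simstrat \alpha \odot_\oc \intr{N}$ by functoriality of the interpretation and the definition of $\intr{\newref\,x\!:=\!n\,\tin\,-}$.

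Finally, by adequacy for $\IA$ (Proposition \ref{prop:adequacy_ia}), $C[M] \eval$ while $C[N] \div$, contradicting $M \obs N$. The reverse implication ($\intr{M} \obs \intr{N}$ implies $M \obs N$) follows directly from adequacy and the compositionality of the interpretation, as in the $\PCF$ proof. The main delicate point, rather than the overall structure, is the finiteness bookkeeping: one must check that restricting $\alpha$ to a finite P-visible well-bracketed substrategy preserves the diverging/converging witness (standard, by compactness of convergence on $\tunit$), and that finiteness of $\alpha$ is transferred to $\Inn(\alpha)$ — which is precisely what the last sentence of Theorem \ref{th:factor} guarantees.
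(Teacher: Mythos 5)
Your proposal is correct and is essentially the paper's own argument: the paper proves this theorem by noting that finite definability for $\IA$ follows from the factorisation theorem (Theorem \ref{th:factor}) together with Proposition \ref{prop:def_ia_inn}, and then running the same test-restriction, definability and adequacy argument as in Theorem \ref{th:fa_pcf} — which is precisely the structure you spell out, including the passage to a finite test, the factorisation $\alpha \simstrat \Inn(\alpha) \odot \cell_0$, the context $\newref\,x\!:=\!0\,\tin\,P\,[-]$, and the contradiction via Proposition \ref{prop:adequacy_ia}. Your version merely makes explicit the bookkeeping the paper leaves implicit.
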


This is exactly as Theorem \ref{th:fa_pcf}. However, in stark contrast
with Theorem \ref{th:fa_pcf}, for $\IA$ the fully abstract quotient
category is effectively presentable. In fact, 
for $\sigma, \tau : \intr{A}$,
\[
\sigma \obs \tau 
\quad
\Leftrightarrow
\quad
\mathscr{P}(\comp(\sigma)) = \mathscr{P}(\comp(\tau))
\]
where $\comp(\sigma)$ is the set of \emph{complete} plays of $\sigma$,
capturing the completed executions where both players act like P-visible
well-bracketed strategies: a play is \textbf{complete} if it is
well-bracketed, P-visible, \emph{O-visible} (the dual to P-visibility,
not detailed here), and such that every question has
an answer. The result follows from finite definability for $\IA$ \cite{am}.

This effective fully abstract model of $\IA$ is one of the most striking
results of game semantics. Observational equivalence in $\IA$ remains
undecidable with bounded integers, at fourth order without recursion
\cite{DBLP:conf/lics/Murawski03} and second-order with recursion
\cite{DBLP:conf/lics/Ong02} (of course, observational equivalence is
obviously undecidable in the full language as it is Turing-complete).
However, the model yielded sound and
complete algorithms for observational equivalence on restricted
fragments \cite{DBLP:journals/tcs/GhicaM03}, starting the field
of \emph{algorithmic game semantics}.

\subsection{The Semantic Cube} Abramsky's ``semantic cube'', often
called the ``Abramsky cube'', starts with the observation that
game semantics allows the interpretation of both control (\emph{i.e.}
$\mathbf{callcc}$) and state in the same model, \emph{i.e.} the same
category.

\subsubsection{Control} We have not given the interpretation of
$\mathbf{callcc}$, nor the corresponding full abstraction result
\cite{DBLP:conf/lics/Laird97}. In fact, in the present technical setting, we cannot
interpret $\mathbf{callcc}$. This is due, in part, to the added conflict
in arenas for basic datatypes with respect to standard HO games
\cite{ho} -- see Figures \ref{fig:ar_unit}, \ref{fig:ar_bool}, and
\ref{fig:ar_nat}. This conflict imposes that each question can be
answered at most once, which is incompatible with
$\mathbf{callcc}$\footnote{In addition to conflicts,
the incompatibility with $\mathbf{callcc}$ comes from the fact that our
interpretation of types only involves $\oc$ on arrows, and not on basic
datatypes.  To authorize control we should change \emph{e.g.} the arena
$\tbool$ to one with replicated answers, written (in the language of
tensorial logic \cite{DBLP:journals/apal/MelliesT10}) as $\neg \oc \neg
(1 \oplus 1)$.}.  In fact: 

\begin{prop}\label{prop:inn_wb}
For any $\PCF$ type $A$, any innocent $\sigma : \intr{A}$ is also
well-bracketed.
\end{prop}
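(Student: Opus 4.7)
The plan is to argue by contradiction, exploiting the $\oc$-structure of PCF arenas together with innocence, uniformity ($\simstrat$), and the \emph{non-repetitive} clause of Definition~\ref{def:alt_play}.

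Suppose $\sigma : \intr{A}$ is innocent but not well-bracketed, and pick a shortest witness $sa^+ \in \sigma$: $s$ is well-bracketed, $a^+$ is a positive answer, and its unique $\imc_{\intr{A}}$-predecessor $q$ (necessarily a question, by \emph{answer-closing}) differs from the pending question $q^\dagger$ of $s$. Since having the minimal $\qu^-$ of $\intr{A}$ as pending would force $q=q^\dagger$ (it would be the only unanswered question), $q^\dagger$ must be non-minimal. A short induction on well-bracketed plays using the P-view clauses of Definition~\ref{def:pview} then shows $q^\dagger \in \ev{\pview{s}}$.

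The crucial structural fact about PCF arenas is that the clause $\intr{A\to B} = \oc \intr{A} \lin \intr{B}$ forces every non-minimal event to sit strictly below some $\oc$. I would select such an $\oc$ that encloses $q^\dagger$ but not $a^+$: when $q <_{\intr{A}} q^\dagger$, any $\oc$ on the chain from $q$ down to $q^\dagger$ works (since $a^+$ is a sibling of the subtree at $q$ containing $q^\dagger$); otherwise $q$ and $q^\dagger$ lie in distinct copies of some $\oc$, whose enclosing layer above $q^\dagger$ suffices. Writing $q^\dagger = (\grey{i}, q_0^\dagger)$ for that $\oc$ and picking $\grey{j}$ fresh to $s\cdot a^+$, the transposition $(\grey{i}\ \grey{j})$ at the chosen $\oc$ (identity elsewhere) yields a symmetry $\theta \in \tilde{\intr{A}}$ that fixes $a^+$ and every event of $s$ outside the varied copy, while sending $q^\dagger$ to a fresh sibling $q^{\dagger'} = (\grey{j}, q_0^\dagger)$ that shares its $\imc_{\intr{A}}$-predecessor.

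By receptivity, $s\cdot a^+\cdot q^{\dagger'} \in \sigma$: the predecessor of $q^{\dagger'}$ lies in $s$, and freshness of $\grey{j}$ rules out non-repetition and conflict. Unfolding the Opponent-extension clause of Definition~\ref{def:pview}, the pointer from $q^{\dagger'}$ leads back to that shared predecessor, so $\pview{s\cdot a^+\cdot q^{\dagger'}}$ agrees event-by-event with $\pview{s}$ except that the terminal $q^\dagger$ is replaced by $q^{\dagger'}$. By construction $\theta$ witnesses $\pview{s} \sym_{\intr{A}} \pview{s\cdot a^+\cdot q^{\dagger'}}$ and extends, still fixing $a^+$, to $\pview{s}\cdot a^+ \sym_{\intr{A}} \pview{s\cdot a^+\cdot q^{\dagger'}}\cdot a^+$. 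Since $sa^+\in\sigma$ yields $\pview{s}\cdot a^+ \in \pviews{\sigma} \subseteq \sigma$, the $\to$-simulation clause of Definition~\ref{def:simstrat} (applied with $\tau=\sigma$) forces $\pview{s\cdot a^+\cdot q^{\dagger'}}\cdot a^+ \in \sigma$; innocence then propagates this to $s\cdot a^+\cdot q^{\dagger'}\cdot a^+ \in \sigma$. But $a^+$ now occurs twice, violating \emph{non-repetitive} and giving the contradiction.

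The main obstacle is the structural verification in the second paragraph: ensuring a suitable $\oc$-layer always exists and that $q$ (hence $a^+$) sits strictly outside its varied copy. This reduces to the observation that every branching in the forest $\leq_{\intr{A}}$ is mediated by an $\oc$ coming from some arrow of $A$, which is immediate from the inductive clause of $\intr{-}$. A secondary subtlety is checking that $\pview{s\cdot a^+\cdot q^{\dagger'}}$ really differs from $\pview{s}$ only by the swap $q^\dagger \leftrightarrow q^{\dagger'}$, which follows because the P-view computation on the extended play traces pointers back into the common initial segment shared with $s$.
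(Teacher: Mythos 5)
Your core idea coincides with the paper's: the pending question is the initial move of a banged sub-arena, so it can be replayed under a fresh copy index, and innocence plus uniformity then force $\sigma$ to repeat the answer $a^+$ --- literally the same arena event, since its justifier $q$ is unchanged and answers carry no fresh copy indices --- contradicting \emph{non-repetitive}. However, there is a genuine gap at the step where you claim $\pview{s\cdot a^+\cdot q^{\dagger'}}$ is $\pview{s}$ with ``the terminal $q^\dagger$'' swapped for $q^{\dagger'}$. The pending question of $s$ need not be the last move of $\pview{s}$: $s$ may end with an Opponent \emph{answer}, so that answered question/answer pairs sit between $q^\dagger$ and the end of the P-view. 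Concretely, on $((\tunit\to\tunit)\to\tunit)\to\tbool$ take $s=\qu_0^-\,\qu_1^+\,\qu_2^-\,\qu_3^+\,\done_3^-$ with $a^+=\ttrue^+$ answering $\qu_0$ while the pending question is $\qu_2$ (realized by a $\mathbf{callcc}$-style term, and a \emph{shortest} violation for that strategy, so your minimality assumption does not rescue you). Here $\pview{s}=\qu_0\,\qu_1\,\qu_2\,\qu_3\,\done_3$, whereas $\pview{s\,a^+\,q^{\dagger'}}$ is computed by following the pointer of $q^{\dagger'}$ back to $\qu_1$ and equals the strict truncation $\qu_0\,\qu_1\,q^{\dagger'}$ --- not a one-move swap. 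Your uniformity-plus-innocence step then only forces $\sigma$ to replay whatever it played right after $q^\dagger$ (here a fresh copy of $\qu_3$), which produces no repetition and no contradiction.

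The paper closes exactly this hole with two moves you are missing: it first reduces to P-views (an innocent strategy is well-bracketed iff its P-views are, citing Laird), so the violation lives inside a single P-view where Opponent always points to the previous move; then, after replaying the pending question with a fresh index, it replays the \emph{entire} segment between the pending question and $a^+$, iterating uniformity/$\sim$-receptivity and innocence move by move until $a^+$ recurs with its pointer to the same earlier question. If you keep your arbitrary-play setup you must carry out this iterated replay along the tail of $\pview{s}$ after $q^\dagger$ (your immediate-replay shortcut is only valid when $s$ ends with the pending question); alternatively, import the reduction to P-views first, which also makes your separate claim $q^\dagger\in\ev{\pview{s}}$ unnecessary. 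The remaining steps of your argument (non-minimality of $q^\dagger$, receptivity of the fresh copy, identifying the symmetric answer $b$ with $a$ via the concrete-arena axioms) are sound or easily repaired.
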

\begin{proof}
First, any innocent $\sigma : \intr{A}$ is
well-bracketed iff its P-views are well-bracketed -- see
\cite{DBLP:conf/lics/Laird97} for a proof. Hence if $\sigma$ is not
well-bracketed, then there is a P-view
\[
s_1 \dots \qu^{-,\Qu} \dots \qu^{-,\Qu} \dots a^{+,\An}
\]
where $a$ answers the first $\qu$ shown rather than the pending
question, the second $\qu$ shown. But this second $\qu^{-,\Qu}$ must be
the initial move of a banged sub-arena in the interpretation of $A$, so
we can play it again.  And by innocence of $\sigma$, the following must
be a play of $\sigma$:
\[
s_1 \dots \qu^{-,\Qu} \dots \qu^{-,\Qu}_{\grey{i}} \dots a^{+,\An}
\qu^{-,\Qu}_{\grey{i+1}} \dots a^{+,\An}
\]
where both copies of $a$ point to the first $\qu^{-,\Qu}$, absurd by
\emph{non-repetitive}.
\end{proof}

This entails that in fact,
Theorem \ref{th:fa_pcf} holds for $\NegAltinn_\oc$. But no such
coincidence holds beyond innocent strategies: for Theorem \ref{th:fa_ia}
well-bracketing really is needed. In this paper we have adopted an
interpretation of ground types incompatible with $\mathbf{callcc}$.
There
is no technical obstacle to modelling $\mathbf{callcc}$ -- one can
simply drop conflicts in basic arenas and duplicate return values --
but we prefer our design, closer to linear logic and the relational
model (see Section~\ref{subsubsec:relational}).
Furthermore, control operators will play no role in the present paper
beyond the exposition of the scientific context.

\subsubsection{The Semantic Cube} \label{subsubsec:sem_cube}
We temporarily consider, for the sake of the discussion,
a setting with both control and state; say Murawski's model for
interference and control \cite{DBLP:conf/csl/Murawski07}, which is
essentially equivalent (modulo the representation with pointers) to ours
where basic arenas have no conflict and answers are replicated. Let us
call it by $\Games$.  There is
\[
\PCF + \text{interference} + \text{control} \qquad \to \qquad
\Games
\]
an adequate interpretation, so we can model a rich combination of
effects; but that is not all.

Indeed, there are four (intensional) full abstraction results:

\begin{thm}[Semantic Cube]
We have four intensional full abstraction results:
\[
\begin{array}{rcl}
\Games & \text{is fully abstract for} & \PCF +
\text{\emph{interference}} + \text{\emph{control}}\\
\Games + \text{\emph{innocence}} & \text{is fully
abstract for} & \PCF + \text{\emph{control}}\\
\Games + \text{\emph{well-bracketing}} & \text{is fully
abstract for} & \PCF + \text{\emph{interference}}\\
\Games + \text{\emph{innocence}} + \text{\emph{well-bracketing}} 
&\text{is fully abstract for} & \PCF
\end{array}
\]
\end{thm}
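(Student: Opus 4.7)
The plan is to follow a uniform pattern already exemplified by Theorems \ref{th:fa_pcf} and \ref{th:fa_ia}: for each of the four corners, establish (a) soundness of the interpretation of the matching language into $\Games$ constrained by the appropriate conditions, (b) computational adequacy, and (c) a \emph{finite definability} theorem stating that every finite strategy in the chosen subcategory is $\simstrat$-equivalent to the denotation of a program. From these three ingredients, intensional full abstraction follows by the standard argument of Theorem \ref{th:fa_pcf}: if $\intr{M} \not\obs \intr{N}$, a distinguishing test $\alpha$ may be truncated to a finite strategy without changing the outcome of the two compositions, then realised by a context via definability, contradicting $M \obs N$ via adequacy.

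The soundness part is routine: I would check that each condition on strategies (innocence, well-bracketing, P-visibility) cuts out a lluf sub-cartesian-closed category of $\Games_\oc$ closed under the structural morphisms, and that the constants interpreting the effects of each language live in the correct corner. Concretely, $\mathbf{callcc}$ is interpreted by an innocent but \emph{non-well-bracketed} strategy on its Peirce-law type (exactly the kind of play exhibited at the start of Section \ref{subsec:wb}), while $\cell_n, \ssem_n$ are well-bracketed but non-innocent, in fact only P-visible (cf.\ Section \ref{subsubsec:newrs}). Adequacy is obtained by a computability/logical-relations argument as sketched for Proposition \ref{prop:adequacy} and Proposition \ref{prop:adequacy_ia}, extended pointwise by the clauses for $\mathbf{callcc}$ and for store; the four instances differ only in which constants are available.

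The heart of the argument, and the main obstacle, is definability. For the $\PCF$ corner this is Theorem \ref{th:definability}: finite innocent well-bracketed strategies are $\PCF$-definable via a case analysis on the unique Player response in each P-view, which is precisely where the $\mathbf{let}$/$\mathbf{case}$ sugar of Section \ref{subsec:sugar} is used. For the $\IA$ corner (result 3), definability combines this with Theorem \ref{th:factor}: a finite P-visible well-bracketed $\sigma$ is written $\Inn(\sigma) \odot \cell_0$ for some finite innocent well-bracketed $\Inn(\sigma)$ realised by a program using bad variables (Proposition \ref{prop:def_ia_inn}), which when closed under $\newref$ defines $\sigma$. The remaining two corners need their analogues. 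For the $\PCF{+}\text{control}$ corner (result 2), I would establish a \emph{control factorisation}: any finite innocent (but possibly non-well-bracketed) strategy decomposes, via a continuation-passing transform on P-views, as the composition of an innocent \emph{well-bracketed} strategy with a canonical $\mathbf{callcc}$ strategy, hence is $\PCF{+}\mathbf{callcc}$-definable by Theorem \ref{th:definability} applied to the bracketed residue. For the top corner (result 1), the two factorisations are chained: first peel off the state with Theorem \ref{th:factor}, then peel off the non-bracketed control behaviour of $\Inn(\sigma)$ using the control factorisation, landing in a definable innocent well-bracketed strategy. The technical challenge is to verify that the control factorisation preserves P-visibility and does not reintroduce state, so that it composes cleanly with Theorem \ref{th:factor}; this mirrors the analogous developments of Laird \cite{DBLP:conf/lics/Laird97} and Murawski \cite{DBLP:conf/csl/Murawski07}, to which I would ultimately defer for the finer combinatorics on pointers.
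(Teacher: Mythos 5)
Your proposal is sound in outline, but note that the paper itself does not prove this theorem: it is assembled by observing that the two control-free corners are the results already established (Theorems \ref{th:fa_pcf} and \ref{th:fa_ia}), and by citing Laird \cite{DBLP:conf/lics/Laird97} for $\PCF+\text{control}$ and Murawski \cite{DBLP:conf/csl/Murawski07} for the full corner. Your sketch is essentially a reconstruction of how those cited proofs go (adequacy plus finite definability, with factorisation theorems doing the heavy lifting), and you yourself defer to Laird and Murawski for the pointer combinatorics, so the two ``routes'' converge; what your version adds is an explicit plan for the control corners via a CPS-style factorisation through a canonical $\mathbf{callcc}$ strategy, which is a legitimate alternative to Laird's direct definability argument for finite innocent, possibly non-well-bracketed strategies.

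Two wrinkles deserve flagging. First, for the top corner you propose to ``first peel off the state with Theorem \ref{th:factor}, then peel off the non-bracketed control behaviour of $\Inn(\sigma)$'', but Theorem \ref{th:factor} as stated assumes the strategy is \emph{well-bracketed}, which a strategy arising from $\PCF+\text{interference}+\text{control}$ need not be; you need the Abramsky--McCusker factorisation without that hypothesis (the construction itself does not use it --- well-bracketing of $\Inn(\sigma)$ is merely inherited from $\sigma$), or else follow Murawski's direct definability. Second, Theorems \ref{th:fa_pcf} and \ref{th:fa_ia} are proved in $\NegAltinnwb_\oc$ and $\NegAltviswb_\oc$, whereas the cube is stated for $\Games$, a model with unreplicated-conflict-free ground arenas and replicated answers (in the paper's own model $\mathbf{callcc}$ is not even interpretable, by Proposition \ref{prop:inn_wb}); so strictly speaking the two lower corners require the classical HO and Abramsky--McCusker statements in $\Games$, or a transfer along the ``essentially equivalent'' correspondence the paper alludes to. Neither point is fatal, but a complete write-up should address both rather than invoking the paper's theorems verbatim.
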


We have reviewed two cases before, namely $\PCF$ (Theorem
\ref{th:fa_pcf}) and $\PCF + \text{interference}$ (Theorem
\ref{th:fa_ia}).  The full abstraction result for $\PCF +
\text{control}$ is due to Laird \cite{DBLP:conf/lics/Laird97}, while for
$\PCF + \text{interference} + \text{control}$ appears in
Murawski\footnote{Murawski uses a different primitive for control, but
the difference is superficial within $\IA$.}
\cite{DBLP:conf/csl/Murawski07}.

This ``Semantic Cube'', drawn in Figure \ref{fig:cube},
expresses that the conditions on strategies capture the
behaviour generated by certain computational effects; or rather
the \emph{absence} of certain effects.
\begin{figure}
\begin{center}
\includegraphics[scale=.25]{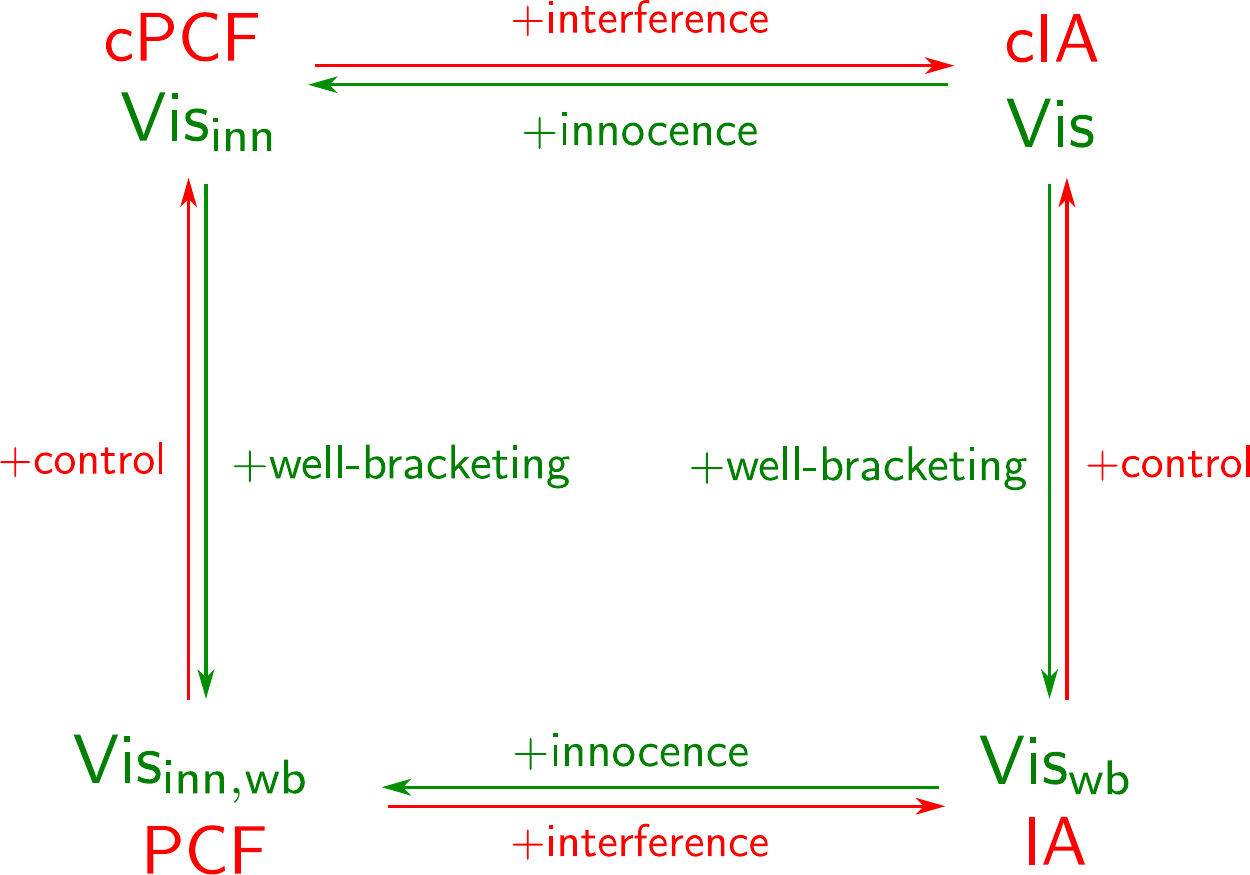}
\end{center}
\caption{The Semantic Cube}
\label{fig:cube}
\end{figure}
The achievement is noteworthy, as it is famously
difficult to \emph{combine} semantic accounts of computational effects.
But independently of purely semantic purposes, this provides us
with a microscope to study behaviourally interactions between effects in
programming languages. We demonstrate this with the following
orthogonality property\footnote{We learnt of it from a talk by Paul Levy in 2014
\cite{pblevy:chocola}.} between interference
and control which nicely illustrates the strength of game semantics:

\begin{thm}\label{th:orth_cube}
Let $\vdash M : A$ a term of $\PCF + \text{\emph{interference}} +
\text{\emph{control}}$. Assume that
\[
\begin{array}{rl}
(1)& \text{$M \obs N_1$ where $N_1$ is a term of $\PCF +
\text{\emph{interference}}$,}\\
(2)& \text{$M \obs N_2$ where $N_2$ is a term of $\PCF +
\text{\emph{control}}$;}
\end{array}
\]
then $M \obs N$ where $N$ is a term of (an infinitary extension of)
$\PCF$.
\end{thm}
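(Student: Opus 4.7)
The plan is to assemble the four full abstraction results of the semantic cube into a single argument, tracking $\intr{M}$ through the different fully abstract models. From hypothesis (1), $M \obs N_1$ with $N_1 \in \PCF + \text{interference}$; intensional full abstraction of $\Games + \text{well-bracketing}$ for $\PCF + \text{interference}$ yields $\intr{M} \obs \intr{N_1}$ in $\Games$, with $\intr{N_1}$ well-bracketed. Similarly, (2) gives $\intr{M} \obs \intr{N_2}$ in $\Games$ with $\intr{N_2}$ innocent. By transitivity, $\intr{N_1} \obs \intr{N_2}$ in $\Games$ as well.

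The crux is to extract from $\intr{N_2}$ a strategy $\rho$ that is \emph{both} innocent and well-bracketed, still observationally equivalent to $\intr{M}$. Since $\intr{N_2}$ is innocent it is determined by its P-views (Proposition \ref{prop:causal_inn}); I would therefore define $\rho$ as the innocent strategy generated by the set $\{\pview{s}\mid s\in\intr{N_2},~\pview{s}\text{ well-bracketed}\}$. This is tautologically innocent and well-bracketed, and by construction a substrategy of $\intr{N_2}$. To see $\rho\obs \intr{N_2}$ in $\Games$, I would invoke the characterisation of observational equivalence through complete plays analogous to Section \ref{subsubsec:fa_ia}: because $\intr{N_1}$ is well-bracketed and $\intr{N_1}\obs \intr{N_2}$, every complete interaction of $\intr{N_2}$ must also be realised by $\intr{N_1}$, hence is well-bracketed; the P-views of such complete plays are themselves well-bracketed (a standard inheritance property following from Definition \ref{def:pview}) and so are retained in $\rho$'s P-view set, which suffices for $\rho$ to preserve all the observable behaviour of $\intr{N_2}$.

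Finally, I would apply an infinitary version of Theorem \ref{th:definability} — definability of innocent well-bracketed strategies via their P-views (or equivalently, $\PCF$ B\"ohm trees as alluded to after that theorem) — to produce a term $\vdash N : A$ of an infinitary extension of $\PCF$ with $\intr{N}\simstrat \rho$. Full abstraction of $\Games + \text{innocence} + \text{well-bracketing}$ for (infinitary) $\PCF$ then turns $\intr{M}\obs \rho\simstrat \intr{N}$ into the desired $M\obs N$.

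The main obstacle is the middle step: rigorously justifying that $\rho \obs \intr{N_2}$ in $\Games$. The subtle point is that $\Games$ admits control-wielding observers which might in principle detect non-well-bracketed residues of $\intr{N_2}$ discarded when building $\rho$; ruling this out requires a careful analysis of what ``complete interactions'' look like in $\Games$, together with the inheritance of well-bracketing from plays to their P-views. Once this lemma is established, the rest of the argument is essentially bookkeeping between the full abstraction results already collected in the semantic cube, and the infinitary definability step is a routine extension of Theorem \ref{th:definability} to the non-finite case.
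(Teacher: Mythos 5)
Your argument has a genuine gap exactly where you flag it: the claim $\rho \obs \intr{N_2}$ is never established, and the sketch you give for it rests on the wrong characterisation of observational equivalence in $\Games$. The ``complete plays'' criterion is the one valid on the well-bracketed side of the cube (the $\IA$-style result of Section \ref{subsubsec:fa_ia}); in the full model $\Games$, where observers may use control, equivalence is characterised by equality of the \emph{entire} sets of plays (this is Murawski's result, cited in the paper's proof). Your own worry --- that a control-wielding context might detect the non-well-bracketed behaviour you discard when passing from $\intr{N_2}$ to $\rho$ --- is precisely why the complete-plays argument cannot close this step: in $\Games$, $\rho \obs \intr{N_2}$ would force $\rho$ and $\intr{N_2}$ to have the same plays, which is exactly the statement you are trying to avoid proving. (There is also a smaller slip at the start: the full abstraction results for $\Games+\text{well-bracketing}$ and $\Games+\text{innocence}$ apply to terms of $\PCF+\text{interference}$ and $\PCF+\text{control}$ respectively, and $M$ lives in neither; the correct move is to use full abstraction of $\Games$ itself for the full language to get $\intr{M}\obs\intr{N_1}$ and $\intr{M}\obs\intr{N_2}$.)

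The paper's proof shows how the stronger characterisation dissolves the problem: since equivalence in $\Games$ means equal play-sets, $\intr{M}$ has the same plays as the well-bracketed $\intr{N_1}$ and the same plays as the innocent $\intr{N_2}$, and both properties are properties of the play-set, so $\intr{M}$ is \emph{itself} an innocent well-bracketed strategy --- no auxiliary strategy $\rho$, and no pruning of P-views, is needed. From there one approximates $\intr{M}$ by an increasing sequence of finite innocent well-bracketed strategies, each definable by Theorem \ref{th:definability}, and monotonicity of the definability process assembles these into a single infinitary $\PCF$ term $N$ with $M \obs N$; this is essentially the ``routine extension'' you gesture at in your last step, but it is applied directly to $\intr{M}$ rather than to a separately constructed $\rho$. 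So the missing ingredient in your write-up is the same-plays characterisation of $\obs$ in the presence of control; once you have it, your middle construction becomes unnecessary, and without it, your middle lemma is unproven.
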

\begin{proof}
Consider $\intr{M} : A$. We have seen in Section~\ref{subsubsec:fa_ia}
that for $\IA$, strategies are indistinguishable iff they have the same
\emph{complete plays}. In the presence of control this phenomenon gets
stronger: strategies are indistinguishable iff they have the
\emph{same plays} \cite{DBLP:conf/csl/Murawski07}.  Hence, $\intr{M}$ is
an innocent well-bracketed strategy (even though $M$ might internally
use state and control). It is approximated by a sequence of finite
innocent strategies which may be defined; but as the definability
process is monotone this yields an infinitary $\PCF$ term.
\end{proof}

There are reasons to expect that in a version of $\PCF$ such as ours
with a $\mathbf{let}$ construct, the innocent well-bracketed games
model is \emph{intensionally universal}, meaning that each computable
innocent well-bracketed strategy is definable\footnote{Hyland
and Ong have a \emph{extensional} universality theorem
\cite{ho}, \emph{i.e.} up to observational equivalence. In their
setting, intensional universality fails: in the absence of a
$\mathbf{let}$ construct, the strategies obtained by universality
\emph{stutter}, repeating the same move possibly many times.
\emph{Intensional} universality does not appear anywhere in
call-by-name, but it does in call-by-value
\cite{DBLP:conf/fossacs/MurawskiT13}.} -- though as far as we know,
this has not been proved.  With such a result, Theorem
\ref{th:orth_cube} would generalize to conclude the existence of simply
a term of $\PCF$, rather than an infinitary term.

\subsection{Towards Concurrency}\label{subsec:parallelism}
The reader may rightly complain that Figure
\ref{fig:cube} is not a ``semantic cube'', only a ``semantic square''.
Though we focused on control and interference, there are fully
abstract models of languages featuring general references
\cite{DBLP:conf/lics/AbramskyHM98}, exceptions
\cite{DBLP:conf/lics/Laird01}, coroutines
\cite{DBLP:conf/icalp/Laird04}, non-determinism
\cite{DBLP:conf/lics/HarmerM99}, probabilistic choice
\cite{DBLP:conf/lics/DanosH00},
concurrency \cite{DBLP:journals/entcs/Laird01,gm}, and others.
One imagines that the methodology above generalizes, and that the big
``syntactic hypercube'' of these effects is matched by a ``semantic
hypercube''.

However, there is no such ``semantic hypercube'': the works cited above
rely on a priori incompatible formal
settings. In this paper, we present steps towards such a semantic
hypercube. More precisely we aim to \emph{disentangle parallelism and
interference} in the same sense as the ``Abramsky cube'' disentangles
control and interference, \emph{i.e.} we must answer:

\begin{question}
Build a model $\CG$ with notions of \emph{parallel innocence} and
\emph{sequentiality} s.t.:
\[
\begin{array}{rcl}
\CG & \text{is fully abstract for} & \IPA\,,\\
\CG + \text{parallel innocence} & \text{is
fully abstract for} & \PCFpar\,,\\
\CG + \text{sequentiality} &\text{is
fully abstract for} & \IA\,,\\
\CG + \text{parallel innocence} + \text{sequentiality} &
\text{is fully abstract for} & \PCF\,,
\end{array}
\]
all of these being intensional full abstraction results.
\end{question}

The model should be fully abstract for $\IPA$, link
with $\NegAltinnwb_\oc$ and $\NegAltwb_\oc$ (respectively fully abstract
for $\PCF$ and $\IA$), but also support a notion of parallel
innocence yielding full abstraction for $\PCFpar$. It is natural to
start with a simple non-alternating variant of $\NegAlt$, inspired by
Ghica and Murawski's fully abstract games model for $\IPA$ \cite{gm}.

\subsubsection{Non-alternating plays and strategies} We simply relax
alternation in Definition \ref{def:alt_play}.

\begin{defi}\label{def:nalt_play}
A \textbf{non-alternating play} on $-$-arena $A$ is $s = s_1
\dots s_n$ which is:
\[
\begin{array}{ll}
\text{\emph{valid:}}& \forall 1\leq i \leq n,~\{s_1, \dots, s_i\} \in
\conf{A}\,,\\
\text{\emph{non-repetitive:}}& \forall 1\leq i,j \leq n,~s_i =s_j
\implies
i=j\,,\\
\text{\emph{negative:}} & n\geq 1 \implies \pol(s_1) = -\,.
\end{array}
\]

We write $\NAlt(A)$ for the set of non-alternating plays on $A$.
\end{defi}

The notation (inspired by \emph{template games}
\cite{DBLP:journals/pacmpl/Mellies19}), is intended to suggest that
whereas alternating plays in $\Alt(A)$ transition between two states O
and P determining which player has control, in $\NAlt(A)$ there is only
one state, in which either player may play. The intuition is simple: as
several threads might be running in parallel, their interleaving breaks
alternation.
\begin{figure}
\[
\xymatrix@R=-5pt@C=5pt{
(\tunit &\to&\tunit)&\to& \tnat\\
&&&&\qu^- 
        &\hspace{10pt}& 
           \ensquare{\lambda f^{\tunit \to
\tunit}.\,\newref\,r\!\!:=\!0\,\tin\,f\,(r:=1);\,!r}
        && r \mapsto 0 \\
&&\qu^+_{\grey{0}}
        \ar@{.}@/^/[urr] &&
        && \lambda f^{\tunit \to
\tunit}.\,\newref\,r\!\!:=\!0\,\tin\,\underline{f}\,(r:=1);\,!r
        && r \mapsto 0 \\
\qu^-_{\grey{0,0}}
        \ar@{.}@/^/[urr] &&&&
        && \lambda f^{\tunit \to
\tunit}.\,\newref\,r\!\!:=\!0\,\tin\,f\,\ensquare{(r:=1)};\,!r
        && r \mapsto 0\\
&&\done^-_{\grey{0}}
        \ar@{.}@/_/[uu] &&
        && \lambda f^{\tunit \to
\tunit}.\,\newref\,r\!\!:=\!0\,\tin\,f\,\ensquare{(r:=1)};\,\ensquare{!r}
        && r \mapsto 0\\
\done^+_{\grey{0,0}} 
        \ar@{.}@/^/[uu]&&&&
        && \lambda f^{\tunit \to
\tunit}.\,\newref\,r\!\!:=\!0\,\tin\,f\,(\hspace{5pt}\underline{\tskip}\hspace{5pt});\,\ensquare{!r}
        && r \mapsto 1 \\
&&&&1^+ 
        \ar@{.}@/_/[uuuuu]&& \lambda f^{\tunit \to
\tunit}.\,\newref\,r\!\!:=\!0\,\tin\,f\,(\hspace{5pt}\tskip\hspace{5pt});\,\underline{1}
        && r \mapsto 1\\
}
\]
\caption{Operational content of a non-alternating play}
\label{fig:comp_nalt}
\end{figure}
We show in Figure \ref{fig:comp_nalt} a non-alternating play on
$\intr{(\tunit \to \tunit) \to \tnat}$, using the same conventions as
previously. Resting on the same computational intuitions as before, we
show for each move a representation of the matching computational state
of a term realizing that play. The figure illustrates that even
$\IA$, a sequential language, allows non-alternating plays, as the
environment can evaluate subterms in parallel.  After the third and
fourth moves, two subterms are being evaluated in parallel: $r:=1$ and
$!r$, causing a \emph{data race}.

As before, we may now define \emph{strategies} as certain sets of plays.

\begin{defi}\label{def:nalt_strat}
A \textbf{non-alternating strategy} $\sigma : A$, is
$\sigma \subseteq \NAlt(A)$ which is:
\[
\begin{array}{ll}
\text{\emph{non-empty:}}        & \varepsilon \in \sigma\\
\text{\emph{prefix-closed:}}    & \forall s \sqsubseteq s' \in
\sigma,~s\in \sigma\\
\text{\emph{receptive:}}        & \forall s \in \sigma,~sa^- \in
\NAlt(A)
\implies sa \in \sigma\\
\text{\emph{courteous:}}        & \forall s a b t \in \sigma,~sb \in
\NAlt(A),~(\pol(b) = - \vee \pol(a) = +) \implies sbat \in \sigma
\end{array}
\]

A \textbf{non-alternating prestrategy} is only required to
satisfy \emph{non-empty} and \emph{prefix-closed}.
\end{defi}

Let us compare with Definition \ref{def:alt_strat}.
Besides moving to non-alternating plays, we remove determinism. Of
course, this is natural since the interleaving semantics of even a pure
parallel language represents non-deterministically the choices of the
scheduler. 
The new condition added is \emph{courtesy}, it corresponds to the
\emph{saturation} condition of \cite{gm} (the name ``courtesy'' is
imported from \cite{DBLP:conf/concur/MelliesM07}). Courtesy expresses
that the model is asynchronous. If one has $sa^+b \in \sigma$, 
there is an execution of $\sigma$ where it plays $a$, then we
observe $b$ (of any polarity). But if the surrounding computing
environment is asynchronous, nothing forces $a$ to be directly
observable by Opponent -- $a$ might get stuck in a buffer, in the
network, \emph{etc}. So then, courtesy imposes that $\sigma$ should be
\emph{stable under asynchronous delays}: if $sa^+ b \in \sigma$ and
there is no dependency from $a$ to $b$ in the arena, then $a$ can be
postponed after $b$ in $\sigma$.

\subsubsection{Well-bracketing}
We introduce \emph{well-bracketing} for non-alternating
strategies. In Ghica and Murawski's games, all plays are
\emph{well-bracketed} in the sense that they satisfy two conditions,
dubbed \emph{fork} and \emph{wait}. We adapt and introduce these conditions
now.

\begin{defi}\label{def:nalt_play_wb}
For a $-$-arena $A$, $s \in \NAlt(A)$ is \textbf{well-bracketed}
if for any $s' \prefix s$,
\[
\begin{array}{ll}
\text{\emph{fork:}} & \text{if $s' = \dots \qu^{\Qu} \dots m$
with $\qu \imc_A m$, $\qu$ must be unanswered before $m$ is played, }\\
\text{\emph{wait:}} & \text{if $s' = \dots \qu^{\Qu} \dots
a^{\An}$ with $\qu \imc_A a$, all questions justified by $\qu$ must be answered.}
\end{array}
\]
\end{defi}

This differs from the simple \emph{well-bracketing} of alternating plays
(Definition \ref{def:alt_play_wb}).  In a non-alternating setting it
does not make sense to refer to the last unanswered question as it might
originate in a different thread than the one the Player move to be
played belongs to. Instead, this condition forces plays to follow the
following protocol: a question, as long as it is not answered, may
prompt (\emph{i.e.} justify) other questions. It can only be answered
once all the questions it justified are answered, and then it will not
be able to justify anything further.  For instance, the play of Figure
\ref{fig:comp_nalt} is \emph{not} well-bracketed: the fourth move
$\done^-_{\grey{0}}$ causes a failure to \emph{wait} because it is
justified by $\qu^+_{\grey{0}}$, although the latter has justified
$\qu^-_{\grey{0,0}}$ as of yet unanswered. If we were to permute the
moves $\qu^-_{\grey{0,0}}$ and $\done^-_{\grey{0}}$ then
$\qu^-_{\grey{0,0}}$ would cause a failure to \emph{fork} as it would be
justified by a question that is already answered.

Rather than imposing this condition on all plays, we impose it on strategies.

\begin{defi}\label{def:nalt_strat_wb}
Let $\sigma : A$ be a non-alternating strategy.

It is \textbf{well-bracketed} iff for all
$sa^+ \in \sigma$, if $s$ is well-bracketed then $sa$ is well-bracketed.
\end{defi}

The play of Figure \ref{fig:comp_nalt} belongs to a well-bracketed
strategy: Opponent breaks well-bracketing first.  This is
a slight difference with Ghica and Murawski's model: we observe all
dynamic behaviour of a program of $\IPA$, even that not reachable via a
context of $\IPA$. Of course, it is always possible to restrict to
well-bracketed plays without cutting any Player behaviour (\emph{i.e.}
the strategy is cut at Opponent extensions, not Player extensions). 

\subsubsection{Limitations} The proximity of non-alternating strategies
to both $\NegAlt$ and Ghica and Murawski's model make them a tempting
foundation for this paper. However, they have two limitations,
both subtly related to their inability to record the \emph{branching
structure}.

Firstly, \emph{parallel innocence} requires adequate
causal structures, as illustrated by the following example. Is there a
parallel innocent strategy that includes the two plays with pointer
representation in Figure \ref{fig:par_branching}?
\begin{figure}
\[ 
\xymatrix@R=-2pt@C=32pt{ 
(\tbool \ar@{}[r]|\to&\tunit)\ar@{}[r]|\to&\tunit\\ 
&&\qu^-\\ 
&\qu^+   \ar@{.}@/^/[ur]\\ 
\qu^-    \ar@{.}@/^/[ur]\\ 
\tfalse^+ 
} 
\qquad\qquad 
\xymatrix@R=-2pt@C=32pt{ 
(\tbool \ar@{}[r]|\to&\tunit)\ar@{}[r]|\to&\tunit\\ 
&&\qu^-\\ 
&\qu^+   \ar@{.}@/^/[ur]\\ 
\qu^-    \ar@{.}@/^/[ur]\\ 
\ttrue^+ 
} 
\]
\caption{Two (pointer representation) of plays in a parallel innocent
strategy}
\label{fig:par_branching}
\end{figure}
Is there a program of $\PCFpar$ that may realize both?
Traditional innocence forbids that, because in a sequential
program, both plays must be visiting the same piece of syntax and
obtain the same result. In $\PCFpar$ though, the program
\[ 
\lambda f^{\tbool \to \tunit}.\,\mathbf{let}\,\left(\begin{array}{rcl} x
&=& 
f\,\ttrue \\ y &=& f\,\tfalse \end{array}\right)\,\mathbf{in}~x;\,y :
(\tbool \to 
\tunit) \to \tunit 
\]
indeed realizes these two plays, corresponding to the evaluation of
distinct threads.  A deterministic innocent strategy is determined by
(the pointer representation of) its \emph{P-views}, so we may see the
set of P-views as a witness for innocence. Analogously, what structure
may witness that a non-alternating strategy is innocent? In fact, what
is missing from the two P-views of Figure \ref{fig:par_branching} is the
\emph{branching structure}, keeping these two P-views apart and
recording how they are linked to each other. It has already been
observed \cite{lics14,DBLP:conf/lics/TsukadaO15} that
\emph{non-deterministic innocence} may be defined by replacing sets of
P-views, as witnesses for innocence, with \emph{trees} recording the
non-deterministic branching information. Here we must do the same, but
instead record the branching structure pertaining to \emph{parallelism}
as well -- which plain non-alternating strategies cannot capture
adequately \cite{DBLP:conf/concur/CastellanC16}.

Secondly, it is unclear how to endow non-alernating
strategies with appropriate notions of \emph{symmetry} and
\emph{uniformity} as in Definition \ref{def:simstrat}. Our attempts
at generalizing Definition \ref{def:simstrat} ended up suffering from
various pathologies, typically uniformity not being
preserved by hiding.
The tension with hiding comes from a play $s \in \tau \odot
\sigma$ being witnessed by distinct interactions between
$\sigma$ and $\tau$ --
this suggests again the need for an explicitly branching
structure, as then one recovers a \emph{unique witness} property.
For these two reasons, we move from plain non-alternating strategies to
\emph{thin concurrent games}. We will, however rely on:

\begin{restatable}{prop}{naltsmcc}\label{prop:nalt_smcc}
There is a symmetric monoidal closed category with products
$\NNegAltwb$, with objects $-$-arenas, and morphisms
well-bracketed non-alternating strategies on $A\lin B$.
\end{restatable}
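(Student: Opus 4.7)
The plan is to replay the construction of $\NegAlt$ from Appendix \ref{app:alt} in the non-alternating setting: composition as parallel interaction plus hiding, identities as copycat, tensor as $A \otimes B = A \parallel B$, product as $\with$, and the monoidal closed structure obtained by transporting strategies along the iso $(A \tensor B) \lin C \iso A \lin (B \lin C)$. Most of the categorical bookkeeping is formally identical to the alternating case and reduces to routine verifications; the real content lies in checking that the defining axioms of non-alternating strategies (Definition \ref{def:nalt_strat}) and well-bracketing (Definition \ref{def:nalt_strat_wb}) are preserved, and that copycat remains an identity despite the loss of alternation.

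Composition is defined as in Section \ref{sec:gs}: a witness of $\tau \odot \sigma$ is a valid sequence $u$ on $(A\lin B)\lin C$ whose restrictions belong to $\sigma$, $\tau$ and $\NAlt(A\lin C)$ respectively. Prefix-closure, non-emptiness and receptivity of $\tau \odot \sigma$ follow by the same arguments as in the alternating case. The delicate clause is \emph{courtesy}: given $s\,a\,b\,t \in \tau \odot \sigma$ with $(a,b)$ eligible for an asynchronous swap in $A \lin C$, I would lift to a witness interaction and push $a$ past $b$ and past all intermediate internal moves on $B$, applying courtesy of $\sigma$ and of $\tau$ alternately. I expect this to be the main obstacle of the proposition, since the swap has to propagate coherently through chains of internal synchronizations; this is however a well-understood phenomenon in asynchronous game semantics and can be handled by induction on the number of internal $B$-moves between $a$ and $b$.

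The identity law $\cc_A \odot \sigma = \sigma = \sigma \odot \cc_A$ relies precisely on courtesy rather than alternation: interaction with copycat produces witnesses that differ from plays of $\sigma$ only by asynchronous permutations, which $\sigma$ absorbs by courtesy; the reverse inclusion is immediate. Associativity is obtained, as usual, from the three-way interaction on $((A\lin B)\lin C)\lin D$. For well-bracketing: copycat is evidently well-bracketed, and composition preserves it because a first violation of \emph{fork} or \emph{wait} by a Player move in $\tau \odot \sigma$ can be traced back, through the hidden witness, to a violation by a Player move of $\sigma$ or $\tau$, which contradicts their well-bracketing.

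Finally, the tensor, product, and closed structures carry over verbatim. The parallel composition $\sigma_1 \otimes \sigma_2$, projections $\pi_A, \pi_B$ (acting as copycat), and the pairing $\tuple{\sigma, \tau}$ (as set-theoretic union, well-defined thanks to the conflict in $\with$) all preserve non-alternation, courtesy, and well-bracketing by direct inspection. Evaluation $\evm_{A,B}$ is defined as in $\NegAlt$ via currying of $\cc_{A\lin B}$, and the universal property of closure reduces to a bijection on strategies mediated by the arena isomorphism $(A \tensor B) \lin C \iso A \lin (B \lin C)$.
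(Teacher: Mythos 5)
Your construction is correct, and in broad outline it is the route the paper takes (parallel interaction plus hiding, copycat identities, zipping for associativity, the tensor/product/closure transported verbatim; see Appendix~\ref{app:nalt}, in particular Lemma~\ref{lem:app_nalt_zip}). There is, however, one genuine divergence worth noting, precisely at the point you single out as the main obstacle. You propose to prove that $\tau \odot \sigma$ is \emph{receptive} and \emph{courteous} directly, by lifting a play to a witness interaction and propagating the asynchronous swap past the hidden $B$-moves by induction; this works (the key sanity checks being that internal $B$-moves never depend in the arena on moves of $A$, and that a positive move can always be delayed, resp.\ a negative move anticipated, in each component by its courtesy), but it is not what the paper does. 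The paper instead proves associativity at the level of \emph{prestrategies} and then establishes Proposition~\ref{prop:app_nalt_strat_char}: a prestrategy $\sigma$ satisfies \emph{receptive} and \emph{courteous} if and only if $\cc_B \odot \sigma \odot \cc_A = \sigma$. Closure of strategies under composition is then a one-line consequence of associativity, since $\cc_C \odot (\tau \odot \sigma) \odot \cc_A$ rewrites to $\cc_C \odot (\tau \odot \cc_B \odot \sigma) \odot \cc_A$, of the form whose receptivity and courtesy are immediate. Your direct induction buys a self-contained argument but forces you to manage the validity of each intermediate swap in the witness; the copycat-invariance characterisation buys a cleaner proof that is reused elsewhere in the paper (e.g.\ to show $\bsigma \odot \bcell_n$ is a strategy), and is the standard device of Ghica--Murawski and Rideau--Winskel. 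One last detail: make sure your copycat is the \emph{asynchronous} one of Definition~\ref{def:asy_cc}, characterised by the Scott-order condition $y \scott x$ on reached configurations, since the alternating description via equal restrictions at even-length prefixes no longer makes sense; your argument for neutrality (copycat witnesses differ from plays of $\sigma$ only by asynchronous delays, absorbed by receptivity and courtesy) implicitly assumes exactly this definition, so it goes through.
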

\begin{proof}
The constructions play a minor role in this paper and are very similar
to the alternating case, so we omit them. Some details of the
construction appear in Appendix \ref{app:nalt}.
\end{proof}

These limitations call for a more intensional
setting, representing explicitly the parallel and non-deterministic
branching structures. To our knowledge, the only
games setting in the literature sufficiently expressive and mature is
\emph{thin concurrent games}, one of the possible enrichments with
symmetry of the concurrent strategies of
Rideau and Winskel \cite{lics11}.

\section{Causal Full Abstraction for \texorpdfstring{$\IPA$}{IA//}}
\label{sec:cg}

\emph{Thin concurrent games} were first introduced in \cite{lics15}, but
the detailed construction (along with significant improvements and
simplifications) is presented in \cite{cg2}.

In this section, we start with an introduction to thin concurrent games.
We omit detailed constructions (which appear in \cite{cg2}), but we
do attempt to give a self-contained introduction, in
particular providing required reasoning principles. Then, we apply this
setting to give a fully abstract model for $\IPA$, the causal sibling of
Ghica and Murawski's model \cite{gm}.
In the \emph{concurrent games} literature, strategies are often referred
to as \emph{concurrent strategies}. Here we prefer \emph{causal
strategies} to better distinguish them with non-alternating strategies,
which also represent concurrent behaviour.

\subsection{Arenas and causal strategies}
First, we must refine the symmetry on arenas.

\subsubsection{Polarized symmetry}
Arenas for causal strategies in \cite{cg2} require the following:

\begin{defi}\label{def:pol_sym}
For $A$ an arena with isomorphism family $\tilde{A}$,
a \textbf{polarized decomposition} of $\tilde{A}$ comprises
isomorphism families $\ntilde{A}, \ptilde{A}$ included in
$\tilde{A}$, s.t.:
\[
\begin{array}{rl}
\text{\emph{(1)}} & \text{If $\theta \in \ntilde{A} \cap \ptilde{A}$,
then $\theta$ is an identity bijection,}\\
\text{\emph{(2)}} & \text{If $\theta \in \ntilde{A}$ and $\theta
\subseteq^- \theta' \in \tilde{A}$, then $\theta' \in \ntilde{A}$,}\\
\text{\emph{(3)}} & \text{If $\theta \in \ptilde{A}$ and $\theta
\subseteq^+ \theta' \in \tilde{A}$, then $\theta' \in \ptilde{A}$.}
\end{array}
\]
where $\subseteq^p$ means that only (pairs of) events of polarity $p$
are added.

If $\theta : x \sym_A y$ and $\theta \in \ntilde{A}$, we write $\theta
: x \sym_A^- y$ and call $\theta$ a \textbf{negative
symmetry}. Likewise, $\theta : x \sym_A^+ y$ means that $\theta : x
\sym_A y$ with $\theta \in \ptilde{A}$, called a \textbf{positive
symmetry}.
\end{defi}

Arenas with a polarized decomposition are
\emph{thin concurrent games}\footnote{Thin concurrent games in
\cite{cg2} are more general, \emph{e.g.} they might not be alternating
or forestial.}
in the sense of \cite{cg2}. Intuitively, negative symmetries (resp.
positive) reindex Opponent (resp. Player)
moves -- though Definition \ref{def:pol_sym} does not involve ``copy
indices''. By Lemma 3.19 from \cite{cg2}, any $\theta : x \sym_A z$
factors uniquely as $\theta_+ \circ \theta_-$, with $\theta_+$
positive and $\theta_-$ negative.

\subsubsection{Constructions on arenas}\label{subsubsec:constr_arenas}
First we extend the arena constructions accordingly.

For $\tunit, \tbool$ and $\tnat$, all isomorphism families are reduced
to identity bijections between configurations. For the dual of an arena
$A$, then for all symmetry $\theta : x \sym_A y$,
\[
\theta : x \sym_{A^\perp}^+ y
\quad
\Leftrightarrow
\quad
\theta : x \sym_A^- y\,,
\qquad
\qquad
\theta : x \sym_{A^\perp}^- y
\quad
\Leftrightarrow
\quad
\theta : x \sym_A^+ y\,.
\]

For parallel composition and product, the sub-symmetries are
inherited as for the full symmetry in Section~\ref{subsubsec:ar_sym} applying to the positive and negative isomorphism
families separately. For the arrow, for $x, y \in \conf{A\lin B}$ and
$\theta : x \simeq y$ an order-isomorphism, we set $\theta \in
\ptilde{A\lin B}$ iff $\chi_{A, B}\,\theta \in \ptilde{A^\perp \parallel
B}$; and $\theta \in \ntilde{A\lin B}$ iff $\chi_{A, B}\,\theta \in
\ntilde{A^\perp \parallel B}$.

The most interesting construction is the exponential. Recall
that a symmetry in $\oc A$ is
\[
\begin{array}{rcrcl}
\theta &:& \parallel_{n\in \mathbb{N}} x_n &\iso& \parallel_{n\in
\mathbb{N}} y_n\\
&& (n, a) &\mapsto& (\pi(n), \theta_n(a))
\end{array}
\]
for some permutation $\pi \in \varsigma(\mathbb{N})$ and for some family
$(\theta_n)_{n\in \mathbb{N}}$ with $\theta_n : x_n \sym_{A} y_{\pi(n)}$
for all $n \in \mathbb{N}$. First we set $\theta \in \ntilde{\oc A}$ iff
for all $n \in \mathbb{N}$, $\theta_n : x_n \sym_A^- y_{\pi(n)}$.
Finally, we set $\theta \in \ptilde{\oc A}$ iff for all $n \in
\mathbb{N}$ such that $x_n$ is non-empty, we have $\pi(n) = n$, and
$\theta_n : x_n \sym_A^+ y_n$.

We recall from Section~\ref{subsubsec:ar_sym}
our representation of configurations of $\oc A$ with an example:
\begin{exa}\label{ex:symm}
Consider the arena $o$ with only one (negative) move $\qu^-$, and $\oc
(\oc o \lin o)$, with set of events $\mathbb{N} \times (\mathbb{N}
\times \{\qu\} \uplus \{\qu\})$. Following the notations of Section~\ref{subsubsec:ar_sym}, we draw 
\[
\xymatrix@R=5pt@C=10pt{
\qu^-_{\grey{0}}
        \ar@{.}[d] &
\qu^-_{\grey{1}}
        \ar@{.}[d]\\
\qu^+_{\grey{0,2}}&
\qu^+_{\grey{1,4}}
}
\]
for the configuration $\{(\grey{0}, \qu), (\grey{1}, \qu), (\grey{0},
(\grey{2}, \qu)), (\grey{1}, (\grey{4}, \qu))\}$, which may also be
written as $\parallel_{i\in \{0, 1\}} x_i$
for $x_0 = \{(\grey{2}, \qu)\}$ and $x_1 = \{(\grey{4}, \qu)\}$.
Writing $\gamma_{0, 1} \in \varsigma(\mathbb{N})$ for the permutation
exhanging $0$ and $1$, the symmetry described by $(\gamma_{0, 1},
(\id_{x_i})_{i\in \{0, 1\}})$ may be represented as
\[
\raisebox{10pt}{$
\xymatrix@R=5pt@C=10pt{
\qu^-_{\grey{0}}
	\ar@{.}[d] &
\qu^-_{\grey{1}}
	\ar@{.}[d]\\
\qu^+_{\grey{0,2}}&
\qu^+_{\grey{1,4}}
}$}
\qquad
\sym_{\oc (\oc o \lin o)}
\qquad
\raisebox{10pt}{$
\xymatrix@R=5pt@C=10pt{
\qu^-_{\grey{1}}
	\ar@{.}[d] &
\qu^-_{\grey{0}}
	\ar@{.}[d]\\
\qu^+_{\grey{1,2}}&
\qu^+_{\grey{0,4}}
}$}
\]
sending a move on one side to the move at the same spot on the other
side. This reindexes initial move, leaving the subsequent
moves morally unchanged: $\qu^+_{\grey{0}, \grey{2}}$ is sent to
$\qu^+_{\grey{1}, \grey{2}}$, but the first number just identifies the
causal predecessor -- the actual copy index, $\grey{2}$, is unchanged.

The symmetry above is negative. It is not positive, because
$\gamma_{\grey{0}, \grey{1}}$ is not the identity. Likewise, the two
symmetries represented below:
\[
\raisebox{10pt}{$
\xymatrix@R=5pt@C=5pt{
\qu^-_{\grey{0}}
	\ar@{.}[d] &
\qu^-_{\grey{1}}
	\ar@{.}[d]\\
\qu^+_{\grey{0,2}}&
\qu^+_{\grey{1,4}}
}$}
\quad
\sym_{\oc (\oc o \lin o)}
\quad
\raisebox{10pt}{$
\xymatrix@R=5pt@C=5pt{
\qu^-_{\grey{0}}
	\ar@{.}[d] &
\qu^-_{\grey{1}}
	\ar@{.}[d]\\
\qu^+_{\grey{0,2}}&
\qu^+_{\grey{1,5}}
}$}
\qquad\qquad
\raisebox{10pt}{$
\xymatrix@R=5pt@C=5pt{
\qu^-_{\grey{0}}
	\ar@{.}[d] &
\qu^-_{\grey{1}}
	\ar@{.}[d]\\
\qu^+_{\grey{0,2}}&
\qu^+_{\grey{1,4}}
}$}
\quad
\sym_{\oc (\oc o \lin o)}
\quad
\raisebox{10pt}{$
\xymatrix@R=5pt@C=5pt{
\qu^-_{\grey{1}}
	\ar@{.}[d] &
\qu^-_{\grey{0}}
	\ar@{.}[d]\\
\qu^+_{\grey{1,5}}&
\qu^+_{\grey{0,2}}
}$}
\]
are respectively positive, and neither positive nor negative.
\end{exa}

Why does $\oc A$ treat differently the positive and negative isomorphism
families? The permutation $\pi(-)$ corresponds to reindexing the minimal
events of $\oc A$.
Because the exponential construction is intended to
apply to negative games, $\pi(-)$ reindexes negative moves. But
symmetries in $\ptilde{\oc A}$ can only reindex positive moves, so
$\pi(-)$ must be the identity. In contrast, symmetries in $\ntilde{\oc
A}$ can perform any reindexing on the minimal events.

This intuition is
further solified by the extension of \emph{concrete arenas} to these
polarized sub-isomorphism families. This will not be required until much
later, so it only appears in Section~\ref{subsubsec:concrete_pol_sym} --
but it might still be helpful for the reader to consult it now.

\subsubsection{Causal strategies}
In contrast with traditional game models, a causal strategy is one
global object: an event structure.  It presents all execution threads
together, with explicit information on how these executions relate
via parallel and non-deterministic branching.

\begin{defi}\label{def:caus_strat}
A \textbf{causal prestrategy} $\bsigma : A$ comprises an ess
$(\ev{\bsigma}, \leq_{\bsigma}, \conflict_\bsigma, \tilde{\bsigma})$
with
\[
\pr : \ev{\bsigma} \to \ev{A}
\]
a function called the \textbf{display map}, subject to the following
conditions:
\[
\begin{array}{rl}
\text{\emph{rule-abiding:}} & \text{for $x \in \conf{\bsigma}$,
$\pr(x) \in \conf{A}$,}\\
\text{\emph{locally injective:}} & \text{for $s_1, s_2 \in x \in
\conf{\bsigma}$, if $\pr(s_1) = \pr(s_2)$ then $s_1 = s_2$,}\\
\text{\emph{symmetry-preserving:}} &
\text{for $\theta \in \tilde{\bsigma}$, $\pr(\theta) = \{(\pr(s_1),
\pr(s_2)) \mid (s_1, s_2) \in \theta\} \in \tilde{A}$,}\\ 
\text{\emph{$\sim$-receptive:}} &
\text{for $\theta : x \sym_\bsigma y$, and extensions $x \vdash_\bsigma s_1^-$,
$\pr(\theta) \vdash_{\tilde{A}} (\pr(s^-_1), a^-_2)$,}\\
&\text{there is a unique $s_2^- \in \ev{\bsigma}$ s.t. $\theta
\vdash_{\tilde{\bsigma}} (s^-_1, s^-_2)$ and $\pr(s_2^-) = a_2^-$,}\\ 
\text{\emph{thin:}} & 
\text{for $\theta : x \sym_\bsigma y$, and extension $x
\vdash_\bsigma s_1^+$,}\\
&\text{there is a \emph{unique} extension $y \vdash_\bsigma s_2^+$
such that $\theta \vdash_{\tilde{\bsigma}} (s_1^+, s_2^+)$.}
\end{array}
\]

Additionally, we say that $\bsigma$ is a \textbf{causal strategy} if it
also satisfies:
\[
\begin{array}{rl}
\text{\emph{negative:}} & \text{for $s \in \ev{\bsigma}$, if $s$ is
minimal then $s$ is negative,}\\
\text{\emph{courteous:}} & \text{for $s_1 \imc_\bsigma s_2$, if
$\pol(s_1) = +$ or $\pol(s_2) = -$ then
$\pr(s_1) \imc_A \pr(s_2)$,}\\
\text{\emph{receptive:}} & \text{for $x \in \conf{\bsigma}$, for 
$\pr(x) \vdash_A a^-$,}\\
&\text{there is a unique $x \vdash_\bsigma s^- \in \conf{\bsigma}$ such
that
$\pr(s) = a$,}
\end{array}
\]
\end{defi}

As a convention, causal strategies are ranged over by
symbols in bold font, as in \emph{e.g.} $\bsigma, \btau$.
We disambiguate some notations used in the definition. First,
$\bsigma$ implicitly comes with polarities,
imported from $A$ as $\pol_\bsigma(s) = \pol_A(\pr(s))$. We also used
the enabling relation on isomorphism families, defined
by $\theta \vdash_{\tilde{A}} (a_1, a_2)$ iff $(a_1, a_2)
\not \in \theta$ and $\theta \uplus \{(a_1, a_2)\} \in \tilde{A}$.

Conditions \emph{rule-abiding}, \emph{locally injective} and
\emph{symmetry-preserving} together amount to $\pr$ being a \textbf{map
of event structures with symmetry} \cite{symmetry}. Conditions \emph{courteous} and
\emph{receptive} play the same role as in Definition
\ref{def:nalt_strat}. The condition \emph{$\sim$-receptive} forces
strategies to treat uniformly any pairs of Opponent events symmetric
in the game.  Finally, \emph{thin} forces strategies to pick \emph{one} canonical
representative up to symmetry for positive moves. For further
explanations and discussions on those conditions, the reader is directed
to \cite{cg2}.

Causal strategies and non-alternating strategies differ fundamentally
in \emph{how} the concurrent behaviour is represented. While
non-alternating strategies present observable execution traces, causal
strategies present the causal constraints underlying the
observed behaviour.
\begin{figure}
\begin{minipage}{.35\linewidth}
\[
\xymatrix@R=5pt@C=-3pt{
&(\tunit \ar@{}[rrrrr]|\lin&&&&&\tunit)\ar@{}[rrrrr]|\lin&&&&&\tnat\\
&&&&&&&&&&&\qu^-
        \ar@{-|>}[dlllll]\\
&&&&&&\qu^+
        \ar@{-|>}[dlllll]
        \ar@{-|>}[d]
        \ar@{.}@/^/[urrrrr]\\
&\qu^-
        \ar@{-|>}[dl]
        \ar@{-|>}[dr]
        \ar@{-|>}[drrrrrrrrr]
        \ar@{.}@/^/[urrrrr]&&&&&
\done^- \ar@{-|>}[dllll]
        \ar@{-|>}[drrrr]
        \ar@{-|>}[drrrrrr]
        \ar@{.}@/^/[u]\\
\done^+ \ar@{~}[rr]
        \ar@{~}@/_1pc/[rrrrrrrrrrrr]
        \ar@{.}@/^/[ur]&&
\done^+ \ar@{~}[rrrrr]
        \ar@{.}@/_/[ul]&&&&&&&&
1^+     \ar@{~}[rr]
        \ar@{.}@/^/[uuur]&&
0^+     \ar@{.}@/_/[uuul]
}
\]
\caption{A causal strategy}
\label{fig:ex_aug1}
\end{minipage}
\hfill
\begin{minipage}{.6\linewidth}
\[
\xymatrix@R=5pt@C=20pt{
(\tunit \ar@{}[r]|\lin&\tunit)\ar@{}[r]|\lin&\tnat\\
&&\qu^-  \ar@{-|>}[dl]\\
&\qu^+   \ar@{-|>}[dl]
        \ar@{-|>}[d]
        \ar@{.}@/^/[ur]\\
\qu^-    \ar@{-|>}[d]
        \ar@{.}@/^/[ur]
        \ar@{-|>}[drr]&
\done^- \ar@{-|>}[dr]
        \ar@{.}@/^/[u]\\
\done^+ \ar@{.}@/^/[u]&&
1^+\ar@{.}@/_/[uuu]
}
\qquad\qquad
\xymatrix@R=5pt@C=20pt{
(\tunit \ar@{}[r]|\lin&\tunit)\ar@{}[r]|\lin&\tnat\\
&&\qu^-  \ar@{-|>}[dl]\\
&\qu^+   \ar@{-|>}[dl]
        \ar@{-|>}[d]
        \ar@{.}@/^/[ur]\\
\qu^-    \ar@{-|>}[d]
        \ar@{.}@/^/[ur]&
\done^- \ar@{-|>}[dl]
        \ar@{-|>}[dr]
        \ar@{.}@/^/[u]\\
\done^+ \ar@{.}@/^/[u]&&
0^+\ar@{.}@/_/[uuu]
}
\]
\caption{Two augmentations of Figure \ref{fig:ex_aug1}}
\label{fig:ex_aug2}
\end{minipage}
\end{figure}
In Figure \ref{fig:ex_aug1}, we present a causal strategy,
corresponding to a linear version of Figure \ref{fig:comp_nalt}.
In this diagram and others further on, we
draw $\bsigma : A$ by picturing the event structure
$\bsigma$ with events displayed as their image through $\pr_\bsigma$.
Whenever possible, we keep the convention to draw moves under the
corresponding type component. The causal dependency
$\leq_\bsigma$ is pictured via its immediate dependency relation
$\imc_\bsigma$. As with arenas, we show immediate conflicts
(\emph{i.e.} not inherited) as wiggly lines.
As in plays, we denote by dotted lines the
immediate causal dependency in the arena $A$: for $s \in \ev{\bsigma}$,
either $\pr_\sigma(s)$ is minimal in $A$, or there is a unique $s' \in
\ev{\bsigma}$ such that $\pr_\sigma(s') \imc_A \pr_\sigma(s)$; in which
case the diagram has a dotted line between the events representing $s$
and $s'$. Borrowing earlier terminology, we refer to $s'$ as the
\textbf{justifier} of
$s$. Finally, the symmetry $\tilde{\bsigma}$ is not shown at all (of
course, for Figure \ref{fig:ex_aug1} it would remain trivial).
Indeed it is hard to represent, but also we regard it as not being part
of the pertinent operational structure: its mere existence witnesses
uniformity.

Figure \ref{fig:ex_aug1} presents, in \emph{one} diagram, the full
behaviour of the program of Figure \ref{fig:comp_nalt} under linear
execution contexts. Drawing a strategy fully in this way is sometimes
challenging. It is often
convenient to refer to -- and draw -- consistent fragments of a causal
strategy: an \textbf{augmentation} of $\bsigma$ is any $(x,
\leq_x)$ where $x \in \conf{\bsigma}$ and $\leq_x$ is the partial order
inherited from $\leq_\bsigma$. For instance
the two maximal augmentations of Figure \ref{fig:ex_aug1} yield the
diagrams of Figure \ref{fig:ex_aug2}. In this case, the two
configurations correspond to the two resolutions of the date race
described with Figure \ref{fig:comp_nalt}: if the write wins, the read
yields $1$ and depends on $\qu^-$ (which triggered the write). If the
read wins, we read $0$ and the write acknowledgment depends on $\done^-$
(which triggered the read).
However, this representation of a strategy via its augmentations is
partial: it
forgets the explicit non-deterministic branching.

\subsubsection{Non-alternating unfolding}\label{subsubsec:gen_nalt}
Causal strategies generate non-alternating strategies:

\begin{prop}\label{prop:def_naltstrat}
Consider $A$ a $-$-arena, and $\bsigma : A$ a causal strategy.

The \textbf{non-alternating unfolding} of $\bsigma$ is (with
$\pr_\bsigma$ applied to plays move-by-move):
\[
\NAltStrat(\bsigma) = \pr_\bsigma(\NAlt(\bsigma))\,.
\]

Thus defined, $\NAltStrat(\bsigma) : A$ is a non-alternating strategy on
$A$.
\end{prop}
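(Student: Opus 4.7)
The plan is to verify, in order, that $\pr_\bsigma$ maps $\NAlt(\bsigma)$ into $\NAlt(A)$ and then that $\NAltStrat(\bsigma)$ satisfies each of the four clauses of Definition~\ref{def:nalt_strat}. The inclusion uses only structural properties of $\bsigma$: \emph{rule-abiding} sends prefix-configurations of $\bsigma$ to configurations of $A$, preserving \emph{valid}; \emph{locally injective} preserves \emph{non-repetitive}; and the \emph{negative} condition keeps any first projected move negative. Then \emph{non-empty} and \emph{prefix-closed} hold immediately, the empty play lifting trivially and prefixes of elements of $\NAlt(\bsigma)$ projecting move-by-move. For \emph{receptive}, given $s = \pr_\bsigma(s^\star) \in \NAltStrat(\bsigma)$ and an extension $sa^- \in \NAlt(A)$, we have $\pr_\bsigma(\ev{s^\star}) = \ev{s} \vdash_A a^-$, so the \emph{receptive} clause on causal strategies (Definition~\ref{def:caus_strat}) supplies a unique $\ev{s^\star} \vdash_\bsigma c^-$ with $\pr_\bsigma(c) = a$; the play $s^\star c \in \NAlt(\bsigma)$ projects to $sa$.

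The main obstacle is \emph{courteous}, since it is the only clause that requires a genuine rearrangement of an underlying witness in $\bsigma$. Given $sabt \in \NAltStrat(\bsigma)$ with $sb \in \NAlt(A)$ and either $\pol(b) = -$ or $\pol(a) = +$, lift move-by-move to $s^\star a^\star b^\star u^\star \in \NAlt(\bsigma)$. It suffices to show $a^\star \not<_\bsigma b^\star$: then $\ev{s^\star} \cup \{b^\star\}$ is down-closed in $\bsigma$ and consistent, sitting inside the configuration $\ev{s^\star} \cup \{a^\star, b^\star\}$, so $s^\star b^\star a^\star u^\star$ shares its prefix-configurations with $s^\star a^\star b^\star u^\star$ past step $|s^\star|+2$, lies in $\NAlt(\bsigma)$, and projects to $sbat$. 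The first-move condition of plays is preserved: if $s^\star$ is empty then $a^\star$ is the first move of a non-empty play, hence negative, and the polarity hypothesis then forces $\pol(b) = -$, so $b^\star$ is also negative.

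Suppose for contradiction $a^\star <_\bsigma b^\star$, and pick an edge of a covering chain between them: $a^\star \imc_\bsigma e$ with $e \leq_\bsigma b^\star$ when $\pol(a) = +$, or $e \imc_\bsigma b^\star$ with $a^\star \leq_\bsigma e$ when $\pol(b) = -$. In either case the polarity hypothesis licenses \emph{courteous} of Definition~\ref{def:caus_strat} on this edge, yielding respectively $a \imc_A \pr_\bsigma(e)$ or $\pr_\bsigma(e) \imc_A b$. A short case analysis on whether $e \in \{a^\star, b^\star\}$ or $e \in \ev{s^\star}$, combined with down-closure of the configuration $\ev{s} \cup \{b\}$ in $A$ (obtained from $sb \in \NAlt(A)$) and of $\ev{s^\star}$ in $\bsigma$, forces $a \in \ev{s}$ in every case. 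This contradicts \emph{non-repetitive} on the prefix $sa$, completing the verification.
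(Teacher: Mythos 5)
Your proof is correct and follows essentially the same route as the paper's: verify each clause of Definition~\ref{def:nalt_strat}, using \emph{receptivity} of $\bsigma$ for the receptive clause and a swap of two adjacent moves in a lifted witness for \emph{courteous}, justified by showing the witness carries no causal dependency $a^\star <_\bsigma b^\star$. The only (harmless) difference is bookkeeping in that last step: the paper first reduces a hypothetical dependency to an immediate link $a^\star \imc_\bsigma b^\star$ and projects it via courtesy of $\bsigma$, whereas you apply courtesy to a boundary edge of a covering chain and case-split on where its endpoint lies — both arguments rest on the same ingredients (courtesy, down-closure of configurations, non-repetitiveness, validity of $sb$).
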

\begin{proof}
\emph{Non-empty} and \emph{prefix-closed} are obvious. For
\emph{receptive}, take $s \in \NAltStrat(\bsigma)$ s.t. $sa^- \in
\NAlt(A)$. By definition, there is $t \in \NAlt(\bsigma)$ s.t. $s =
\pr_\bsigma(t)$. Thus $t$, written 
\[ 
t = t_1 \dots t_n \in \NAlt(\bsigma)\,, 
\] 
is such that for all $1\leq i \leq n$, $\{t_1,
\dots, t_i\} \in \conf{\bsigma}$, and is also \emph{non-repetitive} and
\emph{negative}. Let us write $x = \{t_1, \dots, t_n\}$.  Then,
$\pr_\bsigma(x) = \{s_1, \dots, s_n\} \in \conf{A}$, and since $sa^- \in
\NAlt(A)$ we have $\pr_\bsigma(x) \vdash_A a^-$. By \emph{receptivity}
of $\bsigma$, there is (a unique) $m \in \ev{\bsigma}$ such that $x
\vdash_\bsigma m$ and $\pr_\bsigma(m) = a$. It is then direct that $t m
\in \NAlt(\bsigma)$, witnessing $sa \in \NAltStrat(\bsigma)$ as
required.
 
For \emph{courteous}, consider $s_1abs_2\in \NAltStrat(\bsigma)$ s.t.
$\pol(a) = +$ or $\pol(b) = -$, and $s_1 b \in \NAlt(A)$. By
definition, there is $t_1 m n t_2 \in \NAlt(\bsigma)$ s.t. $s_1 a b
s_2 = \pr_\bsigma(t_1 m n t_2)$. We claim that $x = \ev{t_1 n} \in
\conf{\bsigma}$. We know that $\ev{t_1 m n} \in \conf{\bsigma}$; so $x$
is consistent. If it is not down-closed, necessarily $m \imc_\bsigma n$.
So, by \emph{courtesy} of $\bsigma$, $a \imc_A b$; contradicting
$s_1 b \in \NAlt(A)$. Thus $\ev{t_1 n} \in \conf{\bsigma}$ 
from which we deduce $t_1 n m t_2 \in \NAlt(\bsigma)$, so $s_1 b a s_2 \in
\NAltStrat(\bsigma)$.
\end{proof}

For instance, the non-alternating play in Figure \ref{fig:comp_nalt} is
in the unfolding of the causal strategy in Figure \ref{fig:ex_aug1}.
This extraction of a non-alternating strategy is an instance of the
usual relationship between interleaving and ``truly concurrent'' models
for concurrency. In this paper this relationship will in particular
allow us to import well-bracketing from $\NNegAltwb$. 

\begin{defi}\label{def:caus_wb_nalt}
Consider $\bsigma : A$ a causal strategy on $-$-arena $A$.

We say that $\bsigma$ is \textbf{well-bracketed} if $\NAltStrat(\bsigma)
: A$ is well-bracketed.
\end{defi}

\subsection{A Category of Causal Strategies} We now start building the
categorical operations on causal strategies, aiming at a Seely category
$\CG$. We first focus on composition.

For $-$-arenas $A$ and $B$, a causal
strategy \textbf{from $A$ to $B$} is a causal strategy
\[
\bsigma : A^\perp \parallel B\,,
\]
in the sequel we also write $A \vdash B$ for $A^\perp \parallel B$. This
is unlike for $\NegAlt$ and $\NNegAlt$ introduced earlier, for which
morphisms from $A$ to $B$ were defined as strategies on $A\lin B$. We do
this to keep close to \cite{cg2}. When linking with $\NegAlt$ and
$\NNegAlt$ we shall deal with this  mismatch, but that will not cause us
too much trouble\footnote{Plays on $A\lin B$ carry more information than
on $A\vdash B$, namely the justifier for initial moves in $A$. 
With causal strategies, that information may be read back from
the causal structure. See Section~\ref{subsubsec:cg_mon_clos}.}.  

Composition of causal strategies is more elaborate than for play-based
strategies. We define it in several stages. Fix two causal
(pre)strategies $\bsigma : A\vdash B$ and $\btau : B \vdash C$.

\subsubsection{Synchronization}
For configurations $x^\bsigma \in \conf{\bsigma}, x^\btau \in
\conf{\btau}$, as a convention we write
\[
\pr_\bsigma(x^\bsigma) = x^\bsigma_A \parallel x^\bsigma_B \in
\conf{A\vdash B}\,,
\qquad
\qquad
\pr_\btau(x^\btau) = x^\btau_B \parallel x^\btau_C \in \conf{B \vdash
C}\,,
\]
for the corresponding projections to the game. In defining composition,
the first stage is to capture when such configurations $x^\bsigma \in
\conf{\bsigma}$ and $x^\btau \in \conf{\btau}$ may successfully
\emph{synchronise}.

\begin{defi}\label{def:caus_comp}
Consider two configurations $x^\bsigma \in \conf{\bsigma}$ and $x^\btau
\in \conf{\btau}$.
They are \textbf{causally compatible} if \emph{(1)} they are
\textbf{matching}: $x^\bsigma_B = x^\btau_B = x_B$; and \emph{(2)} if
the composite bijection
\[
\varphi_{x^\bsigma, x^\btau}
\quad
:
\quad
x^\bsigma \parallel x^\tau_C 
\quad
\stackrel{\pr_\bsigma \parallel x^\btau_C}{\simeq} 
\quad
x^\bsigma_A \parallel x_B \parallel x^\btau_C 
\quad
\stackrel{x^\bsigma_A \parallel \pr_\btau^{-1}}{\simeq}
\quad
x^\bsigma_A \parallel x^\btau\,,
\]
using local injectivity of $\pr_\bsigma$ and
$\pr_\btau$, is \textbf{secured}, in the sense that the relation
\[
(m, n) \vartriangleleft (m', n') 
\qquad
\Leftrightarrow
\qquad
m <_{\bsigma \parallel C} m'
\quad
\vee
\quad
n <_{A \parallel \btau} n'\,,
\]
defined on (the graph of) $\varphi_{x^\bsigma, x^\btau}$ by importing
causal constraints of $\bsigma$ and $\btau$, is acyclic.
\end{defi}

Two matching configurations $x^\bsigma \in \conf{\bsigma}$, $x^\btau \in \conf{\btau}$
agree on the state reached in $B$. By local injectivity of $\pr_\bsigma$
and $\pr_\btau$, this induces a bijection as above, 
thought of as the induced synchronization between events of $x^\bsigma$
and $x^\btau$ that match in $B$. But this is not enough to capture a
sensible notion of execution: some matching pairs might not be
\emph{reachable}, in the sense that $\bsigma$ and $\btau$ impose
incompatible constraints as to the execution order.
\begin{figure}
\begin{minipage}{.45\linewidth}
\[
\raisebox{50pt}{$
\xymatrix@R=5pt@C=20pt{
\tunit  \ar@{}[r]|\lin& \tunit\\
&\qu^-  \ar@{-|>}[dl]\\
\qu^+   \ar@{.}@/^/[ur]
        \ar@{-|>}[d]\\
\done^- \ar@{.}@/_/[u]
        \ar@{-|>}[dr]\\
&\done^+\ar@{.}@/_/[uuu]
}$}
\quad
\text{vs}
\quad
\raisebox{50pt}{$
\xymatrix@R=5pt@C=20pt{
(\tunit \ar@{}[r]|\lin&\tunit)\ar@{}[r]|\vdash&\tnat\\
&&\qu^-  \ar@{-|>}[dl]\\
&\qu^+   \ar@{-|>}[dl]
        \ar@{-|>}[d]\\
\qu^-    \ar@{-|>}[d]
        \ar@{.}@/^/[ur]
        \ar@{-|>}[drr]&
\done^- \ar@{-|>}[dr]
        \ar@{.}@/^/[u]\\
\done^+ \ar@{.}@/^/[u]&&
1^+\ar@{.}@/_/[uuu]
}$}
\]
\caption{Matching, secured}
\label{fig:matching1}
\end{minipage}
\hfill
\begin{minipage}{.5\linewidth}
\[
\raisebox{50pt}{$
\xymatrix@R=5pt@C=20pt{
\tunit  \ar@{}[r]|\lin& \tunit\\
&\qu^-  \ar@{-|>}[dl]\\
\qu^+   \ar@{.}@/^/[ur]
        \ar@{-|>}[d]\\
\done^- \ar@{.}@/_/[u]
        \ar@{-|>}[dr]\\
&\done^+\ar@{.}@/_/[uuu]
}$}
\quad
\text{vs}
\quad
\raisebox{50pt}{$
\xymatrix@R=5pt@C=20pt{
(\tunit \ar@{}[r]|\lin&\tunit)\ar@{}[r]|\vdash&\tnat\\
&&\qu^-  \ar@{-|>}[dl]\\
&\qu^+   \ar@{-|>}[dl]
        \ar@{-|>}[d]\\
\qu^-    \ar@{-|>}[d]
        \ar@{.}@/^/[ur]&
\done^- \ar@{-|>}[dr]
        \ar@{.}@/^/[u]
        \ar@{-|>}[dl]\\
\done^+ \ar@{.}@/^/[u]&&
0^+\ar@{.}@/_/[uuu]
}$}
\]
\caption{Matching, non-secured}
\label{fig:matching2}
\end{minipage}
\end{figure}
To illustrate this we show in Figures \ref{fig:matching1}
and \ref{fig:matching2} two attempted synchronizations between
configurations of the strategy of Figure \ref{fig:ex_aug1} and the
causal strategy for the identity $\lambda x^\tunit.\,x$. In both cases,
the configurations are matching. In Figure \ref{fig:matching1}, the
synchronization is successful and yields causally compatible pairs of
configurations. However, in Figure \ref{fig:matching2} the  induced
bijection is not \emph{secured}: the two strategies impose opposite
constraints as to the order in which the two $\done$ moves are to be
played.  Thus, this synchronization fails.  For us, this will entail
that the identity $\lambda x^\tunit.\,x$ may only synchronize
successfully with the augmentation of the program of Figure
\ref{fig:comp_nalt} appearing in Figure \ref{fig:matching1} -- so that
the only final result is $1$.

\subsubsection{Interaction}
But we must present the interaction of $\bsigma$ and $\btau$ as an event
structure. More specifically, it should be an event structure with
symmetry along with a display map:

\begin{defi}
A \textbf{pre-interaction} on $A, B, C$ is an ess $\bmu =
(\ev{\bmu},
\leq_\bmu, \conflict_\bmu, \tilde{\bmu})$ with
\[
\pr : \ev{\bmu} \to \ev{A \parallel B \parallel C}
\]
a display map subject to the following conditions:
\[
\begin{array}{rl}
\text{\emph{rule-abiding:}} & \text{for all $x \in \conf{\bmu}$,
$\pr(x) \in \conf{A\parallel B \parallel C}$,}\\
\text{\emph{locally injective:}} & \text{for all $s_1, s_2 \in x \in
\conf{\bmu}$, if $\pr(s_1) = \pr(s_2)$ then $s_1 =
s_2$,}\\
\text{\emph{symmetry-preserving:}} &
\text{for all $\theta \in \tilde{\bmu}$, $\pr(\theta) \in
\tilde{A\parallel B \parallel C}$,}
\end{array}
\]
\emph{i.e.} $\pr : \bmu \to A\parallel B \parallel C$ is a \emph{map of
event structures with symmetry}.
\end{defi}

An \textbf{isomorphism} between pre-interactions $\bmu$, $\bnu$ on $A,
B, C$ is an isomorphism $f : \bmu \iso \bnu$ in the category of event
structures with symmetry, commuting with the display maps, \emph{i.e.}
$\pr_{\bnu} \circ f = \pr_\bmu$.  The \emph{interaction} between
$\bsigma$ and $\btau$ is a pre-interaction whose configurations
correspond \emph{exactly} with pairs of causally compatible
configurations $x^\bsigma \in \conf{\bsigma}, x^\btau \in
\conf{\btau}$:

\begin{prop}\label{prop:main_interaction}
There is a pre-interaction $\btau \inter \bsigma$, the
\textbf{interaction} of $\bsigma$ and $\btau$, with
\[
(- \inter -) 
:
\{(x^\btau, x^\bsigma) \in \conf{\btau} \times \conf{\bsigma} \mid   
\text{$x^\bsigma$ and $x^\btau$ are causally compatible}\} 
\simeq
\conf{\btau \inter \bsigma}
\]
an order-iso (with causally compatible pairs ordered by
component-wise inclusion) satisfying
\[
\pr_{\btau\inter \bsigma}(x^\btau \inter x^\bsigma)
=
x^\bsigma_A \parallel x_B \parallel x^\btau_C
\]
for all $x^\bsigma \in \conf{\bsigma}$ and $x^\btau \in \conf{\btau}$
causally compatible.
\end{prop}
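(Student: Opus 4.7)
The plan is to construct $\btau \inter \bsigma$ as a pullback in the category of event structures with symmetry, taking the fibered product of $\bsigma \parallel C$ and $A \parallel \btau$ over $A \parallel B \parallel C$. The desired order-isomorphism will then emerge from a general correspondence between configurations of such pullbacks and pairs of configurations that synchronize, with the securedness condition of Definition \ref{def:caus_comp} characterising exactly when a matching pair does synchronise successfully.

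Concretely, the events of $\btau \inter \bsigma$ are of three kinds: events of $\bsigma$ whose display lies in $A$, events of $\btau$ whose display lies in $C$, and synchronisation pairs $(s,t) \in \ev{\bsigma} \times \ev{\btau}$ satisfying $\pr_\bsigma(s) = \pr_\btau(t) \in \ev{B}$; the display map to $A \parallel B \parallel C$ is the obvious projection. Causality is taken as the transitive closure of the constraints inherited from $\leq_\bsigma$ and $\leq_\btau$ via the two projections; conflict is the smallest relation containing those inherited from $\bsigma$ and $\btau$, augmented with minimal conflicts generated on finite subsets whose induced bijection fails to be secured; and the symmetry is generated by pairs $(\theta_\bsigma, \theta_\btau) \in \tilde\bsigma \times \tilde\btau$ that agree on their projection to $B$. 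The three pre-interaction axioms (\emph{rule-abiding}, \emph{locally injective}, \emph{symmetry-preserving}) then follow by construction, using local injectivity of $\pr_\bsigma$ and $\pr_\btau$ for the second.

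For the order-isomorphism, I would define $x^\btau \inter x^\bsigma$ as the set of events of $\btau \inter \bsigma$ displayed inside $x^\bsigma_A \parallel x_B \parallel x^\btau_C$: by local injectivity, these events are in natural bijection with the graph of $\varphi_{x^\bsigma, x^\btau}$, and acyclicity of $\vartriangleleft$ guarantees this set is down-closed and consistent in the pullback, hence a configuration. Conversely, any $x \in \conf{\btau \inter \bsigma}$ projects to $x^\bsigma \in \conf{\bsigma}$ and $x^\btau \in \conf{\btau}$, which are matching by construction and whose induced bijection is automatically secured because any cycle of $\vartriangleleft$ would yield a cycle of causal dependencies inside $x$. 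These assignments are visibly inverse and monotone under inclusion. The main obstacle is the careful definition of conflict so that consistency in $\btau \inter \bsigma$ reflects exactly securedness of the matching pair, and so that the symmetry closure remains well-behaved; this is the delicate part of the pullback construction carried out in \cite{lics11} and adapted to the thin symmetry setting in \cite{cg2}, which we invoke rather than reprove here.
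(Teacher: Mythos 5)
Your high-level route — obtain $\btau \inter \bsigma$ as the pullback of $\pr_\bsigma \parallel C$ and $A \parallel \pr_\btau$ over $A \parallel B \parallel C$ and then read the order-isomorphism off it — is indeed the paper's route (it invokes the pullback of \cite{cg2} and proves the correspondence via its universal property). But the concrete description you give of that pullback is wrong, and your forward map relies on it. Events of the interaction are \emph{not} ``events in $A$, events in $C$, and synchronisation pairs $(s,t)$ with $\pr_\bsigma(s)=\pr_\btau(t)$'': an event of the pullback must also record \emph{how} $s$ and $t$ were jointly reached, and the paper's Figure \ref{fig:ex_interaction} is a direct counterexample, exhibiting two distinct interaction events with the same pair of projections but incompatible causal histories. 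With events taken as bare synchronisation pairs, the ``transitive closure'' of the projected causalities is not a partial order in the presence of deadlocks (a cycle $p_1 \vartriangleleft \dots \vartriangleleft p_n \vartriangleleft p_1$ gives $p_1 \leq p_1$ through distinct events), and your proposed repair — adding conflicts for ``finite subsets whose induced bijection fails to be secured'' — is not available in the prime-event-structure-with-\emph{binary}-conflict setting of Definition \ref{def:es}: a deadlock of length $>2$ is a joint inconsistency not expressible pairwise, and an event reachable along two incompatible histories gets conflated into a single event depending on conflicting causes. This is precisely why the actual construction takes as events the causally compatible pairs $(x^\bsigma, x^\btau)$ whose induced order $\leq_{x^\bsigma,x^\btau}$ has a top element, as the paper recalls; you cannot both use your simplified event set and ``invoke'' the \cite{cg2} pullback, since they disagree.

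The same issue breaks the forward half of your order-isomorphism. Defining $x^\btau \inter x^\bsigma$ as ``the set of events of $\btau \inter \bsigma$ displayed inside $x^\bsigma_A \parallel x_B \parallel x^\btau_C$'' is not correct even for the genuine pullback: local injectivity of $\pr_{\btau\inter\bsigma}$ only holds \emph{within} a configuration, so this set can contain several (pairwise conflicting) events with the same display — either the duplicated synchronised events of Figure \ref{fig:ex_interaction}, or events of $\bsigma$ (resp.\ $\btau$) outside $x^\bsigma$ (resp.\ $x^\btau$) whose display happens to lie in the position — and it is then neither in bijection with the graph of $\varphi_{x^\bsigma,x^\btau}$ nor a configuration. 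The paper instead turns the graph of $\varphi_{x^\bsigma,x^\btau}$, ordered by the transitive closure of $\vartriangleleft$ (a partial order exactly because the pair is secured), into an event structure equipped with maps to $\bsigma \parallel C$ and $A \parallel \btau$, and obtains the configuration $x^\btau \inter x^\bsigma$ as the image of the mediating map given by the universal property of the pullback; this is the step you need to restore. Your converse direction — that the projections of a configuration are causally compatible because a $\vartriangleleft$-cycle would reflect to a causal cycle inside the configuration, using that maps of event structures reflect causality locally — is correct and matches the paper's Lemma \ref{lem:proj_caus_comp}.
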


In particular, any $z \in \conf{\btau \inter \bsigma}$ is
written uniquely as $x^\btau \inter x^\bsigma$ for $x^\bsigma \in
\conf{\bsigma}$ and $x^\btau \in \conf{\btau}$. Thus, $\conf{\btau
\inter \bsigma}$ may be regarded as the subset of $\conf{\bsigma} \times
\conf{\btau}$ restricted to those of the matching pairs which cause no
deadlock. In fact, this property \emph{almost} suffices to characterise
the interaction uniquely -- to complete the picture, we must also
consider \emph{symmetries}. Because display maps preserve symmetry,
for $\theta^\bsigma \in \tilde{\bsigma}$ and $\theta^\btau \in
\tilde{\btau}$, 
\[
\pr_\bsigma(\theta^\bsigma) = \theta^\bsigma_A \parallel
\theta^\bsigma_B\,,
\qquad
\pr_{\btau}(\theta^\btau) = \theta^\btau_B \parallel \theta^\btau_C\,,
\]
and we can say that $\theta^\bsigma$ and $\theta^\btau$ are
\textbf{matching} if $\theta^\bsigma_B = \theta^\btau_B$, and
\textbf{causally compatible} if $\dom(\theta^\bsigma)$,
$\dom(\theta^\btau)$ are causally compatible -- or, equivalently,
$\cod(\theta^\bsigma)$ and $\cod(\theta^\btau)$ are.

Taking into account symmetry, we may strengthen Proposition
\ref{prop:main_interaction} to:

\begin{restatable}{prop}{charinter}\label{prop:main_interaction2}
There is a pre-interaction $\btau \inter \bsigma$, unique
up to iso, such that there are
\[
\begin{array}{rcrcl}
(- \inter -) 
&:&
\{(x^\btau, x^\bsigma) \in \conf{\btau} \times \conf{\bsigma} \mid
\text{$x^\bsigma, x^\btau$ causally compatible}\} 
&\simeq&
\conf{\btau \inter \bsigma}\\
(- \inter -)
&:&
\{(\theta^\btau, \theta^\bsigma) \in \tilde{\btau} \times
\tilde{\bsigma} \mid \text{$\theta^\bsigma, \theta^\btau$ 
causally compatible}\}
&\simeq&
\tilde{\btau \inter \bsigma}
\end{array}
\]
order-isomorphisms commuting with $\dom$ and $\cod$, and satisfying
\[
\pr_{\btau\inter \bsigma}(\theta^\btau \inter \theta^\bsigma)
=
\theta^\bsigma_A \parallel \theta_B \parallel \theta^\btau_C
\]
for all $\theta^\bsigma \in \tilde{\bsigma}$ and $\theta^\btau \in
\tilde{\btau}$
causally compatible.
\end{restatable}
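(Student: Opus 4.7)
\medskip

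\noindent\textbf{Proof plan.} The plan is to build $\btau \inter \bsigma$ on top of the pre-interaction provided by Proposition \ref{prop:main_interaction}, which already supplies the underlying event structure, display map, and the order-iso on configurations. The only remaining task is to equip it with the right isomorphism family $\tilde{\btau \inter \bsigma}$, verify the three axioms of Definition \ref{def:isofam}, check the required commutation with $\pr_{\btau \inter \bsigma}$, and establish uniqueness up to iso.

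\medskip

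\noindent First, I would define $\theta^\btau \inter \theta^\bsigma$ as a bijection of configurations of the interaction. Given causally compatible $\theta^\bsigma : x^\bsigma \sym_\bsigma y^\bsigma$ and $\theta^\btau : x^\btau \sym_\btau y^\btau$, Proposition \ref{prop:main_interaction} gives us $x^\btau \inter x^\bsigma, y^\btau \inter y^\bsigma \in \conf{\btau \inter \bsigma}$. Using the fact that $\pr_\bsigma$ and $\pr_\btau$ are locally injective and that $\pr_\bsigma(\theta^\bsigma)$, $\pr_\btau(\theta^\btau)$ agree on $B$ by the matching assumption, the two symmetries glue into a well-defined bijection from $x^\btau \inter x^\bsigma$ to $y^\btau \inter y^\bsigma$; I would take $\tilde{\btau \inter \bsigma}$ to be the set of all such bijections. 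Commutation with $\pr_{\btau \inter \bsigma}$ and with $\dom, \cod$ is then immediate from the construction, as is injectivity of $(-\inter -)$ on the pairs of symmetries (by local injectivity of $\pr_\bsigma, \pr_\btau$ separately).

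\medskip

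\noindent Next, I would verify the three isomorphism-family axioms. The \emph{groupoid} axiom follows component-wise from the groupoid structure of $\tilde{\bsigma}$ and $\tilde{\btau}$. \emph{Restriction} follows because if $z' \subseteq x^\btau \inter x^\bsigma$, then $z'$ itself decomposes as $x'^\btau \inter x'^\bsigma$ with $x'^\bsigma \subseteq x^\bsigma$ and $x'^\btau \subseteq x^\btau$, and one restricts $\theta^\bsigma, \theta^\btau$ separately, the restrictions remaining matching and causally compatible. The genuine work is in \emph{extension}: given a symmetry $\theta^\btau \inter \theta^\bsigma$ on $x^\btau \inter x^\bsigma$ and an enlarged $w^\btau \inter w^\bsigma \supseteq x^\btau \inter x^\bsigma$, one must produce compatible extensions of $\theta^\bsigma$ and $\theta^\btau$ that remain \emph{matching} on $B$ and produce a \emph{causally compatible} pair. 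This is the main obstacle: the extensions on each side, taken independently, may well disagree on how to reindex the new events in $B$. The plan is an inductive argument following the causal order on $(w^\btau \inter w^\bsigma) \setminus (x^\btau \inter x^\bsigma)$: events in $A$ are handled purely by extending $\theta^\bsigma$, events in $C$ purely by extending $\theta^\btau$, and for each new synchronized event in $B$ one chooses the extension on one side (say $\bsigma$) and then uses $\sim$-receptivity for Opponent events and \emph{thin} for Player events in $\btau$ to produce a matching extension on the other side. Acyclicity of the induced synchronization (i.e.\ causal compatibility of the enlarged pair) is preserved because the secured relation on the target is order-isomorphic, via the chosen extensions, to the one on the source.

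\medskip

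\noindent Finally, uniqueness up to isomorphism of $\btau \inter \bsigma$ is inherited from that of Proposition \ref{prop:main_interaction} together with the fact that the characterization pins down $\tilde{\btau \inter \bsigma}$ completely: any other pre-interaction satisfying the two order-isos must have exactly the bijections of the form $\theta^\btau \inter \theta^\bsigma$ as its symmetries. The unique iso provided by Proposition \ref{prop:main_interaction} on the underlying event structures then automatically preserves and reflects symmetry. The only delicate step in the whole proof is the extension axiom; the rest is structural bookkeeping, and as suggested by the restatable tag I would expect the detailed verification to be relegated to the appendix.
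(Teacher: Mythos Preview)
Your route differs from the paper's. The paper does not build $\tilde{\btau\inter\bsigma}$ from scratch: it imports $\btau\inter\bsigma$ as a \emph{pullback} in the category of event structures with symmetry (Lemma~3.12 of \cite{cg2}), so the isomorphism family is already part of the construction and the axioms of Definition~\ref{def:isofam} come for free. The appendix proof then only \emph{characterises} the symmetries of the pullback as those bijections projecting to $\tilde{\bsigma\parallel C}$ and $\tilde{A\parallel\btau}$, which immediately gives the order-iso on symmetries. For uniqueness, the paper does not rely on Proposition~\ref{prop:main_interaction} (which asserts no uniqueness); instead it observes that any two candidates have order-isomorphic configurations \emph{and} symmetries, and invokes the representability Lemma~\ref{lem:map_ess_rep} to turn these order-isos into an iso of ess commuting with display maps. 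Your uniqueness sketch skips this step.

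Your direct approach is viable and more self-contained, but the extension argument as written has a gap. You propose to always extend on the $\bsigma$ side first and then use $\sim$-receptivity or \emph{thin} on $\btau$ to match. This fails when the new synchronised event is positive for $\btau$ (hence negative for $\bsigma$): extending $\theta^\bsigma$ by the extension axiom yields \emph{some} negative $s_2$ with \emph{some} display $(2,b_2)$, and \emph{thin} on $\btau$ then gives a unique positive $t_2$---but \emph{thin} says nothing about $\pr_\btau(t_2)$, so there is no reason it lands on $(1,b_2)$. The correct protocol is polarity-driven: for each synchronised event, extend first on the side for which it is \emph{positive} (via the extension axiom of that isomorphism family), read off the $B$-target, then force the other side to match via $\sim$-receptivity (the event being negative there). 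This is essentially what the pullback construction in \cite{cg2} packages; if you unfold it you are re-proving that the pullback of ess exists.
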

\begin{proof}
Follows from the characterisation of the interaction as a
\emph{pullback}, whose projections
\[
\xymatrix{
\bsigma \parallel C&
\btau \inter \bsigma
        \ar[l]_{\Pi_\bsigma}
        \ar[r]^{\Pi_\btau}&
A \parallel \btau
}
\]
are maps of event structures with symmetry \cite{cg2} -- see Appendix
\ref{app:interaction} for details.
\end{proof}

There is some redundancy in this statement: first,
the action of $(-\inter -)$ on configurations coincides with 
that on identity symmetries. Reciprocally, one can
actually prove that the action of $(- \inter -)$ on symmetries, \emph{if
it exists}, is uniquely determined by that on configurations -- so the
fact that $(-\inter -)$ extends to symmetries is property rather than
structure. Nevertheless, in the sequel, we often find convenient to
perform the constructions on symmetries explicitly.
Altogether, this characterises the interaction in terms of its
\emph{states} and \emph{symmetries}.

But there is also an alternative, \emph{event-based} view: an individual event
$m \in \ev{\btau \inter \bsigma}$ may be regarded as a synchronization
between its projections $\Pi_\bsigma(m) \in \ev{\bsigma \parallel C}$
and $\Pi_\btau(m) \in \ev{A \parallel \btau}$. But we
warn against the misleading idea that $m \in \ev{\btau \inter
\bsigma}$ is determined by these projections: intuitively, there is one
event in $\btau\inter \bsigma$ for each pair $(s, t)$ of synchronizable
events, \emph{and} each distinct way to reach $s$ and $t$ conjointly in
$\bsigma$ and $\btau$.
\begin{figure}
\[
\left(
\raisebox{40pt}{$
\scalebox{.85}{$
\xymatrix@R=5pt@C=-2pt{
&(\tbool\ar@{}[rrrr]|\lin&&&&
\tunit) \ar@{}[rrrr]|\vdash&&&&\tunit\\
&&&&&&&&&\qu^-
        \ar@{-|>}[dllll]\\
&&&&&\qu^+
        \ar@{-|>}[dllll]
        \ar@{-|>}[d]\\
&\qu^-  \ar@{-|>}[dl]
        \ar@{-|>}[dr]
        \ar@{.}@/^.1pc/[urrrr]
&&&&\done^-
        \ar@{.}@/^.1pc/[u]
        \ar@{-|>}[drrrr]\\
\ttrue^+\ar@{.}@/^.1pc/[ur]
        \ar@{~}[rr]&&
\tfalse^+
        \ar@{.}@/_.1pc/[ul]&&&&&&&
\done^+ \ar@{.}@/_/[uuu]
}$}$}
\right)
\inter
\left(
\raisebox{40pt}{$
\scalebox{.85}{$
\xymatrix@R=5pt@C=-2pt{
&\tbool \ar@{}[rrrr]|\lin&&&&\tunit\\
&&&&&\qu^-
        \ar@{-|>}[dllll]\\
&\qu^+  \ar@{.}@/^.1pc/[urrrr]
        \ar@{-|>}[dl]
        \ar@{-|>}[dr]\\
\ttrue^-\ar@{.}@/^.1pc/[ur]
        \ar@{~}[rr]
        \ar@{-|>}[drrrr]&&
\tfalse^-
        \ar@{.}@/_.1pc/[ul]
        \ar@{-|>}[drrrr]\\
&&&&\done^+
        \ar@{.}@/^/[uuur]&&
\done^+ \ar@{.}@/_/[uuul]
}
$}$}
\right)
=
\left(
\raisebox{40pt}{$
\scalebox{.85}{$
\xymatrix@R=-2pt@C=-2pt{
&\grey{\tbool}
        \ar@{}[rrrr]|{\grey{\lin}}&&&&
\grey{\tunit}
        \ar@{}[rrrr]|{\grey{\vdash}}&&&&
\tunit\\
&&&&&&&&&\qu^-
        \ar@[grey]@{-|>}[dllll]\\
&&&&&\grey{\qu^\labr}
        \ar@[grey]@{-|>}[dllll]\\
&\grey{\qu^\labl}
        \ar@[grey]@{.}@/^.1pc/[urrrr]
        \ar@[grey]@{-|>}[dl]
        \ar@[grey]@{-|>}[dr]\\
\grey{\ttrue^\labr}
        \ar@[grey]@{.}@/^.1pc/[ur]
        \ar@[grey]@{~}[rr]
        \ar@[grey]@{-|>}[drrrr]&&
\grey{\tfalse^\labr}
        \ar@[grey]@{.}@/_.1pc/[ul]
        \ar@[grey]@{-|>}[drrrr]\\
&&&&\grey{\done^\labl}
        \ar@[grey]@{.}@/^/[uuur]
        \ar@[grey]@{-|>}[drrrr]&&
\grey{\done^\labl}
        \ar@[grey]@{.}@/_/[uuul]
        \ar@[grey]@{-|>}[drrrr]\\
&&&&&&&&\done^\labr
        \ar@{.}@/^/[uuuuur]&&
\done^\labr
        \ar@{.}@/_/[uuuuul]
}
$}$}
\right)
\]
\caption{Example of an interaction}
\label{fig:ex_interaction}
\end{figure}
A simple example appears in Figure
\ref{fig:ex_interaction} (ignoring for now the $\labl / \labr$
annotation and the part in grey): the final two copies of $\done$ have
the same
projections, but a different causal history. Though we shall not
unfold the concrete construction of the interaction \cite{cg2},
it might nonetheless help the reader to have an idea of what its
events are concretely defined to be. For
$x^\bsigma \in \conf{\bsigma}$ and $x^\btau \in \conf{\btau}$
causally compatible, the reflexive transitive closure of
$\vartriangleleft$ (see Definition \ref{def:caus_comp}) yields a
partial order $\leq_{x^\bsigma, x^\btau}$ on (the graph of)
$\varphi_{x^\bsigma, x^\btau}$. The \emph{events} of $\btau \inter
\bsigma$ are then precisely the causally compatible pairs $(x^\bsigma,
x^\btau)$ such that $\leq_{x^\bsigma, x^\btau}$ has a top element: the
pair $(\Pi_\bsigma(m), \Pi_\btau(m))$.

In the sequel, we shall only reason on the interaction through the proxy
of Proposition \ref{prop:main_interaction2} and forthcoming lemmas
characterizing
immediate causality in the interaction. However, to ease
the flow of the exposition, those are postponed to Section~\ref{subsubsec:caus_conf_int}.

\subsubsection{Composition} Following the traditional methodology of
game semantics, composition is defined from interaction via
\emph{hiding}. We first briefly analyse the \emph{components}
of interactions.

The projections $\Pi_\bsigma$ and $\Pi_\btau$ 
project any event of $\btau \inter \bsigma$ to a matching pair of an
event of $\bsigma \parallel C$ and an event of $A \parallel \btau$.
These projections help us classify every $p \in \ev{\btau \inter
\bsigma}$ into:
\[
\begin{array}{rl}
\text{\emph{(1)}} &
\text{$\Pi_\bsigma(p) = (1, m)$ with $m \in \ev{\bsigma}$, and
$\Pi_\btau(p) = (1, a)$ with $a \in \ev{A}$,}\\
\text{\emph{(2)}} &
\text{$\Pi_{\bsigma}(p) = (1, m)$ with $m \in \ev{\bsigma}$, and
$\Pi_\btau(p) = (2, n)$ with $n \in \ev{\btau}$,}\\
\text{\emph{(3)}} &
\text{$\Pi_\bsigma(p) = (2, c)$ with $c \in \ev{C}$, and $\Pi_\btau(p)
=
(2, n)$ with $n \in \ev{\btau}$.}
\end{array}
\]

In case \emph{(1)}, the only relevant projection is $\Pi_\bsigma(p) =
(1, m)$ as $\Pi_\btau(p) = \pr_\bsigma(m)$. We write
$p_\bsigma = m$ and $p_\btau$ is undefined, and we say that $p$
\textbf{occurs in $A$}. In case \emph{(3)}, the only relevant projection
is $\Pi_\btau(p) = (2, n)$ as $\Pi_\bsigma(p) =
\pr_\btau(n)$.  We write $p_\btau = n$ and $p_\bsigma$ is undefined, and
we say that $p$ \textbf{occurs in $C$}. Finally, in case \emph{(2)} the
two projections $\Pi_\bsigma(p) = (1, m)$ and $\Pi_\btau(p) = (2, n)$
are relevant, but we must have $\pr_\bsigma(m) = (2, b)$ and
$\pr_\btau(p) = (1, b)$ for some $b \in \ev{B}$. We write $p_\bsigma =
m$, $p_\btau = n$, we say that $p$ \textbf{occurs in $B$} and that $p$
is \textbf{synchronized}.

The definition of composition consists simply in removing all
synchronized events:

\begin{defi}
The \textbf{composition} of $\bsigma : A \vdash B$ and $\btau : B \vdash
C$ comprises components:
\[
\begin{array}{rcl}
\ev{\btau \odot \bsigma} &=& \{p \in \ev{\btau \inter \bsigma} \mid
\text{$p$ occurs in $A$ or $C$}\}\,,\\
p_1 \leq_{\btau \odot \bsigma} p_2 &\Leftrightarrow& p_1 \leq_{\btau
\inter \bsigma} p_2\,,\\
p_1 \conflict_{\btau \odot \bsigma} p_2 &\Leftrightarrow& p_1
\conflict_{\btau \inter \bsigma} p_2\,,\\
\theta : x \sym_{\btau \odot \bsigma} y &\Leftrightarrow& \exists \theta
\subseteq \theta' : x' \sym_{\btau \inter \bsigma} y'\,.
\end{array}
\]
with display map $\pr_{\btau\odot \bsigma} : \ev{\btau \odot \bsigma}
\to \ev{A \vdash C}$ obtained as restriction of $\pr_{\btau \inter
\bsigma}$.
\end{defi}

The composition of prestrategies $\bsigma : A \vdash B$ and $\btau : B
\vdash C$ gives data
\[
(\ev{\btau \odot \bsigma}, \leq_{\btau \odot \bsigma}, \conflict_{\btau
\odot \bsigma}, \tilde{\btau \odot \bsigma}, \pr_{\btau \odot \bsigma})
\]
satisfying all the axioms of Definition \ref{def:caus_strat} except,
possibly, $\sim$-receptivity\footnote{However, $\sim$-receptivity of
$\bsigma$ and $\btau$ is required for $\btau\inter \bsigma$ to form an
ess \cite{cg2}.}. When composing prestrategies, we will check
$\sim$-receptivity separately -- this only occurs
in Section~\ref{subsubsec:intr_inter}.

However, if $\bsigma$ and $\btau$ are \emph{strategies}, then so is
$\btau \odot \bsigma$ \cite{cg2}. Composition is associative up to iso
(with \textbf{isomorphisms} between causal strategies defined as between
pre-interactions above).  In Figure \ref{fig:ex_interaction}, the
composition simply keeps the events in black. This means that the
composition has two conflicting positive events, both corresponding to
$\done^+$: the model records the point of non-deterministic branching
even when it brings no observable difference. Though this does not
appear in pictures, we insist that events of $\btau \odot \bsigma$
\emph{are} certain events of $\btau \inter \bsigma$. Thus an event of
the composition always carries a unique causal explanation:
\emph{itself}.  

To parallel this event-based definition of composition, there is a
state-based characterization.  A causally compatible pair $x^\bsigma \in
\conf{\bsigma}, x^\btau \in \conf{\btau}$ is \textbf{minimal} if for all
causally compatible $y^\bsigma \in \conf{\bsigma}, y^\btau \in
\conf{\btau}$ with $y^\bsigma \subseteq x^\bsigma$, $y^\btau \subseteq
x^\btau$ with $x^\bsigma_A = y^\bsigma_A$ and $x^\btau_C = y^\btau_C$,
then $x_B = y_B$. The same definition applies to causally
compatible pairs of symmetries. 

\begin{restatable}{prop}{charcomp}\label{prop:char_conf_comp}
Consider $\bsigma : A\vdash B$, and $\btau : B \vdash C$ causal
strategies.

There is a causal strategy $\btau \odot \bsigma$, unique up to
iso, s.t. there are order-isos:
\[
\begin{array}{rcrcl}
(- \odot -) 
&\!\!:\!\!&
\{(x^\btau, x^\bsigma) \in \conf{\btau} \times \conf{\bsigma} \mid
\text{$x^\bsigma, x^\btau$ \emph{minimal} caus. comp.}\} 
&\!\!\simeq\!\!&
\conf{\btau \odot \bsigma}\\
(- \odot -)
&\!\!:\!\!&
\{(\theta^\btau, \theta^\bsigma) \in \tilde{\btau} \times
\tilde{\bsigma} \mid 
\text{$\theta^\bsigma, \theta^\btau$ \emph{minimal} caus. comp.}\}
&\!\!\simeq\!\!&
\tilde{\btau\odot \bsigma}
\end{array}
\]
commuting with $\dom$ and $\cod$; s.t., for $\theta^\bsigma \in
\tilde{\bsigma}, \theta^\btau \in \tilde{\btau}$ minimal causally compatible,
\[
\pr_{\btau\odot \bsigma}(\theta^\btau \odot \theta^\bsigma) =
\theta^\bsigma_A \parallel \theta^\btau_C\,.
\]
\end{restatable}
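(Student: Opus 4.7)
The plan is to deduce this from Proposition \ref{prop:main_interaction2} by relating configurations of $\btau \odot \bsigma$ to distinguished configurations of $\btau \inter \bsigma$. Recall that by construction $\ev{\btau \odot \bsigma} \subseteq \ev{\btau \inter \bsigma}$ consists of the events occurring in $A$ or $C$, with causality, conflict and symmetry inherited (the latter possibly extended). The crux is that a configuration of $\btau \odot \bsigma$ is \emph{not} down-closed in the interaction, but its down-closure there adds exactly the synchronized events causally needed to reach its visible events, and this is what minimality encodes.

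I would first define the forward map. Given $z \in \conf{\btau \odot \bsigma}$, take its down-closure $\hat{z} = [z]_{\btau \inter \bsigma}$; this is a configuration of the interaction since conflict is inherited, so Proposition \ref{prop:main_interaction2} yields a unique causally compatible pair with $\hat{z} = x^\btau \inter x^\bsigma$. One checks this pair is minimal: any causally compatible $(y^\bsigma, y^\btau)$ with $y^\bsigma \subseteq x^\bsigma$, $y^\btau \subseteq x^\btau$ and matching $A$- and $C$-projections gives $y^\btau \inter y^\bsigma \subseteq \hat{z}$ in the interaction, yet still contains all of $z$; by minimality of the down-closure this forces equality, hence $y_B = x_B$. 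Conversely, given a minimal pair $(x^\btau, x^\bsigma)$, set $z = (x^\btau \inter x^\bsigma) \cap \ev{\btau \odot \bsigma}$. Down-closure of $z$ in $\btau \odot \bsigma$ is automatic since $\leq_{\btau \odot \bsigma}$ is $\leq_{\btau \inter \bsigma}$ restricted to visible events, and consistency is likewise inherited.

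I would then verify mutual inverseness. The round trip $z \mapsto \hat{z} \mapsto \hat{z} \cap \ev{\btau \odot \bsigma}$ recovers $z$ because any visible event of $\hat{z}$ is $\leq_{\btau \inter \bsigma}$-below some event of $z$, which transfers to $\leq_{\btau \odot \bsigma}$ as both endpoints are visible, so lies in $z$ by down-closure. The reverse round trip is forced by minimality together with the order-iso of Proposition \ref{prop:main_interaction2}: filtering then re-closing produces a causally compatible pair $\subseteq (x^\btau, x^\bsigma)$ with identical $A$ and $C$ components, which by minimality equals $(x^\btau, x^\bsigma)$. Order-preservation is immediate, and the display-map identity $\pr_{\btau \odot \bsigma}(x^\btau \odot x^\bsigma) = x^\bsigma_A \parallel x^\btau_C$ is read off by restricting the formula of Proposition \ref{prop:main_interaction2} to $A \vdash C$.

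The symmetry case proceeds identically, using the second isomorphism of Proposition \ref{prop:main_interaction2} and the fact that symmetries of $\btau \odot \bsigma$ are precisely extensions of symmetries of the interaction restricted to visible events. Uniqueness up to isomorphism then follows since the order-isos on configurations and symmetries, together with the compatibility with display maps, determine $\btau \odot \bsigma$ canonically as an \emph{ess} over $A \vdash C$. The main technical point is proving that the down-closure operation genuinely produces a \emph{minimal} causally-compatible pair rather than just a causally-compatible one — this is where the acyclicity enforced by causal compatibility is essential — and performing the parallel verification on symmetries while correctly tracking the action of $\dom$ and $\cod$ through filtering and down-closure.
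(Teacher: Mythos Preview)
Your approach is essentially that of the paper: it too factors through Proposition~\ref{prop:main_interaction2} by identifying configurations of $\btau \odot \bsigma$ with those configurations of $\btau \inter \bsigma$ whose maximal events are visible (your down-closure $[z]_{\btau \inter \bsigma}$ is the paper's \emph{witness}, your filtering is its \emph{hiding}, and your minimality argument is equivalent to its lemma that minimality of the pair is exactly visibility of maximal events).

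The one place where ``proceeds identically'' understates the work is the symmetry case. Given $\theta \in \tilde{\btau \odot \bsigma}$, an extension to some $\theta' : [x]_{\btau \inter \bsigma} \sym_{\btau \inter \bsigma} [y]_{\btau \inter \bsigma}$ exists by definition and restriction, but its \emph{uniqueness} is not a consequence of the configuration argument: two symmetries on the same pair of witness configurations could in principle agree on visible events yet differ on synchronized ones. The paper invokes a separate result (Lemma~3.33 of \cite{cg2}) to rule this out; without it, your backward map on symmetries is not well-defined, and hence neither is the claimed order-iso.
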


The \emph{minimality} requirement amounts to asking the maximal events of
$x^\bsigma$ and $x^\btau$ to occur in $A$ or $C$. As events
of $\btau\odot \bsigma$ carry their causal witness, configurations of
$\btau \odot \bsigma$ are in one-to-one correspondence with those configurations
of $\btau \inter \bsigma$ whose maximal events occur in $A$ or $C$ -- thus
Proposition \ref{prop:char_conf_comp} follows from Proposition
\ref{prop:main_interaction2} (see Appendix \ref{app:char_conf_comp}).

In fact, trailing Opponent moves do not matter as they are
forced by \emph{receptivity} and \emph{courtesy} to behave as in the
game. A configuration $x \in \conf{\bsigma}$ is \textbf{$+$-covered} iff
the top elements of $x$ (for $\leq_\bsigma$) are positive -- we write $x
\in \confp{\bsigma}$. Likewise, $\theta \in \tilde{\bsigma}$ is
\textbf{$+$-covered} if $\dom(\theta)$ (or, equivalently,
$\cod(\theta)$) is $+$-covered -- we write $\theta \in
\tildep{\bsigma}$. We have \cite{devismephd}:

\begin{restatable}{lem}{lempcovtrois}\label{lem:pcov3_main}
Consider $\bsigma, \btau : A$ two causal strategies. Assume there are
\[
\psi : \confp{\bsigma} \simeq \confp{\btau}
\qquad
\qquad
\psi : \tildep{\bsigma} \simeq \tildep{\btau}
\]
order-isomorphisms compatible with $\dom, \cod$, and display maps.

Then, $\bsigma$ and $\btau$ are isomorphic.
\end{restatable}

See Appendix \ref{app:pluscov} for the proof. Relying on this we can
finally prove:

\begin{restatable}{prop}{charcompbis}\label{prop:comp_pcov}
Consider $\bsigma : A \vdash B$ and $\btau : B \vdash C$ causal
strategies.

Then, there is a strategy $\btau \odot \bsigma : A \vdash C$, unique up
to iso, such that there are order-isos:
\[
\begin{array}{rcrcl}
(- \odot -) 
&\!\!:\!\!&
\{(x^\btau, x^\bsigma) \in \confp{\btau} \times \confp{\bsigma} \mid
\text{$x^\bsigma$ and $x^\btau$ caus. comp.}\} 
&\!\!\simeq\!\!&
\confp{\btau \odot \bsigma}\\
(- \odot -)
&\!\!:\!\!&
\{(\theta^\btau, \theta^\bsigma) \in \tildep{\btau} \times
\tildep{\bsigma} \mid \text{$\theta^\bsigma$ and $\theta^\btau$ caus.
comp.}\}
&\!\!\simeq\!\!&
\tildep{\btau\odot \bsigma}
\end{array}
\]
commuting with $\dom$ and $\cod$, and
s.t., for $\theta^\bsigma \in \tildep{\bsigma}$
and $\theta^\btau \in \tildep{\btau}$ causally compatible,
\[
\pr_{\btau\odot \bsigma}(\theta^\btau \odot \theta^\bsigma) =
\theta^\bsigma_A \parallel \theta^\btau_C\,.
\]
\end{restatable}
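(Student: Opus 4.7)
The existence of $\btau \odot \bsigma$ as a causal strategy and its characterisation by minimal causally compatible pairs and symmetries is given by Proposition \ref{prop:char_conf_comp}. To deduce the present statement it suffices to show that, under the bijections of Proposition \ref{prop:char_conf_comp}, the subset of $+$-covered causally compatible pairs corresponds exactly to $\confp{\btau \odot \bsigma}$, and analogously on symmetries; uniqueness of $\btau \odot \bsigma$ up to iso will then follow from Lemma \ref{lem:pcov3}, applied to this $+$-covered data.

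Concretely, the plan is to establish that a causally compatible pair $(x^\bsigma, x^\btau)$ is $+$-covered if and only if it is minimal and $x^\btau \odot x^\bsigma \in \confp{\btau \odot \bsigma}$. The direction from minimal with $+$-covered image to $+$-covered is a direct polarity argument: minimality forces the maximal events of $x^\bsigma$ (resp.\ $x^\btau$) to occur in $A$ (resp.\ $C$) and to arise from maximal events of the image $x^\btau \odot x^\bsigma$, which are positive in $\btau \odot \bsigma$ by assumption; the sign conventions of $A^\perp$ then yield positivity in $\bsigma$ (resp.\ $\btau$). For the converse direction, both a failure of minimality (a maximal event $m$ of $x^\bsigma$ occurring in $B$) and a failure of $+$-coveredness of the image (a maximal event $p$ of $x^\btau \odot x^\bsigma$ that is negative in $\btau \odot \bsigma$) will be ruled out by the same chain argument: one exploits $+$-coveredness of $x^\bsigma$ and $x^\btau$ to alternately pick a maximal positive event above the current point on the $\bsigma$ and then $\btau$ sides, each lying in $B$ because the contrary would itself provide the sought-after contradiction. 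By finiteness and antisymmetry of $\leq_{\btau \inter \bsigma}$, the chain must eventually exit to $A$ or $C$, closing the argument. The symmetry case proceeds identically, since $+$-coveredness of a symmetry is defined via $+$-coveredness of its domain.

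This chain argument is the main technical obstacle: it relies on the fact, readable from the event-based description of the interaction in Proposition \ref{prop:main_interaction2}, that a causal dependency $s \leq_\bsigma s'$ between $\bsigma$-events lifting to $\btau \inter \bsigma$ is reflected in $\leq_{\btau \inter \bsigma}$ (and similarly for $\btau$), so that the constructed chain strictly increases in the partial order $\leq_{\btau \inter \bsigma}$. Once the equivalence between $+$-covered pairs and $+$-covered configurations is established, the order-iso and its symmetry analogue, together with the display map identity $\pr_{\btau \odot \bsigma}(\theta^\btau \odot \theta^\bsigma) = \theta^\bsigma_A \parallel \theta^\btau_C$, are immediate restrictions of the corresponding data in Proposition \ref{prop:char_conf_comp}, and Lemma \ref{lem:pcov3} delivers uniqueness of $\btau \odot \bsigma$ up to iso.
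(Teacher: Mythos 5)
Your overall architecture is the same as the paper's: restrict the isomorphisms of Proposition \ref{prop:char_conf_comp} to the $+$-covered data, observe that the symmetry case follows via compatibility with $\dom$ and $\cod$, and get uniqueness from the lemma stating that a strategy is determined up to iso by its $+$-covered configurations and symmetries. The gap lies in your justification of the direction ``minimal pair with $+$-covered composite $\Rightarrow$ $+$-covered components''. You claim that minimality forces the maximal events of $x^\bsigma$ (resp.\ $x^\btau$) to occur in $A$ (resp.\ $C$); this is false. Minimality is equivalent to the maximal events of the \emph{interaction} $x^\btau \inter x^\bsigma$ being visible, not those of the components: a maximal event $s$ of $x^\bsigma$ may perfectly well be synchronized in $B$, provided its $\btau$-counterpart $t$ has further causal successors inside $x^\btau$. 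For instance, take $A$ empty, $B = C = \tunit$, $\bsigma$ the strategy answering its question and $\btau$ copycat: the pair of full configurations is minimal and its composite is $+$-covered, yet the maximal event of $x^\bsigma$ lies in $B$. In exactly this situation your ``direct polarity argument'' says nothing, and this is the delicate case one must handle: ruling out that such an $s$ is negative for $\bsigma$ (equivalently $t$ positive for $\btau$). The missing ingredient is \emph{courtesy} of $\btau$: if $t$ were positive with a successor $t \imc_\btau t'$, courtesy gives $\pr_\btau(t) \imc_{B \vdash C} \pr_\btau(t')$, so $t'$ is again synchronized in $B$, and its $\bsigma$-counterpart lies strictly above $s$ in $x^\bsigma$ (by local reflection of causality along $\pr_\bsigma$, Lemma \ref{lem:es_refl_caus}), contradicting maximality of $s$. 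Without this courtesy step your argument does not go through; with it, this direction is exactly the paper's corresponding appendix lemma.

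For the converse direction (components $+$-covered $\Rightarrow$ minimal and composite $+$-covered), your alternating-chain construction is both vaguer than needed and again leans on the wrong reading of minimality. Two short polarity arguments suffice: if the pair were not minimal, some \emph{maximal event of the interaction} would occur in $B$; its projections are then maximal in $x^\bsigma$ and in $x^\btau$ and carry dual polarities, so one of them is a negative maximal event, contradicting $+$-coveredness. And a maximal event of $x^\btau \odot x^\bsigma$ is maximal in its witness interaction, hence visible, hence projects to a maximal --- therefore positive --- event of one of the components, so the composite is $+$-covered. Your handling of symmetries and the final appeal to the uniqueness lemma are fine as stated.
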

\begin{proof}
Relatively direct from Proposition \ref{prop:char_conf_comp} and Lemma
\ref{lem:pcov3_main}, see Appendix \ref{app:pluscov}.
\end{proof}

This is convenient as $+$-covered configurations of strategies often
have a simpler description (see \emph{e.g.} Lemma \ref{lem:cc_pcov}).
Minimality also disappears as a causally
compatible pair of $+$-covered configurations is always minimal (indeed,
a synchronized maximal event would be negative for one of the players).
This final characterization will be used often to prove equalities
between strategies. It is also of great use when linking with the
relational model (see Section~\ref{subsec:positional_collapse}), but
also for quantitative extensions (see \emph{e.g.}
\cite{lics18,DBLP:journals/pacmpl/ClairambaultV20}).

\subsubsection{Congruence} What is the right equivalence
between causal (pre)strategies?
There are a few options, several investigated in
\cite{cg2}; here we use \emph{positive isomorphism}:

\begin{defi}\label{def:pos_iso}
Consider $\bsigma, \btau : A$ two causal strategies on arena $A$.

A \textbf{positive isomorphism} $\varphi : \bsigma \simstrat \btau$ is
an isomorphism of ess satisfying
\[
\pr_\btau \circ \varphi \sim^+ \pr_\bsigma\,,
\]
\emph{i.e.} for all $x \in \conf{\bsigma}$, $\{(\pr_\bsigma(s),
\pr_\btau\circ \varphi(s)) \mid s \in x\} \in \ptilde{A}$: the two
maps are \textbf{positively symmetric}.
In that case we say
$\bsigma$ and $\btau$ are \textbf{positively isomorphic}, and write
$\bsigma \simstrat \btau$.
\end{defi}

This means that $\bsigma$ and $\btau$ are the same up to renaming of
their events. This renaming might cause a reindexing of
positive events, but it must keep the copy indices of negative events
unchanged.
Crucially, positive isomorphism is preserved by composition
\cite{cg2}:

\begin{prop}
Consider $\bsigma, \bsigma' : A \vdash B$, $\btau, \btau' : B \vdash
C$ s.t. $\bsigma \simstrat \bsigma'$ and $\btau \simstrat \btau'$.

Then, we have
$\btau \odot \bsigma 
\simstrat
\btau' \odot \bsigma'$.
\end{prop}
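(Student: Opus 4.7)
The plan is to leverage the state-and-symmetry characterization of Proposition \ref{prop:comp_pcov}: the composition $\btau \odot \bsigma$ is determined up to iso by its $+$-covered configurations, its $+$-covered symmetries, and the way these project into $A \vdash C$. Unpacking Definition \ref{def:pos_iso}, the hypotheses give isos of event structures with symmetry $\varphi : \bsigma \iso \bsigma'$ and $\psi : \btau \iso \btau'$ whose displays agree with those of $\bsigma', \btau'$ up to positive symmetry on $A \vdash B$ and $B \vdash C$ respectively; in particular they restrict to order-isos on $+$-covered configurations and $+$-covered symmetries.

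I would then show that $(\varphi, \psi)$ transports causally compatible pairs in the sense of Definition \ref{def:caus_comp}, up to a natural adjustment. The \emph{secured} condition is preserved directly, since isos of ess respect the causal dependencies that populate the synchronization graph. The \emph{matching} condition is more delicate: after transport, the $B$-parts $\varphi(x^\bsigma)_B$ and $\psi(x^\btau)_B$ agree only up to symmetry. Following Section \ref{subsubsec:constr_arenas}, positive symmetries of $A \vdash B = A^\perp \parallel B$ restrict on $B$ to $\ptilde{B}$, while those of $B \vdash C = B^\perp \parallel C$ restrict to $\ntilde{B}$, so the mismatch is realised by the composition of a positive and a negative symmetry on $B$. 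Using the unique factorisation of any symmetry as positive-after-negative (Lemma 3.19 from \cite{cg2}, mentioned after Definition \ref{def:pol_sym}) together with thinness and $\sim$-receptivity on $\bsigma'$, one rewrites $\varphi(x^\bsigma)$ to a configuration of $\bsigma'$ whose $B$-part matches $\psi(x^\btau)_B$ on the nose, while differing from $\varphi(x^\bsigma)$ only by a positive symmetry of $A \vdash B$. The analogous construction applies to $+$-covered symmetries.

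This yields order-isomorphisms between the $+$-covered configurations and $+$-covered symmetries of $\btau \odot \bsigma$ and $\btau' \odot \bsigma'$, compatible with $\dom$, $\cod$, and commuting with display maps up to positive symmetry of $A \vdash C$. A routine variant of Lemma \ref{lem:pcov3} in which strict commutation with displays is weakened to commutation up to positive symmetry then produces the desired positive isomorphism $\btau \odot \bsigma \simstrat \btau' \odot \bsigma'$. The main obstacle throughout is the matching subtlety: the naive pointwise transport $(\varphi, \psi)$ does not preserve strict equality of $B$-parts, and reconciling the two sides requires careful deployment of the polarized-symmetry structure specific to thin concurrent games -- precisely the refinement over bare concurrent strategies that makes the up-to-symmetry theory work. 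Once this reconciliation is in place, the remainder of the argument is bookkeeping.
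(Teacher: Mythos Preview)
The paper does not prove this proposition; it sketches two obstacles and defers the details to \cite{cg2}. Your overall framework --- via the $+$-covered characterization of Proposition \ref{prop:comp_pcov} and a variant of Lemma \ref{lem:pcov3} --- is reasonable and close in spirit to the paper's sketch, but two of your claims need repair.

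First, your one-sided matching fix is not correct as stated. You assert that $\varphi(x^\bsigma)$ can be rewritten to match $\psi(x^\btau)_B$ on $B$ \emph{while differing from $\varphi(x^\bsigma)$ only by a positive symmetry of $A \vdash B$}. But a positive symmetry of $A \vdash B$ restricts on $B$ to an element of $\ptilde{B}$, whereas the mismatch on $B$ is (by your own analysis) a composite of a positive and a negative symmetry of $B$ --- in general a full symmetry. A positive-only correction on the $\bsigma'$-side cannot absorb the negative part. The paper's first key step, Proposition \ref{prop:sync_sym}, resolves this by adjusting \emph{both} sides simultaneously: one plays $\tilde{\bsigma'}$ and $\tilde{\btau'}$ against each other, each absorbing via $\sim$-receptivity the part of the mismatch that is negative from its own viewpoint, and meeting in the middle.

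Second, you dismiss the passage from these pointwise corrections to a global order-isomorphism as ``bookkeeping''. The paper explicitly names this as the \emph{second} obstacle, distinct from the matching issue: the corrections furnished by Proposition \ref{prop:sync_sym} (or your variant) are a priori only configuration-local, and isomorphism families provide only a bisimulation-like guarantee, not a global map. Showing that the corrections chosen at nested configurations cohere --- so that the resulting function on $+$-covered configurations is monotone and assembles into an iso of ess --- is where the uniqueness clause of \emph{thin} does real work, and the paper warns that without it congruence fails. This step is not bookkeeping and deserves explicit argument.
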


The proof is fairly elaborate. Without going into details, it will be
useful to have in mind the first key step: showing that
two (pre)strategies able to synchronize \emph{up to symmetry}, always
also have a synchronization \emph{on the nose}. More precisely, we have
the following:

\begin{prop}\label{prop:sync_sym}
Consider $\bsigma : A \vdash B$ and $\btau : B \vdash C$ two causal
(pre)strategies.

For any $x^\bsigma \in \conf{\bsigma}, x^\btau \in \conf{\btau}$ and
$\theta : x^\bsigma_B \sym_B x^\btau_B$ s.t. the composite
bijection is \emph{secured}:
\[
x^\bsigma \parallel x^\btau_C 
\quad
\stackrel{\pr_{\bsigma}\parallel C}{\simeq}
\quad
x^\bsigma_A \parallel x^\bsigma_B \parallel x^\btau_C 
\quad
\stackrel{A \parallel \theta \parallel C}{\sym}
\quad
x^\bsigma_A \parallel x^\btau_B \parallel x^\btau_C
\quad
\stackrel{A \parallel \pr_{\btau}^{-1}}{\simeq}
\quad
x^\bsigma_A \parallel x^\btau\,,
\]
then there are
$y^\bsigma \in \conf{\bsigma}$ and $y^\btau \in \conf{\btau}$ causally
compatible, along with symmetries
\[
\varphi^\bsigma : y^\bsigma \sym_\bsigma x^\bsigma\,,
\qquad
\qquad
\varphi^\btau : y^\btau \sym_\btau x^\btau\,,
\]
such that $\varphi^\btau_B \circ \theta = \varphi^\bsigma_B$.
\end{prop}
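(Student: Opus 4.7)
We plan to construct $y^\bsigma$ and $y^\btau$ incrementally by induction along a linear extension of the secured order $\leq_\varphi$ on (the graph of) the composite bijection $\varphi$ supplied by the hypothesis. Throughout we maintain partial configurations $z^\bsigma \subseteq x^\bsigma$, $z^\btau \subseteq x^\btau$, sets $y^\bsigma_* \subseteq \ev{\bsigma}$, $y^\btau_* \subseteq \ev{\btau}$, and symmetries $\varphi^\bsigma_* : y^\bsigma_* \sym_\bsigma z^\bsigma$, $\varphi^\btau_* : y^\btau_* \sym_\btau z^\btau$. The invariant is that the $B$-projections agree, $(y^\bsigma_*)_B = (y^\btau_*)_B$, and that these $B$-parts cohere with the restriction of $\theta$, i.e.\ $\varphi^\btau_{*,B} \circ (\theta|_{z^\bsigma_B}) = \varphi^\bsigma_{*,B}$ (read as the appropriate composite of bijections).

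We process pairs one at a time. Pairs entirely involving $\bsigma$ and $A$ (type A) or $\btau$ and $C$ (type C) are handled by simply extending the relevant symmetry using the \emph{extension} axiom of isomorphism families; the $B$-part is untouched so the invariant persists. The genuine case is the synchronization $(m,n)$ with $\pr_\bsigma(m) = (2,b)$ and $\pr_\btau(n) = (1,\theta(b))$. Assume without loss of generality that $b$ is positive in $B$, so that $m$ is positive in $\bsigma$ while $n$ is negative in $\btau$. We apply \emph{thinness} of $\bsigma$ to extend $\varphi^\bsigma_*$ uniquely by $(m',m)$, producing some $m'$ with $\pr_\bsigma(m') = (2,b'')$ for some $b'' \sym_B b$. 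Next we verify that $(b'',\theta(b))$ legally extends the $B$-part of $\varphi^\btau_*$ in $\tilde{B}$: the composite of the $B$-part of the freshly extended $\varphi^\bsigma_*$ with the restriction of $\theta$ to $z^\bsigma_B \cup \{b\}$ is, by the groupoid closure of $\tilde{B}$ and the invariant, exactly $\varphi^\btau_{*,B} \cup \{(b'',\theta(b))\}$, witnessing that this is a symmetry. This licenses \emph{$\sim$-receptivity} of $\btau$ on the negative extension $n$, producing a unique $n'$ with $\pr_\btau(n') = (1,b'')$ that extends $\varphi^\btau_*$; the invariant is restored with $b''$ added to both $B$-parts. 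The case where $b$ is negative is symmetric, with the roles of thinness and $\sim$-receptivity swapped between $\bsigma$ and $\btau$.

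At termination, $y^\bsigma_B = y^\btau_B$ by the invariant, making $y^\bsigma$ and $y^\btau$ matching. The induced composite bijection on $y^\bsigma \parallel y^\btau_C \simeq y^\bsigma_A \parallel y^\btau$ is obtained from the original $\varphi$ by the order-isomorphisms $\varphi^\bsigma, \varphi^\btau$ (symmetries are order-isos), hence it is isomorphic to a secured bijection and is therefore itself secured. Thus $(y^\bsigma, y^\btau)$ is causally compatible, and the required identity $\varphi^\btau_B \circ \theta = \varphi^\bsigma_B$ is precisely the final invariant. The main obstacle is the compatibility verification in the synchronization case: one must carefully track how the strategy-level extensions of $\bsigma$ and $\btau$ via \emph{thinness} and \emph{$\sim$-receptivity} interact with the game-level symmetries in $\tilde{B}$, and show that the invariant threads cleanly through this mixed use of axioms.
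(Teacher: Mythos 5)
Your proof is correct and takes essentially the same route as the paper, which does not spell the argument out but defers to Lemma 3.23 of \cite{cg2}: that proof is exactly this induction along a linear extension of the secured order, playing $\tilde{\bsigma}$ and $\tilde{\btau}$ against each other, using \emph{extension}/\emph{thin} on the side owning the positive synchronized move and \emph{$\sim$-receptivity} on the other side to force agreement on the nose in $B$, with groupoid composition with $\theta$ licensing the game-level extension. Your closing observation that securedness of $(y^\bsigma, y^\btau)$ transfers because symmetries are order-isomorphisms is the right (and only) remaining ingredient.
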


This follows from Lemma 3.23 in \cite{cg2}.
Intuitively, we play $\tilde{\bsigma}$ and $\tilde{\btau}$
against each other. By \emph{$\sim$-receptivity} and \emph{extension}
they adjust their copy indices interactively until reaching
an agreement. This is the first step to
congruence, but not the only one: the requirement that we should get a
global map $\varphi : \btau \odot \bsigma \simstrat \btau' \odot
\bsigma'$ is in tension with the definition of isomorphism families,
which only guarantees a more local bisimulation-like property. The
mismatch is compensated by the uniqueness of extensions granted by
\emph{thin}, without which congruence fails. Details are out of scope
for the present paper \cite{cg2}.

If $\bsigma \simstrat \btau$, there can be in principle multiple
$\varphi : \bsigma \simstrat \btau$. We leave
these isomorphisms to the background, as we have not yet
encountered a computational use for these. If they are retained, then
arenas, causal strategies and positive morphisms form a bicategory
\cite{paquet2020probabilistic}.

\begin{rem}
In Definition \ref{def:pos_iso}, one could ask $\varphi$ to preserve
$\pr$ up to arbitrary symmetry (\emph{weak isomorphism}) or even, to
be itself invertible only up to symmetry (\emph{weak equivalence}). This
changes the mediating morphisms, but not the resulting equivalence
relation between strategies (see Corollary 3.30 in \cite{cg2}). In this
paper we choose positive isomorphism as it seems natural conceptually,
and because the additional positivity constraint is useful.
\end{rem}

\subsubsection{Copycat} So as to complete the categorical structure, it
remains to define \emph{copycat}.

\begin{defi}\label{def:copycat}
For each $-$-arena $A$, the \textbf{copycat strategy} $\cc_A : A \vdash
A$ is defined as:
\[
\begin{array}{rcl}
\ev{\bcc_A} &=& \ev{A \vdash A}\\
\pr_{\bcc_A}(i, a) &=& (i, a)\\
(i, a) \leq_{\bcc_A} (j,a') &\Leftrightarrow& \text{$a <_A a'$; or $a =
a'$ and ($\pol_{A \vdash A}(i, a) = -$ or $\pol_{A^\perp \parallel
A}(j, a') = +$)}\\
(i, a) \conflict_{\bcc_A} (j, a') &\Leftrightarrow& a \conflict_A a'\,,
\end{array}
\]
with symmetries those bijections of the form $\theta_1 \parallel
\theta_2 : x_1 \parallel x_2 \sym_{\bcc_A} y_1 \parallel y_2$ such that
\[
\theta_1 : x_1 \sym_A y_1\,,
\quad
\theta_2 : x_2 \sym_A y_2\,,
\quad
\text{and}
\quad
\theta_1 \cap \theta_2 : x_1 \cap x_2 \sym_A y_1 \cap y_2\,.
\]
\end{defi}

This simplifies the usual definition \cite{cg2},
exploiting the particular shape of arenas. Its immediate causal links
import $\imc_A$ on either side, along with all the $(i, a) \imc_{\bcc_A}
(j,a)$
when $\pol_{A \vdash A}(i, a) = -$ and $\pol_{A\vdash A}(j,a) = +$. In
other words, $\bcc_A$ is an asynchronous forwarder: it is prepared to
play any positive event on one side, under the condition that the
corresponding negative event appears first on the other side. 
Its symmetries are inherited from $\tilde{A\vdash A}$, with the
constraint that they should agree on events already forwarded.

Perhaps the simplest description of copycat is through its
completely forwarded states:

\begin{lem}\label{lem:cc_pcov}
Consider $A$ any $-$-arena. Then, we have:
\[
\begin{array}{rcl}
\confp{\bcc_A} &=& \{x_A \parallel x_A \in \conf{A\parallel A} \mid x_A
\in \conf{A}\}\,\\ 
\tildep{\bcc_A} &=& \{\theta_A \parallel \theta_A \in \tilde{A\parallel
A}\mid \theta_A \in
\tilde{A}\}\,.
\end{array}
\]
\end{lem}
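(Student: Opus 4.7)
The plan is to prove the two set equalities in turn, with the first one requiring a real argument and the second one following mostly formally from it together with the definition of $\tilde{\bcc_A}$.

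For the first equality, the inclusion $\supseteq$ is a straightforward verification: given $x_A \in \conf{A}$, I would check that $x_A \parallel x_A$ is down-closed and consistent in $\bcc_A$ (which reduces to down-closure and consistency of $x_A$ in $A$, plus the observation that the only causal link introduced by the copycat's forwarding clause, $(i,a) \imc_{\bcc_A} (3-i,a)$, is automatically satisfied since both endpoints lie in $x_A \parallel x_A$), and that every negative event $(i,a) \in x_A \parallel x_A$ has the positive event $(3-i,a) \in x_A \parallel x_A$ strictly above it in $\bcc_A$, hence cannot be maximal.

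The harder direction is $\subseteq$. Given $x = x_1 \parallel x_2 \in \confp{\bcc_A}$, I would argue by contradiction: if the symmetric difference of $x_1$ and $x_2$ is nonempty, pick $a$ maximal in it for $\leq_A$, say $a \in x_1 \setminus x_2$. I then split on the polarity of $a$ in $A$. If $\pol_A(a) = -$, then $(1,a)$ is positive in $A \vdash A$ and its copycat predecessor $(2,a)$ must be in $x$ by down-closure, forcing $a \in x_2$, contradiction. If $\pol_A(a) = +$, then $(1,a)$ is negative, so by $+$-coveredness it has some strict successor $b \in x$ in $\bcc_A$; inspecting the definition of $\leq_{\bcc_A}$, every such $b$ is either $(2,a)$ directly or $(i, a'')$ with $a <_A a''$ and $(i,a'') \in x$. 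In the first case $a \in x_2$ immediately; in the second, the down-closure of $x_j$ in $A$ (which one verifies from $x \in \conf{\bcc_A}$ using the $a' <_A a''$ clause of the copycat's causality) forces $a$ into whichever side contains $a''$, again contradicting either $a \in x_1 \setminus x_2$ or the maximality of $a$. The main obstacle here is doing the case analysis on successors cleanly given the simplified (non-standard) form of $\leq_{\bcc_A}$ used in Definition \ref{def:copycat}, so I would first spell out the immediate successors of a negative event of $\bcc_A$ as a small lemma.

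For the second equality, I would start from the definition of symmetries of $\bcc_A$ as bijections $\theta_1 \parallel \theta_2 : x_1 \parallel x_2 \sym_{\bcc_A} y_1 \parallel y_2$ satisfying $\theta_i : x_i \sym_A y_i$ and $\theta_1 \cap \theta_2 : x_1 \cap x_2 \sym_A y_1 \cap y_2$. Being $+$-covered means $\dom(\theta)$ is $+$-covered, and by the first part this forces $x_1 = x_2 =: x_A$ and similarly $y_1 = y_2 =: y_A$. The constraint $\theta_1 \cap \theta_2 : x_A \sym_A y_A$ then requires $\theta_1 \cap \theta_2$ (viewed as a relation from $x_A$ to $y_A$) to be a full bijection $x_A \simeq y_A$; since $\theta_1, \theta_2$ are themselves bijections $x_A \simeq y_A$, their intersection as relations is total precisely when $\theta_1 = \theta_2$, giving $\theta = \theta_A \parallel \theta_A$ with $\theta_A \in \tilde{A}$. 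The converse inclusion is immediate: for any $\theta_A \in \tilde{A}$, the bijection $\theta_A \parallel \theta_A$ satisfies $\theta_A \cap \theta_A = \theta_A \in \tilde{A}$ and has $+$-covered domain by the first part.
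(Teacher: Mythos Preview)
Your proof is correct. The paper's own proof consists entirely of the word ``Straightforward'', so there is nothing to compare at the level of approach; you have supplied the details the paper omits.

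One small simplification is available in the $\pol_A(a) = +$ case of the $\subseteq$ direction. You invoke the maximality of $a$ in the symmetric difference, but this is unnecessary: the first clause of $\leq_{\bcc_A}$ in Definition~\ref{def:copycat} has no constraint on $i,j$, so from $a <_A b$ and $(k,b) \in x$ you get $(2,a) \leq_{\bcc_A} (k,b)$ directly (not just $(k,a) \leq_{\bcc_A} (k,b)$), and down-closure of $x$ in $\bcc_A$ immediately gives $(2,a) \in x$, hence $a \in x_2$. So the case split on whether $a''$ lies in the symmetric difference, and the appeal to maximality, can be dropped.
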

\begin{proof}
Straightforward.
\end{proof}

This foreshadows the link with relational semantics in
Section~\ref{subsec:positional_collapse}: when restricted to $+$-covered
configurations, copycat looks like the identity relation. We may deduce:

\begin{prop}\label{prop:cc_neutral}
Composition is associative up to $\simstrat$ on prestrategies.
For any $\bsigma : A \vdash B$,
\[
\bcc_B \odot \sigma \odot \bcc_A \simstrat \bsigma\,,
\]
so that $-$-arenas and causal strategies form a category.
\end{prop}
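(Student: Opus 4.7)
The plan is to reduce both associativity and the neutrality of copycat to an analysis of $+$-covered configurations via Lemma~\ref{lem:pcov3}, together with Proposition~\ref{prop:comp_pcov} (for strategies) and the analogous use of Proposition~\ref{prop:main_interaction2} (for prestrategies). In both cases the main work is to exhibit an explicit bijection between $+$-covered configurations, and to check that it is compatible with symmetries and display maps up to positive symmetry.

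For the identity law, consider $\bcc_B \odot \bsigma$; the case $\bsigma \odot \bcc_A$ is symmetric, and both together give the full statement. By Proposition~\ref{prop:comp_pcov}, a $+$-covered configuration of $\bcc_B \odot \bsigma$ arises from a causally compatible pair of $+$-covered configurations of $\bsigma$ and $\bcc_B$. By Lemma~\ref{lem:cc_pcov}, a $+$-covered configuration of $\bcc_B$ has the form $x_B \parallel x_B$, and matching on $B$ forces $x_B = x^\bsigma_B$. Securedness is automatic: the only non-trivial immediate causalities of $\bcc_B$ go from a negative event to its positive copy on the opposite side, so any hypothetical cycle in the induced composite bijection would already yield a cycle in $\bsigma$, impossible. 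The same argument applies verbatim to symmetries, and compatibility of display maps with positive symmetry follows because $\bcc_B$ is identity on events. Lemma~\ref{lem:pcov3} then yields $\bcc_B \odot \bsigma \simstrat \bsigma$, and similarly on the left.

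For associativity, let $\bsigma : A \vdash B$, $\btau : B \vdash C$, $\brho : C \vdash D$. I would set up a ternary interaction whose $+$-covered configurations are triples $(x^\bsigma, x^\btau, x^\brho) \in \confp{\bsigma} \times \confp{\btau} \times \confp{\brho}$ matching pairwise on $B$ and $C$, and inducing a secured three-way synchronisation on $x^\bsigma_A \parallel x_B \parallel x_C \parallel x^\brho_D$. Both bracketings $(\brho \odot \btau) \odot \bsigma$ and $\brho \odot (\btau \odot \bsigma)$ then unfold, via iterated application of Proposition~\ref{prop:comp_pcov}, to exactly this set of triples: the crucial point is that securedness of the ternary composite bijection is equivalent to simultaneous securedness of the binary sub-bijections in either order, which in turn is acyclicity of the union of the componentwise orders $<_\bsigma$, $<_\btau$, $<_\brho$ on the graph of the three-way bijection. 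This acyclicity is manifestly indifferent to parenthesisation. The corresponding bijections on symmetries and the compatibility of display maps then follow identically, so Lemma~\ref{lem:pcov3} gives the required positive isomorphism.

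The main obstacle is the careful bookkeeping of the ternary interaction and, in particular, the proof that securedness decomposes correctly under either bracketing. This is essentially the associativity of the pullback construction in event structures with symmetry used to build the interaction in \cite{cg2}, and the proof would either invoke that categorical result directly or unfold it at the level of $+$-covered configurations as above. Either way, none of the steps rely on negativity, courtesy, or receptivity, so the argument applies to prestrategies as well as strategies, establishing the full statement.
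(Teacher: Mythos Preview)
Your treatment of the identity law is essentially the paper's own proof: reduce to $+$-covered configurations via Proposition~\ref{prop:comp_pcov}, use Lemma~\ref{lem:cc_pcov} to see that matching with copycat is trivial, observe securedness is automatic, and conclude by the uniqueness clause. That is fine.

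The gap is in associativity for \emph{prestrategies}. You route everything through $+$-covered configurations and invoke Lemma~\ref{lem:pcov3} to recover the full isomorphism, then claim ``none of the steps rely on negativity, courtesy, or receptivity''. This is false: the proof of Lemma~\ref{lem:pcov3} (see Appendix~\ref{app:pluscov}) extends the bijection from $+$-covered configurations to arbitrary ones precisely by appealing to receptivity and courtesy to add trailing Opponent moves uniquely. Likewise the auxiliary Lemmas~\ref{lem:pcov2} and~\ref{lem:pcov3} underlying Proposition~\ref{prop:comp_pcov} use courtesy explicitly. So the $+$-covered route gives associativity only for \emph{strategies}, not prestrategies as the proposition claims.

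The paper avoids this by basing associativity on Proposition~\ref{prop:char_conf_comp} --- the characterisation via \emph{minimal} causally compatible pairs rather than $+$-covered ones --- together with a ternary causal-compatibility argument (or equivalently the pullback-associativity argument from~\cite{cg2}). That characterisation and its uniqueness clause go through for prestrategies, since the proof rests on Proposition~\ref{prop:main_interaction2} and the representability Lemma~\ref{lem:map_ess_rep}, neither of which needs receptivity or courtesy. If you want to keep your overall shape, switch the associativity argument from $+$-covered configurations to minimal causally compatible pairs; the ternary decomposition you sketch still works at that level of generality.
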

\begin{proof}
Associativity follows from Proposition \ref{prop:char_conf_comp} and a
ternary version of causal compatibility -- see also \cite{cg2} for a
detailed proof via the universal property of the interaction
pullback.
For neutrality of copycat, there is an order-isomorphism preserving
display maps
\begin{eqnarray*}
\confp{\bcc_B \odot \bsigma} &\simeq& 
\{(x^{\bcc_B}, x^\bsigma) \in \confp{\bcc_B} \times \confp{\bsigma} \mid 
\text{$x^{\bcc_B}$ and $x^\bsigma$ causally compatible}\}\\
&\simeq& \{(x^{\bcc_B}, x^\bsigma) \in \confp{\bcc_B} \times
\confp{\bsigma} \mid 
\text{$x^{\bcc_B}$ and $x^\bsigma$ matching}\}\\
&\simeq& \{(x^\bsigma_B \parallel x^\bsigma_B, x^\bsigma) \mid x^\bsigma
\in \confp{\bsigma}\}\\
&\simeq& \confp{\bsigma}\,,
\end{eqnarray*}
using first Proposition \ref{prop:comp_pcov};
verifying directly that securedness always holds when
composing with copycat; using Lemma \ref{lem:cc_pcov}. The 
same reasoning can be made with symmetries, concluding that $\bcc_B
\odot \bsigma$ and $\bsigma$ are isomorphic by uniqueness in Proposition
\ref{prop:comp_pcov}. 
\end{proof}

Before we develop further this
categorical structure, we introduce a few useful lemmas.

\subsubsection{Immediate causality in interactions}
\label{subsubsec:caus_conf_int}
Later on, we will
need some tools to reason on the causality in $\btau \inter
\bsigma$ and how it relates to that in $\bsigma$ and $\btau$.

\begin{restatable}{lem}{lemcharcaus}\label{lem:imc_mconf_inter}
For $\bsigma : A \vdash B, \btau : B \vdash C$ causal
prestrategies, for $m, m' \in \ev{\btau \inter \bsigma}$,
if $m \imc_{\btau \inter \bsigma} m'$, then
$m_\bsigma \imc_{\bsigma} m'_\bsigma$, or $m_\btau \imc_{\btau}
m'_\btau$,
where $m_\bsigma, m_\btau$ are defined whenever used.
\end{restatable}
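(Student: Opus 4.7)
The plan is to exploit the fact that the causal order on $\btau \inter \bsigma$ is generated minimally from the causalities of $\bsigma \parallel C$ and $A \parallel \btau$ via the pullback projections $\Pi_\bsigma, \Pi_\btau$. Concretely, as recalled in the text, within any configuration $z \in \conf{\btau\inter\bsigma}$ the causal order is the reflexive transitive closure of the relation $\vartriangleleft$, where $p \vartriangleleft q$ iff $\Pi_\bsigma(p) <_{\bsigma\parallel C} \Pi_\bsigma(q)$ or $\Pi_\btau(p) <_{A\parallel \btau} \Pi_\btau(q)$.

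First I would reduce $m \imc_{\btau \inter \bsigma} m'$ to a single $\vartriangleleft$-step. Working inside $z = [m']_{\btau \inter \bsigma}$, any chain $m = p_0 \vartriangleleft p_1 \vartriangleleft \dots \vartriangleleft p_k = m'$ witnessing $m <_{\btau \inter \bsigma} m'$ must have $k = 1$; otherwise an intermediate $p_i$ lies in $z$ with $m <_{\btau \inter \bsigma} p_i <_{\btau \inter \bsigma} m'$, contradicting immediacy. So $m \vartriangleleft m'$ directly, \emph{i.e.} via $\bsigma \parallel C$ or via $A \parallel \btau$.

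Next I would analyse the direct step. Suppose $\Pi_\bsigma(m) <_{\bsigma \parallel C} \Pi_\bsigma(m')$. Since $\bsigma$ and $C$ are parallel, both events project to the same side. If they project to the $\bsigma$-side, then $m_\bsigma, m'_\bsigma$ are both defined with $m_\bsigma <_\bsigma m'_\bsigma$. If they project to the $C$-side (both events in case 3), then $m_\btau, m'_\btau$ are defined and satisfy $\pr_\btau(m_\btau) = (2,c), \pr_\btau(m'_\btau) = (2,c')$ with $c <_C c'$; applying down-closure to $[m'_\btau]_\btau \in \conf\btau$ (whose image by $\pr_\btau$ is down-closed in $B^\perp \parallel C$) yields a unique event of $\btau$ over $(2,c)$ below $m'_\btau$, which by local injectivity of $\pr_\btau$ on $[m'_\btau]_\btau$ must equal $m_\btau$, so $m_\btau <_\btau m'_\btau$. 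The symmetric analysis for $\Pi_\btau(m) <_{A\parallel\btau} \Pi_\btau(m')$ handles the remaining cases (and rules out the mixed case combinations (1,3), (3,1) directly). Thus we have either $m_\bsigma <_\bsigma m'_\bsigma$ or $m_\btau <_\btau m'_\btau$, with the appropriate projections defined.

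Finally I would promote this strict inequality to an immediate one. In the $\bsigma$-case, if there were $n \in \ev\bsigma$ with $m_\bsigma <_\bsigma n <_\bsigma m'_\bsigma$, then by down-closure $n \in z^\bsigma$ (the $\bsigma$-component of the configuration corresponding to $z = [m']$), so the correspondence of Proposition~\ref{prop:main_interaction2} supplies $n^* \in z$ with $n^*_\bsigma = n$, and $m \vartriangleleft n^* \vartriangleleft m'$ in $z$ gives $m <_{\btau \inter \bsigma} n^* <_{\btau \inter \bsigma} m'$, contradicting immediacy. The $\btau$-case is symmetric. I expect the main technical obstacle to be the case (3,3): recognising that a causality in $C$ between events of the interaction must already be witnessed inside $\btau$ itself, which relies on carefully exploiting down-closure together with local injectivity of $\pr_\btau$.
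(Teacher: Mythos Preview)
Your proof is correct and takes a genuinely different route from the paper's. The paper works entirely through the configuration-level order-isomorphism of Proposition~\ref{prop:main_interaction2}: from $m \imc_{\btau\inter\bsigma} m'$ it extracts a two-step covering chain $x^\btau\inter x^\bsigma \cov y^\btau\inter y^\bsigma \cov z^\btau\inter z^\bsigma$, then performs a case analysis on where each of $m,m'$ occurs (in $A$, $B$, or $C$), and in each case either reads off the desired immediate causality in the component, or shows the two covering steps can be swapped, contradicting $m <_{\btau\inter\bsigma} m'$. You instead work at the event level via the generating relation $\vartriangleleft$: reduce immediacy to a single $\vartriangleleft$-step, split on which projection witnesses it, and then promote $<$ to $\imc$ by pulling an intermediate event back into the interaction. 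Your argument is more uniform and avoids the $3\times 3$ location analysis; it also folds in what the paper proves separately as Lemma~\ref{lem:imc_aux}. The trade-off is that your first step assumes the causal order on $\btau\inter\bsigma$ restricted to a configuration coincides with the transitive closure of $\vartriangleleft$; the paper only sketches this informally (in the paragraph after Proposition~\ref{prop:main_interaction2}) and defers the construction to \cite{cg2}, so strictly within this paper's formal development the configuration-based argument is more self-contained. Your handling of the ``both in $C$'' case is correct but could be shortened by invoking Lemma~\ref{lem:es_refl_caus} directly on $\pr_\btau$.
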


The proof is in Appendix \ref{app:charcaus}.
So in the event-based view of interaction, immediate causal links 
originate in one of the components.
For $\bsigma$ and $\btau$ \emph{strategies}, one can
track down the responsible component via a polarity
analysis. Of course, it is usual in game semantics that
events of $\btau \inter \bsigma$ cannot sensibly be assigned a polarity
in $\{-, +\}$, because $\bsigma$ and $\btau$ disagree on $B$. A more
useful notion of polarity is $\pol_{\btau \inter \bsigma} : \ev{\btau
\inter \bsigma} \to \{-, \labl, \labr\}$ given by:
\[
\begin{array}{rclcl}
\pol_{\btau \inter \bsigma}(m) &=& \labl &\qquad& \text{if
$m_\sigma$ is defined and $\pol_\bsigma(m_\sigma) = +$,}\\
\pol_{\btau \inter \bsigma}(m) &=& \labr   && \text{if
$m_\btau$ is defined and $\pol_\btau(m_\btau) = +$,}\\
\pol_{\btau \inter \bsigma}(m) &=& - &&\text{otherwise}. 
\end{array}
\]

As an example, we show in Figure \ref{fig:ex_interaction} the polarities
arising from this definition. Then:

\begin{lem}\label{lem:caus_int}
Consider $\bsigma : A \vdash B$ and $\btau : B \vdash C$ strategies, and
$m \imc_{\btau \inter \bsigma} m'$. Then,
\[
\begin{array}{rl}
\text{\emph{(1)}}& \text{if $\pol_{\btau \inter \bsigma}(m') = \labl$,
then $m_\bsigma \imc_\bsigma m'_\bsigma$,}\\
\text{\emph{(2)}}& \text{if $\pol_{\btau \inter \bsigma}(m') = \labr$,
then $m_\btau \imc_\btau m'_\btau$,}\\
\text{\emph{(3)}}& \text{if $\pol_{\btau \inter \bsigma}(m') = -$, then
$\pr_{\btau \inter \bsigma}(m) \imc_{A \parallel B \parallel C}
\pr_{\btau \inter \bsigma}(m')$}\,. 
\end{array}
\]
\end{lem}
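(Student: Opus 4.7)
The plan is a case analysis on $\pol_{\btau \inter \bsigma}(m')$, in each case invoking Lemma \ref{lem:imc_mconf_inter} and then applying courtesy of the relevant strategy together with a general fact about causal strategies in forestial arenas: any non-minimal event $s \in \ev{\bsigma}$ has a unique \emph{justifier} (the lift in $\bsigma$ of the unique $\imc_A$-predecessor of $\pr_\bsigma(s)$), and this justifier is always an $\imc_\bsigma$-immediate predecessor of $s$. The latter follows from local injectivity of $\pr_\bsigma$ applied to the configuration $[s]_\bsigma$: any intermediate event in $\bsigma$ would project, by the same local injectivity, to a strictly intermediate event in the arena, contradicting the immediate character of the game-link.

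For case (3), the polarity $\pol_{\btau \inter \bsigma}(m') = -$ forces $m'$ to be unsynchronized, because a synchronized event has oppositely polarised $\bsigma$- and $\btau$-projections and therefore at least one positive projection. So $m'$ occurs in $A$ or in $C$; assume the former (the other subcase is symmetric). Then $m'_\btau$ is undefined, and Lemma \ref{lem:imc_mconf_inter} forces $m_\bsigma \imc_\bsigma m'_\bsigma$. Courtesy of $\bsigma$, applicable since $m'_\bsigma$ is negative, descends this to $\pr_\bsigma(m_\bsigma) \imc_{A \vdash B} \pr_\bsigma(m'_\bsigma)$; the right-hand side lives in the $A$ component, and an $\imc$-link in $A^\perp \parallel B$ cannot cross components, so $m$ also occurs in $A$. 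Reading this link through into $A \parallel B \parallel C$ yields the conclusion.

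For case (1), $m'_\bsigma$ is defined and positive in $\bsigma$. Lemma \ref{lem:imc_mconf_inter} offers either $m_\bsigma \imc_\bsigma m'_\bsigma$, which is the desired conclusion, or $m_\btau \imc_\btau m'_\btau$. In the latter alternative, $m'_\btau$ must be defined, which combined with $m'_\bsigma$ defined forces $m'$ to be synchronized in $B$; hence $m'_\btau$ is the dual of $m'_\bsigma$ and is negative in $\btau$. Courtesy of $\btau$ then descends the link to $\pr_\btau(m_\btau) \imc_{B \vdash C} \pr_\btau(m'_\btau)$, which again cannot cross components and so remains inside $B^\perp$, forcing $m$ to be synchronized in $B$ too. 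The same game-level link read through $\bsigma$'s projection gives $\pr_\bsigma(m_\bsigma) \imc_{A \vdash B} \pr_\bsigma(m'_\bsigma)$. In the forestial arena this identifies $\pr_\bsigma(m_\bsigma)$ as the unique $\imc$-predecessor of $\pr_\bsigma(m'_\bsigma)$, hence the image of the justifier $j$ of $m'_\bsigma$. Since both $m_\bsigma$ and $j$ sit in $[m'_\bsigma]_\bsigma$ and project to the same event, local injectivity forces $m_\bsigma = j$, and by the general fact stated above, $m_\bsigma \imc_\bsigma m'_\bsigma$. Case (2) is entirely symmetric, exchanging the roles of $\bsigma$ and $\btau$.

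The main obstacle is the second alternative of case (1) (and its mirror in (2)): Lemma \ref{lem:imc_mconf_inter} may present the immediate link as coming from the ``wrong'' component, and we need to redirect it. The ingredients are exactly courtesy (to descend the link to the arena) and the forestial/locally-injective structure (to pin down the responsible event on the other side); once this identification is made, the justifier-is-immediate-predecessor principle closes the argument.
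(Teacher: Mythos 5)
Your overall strategy is the paper's: case split on the polarity, invoke Lemma \ref{lem:imc_mconf_inter}, and in the mismatched alternative of case (1)/(2) use courtesy of the other strategy to push the link into the game and then back into the first strategy. Case (3) is fine and matches the paper. The problem is the final step of cases (1) and (2), where you close the argument with the ``general fact'' that for \emph{any} non-minimal event $s \in \ev{\bsigma}$ the justifier is an $\imc_\bsigma$-immediate predecessor of $s$. This is false for positive events, and your justification for it is flawed: an intermediate event $e$ with $\just(s) <_\bsigma e <_\bsigma s$ need not project to an event strictly between $\pr_\bsigma(\just(s))$ and $\pr_\bsigma(s)$ in the arena, because maps of event structures only locally \emph{reflect} causality, they do not preserve it. A concrete counterexample is the strategy of Figure \ref{fig:ex_aug1}: the justifier of $1^+$ is the initial question of $\tnat$, yet it is not an immediate predecessor of $1^+$ in the strategy, since the strategy interposes the moves $\qu^+$, $\qu^-$ (or $\done^-$) between them. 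The paper's Lemma \ref{lem:just_pred} gives immediacy of the justifier only for \emph{negative} events, and in your case (1) the event $m'_\bsigma$ is positive.

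The correct way to finish, and what the paper does, is to use the hypothesis $m \imc_{\btau \inter \bsigma} m'$ a second time: from $\pr_\bsigma(m_\bsigma) \imc_{A \vdash B} \pr_\bsigma(m'_\bsigma)$, local reflection of causality (Lemma \ref{lem:es_refl_caus}) gives only $m_\bsigma <_\bsigma m'_\bsigma$, and this is upgraded to $m_\bsigma \imc_\bsigma m'_\bsigma$ by Lemma \ref{lem:imc_aux}, whose proof exploits the covering-chain description of configurations of $\btau \inter \bsigma$ as causally compatible pairs. Your identification of $m_\bsigma$ with the justifier via local injectivity is fine, but without an argument of the Lemma \ref{lem:imc_aux} kind (i.e.\ without re-using the interaction-level immediacy) the jump from $m_\bsigma <_\bsigma m'_\bsigma$ to $m_\bsigma \imc_\bsigma m'_\bsigma$ is unjustified; purely arena-level reasoning cannot deliver it.
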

\begin{proof}
\emph{(1)} By Lemma \ref{lem:imc_mconf_inter}, 
$m_\bsigma, m'_\bsigma$ defined and $m_\bsigma \imc_\bsigma m'_\bsigma$
-- in which case we are done; or $m_\btau, m'_\btau$ defined and
$m_\btau \imc_\btau m'_\btau$. Since
$\pol_{\btau \inter \bsigma}(m') = \labl$,
$\pol_\btau(m'_\btau) = -$. By courtesy,
$\pr_{\btau}(m_\btau) \imc_{B\vdash C} \pr_{\btau}(m'_\btau)$; hence
$m$ occurs in $B$ and $\pr_\bsigma(m_\bsigma) \imc_{A\vdash B}
\pr_\bsigma(m'_\bsigma)$. By Lemma \ref{lem:es_refl_caus}, $m_\bsigma
<_\bsigma m'_\bsigma$, and the causality must be immediate by Lemma
\ref{lem:imc_aux}. \emph{(2)} is symmetric.

\emph{(3)} Assume $m'$ occurs in $A$, the other case is symmetric. In
that case only $m'_\bsigma$ is defined, so Lemma
\ref{lem:imc_mconf_inter} entails that $m_\bsigma$ is defined and
$m_\bsigma \imc_\bsigma m'_\bsigma$. But $\pol_\bsigma(m'_\bsigma) = -$,
so by courtesy $\pr_\bsigma(m_\bsigma) \imc_{A\vdash B}
\pr_\bsigma(m'_\bsigma)$, from which the conclusion follows.
\end{proof}

\subsection{Seely Category} Now, we turn to the different components of
a Seely category.

\subsubsection{Symmetric monoidal category with products}
On $-$-arenas, we keep the definitions for $\NegAlt$, enriched as
in Section~\ref{subsubsec:constr_arenas}.  The tensor of causal
strategies is defined below:

\begin{defi}
For $\bsigma_1 : A_1 \vdash B_1$, $\bsigma_2 : A_2 \vdash B_2$
causal strategies between $-$-arenas, then
\[
\bsigma_1 \tensor \bsigma_2 : A_1 \tensor A_2 \vdash B_1 \tensor B_2
\]
is defined as the ess $\bsigma_1 \parallel \bsigma_2$ along with display map
$\pr_{\bsigma_1 \tensor \bsigma_2}(i, s) = (j, (i, a))$ 
if $\bsigma_i(s) = (j, a)$.
\end{defi}

Bifunctoriality is direct via Proposition \ref{prop:comp_pcov}.
The symmetric monoidal structural isomorphisms are
provided by copycat strategies, only changing display maps:
\[
\begin{array}{rcrcl}
\balpha_{A, B, C} &:& (A \tensor B) \tensor C &\iso& A \tensor (B
\tensor C)\\
\mathbf{s}_{A, B} &:& A \tensor B &\iso & B \tensor A
\end{array}
\qquad
\begin{array}{rcrcl}
\brho_A &:& A \tensor \ees &\iso& A\\
\blambda_A &:& \ees \tensor A &\iso& A
\end{array}
\]
satisfying up to positive iso the expected naturality and coherence laws
\cite{cg2}.
For cartesian products, the \emph{projections}
$\bpi_1 : A_1 \with A_2 \vdash A_1$ and $\bpi_2 : A_1 \with A_2 \vdash
A_2$ are relabeled  copycat strategies,
while the \emph{pairing} of causal strategies is defined similarly to
the tensor:

\begin{defi}
Consider $\bsigma_1 : A \vdash B_1$ and $\btau : A \vdash B_2$ causal
strategies between $-$-arenas. Then,
$\tuple{\bsigma_1, \bsigma_2} : A \vdash B_1 \with B_2$
is defined as having ess $\bsigma_1 \with \bsigma_2$, along with
\[
\begin{array}{rclcl}
\pr_{\tuple{\bsigma_1, \bsigma_2}}(i, s) &=& (1, a)
&\qquad&
\text{if $\pr_{\bsigma_i}(s) = (1, a)$}\\
\pr_{\tuple{\bsigma_1, \bsigma_2}}(i,s) &=& (2, (i, b))
&& \text{if $\pr_{\bsigma_i}(s) = (2, b)$.}
\end{array}
\]
\end{defi}

It follows from Proposition \ref{prop:comp_pcov} and direct
verifications that this yields binary products.

\subsubsection{Monoidal closed structure}\label{subsubsec:cg_mon_clos}
We now describe the monoidal closure.

On objects, the closure is the arrow $A \lin B$ from $\NegAlt$. However,
for now, the strategies on $A\vdash B$ and on $A \lin B$ are \emph{not}
in one-to-one correspondence.
\begin{figure}
\[
\raisebox{60pt}{$
\xymatrix@R=8pt@C=3pt{
\tunit&\vdash & \tunit &\tensor &\tunit\\
&&\qu^-
        \ar@{-|>}[dll]&&
\qu^-   \ar@{-|>}[dllll]\\
\qu^+   \ar@{-|>}[d]\\
\done^- \ar@{.}@/^/[u]
        \ar@{-|>}[drr]
        \ar@{-|>}[drrrr]\\
&&\done^+
        \ar@{.}@/_/[uuu]&&
\done^+ \ar@{.}@/_/[uuu]
}$}
\qquad
\text{vs}
\qquad
\raisebox{60pt}{$
\xymatrix@R=8pt@C=3pt{
\tunit&\lin & \tunit &\tensor &\tunit\\
&&\qu^-
        \ar@{-|>}[dll]&&
\qu^-   \ar@{-|>}[dllll]\\
\qu^+   \ar@{-|>}[d]
        \ar@{.}@/^/[urr]\\
\done^- \ar@{.}@/^/[u]
        \ar@{-|>}[drr]
        \ar@{-|>}[drrrr]\\
&&\done^+
        \ar@{.}@/_/[uuu]&&
\done^+ \ar@{.}@/_/[uuu]
}$}
\quad
\text{and}
\quad
\raisebox{60pt}{$
\xymatrix@R=8pt@C=3pt{
\tunit&\lin & \tunit &\tensor &\tunit\\
&&\qu^-
        \ar@{-|>}[dll]&&
\qu^-   \ar@{-|>}[dllll]\\
\qu^+   \ar@{-|>}[d]
        \ar@{.}@/_/[urrrr]\\
\done^- \ar@{.}@/^/[u]
        \ar@{-|>}[drr]
        \ar@{-|>}[drrrr]\\
&&\done^+
        \ar@{.}@/_/[uuu]&&
\done^+ \ar@{.}@/_/[uuu]
}$}
\]
\caption{Non-uniqueness of the threading pointer}
\label{fig:non_uniq_ptr}
\end{figure}
Indeed strategies in $A\lin B$ include a \emph{pointer} for
initial moves in $A$, while strategies in $A\vdash B$ do not. This
pointer is not always unique, as illustrated in Figure
\ref{fig:non_uniq_ptr}. To cope with this we could have, as for
the play-based strategies of the previous sections, set the
morphisms of our category directly as strategies on $A \lin B$; but
in this causal setting that obfuscates composition.

Instead, we restrict to strategies for which this pointer reconstruction
is \emph{unique}:

\begin{defi}\label{def:caus_pointed}
A causal strategy $\bsigma : A$ on arena $A$ is \textbf{pointed} if for
each $s \in \ev{\bsigma}$ there is a unique event $\init(s) \in
\ev{\bsigma}$ which is minimal for $\leq_\bsigma$ and such that
$\init(s) \leq_\bsigma s$.
\end{defi}

Copycat strategies are pointed (as arenas are forestial),
and pointed strategies are stable under composition and the other
operations on strategies.  From now on, we consider that all
causal strategies are pointed.  We write $\CG$ for the category
having $-$-arenas as objects, and as morphisms from $A$ to $B$, the
pointed causal strategies on $A \vdash B$.

For pointed strategies, the missing pointer can always be recovered
uniquely:

\begin{lem}
Let $A, B$ and $C$ be $-$-arenas. Then, we have a bijection:
\[
\Lambda_{A,B,C}
\quad
:
\quad
\CG(A \tensor B, C) 
\quad
\simeq
\quad
\CG(A, B \lin C)
\]
\end{lem}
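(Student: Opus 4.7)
The plan is to exhibit $\Lambda_{A,B,C}$ as an explicit relabelling of the display map, using pointedness to supply canonical copy indices. The underlying event set of $(A \tensor B) \vdash C$ is $A^\perp \parallel B^\perp \parallel C$, whereas that of $A \vdash (B \lin C)$ is $A^\perp \parallel (\parallel_{c \in \min(C)} B)^\perp \parallel C$: the two arenas differ only in that $B$-moves are replicated over $\min(C)$ in the arrow form, with each copy $(c_0, b)$ game-causally depending on $c_0$.

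Pointedness supplies the missing bridge. For a pointed causal strategy $\bsigma : (A \tensor B) \vdash C$, every event $s$ has a unique minimal ancestor $\init(s)$, and I claim $\pr_\bsigma(\init(s))$ must lie in $\min(C)$: indeed, $\init(s)$ is minimal in $\bsigma$ hence negative, its image is forced to be minimal in the arena (otherwise down-closure of $[\init(s)]_\bsigma$ would contradict minimality in $\bsigma$), and the only negative minimal events of $(A \tensor B) \vdash C$ lie in $\min(C)$, because $A, B$ are $-$-arenas so all minimal events of $A^\perp$ and $B^\perp$ are positive. I then define $\Lambda(\bsigma)$ to have the same ess as $\bsigma$, with relabelled display map
\[
\pr'_\bsigma(s) =
\begin{cases}
(1,a) & \text{if } \pr_\bsigma(s) = (1,(1,a)),\\
(2,(2,c)) & \text{if } \pr_\bsigma(s) = (2,c),\\
(2,(1,(c_0,b))) & \text{if } \pr_\bsigma(s) = (1,(2,b)) \text{ and } \pr_\bsigma(\init(s)) = (2, c_0),
\end{cases}
\]
and the inverse $\Lambda^{-1}(\btau)$ simply strips the copy index. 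The two are mutually inverse at the level of data, provided one establishes the converse: for any pointed $\btau : A \vdash (B \lin C)$ and any event $t$ with $\pr_\btau(t) = (2,(1,(c_0,b)))$, the game-level predecessor $(2,(2,c_0))$ must be realised in $[t]_\btau$ by rule-abiding and down-closure; the witness $t' \leq_\btau t$ is itself minimal by the same courtesy/minimality argument, and so must coincide with $\init(t)$ by uniqueness.

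The bulk of the proof is then checking that $\Lambda(\bsigma)$ and $\Lambda^{-1}(\btau)$ satisfy the axioms of Definition \ref{def:caus_strat}. The only substantive new point is \emph{rule-abiding}: the new game-level dependency $(2,(2,c_0)) \imc (2,(1,(c_0,b)))$ is automatically satisfied in any configuration $x \ni s$ since $\init(s) \leq_\bsigma s$ forces $\init(s) \in x$ by down-closure, and the extra conflict induced by Lemma \ref{lem:conflict_lin} transfers because two distinct events mapping to the same $(c_0,b)$ would share the same $\init$ and so violate local injectivity. All remaining conditions — \emph{locally injective}, \emph{negative}, \emph{courteous}, \emph{receptive}, \emph{$\sim$-receptive}, \emph{thin}, and pointedness — transfer transparently since the ess of $\bsigma$ is untouched. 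For \emph{symmetry-preserving} I will invoke the facts that symmetries of $\bsigma$ preserve $\leq_\bsigma$ hence $\init$, and that $\tilde{B \lin C}$ is defined by transporting through $\chi_{B,C}$ the symmetries of $\parallel_{c \in \min(C)} B$, so that the reindexing induced by any $\theta \in \tilde{\bsigma}$ under the new display map lands in $\tilde{A \vdash (B \lin C)}$.

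The main obstacle I anticipate is not any single step but the bookkeeping around symmetries: one must verify that $\init(-)$ is stable under the action of $\tilde{\bsigma}$ so that the assigned copy indices are symmetry-invariant, and that the uniqueness demands of \emph{thin} and \emph{$\sim$-receptive} survive the relabelling. Once this is settled, $\Lambda$ and $\Lambda^{-1}$ are visibly mutually inverse, as both act as the identity on the underlying event structure with symmetry and merely relabel the display map bijectively.
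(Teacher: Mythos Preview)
Your proposal is correct and matches the paper's approach exactly: the paper gives the same relabelled display map using $\init(s)$ to recover the missing $c_0$, and dismisses the remaining axiom checks as ``a direct verification.'' Your write-up simply spells out that verification in more detail (including the argument that $\pr_\bsigma(\init(s)) \in \min(C)$, the inverse direction, and the symmetry bookkeeping), all of which are straightforward as you anticipated.
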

\begin{proof}
The bijection only affects the display map, leaving the other components
unchanged.

The non-trivial direction is from left to right. Consider $\bsigma :
A\tensor B \vdash C$. We set:
\[
\pr_{\Lambda(\bsigma)}(s) = 
\left\{
\begin{array}{lcl}
(1, a) &\quad&
        \text{if $\pr_{\bsigma}(s) = (1, (1, a))$,}\\
(2, (2, c))&&
        \text{if $\pr_{\bsigma}(s) = (2, c)$,}\\
(2, (1, (c, b)))&&
        \text{if $\pr_\bsigma(s) = (1, (2, b))$ and
$\pr_\bsigma(\init(s)) = (2, c)$.}
\end{array}\right.
\]

It is a direct verification that this yields a bijection as claimed.
\end{proof}

From this point, we may now easily wrap up the symmetric monoidal closed
structure.

\begin{prop}
The category $\CG$ is symmetric monoidal closed.
\end{prop}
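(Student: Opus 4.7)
The plan is to verify the universal property of monoidal closure by exhibiting an evaluation morphism and showing that the bijection $\Lambda_{A,B,C}$ from the preceding lemma is natural. Everything else---the symmetric monoidal structure on $\CG$ with structural isos given by relabeled copycats, and the internal hom on objects $B \lin C$---has already been set up. Copycat strategies are pointed (arenas are forestial), and pointedness is preserved by $\tensor$, $\odot$, and $\Lambda$, so we remain within $\CG$.

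First I would define evaluation by
\[
\evm_{B,C} \;=\; \Lambda_{B\lin C,\, B,\, C}^{-1}(\bcc_{B\lin C}) \;:\; (B \lin C) \tensor B \,\vdash\, C\,,
\]
exactly as in $\NegAlt$. The core verification is then that for each $\bsigma : A \tensor B \vdash C$, the morphism $\Lambda_{A,B,C}(\bsigma)$ is the unique one (up to $\simstrat$) satisfying
\[
\evm_{B,C} \odot (\Lambda_{A,B,C}(\bsigma) \tensor \bcc_B) \;\simstrat\; \bsigma\,.
\]
Key observation: $\Lambda_{A,B,C}$ is the identity on the underlying ess, modifying only the display map (as in the proof of the preceding lemma, it only rewrites how the initial pointer in $A$ is placed against $B \lin C$). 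Since composition of causal strategies is defined from causal compatibility of configurations and from $\pr$-induced synchronization on the middle arena, and since $\Lambda$ leaves events, causality, conflict, and symmetries untouched, the natural transformations involved can be checked at the level of $+$-covered configurations.

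The concrete verification I would carry out uses Propositions \ref{prop:comp_pcov} and \ref{prop:cc_neutral}: $\confp{\evm_{B,C} \odot (\Lambda(\bsigma) \tensor \bcc_B)}$ consists of causally compatible pairs of $+$-covered configurations on the two sides, but $\evm_{B,C}$ is itself obtained from $\bcc_{B\lin C}$ by a display-map relabel, so by Lemma \ref{lem:cc_pcov} the synchronization against $\evm_{B,C}$ collapses to the identity on the $A, B, C$-events of $\bsigma$ up to the relabeling inverse to $\Lambda$. This gives the triangle identity up to positive iso. Naturality of $\Lambda$ in $A$ (for $\btau : A' \vdash A$, $\Lambda(\bsigma \odot (\btau \tensor \bcc_B)) \simstrat \Lambda(\bsigma) \odot \btau$) and in $B, C$ (via copycats) is similar: both sides have the same ess and causal structure, and the display maps agree by inspection of cases in the definition of $\Lambda$.

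The main obstacle is bookkeeping the pointer relabeling in $\Lambda$ against the case analysis classifying events of an interaction into ``in $A$'', ``in $B$'', ``in $C$''. In particular the triangle identity requires carefully tracking the newly introduced immediate causal link from the copy of $C$'s minimal event in $B \lin C$ to each minimal event of $B$: one must verify that after composition with $\evm_{B,C}$ and $\bcc_B$, no extra causal constraint survives beyond those already in $\bsigma$. This is precisely what courtesy of $\bcc_{B \lin C}$ and $\bcc_B$ guarantees, via Lemma \ref{lem:caus_int}.
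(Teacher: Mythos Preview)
Your proposal is correct and follows essentially the same approach as the paper: define $\evm_{B,C} = \Lambda^{-1}(\bcc_{B\lin C})$, then verify the triangle identity $\evm_{B,C} \odot (\Lambda(\bsigma) \tensor \bcc_B) \simstrat \bsigma$ as a variation of copycat neutrality via Proposition~\ref{prop:comp_pcov} and Lemma~\ref{lem:cc_pcov}. The paper's proof is much terser (it simply says the triangle identity ``follows from a variation over the neutrality of copycat for composition'' and that ``the universal property is routine''), but you have expanded exactly the steps it gestures at.
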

\begin{proof}
First, for any $-$-arenas $A$ and $B$, we have $A\lin B$ and an
\textbf{evaluation}
\[
\evm_{A, B} = \Lambda_{A\lin B, A, B}^{-1}(\cc_{A\lin B}) : (A\lin B)
\tensor A \vdash B\,,
\]
and given $\bsigma : A \tensor B \vdash C$, 
$\evm_{B, C} \odot (\Lambda_{A,B,C}(\bsigma) \tensor B) \simstrat
\bsigma$ follows from a variation over the neutrality of copycat for
composition. From there, the universal property is routine.
\end{proof}

\subsubsection{The exponential}
The first step is to introduce a functor $\oc : \CG \to \CG$.

\begin{defi}
Consider $A$ and $B$ two $-$-arenas, and $\bsigma : A \vdash B$ a causal
strategy.

We define a strategy $\oc \bsigma :\,\oc A \vdash\,\oc B$ with
$\oc \bsigma$ as event structure with symmetry and:
\[
\pr_{\oc \bsigma}(\grey{i}, m) = 
\left\{\begin{array}{lcl}
(1, (\grey{i}, a)) &\quad&
        \text{if $\pr_\bsigma(m) = (1, a)$,}\\
(2, (\grey{i}, b)) &&
        \text{if $\pr_\bsigma(m) = (2, b)$.}
\end{array}\right.
\]
\end{defi}

It is a direct verification that this defines a causal strategy, and
functoriality is proved as for the tensor product of strategies.
To complete the categorical structure, we have:

\begin{prop}
The category $\CG$ is a Seely category.
\end{prop}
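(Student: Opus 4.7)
The plan is to complete the Seely category structure along the lines of AJM games, following the same pattern as was used for $\NegAlt$ (Section \ref{subsubsec:seely}), but now working with causal strategies and verifying everything up to positive isomorphism $\simstrat$.

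First I would define the remaining data. The \emph{dereliction} $\bder_A : \oc A \vdash A$ is a relabelled copycat: take $\bcc_A$ on events and causality, and set the display map to send a move on the left-hand side of $\bcc_A$ to $(\grey{0}, a)$ in $\oc A$, and a move on the right to $a$ in $A$. The \emph{digging} $\bdig_A : \oc A \vdash \oc \oc A$ is likewise a relabelled copycat, using any chosen bijection $\flow : \mathbb{N} \simeq \mathbb{N} \times \mathbb{N}$ to translate an event $(\grey{i}, (\grey{j}, a))$ of $\oc \oc A$ to $(\flow^{-1}(i,j), a)$ in $\oc A$. The \emph{Seely iso} $\bmon^2_{A,B} : \oc A \tensor \oc B \vdash \oc (A\with B)$ is again a relabelled copycat, using a bijection $\inj : \mathbb{N} \uplus \mathbb{N} \simeq \mathbb{N}$ to merge the two exponentials into one. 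Finally $\bmon^0 : \ees \vdash \oc \ees$ is trivially the empty strategy.

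Next I would verify that these are causal strategies. \emph{Receptivity}, \emph{courtesy} and $\sim$-receptivity follow immediately from the corresponding properties of $\bcc$. The only delicate point concerns \emph{$\sim$-receptive} and \emph{thin}, which are sensitive to the polarized decomposition on $\oc A$: we must check that the chosen relabellings are compatible with the fact that positive symmetries in $\ptilde{\oc A}$ fix the copy indices of minimal events, while negative ones in $\ntilde{\oc A}$ are free to permute them. This is exactly why the specific value $\grey{0}$ picked in $\bder$ is harmless: the reindexing freedom of negative symmetries absorbs it. Functoriality of $\oc$ and naturality of $\bder$ and $\bdig$ reduce, via Proposition \ref{prop:comp_pcov}, to a routine computation on $+$-covered configurations, exploiting that composition with a relabelled copycat amounts to post-composing a bijection on the $+$-covered configurations.

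For the comonad and Seely laws, I would systematically use Proposition \ref{prop:comp_pcov} and Lemma \ref{lem:cc_pcov}: on both sides of each required equation, the $+$-covered configurations and symmetries (together with their displays) can be computed as subsets of a product of $+$-covered configurations cut down by causal compatibility, and in each case the reshuffling by $\flow$ or $\inj$ makes them match. Concretely, the triangle and pentagon for $(\oc, \bder, \bdig)$, the compatibility of $\bmon^2, \bmon^0$ with associator/unitor/symmetry, and the Seely coherence between the monoidal structure and products all reduce to the equality of bijections between (isomorphism classes of) reindexings of copy indices, up to $\simstrat$.

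The main obstacle, as in AJM games, is not any single categorical diagram but rather maintaining the correct bookkeeping of copy indices across all these coherences, ensuring that the equalities hold \emph{up to positive isomorphism} and not merely up to arbitrary symmetry. The key leverage is that $\simstrat$ is preserved by composition and that $\bcc$-like strategies are rigid enough on $+$-covered configurations that their compositions are determined by their action on copy indices of minimal events; thus each coherence law reduces to checking commutativity of a finite diagram of bijections between index sets built from $\flow$, $\inj$ and identities, all of which hold by construction.
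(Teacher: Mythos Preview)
Your proposal is correct and follows essentially the same approach as the paper: define $\bder_A$, $\bdig_A$, $\bmon^2_{A,B}$, $\bmon^0$ as relabelled copycat strategies (via suitable bijections on copy indices), and verify naturality and coherence using Proposition \ref{prop:comp_pcov}. The paper's proof is considerably terser than yours, but the underlying strategy is identical; your additional remarks on the polarized decomposition and the role of \emph{thin}/$\sim$-receptivity are accurate elaborations of points the paper leaves implicit.
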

\begin{proof}
The structure presented above is completed by structural natural
families of
strategies:
\[
\begin{array}{rcrcl}
\bdig_A &:& \oc A &\to& \oc \oc A\\
\bder_A &:& \oc A &\to & A
\end{array}
\qquad
\qquad
\begin{array}{rcrcl}
\bmon^2_{A, B} &:& \oc A \tensor \oc B &\iso& \oc (A \with B)\\
\bmon^0 &:& \ees &\iso& \oc \ees\,.
\end{array}
\]
making $(\oc, \bdig, \bder)$ a comonad along with the \emph{Seely
isomorphisms}. Those are all relabeled copycat strategies:
for instance, $\bdig_A$ is $\bcc_{\oc \oc A}$ relabeled on the left
hand side following a bijection $\mathbb{N} \times \mathbb{N} \simeq
\mathbb{N}$, $\bder_A$ is $\bcc_A$ relabeled to set events on the
left hand side to copy index $\grey{0}$, \emph{etc}.
The naturality and coherence are easily verified, exploiting again
Proposition \ref{prop:comp_pcov}.
\end{proof}

\subsubsection{Extracting plays} In Section~\ref{subsubsec:gen_nalt}, we
unfolded causal strategies to non-alternating strategies. Here, we show
that this is compatible with the categorical operations. 

First, we extend the definition in Proposition \ref{prop:def_naltstrat}
for causal strategies \emph{from $A$ to $B$}.

\begin{defi}
For $A$ and $B$ two $-$-arenas, and $\bsigma : A \vdash B$ a causal
strategy, we define
\[
\NAltStrat(\bsigma) = \pr_{\Lambda(\bsigma)}(\NAlt(\bsigma)) \in
\NNegAlt(A, B)
\]
exploiting that $\bsigma$ and $\Lambda(\bsigma)$ only differ via their
display map.
\end{defi}

This matches applying Proposition \ref{prop:def_naltstrat}
to $\Lambda(\bsigma) : A \lin B$ obtained by monoidal closure.

\begin{prop}\label{prop:nalt_fonc}
There is a symmetric monoidal closed
$\NAltStrat(-) : \CG \to \NNegAlt$.
\end{prop}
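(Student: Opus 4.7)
The plan is to verify the three pieces: $\NAltStrat(-)$ is a functor, and preserves the tensor and the internal hom. The definitions on objects are the identity, so all the content lies in the morphism side.

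First I would handle preservation of identities and of all structural isomorphisms in one go. The copycat strategy $\bcc_A$ is characterized by Lemma \ref{lem:cc_pcov}: its $+$-covered configurations are exactly $\{x_A \parallel x_A \mid x_A \in \conf{A}\}$. Unfolding by Definition \ref{def:alt_play}/\ref{def:nalt_play} and using \emph{courtesy} of $\bcc_A$, one checks that the non-alternating plays of $\bcc_A$ are exactly the copycat plays of $\NNegAltwb$ (in the sense of Appendix \ref{app:nalt}). Since $\balpha, \brho, \blambda, \mathbf{s}, \bpi_i, \evm, \bder, \bdig, \bmon^{0/2}$ are all obtained from copycat by mere relabeling of the display map, the same argument handles all structural isomorphisms uniformly.

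The core of the proof is preservation of composition: for $\bsigma : A \vdash B$ and $\btau : B \vdash C$ causal strategies, I would show
\[
\NAltStrat(\btau \odot \bsigma) \;=\; \NAltStrat(\btau) \odot \NAltStrat(\bsigma)
\]
where the composition on the right is that of $\NNegAltwb$, defined via parallel interaction and hiding. The $(\subseteq)$ direction: a play $s \in \NAltStrat(\btau \odot \bsigma)$ comes from some $t \in \NAlt(\btau \odot \bsigma)$ whose underlying set is a configuration $x \in \conf{\btau \odot \bsigma}$. By Proposition \ref{prop:char_conf_comp}, $x$ extends canonically to a configuration of $\btau \inter \bsigma$ (adding the synchronized events required to make the pair minimal causally compatible), and by the characterization of immediate causality in the interaction (Lemma \ref{lem:imc_mconf_inter}) the linear extension $t$ of $x$ can be completed to a linear extension $u$ of the full interaction configuration. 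Projecting $u$ via $\pr_\bsigma$ and $\pr_\btau$ yields plays in $\NAltStrat(\bsigma)$ and $\NAltStrat(\btau)$ which witness $s$ as an element of the $\NNegAlt$-composition. The reverse inclusion is where the main obstacle lies: given matching witnesses $s^\bsigma \in \NAltStrat(\bsigma)$ and $s^\btau \in \NAltStrat(\btau)$ whose interaction projects to a valid non-alternating play, we must assemble them into a configuration of $\btau \inter \bsigma$. Here the interleaving given by the interaction directly witnesses a \emph{securedness} property of the induced bijection on the matching $B$-states, so by Proposition \ref{prop:main_interaction2} we get a configuration of $\btau \inter \bsigma$ whose projection onto $A\parallel C$ is the desired play.

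For the tensor, $\bsigma_1 \tensor \bsigma_2$ is the parallel composition of event structures, so $\NAlt(\bsigma_1 \parallel \bsigma_2)$ consists exactly of interleavings of plays of $\bsigma_1$ and $\bsigma_2$ — matching the play-based definition of $\NAltStrat(\bsigma_1) \tensor \NAltStrat(\bsigma_2)$. Preservation of the closure transposition $\Lambda$ is free: by construction $\Lambda$ only reshuffles the display map without touching events, causality, conflict, or symmetry, so it does not affect the set of non-alternating plays modulo the bijection $A \tensor B \vdash C \simeq A \vdash (B \lin C)$ on the game events. Combined with preservation of $\evm$ (a copycat) this gives the closed structure; the product is handled identically via the pairing. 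The only genuinely delicate verification is the composition argument above, and the main obstacle there is the combinatorial shuffling between a total linearization of $\btau \inter \bsigma$ and separate linearizations of $\bsigma$ and $\btau$, which is precisely what securedness of the induced bijection enables.
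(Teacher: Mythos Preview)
Your proposal is correct and follows essentially the same approach as the paper: identities are handled by matching the configurations of $\bcc_A$ against the asynchronous copycat of $\NNegAlt$, and composition is proved by the two inclusions you describe, with securedness in the $(\supseteq)$ direction witnessed by the interleaving $u$ exactly as you say. The only minor difference is that the paper cites an external characterization of \emph{all} configurations of $\bcc_A$ (not just $+$-covered ones) for the identity case, whereas you reconstruct this from Lemma~\ref{lem:cc_pcov} plus courtesy; this works but is slightly more indirect than having the full Scott-order characterization $y \scott x$ of $\conf{\bcc_A}$ at hand.
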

\begin{proof}
For identities, the definition of plays of the
asynchronous copycat (Definition \ref{def:asy_cc}) follows the
characterisation of configurations of $\bcc_A$ found \emph{e.g.} in
Lemma 3.11 in \cite{cg1}.

For composition, take $\bsigma : A \vdash B$ and
$\btau : B \vdash C$. Though $\btau \inter
\bsigma$ is not an esp,
Definition \ref{def:nalt_play} generalizes transparently to
$\NAlt(\btau\inter \bsigma)$. There are two inclusions to check:

$\NAltStrat(\btau \odot \bsigma) \subseteq \NAltStrat(\btau) \odot
\NAltStrat(\bsigma)$: any $s \in \NAltStrat(\btau \odot \bsigma)$ has
the form $\pr_{\Lambda(\btau \odot \bsigma)}(t)$ for $t \in \NAlt(\btau
\odot \bsigma)$, which in turn can be completed to $v \in \NAlt(\btau
\inter \bsigma)$. Then $v$ may be displayed to $u \in \NAltStrat(\btau)
\inter \NAltStrat(\bsigma)$, witnessing $s \in \NAltStrat(\btau) \odot
\NAltStrat(\bsigma)$.

$\NAltStrat(\btau) \odot \NAltStrat(\bsigma) \subseteq \NAltStrat(\btau
\odot \bsigma)$: any $s \in \NAltStrat(\btau) \odot \NAltStrat(\bsigma)$
has a witness $u \in \NAltStrat(\btau) \inter \NAltStrat(\bsigma)$,
projecting to $u \restrict A, B \in \NAltStrat(\bsigma)$ and $u \restrict B,
C \in \NAltStrat(\btau)$. Those are respectively
$\pr_{\Lambda(\bsigma)}(s^\bsigma)$ and $\pr_{\Lambda(\btau)}(s^\btau)$
for $s^\bsigma \in \NAlt(\bsigma)$ and $s^\btau \in \NAlt(\btau)$. 
Then $x^\bsigma := \ev{s^\bsigma}$ and $x^\btau := \ev{s^\btau}$
are causally compatible as
$u$ induces a linearization of the corresponding bijection. By
construction, $x^\btau \inter x^\bsigma$ has a linearization $v$ that
displays to $u$; and its restriction to visible events yields $t \in
\NAlt(\btau \odot \bsigma)$ that displays to $s$.

The preservation of the monoidal structure is direct; the functor is
strict monoidal.
\end{proof}

As $\NNegAlt$ does not handle symmetry, it supports no
equivalence relation corresponding to $\simstrat$.  Nevertheless, this
lets us import the stability of well-bracketing
under composition. We first generalize Definition \ref{def:caus_wb_nalt}
to well-bracketed causal strategies between $-$-arenas:

\begin{defi}\label{def:wb_caus_strat}
Consider $-$-arenas $A, B$, and $\bsigma : A\vdash B$ a causal strategy.

We say that $\bsigma : A \vdash B$ is \textbf{well-bracketed} iff
$\NAltStrat(\bsigma)$ is well-bracketed.
\end{defi}

From Propositions \ref{prop:nalt_smcc} and \ref{prop:nalt_fonc}, there
is an smcc with products $\CGwb$ of $-$-arenas
and well-bracketed causal strategies. Finally, $\oc(-)$ preserves
well-bracketing and the other components for the exponential are
well-bracketed, so $\CGwb$ extends to a Seely category.

\subsection{Interpretation of \texorpdfstring{$\IPA$}{IA//}} We now describe the interpretation
of $\IPA$ in $\CGwb_\oc$. First the \emph{types} of $\IPA$ are the same
as those of $\IA$; their interpretation does not change.

\subsubsection{Interpretation of core $\PCF$}\label{subsubsec:core_pcf}
We focus on the core $\PCF$ primitives, postponing $\mathbf{let}$.

The $\lambda$-calculus primitives are interpreted following the
cartesian closed structure of $\CGwb_\oc$. For constants, we use the
obvious strategies returning the corresponding value. For basic $\PCF$
combinators, there are obvious causal strategies corresponding to
Figures \ref{fig:intr_pcf} and \ref{fig:strat_pred}.  For recursion, we
must first define a partial order on causal strategies: 

\begin{defi}
Consider $A$ an arena, and $\bsigma, \btau : A$ causal (pre)strategies.

We write $\bsigma \cleq \btau$ iff $\conf{\bsigma} \subseteq
\conf{\btau}$ -- so $\ev{\bsigma} \subseteq \ev{\btau}$ as well --
with, additionally:
\[
\begin{array}{rl}
\text{\emph{(1)}} & 
\text{for all $s_1, s_2 \in \ev{\bsigma}$, $s_1 \leq_\bsigma s_2$ iff
$s_1 \leq_\btau s_2$,}\\
\text{\emph{(2)}} & 
\text{for all $s_1, s_2 \in \ev{\bsigma}$, $s_1 \conflict_\bsigma s_2$
iff $s_1 \conflict_\btau s_2$,}\\
\text{\emph{(3)}} &
\text{for all $x, y \in \conf{\bsigma}$ and bijection $\theta : x \simeq
y$, we have $\theta \in \tilde{\bsigma}$ iff $\theta \in
\tilde{\btau}$,}\\
\text{\emph{(4)}} &
\text{for all $s \in \ev{\bsigma}$, $\pr_\bsigma(s) = \pr_\btau(s)$,}
\end{array}
\]
\emph{i.e.} all components compatible with the inclusion.
\end{defi}

Causal strategies on $A$, ordered by $\cleq$, form a
\emph{directed complete partial order}; however without a least
element. Indeed, strategies minimal for $\cleq$ still
have -- by receptivity -- events corresponding to the minimal events of
$A$, but those are \emph{named} arbitrarily. We solve
this as in \cite{cg2}: we choose one minimal causal strategy $\bot_A :
A$ with events \emph{exactly} those negative minimal in $A$; induced
causality, conflict, and isomorphism family; and as display map the
identity. For any $\bsigma : A$, we pick $\bsigma^\flat \iso
\bsigma : A$ such that $\bot_A \cleq \bsigma^\flat$, obtained
by renaming minimal events. We write $\D_A$ for the pointed dcpo of
causal strategies above $\bot_A$.

All operations on strategies examined so far are continuous.
So, the operation
\[
\begin{array}{rcrcl}
F &:& \D_{\emptyar \vdash (A \to A) \to A} &\to& \D_{\emptyar \vdash (A \to A) \to A}\\
&& \sigma &\mapsto& (\lambda f^{A \to A}.\,f\,(\sigma\,f))^\flat\,,
\end{array}
\]
in $\lambda$-calculus syntax following the cartesian closed structure of
$\CGwb_\oc$, is continuous. Thus it has a least fixed point $\Y_A \in
\D_{\emptyar \vdash (A \to A) \to A}$, \emph{i.e.} such that $\Y_A =
F(\Y_A)$.  

\subsubsection{Interpretation of
$\mathbf{let}$}\label{subsubsec:intr_plet}
We give the interpretation of parallel $\mathbf{let}$, as
\[
\intr{\Gamma \vdash \plet{x_1}{N_1}{x_2}{N_2}{M} : \ty}
= \mathsf{plet}_{\tx,\ty} \odot_\oc \tuple{\intr{N_1}, \intr{N_2},
\Lambda^\oc_{\tx\with \tx}(\intr{M})}\,.
\]
where $\mathsf{plet}_{\tx,\ty} \in \CGwb_\oc(\tx \with \tx \with
((\tx\with \tx) \to \ty), \ty)$ first
evaluates its two arguments $\tx$ in parallel. Once they both terminate
on $v$ and $w$, it calls its function argument, with $\tuple{v, w}$.

More formally, we first define a prestrategy ``forcing'' evaluation to
$v, w$, \emph{i.e.}
\[
\force_{v, w} : \oc (\tx \with \tx \with ((\tx\with \tx) \to
\ty)) \vdash \ty
\]
as in Figure \ref{fig:force} -- only a prestrategy, not
receptive to other values.
\begin{figure}
\begin{minipage}{.45\linewidth}
\[
\xymatrix@R=2pt@C=-3pt{
\oc ((\tx &\with& \tx) &\with &(( \tx & \with& \tx) &\to& \ty))
&\vdash& \ty\\
&&&&&&&&&&\qu^-
        \ar@{-|>}[ddllllllllll]
        \ar@{-|>}[ddllllllll]\\~\\
\qu^+_{\grey{0}} 
        \ar@{-|>}[dd]&&
\qu^+_{\grey{1}} 
        \ar@{-|>}[dd]\\~\\
{v^-_\grey{0}}
        \ar@{.}@/^.1pc/[uu]
&&{w^-_\grey{1}}
        \ar@{.}@/^.1pc/[uu]\\~\\~\\~\\~\\~\\
}
\]
\caption{The prestrategy $\force_{v, w}$}
\label{fig:force}
\end{minipage}
\hfill
\begin{minipage}{.45\linewidth}
\[
\xymatrix@R=2pt@C=-3pt{
\oc ((\tx &\with& \tx) &\with &(( \tx & \with& \tx) &\to& \ty))
&\vdash& \ty\\
&&&&&&&&&&\qu^-
        \ar@{-|>}[dllllllllll]
        \ar@{-|>}[dllllllll]\\
\qu^+_{\grey{0}}
        \ar@{-|>}[d]&&
\qu^+_{\grey{1}}
        \ar@{-|>}[d]\\
v^-_{\grey{0}}
        \ar@{.}@/^.1pc/[u]
        \ar@{-|>}[drrrrrrrr]&&
w^-_{\grey{1}}
        \ar@{.}@/^.1pc/[u]
        \ar@{-|>}[drrrrrr]\\
&&&&&&&&\qu^+_{\grey{3}}
        \ar@{-|>}[dllll]
        \ar@{-|>}[dll]
        \ar@{-|>}[d]\\
&&&&\qu^-_{\grey{3, i}}
        \ar@{.}@/^/[urrrr]
        \ar@{-|>}[d]&&
\qu^-_{\grey{3,j}}
        \ar@{.}@/^/[urr]
        \ar@{-|>}[d]&&
u^-_{\grey{3}}
        \ar@{-|>}[drr]\\
&&&&
v^+_{\grey{3, i}}
        \ar@{.}@/^.1pc/[u]&&
w^+_{\grey{3, j}}
        \ar@{.}@/^.1pc/[u]&&&&
u^+     \ar@{.}@/_/[uuuuu]
}
\]
\caption{The strategy $\mathsf{plet}_{\tx, \ty}$}
\label{fig:plet}
\end{minipage}
\end{figure}
Likewise,
\[
\preval_{v, w} : \oc (\tx \with \tx \with ((\tx\with \tx) \to
\ty)) \vdash \ty
\]
is defined by first using the cartesian closed structure of
$\CGwb_\oc$ to obtain a strategy
\[
\lambda \tuple{x_1, x_2, f}.\,f\,\tuple{v, w} \in
\CGwb_\oc(\tx \with \tx \with ((\tx\with \tx) \to \ty), \ty)\,,
\]
evaluating $f$ on $\tuple{v, w}$, from which $\preval_{v, w}$ is
obtained simply by removing the initial $\qu^-$. Up to reindexing, we
assume $\preval_{v, w}$ does not use copy indices
$\grey{0}$ and $\grey{1}$ on the calls to context.

Finally, this lets us define $\mathsf{plet}$ via the expression
(the supremum refers to $\cleq$):
\[
\mathsf{plet}_{\tx, \ty} = \bigvee_{v, w \in \tx} 
\force_{v, w}
\cdot \preval_{v, w} \in \CGwb_\oc(\tx \with \tx \with
((\tx\with \tx)
\to \ty), \ty)
\]
where the concatenation $\cdot$ sets both maximal (negative) events of
$\force_{v, w}$ as dependencies for $\preval_{v, w}$. The
full strategy is represented in Figure \ref{fig:plet}. Unlike for Figure
\ref{fig:force}, the picture in Figure \ref{fig:plet} is to be read as a
symbolic representation. The concrete strategy $\mathsf{plet}_{\tx,
\ty}$ has patterns as in Figure \ref{fig:plet} for all concrete values
for $v, w, u$ and copy indices $\grey{i}$ and $\grey{j}$.

The interpretation of the sequential $\mathsf{let}$ can be
obtained as a simplification.

\subsubsection{Interpretation of interference} \label{subsubsec:intr_inter}
Now, we interpret shared state and semaphores. By and large, it closely
follows the sequential interpretation of Sections~\ref{subsubsec:seq_inter_mem} and \ref{subsubsec:newrs}.

For the primitives interacting with memory and semaphores, we use the
same definitions as in Section~\ref{subsubsec:seq_inter_mem}, with the
obvious causal strategies matching Figures \ref{fig:assign},
\ref{fig:deref}, \ref{fig:grab}, and \ref{fig:release}.

For new references, we regard the alternating strategy
$\cell_n$ from Section~\ref{subsubsec:newrs} as a (sequential) event
structure: its events $\ev{\bcell_n}$ are the non-empty plays, the
causal order is given by prefix, the conflicting pairs are all
non-comparable plays.  The display map
\[
\begin{array}{rcrcl}
\pr_{\bcell_n} &:& \ev{\bcell_n} &\to& \oc \var\\
&& sa &\mapsto& a 
\end{array}
\]
keeps the last move. As configurations are sets of
prefixes of a given play, we set $\tilde{\bcell_n}$ to comprise those
bijections induced by plays $s_1 \sym_A s_2$ symmetric on $A$ (see
Definition \ref{def:alt_play_equiv}).

\begin{figure}
\begin{minipage}{.45\linewidth}
\[
\xymatrix@C=7pt@R=8pt{
&\read^-_{\grey{8}}
        \ar@{~}[rrr]
        \ar@{-|>}[d]&&&
\mwrite{4}^-_{\grey{2}}
        \ar@{~}[rrr]
        \ar@{-|>}[d]&&&
\mwrite{6}^-_{\grey{2}}
        \ar@{-|>}[d]\\
&0^+_{\grey{8}}
        \ar@{.}@/^.1pc/[u]&&&
\ok^+_{\grey{2}}
        \ar@{.}@/^.1pc/[u]
        \ar@{-|>}[dl]
        \ar@{-|>}[d]
        \ar@{-|>}[dr]&&&
\ok^+_{\grey{2}}
        \ar@{.}@/^.1pc/[u]\\
&&&
\read^-_{\grey{8}}
        \ar@{~}[r]
        \ar@{-|>}[d]&
\mwrite{6}^-_{\grey{3}}
        \ar@{~}[r]
        \ar@{-|>}[d]&
\mwrite{7}^-_{\grey{4}}
        \ar@{-|>}[d]\\
&&&4^+_{\grey{8}}
        \ar@{.}@/^.1pc/[u]&
\ok^+_{\grey{3}}
        \ar@{.}@/^.1pc/[u]&
\ok^+_{\grey{4}}
        \ar@{.}@/^.1pc/[u]
}
\]
\caption{Beginning of $\bcell_0$}
\label{fig:caus_cell}
\end{minipage}
\hfill
\begin{minipage}{.45\linewidth}
\[
\xymatrix@C=1pt@R=9pt{
\mgrab^-_{\grey{0}}
        \ar@{~}[rrr]
        \ar@{-|>}[d]&&&
\mgrab^-_{\grey{1}}
        \ar@{-|>}[d]
\\
\ok^+_{\grey{0}}
        \ar@{.}@/^.1pc/[u]&&&
\ok^+_{\grey{1}}
        \ar@{.}@/^.1pc/[u]
        \ar@{-|>}[d]\\
&&&\mrelease^-_{\grey{2}}
        \ar@{-|>}[d]\\
&&&\ok^+_{\grey{2}}
        \ar@{.}@/^.1pc/[u]
}
\]
\caption{Beginning of $\bsem_0$}
\label{fig:caus_sem}
\end{minipage}
\end{figure}

We display in Figure \ref{fig:caus_cell} a few early moves of
$\bcell_0$, with the convention that all moves in the same row are in
pairwise conflict. In this diagram we observe that $\bcell_0$ fails
\emph{courtesy}: indeed, the immediate causal link $\done^+_{\grey{2}}
\imc \read^-_{\grey{8}}$, for instance, would be illegal for a strategy.

Note that although $\bcell_n : \oc \var$ is only a prestrategy and not a
strategy, we have:

\begin{prop}
For all $\bsigma : \oc \var \vdash A$, for all $n \in \mathbb{N}$,
$\bsigma \odot \bcell_n : A$ is a strategy.
\end{prop}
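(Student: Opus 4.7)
The plan is to establish the three strategy-only conditions --- negative, receptive, courteous --- for $\bsigma \odot \bcell_n$, since the other conditions defining a prestrategy are inherited from the general composition recipe. The approach exploits a simple but crucial observation: the failures of courtesy in $\bcell_n$ (the immediate causal links of the form $\ok^+_{\grey{i}} \imc \mathsf{op}^-_{\grey{j}}$ across distinct copy indices) involve only events lying in $\oc \var$, which are synchronized away in the composition.

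First, I would confirm the prestrategy conditions for $\bsigma \odot \bcell_n$. Rule-abiding, locally injective, symmetry-preserving and thin follow from the general composition of prestrategies as stated in the paper just before the proposition. For $\sim$-receptivity --- which is the only prestrategy property the general composition does not automatically preserve --- the isomorphism family $\tilde{\bcell_n}$ is precisely built to reflect the AJM-style symmetry $\sym_{\oc \var}$ on plays and is itself $\sim$-receptive; combined with $\sim$-receptivity of $\bsigma$ (a strategy), the same argument used in \cite{cg2} for composition of strategies applies. Next, I verify negativity and receptivity. For negativity, if $m$ is minimal in $\bsigma \odot \bcell_n$, rule-abiding applied to $\{m\} \in \conf{\bsigma \odot \bcell_n}$ forces $\pr(m)$ to be minimal in $A$; since $A$ is a $-$-arena, $\pr(m)$ is negative. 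For receptivity, an Opponent extension $x \vdash_A a^-$ lifts via Proposition \ref{prop:main_interaction} to the causally compatible pair $(x^{\bcell_n}, x^\bsigma)$ witnessing $x$; receptivity of $\bsigma$ extends $x^\bsigma$ by a unique negative event projecting to $(2,a)$, which, being in $A$, has no $\bcell_n$-counterpart, so the extended pair remains causally compatible and descends uniquely to an extension of $x$ in $\bsigma \odot \bcell_n$.

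The main obstacle, and heart of the proof, is courtesy. Suppose $m_1 \imc_{\bsigma \odot \bcell_n} m_2$ with $\pol(m_1)=+$ or $\pol(m_2)=-$; by definition of the composition, $m_1 <_{\bsigma \inter \bcell_n} m_2$ via a chain of immediate causalities $m_1 = e_0 \imc \cdots \imc e_k = m_2$ in $\bsigma \inter \bcell_n$ with the interior events $e_1, \ldots, e_{k-1}$ all hidden (synchronized in $\oc \var$). Since $m_2$ is visible, it has no $\bcell_n$-counterpart, so Lemma \ref{lem:imc_mconf_inter} forces the last step $e_{k-1} \imc_{\bsigma \inter \bcell_n} e_k$ to originate from $\bsigma$: $(e_{k-1})_\bsigma \imc_\bsigma m_2$. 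In the case $\pol(m_2)=-$, courtesy of $\bsigma$ gives $\pr_\bsigma((e_{k-1})_\bsigma) \imc_{\oc \var \vdash A} \pr_\bsigma(m_2)$. But $\pr_\bsigma(m_2) = (2, \pr(m_2))$ lies on the $A$-side of the forestial arena $\oc \var \vdash A = (\oc \var)^\perp \parallel A$, so its arena-predecessor (if any) is of the form $(2, a')$ with $a' \in A$. Consequently $(e_{k-1})_\bsigma$ projects to $A$, making $e_{k-1}$ visible; this collapses $k=1$, so $m_1 \imc_\bsigma m_2$ directly and $\pr(m_1) \imc_A \pr(m_2)$ as required. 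The case $\pol(m_1) = +$ is symmetric, applied to the first step of the chain.

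The reason this argument succeeds --- and this is the key conceptual point --- is that in the arena $\oc \var \vdash A$, events in $A$ have no arena-neighbours in $\oc \var$: the $\bcell_n$-internal non-courteous links, which all sit inside $\oc \var$, are trapped there by the polarity-driven courtesy of $\bsigma$ and cannot manifest as non-courteous links between visible events of $\bsigma \odot \bcell_n$. The edge case in which $\pr(m_2)$ is minimal in $A$ is handled by observing that, by the same courtesy argument applied to $\bsigma$, $m_2$ would then be minimal in $\bsigma$, hence minimal in $\bsigma \inter \bcell_n$, so no $m_1$ with $m_1 \imc_{\bsigma \odot \bcell_n} m_2$ exists in the first place; and courtesy is vacuous.
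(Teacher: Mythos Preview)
Your proof is correct, but it takes a different route from the paper for receptivity and courtesy. The paper argues indirectly: it notes that (ignoring symmetry) strategies are exactly the prestrategies invariant up to iso under composition with copycat (Theorem 3.20 of \cite{cg1}), and then simply computes
\[
\cc_A \odot (\bsigma \odot \bcell_n) \iso (\cc_A \odot \bsigma) \odot \bcell_n \iso \bsigma \odot \bcell_n
\]
using associativity of composition on prestrategies and the fact that $\bsigma$ is already a strategy. This is slick and requires no case analysis at all. Your approach is more hands-on: you unfold a chain in the interaction and use Lemma~\ref{lem:imc_mconf_inter} together with courtesy of $\bsigma$ and the fact that $\oc\var \vdash A$ is a parallel composition (so an event in $A$ has no arena-neighbour in $\oc\var$) to show that the chain collapses. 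This is more elementary in that it does not invoke the copycat characterization, and it makes explicit the structural reason the non-courteous links of $\bcell_n$ get trapped in the hidden part. The paper's argument, on the other hand, would apply verbatim to any prestrategy in place of $\bcell_n$, and avoids the edge-case bookkeeping.

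On $\sim$-receptivity your justification is a bit thin. Saying ``the same argument used in \cite{cg2} for composition of strategies applies'' glosses over the fact that $\bcell_n$ is not a strategy, so the general preservation result for strategies does not apply as stated. The paper is more precise here: it invokes the notion of \emph{componentwise courtesy} (the non-courteous links of $\bcell_n$ stay within a single component of $\oc\var \vdash A$) and appeals specifically to Lemma~3.36 of \cite{cg2}. You should name the property of $\bcell_n$ that makes the argument go through rather than leaving it implicit.
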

\begin{proof}
First, $\sim$-receptivity is established directly, exploiting that
$\bsigma$ and $\bcell_n$ do not have non-courteous immediate causal
links accross components -- they are \emph{componentwise courteous} in
the sense of \cite{cg2}. Lemma 3.36 from \cite{cg2} ensures that
$\bsigma \odot \bcell_n$ is $\sim$-receptive.

It remains to prove $\bsigma \odot \bcell_n$ receptive and
courteous. But those properties are independent of symmetry. Without
symmetry, strategies are exactly those prestrategies invariant (up to
iso) under composition with copycat \cite{cg1}. But then, we compute:
\begin{eqnarray*}
\cc_A \odot (\bsigma \odot \bcell_n)  &\iso& (\cc_A \odot \bsigma) \odot
\bcell_n\\
&\iso& \bsigma \odot \bcell_n\,,
\end{eqnarray*}
so by Theorem 3.20 of \cite{cg1}, $\bsigma \odot \bcell_n$ is receptive
and courteous as required.
\end{proof}

For semaphores, we obtain a prestrategy $\bsem_n : \oc \sem$ for each
$n \in \mathbb{N}$ using essentially the same recipe, taking as events
all non-empty plays that are prefixes of an even-length play. A
beginning of $\bsem_0$ is displayed in Figure \ref{fig:caus_sem}.  

The interpretation of new references and semaphores is defined
as in Section~\ref{subsubsec:newrs}.

\subsection{Adequacy} Adequacy could be
deduced from the connection with \cite{gm} -- see Section~\ref{subsubsec:collapse_gm}.
Instead we give an independent proof, as we believe it helps
build the operational intuitions for the model. Rather than as usual
relying on logical relations, we follow an alternative route, proving
first adequacy for certain finitary terms in which the correspondence
between operational and game semantics is more concrete.

\subsubsection{Canonical adequacy} The sharpest link between
operational and game semantics holds factoring out recursion,
higher-order, and dynamic generation of semaphores or references.

We temporarily extend $\IPA$ with an explicit $\bot_A : A$ for
every type $A$, interpreted by a minimal strategy with no Player move.
Take $\Sigma$ an \textbf{interference context}, \emph{i.e.} of the
form
\[
\Sigma = \ell_1 : \var, \dots,
\ell_i : \var, \ell_{i+1} : \sem, \dots, \ell_n : \sem\,,
\]
and $\Sigma \vdash M : \tx$ where $\tx \in \{\tunit, \tbool,
\tnat\}$. We say $M$ is in \textbf{canonical form} if it contains
no fixpoint, subterm of higher type, bad references or semaphores,
$\newref$ or $\newsem$, with no store-independent
reductions available (other than the \emph{interfering reductions} of
Figure \ref{fig:operational}).

To help us link operational and denotational semantics for canonical
forms, we introduce a few concepts.  First, if $s$ is a store
with $\dom(s) = \Sigma$ as above, we interpret it as
\[
\intr{s} = (\bigotimes_{1\leq k \leq i} \bcell_{s(\ell_k)}) \tensor
(\bigotimes_{i+1 \leq k \leq n} \bsem_{s(\ell_k)}) :
(\bigotimes_{1\leq k \leq i} \oc \var) \tensor 
(\bigotimes_{i+1\leq k \leq n} \oc \sem) 
\iso\,\oc \intr{\Sigma}\,.
\]

To track evolution of an interaction we use the notion of \emph{residuals}.

\begin{defi}
If $E$ is an ess and $x \in \conf{E}$, the \textbf{residual} $E/x$ has
\[
\begin{array}{rcl}
\ev{E/x} &=& \{e \in \ev{E}\setminus x \mid \forall e' \in x, \neg(e
\conflict_E e')\}\,,\\
e_1 \leq_{E/x} e_2 &\Leftrightarrow& e_1 \leq_E e_2\,,\\
e_1 \conflict_{E/x} e_2 &\Leftrightarrow& e_1 \conflict_E e_2\,,\\
\theta : y \sym_{E/x} z &\Leftrightarrow& \theta \cup \id_x : x \cup y
\sym_{E} x \cup z\,.
\end{array}
\]
\end{defi}

If $E$ has polarities they are preserved as well and for $A$ an
arena, $A/x$ is an arena. In particular for $\oc \intr{\Sigma}$, by
definition and playing Hilbert's hotel, for any $x \in \conf{\oc
\intr{\Sigma}}$ with as many Player as Opponent moves, we have $(\oc
\intr{\Sigma})/x \iso \oc \intr{\Sigma}$.  Residuals also apply to
causal (pre)strategies: if $\bsigma : A$, then for each $x \in
\conf{\bsigma}$, we have $\bsigma/x : A/\pr_\bsigma(x)$ a prestrategy. 

Take $\Sigma \vdash M :\tx$ in canonical form. Necessarily
$\intr{\tx}$ -- also written $\tx$ -- has a unique minimal move $\qu^-$
and $\intr{M}$ has a unique matching minimal move, also written $\qu^-$.
We set
\[
\rintr{M} = \intr{M}/\{\qu^-\} : \Sigma \vdash \rintr{\tx}\,,
\]
where $\rintr{\tx} = \intr{\tx}/\{\qu^-\}$,
yielding a causal prestrategy in the sense of Definition
\ref{def:caus_strat}.

Now, we are equipped to state the most central ingredient of our proof
of adequacy.

\begin{lem}\label{lem:main_can_ad}
Consider $\Sigma \vdash M : \tx$ in canonical form, and $s$ a store with
$\dom(s) = \Sigma$.

Then, there is a (necessarily interfering) one-step reduction
\[
\tuple{M, s} \leadsto \tuple{M', s'}
\]
iff there are matching $x \in \conf{\rintr{M}}$ and $y \in
\conf{\intr{s}}$ with \emph{two elements} each, such that 
\[
\rintr{M}/x \simstrat \rintr{M'} : \oc \intr{\Sigma} \vdash
\rintr{\tx}\,,
\qquad
\qquad
\intr{s}/y \simstrat \intr{s'} : \oc \intr{\Sigma}\,.
\]
\end{lem}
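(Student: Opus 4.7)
The proof is by case analysis on the four forms of interfering reduction, preceded by a structural characterisation of the strategy $\rintr{M}$ for canonical $M$.

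The preparatory step is an induction on the structure of canonical $M$ that pins down the initial Player events of $\rintr{M}$. Because $M$ has no store-independent reductions available, Proposition \ref{prop:adequacy_ia}-style reasoning — or rather the explicit definition of $\intr{-}$ through the Seely structure of $\CGwb_\oc$ and the strategies for interference primitives from Section \ref{subsubsec:intr_inter} — ensures that the minimal Player events of $\rintr{M}$ land in $\oc \intr{\Sigma}$, and that each such event is either a $\read^+_\ell$, a $\mwrite{n}^+_\ell$, a $\mgrab^+_\ell$ or a $\mrelease^+_\ell$ corresponding to an interfering redex sitting at an evaluation-context head position of $M$. This characterisation follows inductively: for $M;N$ and $\tlet{x}{N}{M_0}$ the sequential interpretation forces minimal Player events to come from the head, for $\plet{x_1}{N_1}{x_2}{N_2}{M_0}$ the parallel let strategy of Figure \ref{fig:plet} makes minimal Player events available simultaneously from $N_1$ and $N_2$, and the base cases are the four memory/semaphore primitives whose strategies were defined in Figures \ref{fig:assign}, \ref{fig:deref}, \ref{fig:grab} and \ref{fig:release}.

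For the forward direction, given $\tuple{M, s} \leadsto \tuple{M', s'}$, the redex must be one of the four interfering rules, located at some evaluation-context head position. Take the read case: there $M'$ replaces $!\ell$ (with $s(\ell) = n$) by $n$ at that position. The characterisation yields a minimal Player event $s_1 = \read^+_\ell \in \ev{\rintr{M}}$; by receptivity of $\rintr{M}$, it has a unique Opponent successor $s_2 = n^-_\ell$; we set $x = \{s_1, s_2\}$. Symmetrically, the corresponding component of $\intr{s}$ is $\bcell_n$ on $\ell$, from which we extract the matching $y = \{\read^-_\ell, n^+_\ell\}$. Causal compatibility of $(x,y)$ is immediate from the enforced immediate causalities. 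Verifying $\intr{s}/y \simstrat \intr{s'}$ reduces to the identity $\bcell_n/\{\read^-_\ell, n^+_\ell\} \simstrat \bcell_n$, direct from the prefix-tree presentation of $\bcell_n^{\grey{I}}$. Verifying $\rintr{M}/x \simstrat \rintr{M'}$ uses the inductive characterisation: removing the ``request+response'' pattern at the redex position produces the same causal strategy as interpreting $M$ with $!\ell$ replaced by its observed value $n$. The write, grab and release cases are analogous, the key point for the semaphore cases being that $\bsem_n$ only responds to $\mgrab$ when $n=0$ and only to $\mrelease$ when $n>0$, matching exactly the side conditions in Figure \ref{fig:operational}.

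For the reverse direction, suppose $x \in \conf{\rintr{M}}$ and $y \in \conf{\intr{s}}$ each have two elements, are matching, and the displayed residual equations hold. Since $\rintr{M}$ is initially covered by Player moves on $\oc \intr{\Sigma}$, the two elements of $x$ must be a minimal Player event $s_1$ followed by an Opponent event $s_2$. By the characterisation, $s_1$ is one of the four memory/semaphore requests at an evaluation-context head position of $M$; the matching condition with $y \subseteq \ev{\intr{s}}$ then forces $s_2$ to be consistent with $s(\ell)$ (so, in particular, $\grab$ can only occur if $s(\ell)=0$ and $\release$ only if $s(\ell)>0$, these being precisely when $\bsem_{s(\ell)}$ provides an Opponent response). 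This yields a one-step interfering reduction $\tuple{M, s} \leadsto \tuple{M'', s''}$; by the forward direction just established, the residuals satisfy $\rintr{M}/x \simstrat \rintr{M''}$ and $\intr{s}/y \simstrat \intr{s''}$, so by assumption $\rintr{M'} \simstrat \rintr{M''}$ and $\intr{s'} \simstrat \intr{s''}$, from which we conclude $M' = M''$ and $s' = s''$ up to the expected equivalence.

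The main obstacle is the preparatory structural characterisation of minimal Player events of $\rintr{M}$. The difficulty is twofold: first, canonical forms can be built from $\plet{}{}{}{}{}$, where several threads contribute concurrently available minimal Player events; second, computing $\rintr{M}/x \simstrat \rintr{M'}$ requires unwinding how a single synchronisation in one component collapses cleanly with the Seely composition to the interpretation of the reduced term, which is essentially a small compositional substitution lemma about residuation and composition.
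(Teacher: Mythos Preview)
Your proposal is correct and follows essentially the same approach as the paper: a direct verification for the four atomic interfering operations $!\ell$, $\ell\!\!:=\!v$, $\grab(\ell)$, $\release(\ell)$, followed by an induction on the structure of the canonical term $M$. The paper compresses this into two sentences, whereas you have spelled out what the induction buys (the characterisation of minimal Player events of $\rintr{M}$ and how they sit at evaluation-context head positions), but the skeleton is identical.

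One small point: in your reverse direction, the final step ``from which we conclude $M' = M''$ and $s' = s''$'' is both unnecessary and not quite right---$\rintr{M'} \simstrat \rintr{M''}$ does not imply $M' = M''$ syntactically. The lemma is best read as a correspondence: given matching two-element $x, y$, there \emph{exists} a one-step reduction whose result satisfies the residual equations, and conversely. Your argument already establishes this before that last sentence, so simply drop the attempted identification.
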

\begin{proof}
For interfering operations $\ell:=v, \oc \ell, \grab(\ell)$ or
$\release(\ell)$, it is a direct verification by definition of the
interpretation. The result then follows by induction on $M$.
\end{proof}

This identifies \emph{store operations} that a canonical $M$ may
perform immediately with store $s$, with the minimal events of 
$\rintr{M} \inter \intr{s}$ that occur in $\Sigma$. It
almost suffices to iterate this to obtain adequacy
for canonical terms; however interfering reductions might yield
non-canonical terms, so state operations must be interleaved with pure reductions.
Write $M \triangleright N$ for \emph{pure} reductions, \emph{i.e.} the
context closure of interference-independent reductions in
Figure \ref{fig:operational} -- these reductions leave
invariant the interpretation as causal strategies. Moreover: 

\begin{lem}\label{lem:canonical1}
Consider $\Sigma \vdash M : \tx$ without fixpoint, subterm of higher
type, bad references or semaphores, $\newref$ or $\newsem$. Then, there
exists $\Sigma \vdash N : \tx$ canonical with $M \triangleright^* N$.

Moreover, for any store $s$ with $\dom(s) =
\Sigma$, there is $\tuple{M, s} \leadsto^* \tuple{\tskip,
s'}$ iff there is $\tuple{N, s} \leadsto^* \tuple{\tskip,
s'}$, and the correspondence preserves the number of interfering
operations.
\end{lem}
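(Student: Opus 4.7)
The plan is to proceed in two stages. First, establish the existence of a canonical $N$ by showing that pure reductions are strongly normalizing on the restricted fragment. Second, establish that $M$ and its pure normal form $N$ have indistinguishable operational behaviour via a diamond-style commutation between pure and interfering reductions.

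\textbf{Existence of the canonical form.} Under the stated restrictions, $M$ contains no $\lambda$-abstraction, application, $\Y$, $\newref$, $\newsem$, $\mkvar$ or $\mksem$ (all of these would introduce a subterm of higher type, or are explicitly excluded). So $M$ is built only from constants, the free variables in $\Sigma$ (of ground types $\var$ or $\sem$), sequencing, conditionals, successor/predecessor/iszero, sequential and parallel let, and interfering operations. This fragment is plainly preserved by $\triangleright$; moreover, every pure basic reduction rewrites a constructor applied to already-evaluated values into a strictly smaller term, without duplicating any subterm. A straightforward syntactic weight (say, the number of symbols) strictly decreases at each $\triangleright$-step, so $\triangleright$ is strongly normalizing. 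By definition, a $\triangleright$-normal form $N$ in the fragment is canonical, since it exposes no store-independent reduction.

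\textbf{Operational equivalence.} Since $\triangleright \subseteq \leadsto$ and pure reductions are not interfering, the implication from $\tuple{N, s} \leadsto^* \tuple{\tskip, s'}$ to $\tuple{M, s} \leadsto^* \tuple{\tskip, s'}$, with the same number of interfering operations and the same final store, is immediate through the prefix $M \triangleright^* N$. For the converse, the core ingredient is a commutation lemma: if $M \triangleright M_1$ and $\tuple{M, s} \leadsto \tuple{M_2, s_2}$ is a single interfering step, then there exists $M_3$ with $\tuple{M_1, s} \leadsto \tuple{M_3, s_2}$ (the very same interfering step) and $M_2 \triangleright^* M_3$. The proof proceeds by induction on the syntactic positions of the two redexes: pure redexes ($\tskip;N$, conditionals on a constant, arithmetic on a numeral, let and parallel let on values) and interfering redexes ($\ell{:=}v, !\ell, \grab(\ell), \release(\ell)$) operate on disjoint, fully-evaluated patterns, and neither inspects or duplicates material reachable by the other. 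Iterating the diamond transports the full reduction $\tuple{M, s} \leadsto^* \tuple{\tskip, s'}$ into a reduction $\tuple{N, s} \leadsto^* \tuple{N', s'}$ with the same sequence of interfering steps, ending at some $N'$ with $\tskip \triangleright^* N'$; hence $N' = \tskip$.

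\textbf{Main obstacle.} The delicate case in the commutation lemma is the parallel let $\plet{x_1}{N_1}{x_2}{N_2}{M'}$, where a pure reduction may occur in $N_1$ while an interfering reduction occurs in $N_2$ (or vice versa), or where both subterms may be reducible at once. A careful case analysis on which component holds the pure redex and which holds the interfering redex, combined with the observation that both context rules for $\mathbf{plet}$ allow these reductions concurrently, confirms that the diamond closes without any subtlety specific to parallelism. Throughout, the fact that pure reductions never touch the store and never create or discard locations underlies the preservation of $s$ between parallel branches of the diamond.
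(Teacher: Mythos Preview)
Your proof is correct and follows the paper's intended approach: the paper itself merely says ``a routine standardization argument,'' and what you have written is precisely such an argument, carried out in reasonable detail. Your identification of the parallel let as the only source of genuinely concurrent active positions is on point.

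One small omission: your commutation lemma only covers a pure step $M \triangleright M_1$ against an \emph{interfering} $\leadsto$-step, but iterating the diamond through the full reduction $\tuple{M,s} \leadsto^* \tuple{\tskip,s'}$ also requires commuting $M \triangleright M_1$ against the \emph{pure} $\leadsto$-steps in that sequence. This is just local confluence of $\triangleright$, which holds for the same reason (active redexes are at pairwise disjoint positions, the parallel let being the only source of branching), and together with strong normalization it yields the unique normal form $N$. It is worth stating this explicitly, since otherwise the phrase ``iterating the diamond'' leaves a case unhandled.
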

\begin{proof}
A routine standardization argument.
\end{proof}

Using this, we can prove adequacy for canonical terms. Intuitively, a
sequence $\tuple{M, s} \leadsto^* \tuple{\tskip, s'}$ can be
reproduced semantically: each interfering reduction yields by Lemma
\ref{lem:main_can_ad} a pair of events in
$\rintr{M} \inter \intr{s}$, while state-free reductions leave the
interpretation unchanged. Reciprocally, a successful interaction in
$\rintr{M} \odot \intr{s}$ is a partially ordered set of memory
operations, which may be linearized by securedness; informing a
reduction sequence.

\begin{prop}\label{prop:adequacy_canonical}
Consider $\Sigma \vdash M : \tunit$ in canonical form, and $s$ a store
with $\dom(s) = \Sigma$.

Then, $\tuple{M, s} \leadsto^* \tuple{\tskip, s'}$ iff $\rintr{M} \odot
\intr{s}$ has a positive move.
\end{prop}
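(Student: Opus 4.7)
The plan is to leverage Lemma \ref{lem:main_can_ad} as an inductive step in both directions, with Lemma \ref{lem:canonical1} bridging the gap between canonical and non-canonical intermediate terms. The only positive events in $\rintr{M} \odot \intr{s}$ are those that remain visible after hiding, which (given the type $\tunit$) must project to $\done^+$ on the result; all interference-related events are synchronized and hidden.

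For the forward direction, I proceed by induction on the length of the reduction sequence $\tuple{M, s} \leadsto^* \tuple{\tskip, s'}$. Using Lemma \ref{lem:canonical1}, I factor the sequence into alternating blocks of pure reductions $\triangleright^*$ (which leave $\rintr{-}$ invariant up to $\simstrat$) and single interfering reductions applied to canonical forms. Each interfering step $\tuple{M_n, s_n} \leadsto \tuple{M'_n, s_{n+1}}$ yields by Lemma \ref{lem:main_can_ad} two-element matching configurations $x_n, y_n$ with compatible residuals. Iterating, these stitch together into a causally compatible pair $(x, y)$ of configurations of $\rintr{M}$ and $\intr{s}$ whose events all occur in $\oc \intr{\Sigma}$. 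When the reduction reaches $\tskip$, the residual $\rintr{M}/x \simstrat \rintr{\tskip}$ contains $\done^+$ as an immediate event, and this event appears positively in the composition $\rintr{M} \odot \intr{s}$.

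For the reverse direction, suppose $\rintr{M} \odot \intr{s}$ has a positive move $p$. As argued above, $p$ projects to $\done^+$. Its downward closure $z$ in $\rintr{M} \inter \intr{s}$ decomposes, by Proposition \ref{prop:main_interaction}, as a causally compatible pair of configurations $x^\bsigma \in \conf{\rintr{M}}, y^\btau \in \conf{\intr{s}}$ together with the single visible event $p$. By securedness, the induced bijection admits a linearization respecting both causal orders. I traverse this linearization, applying the reverse of Lemma \ref{lem:main_can_ad} at each two-event block: each such block identifies an interfering reduction enabled in the current operational state, while Lemma \ref{lem:canonical1} is used to restore canonical form by pure reductions between interfering steps. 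The final $\done^+$ event on the last residual corresponds operationally to reaching $\tskip$, closing the reduction $\tuple{M, s} \leadsto^* \tuple{\tskip, s'}$.

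The main obstacle will be the reverse direction. Two subtleties need care: after an interfering step $\tuple{M_n, s_n} \leadsto \tuple{M'_n, s_{n+1}}$ the residual $M'_n$ need not be canonical, so Lemma \ref{lem:canonical1} must standardize it to some $N_{n+1}$ preserving the interpretation, and one must check that the subsequent synchronizations extracted from the linearization still match the operational capabilities of $N_{n+1}$. Relatedly, one must verify that the semantic linearization guaranteed by securedness indeed lifts to a valid operational sequence: the enabling of each interfering reduction in the store reached so far is precisely what the forward half of Lemma \ref{lem:main_can_ad} guarantees from the corresponding pair of synchronized events, provided the induction hypothesis keeps $\rintr{M_n}$ and $\intr{s_n}$ in lockstep with the residuals along $z$.
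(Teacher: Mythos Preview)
Your proposal is correct and follows essentially the same approach as the paper: both directions use Lemma~\ref{lem:main_can_ad} as the one-step bridge, Lemma~\ref{lem:canonical1} to re-canonicalize intermediate terms, and induction (on the number of interfering operations in one direction, on the size of the interaction configuration in the other) to stitch the local two-event correspondences into a global one. The paper's write-up is slightly more explicit about why the first two synchronized events in the linearization can be assumed to form a self-contained pair (using that the second event of $\intr{s}$ depends only on the first, and courtesy forces the matching dependency in $\rintr{M}$), but you have identified the right structure and the right obstacles.
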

\begin{proof}
\emph{If.} Taking $x \odot y \in \rintr{M} \inter \intr{s}$, we
build the sequence by induction on the size of $x \inter y$.

If $x \inter y$ has exactly one event, it must match
$\done^+$ in $\rintr{\tunit}$. Since $M$ is canonical, by
a case inspection on the shape of $M$, the only case for which
$\rintr{M}$ has a minimal positive event in $\rintr{\tunit}$ is 
a value $\tskip$. Otherwise, since $x$ and $y$ are causally compatible,
there is a sequence 
\[
(\emptyset, \emptyset) \cov (x_1, y_1) \cov \dots \cov (x_n, y_n) = (x,
y)
\]
of one-step extensions (for the product inclusion order) of matching
$x_i \in \conf{\rintr{M}}$ and $y_i \in \conf{\intr{s}}$, obtained by
linearization from the acyclicity of causal compatibility of
$x$ and $y$. In particular, $x_1 = \{m_1^+\}$ and $y_1 =
\{n_1^-\}$ singleton sets. By Proposition \ref{prop:comp_pcov}, $y$ is
$+$-covered, so there is a (unique, by definition of $\intr{s}$) $n_1^-
\imc_{\intr{s}} n_2^+$, and by Lemma \ref{lem:aux_imp_court} there is a
unique matching
$m_1^+ \imc_{\rintr{M}} m_2^-$. Since $n_2$ only depends on $n_1$ and
$m_2$ only depends on $m_1$, \emph{w.l.o.g.} $x_2 = \{m_1,
m_2\}$ and $y_2 = \{n_1, n_2\}$. So, by Lemma \ref{lem:main_can_ad},
there is a one-step 
$\tuple{M, s} \leadsto \tuple{M', s'}$
s.t. $\rintr{M}/x_2 \simstrat \rintr{M'}$ and $\intr{s}/y_2
\simstrat \intr{s'}$. We may now use Lemma \ref{lem:canonical1} to
obtain $M' \triangleright^* M''$ with $M''$ canonical and $\rintr{M'}
\simstrat
\rintr{M''}$. Now, removing $x_2$ and $y_2$ to the sequence above yields
\[
(\emptyset, \emptyset) \cov (x'_3,y'_3) \cov \dots \cov (x'_n, y'_n) =
(x', y')
\]
with $x' = x \setminus \{m_1, m_2\}$ and $y' = y \setminus \{n_1,
n_2\}$, witnessing that matching $x' \in \conf{\rintr{M''}}$ and $y' \in
\conf{\intr{s'}}$ are causally compatible, so that $x' \inter y' \in
\conf{\rintr{M''} \inter \intr{s'}}$. But $x'$ and $y'$ are $+$-covered,
so $x' \odot y' \in \conf{\rintr{M''} \odot \intr{s'}}$, still with a
positive move in $\tunit$. So 
$\tuple{M'', s'} \leadsto^* \tuple{\tskip, s''}$
by induction hypothesis, so $\tuple{M', s'} \leadsto^* \tuple{\tskip,
s''}$ by Lemma \ref{lem:canonical1}, so $\tuple{M, s} \leadsto^*
\tuple{\tskip, s''}$.

\emph{Only if.} By induction on the number of interfering operations in
$\tuple{M, s} \leadsto^* \tuple{\tskip, s''}$.

If $M = \tskip$ it is immediate. Otherwise, consider $\tuple{M, s}
\leadsto \tuple{M', s'} \leadsto^* \tuple{\tskip, s''}$. By Lemma
\ref{lem:main_can_ad}, there are $x \in \conf{\rintr{M}}$ and $y \in
\conf{\intr{s}}$ matching with two elements each, with
\[
\rintr{M}/x \simstrat \rintr{M'}
\qquad
\qquad
\intr{s}/y \simstrat \intr{s'}\,.
\]

Now, to use the induction hypothesis we convert $M' \triangleright^*
M''$ to canonical form; by Lemma \ref{lem:canonical1} we have
$\tuple{M'', s} \leadsto^* \tuple{\tskip, s''}$ with the same number of
interfering operations; and $\rintr{M'} \simstrat \rintr{M''}$. So by
induction hypothesis, there are $+$-covered and causally compatible
\[
x' \in \conf{\rintr{M''}}\,,
\qquad
\qquad
y' \in \conf{\intr{s'}}\,,
\]
where $x'$ has an occurrence of the
positive event of $\tunit$; so up to renaming, $x' \in
\conf{\rintr{M}/x}$
and $y' \in \conf{\intr{s}/y}$. Therefore, by definition of residuals,
one may add back $x$ and $y$ to obtain
\[
x \cup x' \in \conf{\rintr{M}}\,,
\qquad
\qquad
y \cup y' \in \conf{\intr{s}}\,,
\]
$+$-covered, causally compatible with a positive move in $\tunit$, which
concludes the proof.
\end{proof}

\subsubsection{Finitary adequacy} We deduce recursion-free adequacy
from the canonical case.
\begin{figure}
\[
\begin{array}{rcll}
(\newref\,x\!\!:=\!n\,\tin\,M);\,N &\to&
\newref\,x\!\!:=\!n\,\tin\,M;\,N&(x \not \in \fv(N))\\
M;\,(\newref\,x\!\!:=\!n\,\tin\,N) &\to&
\newref\,x\!\!:=\!n\,\tin\,M;\,N&(x \not \in \fv(M))\\
\ite{(\newref\,x\!\!:=\!n\,\tin\,M)}{N_1}{N_2}&\to&
\newref\,x\!\!:=\!n\,\tin\,\ite{M}{N_1}{N_2}&
(x \not \in \fv(N_i))\\
\ite{N_1}{(\newref\,x\!\!:=\!n\,\tin\,M)}{N_2}&\to&
\newref\,x\!\!:=\!n\,\tin\,\ite{N_1}{M}{N_2}
&(x \not \in \fv(N_i))\\
\ite{N_1}{N_2}{(\newref\,x\!\!:=\!n\,\tin\,M)}&\to&\newref\,x\!\!:=\!n\,\tin\,\ite{N_1}{N_2}{M}
&(x \not \in \fv(N_i))\\
\tlet{y}{(\newref\,x\!\!:=\!n\,\tin\,M)}{N}&\to&
\newref\,x\!\!:=\!n\,\tin\,(\tlet{y}{M}{N})&
(x \not \in \fv(N))\\
\tlet{y}{M}{(\newref\,x\!\!:=\!n\,\tin\,N)}&\to&
\newref\,x\!\!:=\!n\,\tin\,(\tlet{y}{M}{N})&
(x \not \in \fv(M))\\
\mathsf{f}\,(\newref\,x\!\!:=\!n\,\tin\,M)&\to&
\newref\,x\!\!:=\!n\,\tin\,(\mathsf{f}\,M) 
\end{array}
\]
where $\mathsf{f} \in \{\tsucc, \pred, \iszero\}$.
\caption{Commutation rules for $\newref$}
\label{fig:congruence}
\end{figure}

To convert terms to canonical form, we perform state-free reductions
while pushing declarations of new references or semaphores outside. The
latter is done by the commutation rules of Figure \ref{fig:congruence},
from which are missing the three rules for the parallel let,
and matching commutations for new semaphores. It is direct
that these rules leave the game semantics invariant, and
preserve and reflect infinite reduction chains in the
operational semantics.

Writing $\equiv$ for the congruence closure of state-free reductions
and commutations above:

\begin{lem}\label{lem:finitary_to_canonical}
Consider $\vdash M : \tx$ a recursion-free term of $\IPA$.
Then, there exists
\[
\begin{array}{l}
M \equiv \new\,x_1\!\!:=\!n_1\,\tin \dots
\new\,x_p\!\!:=\!n_p\,\tin\,N
\end{array}
\]
where each $\new$ is either $\newref$ or $\newsem$, and $\Sigma \vdash N
: \tx$ is canonical.
\end{lem}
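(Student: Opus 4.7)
The plan is via strong normalization of the rewriting system $\rightsquigarrow$ combining pure reductions $\triangleright$ with the commutation rules of Figure \ref{fig:congruence}, augmented with their obvious analogues for $\newsem$ and for the three positions of the parallel let; followed by a structural characterisation of the $\rightsquigarrow$-normal forms of closed recursion-free terms.

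Strong normalization of $\triangleright$ alone on recursion-free terms follows by a standard Tait-style reducibility argument for the simply typed $\lambda$-calculus enriched with finitely many typed constants and the first-order store-independent reductions for $\mkvar$/$\mksem$ (each of which simply selects between two already typed subterms). Commutations alone terminate because each strictly decreases the sum, over $\newref$/$\newsem$ subterms, of their distance to the root (bounded by term size). For the combined relation, commutations can expose new $\triangleright$-redexes --- \emph{e.g.} pulling $\newref\,x\!:=\!n\,\tin\,\ttrue$ out of a conditional uncovers the branch --- so termination relies on a lexicographic product of the $\triangleright$-reduction-length measure (well-defined by strong normalization of $\triangleright$) with the commutation weight.

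For the shape of a $\rightsquigarrow$-normal form of a closed recursion-free $\vdash M : \tx$ with $\tx$ ground, saturation under commutations forces every $\newref$/$\newsem$ subterm to sit either at the outermost position or directly inside another $\newref$/$\newsem$, producing the announced form $\new\,x_1\!:=\!n_1\,\tin \dots \new\,x_p\!:=\!n_p\,\tin\,N$. The innermost $N$ has free variables forming a first-order interference context $\Sigma$, so its $\triangleright$-normality together with a standard argument on $\beta$-normal forms of simply typed $\lambda$-terms implies $N$ contains no $\lambda$-abstraction, hence no higher-type subterm. A hypothetical $\mkvar\,M_1\,M_2$ subterm of $N$ would have type $\var$ and, by typing --- $\mathbf{let}$ binds only ground types, and no surviving $\lambda$-abstraction accepts a $\var$ --- it could only occur as the argument of $!$ or $:=$, creating a store-independent redex, contradicting $\triangleright$-normality; $\mksem$ is treated symmetrically. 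Thus $N$ is canonical.

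The main obstacle is the interaction of $\beta$-duplication with commutations: a single $\beta$-step may copy a $\newref$/$\newsem$ subterm (up to $\alpha$-renaming), requiring fresh commutation passes for each duplicate. Accommodating this in the termination measure is standard but requires care, relying crucially on recursion-freeness to bound $\beta$-expansion depth and hence the multiplicity of duplicates to commute out.
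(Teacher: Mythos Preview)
Your overall structure matches the paper's, and the analysis of normal forms is essentially right. The gap is in termination. You propose a lexicographic product with $\triangleright$-length as primary component; for that to work, a commutation step must not increase $\triangleright$-length. Yet in the same breath you assert that commutations ``expose new $\triangleright$-redexes'', which --- taken at face value --- would allow a commutation to raise the primary component and break the measure. You do not resolve this tension. In fact your own example (pulling $\newref\,x\!\!:=\!n\,\tin\,\ttrue$ out of a conditional) merely trades one redex (the $\newref$ on a value) for another (the conditional on $\ttrue$) without increasing maximal $\triangleright$-length; but you neither observe this nor prove the invariance you actually need. Conversely, the ``main obstacle'' you flag in the last paragraph --- $\beta$-duplication inflating commutation weight --- is harmless for your lexicographic, since $\beta$ already drops the primary component; so that paragraph addresses a non-issue while the real verification is left open.

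The paper avoids this tangle: it restricts its reduction $\to$ to $\beta$, the state-free $\mkvar$/$\mksem$ eliminations, and the commutations (not all of $\triangleright$), and treats $\var$, $\sem$ as product types so that $\mkvar$/$\mksem$ become pairing and their eliminations become projections. The system then embeds in simply-typed $\lambda$-calculus with products --- under the translation sending $\newref\,x\!\!:=\!n\,\tin\,M$ to $M$ with a fresh constant for $x$, commutations become identities --- whose strong normalisation is classical. The remaining first-order reductions inside the body $N'$ are handled afterwards via Lemma~\ref{lem:canonical1}.
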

\begin{proof}
Consider the reduction $\to$ comprising the (context closure of)
the commutations above with $\beta$-reduction and the state-free
reductions for $\mkvar$ and $\mksem$. Treating $\var$ and $\sem$ as
product types, it is easy to prove from the strong normalization of the
simply-typed $\lambda$-calculus with products that $\to$ terminates.
Moreover, as $M$ has type $\tx \in \{\tunit, \tbool, \tnat\}$, a
$\to$-normal form $M \to^* M'$ has no abstraction, bad variable or
semaphore subterm. Thus
\[
M' = \new\,x_1\!\!:=\!n_1\,\tin \dots \new\,x_p\!\!:=\!n_p\,\tin\,N'
\]
with $\Sigma \vdash N' : \tx$ without recursion, subterm of higher type,
bad reference or semaphores, and reference and semaphore initialization.
Finally, we conclude by Lemma \ref{lem:canonical1}.
\end{proof}

The semantics enjoy \emph{finitary adequacy}:

\begin{prop}\label{prop:finitary_adequacy}
For any recursion-free $\vdash M : \tunit$, we have $M\eval$ iff
$\intr{M}\eval$.
\end{prop}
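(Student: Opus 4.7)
The plan is to reduce the statement to the canonical case handled by Proposition \ref{prop:adequacy_canonical}, using Lemma \ref{lem:finitary_to_canonical} to extrude all new reference and semaphore declarations to the top level. Concretely, I would first apply Lemma \ref{lem:finitary_to_canonical} to write
\[
M \equiv \new\,x_1\!\!:=\!n_1\,\tin \dots \new\,x_p\!\!:=\!n_p\,\tin\,N
\]
with $\Sigma \vdash N : \tunit$ canonical, where each $\new$ is either $\newref$ or $\newsem$ and $\Sigma = x_1 : \var\text{ or }\sem, \dots, x_p : \var\text{ or }\sem$ records the corresponding types.

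Next I would check that the congruence $\equiv$ is sound both operationally and denotationally. On the operational side, the pure reductions preserve convergence by the standard subject reduction argument, and each commutation rule in Figure \ref{fig:congruence} is essentially a scope extrusion for a fresh location: since the side condition $x \notin \fv(N)$ ensures no capture, the induced reductions in both directions agree on $\eval$. On the denotational side, the pure reductions preserve the interpretation because the semantics validates $\beta$-reduction and the defining equations for $\mkvar, \mksem$, while the scope extrusion rules hold up to $\simstrat$ by naturality of $\bcell_n$ and $\bsem_n$ in the categorical interpretation of $\newref$ and $\newsem$ (Section \ref{subsubsec:intr_inter}). Therefore it suffices to prove the statement for the reduced form.

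Set $s_0 = \{x_1 \mapsto n_1, \dots, x_p \mapsto n_p\}$. Unfolding the interpretation of $\newref$ and $\newsem$ inductively, one obtains
\[
\intr{M} \simstrat \rintr{N}^{\prime} \odot \intr{s_0}
\]
modulo the bijections between $\oc\bigotimes_k \var/\sem$ and $\oc \intr{\Sigma}$, where the prime indicates the use of $\Lambda^\oc$ followed by inverse as in Section \ref{subsubsec:newrs}. From the definition of $\rintr{-}$ as the residual after the initial $\qu^-$ of $\tunit$, it follows that $\intr{M}\eval$ iff $\rintr{N}\odot \intr{s_0}$ contains a positive event in $\rintr{\tunit}$. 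Similarly, by the operational semantics of $\newref$ and $\newsem$ (allocating the fresh locations and reducing the body with the initial store), we have $M \eval$ iff $\tuple{N, s_0} \leadsto^* \tuple{\tskip, s'}$ for some $s'$. Proposition \ref{prop:adequacy_canonical} applied to the canonical $N$ with store $s_0$ then yields the desired equivalence.

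The main obstacle I anticipate is not any single step but the bookkeeping in the second paragraph: verifying that every commutation in Figure \ref{fig:congruence} really is sound both ways, and that the equivalence $\intr{M} \simstrat \rintr{N}^{\prime} \odot \intr{s_0}$ holds as stated, requires unfolding the interpretation through the $\Lambda^\oc / {\Lambda^\oc}^{-1}$ correspondence of Section \ref{subsubsec:newrs}. This is conceptually straightforward but notationally delicate, and is where most of the effort goes; the actual ``adequacy content'' is entirely absorbed by Proposition \ref{prop:adequacy_canonical}.
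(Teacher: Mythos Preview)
Your proposal is correct and follows exactly the paper's approach: the paper's proof is the one-liner ``Immediate consequence of Proposition \ref{prop:adequacy_canonical} and Lemma \ref{lem:finitary_to_canonical},'' with the soundness of $\equiv$ (both operationally and for the game semantics) already noted just before Lemma \ref{lem:finitary_to_canonical}. The bookkeeping you spell out---unfolding the iterated $\newref/\newsem$ interpretation to reach $\rintr{N}\odot\intr{s_0}$---is precisely the content the paper leaves implicit.
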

\begin{proof}
Immediate consequence of Proposition \ref{prop:adequacy_canonical} and
Lemma \ref{lem:finitary_to_canonical}.
\end{proof}

\subsubsection{Deducing adequacy} Finally, we extend the above with
recursion.

\begin{thm}[Adequacy]\label{th:adequacy_ipa}
For $\vdash M : \tunit$ any term of $\IPA$, $M \eval$ iff
$\intr{M}\eval$.
\end{thm}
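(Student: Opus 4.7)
The plan is to lift the recursion-free result (Proposition~\ref{prop:finitary_adequacy}) to the full language by a standard approximation argument, exploiting the dcpo structure on causal strategies introduced in Section~\ref{subsubsec:core_pcf} and the continuity of all categorical operations.

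First, I will define a family of syntactic approximants: for $\vdash M : \tunit$ in $\IPA$, let $M^{(n)}$ be the term obtained by replacing every subterm of the form $\Y\,N$ by its $n$-fold unfolding $N\,(N\,(\cdots(N\,\bot_A)\cdots))$, where $\bot_A$ is the explicit divergence primitive added in Section~\ref{sec:syntax}. Each $M^{(n)}$ is recursion-free, and by compositionality together with the fact that $\intr{\Y_A}$ is the least fixed point of the continuous operator $F$ of Section~\ref{subsubsec:core_pcf}, one gets
\[
\intr{M} \;=\; \bigvee_{n \in \mathbb{N}} \intr{M^{(n)}}
\]
in the dcpo $\D_{\emptyar \vdash \tunit}$ of causal strategies on $\tunit$; the inequalities $\intr{M^{(n)}} \cleq \intr{M^{(n+1)}} \cleq \intr{M}$ follow by induction on $M$ using continuity of composition, tensor, pairing and $\Lambda$.

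For the ``if'' direction, suppose $\intr{M}\eval$, \emph{i.e.}\ $\intr{M}$ contains the unique maximal play $\qu^-\done^+$. Since $\{\done^+\}$ is a compact element of $\conf{\intr{M}} = \conf{\bigvee_n \intr{M^{(n)}}}$, there exists $n_0$ with $\intr{M^{(n_0)}}\eval$. By Proposition~\ref{prop:finitary_adequacy}, $M^{(n_0)}\eval$. It then remains to transfer this to $M\eval$; the standard way is a syntactic approximation lemma stating that whenever $M^{(n)}\eval$, then $M\eval$, proved by an easy induction on the reduction of $M^{(n)}$, using that each unfolded $\Y$ in $M^{(n)}$ can be matched by a corresponding reduction $\Y\,N \leadsto N\,(\Y\,N)$ in $M$, while occurrences of $\bot_A$ are never reached in a converging reduction since $\bot_A$ has no basic reduction rule.

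For the ``only if'' direction, assume $M\eval$, fix a reduction witness $\tuple{M,\emptyset} \leadsto^* \tuple{\tskip,\emptyset}$, which uses only finitely many $\Y$-unfoldings. Then for $n$ large enough, the same sequence of reductions is available in $M^{(n)}$: formally, one proves by induction on the length of reduction that $M\eval$ implies $M^{(n)}\eval$ for some $n$. By Proposition~\ref{prop:finitary_adequacy}, $\intr{M^{(n)}}\eval$, and since $\intr{M^{(n)}} \cleq \intr{M}$, also $\intr{M}\eval$. The main obstacle I anticipate is the careful statement and proof of the syntactic approximation lemma relating reductions of $M$ and $M^{(n)}$ in the presence of parallelism and interference: care must be taken with store contexts and with the non-deterministic interleavings introduced by $\plet{\cdot}{\cdot}{\cdot}{\cdot}{\cdot}$ and stateful reductions, but since we only need \emph{existence} of a converging reduction on either side (not a bisimulation), a straightforward induction on reduction length, propagating the unfolding depth through the context rules of Figure~\ref{fig:operational}, suffices.
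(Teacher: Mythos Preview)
Your proposal is correct and follows essentially the same approach as the paper: define syntactic approximants by replacing each $\Y$ with a bounded unfolding, establish the syntactic approximation lemma ($M\eval$ iff some approximant converges), use continuity to get $\intr{M}$ as the directed supremum of the approximant interpretations, and conclude via Proposition~\ref{prop:finitary_adequacy}. The paper phrases the approximants via auxiliary combinators $\Y_A^n = \lambda f.\,f(\Y_A^{n-1} f)$ rather than direct $n$-fold unfoldings, but these are $\beta$-equivalent and the argument is otherwise identical.
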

\begin{proof}
As expected, we simply reason by continuity.
For all type $A$ and $n \in \mathbb{N}$, we set
\[
\Y_A^0 = \lambda f^{A \to A}.\, \bot_A 
\qquad
\qquad
\Y_A^{n+1} = \lambda f^{A \to A}.\, f\,(\Y_A^n\,f)\,,
\]
yielding $\vdash \Y_A^n : (A\to A) \to A$. The \textbf{$n$-th
approximation} $\Gamma \vdash M_n : A$ of any term $\Gamma \vdash M : A$
of $\IPA$ is obtained by replacing each
$\Y\,N$ with $\Y_n\,N$. It is then routine to show that $M\eval$ iff
there is $n \in \mathbb{N}$ such that $M_n \eval$. Likewise, by
definition of the interpretation of recursion (see Section~\ref{subsubsec:core_pcf}) and continuity of the interpretation,
$\intr{M} = \bigvee_{n \in \mathbb{N}} \intr{M_n}$.
Now:
\begin{eqnarray*}
M\eval &\Leftrightarrow& \exists n \in \mathbb{N},~M_n \eval\,,\\
&\Leftrightarrow& \exists n \in \mathbb{N},~\text{$\intr{M_n}$ has a
positive move,}\\
&\Leftrightarrow& \text{$\intr{M}$ has a positive move,}
\end{eqnarray*}
using the above along with Proposition \ref{prop:finitary_adequacy}.
\end{proof}

This continuity argument only works for \emph{may}-convergence: for
extensions such as \emph{must} or \emph{fair}-convergence we would have
to formulate a more complete correspondence between operational and game
semantics -- 
see \emph{e.g.} \cite{DBLP:phd/hal/Castellan17} for an adequacy result
for non-deterministic $\PCF$ \emph{w.r.t.} must-convergence. However,
we leave this out of this paper.

\subsection{Full Abstraction} Now that we have established $\CG$ as an
adequate model of $\IPA$, we explore a bit further its
observable operational content and prove full abstraction.

\subsubsection{Observable behaviour of causal strategies}
\label{subsubsec:obs_caus}
It is clear that without quotient, the interpretation of $\IPA$ in $\CG$
will \emph{not} be fully abstract: the model records very
intensional information that is typically not observable.
\begin{figure}
\begin{minipage}{.5\linewidth}
\[
\xymatrix@R=5pt@C=0pt{
&\tunit&&\to&&\tunit&&\to&&\tunit\\
&&&&&&&&&\qu^-
        \ar@[grey]@{-|>}[dlllllllll]
        \ar@{-|>}[dlllllll]\\
\grey{\qu^+}
        \ar@[grey]@{.}@/^.1pc/[urrrrrrrrr]
        \ar@[grey]@{~}[rr]
        \ar@[grey]@{-|>}[d]&&
\qu^+   \ar@{.}@/^.1pc/[urrrrrrr]
        \ar@{-|>}[d]\\
\grey{\done^-}
        \ar@[grey]@{.}@/^.1pc/[u]&&
\done^- \ar@{.}@/^.1pc/[u]
        \ar@{-|>}[drrr]\\
&&&&&\qu^+
        \ar@{.}@/^/[uuurrrr]
        \ar@{-|>}[d]\\
&&&&&\done^-
        \ar@{.}@/^.1pc/[u]
        \ar@{-|>}[drrrr]\\
&&&&&&&&&\done^+
        \ar@{.}@/_1pc/[uuuuu]
}
\]
\caption{
$\begin{array}{l}
\lambda x^\tunit y^\tunit.\,\newref\,r\!\!:=\!0\,\tin\\
\,\,\,\,\,\,\,\,\,\assrt\,(\iszero\,\oc r);\,x;\,r\!\!:=\!\ttrue\\
\,\,\,\,\parallel \assrt\,(\tnot\,(\iszero\,!r));\,y
\end{array}$
}
\label{fig:ex_must}
\end{minipage}
\hfill
\begin{minipage}{.45\linewidth}
\[
\xymatrix@R=13pt@C=-3pt{
(\tunit&&\to&&\tunit&&\to&&\tunit)&&\to&&\tunit\\
&&&&&&&&&&&&\qu^-
        \ar@{-|>}[dllll]\\
&&&&&&&&\qu^+
        \ar@{.}@/^.1pc/[urrrr]
        \ar@{-|>}[dllllllll]
        \ar@{-|>}[dllll]
        \ar@{-|>}[d]\\
\qu^-   \ar@{.}@/^.1pc/[urrrrrrrr]
        \ar@{-|>}[d]
        \ar@[grey]@{-|>}[drrr]&&&&
\qu^-   \ar@{.}@/^.1pc/[urrrr]
        \ar@[grey]@{-|>}[dl]
        \ar@{-|>}[dr]&&&&
\done^- \ar@{.}@/_.1pc/[u]
        \ar@{-|>}[drrrr]\\
\done^+ \ar@{.}@/^.1pc/[u]&&&
\grey{\done^+}
        \ar@[grey]@{~}[rr]
        \ar@[grey]@{.}@/^.1pc/[ur]&&
\done^+ \ar@{.}@/_.1pc/[ul]&&&&&&&
\done^+ \ar@{.}@/_/[uuu]
}
\]
\caption{
$\begin{array}{l}
\lambda f^{\tunit \to \tunit\to \tunit}.\, \newref\,r\!\!:=\!0\,\tin\\
\,\,\,\,f\,(r:=1)\\
\,\,\,\,\,\,\,\,\,(\tskip \ovee (\assrt\,!r))
\end{array}$
}
\label{fig:ex_caus}
\end{minipage}
\end{figure}
We give two examples in Figures \ref{fig:ex_must} and \ref{fig:ex_caus}.

We introduce a few additional pieces of syntactic
sugar.  First, $\assrt : \tbool \to \tunit$ is $\lambda
x^\tbool.\,\ite{x}\,\tskip\,\bot$ which terminates on $\ttrue$
and diverges otherwise. We also define $\tnot : \tbool \to
\tbool$ as the obvious program. Finally, for any $\Gamma \vdash M
: A$ and $\Gamma \vdash N : A$ of $\IPA$, we set
\[
\begin{array}{l}
M \ovee N = \newref\,x:=0,\,y:=0\,\mathbf{in}\,\\
\hspace{80pt}(x:=1 \parallel y:=!x);\\
\hspace{80pt}\ite{(\iszero\,(!y))}{M}{N}
\end{array}
\]
a non-deterministic sum $\Gamma \vdash M \ovee N : A$,
behaving non-deterministically as $M$ or $N$.

First, Figure \ref{fig:ex_must} represents the semantics of
an encoding of sequential composition via parallel composition
plus shared state.  In Ghica and Murawski's model, this program has the
same interpretation as sequential composition, showing their equivalence
with respect to may-testing. In contrast, $\CG$ also gives account
of the limitation of this encoding: the greyed out branch on the left
corresponds to the bottom read winning the race, causing
divergence\footnote{This shows our model remembers
\emph{some} divergences, though not enough to get adequacy
for \emph{must}: some divergences are lost through hiding. This
can be addressed by tweaking hiding to retain those events dubbed
\emph{essential} that carry divergences, see
\cite{DBLP:conf/fossacs/CastellanCHW18,DBLP:phd/hal/Castellan17} -- but
we shall not take this route in this paper.}.
Likewise, Figure \ref{fig:ex_caus} also has an unobservable branch
greyed out. The model shows that the program may provide values for the
two
arguments of $f$ independently, but it may also provide a value for the
second argument of $f$ \emph{because} $f$ called its first argument. In
a play, an occurrence of $\done^+$ corresponding to a
value for the second argument may be causally explained by either of the
two moves, but the distinction is un-observable.

For full abstraction only the observable behaviour matters, and
$\CG$ clearly records more than necessary. So we ask: what parts of a
concurrent strategy are observable? 

\subsubsection{Non-alternating plays with
pointers}\label{subsubsec:collapse_gm}
We approach this
question in terms which are no surprise to the reader familiar with
Ghica and Murawski's non-alternating games: the observable behaviour is
exactly captured by certain \emph{non-alternating plays with
pointers}:

\begin{defi}
A \textbf{non-alternating play with pointers} on $A$ is a $s_1
\dots s_n$ on $\ev{A}$ s.t.
\[
\begin{array}{ll}
\text{\emph{negative:}} & n\geq 1 \implies \pol(s_1) = -\,,
\end{array}
\]
with, for all $1\leq j \leq n$ s.t. $s_j$ is non-minimal in $A$,
a \textbf{pointer} to some earlier $s_i$ such that $s_i \imc_A s_j$.
We write $\PNAlt(A)$ for the set of non-alternating plays with pointers
on $A$.
\end{defi}

Ghica and Murawski's model is an analogue of $\NNegAlt$ based
on non-alternating plays with pointers. A \textbf{GM-strategy} on
meager $A$ is a non-empty set of well-bracketed non-alternating
plays with pointers (with well-bracketing defined as in Definition
\ref{def:nalt_play_wb}) satisfying conditions analogous to Definition
\ref{def:nalt_strat} (see Definitions 4 and 13 in \cite{gm}) and
additionally \emph{thread-independent} (see Definition 17 of
\cite{gm}). There is a cartesian closed category $\GM$ of meager arenas
and GM-strategies, supporting the interpretation of $\IPA$.

We shall build a functorial bridge between $\CG$ and $\GM$, as in
Proposition \ref{prop:nalt_fonc}, but restricted to the cartesian closed
structure as $\GM$ has no linear decomposition. We use the
\emph{concrete arenas} of Section~\ref{def:concrete-arena}, extended in
the obvious way with Question/Answer labeling.

The proof of Proposition \ref{prop:pointer_plays} applies unchanged to
prove:

\begin{prop}\label{prop:nalt_pointer_plays}
For any concrete arena $A$, there is an injective function
\[
\mathscr{P} : \NAlt(A)/\!\!\sym 
\quad
\to 
\quad
\PNAlt(A^0)
\]
preserving length, prefix, justifiers, and reflecting and preserving
well-bracketing.
\end{prop}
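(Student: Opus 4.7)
The plan is to essentially reuse the construction of Proposition \ref{prop:pointer_plays}, only verifying the additional clauses specific to the non-alternating, well-bracketed setting. Given $s \in \NAlt(A)$, I would define $\mathscr{P}(s) \in \PNAlt(A^0)$ in two steps: first produce $s^\to$ by importing pointers from the immediate causal order $\imc_A$ — since for any non-minimal $s_j$, $\ev{\{s_1,\dots,s_j\}} \in \conf{A}$ by validity of $s$, the justifier (the unique $\imc_A$-predecessor) must appear earlier in $s$ — and then apply $\lbl_A$ pointwise. That the resulting pointers land on an immediate $\imc_{A^0}$-predecessor uses the \emph{rigid} clause of Definition \ref{def:concrete-arena}; the \emph{negative} condition of $\PNAlt(A^0)$ is inherited from that of $\NAlt(A)$ because $\lbl_A$ preserves polarities.

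Next I would verify that $\mathscr{P}$ is well-defined on $\sym$-equivalence classes. This is where \emph{transparent} of concrete arenas comes in: if $s \sym_A t$, then the length-wise bijection is an order-isomorphism respecting $\lbl_A$, so the induced pointer structures coincide and $\lbl_A$ is applied to equal moves pointwise, giving $\mathscr{P}(s) = \mathscr{P}(t)$. Length, prefix preservation, and preservation of justifiers are immediate from the construction. Injectivity proceeds exactly as before: if $\mathscr{P}(s) = \mathscr{P}(s')$, the induced length-preserving bijection $\theta : \ev{s} \simeq \ev{s'}$ is an order-isomorphism (same pointers) preserving $\lbl_A$ (same labels), hence is a symmetry by \emph{transparent}, witnessing $s \sym_A s'$.

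The only genuinely new part is the preservation and reflection of well-bracketing. Here Definition \ref{def:nalt_play_wb} refers solely to the Question/Answer labeling and the pointer (justifier) structure: \emph{fork} and \emph{wait} are stated in terms of which events justified by a given question are answered when. Since $\lbl_A$ preserves the Question/Answer labeling (the Q/A labels on arenas are inherited transparently and the label function preserves them by assumption on concrete arenas) and $\mathscr{P}$ preserves justifiers exactly, a prefix of $s$ violates \emph{fork} or \emph{wait} if and only if the corresponding prefix of $\mathscr{P}(s)$ does. Hence $s$ is well-bracketed iff $\mathscr{P}(s)$ is.

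I do not expect any serious obstacle: the entire argument of Proposition \ref{prop:pointer_plays} transfers verbatim once alternation is dropped, and the additional well-bracketing clause is essentially tautological because both sides of the correspondence refer to the same justifier data under a label-preserving display map. The only minor care point is to confirm that the Question/Answer labeling on concrete arenas is compatible with $\lbl_A$ in the same sense as polarities are; this follows from the inheritance rules given after Definition \ref{def:qa_arena}.
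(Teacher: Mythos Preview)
Your proposal is correct and follows exactly the paper's approach: the paper simply states that ``the proof of Proposition \ref{prop:pointer_plays} applies unchanged,'' and you have spelled out precisely that argument together with the easy additional verification for well-bracketing. Your observation that \emph{fork} and \emph{wait} depend only on the Q/A labeling and justifier structure, both preserved by $\mathscr{P}$, is exactly the right way to handle the new clause.
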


We consider the cartesian closed category $\CG_{\mathsf{c}}$ with
objects concrete arenas, obtained from $\CGwb_\oc$ by replacing all
operations on arenas with those on concrete arenas.

\begin{prop}\label{prop:cg_to_gm}
Consider concrete $-$-arenas $A, B$, and $\bsigma \in \CGwb_\oc(A, B)$.
Then,
\[
\PNStrat(\bsigma) = 
\{\mathscr{P}(s) \mid \text{$s \in \NAltStrat(\bsigma^\dagger)$
well-bracketed}\}
\]
is a GM-strategy on $A^0 \lin B^0$. Moreover, $\PNStrat$ extends to a
cartesian closed functor
\[
\PNStrat : \CG_{\mathsf{c}} \to \GM
\]
preserving the interpretation of $\IPA$.
\end{prop}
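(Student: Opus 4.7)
The plan is to exhibit $\PNStrat$ as a composition of operations whose behaviour is already under control: first $\NAltStrat(-)^\dagger$ produces a well-bracketed non-alternating strategy on $\oc A \lin \oc B$ with explicit copy indices, then the well-bracketing filter selects the relevant plays, and finally $\mathscr{P}$ (Proposition \ref{prop:nalt_pointer_plays}) collapses plays up to symmetry into plays with pointers on $(\oc A \lin \oc B)^0 = A^0 \lin B^0$. Every step is individually well-understood, and the task is to check they interact correctly with the categorical structure.

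First I would show that $\PNStrat(\bsigma)$ satisfies the GM-strategy axioms of \cite{gm}. Non-emptiness and prefix-closure are immediate from Proposition~\ref{prop:nalt_pointer_plays}, and well-bracketing is imposed by construction (and reflected by $\mathscr{P}$). For receptivity and saturation (courtesy), I would use that $\NAltStrat(\bsigma^\dagger)$ is receptive and courteous (Proposition~\ref{prop:def_naltstrat}), and that $\mathscr{P}$ preserves prefixes and justifiers while being injective on $\sym$-classes: any Opponent extension on the pointer side lifts to a unique symmetry class on the copy-index side, and symmetric asynchronous permutations of plays descend to permutations of plays with pointers. Thread-independence follows from the definition of the exponential $\oc$ on arenas, which replicates initial negative events, combined with the uniformity of causal strategies (Definitions~\ref{def:isofam} and~\ref{def:caus_strat}) ensuring a symmetric copy is available to witness independence of threads.

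Next I would handle functoriality. For identities, the copycat $\bcc_A$ in $\CG_{\mathsf{c}}$ becomes, after promotion and $\NAltStrat$, the asynchronous copycat on $\oc A$, whose pointer-collapse is the standard GM identity strategy on $A^0$. For composition $\btau \circ_\oc \bsigma = \btau \odot \bsigma^\dagger$, I would combine the strict monoidal functor $\NAltStrat$ of Proposition~\ref{prop:nalt_fonc} (giving $\NAltStrat(\btau \odot \bsigma^\dagger) = \NAltStrat(\btau) \odot \NAltStrat(\bsigma^\dagger)$) with the fact that $\mathscr{P}$ commutes with the restriction/hiding operations defining GM composition, and that the well-bracketing filter is stable under composition in both models. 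Cartesian closure then reduces to checking that currying and pairing, which are defined on $\CG_{\mathsf{c}}$ via copycat-like relabellings, match on plays after applying $\mathscr{P}$.

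Finally, preservation of the interpretation of $\IPA$ decomposes into two parts: the $\lambda$-calculus fragment is handled by the cartesian closed functoriality just established, while each primitive ($\mathsf{seq}, \mathsf{if}, \mathsf{succ}, \mathsf{assign}, \mathsf{deref}, \sgrab, \srelease, \mathsf{plet}, \bcell_n, \bsem_n$, and the fixpoint) must be checked to have the expected GM image by direct inspection of Figures~\ref{fig:intr_pcf}--\ref{fig:plet} and~\ref{fig:caus_cell}--\ref{fig:caus_sem}, together with continuity to handle $\Y$. The main obstacle will be the composition step: composition in $\CG$ proceeds via pullback-then-hiding on event structures, whereas GM composition is the classical ``parallel interaction plus hiding'' on plays, and the bridge requires tracking how synchronized events in $\btau \inter \bsigma^\dagger$, via Proposition~\ref{prop:main_interaction2} and Lemma~\ref{lem:caus_int}, give exactly the interactions of plays on the intermediate arena $\oc B$, which $\mathscr{P}$ then compresses to interactions on $B^0$; securedness of causal compatibility must align precisely with the absence of deadlocks visible at the level of plays.
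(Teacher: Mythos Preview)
Your outline captures the broad architecture, but it misses the two ingredients the paper singles out as critical, and without them the argument does not go through.

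First, you never address invariance under $\simstrat$. The operation $\PNStrat$ is defined via $\bsigma^\dagger$, and to show it is functorial on $\CG_{\mathsf{c}}$ you need $(\btau \odot_\oc \bsigma)^\dagger \simstrat \btau^\dagger \odot \bsigma^\dagger$ from the Seely laws, which only hold up to positive isomorphism. So before anything else you must show that $\bsigma \simstrat \btau$ implies $\PNStrat(\bsigma) = \PNStrat(\btau)$. This is not automatic: it requires transporting a play $t \in \NAlt(\bsigma^\dagger)$ along the iso $\varphi$ and using $\pr \circ \varphi \sim^+ \pr$ together with Proposition~\ref{prop:nalt_pointer_plays} to conclude the pointer images coincide.

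Second, and more seriously, your treatment of composition only handles one inclusion. Proposition~\ref{prop:nalt_fonc} plus ``$\mathscr{P}$ commutes with restrictions'' gives $\PNStrat(\btau \odot_\oc \bsigma) \subseteq \PNStrat(\btau) \odot \PNStrat(\bsigma)$ readily. But for the reverse inclusion, a GM interaction $u$ on $(A^0 \lin B^0) \lin C^0$ restricts to $\mathscr{P}(t^\bsigma)$ and $\mathscr{P}(t^\btau)$ for some concrete plays $t^\bsigma, t^\btau$; these agree on $B^0$ by construction, but their concrete restrictions to $\oc B$ generally use \emph{different copy indices} and agree only up to a symmetry $\theta : x^\bsigma_B \sym_{\oc B} x^\btau_B$. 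You cannot simply zip them into an interaction in $\NAltStrat(\btau^\dagger) \inter \NAltStrat(\bsigma^\dagger)$. The paper resolves this via Proposition~\ref{prop:sync_sym}: the securedness induced by $u$ lets one invoke that proposition to find symmetric configurations $y^\bsigma \sym x^\bsigma$, $y^\btau \sym x^\btau$ that match \emph{on the nose} in $\oc B$, and only then can one zip. Your references to Proposition~\ref{prop:main_interaction2} and Lemma~\ref{lem:caus_int} do not provide this; they describe the structure of an interaction once it exists, not how to manufacture one from data that only agrees up to symmetry.
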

\begin{proof}
We detail the two critical points: preservation of symmetry, and
composition.

\emph{Symmetry.} Consider $\bsigma, \btau \in \CGwb_\oc(A, B)$ s.t.
$\bsigma \simstrat \btau$. Then we also have
$\Lambda(\bsigma^\dagger) \simstrat \Lambda(\btau^\dagger)$, \emph{i.e.}
there is an iso $\varphi : \bsigma^\dagger \iso
\btau^\dagger$ s.t. $\pr_{\Lambda(\btau^\dagger)} \circ
\varphi \sim^+ \pr_{\Lambda(\bsigma^\dagger)}$. Consider $\mathscr{P}(s)
\in \PNStrat(\bsigma)$ for some $s \in
\NAltStrat(\Lambda(\bsigma^\dagger))$. This means there is $t \in
\NAlt(\bsigma^\dagger)$ s.t. $s =
\pr_{\Lambda(\bsigma^\dagger)}(t)$. But then, $\varphi(t) \in
\NAlt(\btau^\dagger)$, and from $\pr_{\Lambda(\btau^\dagger)} \circ
\varphi \sim^+ \pr_{\Lambda(\bsigma^\dagger)}$ it is direct that $s =
\pr_{\Lambda(\bsigma^\dagger)}(t) \sym_{A\lin B}
\pr_{\Lambda(\btau^\dagger)}(\varphi(t))$. By Proposition
\ref{prop:nalt_pointer_plays},
$\mathscr{P}(\pr_{\Lambda(\bsigma^\dagger)}(t)) =
\mathscr{P}(\pr_{\Lambda(\btau^\dagger)}(\varphi(t)))$, so that
$\mathscr{P}(s) \in \PNStrat(\btau)$. The other inclusion is
symmetric, so $\PNStrat(\bsigma) = \PNStrat(\btau)$.

\emph{Composition.}
For $\bsigma \in \CGwb_\oc(A, B), \btau \in \CGwb_\oc(B, C)$, we must
show
\[
\PNStrat(\btau \odot_\oc \bsigma) = \PNStrat(\btau) \odot
\PNStrat(\bsigma)\,,
\]
there are two inclusions to prove:

$\subseteq$. Consider $\mathscr{P}(s) \in \PNStrat(\btau \odot_\oc
\bsigma)$. Since $\PNStrat(-)$ preserves positive isomorphism and by the
laws of Seely categories, $\mathscr{P}(s) \in
\PNStrat(\btau^\dagger \odot \bsigma^\dagger)$ for some $s \in
\NAltStrat(\btau^\dagger \odot \bsigma^\dagger)$. By Proposition
\ref{prop:nalt_fonc}, the latter is $\NAltStrat(\btau^\dagger)
\odot \NAltStrat(\bsigma^\dagger)$.
Now, if $u \in \NAltStrat(\btau^\dagger)
\inter \NAltStrat(\bsigma^\dagger)$ is a witness for $s \in
\NAltStrat(\btau^\dagger) \odot \NAltStrat(\bsigma^\dagger)$, then as
$\bsigma^\dagger$ and $\btau^\dagger$ are well-bracketed, it is
direct that $u\restrict A, B \in
\NAltStrat(\bsigma^\dagger)$ and $u\restrict B, C \in
\NAltStrat(\btau^\dagger)$ are well-bracketed as well. Therefore,
$\mathscr{P}(u\restrict A, B) \in \NAltStrat(\bsigma^\dagger)$ and
$\mathscr{P}(u\restrict B, C) \in \NAltStrat(\btau^\dagger)$, hence
$\mathscr{P}(u)$ is a witness in the sense of \cite{gm} for
$\mathscr{P}(s) \in \PNStrat(\btau) \odot \PNStrat(\bsigma)$.

$\supseteq$. Consider now $s \in \PNStrat(\btau) \odot
\PNStrat(\bsigma)$. There is a witness $u$, a sequence with pointers on
$(A' \lin B') \lin C'$, with restrictions $u \restrict A', B' \in
\PNStrat(\bsigma)$ and $u \restrict B', C' \in \PNStrat(\btau)$ -- we
refer to \cite{gm} for the definitions of restrictions of plays with
pointers. Write $u \restrict A', B' = \mathscr{P}(t^\bsigma)$ and $u
\restrict B', C' = \mathscr{P}(t^\btau)$ for $t^\bsigma \in
\NAltStrat(\bsigma^\dagger)$ and $t^\btau \in
\NAltStrat(\btau^\dagger)$. Though $t^\bsigma$ and $t^\btau$ yield plays
with pointers compatible in $B'$, they might not match in $B$ on the
nose. But by Proposition \ref{prop:nalt_pointer_plays}, $t^\bsigma
\restrict B \sym_{\oc B} t^\btau \restrict B$ match \emph{up to
symmetry}. So, writing 
\[
t^\bsigma = \pr_{\Lambda(\bsigma^\dagger)}(v^\bsigma)
\qquad
\qquad
t^\btau = \pr_{\Lambda(\btau^\dagger)}(v^\btau)\,,
\]
for $v^\bsigma \in \NAlt(\bsigma^\dagger)$ and $v^\btau \in
\NAlt(\btau^\dagger)$, writing $x^\bsigma = \ev{v^\bsigma} \in
\conf{\bsigma^\dagger}$ and $x^\btau = \ev{v^\btau} \in
\conf{\btau^\dagger}$ the symmetry $t^\bsigma \restrict B \sym_{\oc B}
t^\btau \restrict B$ induces $\theta : x^\bsigma_B \sym_{\oc B}
x^\btau_B$. Moreover,
\[
x^\bsigma \parallel x^\btau_C 
\quad
\stackrel{\pr_{\bsigma^\dagger}\parallel \oc C}{\simeq}
\quad
x^\bsigma_A \parallel x^\bsigma_B \parallel x^\btau_C 
\quad
\stackrel{\oc A \parallel \theta \parallel \oc C}{\sym}
\quad
x^\bsigma_A \parallel x^\btau_B \parallel x^\btau_C
\quad
\stackrel{\oc A \parallel \pr_{\btau}^{-1}}{\simeq}
\quad
x^\bsigma_A \parallel x^\btau\,,
\]
is secured, as $u$ directly informs a total ordering of its graph
compatible with $\bsigma^\dagger$ and $\btau^\dagger$.
Hence, by Proposition \ref{prop:sync_sym}, there are $\varphi^\bsigma :
x^\bsigma \sym_{\bsigma^\dagger} y^\bsigma$ and $\varphi^\btau : x^\btau
\sym_{\btau^\dagger} y^\btau$ such that $y^\bsigma_B = y^\btau_B$.

Transporting $t^\bsigma$ and $t^\btau$ through $\varphi^\bsigma$ and
$\varphi^\btau$, we get $w^\bsigma \in \NAlt(\bsigma^\dagger)$ and
$w^\btau \in \NAlt(\btau^\dagger)$ s.t. we still have
$\mathscr{P}(w^\bsigma) = u \restrict A', B'$ and $\mathscr{P}(w^\btau)
= u \restrict B', C'$; but this time $w^\bsigma \restrict
B = w^\btau \restrict B$. Zipping them following $u$ we obtain $w \in
\NAlt(\btau^\dagger) \inter \NAlt(\bsigma^\dagger)$ such that
$\mathscr{P}(w\restrict A, C) = s$. But then $w \restrict A, C \in
\NAlt(\btau^\dagger) \odot \NAlt(\bsigma^\dagger)$ by construction, so
in $\NAlt((\btau \odot_\oc \bsigma)^\dagger)$ by Proposition
\ref{prop:nalt_fonc}, hence $s \in \PNStrat(\btau \odot_\oc \bsigma)$ as
required.
\end{proof}

So there is a functorial unfolding from $\CG$ to $\GM$. To further
factor out non-observable behaviour, one can restrict to \emph{complete} plays: 

\begin{defi}
Consider $A$ a meager $-$-arena, and $s \in \PNAlt(A)$.

We say that $s$ is \textbf{complete} iff it is well-bracketed and all
its questions have an answer.
\end{defi}

If $\bsigma \in \CGwb_\oc(A, B)$, $\comp(\bsigma)$ is the
subset of $\PNStrat(\bsigma)$ comprising complete plays only.
The proposition above allows us to deduce (with $\obs$ defined in
Section~\ref{subsubsec:intensional_fa}):

\begin{prop}\label{prop:obs_to_comp}
Consider $A$ concrete and $\bsigma, \btau : A$ well-bracketed causal
strategies. Then,
\[
\comp(\bsigma) = \comp(\btau)
\quad
\implies
\quad
\bsigma \obs \btau\,.
\]
\end{prop}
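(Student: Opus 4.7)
The plan is to unfold the definition of $\obs$, reduce it to non-emptiness of $\comp(-)$ applied to composites with arbitrary tests, then invoke the functoriality of $\PNStrat$ together with the standard compositionality of complete plays for well-bracketed strategies.

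First, I would unfold the definition of $\obs$ from Section \ref{subsubsec:intensional_fa}: $\bsigma \obs \btau$ amounts to requiring, for every $\alpha \in \CGwb_\oc(\intr{A}, \intr{\tunit})$, that $\alpha \odot_\oc \bsigma$ and $\alpha \odot_\oc \btau$ are equal in $\CGwb_\oc(\emptyar, \intr{\tunit})$, understood up to $\simstrat$. On the arena $\tunit$, however, there are up to $\simstrat$ only two closed strategies -- a diverging and a converging one -- so this equality reduces to: $\alpha \odot_\oc \bsigma$ has a positive $\done^+$ iff $\alpha \odot_\oc \btau$ does. Since the unique complete play in $\PNAlt(\tunit)$ is $\qu^-\done^+$, having a positive move is equivalent to $\comp(-) \neq \emptyset$. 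Thus the goal becomes: $\comp(\alpha \odot_\oc \bsigma) \neq \emptyset$ iff $\comp(\alpha \odot_\oc \btau) \neq \emptyset$.

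Second, by Proposition \ref{prop:cg_to_gm}, $\PNStrat$ is a cartesian closed functor, so $\PNStrat(\alpha \odot_\oc \bsigma) = \PNStrat(\alpha) \odot \PNStrat(\bsigma)$ in $\GM$, and symmetrically for $\btau$. Hence the problem is now internal to $\GM$: we must show that non-emptiness of $\comp(\PNStrat(\alpha) \odot \PNStrat(\bsigma))$ depends on $\PNStrat(\bsigma)$ only through its set of complete plays, which is precisely $\comp(\bsigma)$.

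Third, I would establish the key compositionality lemma: for well-bracketed GM-strategies $\sigma : A \to B$ and $\tau : B \to C$, a play $s \in \tau \odot \sigma$ is complete iff it has a witness interaction $u \in \tau \inter \sigma$ with $u \restrict A, B \in \comp(\sigma)$, $u \restrict B, C \in \comp(\tau)$, and $s = u \restrict A, C$ itself complete. The forward direction relies on the \emph{wait} condition of Definition \ref{def:nalt_play_wb}: if some $B$-question $q$ in $u$ were unanswered, then its justifying question (in $A$, $B$, or $C$) could not have been answered in $u$ either; iterating this up the justification chain contradicts $s = u \restrict A, C$ being complete. The backward direction is immediate from the definition of composition by interaction-plus-hiding.

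With this lemma, the hypothesis $\comp(\bsigma) = \comp(\btau)$ yields $\comp(\alpha \odot_\oc \bsigma) = \comp(\alpha \odot_\oc \btau)$ for every $\alpha$, closing the argument. The main obstacle is the compositionality lemma in the third step: although standard in the game semantics of well-bracketed non-alternating strategies, it requires careful bookkeeping of justification chains across the $B$-interface and a precise combined use of both the \emph{fork} and \emph{wait} conditions of Definition \ref{def:nalt_play_wb}.
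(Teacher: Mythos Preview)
Your proposal is correct and follows essentially the same route as the paper: both reduce to $\GM$ via the functoriality of $\PNStrat$ (Proposition \ref{prop:cg_to_gm}), then use that in an interaction witnessing a complete outer play, well-bracketing of $\bsigma$ and $\balpha$ forces the $A$-restriction to be complete, hence in $\comp(\bsigma) = \comp(\btau)$. The paper works directly in the special case $C = \tunit$ (where the outer complete play is just $\qu^-\done^+$), whereas you package the same argument as a general compositionality lemma for complete plays; the core idea and the use of the \emph{wait} condition are identical.
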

\begin{proof}
Consider $\balpha \in \CGwb_\oc(A, \tunit)$ s.t.
$\balpha \odot_\oc \bsigma \eval$. By Proposition \ref{prop:cg_to_gm},
$\qu^- \done^+ \in \PNStrat(\balpha) \odot
\PNStrat(\bsigma)$. Considering $u \in \PNStrat(\balpha)
\inter \PNStrat(\bsigma)$, $u \restrict A^0, \tunit \in
\PNStrat(\bsigma)$. From well-bracketing of $\bsigma$ and $\balpha$,
$u \restrict A^0$ is well-bracketed and complete, so
$u \restrict A^0 \in \comp(\btau)$, so $\PNStrat(\balpha)
\odot \PNStrat(\btau) \eval$, so $\balpha \odot_\oc \btau \eval$ by
Proposition \ref{prop:cg_to_gm}.
\end{proof}

To prove the converse and link it to syntactic 
equivalence, we examine \emph{definability}.

\subsubsection{Definability of plays with pointers} As in Sections~\ref{subsubsec:intensional_fa} and \ref{subsubsec:fa_ia}, full
abstraction relies on \emph{definability}.  While definability in $\IA$
rests on definability for finite innocent strategies, Ghica and
Murawski's definability for $\IPA$ gives directly terms realizing
\emph{individual plays}. For conciseness we omit the full development,
but illustrate it on a representative example. 

\begin{figure}
\begin{minipage}{.45\linewidth}
\[
\xymatrix@R=-5pt@C=0pt{
\tunit_1 &\to& (\tunit_2&\to&\tunit_3) &\to& \tunit_4&\to&\tunit_5\\
&&&&&&&&\qu_5^-\\
\qu_1^+
        \ar@{.}@/^/[urrrrrrrr]\\
&&&&\qu_3^+
        \ar@{.}@/^/[uurrrr]\\
\done_1^-
        \ar@{.}@/^/[uu]\\
&&\qu_2^-\ar@{.}@/^/[uurr]\\
&&&&&&\qu_4^+
        \ar@{.}@/^/[uuuuurr]\\
&&&&&&\done_4^-\\
&&\done_2^+
        \ar@{.}@/^/[uuu]
}
\]
\end{minipage}
\hfill
\begin{minipage}{.45\linewidth}
\[
\xymatrix@R=10pt{
&\qu_5^-        \ar@{.}[dl]
        \ar@{.}[d]
        \ar@{.}[dr]\\
\qu_1^+ \ar@{.}[d]&
\qu_3^+ \ar@{.}[d]&
\qu_4^+ \ar@{.}[d]\\
\done_1^-
        \ar@{-|>}[dr]
        \ar@{-|>}[urr]&
\qu_2^- \ar@{.}[d]
        \ar@{-|>}[ur]&
\done_4^-
        \ar@{-|>}[dl]\\
&\done_2^+
}
\]
\end{minipage}
\caption{Example of definability of plays in $\IPA$}
\label{fig:ex_def_ipa}
\end{figure}

Consider $s$ on the left hand side of Figure \ref{fig:ex_def_ipa}.
Naively, we want a term whose only execution is $s$. But
strategies satisfy \emph{courtesy}, so one realizing $s$ must
also realize all plays obtainable by adding asynchronous delays. The
information from $s$ that survives asynchronous delays is that certain
positive moves appear \emph{after} earlier negative moves in $s$.
Together with the static causality from the arena, this yields
a diagram as in the right hand side of Figure
\ref{fig:ex_def_ipa}, very much like a causal strategy. 
It is directly this diagram
that Ghica and Murawski's finite definability
reproduces. First, we ignore $\imc$-links and start with a
term
\[
\lambda x^{\tunit_1} f^{\tunit_2 \to \tunit_3} y^{\tunit_4}.\,(x
\parallel
f\,\tskip \parallel y);\,\bot\,,
\]
that performs all computational events available in $s$ in a maximally
parallel fashion, with only causal dependency enforced by the game.
The $\imc$-links are restored through the memory. For
that we define two helper functions. If $M : \var$, we write
$\mathbf{set}(M) : \tunit$ for $M:=1$, and $\mathbf{test}(M) : \tunit$
for $\assrt(\tnot (\iszero\,!M))$ which converges iff $!M$ is
non-zero. Then:
\[
\lambda x^{\tunit_1} f^{\tunit_2 \to \tunit_3} y^{\tunit_4}.\,
\left(
\left(
\begin{array}{l}
x;\\
\mathbf{set}(\done_1^-)
\end{array}
\right)
\parallel
\left(
\begin{array}{l}
f\,(\mathbf{set}(\qu_2^-);\\
\hspace{12pt} \mathbf{test}(\done_4^-);\\
\hspace{12pt} \mathbf{grab}(\done_2^+);\\
\hspace{12pt} \tskip)\\
\end{array}
\right)
\parallel
\left(
\begin{array}{l}
\mathbf{test}(\done_1^-);\\
\mathbf{test}(\qu_2^-);\\
y;\\
\mathbf{set}(\done_4^-)
\end{array}
\right)
\right);\,\bot
\]
borrows the shape of the first term, signaling the $\imc$-links through
memory. We use one fresh reference (initialized to $0$)
for each Opponent move, which gets set to $1$ when the Opponent move
occurs.  Finally, we use semaphores to ensure that Opponent replications
does not cause a duplication of Player moves by prompting re-evaluation
of the corresponding subterms -- so that we only obtain linearizations
of the diagram on the right hand side of Figure \ref{fig:ex_def_ipa}.

Done systematically for arbitrary plays, this establishes \cite{gm}:

\begin{prop}[Ghica, Murawski]\label{prop:gm_def}
Consider $A$ a type and $s \in \PNAlt(\aintr{A})$ complete.

Then there is $\vdash M_s : A \to \tunit$ such that for all $\vdash N :
A$,
\[
s \in \intr{N}_{\GM} 
\qquad
\Leftrightarrow
\qquad
M_s N \eval
\]
\end{prop}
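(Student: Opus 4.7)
The plan is to follow and generalise the construction illustrated in Figure \ref{fig:ex_def_ipa}. Given a complete play with pointers $s \in \PNAlt(\aintr{A})$, the first step is to extract from $s$ a finite partial order $(E_s, \leq_s)$ whose underlying set is the set of moves of $s$, and whose causality is generated by two kinds of edges: the static edges $m \imc_{\aintr{A}} m'$ coming from justification pointers in the arena, and the dynamic edges $o <_s p$ whenever $o$ is a negative move strictly preceding the positive move $p$ in $s$. The resulting order captures precisely the information of $s$ that survives asynchronous delays (courtesy), and every linearization of $(E_s,\leq_s)$ compatible with the arena polarities is itself a play on $\aintr{A}$ in the GM-sense.

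From $(E_s, \leq_s)$ one builds $M_s$ compositionally along the arena shape. Writing $A = A_1 \to \dots \to A_n \to \tx$ and setting aside the initial $\qu^-$ of $s$, I would allocate one fresh reference $r_o$ (initialized to $0$) per remaining negative move $o$ and one fresh semaphore $\gamma_p$ (initially unlocked) per positive move $p$. The body of $M_s$ is then $\lambda N.\,(T_1 \parallel \cdots \parallel T_n);\,v$, with $v$ the value of $\tx$ answering the initial $\qu^-$ in $s$, and each $T_i$ a term of type $\tunit$ calling the corresponding argument $x_i$ of $N$ and orchestrating the positive moves of $s$ located under that argument. Each positive move $p$ is translated to a code block that first grabs $\gamma_p$ (forbidding re-execution triggered by Opponent replications), then performs $\mathbf{test}(r_o)$ for every non-arena predecessor $o <_s p$, then performs the actual move (a variable call, or a value return), and finally calls $\mathbf{set}(r_o)$ at syntactic points just after each negative move $o$ it introduces. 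Nested calls and returns are threaded so as to match the forest structure of $\aintr{A}$; well-bracketedness of $s$ guarantees this is always possible.

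For correctness, the easier direction is ``$\Leftarrow$'': if $M_s N \eval$, then the converging interaction must pass through every $\mathbf{grab}(\gamma_p)$ and every $\mathbf{test}(r_o)$, so each positive move of the pattern fires exactly once and only after all its $\leq_s$-predecessors have already occurred. Projecting away the scaffolding of references and semaphores, one recovers a GM-play of $\intr{N}$ that is a linearization of $(E_s, \leq_s)$, hence equal to $s$ up to asynchronous delay; using thread-independence and saturation of GM-strategies together with well-bracketing to reinstall pointers from the arena forest, this play can be normalised back to $s$ itself. The harder direction is ``$\Rightarrow$'': given that $s \in \intr{N}_{\GM}$, one must schedule the interaction of $M_s$ with $N$ so as to actually reach the final $v$. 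Here the plan is to play the interaction along the chronological order of $s$: every negative move of $M_s N$ is provided by $N$ since it lies in $\intr{N}_{\GM}$, and every positive move of $M_s$ is enabled exactly when its guards $\mathbf{test}(r_o)$ succeed, which happens precisely at the stage of $s$ where it appears. Completeness of $s$ ensures the answer to $\qu^-$ is eventually played, after which $v$ is evaluated and $M_s N \eval$.

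The main obstacle, and the reason Ghica and Murawski's original argument is delicate, lies in this backward direction combined with the presence of Opponent replication: a single subterm of $M_s$ may in principle be prompted to reduce several times by the environment, and without the semaphores $\gamma_p$ this would produce spurious positive moves not present in $s$. The correctness of the construction thus hinges on a careful invariant stating that, in any complete interaction of $M_s$ with $N$, every $\gamma_p$ is grabbed exactly once and every $r_o$ is set exactly once, so that the scaffolding forces the visible interaction to be a faithful replay of $s$. Establishing this invariant rigorously is where the bulk of the technical work would go; once it is in place, both directions of the equivalence follow by straightforward induction on $|E_s|$.
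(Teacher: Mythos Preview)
Your overall approach---extracting the causal order $(E_s,\leq_s)$ from $s$ and realising it with references, semaphores, and parallel composition---matches the paper's, which does not prove this proposition in detail but cites Ghica--Murawski and illustrates the technique on one example.

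There is, however, a confusion of sides in the concrete construction. The paper's illustration (Figure~\ref{fig:ex_def_ipa}) builds a term of type $A$ \emph{realising} $s$ from the Player side: its shape is $\lambda x_1^{A_1}\dots x_n^{A_n}.\,(T_1 \parallel \cdots \parallel T_n);\,\bot$, with each $T_i$ interacting with the bound variable $x_i$. The test $M_s : A \to \tunit$ required by the proposition sits on the \emph{Opponent} side of $s$, and its shape must instead be $\lambda N^A.\,\new\,\vec{r},\vec{\gamma}\,\tin\,\assrt(N\,P_1\dots P_n =_\tx v);\,(\text{check all signals});\,\tskip$, a single application of $N$ to recursively crafted arguments $P_j : A_j$ (the parallelism lives inside the $P_j$ and is driven by $N$'s calls, not by an outer $\parallel$). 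Your body $\lambda N.\,(T_1 \parallel \cdots \parallel T_n);\,v$ with $v:\tx$ is the former shape mislabelled as the latter: it does not type at $A \to \tunit$, and ``calling the argument $x_i$ of $N$'' is not meaningful when $N$ is an opaque term to which one supplies arguments. The reference/semaphore allocation must be dualised accordingly: for $M_s$, references go with \emph{positive} moves of $s$ (Opponent from $M_s$'s viewpoint, set when $N$ plays them) and semaphores guard \emph{negative} moves of $s$ (Player moves of $M_s$). Once the construction is placed on the correct side your correctness argument is sound; the paper's one substantive remark beyond the citation is precisely your courtesy/saturation point, that any trace of a successful interaction can be permuted back to $s$.
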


In particular, as $\intr{N}_{\GM}$ is courteous, any $t \in
\intr{N}_{\GM}$ tracing a successful interaction with $M_s$ can be
converted to $s$ through permutations whose correctness is granted
by courtesy.

\begin{thm}\label{th:fa_ipa}
The model $\CGwb_\oc$ is intensionally fully abstract for $\IPA$.
\end{thm}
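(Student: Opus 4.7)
The plan is to follow the template of Theorems \ref{th:fa_pcf} and \ref{th:fa_ia}, but replacing the innocent/visible definability argument with Ghica and Murawski's more direct play-by-play definability (Proposition \ref{prop:gm_def}). The easy direction $\intr{M} \obs \intr{N} \implies M \obs N$ is immediate from adequacy (Theorem \ref{th:adequacy_ipa}): for any context $C[-]$, its interpretation $\intr{C[-]}$ furnishes a valid model test, so $\intr{M} \obs \intr{N}$ yields $\intr{C[M]} \eval \Leftrightarrow \intr{C[N]} \eval$, which transports back to the operational level via adequacy.

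For the converse, I would assume $M \obs N$ and argue by contradiction: suppose $\alpha \in \CGwb_\oc(\intr{A}, \intr{\tunit})$ satisfies (say) $\alpha \odot_\oc \intr{M} \eval$ and $\alpha \odot_\oc \intr{N} \div$. By adequacy and Proposition \ref{prop:cg_to_gm}, the complete play $\qu^- \done^+$ lies in $\PNStrat(\alpha \odot_\oc \intr{M}) = \PNStrat(\alpha) \odot \PNStrat(\intr{M})$. Picking a GM-witness for this composition and restricting to the $\intr{A}^0$-side yields a play $s \in \PNStrat(\intr{M})$. Because both $\alpha$ and $\intr{M}$ are well-bracketed and the witness reaches $\done^+$ in $\tunit$, an easy inspection shows $s$ is complete.

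Now Proposition \ref{prop:gm_def} supplies a term $\vdash M_s : A \to \tunit$ such that $M_s\,L \eval$ iff $s \in \PNStrat(\intr{L})$. Since $s \in \PNStrat(\intr{M})$, we get $M_s\,M \eval$. Conversely, were $M_s\,N \eval$, we would have $s \in \PNStrat(\intr{N})$, so the very same witness could be re-assembled to produce $\qu^-\done^+ \in \PNStrat(\alpha) \odot \PNStrat(\intr{N}) = \PNStrat(\alpha \odot_\oc \intr{N})$, contradicting $\alpha \odot_\oc \intr{N} \div$. Hence $M_s\,N \div$, contradicting $M \obs N$; finiteness of $\alpha$ is not needed because $\PNStrat$ already factors through the play-by-play projection.

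The main obstacle is the bookkeeping around witnesses: one has to verify carefully that the $s$ extracted from a successful interaction of $\alpha$ with $\intr{M}$ is genuinely complete and well-bracketed in the GM sense (so Proposition \ref{prop:gm_def} really applies), and symmetrically that $s \in \PNStrat(\intr{N})$ does re-assemble into a converging interaction against $\alpha$. Both hinge on the compositionality proved in Proposition \ref{prop:cg_to_gm} together with the standard observation that well-bracketedness and completeness are preserved by restriction of a witness to either side; these are routine for non-alternating games, but deserve explicit verification against our conventions for pointers and witnesses on $(\intr{A}^0 \lin \tunit) \lin \tunit$.
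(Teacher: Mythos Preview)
Your proposal is correct and follows essentially the same route as the paper. The only organisational difference is that the paper packages your witness-extraction and re-assembly argument as Proposition \ref{prop:obs_to_comp} (invoked in contrapositive to obtain $s \in \comp(\intr{M}) \setminus \comp(\intr{N})$ directly), whereas you inline that reasoning; the key ingredients---adequacy, the functorial projection to $\GM$ (Proposition \ref{prop:cg_to_gm}), and Ghica--Murawski play definability (Proposition \ref{prop:gm_def})---are identical.
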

\begin{proof}
Consider $\vdash M, N : A$.
If $\intr{M} \obs \intr{N}$, using Theorem \ref{th:adequacy_ipa},
$M \obs N$.
Reciprocally, assume $M \obs N$. Seeking a contradiction, assume
$\intr{M} \not \obs \intr{N}$. By Proposition
\ref{prop:obs_to_comp}, there is \emph{w.l.o.g.} $s \in
\comp(\intr{M})$ where $s \not \in \comp(\intr{N})$. So, by
Proposition \ref{prop:cg_to_gm}, $s \in \intr{M}_{\GM}$ while $s \not
\in \intr{N}_{\GM}$. Finally, by Proposition \ref{prop:gm_def}, we have
$M_s\,M \eval$ while $M_s\,N \div$, contradiction.
\end{proof}

As for Theorem \ref{th:fa_ia}, the resulting quotient is effective and
easily described: for $\bsigma, \btau : A$,
\[
\bsigma \obs \btau \qquad \Leftrightarrow \qquad
\comp(\bsigma) = \comp(\btau)\,.
\]

Observational equivalence is undecidable even for the second-order
fragment and without recursion \cite{DBLP:journals/tcs/GhicaMO06}.
Note also that without semaphores, full abstraction fails
\cite{DBLP:journals/entcs/Murawski10}
as terms are closed under a stuttering behaviour which reduces their
observational power.  

\subsection{Parallel Innocence and Sequentiality}
We resume the discussion left at Section~\ref{subsec:parallelism}: can we find
\emph{parallel innocence} and \emph{sequentiality} disentangling
parallelism and interference?

A subtlety is that while $\IPA$ is non-deterministic, $\PCFpar$ and
$\IA$ are both deterministic -- albeit in different senses.
Non-determinism arises in $\IPA$ from the interaction of parallelism and
interference so, removing either of these causes, determinism has to be
reimposed as well. Accordingly, sequentiality and parallel innocence
will include determinism.

\section{Parallel Innocence}
\label{sec:par_inn}

We capture the causal patterns definable with pure parallel higher-order
programming.

\subsection{Causal determinism} The sense in which $\PCFpar$ is
deterministic is fairly different from Definition
\ref{def:alt_strat}. For instance, after the first Opponent move, the
strategy of Figure \ref{fig:plet} has two available Player moves; but
the order in which these moves are
played does not matter and will eventually reach the same result: the
program is \emph{deterministic up to the choice of the scheduler}.
If $E$ is an event structure, write $\Con_E$ for the set of finite
\textbf{consistent} sets of events, \emph{i.e.} for $X \subseteq_f
\ev{E}$, $X \in \Con_E$ iff for all $e_1, e_2 \in X$, we have $\neg (e_1
\conflict_\bsigma e_2)$.

We use Winskel's definition of \emph{determinism} for concurrent
strategies \cite{DBLP:journals/fac/Winskel12}:

\begin{defi}
A causal strategy $\bsigma : A$ on arena $A$ is \textbf{causally deterministic} if:
\[
\begin{array}{ll}
\text{\emph{causal determinism:}}&
\text{assume $X \subseteq_f \ev{\bsigma}$ is \emph{negatively
compatible},}\\
&\text{\emph{i.e.} $X_- = \{s \in X \mid \pol_\bsigma(s) =
-\} \in \Con_\bsigma$. Then, $X \in \Con_\bsigma$.}
\end{array}
\]
\end{defi}

This ensures that Player branching only spawns parallel threads: only
Opponent may initiate conflict. Copycat is deterministic, and
deterministic strategies compose \cite{DBLP:journals/fac/Winskel12}. All
other constructions in the Seely category structure of $\CG$ preserve determinism. 

\subsection{Parallel Innocence} What causal shapes are distinctive of
pure parallel computation?

\subsubsection{Pre-innocence} Pure parallel programs may spawn
parallel threads, which must remain independent in the
absence of interference. Once they both terminate the program
may take new actions that depend on their results, causally
``merging'' them. A typical causal strategy featuring this behaviour,
for $x: \tunit, y : \tunit \vdash x \parallel y : \tunit$,
appears in Figure \ref{fig:par_comp}. The slogan is: 

\begin{center}
\emph{``Player may merge threads that he himself has spawned''}.
\end{center}

In contrast, both diagrams of Figure \ref{fig:ex_aug2} bear signs of
interference.  In the first, the answer $1^+$ depends on $\qu^-$:  the
program somehow observes if the function has called its argument,
which is only possible if the argument performs some side-effect that the
program observes. In the second, $\done^+$ depends on $\done^-$;
but likewise this can only occur if the termination of the function
triggers a side-effect. In both cases, this is witnessed by Player
``merging'' causal chains which forked at Opponent moves. To ban such
interference, the slogan is:
\begin{center}
\emph{``Player may not merge threads spawned by Opponent''}.
\end{center}

To define parallel innocence, our first step is to introduce a formal
notion of ``thread'':

\begin{defi}
Consider $A$ an arena, and $\bsigma : A$ a causal strategy.

A \textbf{grounded causal chain (gcc)} in $\bsigma$ is
$\rho = \{\rho_1, \dots, \rho_n\} \subseteq \ev{\bsigma}$ forming
\[
\rho_1 \imc_\bsigma \dots \imc_\bsigma \rho_n
\]
a chain with $\rho_1$ minimal with respect to $\leq_\bsigma$.
We write $\gcc(\bsigma)$ for the gccs in $\bsigma$.
\end{defi}

A gcc is just a set, but we write $\rho = \rho_1 \imc_\bsigma \dots
\imc_\bsigma \rho_n \in \gcc(\bsigma)$ to insist on the causal ordering
inherited from $\leq_\bsigma$.
If also $\rho_1 \imc_\bsigma \dots \imc_\bsigma \rho_n \imc_\bsigma m
\in \gcc(\bsigma)$, then we write $\rho \imc m = \rho \cup \{m\}$.
Gccs are not necessarily down-closed:
\begin{figure}
\[
\gcc\left(
\raisebox{40pt}{$
\scalebox{.9}{$
\xymatrix@R=5pt@C=15pt{
(\tunit \ar@{}[r]|\lin&\tunit)\ar@{}[r]|\lin&\tnat\\
&&\qu^-  \ar@{-|>}[dl]\\
&\qu^+   \ar@{-|>}[dl]
        \ar@{-|>}[d]
        \ar@{.}@/^/[ur]\\
\qu^-    \ar@{-|>}[d]
        \ar@{.}@/^/[ur]
        \ar@{-|>}[drr]&
\done^- \ar@{-|>}[dr]
        \ar@{.}@/^/[u]\\
\done^+ \ar@{.}@/^/[u]&&
1^+\ar@{.}@/_/[uuu]
}$}$}\right)
~~\supseteq ~~
\left\{
\raisebox{40pt}{$
\scalebox{.9}{$
\xymatrix@R=5pt@C=15pt{
(\tunit \ar@{}[r]|\lin&\tunit)\ar@{}[r]|\lin&\tnat\\
&&\qu^-  \ar@{-|>}[dl]\\
&\qu^+   \ar@{-|>}[dl]
        \ar@{.}@/^/[ur]
\\
\qu^-    \ar@{-|>}[d]
        \ar@{.}@/^/[ur]
&
\\
\done^+ \ar@{.}@/^/[u]&&
}$}$}~,~
\raisebox{40pt}{$
\scalebox{.9}{$
\xymatrix@R=5pt@C=15pt{
(\tunit \ar@{}[r]|\lin&\tunit)\ar@{}[r]|\lin&\tnat\\
&&\qu^-  \ar@{-|>}[dl]\\
&\qu^+   \ar@{-|>}[dl]
        \ar@{.}@/^/[ur]\\
\qu^-    
        \ar@{.}@/^/[ur]
        \ar@{-|>}[drr]&
\\
&&
1^+\ar@{.}@/_/[uuu]
}$}$}~,~
\raisebox{40pt}{$
\scalebox{.9}{$
\xymatrix@R=5pt@C=15pt{
(\tunit \ar@{}[r]|\lin&\tunit)\ar@{}[r]|\lin&\tnat\\
&&\qu^-  \ar@{-|>}[dl]\\
&\qu^+   
        \ar@{-|>}[d]
        \ar@{.}@/^/[ur]\\
&\done^- \ar@{-|>}[dr]
        \ar@{.}@/^/[u]\\
&&1^+\ar@{.}@/_/[uuu]
}$}$}
\right\}
\]
\caption{Maximal grounded causal chains of a causal strategy}
\label{fig:ex_gcc}
\end{figure}
we show in Figure \ref{fig:ex_gcc} all maximal gccs of a causal
strategy. Of those, the second and third omit some dependencies of
$1^+$.

We may now make formal the idea of a strategy ``only merging
threads forked by Player''.

\begin{defi}\label{def:imc-pre-innocence}
Consider $A$ an arena. A causally deterministic $\bsigma : A$ is
\textbf{pre-innocent} iff
\[
\begin{array}{ll}
\text{\emph{pre-innocence:}} &
\text{If $m^+ \in \ev{\bsigma}$ and $\rho_1 \imc m, \rho_2 \imc m
\in \gcc(\bsigma)$ are distinct,}\\
&\text{then their least distinct moves are positive.}
\end{array}
\]
\end{defi}

As causal strategies are \emph{pointed}, $\rho_1$ and
$\rho_2$ necessarily share the same initial move.
The strategy of Figure \ref{fig:par_comp} is pre-innocent. In
contrast, that of Figure \ref{fig:ex_aug1} is not -- both
augmentations of Figure \ref{fig:ex_aug2} fail pre-innocence.
For instance, the second and third gccs of Figure \ref{fig:ex_gcc}
arrive at $1^+$ but before that, the greatest common event
is $\qu^+$, which is positive: Player is merging (via $1^+$) two gccs
forked by Opponent, which is forbidden by pre-innocence.

\begin{figure}
\begin{minipage}{.46\linewidth}
\[
\xymatrix@R=5pt@C=5pt{
\oc (\tunit &\with & \tunit)& \vdash & \tunit\\
&&&&\qu^-
        \ar@{-|>}[dllll]
        \ar@{-|>}[dll]\\
\qu^+_{\grey{0}}        
        \ar@{-|>}[d]&&
\qu^+_{\grey{1}}        
        \ar@{-|>}[d]\\
\done^-_{\grey{0}}
        \ar@{-|>}[drrrr]
        \ar@{.}@/^/[u]&&
\done^-_{\grey{1}}
        \ar@{-|>}[drr]
        \ar@{.}@/^/[u]\\
&&&&\done^+
        \ar@{.}@/_/[uuu]
}
\]
\caption{A typical pre-innocent strategy}
\label{fig:par_comp}
\end{minipage}
\hfill
\begin{minipage}{.45\linewidth}
\[
\xymatrix@R=5pt@C=10pt{
\grey{\oc (\tunit} &\grey{\with}& \grey{\tunit)} &\grey{\parallel}&
\tunit\\
&&&&\qu^-   
        \ar@[grey]@{-|>}[dll]
        \ar@{-|>}[dllll]\\
{\qu^\labr_{\grey{0}}}
        \ar@{-|>}[drr]
        \ar@[grey]@{-|>}[d]
        \ar@{.}@/^/[urrrr]
&&\grey{\qu^\labr_{\grey{1}}}
        \ar@[grey]@{.}@/^.2pc/[urr]
        \ar@[grey]@{-|>}[d]\\
\grey{\done^\labl_{\grey{0}}}
        \ar@[grey]@{-|>}[drrrr]
        \ar@[grey]@{.}@/^/[u]&&
{\done^\labl_{\grey{1}}}
        \ar@{-|>}[drr]
        \ar@[grey]@{.}@/^/[u]\\
&&&&\done^\labr
        \ar@/_/@{.}[uuu]
}
\]
\caption{Partiality of views}
\label{fig:ex_partial_views}
\end{minipage}
\end{figure}

It will follow later on that the \emph{sequential
pre-innocent} causal strategies exactly match the standard
alternating innocent strategies of Definition \ref{def:innocence}:
\emph{sequentiality} entails that there is no Player branching.
Thus separate branches always correspond to threads spawned by Opponent,
which by pre-innocence cannot interfere. The causal structure
is then a forest, matching that of \emph{P-views} of
Section~\ref{subsec:vis_inn}. We postpone the details to
Section~\ref{subsec:seq_inn_pcf}.

However, it turns out that pre-innocence is incomplete for parallel
strategies.

\subsubsection{Visibility} The problem arises as non-stability of
pre-innocence under composition. A counter-example appears in Figure
\ref{fig:fail_preinn}, examined when proving compositionality of
innocence.

But we can explain the issue intuitively: the definition of
pre-innocence relies on gccs which formalize a notion of \emph{thread}.
If that intuition is to be taken seriously, gccs should be valid
executions of standalone sequential programs. But this is not the case:
\begin{figure}
\[
\raisebox{40pt}{$
\xymatrix@R=5pt@C=15pt{
(\tunit \ar@{}[r]|\lin&\tunit)\ar@{}[r]|\lin&\tnat\\
&&\qu^-  \ar@{-|>}[dl]\\
&\qu^+  \ar@{-|>}[d]
        \ar@{.}@/^/[ur]\\
&
\done^- \ar@{-|>}[dl]
        \ar@{.}@/^/[u]\\
\done^+ \ar@{.}@/^/[u]&&
}$}
\qquad
\in
\qquad
\gcc\left(
\raisebox{40pt}{$
\xymatrix@R=5pt@C=15pt{
(\tunit \ar@{}[r]|\lin&\tunit)\ar@{}[r]|\lin&\tnat\\
&&\qu^-  \ar@{-|>}[dl]\\
&\qu^+   \ar@{-|>}[dl]
        \ar@{-|>}[d]
        \ar@{.}@/^/[ur]\\
\qu^-    \ar@{-|>}[d]
        \ar@{.}@/^/[ur]&
\done^- \ar@{-|>}[dl]
        \ar@{-|>}[dr]
        \ar@{.}@/^/[u]\\
\done^+ \ar@{.}@/^/[u]&&
0^+\ar@{.}@/_/[uuu]
}$}\right)
\]
\caption{A gcc of a non-visible strategy, losing its pointer}
\label{fig:ex_gcc_nonvis}
\end{figure}
Figure \ref{fig:ex_gcc_nonvis} shows a gcc where the last move answers a
question that \emph{was not asked} within this gcc. This could not be a
valid state of a sequential program, because the last move \emph{loses
its pointer}.

\emph{Visible} strategies are simply those such that this does not
happen.

\begin{defi}\label{def:visible}
A causal strategy $\bsigma : A$ is \textbf{visible} if for all $\rho \in
\gcc(\bsigma)$, $\pr_{\bsigma}(\rho) \in \conf{A}$.
\end{defi}

In other words, every move in $\rho$ \emph{points} within $\rho$.  This
phrasing highlights the analogy with Definition \ref{def:pvis},
\emph{i.e.} ``Player always points in the P-view''. It is indeed this
analogy that inspired the name\footnote{This, plus as in traditional
game semantics, visibility is a prerequisite for a working notion of
innocence.}. But one must be wary: the alternating interpretation of
sequential programs with state yields sequential P-visible strategies,
but their causal interpretation (as in Figure \ref{fig:ex_aug1}) is
\emph{not} necessarily visible. Visibility is very restrictive, it is
not clear what would be a sensible programming primitive that would
satisfy visibility but not pre-innocence.

We regard visibility as a key contribution. It
has far-reaching consequences -- some of which will be introduced in the
course of this paper. In fact, visibility is used more than parallel
innocence in further developments in this line of work
\cite{lics18,DBLP:journals/pacmpl/ClairambaultV20}.

The following lemma captures how a gcc may be regarded as a
standalone thread.

\begin{lem}\label{lem:gcc_pview}
Consider $\bsigma : A$ a visible causal strategy.

Then for any $\rho = \rho_1 \imc_\bsigma \dots \imc_\bsigma \rho_n \in
\gcc(\bsigma)$, $\pr_\bsigma(\rho) = \pr_\bsigma(\rho_1) \dots
\pr_\bsigma(\rho_n)$ is a P-view.
\end{lem}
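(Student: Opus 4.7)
The plan is to verify the conditions for $\pr_\bsigma(\rho_1) \dots \pr_\bsigma(\rho_n)$ to be an alternating play (Definition \ref{def:alt_play}), and then the extra constraint characterizing P-views from Section \ref{subsubsec:def_inn}, namely that Opponent always points to the previous move in the sequence.

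First, validity and non-repetitiveness. Every prefix $\rho_1 \imc_\bsigma \dots \imc_\bsigma \rho_i$ is itself a gcc of $\bsigma$, so visibility applied to each such prefix yields $\{\pr_\bsigma(\rho_1), \dots, \pr_\bsigma(\rho_i)\} \in \conf{A}$, giving validity. For non-repetitiveness, observe that $\rho \subseteq [\rho_n]_\bsigma$ and that $[\rho_n]_\bsigma \in \conf{\bsigma}$, so local injectivity of $\pr_\bsigma$ (Definition \ref{def:caus_strat}) on that configuration ensures the projections of the pairwise distinct $\rho_1, \dots, \rho_n$ are themselves pairwise distinct. Negativity is immediate: $\rho_1$ is minimal in $\bsigma$, hence negative by Definition \ref{def:caus_strat}, and so is $\pr_\bsigma(\rho_1)$.

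The remaining two conditions -- alternation and the P-view property -- both flow from courtesy. Given any $\rho_i \imc_\bsigma \rho_{i+1}$ with $\pol_\bsigma(\rho_i) = +$ or $\pol_\bsigma(\rho_{i+1}) = -$, courtesy of $\bsigma$ forces $\pr_\bsigma(\rho_i) \imc_A \pr_\bsigma(\rho_{i+1})$. Since arenas are alternating (Definition \ref{def:arena}), this rules out both $+\to+$ and $-\to-$ consecutive polarities, leaving only the transitions $-\to +$ and $+\to -$ and hence alternation. Moreover, at every $+\to -$ transition courtesy produces precisely the immediate dependency $\pr_\bsigma(\rho_i) \imc_A \pr_\bsigma(\rho_{i+1})$ required for $\pr_\bsigma(\rho)$ to be a P-view.

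The proof is essentially a one-line invocation of each structural axiom of causal strategies (visibility, courtesy, negativity, local injectivity) combined with alternation of the arena, so I do not anticipate any real obstacle. The only mild subtlety is that $\rho$ itself need not be a configuration of $\bsigma$, so local injectivity cannot be applied to it directly; but this is harmlessly bypassed by working inside the downward closure $[\rho_n]_\bsigma$, which always is a configuration.
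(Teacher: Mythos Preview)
Your proof is correct and follows essentially the same route as the paper's: visibility for validity, courtesy plus arena alternation for the alternating condition (this is exactly the content of the paper's Lemma~\ref{lem:app_caus_alt}), and courtesy at $+\to-$ links for the P-view pointer property (this is Lemma~\ref{lem:app_aux_caus}). You are in fact slightly more thorough, spelling out non-repetitiveness via local injectivity on $[\rho_n]_\bsigma$ and negativity of the initial move, which the paper's short proof leaves implicit.
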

\begin{proof}
By Lemma \ref{lem:app_caus_alt}, $\pr_\bsigma(\rho)$ is
an alternating sequence. By visibility, its prefixes are configurations
of $A$. So, $\pr_\bsigma(\rho) \in \Alt(A)$. By Lemma
\ref{lem:app_aux_caus}, the predecessor of $a^- \in \pr_\bsigma(\rho)$
for $\imc_A$
is its predecessor in $\pr_{\bsigma}(\rho)$, \emph{i.e.} Opponent always
points to the previous move.
\end{proof}

We may now define \emph{parallel innocent} causal strategies,
or just \emph{innocent} for short.

\begin{defi}
Consider $\bsigma : A$ causally deterministic on arena $A$.

It is \textbf{parallel innocent} if it is pre-innocent and \textbf{visible}.
\end{defi}

A standard innocent strategy as in Section~\ref{subsec:vis_inn}, under
its ``causal'' presentation, is a forest of P-views (see Proposition
\ref{prop:causal_inn}), \emph{i.e.} a forest of (displayed) gccs. In
that light the definition of parallel innocent strategies seems natural:
they are generated no longer by a forest of P-views, but by a directed
acyclic graph of P-views with additional conflict relation. This graph
describes how threads are spawned, and then may merge, following the
innocence discipline ensuring that Player may not create interference
between Opponent's threads. 

One of the main hurdles, in traditional game semantics, is to
prove that innocent strategies compose. We now
tackle this problem for parallel innocent strategies.

\subsection{Composition of Visibility} First, we establish
compositionality of visibility.

\subsubsection{Justifiers in strategies} We introduce
some machinery on \emph{justifiers}. Consider $\bsigma : A$ a causal
strategy on some $-$-arena $A$. As for plays, the immediate
causality in $A$ endows moves in $\ev{\bsigma}$ with a notion of
\emph{justifier}. This extends to $\bsigma : A \vdash B$ with $A$ and
$B$ $-$-arenas:

\begin{defi}\label{def:justifier1}
Consider $A$ and $B$ $-$-arenas, and $\bsigma : A \vdash B$. Then, for
all $m, m' \in \ev{\bsigma}$,
\[
\begin{array}{rclcl}
\just(m) &=& m' & \qquad &\text{if $\pr_{\bsigma}(m') \imc_{(A\vdash B)}
\pr_\bsigma(m)$,}\\
\just(m) &=& \init(m) && \text{if $\pr_\bsigma(m)$ minimal in $A$,}
\end{array}
\]
and undefined otherwise.
\end{defi}

This leaves the justifier undefined exactly for moves corresponding to
minimal moves in $B$, the \emph{initial moves}. Note that assigning the
justifier of $m$ minimal in $A$ to $\init(m)$ ensures that the
assignment of justifiers is invariant under currying.
It might be helpful to the reader to observe that a causal strategy
$\bsigma : A \vdash B$ is visible iff for all $\rho \in \gcc(\bsigma)$,
for all $m \in \rho$, $\just(m) \in \rho$ as well: all gccs are closed
under justifiers. We mention in passing this lemma:

\begin{lem}\label{lem:just_pred}
Consider $A, B$ $-$-arenas and $\bsigma : A \vdash B$ a causal strategy. 

Then, for any non-initial $m \in \ev{\bsigma}$, we have $\just(m)
<_\bsigma m$. Moreover, if $\pol_{\bsigma}(m) = -$, then $\just(m)
\imc_\bsigma m$ is its (unique) immediate predecessor. 
\end{lem}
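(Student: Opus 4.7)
The plan is to handle the two assertions separately by a case analysis on the definition of $\just$. Throughout, the key ingredients will be \emph{rule-abiding}, \emph{local injectivity} (to transport events between $\bsigma$ and $A \vdash B$), \emph{courtesy}, and the \emph{forestial} and \emph{negative} axioms. The ``non-initial'' hypothesis will be read as ``$\just(m)$ is defined'', i.e.\ $\pr_\bsigma(m)$ is not a minimal event of $B$.

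For the first claim, suppose $m$ is non-initial so that $\just(m)$ is defined by one of the two clauses of Definition~\ref{def:justifier1}. In the first clause, there is $a' \imc_{A \vdash B} \pr_\bsigma(m)$; since $\pr_\bsigma([m]_\bsigma) \in \conf{A \vdash B}$ by \emph{rule-abiding} and configurations are down-closed, some $m' \in [m]_\bsigma$ satisfies $\pr_\bsigma(m') = a'$, unique by \emph{local injectivity}, and distinct from $m$ since their display images differ, so $\just(m) = m' <_\bsigma m$. In the second clause, $\pr_\bsigma(m)$ is minimal in $A$, hence positive in $A^\perp$, forcing $\pol_\bsigma(m) = +$; the \emph{negative} axiom then prevents $m$ from being minimal in $\bsigma$, giving $\init(m) <_\bsigma m$.

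For the second claim, since $\pol_\bsigma(m) = -$ the second clause is excluded (it only yields positive moves), so we are in the first clause with $\just(m) = m'$ and $\pr_\bsigma(m') \imc_{A \vdash B} \pr_\bsigma(m)$. Because $m' <_\bsigma m$, the set $\{m'' \in \ev{\bsigma} \mid m'' <_\bsigma m\}$ is nonempty and finite by \emph{finite causes}, so we may pick some immediate predecessor $m_0 \imc_\bsigma m$. Applying \emph{courtesy} (valid since $\pol_\bsigma(m) = -$) yields $\pr_\bsigma(m_0) \imc_{A\vdash B} \pr_\bsigma(m)$; by \emph{forestiality} of $A \vdash B$, the immediate predecessor of $\pr_\bsigma(m)$ is unique, so $\pr_\bsigma(m_0) = \pr_\bsigma(m') = a'$. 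Both $m_0, m'$ lie in $[m]_\bsigma$, so \emph{local injectivity} forces $m_0 = m'$. Hence $m'$ is the unique immediate predecessor of $m$. The main step, though a mild one, is threading the interplay between courtesy (propagating immediate causality from $\bsigma$ down to $A\vdash B$) and local injectivity (propagating the resulting equality back up), which combined with forestiality delivers the desired immediate causal link in $\bsigma$.
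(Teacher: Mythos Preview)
Your proof is correct and follows essentially the same approach as the paper's. The only cosmetic difference is that where the paper cites Lemma~\ref{lem:es_refl_caus} (local reflection of causality by maps of event structures) for the first clause, you inline its proof via down-closure of $\pr_\bsigma([m]_\bsigma)$; and you spell out more carefully why the second clause forces $\init(m) <_\bsigma m$ using the \emph{negative} axiom, which the paper leaves implicit.
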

\begin{proof}
As a map of event structures, $\pr_\bsigma$ locally reflects
causality (Lemma \ref{lem:es_refl_caus}), so $\just(m) <_\bsigma m$ if
the first clause of Definition \ref{def:justifier1} applies; for the
other we clearly have $\init(m) <_\bsigma m$.

If $\pol_{\bsigma}(m) = -$, then $\just(m)$ is defined
via the first clause since $A$ is negative, and $\pr_\bsigma(\just(m))
\imc_{A \vdash B} \pr_\bsigma(m)$. Now, $m$ has a predecessor $m'
\imc_\bsigma m$, by courtesy $\pr_\bsigma(m') \imc_{A\vdash B}
\pr_\bsigma(m)$, so $\pr_\bsigma(m') = \pr_\bsigma(\just(m))$ as $A$ is
forestial, and $m' = \just(m)$ by local injectivity.
\end{proof}

\subsubsection{Justifiers in interactions} We extend justifiers to
\emph{interactions} -- consider $A, B$ and $C$ three $-$-arenas, and
$\bsigma : A \vdash B$ and $\btau : B \vdash C$ causal strategies.

\begin{defi}\label{def:just_int}
We define the partial function $\just : \ev{\btau \inter \bsigma} \pto
\ev{\btau \inter \bsigma}$ as $\just(m) = m'$ if:
\[
\begin{array}{ll}
\text{\emph{(1)}}
& \text{$\pr_{\btau\inter \bsigma}(m') \imc_{A\parallel B \parallel
C} \pr_{\btau\inter \bsigma}(m)$, or}\\
\text{\emph{(2)}}
& \text{$\pr_{\btau\inter \bsigma}(m)$ is minimal in $A$ and
$m'_\bsigma = \init(m_\bsigma)$, or}\\
\text{\emph{(3)}} 
& \text{$\pr_{\btau \inter \bsigma}(m)$ is minimal in $B$ and
$m'_\btau = \init(m_\btau)$,} 
\end{array}
\]
and undefined otherwise. We say that $m'$ is the \textbf{justifier} of
$m$ in $\btau\inter \bsigma$.
\end{defi}

This leaves $\just(m)$ undefined exactly if it corresponds to a minimal
move in $C$. Clearly the two notions of justifier are compatible, in the
sense that for all $m \in \ev{\btau \inter \bsigma}$, if $m_\bsigma$ is
defined then $\just(m)_\bsigma$ is defined and equal to
$\just(m_\bsigma)$, and likewise for $\btau$.

\subsubsection{Views of gccs} We introduce the main technical device
on visible causal interactions.

We use polarities in interactions as in Section~\ref{subsubsec:caus_conf_int}, and annotate events accordingly.  We also
write \emph{e.g.} $a^{-, \labr}$ to indicate that $a$ has polarity $-$
\emph{or} $\labr$.  If $\rho \in \gcc(\btau \inter \bsigma)$ with last
event $m$, we say that $\rho$ \textbf{ends in $\bsigma$} if $m_\bsigma$
is defined, and likewise for $\btau$.  We now define \emph{views} of
gccs, used to project a gcc of the interaction to gccs for both
strategies. 

\begin{defi}\label{def:gcc_views}
Consider $\bsigma : A \vdash B$ and $\btau : B \vdash C$ 
with $A$, $B$ and $C$ $-$-arenas.

If $\rho \in \gcc(\btau \inter \bsigma)$ ends in $\bsigma$, we
(partially) define $\pview{\rho}^\bsigma \in \gcc(\bsigma)$ by:
\[
\begin{array}{rclcl}
\pview{\rho_0\!\imc\!\dots\! \imc\! \rho_n\! \imc\! \rho_{n+1}^\labl}^\bsigma
        &\!\!=\!\!& \pview{\rho_0 \!\imc \!\dots\! \imc\! \rho_n}^\bsigma \cup
\{(\rho_{n+1})_\bsigma\}\,,\\
\pview{\rho_0\! \imc\! \dots\! \imc\! \rho_i\! \imc\! \dots\! \imc\!
\rho_{n+1}^{-,\labr}}^\bsigma
        &\!\!=\!\!& \pview{\rho_0\! \imc\! \dots\! \imc\! \rho_i}^\bsigma \cup
\{(\rho_{n+1})_\bsigma\}
        && \!\!\!\!\!\!\!\!\!\!\text{if $\just(\rho_{n+1}) = \rho_i$ in $A$ or
$B$}\,,\\
\pview{\rho_0\! \imc\! \dots\! \imc\! \rho_i\! \imc\! \dots\! \imc\!
\rho_{n+1}^{\labr}}^\bsigma
        &\!\!=\!\!& \{(\rho_{n+1})_\bsigma\} 
        && \!\!\!\!\!\!\!\!\!\!\text{if $\rho_{n+1}$ minimal in $B$}\,,
\end{array}
\]
undefined otherwise. For $\rho \in \gcc(\btau \inter \bsigma)$
ending in $\btau$, we (partially) define $\pview{\rho}^\btau \in
\gcc(\btau)$:
\[
\begin{array}{rclcl}
\pview{\rho_0\! \imc\! \dots\! \imc\! \rho_n\! \imc\! \rho_{n+1}^\labr}^\btau
        &\!\!=\!\!& \pview{\rho_0\! \imc\! \dots\! \imc\! \rho_n}^\btau \cup
\{(\rho_{n+1})_\btau\}\,,\\
\pview{\rho_0\! \imc\! \dots\! \imc\! \rho_i\! \imc\! \dots\! \imc\!
\rho_{n+1}^{-,\labl}}^\btau
        &\!\!=\!\!& \pview{\rho_0\! \imc\! \dots\! \imc\! \rho_i}^\btau \cup
\{(\rho_{n+1})_\btau\}
        && \!\!\!\!\text{if $\just(\rho_{n+1}) = \rho_i$}\,;
\end{array}
\]
when defined we call $\pview{\rho}^\bsigma \in \gcc(\bsigma)$ the
\textbf{$\bsigma$-view} of $\rho$ and $\pview{\rho}^\btau \in
\gcc(\btau)$ the \textbf{$\btau$-view} of $\rho$.
\end{defi}

These definitions almost perfectly follow Definition \ref{def:pview}.
The last clause is only needed for $\pview{-}^\bsigma$ and not
$\pview{-}^\btau$, because an initial event in $C$ must be the first
event of $\rho$ anyway.

That this yields gccs of $\bsigma$ and $\btau$ rests on Lemma
\ref{lem:caus_int}, and courtesy of $\bsigma$ and $\btau$.
The $\bsigma$-view and the $\btau$-view are in principle only partially
defined, because it may be, when attempting to follow the opponent's
pointer, that that justifier lies outside the gcc.
For instance $\pview{\rho}^\btau$, for $\rho$
in Figure \ref{fig:ex_partial_views}, is not well-defined: when
attempting to compute $\pview{\qu^- \qu^\labr_{\grey{0}}
\done^\labl_{\grey{1}}}^\btau$, none of the clauses apply as
$\just(\done^\labl_{\grey{1}}) = \qu^\labr_{\grey{1}}$ is outside
$\rho$.
The bulk of the proof of stability of visibility under composition, is
to show that this cannot happen for visible strategies:

\begin{prop}\label{prop:views_int_gccs}
Let $\bsigma : A \vdash B$ and $\btau : B \vdash C$ be \emph{visible}
causal strategies.

Then, the views of gccs of $\btau \inter \bsigma$ as in Definition
\ref{def:gcc_views} are always well-defined.
\end{prop}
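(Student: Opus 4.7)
The plan is to proceed by a mutual induction on the length of $\rho$, proving simultaneously that $\pview{\rho}^\bsigma$ (when $\rho$ ends in $\bsigma$) and $\pview{\rho}^\btau$ (when $\rho$ ends in $\btau$) are well-defined. I would strengthen the inductive hypothesis to state that each constructed view is not only defined but is a genuine gcc of the corresponding strategy, whose events correspond (via $\pr_\bsigma$ or $\pr_\btau$) to a subset of $\rho$. Since $\bsigma, \btau$ are both negative, every minimal event of $\btau \inter \bsigma$ lies in $C$; so any gcc ending in $\bsigma$ necessarily enters the $\bsigma$-part at a $B$-minimal event (the base case handled by the third clause of $\pview{-}^\bsigma$), and every gcc ending in $\btau$ begins at such a $C$-minimal event.

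For the inductive step, consider $\rho = \rho_0 \imc \dots \imc \rho_n \imc \rho_{n+1}$ ending in $\bsigma$, and split on the label of $\rho_{n+1}$ in $\{-,\labl,\labr\}$. If the label is $\labl$, the first clause applies and by Lemma~\ref{lem:caus_int}(1) the prefix $\rho_0 \imc \dots \imc \rho_n$ still ends in $\bsigma$, so the IH provides the required sub-view. If the label is $-$, then $\rho_{n+1}$ lies in $A$ with $\pol_\bsigma(\rho_{n+1,\bsigma}) = -$; Lemma~\ref{lem:caus_int}(3) then gives $\pr_\bsigma(\rho_{n,\bsigma}) \imc_{A\vdash B} \pr_\bsigma(\rho_{n+1,\bsigma})$, and Lemma~\ref{lem:just_pred} identifies $\rho_{n,\bsigma}$ as the $\bsigma$-justifier, so the second clause applies with $i = n$. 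The $\btau$-view case label $\labr$ is handled symmetrically.

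The delicate case, where both visibility assumptions are essential, is when the label of $\rho_{n+1}$ is $\labr$, so that $\rho_{n+1}$ is synchronized in $B$ with $\pol_\bsigma(\rho_{n+1,\bsigma}) = -$. Lemma~\ref{lem:caus_int}(2) yields $\rho_{n,\btau} \imc_\btau \rho_{n+1,\btau}$, so the immediate causal link comes from the $\btau$-side; in particular $\rho_0 \imc \dots \imc \rho_n$ ends in $\btau$ and by the IH its $\btau$-view $\bar{\rho}$ is a well-defined gcc of $\btau$ whose events correspond to events $\rho_{i,\btau}$ with $i \leq n$. Appending $\rho_{n+1,\btau}$ to $\bar{\rho}$ yields another gcc of $\btau$, so by visibility of $\btau$ its projection under $\pr_\btau$ is a configuration of $B \vdash C$, hence closed under $B \vdash C$-justifiers; the justifier of $\pr_\btau(\rho_{n+1,\btau})$ is therefore $\pr_\btau(\rho_{j,\btau})$ for some $j \leq n$. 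As this justifier lies in $B$, the event $\rho_j$ is synchronized in $B$ (so $\rho_{j,\bsigma}$ is defined, with label $\labr$); by local injectivity of $\pr_{\btau \inter \bsigma}$ on the configuration induced by $\rho$ and uniqueness of justifiers in the forestial arena $B$, $\rho_j$ coincides with $\just(\rho_{n+1})$ in the sense of Definition~\ref{def:just_int}. The second clause of $\pview{-}^\bsigma$ then applies with $i = j$, and since $\rho_{j,\bsigma}$ is defined, the prefix $\rho_0 \imc \dots \imc \rho_j$ ends in $\bsigma$ and is strictly shorter than $\rho$, so the IH closes the recursion.

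The main obstacle is precisely this last case: well-definedness of the $\bsigma$-view at $\rho_{n+1}$ rests on a pointer-chasing step that crosses over to the $\btau$-side, uses visibility of $\btau$ to locate a $B$-justifier within $\bar{\rho}$, and then translates this back into a $\bsigma$-justifier via synchronization on $B$. The strengthened IH — that the constructed view is a gcc whose projection is down-closed — must be maintained at every step, which is a routine but necessary bookkeeping exercise using Lemmas~\ref{lem:caus_int} and~\ref{lem:just_pred} together with visibility of the strategy on whose side the view is being computed.
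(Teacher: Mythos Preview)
Your proposal is correct and follows essentially the same approach as the paper: a mutual induction on the length of the gcc, with the key step being the case where the last event has polarity $\labr$ (for the $\bsigma$-view), where you first compute the $\btau$-view of the prefix, append the new event to obtain a gcc of $\btau$, and then invoke visibility of $\btau$ to locate the justifier inside that gcc and hence inside $\rho$. The paper's proof is slightly more terse (it leaves the base case and the strengthened hypothesis that views are gccs implicit in the phrase ``well-defined''), but the structure and the crucial cross-over argument using visibility of the other strategy are identical.
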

\begin{proof}
We prove by induction on $\rho$ that, for all prefixes of $\rho$,
\[
\begin{array}{ll}
\text{\emph{(1)}} & \text{if $\rho$ ends in $\bsigma$, then
$\pview{\rho}^\bsigma$ is well-defined,}\\
\text{\emph{(2)}} & \text{if $\rho$ ends in $\btau$, then
$\pview{\rho}^\btau$ is well-defined\,.}
\end{array}
\]

Assume $\rho$ finishes in $\btau$. If the last move has polarity
$-$, then either it is initial and there is nothing to prove, or by
Lemma \ref{lem:caus_int} its justifier is its predecessor in
$\rho$, so $\pview{\rho}^\btau \in \gcc(\btau)$ follows immediately by
induction hypothesis (in that case $\rho$ does not end in
$\bsigma$).

If the last move has polarity $\labr$, write $\rho = \rho' \imc m
\imc n^\labr$. By Lemma \ref{lem:caus_int}, $m_\btau \imc_\btau
n_\btau$, so in particular $m$ is in $\btau$. By induction hypothesis,
$\kappa = \pview{\rho' \imc m}^\btau \in \gcc(\btau)$, so
\[
\kappa \imc n_\btau = \pview{\rho}^\btau \in \gcc(\btau)
\]
as well.
But if $\rho$ finishes in $\bsigma$ \emph{and} $\btau$ (\emph{i.e.} in
$B$), we must further prove that $\pview{\rho}^\bsigma \in
\gcc(\bsigma)$. In that case, we observe that since $\pview{\rho}^\btau
\in \gcc(\btau)$ and $\btau$ is visible, it follows that
$\just(n_\btau) \in
\pview{\rho}^\btau$, but this entails $\just(n) \in
\rho$. Hence the second clause of Definition \ref{def:gcc_views}
applies, and we conclude by induction hypothesis.
If $\rho$ finishes in $\bsigma$, the reasoning is symmetric.
\end{proof}

From this, we are now ready to conclude:

\begin{prop}\label{prop:comp_visible}
Let $\bsigma : A \vdash B$ and $\btau : B \vdash C$ be visible causal
strategies.

Then, $\btau \odot \bsigma : A \vdash C$ is also visible.
\end{prop}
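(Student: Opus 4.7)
The plan is to prove visibility of $\btau \odot \bsigma$ by lifting each grounded causal chain in the composition to one in the interaction, and then chasing its projection via the views into the component strategies, where visibility is already known.

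First, I would establish a lifting lemma: given $\rho = m_1 \imc \dots \imc m_n \in \gcc(\btau \odot \bsigma)$, construct $\hat\rho \in \gcc(\btau \inter \bsigma)$ such that $\rho$ is exactly the set of events in $\hat\rho$ occurring in $A$ or $C$. For each pair $m_i \imc_{\btau\odot\bsigma} m_{i+1}$ one has $m_i <_{\btau\inter\bsigma} m_{i+1}$ with no \emph{visible} event strictly in between (by the definition of $\leq_{\btau\odot\bsigma}$), so any $\imc_{\btau\inter\bsigma}$-chain from $m_i$ to $m_{i+1}$, which exists by finiteness of causes, has only synchronized (hidden) events in its interior. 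Similarly, since $m_1$ is minimal in $\btau\odot\bsigma$, any $\imc_{\btau\inter\bsigma}$-chain connecting $m_1$ down to a minimal event of $\btau\inter\bsigma$ passes only through hidden events. Concatenating these chains yields $\hat\rho$ as required.

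Next, consider a non-initial $m \in \rho$, i.e. $\pr_{\btau\odot\bsigma}(m)$ is non-minimal in $A \parallel C$; I must show $\just_{A\vdash C}(\pr_{\btau\odot\bsigma}(m)) \in \pr_{\btau\odot\bsigma}(\rho)$. By symmetry, assume $m$ occurs in $C$, so $m_\btau$ is defined and $\pr_\btau(m_\btau) \in \ev{C}$ is non-minimal in $C$, hence non-minimal in $B \vdash C$. The prefix $\hat\rho_{\leq m}$ ends in $\btau$, so by Proposition \ref{prop:views_int_gccs} the $\btau$-view $\kappa = \pview{\hat\rho_{\leq m}}^\btau \in \gcc(\btau)$ is well-defined. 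Visibility of $\btau$ gives $\pr_\btau(\kappa) \in \conf{B\vdash B}$; since $B\vdash C$ is forestial, this set is closed under the unique immediate predecessor, so $\just(m_\btau) \in \pr_\btau(\kappa)$. By definition of the $\btau$-view there exists $m' \in \hat\rho_{\leq m}$ with $m'_\btau = \just(m_\btau)$.

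Finally, I would verify that $m'$ is visible in the composition and projects to the desired justifier. Because $\pr_\btau(m_\btau) \in \ev{C}$ is non-minimal in $C$ and $C$ is forestial, its immediate predecessor $\pr_\btau(\just(m_\btau)) = \pr_\btau(m'_\btau)$ still lies in $\ev{C}$; hence $m'$ occurs in $C$, is visible, and therefore $m' \in \rho$. By construction $\pr_{\btau\odot\bsigma}(m')$ is precisely $\just_{A\vdash C}(\pr_{\btau\odot\bsigma}(m))$, closing the argument. The case where $m$ occurs in $A$ is symmetric, using $\bsigma$-views together with visibility of $\bsigma$. The main obstacle is the lifting step and the bookkeeping it entails: one must confirm that the constructed $\hat\rho$ is a genuine gcc of the interaction, that the prefix $\hat\rho_{\leq m}$ remains a gcc suitable for Definition \ref{def:gcc_views}, and that the event $m'$ recovered from the view really projects back to the justifier in $A\vdash C$ — all of which depend on compatibility between the two notions of justifier (Definitions \ref{def:justifier1} and \ref{def:just_int}).
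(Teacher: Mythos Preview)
Your proposal is correct and follows essentially the same approach as the paper: lift a gcc of $\btau \odot \bsigma$ to one of $\btau \inter \bsigma$, use Proposition~\ref{prop:views_int_gccs} to get well-defined views, and invoke visibility of the components to locate the justifier within the gcc. The paper organizes this slightly differently by first proving the stronger statement that \emph{every} gcc $\rho$ of $\btau \inter \bsigma$ satisfies $\pr_{\btau\inter\bsigma}(\rho) \in \conf{A\parallel B\parallel C}$ (by induction on $\rho$, case-splitting on the polarity $\labl/\labr/-$ of the last move), and then restricts to visible events; but the key ingredients are the same.
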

\begin{proof}
We prove by induction on $\rho$ that for all $\rho \in
\gcc(\btau\inter \bsigma)$, $\pr_{\btau \inter \bsigma}(\rho) \in
\conf{A\parallel B \parallel C}$. If $\rho$ is empty it is
clear; take $\rho \imc m \in \gcc(\btau \inter \bsigma)$.
By induction hypothesis, $\pr_{\btau \inter \bsigma}(\rho) \in
\conf{A\parallel B \parallel C}$, we only need that the
justifier of $m$ is in $\rho$. We reason by cases on the polarity of
$m$: if it is $\ell$, then by Proposition \ref{prop:views_int_gccs}
$\pview{\rho\imc m}^\bsigma \in \gcc(\bsigma)$. But since
$\bsigma$ is visible, the justifier of $m_\bsigma$ appears in
$\pview{\rho}^\bsigma$; so the justifier of $m$ is in $\rho$. The other cases
are symmetric or trivial.

Now, take $\rho_\odot \in \gcc(\btau \odot \bsigma)$. By
definition of $\btau \odot \bsigma$, there is a (non-necessarily unique)
$\rho_\inter \in \gcc(\btau \inter \bsigma)$ such that $\rho_{\odot}$
comprises exactly those events of $\rho_\inter$ occurring in $A$ or $C$.
By the observation above, $\pr_{\btau \inter \bsigma}(\rho_\inter) \in
\conf{A \parallel B \parallel C}$, hence $\pr_{\btau\odot
\bsigma}(\rho_\odot) \in \conf{A \parallel C}$.
\end{proof}

\subsection{Composition of Innocence} We now address composition
of pre-innocence.

We start this section by showing ``what could go
wrong''.
\begin{figure}
\[
\left(
\raisebox{50pt}{$
\scalebox{.8}{$
\xymatrix@R=10pt@C=5pt{
\tunit &\lin & \tunit& \lin & \tunit\\
&&&&\qu^-
        \ar@{-|>}[dllll]
        \ar@{-|>}[dll]\\
\qu^+
        \ar@{-|>}[d]
        \ar@{.}@/^/[urrrr]&&
\qu^+
        \ar@{-|>}[d]
        \ar@{.}@/^/[urr]\\
\done^-
        \ar@{-|>}[drrrr]
        \ar@{.}@/^/[u]&&
\done^-
        \ar@{-|>}[drr]
        \ar@{.}@/^/[u]\\
&&&&\done^+
        \ar@{.}@/_/[uuu]
}$}$}
\right)
\odot
\left(
\raisebox{50pt}{$
\scalebox{.8}{$
\xymatrix@C=-5pt@R=10pt{
\tunit & \lin & \tunit & \lin & \tunit &\vdash& (\tunit & \lin & \tunit
& \lin & \tunit) & \lin & \tunit\\ 
&&&&&&&&&&&& \qu^-
        \ar@{-|>}[dllllllll]
        \ar@{-|>}[dll]\\
&&&&\qu^+
        \ar@{.}@/^/[urrrrrrrr]
        \ar@{-|>}[dllll]
        \ar@{-|>}[dll]
        \ar@{-|>}[d]
&&&&&&\qu^+
        \ar@{-|>}[dllll]
        \ar@{-|>}[dll]
        \ar@{.}@/^/[urr]\\
\qu^-   \ar@{.}@/^/[urrrr]
        \ar@{-|>}[d]&&
\qu^-   \ar@{.}@/^/[urr]
        \ar@{-|>}[d]&&
\done^- \ar@{.}@/^/[u]
        \ar@{-|>}[drr]&&
\qu^-   \ar@{.}@/^/[urrrr]
        \ar@{-|>}[d]
        \ar@{-|>}[dllllll]&&
\qu^-   \ar@{.}@/^/[urr]
        \ar@{-|>}[dllllll]\\
\done^+ \ar@{.}@/^/[u]&&
\done^+ \ar@{.}@/^/[u]&&&&
\done^+ \ar@{.}@/^/[u]
}$}$}
\right)
=
\left(
\raisebox{50pt}{$
\scalebox{.8}{$
\xymatrix@C=-5pt@R=10pt{
(\tunit & \lin & \tunit & \lin & \tunit) & \lin & \tunit\\
&&&&&&\qu^-
        \ar@{-|>}[dll]\\
&&&&\qu^+
        \ar@{.}@/^/[urr]
        \ar@{-|>}[dllll]
        \ar@{-|>}[dll]\\
\qu^-   \ar@{.}@/^/[urrrr]
        \ar@{-|>}[d]&&
\qu^-   \ar@{.}@/^/[urr]
        \ar@{-|>}[dll]\\
\done^+ \ar@{.}@/^/[u]
}$}$}
\right)
\]
\caption{Failure of preservation of pre-innocence under composition}
\label{fig:fail_preinn}
\end{figure}
In Figure \ref{fig:fail_preinn}, we show a counter-example to the
stability under composition of pre-innocence without visibility,
with the corresponding interaction appearing as Figure
\ref{fig:illegal-merge}. Let us attempt to explain the phenomenon,
calling $\bsigma$ the left hand side strategy (parallel composition)
and $\btau$ the right hand side one -- observe that the dotted lines
include the justifications relations from Definition
\ref{def:justifier1} rather than just those coming from the arena.
Imagine that $\btau$ wants to perform an illegal causal merge between
the two argument calls of its argument of type $\tunit \lin \tunit \lin
\tunit$. By pre-innocence it cannot do so directly. However, it can
outsource the merge to $\bsigma$ by linking
(legally with respect to pre-innocence, but illegally with respect to
visibility) the arguments of the parallel composition to those that it
wants to merge.

We shall prove that this cannot happen in the presence of visibility.
Let us fix, until the end of the section, two visible causal strategies
$\bsigma : A \vdash B$ and $\btau : B \vdash C$.

\subsubsection{The ``forking lemma''}
\begin{figure}
\begin{minipage}{.45\linewidth}
\[
\xymatrix@C=-3pt@R=8pt{
\tunit & \lin & \tunit & \lin & \tunit &\vdash& (\tunit &
\lin & \tunit & \lin &
\tunit) & \lin & \tunit\\
&&&&&&&&&&&& \qu^-
        \ar@[grey]@{-|>}[dllllllll]
        \ar@{-|>}[dll]\\
&&&&\grey{\qu^\tau}
        \ar@[grey]@{.}@/^/[urrrrrrrr]
        \ar@[grey]@{-|>}[dllll]
        \ar@[grey]@{-|>}[dll]
&&&&&&\qu^\tau
        \ar@{-|>}[dllll]
        \ar@{-|>}[dll]
        \ar@{.}@/^/[urr]\\
\grey{\qu^\sigma}
        \ar@[grey]@{.}@/^/[urrrr]
        \ar@[grey]@{-|>}[d]&&
\grey{\qu^\sigma}
        \ar@[grey]@{.}@/^/[urr]
        \ar@[grey]@{-|>}[d]&&&&
\qu^-   \ar@{.}@/^/[urrrr]
        \ar@{-|>}[dllllll]&&
\qu^-   \ar@{.}@/^/[urr]
        \ar@{-|>}[dllllll]\\
\done^\tau 
        \ar@[grey]@{.}@/^/[u]
        \ar@{-|>}[drrrr]&&
\done^\tau 
        \ar@[grey]@{.}@/^/[u]
        \ar@{-|>}[drr]&&&&\\
&&&&\done^\sigma
        \ar@[grey]@{.}@/_/[uuu]
        \ar@{-|>}[drr]&&&\\
&&&&&&\done^\tau
        \ar@{.}@/_/[uuu]
}
\]
\caption{Illegal causal merge}
\label{fig:illegal-merge}
\end{minipage}
\hfill
\begin{minipage}{0.5\linewidth}
\bigskip
\[
\scalebox{.6}{$
\xymatrix@C=8pt{
&&&&&&&&&&\bullet
        \ar@[red]@/_.1pc/[dl]
        \ar@{-|>}[r]&
~       \ar@{.}[rrrrrrr]&
~       \ar@[blue]@/_1pc/[ll]&
~       \ar@[red]@/_.1pc/[l]&&
~       \ar@[blue]@/_1pc/[ll]&
~       \ar@[red]@/_.1pc/[l]&&
~       \ar@[blue]@/_1pc/[ll]
        \ar@{-|>}[r]&
\bullet \ar@[red]@/_.1pc/[l]\\
\bullet \ar@{-|>}[r]&
~       \ar@{.}[rrrrrrr]
        \ar@[red]@/_.1pc/[l]
&
&
~       \ar@[blue]@/_1pc/[ll]&
~       \ar@[red]@/_.1pc/[l]&&
~       \ar@[blue]@/_1pc/[ll]&
~       \ar@[red]@/_.1pc/[l]&
~       \ar@{-|>}[r]&
\bullet \ar@{-|>}[ur]
        \ar@{-|>}[dr]
        \ar@[blue]@/_1pc/[ll]\\
&&&&&&&&&&\bullet
        \ar@[red]@/^.1pc/[ul]
        \ar@{-|>}[r]&
~       \ar@{.}[rrrrrrr]&
~       \ar@[blue]@/^1pc/[ll]&
~       \ar@[red]@/^.1pc/[l]&&
~       \ar@[blue]@/^1pc/[ll]&
~       \ar@[red]@/^.1pc/[l]&&
~       \ar@[blue]@/^1pc/[ll]
        \ar@{-|>}[r]&
\bullet \ar@[red]@/^.1pc/[l]
}$}
\]
\bigskip
\bigskip
\[
\scalebox{.6}{$
\xymatrix@C=8pt{
&&&&&&&&&&\bullet
        \ar@{-|>}[r]&
~       \ar@{.}[rrrrrrr]&
~       \ar@[blue]@/_3pc/[dlllll]&
~       \ar@[red]@/_.1pc/[l]&&
~       \ar@[blue]@/_1pc/[ll]&
~       \ar@[red]@/_.1pc/[l]&&
~       \ar@[blue]@/_1pc/[ll]
        \ar@{-|>}[r]&
\bullet \ar@[red]@/_.1pc/[l]\\
\bullet \ar@{-|>}[r]&
~       \ar@{.}[rrrrrrr]
        \ar@[red]@/_.1pc/[l]&&
~       \ar@[blue]@/_1pc/[ll]
&
~       \ar@[red]@/_.1pc/[l]
&
&
~       \ar@[blue]@/_1pc/[ll]
&
~       \ar@[red]@/_.1pc/[l]&
~       \ar@{-|>}[r]&
\bullet \ar@{-|>}[ur]
        \ar@{-|>}[dr]
        \ar@[blue]@/^2pc/[lllll]
\\
&&&&&&&&&&\bullet
        \ar@[red]@/^.1pc/[ul]
        \ar@{-|>}[r]&
~       \ar@{.}[rrrrrrr]&
~       \ar@[blue]@/^1pc/[ll]&
~       \ar@[red]@/^.1pc/[l]&&
~       \ar@[blue]@/^1pc/[ll]&
~       \ar@[red]@/^.1pc/[l]&&
~       \ar@[blue]@/^1pc/[ll]
        \ar@{-|>}[r]&
\bullet \ar@[red]@/^.1pc/[l]
}$}
\]
\bigskip
\caption{Merging paths in $G$}
\label{fig:illustr_G}
\end{minipage}
\end{figure}
Taking a closer look at Figure \ref{fig:illegal-merge}, we highlight the
two illegally merging gccs in the interaction: while $\bsigma$ is
responsible for the merge, the point where these gccs forked is
external, outside the scope of $\bsigma$!  The next lemma, dubbed the
``forking lemma'', forbids this: it implies that visible strategies
cannot unknowingly close an Opponent fork.

If $\rho = \rho_1 \imc \dots
\imc \rho_n$ is a gcc and $1\leq i \leq n$, 
$\rho_{\leq i}$ is the gcc $\rho_1 \imc \dots \imc \rho_i$. Two
gccs $\rho, \kappa$ are \textbf{forking} iff $\rho \cap \kappa \neq
\emptyset$, and for all $i, j$, if $\rho_i = \kappa_j$ then $\rho_{\leq
i} = \kappa_{\leq j}$.  If $\rho, \kappa$ are two forking gccs, we write
$\gce(\rho, \kappa)$ for their \textbf{greatest common event}. Notice
that despite the terminology, two forking gccs can be prefix of one
another and never truly go separate ways.

\begin{lem}[Forking lemma]\label{lem:forking}
Let $\rho, \kappa \in \gcc(\btau \inter \bsigma)$ be forking gccs ending
in $\bsigma$, s.t. $\pview{\rho}^\bsigma \cap
\pview{\kappa}^\bsigma \neq \emptyset$ and $\gce(\pview{\rho}^\bsigma,
\pview{\kappa}^\bsigma)$ negative (the least distinct
events, if any, are positive).

Then, $\gce(\pview{\rho}^\bsigma, \pview{\kappa}^\bsigma) = \gce(\rho,
\kappa)_\bsigma$. Moreover, the symmetric property holds for $\btau$.
\end{lem}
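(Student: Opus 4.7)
Plan: I argue by contradiction. Let $e^* := \gce(\rho, \kappa)$ at position $i^*$ in the common prefix of $\rho, \kappa$, and $f^* := \gce(\pview{\rho}^\bsigma, \pview{\kappa}^\bsigma)$. Since $\pview{\rho}^\bsigma \subseteq \rho$ and $\pview{\kappa}^\bsigma \subseteq \kappa$, every event of $\pview{\rho}^\bsigma \cap \pview{\kappa}^\bsigma$ lies in $\rho \cap \kappa$, so $f^*$ sits in the common prefix at some position $p \leq i^*$. The two views are indeed forking chains (one checks that the portion of $\pview{\rho}^\bsigma$ below any $e = \rho_j$ coincides with $\pview{\rho_{\leq j}}^\bsigma$, so matches the analogous portion of $\pview{\kappa}^\bsigma$ whenever $\rho_j = \kappa_j$), so $f^*$ is well-defined. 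It thus suffices to exclude $p < i^*$.

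Suppose $p < i^*$. Both $\rho$ and $\kappa$ strictly extend past $f^*$ within their common prefix; so do both views, since otherwise (say) $\pview{\rho}^\bsigma$ would end at $f^*$, forcing, by the property that the last event of the view equals the last event of the underlying gcc, $\rho$ to end at $f^*$, hence $\rho$ coincides with the common prefix and $f^* = e^*$, contradicting $p < i^*$. Let $g \in \pview{\rho}^\bsigma$ and $h \in \pview{\kappa}^\bsigma$ be the $\imc_\bsigma$-successors of $f^*$ in the two views. Since $f^*$ is negative and $\imc_\bsigma$ alternates polarities (by courtesy of $\bsigma$ together with alternation of the arena), both $g$ and $h$ carry polarity $\labl$ in the interaction, and $g \neq h$ by definition of $\gce$.

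The central step is the identification $g = \rho_{p+1}$. Letting $j$ be the position of $g$ in $\rho$, the $\labl$-polarity of $g$ triggers Lemma \ref{lem:caus_int}\emph{(1)}, which gives $(\rho_{j-1})_\bsigma \imc_\bsigma g_\bsigma$; so $\rho_{j-1}$ lies in $\bsigma$, and $\pview{\rho_{\leq j-1}}^\bsigma$ is a well-defined gcc of $\bsigma$ (Proposition \ref{prop:views_int_gccs}) whose last event is $\rho_{j-1}$. The first clause of Definition \ref{def:gcc_views} then yields $\pview{\rho_{\leq j}}^\bsigma = \pview{\rho_{\leq j-1}}^\bsigma \cup \{g\}$, which makes $\rho_{j-1}$ the $\imc_\bsigma$-predecessor of $g$ inside $\pview{\rho}^\bsigma$: thus $\rho_{j-1} = f^*$ and $g = \rho_{p+1}$. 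A symmetric argument in $\kappa$ gives $h = \kappa_{p+1}$. Since $p+1 \leq i^*$, position $p+1$ still falls within the common prefix, forcing $g = \rho_{p+1} = \kappa_{p+1} = h$: contradiction. Hence $f^* = e^*$, and the symmetric $\btau$-statement follows by swapping the roles of $\bsigma$ and $\btau$ throughout.

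The main obstacle is the identification step above: it simultaneously requires visibility (to make the views well-defined via Proposition \ref{prop:views_int_gccs}), the polarity assumption on $f^*$ (to conclude that $g$ and $h$ are $\labl$), and Lemma \ref{lem:caus_int}\emph{(1)} (to pin an $\labl$-event's interaction-predecessor down into $\bsigma$). Without these three ingredients acting in concert, the view could in principle jump over arbitrary $\btau$-only material between $f^*$ and $g$, breaking the correspondence between view-forks and interaction-forks that is precisely the content of the lemma.
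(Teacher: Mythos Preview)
Your proof is correct and follows the same core idea as the paper's: an $\labl$-event's predecessor in the $\bsigma$-view is its interaction-predecessor (Lemma~\ref{lem:caus_int}\emph{(1)}), so the successors $g,h$ of $f^*$ in the two views must be $\rho_{p+1},\kappa_{p+1}$. The paper wraps this in a directed graph on $\rho\cup\kappa$ with $O$-edges (to justifiers) and $P$-edges (to interaction-predecessors) and case-splits on the edge type at the meeting point of the two view-paths; your direct contradiction uses the negativity of $f^*$ upfront to land immediately in the $P$-edge case, which is a slightly cleaner route to the same conclusion.
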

\begin{proof}
We only detail the proof for $\bsigma$, the proof for $\btau$ is exactly
the same.  We build a directed graph $G$ with vertices $\rho \cup
\kappa$, and edges the (disjoint) union of the sets:
\[
\begin{array}{rcl}
\oedges &=& \{(m_1, m^\labl_2)\mid \just(m_1) = m_2\}\\
\pedges &=& \{(m_1^\labl, m_2) \mid m_2 \imc_{\btau \inter \bsigma} m_1\}
\end{array}
\]
where the annotation $m_i^\labl$ indicates the polarity.
Each vertex is source of at most one edge, and
following edges consists exactly in computing the $\bsigma$-view. If
$\rho$ and $\kappa$ have the same final move, then $\rho = \kappa$.
Otherwise, consider the two paths in $G$
starting with these distinct final moves. Since $\pview{\rho}^\bsigma
\cap \pview{\kappa}^\bsigma \neq \emptyset$, these two paths must
intersect -- Figure \ref{fig:illustr_G} represents a typical $G$ with
$O$-edges in blue and $P$-edges in red with the two typical cases.

These paths meet at a vertex of incoming degree at least $2$;
but vertices receive only $O$-edges, or only $P$-edges. For the
former (as in the bottom of Figure \ref{fig:illustr_G}), then
$\gce(\pview{\rho}^\bsigma, \pview{\kappa}^\bsigma)$ is positive, which
contradicts the hypothesis.  For the latter (as in the top of
Figure \ref{fig:illustr_G}), we remark that $P$-edges are immediate
causal links in $\btau \inter \bsigma$; and there is at most one event
in $\rho \cup \kappa$ causing two distinct events: if it exists, it must
be $\gce(\rho, \kappa)$. 
\end{proof}

This provides the core argument for the compositionality of
pre-innocence: intuitively, if a pre-innocent strategy
merges two threads, by pre-innocence its \emph{views} of these two
threads fork positively. But then the forking lemma ensures that this
strategy sees the actual forking point for these threads -- which
therefore cannot be due to the external Opponent.

\subsubsection{Stability of pre-innocence}
Now, much of the proof consists in restricting the causal shapes in
$\btau \inter \bsigma$ corresponding to a causal merge in $\btau \odot
\bsigma$, so that the forking lemma applies.

\begin{prop}\label{prop:rig_comp}
Consider $\bsigma : A \vdash B$ and $\btau : B \vdash C$ visible causal
strategies.

If $\bsigma : A \vdash B$ and $\btau : B \vdash C$ are
pre-innocent, then so is $\btau \odot \bsigma$.
\end{prop}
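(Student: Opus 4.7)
Fix $m^+ \in \ev{\btau \odot \bsigma}$ and distinct gccs $\rho_1 \imc m, \rho_2 \imc m \in \gcc(\btau \odot \bsigma)$; our goal is to show their least distinct moves are positive. The plan is to lift the situation to the interaction $\btau \inter \bsigma$, transport it to $\bsigma$ (resp.\ $\btau$) via the view construction of Definition~\ref{def:gcc_views}, apply pre-innocence of $\bsigma$ (resp.\ $\btau$) on the view side, and carry the conclusion back to the composition using the forking lemma (Lemma~\ref{lem:forking}).

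The first step will be to \emph{lift to the interaction}: each immediate causal link $a \imc_{\btau \odot \bsigma} b$ is witnessed in $\btau \inter \bsigma$ by a chain $a \imc c_1 \imc \ldots \imc c_k \imc b$ whose interior events $c_i$ are synchronized (occurring in $B$, hence hidden by composition). Concatenating such expansions produces $\tilde{\rho}_i \imc m \in \gcc(\btau \inter \bsigma)$ that restrict to $\rho_i \imc m$ on visible events. Without loss of generality (the other case is symmetric), $m$ has polarity $\labl$ in the interaction, so $m_\bsigma$ is positive in $\bsigma$. Visibility of $\bsigma$ and $\btau$ together with Proposition~\ref{prop:views_int_gccs} then ensure that the $\bsigma$-views $\alpha_i = \pview{\tilde{\rho}_i \imc m}^\bsigma$ are well-defined gccs of $\bsigma$ ending at $m_\bsigma^+$.

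The next step will be to \emph{apply pre-innocence of $\bsigma$} to $\alpha_1, \alpha_2$: either $\alpha_1 = \alpha_2$, or they share a greatest common event whose immediate successors in the two views are distinct and positive. In the positive-fork case, those successors are events $a_1, a_2 \in \ev{\bsigma}$ lifting to distinct $\labl$-polarized events in the interaction, hence visible positive events in the composition. The forking lemma, applied contrapositively, forbids $\tilde{\rho}_1$ and $\tilde{\rho}_2$ from re-synchronizing on a \emph{negative} event between $a_i$ and $m$ (such a re-synchronization would produce a negative greatest common event of the $\bsigma$-views, contradicting our case), so $a_1, a_2$ project to the least distinct visible events of $\rho_1, \rho_2$, giving the required positivity. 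In the equal-view case, the difference between $\tilde{\rho}_1$ and $\tilde{\rho}_2$ lies entirely outside their common $\bsigma$-view, hence in $\btau$-branches; identifying the latest shared event at which the two chains leave the view, we repeat the argument with $\btau$-views and pre-innocence of $\btau$. A straightforward induction on the total length of the lifted chains, alternating $\bsigma$- and $\btau$-analyses, will close the argument.

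The main obstacle I anticipate is the bookkeeping in the equal-view case -- specifically, ensuring that the recursive reduction to a $\btau$-analysis always decreases a well-founded measure -- and, in the positive-fork case, ruling out that a negative visible event sneaks into the composition between the fork in the view and $m$. Both difficulties ultimately reduce to combining the refined causality analysis of interactions (Lemma~\ref{lem:caus_int}) and courtesy of $\bsigma, \btau$ with the forking lemma, which is precisely the role for which visibility was built into the definition of parallel innocence.
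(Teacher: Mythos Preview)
Your outline has the right ingredients (lift to the interaction, take views, apply pre-innocence, use the forking lemma), but there are two genuine gaps.

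First, in the positive-fork case you claim that the fork events $a_1,a_2$, being positive for $\bsigma$ (hence $\labl$-polarized in the interaction), are ``visible positive events in the composition''. That is false: $\labl$-events may occur in $B$, in which case they are hidden. The forking lemma does give you that the greatest common event of the $\bsigma$-views coincides with the greatest common event $m'$ of the interaction gccs, and that $m'$ is negative for $\bsigma$; but from there you still need to argue that the first \emph{visible} events after $m'$ in each branch are positive for $\btau\odot\bsigma$. The paper's proof does this by noting that $m'$ negative for $\bsigma$ means $m'$ is either externally negative or in $B$, and in either case the next visible event cannot be negative (via Lemma~\ref{lem:caus_int}\emph{(3)} and courtesy). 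Your proposal skips this step.

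Second, and more structurally, the equal-view case and its proposed alternating $\bsigma$/$\btau$ recursion are unnecessary and the termination argument is unclear. The paper avoids this complication entirely by a different setup: rather than taking views at $m$, it first reduces (wlog) to forking $\rho^1,\rho^2$ with distinct last moves $m_1,m_2$, lifts to $\kappa^1,\kappa^2$ in the interaction, and then identifies the \emph{first} merge point $n$ of $\kappa^1\imc m$ and $\kappa^2\imc m$ above $m_1,m_2$. This $n$ is necessarily $\labl$ or $\labr$; say $\labl$. The $\bsigma$-views of $\kappa^1_{\leq n}\imc n$ and $\kappa^2_{\leq p}\imc n$ are then automatically distinct (they contain the distinct predecessors of $n$), so pre-innocence of $\bsigma$ applies directly with no equal-view case and no recursion. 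I would reorganise your argument around this earliest-merge idea.
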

\begin{proof}
Consider $m \in \ev{\btau \odot \bsigma}$ and distinct $\rho^1 \imc m,
\rho^2 \imc m \in \gcc(\btau \odot \bsigma)$.  \emph{W.l.o.g.} assume
that whenever $\rho^1_i = \rho^2_j$, $\rho^1_{\leq i} = \rho^2_{\leq j}$
-- or we can change $m$ and $\rho^i$ keeping the same least distinct
events, but satisfying this property. Likewise, since $\rho^1, \rho^2$
are distinct, we assume \emph{w.l.o.g.} that their last moves $m_1 \in
\rho^1, m_2 \in \rho^2$ are distinct -- or we may replace $m$ with an
earlier causal merge. These two causal chains $\rho^1$ and $\rho^2$ may
be completed to $\kappa^1 \imc m, \kappa^2 \imc m \in \gcc(\btau \inter
\bsigma)$ such that $\rho^i$ consists exactly of the events of
$\kappa^i$ occurring in $A$ or $C$.  Necessarily, the greatest visible
events of $\kappa^1$ and $\kappa^2$ are $m_1$ and $m_2$ respectively.
Call $n$ the least common event of $\kappa^1 \imc m$ and $\kappa^2 \imc
m$ above $m_1$ and $m_2$ (which might not be $m$). The situation is:
\[
\xymatrix@R=10pt@C=5pt{
&&&&&&&
&
~        \ar@{.}[r]
&
~       \ar@{-|>}[r]&
m_1     \ar@{-|>}[r]&
~       \ar@{.}[r]&
~       \ar@{-|>}[r]&
\kappa^1_n
        \ar@{-|>}[dr]\\
&
~
&&&
~ 
&
&
&&&&&&&&
n      \ar@{-|>}[r]&
~       \ar@{.}[rrr]&&&
~       \ar@{-|>}[r]&
m\\
&&&&&&&
&
~       \ar@{.}[r]&
~       \ar@{-|>}[r]&
m_2     \ar@{-|>}[r]&
~       \ar@{.}[r]&
~       \ar@{-|>}[r]&
\kappa^2_p
        \ar@{-|>}[ur]
}
\]
with $m_1, m_2$ and $m$ visible, and no one visible in between. We
reason on the polarity of $n$ in $\btau \inter \bsigma$. By Lemma
\ref{lem:caus_int} and since arenas are forestial, it cannot be
negative, so its polarity is either $\labl$ or $\labr$. Assume it
is $\labl$ -- the other case is symmetric. We may compute the
\emph{$\bsigma$-views}:
\[
\pview{\kappa^1_{\leq n} \imc n}^\bsigma,
\qquad
\pview{\kappa^2_{\leq p} \imc n}^\bsigma
\quad
\in 
\quad
\gcc(\bsigma)\,,
\]
respectively $\pview{\kappa^1_{\leq n}}^\bsigma
\imc n_\bsigma$ and $\pview{\kappa^2_{\leq p}}^\bsigma \imc
n_\bsigma$, with $\pview{\kappa^1_{\leq n}}^\bsigma$ and
$\pview{\kappa^2_{\leq
p}}^\bsigma$ distinct as they respectively contain $\kappa_n^1$ and
$\kappa_p^2$. Since $\bsigma$ is pointed,
$\pview{\kappa^1_{\leq n}}^\bsigma \cap \pview{\kappa^2_{\leq
p}}^\bsigma \neq
\emptyset$, so $\kappa^1 \cap \kappa^2 \neq \emptyset$ as well.
Call $m'$ the greatest common event of $\kappa^1$ and $\kappa^2$,
necessarily below $m_1$ and $m_2$, then:
\[
\xymatrix@R=10pt@C=5pt{
&&&&&&&\kappa^1_{i+1}
        \ar@{-|>}[r]&
~       \ar@{.}[r]&
~       \ar@{-|>}[r]&
m_1     \ar@{-|>}[r]&
~       \ar@{.}[r]&
~       \ar@{-|>}[r]&
\kappa^1_n
        \ar@{-|>}[dr]\\
\kappa^1_1\ar@{-|>}[r]&~
        \ar@{.}[rrr]&&&
~       \ar@{-|>}[r]&
\kappa^1_{i-1}  
        \ar@{-|>}[r]&
m'\ar@{-|>}[ur]
        \ar@{-|>}[dr]&&&&&&&&
{n}^\labl      \ar@{-|>}[r]&
~       \ar@{.}[rrr]&&&
~       \ar@{-|>}[r]&
m\\
&&&&&&&\kappa^2_{i+1}
        \ar@{-|>}[r]&
~       \ar@{.}[r]&
~       \ar@{-|>}[r]&
m_2     \ar@{-|>}[r]&
~       \ar@{.}[r]&
~       \ar@{-|>}[r]&
\kappa^2_p
        \ar@{-|>}[ur]
}
\]
assuming \emph{w.l.o.g.} that $\kappa^1_{\leq i} =
\kappa^2_{\leq i}$ (changing the beginning of $\kappa^2$ if required).
Summing up some properties, $\xi^1 = \kappa^1_{i+1} \dots \kappa^1_n$
and $\xi^2 = \kappa^2_{i+1} \dots \kappa^2_p$ are disjoint. This entails
the $\bsigma$-views
\[
\pview{\kappa^1_{\leq n}}^\bsigma\,,
\quad
\pview{\kappa^2_{\leq p}}^\bsigma
\quad
\in
\quad
\gcc(\bsigma)
\]
are \emph{forking}: they coincide on a
prefix and disjoint afterwards. Indeed, since $\xi^1$ and $\xi^2$
are disjoint, any common event appears in $\kappa^1_1 \imc \dots \imc
\kappa^1_i$, before which the $\bsigma$-views coincide.

Now, since $\bsigma$ is pre-innocent, the least distinct moves
$m'_1$ and $m'_2$ of $\pview{\kappa^1_{\leq n}}^\bsigma$ and
$\pview{\kappa^2_{\leq p}}^\bsigma$ are positive.
Thus their common immediate predecessor is negative -- but it is also
their greatest common event, since $\pview{\kappa^1_{\leq n}}^\bsigma$
and $\pview{\kappa^2_{\leq p}}^\bsigma$ are forking.
So, by Lemma \ref{lem:forking},
\[
\gce(\pview{\kappa^1_{\leq n}}^\bsigma, \pview{\kappa^2_{\leq
p}}^\bsigma) = \gce(\kappa^1_{\leq n}, \kappa^2_{\leq p})_\bsigma = m'\,,
\]
so $m'$ is negative for $\bsigma$. In $\btau \inter \bsigma$, $m'$ is
negative or in $B$ -- in  both cases the
least visible events in $\xi^1$ and $\xi^2$ are positive, but those are
our least distinct events of $\rho_1$ and $\rho_2$.
\end{proof}

\begin{prop}
There is $\CGwbinn$, a sub-Seely category of $\CGwb$ having the same
objects and morphisms restricted to parallel innocent causal strategies.
\end{prop}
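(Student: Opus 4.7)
The plan is to verify that parallel innocent causal strategies are closed under all the operations of the Seely category structure of $\CGwb$. Recall that parallel innocence requires three properties: \emph{causal determinism}, \emph{pre-innocence}, and \emph{visibility}. For the stability under composition, the two nontrivial ingredients are already in place: Proposition \ref{prop:comp_visible} shows that visibility is preserved, while Proposition \ref{prop:rig_comp} shows that pre-innocence is preserved (crucially under the assumption of visibility). Causal determinism is preserved under composition by Winskel's argument \cite{DBLP:journals/fac/Winskel12}, mentioned just after Definition~6.1. So composition of parallel innocent causal strategies yields a parallel innocent causal strategy.

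Next I would verify that $\bcc_A$ is parallel innocent for every $-$-arena $A$. Visibility is clear: by the explicit description of $\bcc_A$ in Definition \ref{def:copycat}, any $\rho \in \gcc(\bcc_A)$ corresponds to a causal chain in $A$ going back and forth between the two sides; its display is a down-closed subset of $A \vdash A$. Pre-innocence follows because every positive event $(i,a)$ in $\bcc_A$ has a unique immediate predecessor, namely $(1-i, a)$; hence no two distinct gccs can end at the same positive event, trivially forcing any putative least distinct events to be positive. Causal determinism is immediate because conflict in $\bcc_A$ is inherited pointwise from conflict in $A$, and by \emph{race-free} in Definition \ref{def:arena} minimal conflicts in $A$ are negative, hence Opponent-initiated.

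For the Seely structure, most of the required morphisms are relabelings of copycat: the associators $\balpha$, symmetries $\mathbf{s}$, unitors, projections $\bpi_i$, evaluation $\evm_{A,B}$, dereliction $\bder_A$, digging $\bdig_A$, and the Seely isomorphisms $\bmon^2_{A,B}, \bmon^0$ are all such, so they inherit parallel innocence from $\bcc$. For the constructions: $\bsigma_1 \tensor \bsigma_2$ and $\tuple{\bsigma_1, \bsigma_2}$ are defined as $\bsigma_1 \parallel \bsigma_2$ (respectively $\bsigma_1 \with \bsigma_2$) on the underlying ess, so each gcc lies entirely in one component; visibility, pre-innocence, and causal determinism then reduce componentwise. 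The same holds for $\oc \bsigma$, where gccs live inside a single copy $(\grey{i}, -)$ of $\bsigma$. Currying $\Lambda$ only changes the display map; as visibility, pre-innocence, and causal determinism only depend on the event structure together with the polarity and justifier assignment, which are unchanged by $\Lambda$, it preserves parallel innocence.

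The main work was already done in establishing Propositions \ref{prop:comp_visible} and \ref{prop:rig_comp}; the remaining verifications are routine checks on copycat and on relabelings thereof. The only subtle point worth spelling out carefully is the verification that the structural laws of the Seely category -- which hold only up to $\simstrat$ in $\CGwb$ -- still hold up to $\simstrat$ when restricted to parallel innocent strategies, but since $\simstrat$ is preserved by the operations and parallel innocence is invariant under $\simstrat$ (each of its three conditions depends only on causality, conflict and display up to positive renaming), this is immediate.
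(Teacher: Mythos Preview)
Your proposal is correct and follows essentially the same approach as the paper: rely on Propositions \ref{prop:comp_visible} and \ref{prop:rig_comp} for composition, check copycat is parallel innocent by noting it is forestial (each event has a unique immediate predecessor), and observe that the remaining Seely structure consists of relabeled copycats and componentwise constructions. The paper's proof is terser and does not spell out determinism, $\Lambda$, or the $\simstrat$-invariance point, but the substance is the same.
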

\begin{proof}
Propositions \ref{prop:comp_visible} and \ref{prop:rig_comp} ensure that
parallel innocent strategies compose. Stability under tensor and pairing
are immediate. It remains that structural morphisms are innocent.

We detail it for copycat. Consider $A$ a $-$-arena. We show that
any $(i, a) \in \ev{\cc_A}$, $(i, a)$ is minimal or has exactly
one predecessor for $\imc_{\cc_A}$. Assume first $\pol(i, a) =
-$. If it is not minimal, take $(j ,a') \imc_{\cc_A} (i, a)$.
Necessarily $i = j$ and $a' \imc_A a$, so uniqueness follows from
$A$ forestial. If $\pol(i, a) = +$, its unique immediate
predecessor is $(2-i, a)$.  So, $\cc_A$ is forestial.

Now, for visibility, consider $\rho \in \gcc(\cc_A)$. But since $\cc_A$
is forestial, $\rho \in \conf{\cc_A}$, so clearly $\pr_{\cc_A}(\rho) =
\rho \in \conf{A \vdash A}$. For parallel innocence, consider $\rho_1
\imc_{\cc_A} (i, a), \rho_2 \imc_{\cc_A} (i, a) \in \gcc(\cc_A)$
distinct. But since $\cc_A$ is forestial, $\rho_1 = \rho_2$,
contradicting their distinctness.
\end{proof}

\subsubsection{Interpretation of $\PCFpar$}
We complete the interpretation. For all sequential primitives of
$\PCFpar$, the corresponding strategy is forestial: as for copycat,
parallel innocence follows. Finally, the strategy $\mathsf{plet}_{\tx,
\ty}$ is shown parallel innocent by direct inspection (see Figure
\ref{fig:plet}). Altogether, this yields an interpretation of $\PCFpar$
in $\CGwbinn_\oc$, and we have:

\begin{cor}[Adequacy]
For $\vdash M : \tunit$ any term of $\PCFpar$, $M \eval$ iff
$\intr{M}_{\CGwbinn_\oc}\eval$.
\end{cor}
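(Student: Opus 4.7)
The plan is to reduce directly to Theorem~\ref{th:adequacy_ipa}, the adequacy result already established for $\IPA$ interpreted in $\CGwb_\oc$. Since $\PCFpar$ is a fragment of $\IPA$ whose terms reduce according to the same rules, it suffices to show that for any $\vdash M : \tunit$ in $\PCFpar$, the interpretation $\intr{M}_{\CGwbinn_\oc}$ coincides with the interpretation $\intr{M}_{\CGwb_\oc}$ obtained by regarding $M$ as an $\IPA$ term; convergence of the latter is then handled by Theorem~\ref{th:adequacy_ipa}.

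First I would invoke the preceding proposition, which exhibits $\CGwbinn$ as a sub-Seely category of $\CGwb$ with the same objects. The inclusion $\iota : \CGwbinn_\oc \hookrightarrow \CGwb_\oc$ is therefore a strict morphism of Seely categories; it preserves on the nose the interpretation of every $\lambda$-calculus combinator, of every constant, and of every basic $\PCF$ primitive ($\mathsf{seq}$, $\mathsf{if}$, $\mathsf{succ}$, $\mathsf{pred}$, $\mathsf{iszero}$, sequential $\mathsf{let}$), since these are specified by the same concrete strategies and the same composition and pairing operations in both categories. The parallel construct is interpreted via the strategy $\mathsf{plet}_{\tx,\ty}$ of Section~\ref{subsubsec:intr_plet}, which was shown parallel innocent by inspection in the proof that $\CGwbinn$ is a sub-Seely category; so its image under $\iota$ is the very same strategy, and $\iota$ also preserves the interpretation of $\mathbf{plet}$.

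The only step requiring care is the treatment of recursion. In both categories the fixed point $\Y_A$ is defined, following Section~\ref{subsubsec:core_pcf}, as the least fixed point with respect to $\cleq$ of a continuous functional on the pointed dcpo of strategies above $\bot_A$. The essential observation is that parallel innocence is preserved under directed suprema with respect to $\cleq$: the defining conditions (causal determinism, pre-innocence via grounded causal chains, and visibility) are all local and refer only to finitely many events, and each finite witness already appears in some member of the directed set; hence the dcpo of parallel innocent strategies above $\bot_A$ is a sub-dcpo of the ambient one, and the iterates defining $\Y_A$ agree in both settings. It follows by induction on syntax that $\iota(\intr{M}_{\CGwbinn_\oc}) = \intr{M}_{\CGwb_\oc}$.

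Combining these observations, $M \eval$ iff $\intr{M}_{\CGwb_\oc} \eval$ by Theorem~\ref{th:adequacy_ipa}, iff $\intr{M}_{\CGwbinn_\oc} \eval$ by the above equality (the criterion ``has a positive move in $\tunit$'' being obviously preserved and reflected by $\iota$). The potentially subtle point — stability of parallel innocence under directed suprema in $\cleq$ — turns out to be unproblematic because all defining conditions are finitary; no separate adequacy argument via logical relations or canonical forms is required.
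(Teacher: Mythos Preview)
Your proposal is correct and follows essentially the same approach as the paper, which simply states that the result is a ``consequence of Theorem~\ref{th:adequacy_ipa}, as $\PCFpar$ is a sub-language of $\IPA$.'' You spell out what the paper leaves implicit, namely that the inclusion $\CGwbinn_\oc \hookrightarrow \CGwb_\oc$ preserves the interpretation (including the fixpoint, which the paper does not comment on separately here since the interpretation was already stated to land in $\CGwbinn_\oc$ just before).
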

\begin{proof}
Consequence of Theorem \ref{th:adequacy_ipa}, as $\PCFpar$ is a
sub-language of $\IPA$.
\end{proof}

By the end of the paper, we will have established that $\CGwbinn$ is
intensionally fully abstract for $\PCFpar$. However the corresponding
technical development is left for last.

\section{Sequentiality and Causal Full Abstraction for $\IA$ and $\PCF$}
\label{sec:fa_ia_pcf}

In this section, we shall define \emph{sequentiality} on $\CGwb$, then
prove full abstraction results
\[  
\begin{array}{rcl}
\CGwb_\oc + \text{{sequentiality}} &\text{is
fully abstract for} & \IA\,,\\
\CGwb_\oc + \text{{parallel innocence}} +
\text{{sequentiality}} & \text{is fully abstract for} & \PCF\,, 
\end{array}
\]
established by linking with Theorems \ref{th:fa_pcf} and \ref{th:fa_ia}
for alternating strategies.

\subsection{Sequentiality} \label{subsec:seq_ia}
We construct the Seely category of \emph{sequential} causal strategies.

\subsubsection{Definition} Intuitively, a causal strategy $\bsigma : A$
is \emph{sequential} if it unfolds gracefully to a (deterministic)
alternating strategy. That does not mean that Player never throws
parallel threads, or always acts deterministically: for instance,
the strategy in Figure \ref{fig:ex_aug1} should be sequential and yet
has certain of its configurations enabling two parallel or two
conflicting Player moves.
But: as long as Opponent follows an alternating discipline, so should
Player.  

A play $s \in \NAlt(A)$ is \textbf{alternating} if $s \in \Alt(A)$. We
shall often use $\NAlt(-)$ or $\Alt(-)$ on causal 
strategies -- recall that $\Alt(\bsigma)$ and $\NAlt(\bsigma)$ are
sequences of $\ev{\bsigma}$, and must not be confused with
$\NAltStrat(\bsigma)$ that includes a move-by-move projection via the
display map.
We start by giving the
definition of (deterministic) \emph{sequentiality}:

\begin{defi}\label{def:seq_nalt_strat}
Consider $A$ an arena. A causal (pre)strategy $\bsigma : A$ is
\textbf{sequential} if:
\[
\begin{array}{rl}
\text{\emph{reachable sequentiality:}} & \text{for $t n^+ \in
\NAlt(\bsigma)$, if $t \in \Alt(\bsigma)$ then $tn \in
\Alt(\bsigma)$.}\\
\text{\emph{sequential determinism:}} & \text{for $t n_1^+, t n_2^+
\in \Alt(\bsigma)$, then $n_1 = n_2$;}\\ 
\text{\emph{sequential visibility:}} & \text{every alternating $s \in
\NAltStrat(\bsigma)$ is P-visible.}
\end{array}
\]
\end{defi}

For \emph{sequential determinism}, more than merely asking that
$\NAltStrat(\bsigma)$ acts deterministically on alternating plays, this
condition imposes that no \emph{internal} non-deterministic choice is
alternatingly reachable, even when this choice would yield no observable
non-deterministic behaviour (this is required for the forthcoming
alternating projection to preserve symmetry).

\emph{Sequential visibility} is perhaps puzzling, as P-visibility
is usually associated not to sequentiality, but to the absence of
higher-order state \cite{DBLP:conf/lics/AbramskyHM98}.  From a given
control point, a P-visible strategy may only call a procedure bound
within the branch of the syntax tree leading to that control point.  In
contrast, with higher-order state, a program may call a procedure stored
in the memory, originating from a remote program phrase outside the current branch.
This phenomenon is independent of sequentiality, but in $\CG$ the
causality due to the syntax tree blurs with that due to interference.
So strategies arising from the interpretation of $\IPA$ are ``morally''
P-visible but formalizing this is nontrivial\footnote{This was done by
Laird in the first interleaving games model
\cite{DBLP:journals/entcs/Laird01}, via explicit threading
information.}. For us it is not worth the trouble as P-visibility is not
required for full abstraction for $\IPA$\footnote{Proposition
\ref{prop:gm_def} shows that non-visible behaviour characteristic of
higher-order state can be mimicked by running several threads in
parallel and using signaling via \emph{interference} to jump control
between them.}. Consequently, it suffices to reinstate it once we
restrict to sequential strategies.

\subsubsection{Alternating projection}
We start by defining the \emph{alternating projection}.

\begin{defi}
Consider $A, B$ $-$-arenas, and $\bsigma : A \vdash B$ a sequential
causal (pre)strategy.

Then, we define $\AltStrat(\bsigma) = \{\pr_{\Lambda(\bsigma)}(t) \mid t
\in \Alt(\bsigma)\}$.
\end{defi}

We shall prove that $\AltStrat(\bsigma)$ is an alternating (pre)strategy
on $A \lin B$. The two subtle points are that it is deterministic, and
uniform -- which both rest on the observation:

\begin{lem}\label{lem:uniq_caus_wit}
Consider $A, B$ $-$-arenas, and $\bsigma : A \vdash B$ a sequential
(pre)strategy.

For any $s \in \AltStrat(\bsigma)$, there is a \emph{unique} $t \in
\Alt(\bsigma)$ such that $s = \pr_{\Lambda(\bsigma)}(t)$.
\end{lem}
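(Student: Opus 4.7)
The plan is a straightforward induction on the length of $s \in \AltStrat(\bsigma)$, after observing that $\pr_{\Lambda(\bsigma)}$ is applied move-by-move and so preserves length, forcing any witness $t$ to be of the same length as $s$. The empty case is immediate. For the inductive step, suppose $s = s_0 \cdot a$ is witnessed by $t = t_0 \cdot m$ and $t' = t'_0 \cdot m'$, both in $\Alt(\bsigma)$. Since taking prefixes commutes with the move-by-move display map, $t_0$ and $t'_0$ both witness $s_0$; as prefixes of alternating sequences are alternating, both lie in $\Alt(\bsigma)$, so the induction hypothesis yields $t_0 = t'_0$. What is left to prove is $m = m'$, which I would handle by case analysis on the polarity of $a$.

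If $a$ is negative, then $m$ and $m'$ are both negative extensions of the configuration $\ev{t_0} \in \conf{\bsigma}$ displaying to $a$. Since $\pr_{\Lambda(\bsigma)}$ and $\pr_\bsigma$ agree on individual events up to the tagging that records the pointer of an initial $A$-move (an injective relabelling), we have $\pr_\bsigma(m) = \pr_\bsigma(m')$; the receptivity clause of Definition~\ref{def:caus_strat} then provides a unique such extension, so $m = m'$. If $a$ is positive, then $t_0 \cdot m$ and $t_0 \cdot m'$ both lie in $\Alt(\bsigma)$ and extend $t_0$ by a positive event, and sequential determinism of Definition~\ref{def:seq_nalt_strat} immediately yields $m = m'$.

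I do not anticipate a serious obstacle: the statement essentially reads off the axioms of strategies (Definition~\ref{def:caus_strat}) together with the new sequential-determinism clause. The only conceptual subtlety is to observe that $\pr_{\Lambda(\bsigma)}$ carries strictly more information than $\pr_{\bsigma}$, namely the reconstructed pointer for minimal events of $A$; but for pointed strategies this extra information is functionally determined by the bare display map via $\init(-)$, so the receptivity property transfers transparently between the two display maps and plays no genuine role in the argument.
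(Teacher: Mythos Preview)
Your proposal is correct and takes essentially the same approach as the paper, which simply states ``Immediate by induction, using receptivity and \emph{sequential determinism}.'' Your write-up is a faithful unpacking of that one-liner: induction on the length of $s$, with the negative case handled by the uniqueness clause of receptivity and the positive case by sequential determinism.

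One minor remark: the lemma is stated for (pre)strategies, and receptivity is technically only required of full strategies (Definition~\ref{def:caus_strat}). The paper's own proof sketch invokes receptivity just as you do, so this is not a discrepancy between your argument and the paper's; but if you wanted to be fully rigorous about the prestrategy case you would need to observe that $\sim$-receptivity (which prestrategies do satisfy) specialised to the identity symmetry already yields the required uniqueness of negative extensions with a prescribed display.
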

\begin{proof}
Immediate by induction, using receptivity and \emph{sequential
determinism}.
\end{proof}

From this it follows that $\AltStrat(\bsigma)$ satisfies
determinism. For uniformity, we prove:

\begin{lem}\label{lem:alt_proj_pres_sym}
Consider $A$ an arena and $\bsigma, \btau : A \vdash B$ sequential
causal (pre)strategies.

If $\bsigma \simstrat \btau$, then $\AltStrat(\bsigma) \simstrat
\AltStrat(\btau)$.
\end{lem}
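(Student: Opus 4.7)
The plan is to show that any positive isomorphism $\varphi : \bsigma \iso \btau$ induces, via Lemma \ref{lem:uniq_caus_wit}, a bijection $\Psi : \AltStrat(\bsigma) \to \AltStrat(\btau)$ sending each play $s$, with unique witness $u_s \in \Alt(\bsigma)$, to $\Psi(s) := \pr_{\Lambda(\btau)}(\varphi(u_s))$. Because the display maps agree up to positive symmetry, $s \sym^+_{A\lin B} \Psi(s)$ for every $s \in \AltStrat(\bsigma)$; the inverse bijection is induced by $\varphi^{-1}$. I will then verify the four conditions of Definition \ref{def:simstrat} directly using $\Psi$ as a bridge.

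The key technical tool is a \emph{play-reflection} lemma for causal strategies: if $t_1, t_2 \in \Alt(\btau)$ have the same length and the move-by-move bijection $\theta_{t_1, t_2} : \ev{t_1} \to \ev{t_2}$ projects under $\pr_{\Lambda(\btau)}$ to a symmetry in $\tilde{A \lin B}$, then $\theta_{t_1, t_2}$ itself lies in $\tilde{\btau}$. I would establish this by induction on the common length: for negative extensions, $\sim$-receptivity of $\btau$ forces the unique lift; for positive extensions, thinness picks a unique extension in $\btau$, and sequential determinism together with local injectivity guarantee that the extension selected is precisely the move actually occurring in $t_2$ rather than some conflicting alternative.

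With this lemma in hand, the four conditions become routine diagram chases. For $\to$-\emph{simulation}, given $s a^+ \in \AltStrat(\bsigma)$ and $t \in \AltStrat(\btau)$ with $s \sym t$, compose $t \sym s \sym^+ \Psi(s)$ to get $t \sym \Psi(s)$; the play-reflection lemma lifts the bijection between witnesses $v_t$ of $t$ and $\varphi(u_s)$ of $\Psi(s)$ to a symmetry $\Theta \in \tilde{\btau}$; applying thinness to the positive extension $\ev{\varphi(u_s)} \vdash_\btau \varphi(a')$ (where $u_s a'$ is the witness for $sa$) yields a unique $\Theta' \supseteq \Theta$ with matching positive event $b'$ on the $v_t$-side. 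Then $v_t b' \in \Alt(\btau)$, so $t b := \pr_{\Lambda(\btau)}(v_t b') \in \AltStrat(\btau)$, and $sa \sym^+ \Psi(s) b'' \sym tb$ by transitivity, where $b'' = \pr_{\Lambda(\btau)}(\varphi(a'))$. The $\to$-\emph{receptive} case proceeds identically but invokes $\sim$-receptivity rather than thinness for the negative extension. The $\ot$ conditions are handled symmetrically via $\varphi^{-1}$.

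The main obstacle is the positive step of the play-reflection lemma: one must verify that the unique thin lift coincides with the next move actually occurring in $t_2$, ruling out spurious conflicting alternatives. This reduces, via symmetry-preservation of $\pr_\btau$ and local injectivity, to arena-level uniqueness of positive extensions of a symmetry in $\tilde{A \lin B}$, which holds thanks to the polarized decomposition of the arena combined with sequential determinism of $\btau$, guaranteeing that alternatingly reachable positive branching is always arena-driven.
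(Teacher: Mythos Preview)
Your approach is essentially the paper's: the ``play-reflection lemma'' is precisely the auxiliary statement the paper proves by induction, namely that for symmetric arena-level plays $s^\bsigma \sym s^\btau$, the witnesses satisfy $\varphi(t^\bsigma) \sym_\btau t^\btau$ inside $\tilde{\btau}$. The paper uses \emph{extension} for $\tilde{\btau}$ where you invoke \emph{thin}; either works, and the conclusion in both cases is clinched by \emph{sequential determinism}: the extended event $n_2'$ yields $t^\btau n_2' \in \Alt(\btau)$ (via \emph{reachable sequentiality}), and then $t^\btau n_2, t^\btau n_2' \in \Alt(\btau)$ forces $n_2 = n_2'$.

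Your final paragraph overcomplicates this step. The positive case does \emph{not} reduce to any arena-level uniqueness property, and the polarized decomposition plays no role here. Once you have the thin (or extension-given) lift $n_2'$, reachable sequentiality makes $t^\btau n_2'$ alternating, and sequential determinism immediately identifies it with $n_2$. Local injectivity is also unnecessary. So the ``main obstacle'' you describe dissolves: the argument is shorter than you anticipate, and your earlier one-line summary (``thinness picks a unique extension, sequential determinism guarantees it is the move actually occurring'') is already the complete story.
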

\begin{proof}
If $\bsigma \simstrat \btau$, by Definition \ref{def:pos_iso} there is
an isomorphism $\varphi : \bsigma \iso \btau$ of ess satisfying
\[
\pr_\btau \circ \varphi \sim^+ \pr_\bsigma\,.
\]

We prove by induction on $s^\bsigma \in \AltStrat(\bsigma)$ that
if $s^\bsigma \sym_{A\vdash B} s^\btau$ with
$s^\btau \in \AltStrat(\btau)$, then taking $t^\bsigma \in
\Alt(\bsigma)$ and $t^\btau \in \Alt(\btau)$ s.t. $s^\bsigma =
\pr_{\Lambda(\bsigma)}(t^\bsigma)$ and $s^\btau =
\pr_{\Lambda(\btau)}(t^\btau)$ from Lemma
\ref{lem:uniq_caus_wit},
we have $\varphi(t^\bsigma) \sym_\btau t^\btau$
as in (the obvious generalization of) Definition
\ref{def:alt_play_equiv}.

For $s^\bsigma$ empty it is clear.
For $s^\bsigma m_1^- \in \AltStrat(\bsigma)$ and $s^\btau m_2^- \in
\AltStrat(\btau)$ with $s^\bsigma m_1^- \sym_{A\vdash B} s^\btau m_2^-$,
there is a unique matching $t^\bsigma n_1^- \in \Alt(\bsigma)$. So
$\varphi(t^\bsigma) \varphi(n_1) \in \Alt(\btau)$, and by induction
hypothesis $\varphi(t^\bsigma) \sym_\btau t^\btau$. Moreover, from
$s^\bsigma m_1^- \sym_A s^\btau m_2^-$ and $\pr_{\Lambda(\btau)} \circ
\varphi 
\sim \pr_{\Lambda(\bsigma)}$, 
\[
\pr_{\Lambda(\btau)}(\varphi(t^\bsigma) \varphi(n_1)) \sym_{A\vdash B}
\pr_{\Lambda(\btau)}(t^\btau
n_2)\,,
\]
so $\varphi(t^\bsigma) \varphi(n_1) \sym_\btau t^\btau n_2$ by
\emph{$\sim$-receptivity} of $\btau$.
Now, for $s^\bsigma m_1^+ \in \AltStrat(\bsigma)$ and $s^\btau m_2^+ \in
\AltStrat(\btau)$ with $s^\bsigma m_1^+ \sym_{A\vdash B} s^\btau m_2^+$,
take $t^\bsigma n_1 \in \Alt(\bsigma)$ and $t^\btau n_2
\in \Alt(\btau)$ s.t. $\pr_{\Lambda(\bsigma)}(t^\bsigma n_1) =
s^\bsigma m_1$ and $\pr_{\Lambda(\btau)}(t^\btau n_2) = s^\btau m_2$. By
induction hypothesis, $\varphi(t^\bsigma) \sym_\btau t^\btau$.
As the former extends with $\varphi(n_1)$, by \emph{extension} for
$\tilde{\btau}$ there is some (positive) $n'_2$ s.t.
$\varphi(t^\bsigma) \varphi(n_1) \sym_\btau t^\btau n'_2$
but now, by \emph{reachable determinism} of $\btau$, $n_2 = n'_2$.

The lemma follows immediately from this auxiliary proposition.
\end{proof}

In particular, if $\bsigma : A\vdash B$ is sequential,
then $\bsigma \simstrat \bsigma$ via the identity isomorphism,
consequently $\AltStrat(\bsigma) \simstrat \AltStrat(\bsigma)$,
\emph{i.e.} $\AltStrat(\bsigma)$ is uniform -- from here we conclude:

\begin{prop}
Consider $A, B$ $-$-arenas, and $\bsigma : A \vdash B$ a sequential
causal strategy.

Then, $\AltStrat(\bsigma) : A \lin B$ is a P-visible alternating
strategy.
\end{prop}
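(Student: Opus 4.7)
The plan is to verify each clause of Definition~\ref{def:alt_strat} for $\AltStrat(\bsigma) \subseteq \Alt(A \lin B)$, and then P-visibility. The routine closure conditions (non-emptiness, prefix-closure, validity, non-repetitiveness, alternation, and negativity) will be obtained by transporting the corresponding properties of $\Alt(\bsigma)$ across $\pr_{\Lambda(\bsigma)}$: local injectivity of the display map on each configuration of $\bsigma$ yields non-repetitiveness of the projected play; the fact that $\pr_\bsigma$ sends configurations to configurations (together with the observation that $\Lambda$ only re-routes the causal predecessor of initial moves of $A$, without altering events or polarities) yields validity; alternation and negativity are transferred from $t$ to $\pr_{\Lambda(\bsigma)}(t)$ because polarities are preserved and $\bsigma$ is negative as a causal strategy, so the first event of any non-empty $t \in \Alt(\bsigma)$ is minimal in $\bsigma$ hence negative.

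For determinism, if $sa_1^+, sa_2^+ \in \AltStrat(\bsigma)$, Lemma~\ref{lem:uniq_caus_wit} gives a unique alternating witness $t \in \Alt(\bsigma)$ for $s$, and $sa_1, sa_2$ are witnessed by extensions $tb_1, tb_2 \in \Alt(\bsigma)$; \emph{sequential determinism} then forces $b_1 = b_2$, hence $a_1 = a_2$. For receptivity, given $s \in \AltStrat(\bsigma)$ and $sa^- \in \Alt(A\lin B)$, I would pick the unique alternating witness $t$ via Lemma~\ref{lem:uniq_caus_wit}, translate the negative extension $a^-$ across the bijection between extensions on $A\lin B$ and on $A \vdash B$ granted by pointedness of $\bsigma$, and apply receptivity of $\bsigma$ to produce a unique $b$ with $t \vdash_\bsigma b$ displaying to $a$; alternation of $tb$ is inherited from alternation of $sa^-$ since $\Lambda$ preserves polarities. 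Uniformity (\emph{i.e.} $\AltStrat(\bsigma) \simstrat \AltStrat(\bsigma)$, so that $\to/\ot$-simulation and $\to/\ot$-receptivity of Definition~\ref{def:simstrat} hold) is an immediate consequence of Lemma~\ref{lem:alt_proj_pres_sym} applied to the identity isomorphism $\bsigma \simstrat \bsigma$. Finally, P-visibility is read off directly from \emph{sequential visibility}: any $s \in \AltStrat(\bsigma)$ arises as $\pr_{\Lambda(\bsigma)}(t)$ with $t \in \Alt(\bsigma) \subseteq \NAlt(\bsigma)$, so $s$ belongs to $\NAltStrat(\bsigma)$ and is alternating, hence P-visible by assumption.

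The main point of attention will be the transfer step in the receptivity argument: checking that the canonical bijection between configurations of $A \vdash B$ and those of $A \lin B$ induced by pointedness commutes with negative enabling relations, so that the unique $b$ produced by receptivity of $\bsigma$ indeed displays through $\pr_{\Lambda(\bsigma)}$ to the prescribed $a^-$. Everything else reduces to a one-step invocation of Lemma~\ref{lem:uniq_caus_wit}, Lemma~\ref{lem:alt_proj_pres_sym}, or one of the three clauses of Definition~\ref{def:seq_nalt_strat}.
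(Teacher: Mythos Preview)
Your proposal is correct and follows essentially the same approach as the paper's proof, which simply states that prefix-closure and receptivity are clear, determinism follows from Lemma~\ref{lem:uniq_caus_wit} together with sequential determinism, uniformity from Lemma~\ref{lem:alt_proj_pres_sym}, and P-visibility directly from the definition of sequential visibility. You have spelled out considerably more detail (particularly for receptivity and the basic play conditions), but the underlying argument is identical.
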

\begin{proof}
\emph{Prefix-closure} and \emph{receptivity} are clear.
\emph{Determinism} is immediate from Lemma \ref{lem:uniq_caus_wit} and
\emph{sequential determinism}, and \emph{uniformity} is immediate from
Lemma \ref{lem:alt_proj_pres_sym}.

Finally, \emph{P-visibility} is immediate by definition of
\emph{sequential visibility}.
\end{proof}

\subsubsection{Composition}\label{subsubsec:comp_seq_caus}
We now focus on composition of sequential causal strategies.

We introduce some terminology. If $A, B$ and $C$
are $-$-arenas and $\bsigma : A \vdash B$, $\btau : B \vdash C$ are
causal (pre)strategies, then we define $\NAlt(\btau \inter \bsigma)$ as
in Definition \ref{def:nalt_play}, referring to polarities of events in
$\{-, \labl, \labr\}$. If $u \in \NAlt(\btau \inter \bsigma)$, we write
$u_\bsigma \in \NAlt(\bsigma)$, $u_\btau \in \NAlt(\btau)$, and $u_\odot
\in \NAlt(\btau \odot \bsigma)$ for the obvious restrictions. If a play
is alternating, it is \textbf{in state $O$} if it has
even-length, and \textbf{in state $P$} if it has odd length.

Many properties of causal sequentiality follow from the following
crucial observation:

\begin{lem}\label{lem:state-diagram}
Consider $\bsigma : A \vdash B$ and $\btau : B \vdash C$ sequential
causal strategies. 

Then, for any $u \in \NAlt(\btau \inter \bsigma)$ such that $u_\odot \in
\Alt(\btau \odot \bsigma)$, we have $u_\bsigma \in \Alt(\bsigma)$
and $u_\btau \in \Alt(\btau)$, and we are in one of the
following three cases:
\[
\begin{array}{ll}
\text{\emph{(1)}}&
\text{$u_\bsigma, u_\btau, u_\odot$ are respectively in state $O, O,
O$,}\\
\text{\emph{(2)}}&
\text{$u_\bsigma, u_\btau, u_\odot$ are respectively in state $O, P,
P$,}\\
\text{\emph{(3)}}&
\text{$u_\bsigma, u_\btau, u_\odot$ are respectively in state $P, O,
P$.}
\end{array}
\]
\end{lem}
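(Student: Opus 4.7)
The plan is to proceed by induction on the length of $u$. In the base case $u = \varepsilon$, all three projections are empty and in state $O$, placing us in case $(1)$.

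For the inductive step, I would write $u = u' m$. All prefixes of $u$ remain in $\NAlt(\btau \inter \bsigma)$ and their $\odot$-projections are prefixes of $u_\odot \in \Alt$, so the induction hypothesis applies to $u'$: we have $u'_\bsigma \in \Alt(\bsigma)$, $u'_\btau \in \Alt(\btau)$, and $u'$ falls in one of the three states. I would then case-split on $\pol_{\btau \inter \bsigma}(m) \in \{-, \labl, \labr\}$ and on the location of $m$ (in $A$, $B$, or $C$), using the conventions of Section~\ref{subsubsec:caus_conf_int}: a move in $B$ is necessarily synchronised with opposite polarities in $\bsigma$ and $\btau$, so its interaction-polarity is $\labl$ or $\labr$, while a move in $A$ (resp.\ $C$) has only $m_\bsigma$ (resp.\ $m_\btau$) defined and its polarity in the composition matches its polarity in $\bsigma$ (resp.\ $\btau$).

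Three tools should combine to discard unwanted sub-cases. First, $u m \in \NAlt(\btau \inter \bsigma)$ yields $u'_\bsigma m \in \NAlt(\bsigma)$ or $u'_\btau m \in \NAlt(\btau)$ depending on which of $m_\bsigma, m_\btau$ is defined. Second, the assumption $(um)_\odot \in \Alt$ rules out moves in $A$ or $C$ whose polarity would break alternation in $u'_\odot$. Third, \emph{reachable sequentiality} of $\bsigma$ (and of $\btau$) says that a positive extension $t n^+ \in \NAlt(\bsigma)$ of an alternating $t$ is itself alternating --- so whenever $m$ is positive for $\bsigma$ or $\btau$, alternation of the relevant projection is automatic; symmetrically this rules out every case where $\bsigma$ or $\btau$ would be expected to play positively while in state $O$. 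The surviving transitions should assemble into a small automaton on the three states: from $(O,O,O)$, only Opponent moves in $A$ or $C$ survive (leading respectively to $(P,O,P)$ and $(O,P,P)$); from $(O,P,P)$, a Player move in $C$ returns to $(O,O,O)$ and a synchronised $\labr$-move in $B$ leads to $(P,O,P)$; and from $(P,O,P)$ the situation is symmetric. In each surviving case, alternation of $(um)_\bsigma$ and $(um)_\btau$ is immediate.

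The main obstacle will be bookkeeping: no individual case is deep, but the polarities of $m$ in $\bsigma$, in $\btau$, in the interaction, and in the composition follow different conventions which must be aligned carefully, and both the constraint from $(um)_\odot \in \Alt$ and the sequentiality of $\bsigma, \btau$ have to be invoked at the right moments to prune sub-cases.
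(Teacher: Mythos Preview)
Your proposal is correct and follows essentially the same approach as the paper: induction on $u$, with the empty base case in state $(1)$, and an inductive step that case-splits and uses \emph{reachable sequentiality} to rule out a positive move by a player currently in state $O$, together with the alternation of $u_\odot$ to constrain visible moves. The paper organises the case analysis by first splitting on the state of $u'$ and then on the location of $m$, whereas you split first on the polarity and location of $m$; this is a purely cosmetic difference and the resulting three-state automaton is exactly the one the paper describes in Figure~\ref{fig:state-diagram}.
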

\begin{proof}
By induction on $u$. If $u$ is empty, this is clear.
Consider $u m \in \NAlt(\btau \inter \bsigma)$.
By induction hypothesis $u$ is in one of cases \emph{(1)},
\emph{(2)} and \emph{(3)}. We distinguish cases:

\emph{(1)} Seeking a contradiction, assume $m$
occurs in $B$. Then, one of $m_\bsigma$ or $m_\btau$ is
positive -- say \emph{w.l.o.g.} the former. By induction
hypothesis, $u_\bsigma$ is alternating in state $O$, so ends
with a Player move. But so, $u_\bsigma m_\bsigma^+ \in \NAlt(\bsigma)$
with $u_\bsigma \in \Alt(\bsigma)$, so $u_\bsigma m_\bsigma \in
\Alt(\bsigma)$ since $\bsigma$ satisfies reachable sequentiality,
contradiction. So, $m$ occurs in $A$ or $C$ -- assume
\emph{w.l.o.g.} in $A$. Since $u_\odot$ is in state $O$, 
$m$ is negative -- then, it is direct that $u m$
satisfies \emph{(3)}.

\emph{(2)} First assume that $m$ occurs in $A$. Since
$u_\odot \in \Alt(\btau \inter \bsigma)$ is in state $P$, then $m$ is
positive; then $m_\bsigma$ is positive, contradicting reachable
sequentiality of $\bsigma$ with the fact that $u_\bsigma$ is in
state $O$. Similarly, if $m$ occurs in $B$ it has polarity
$\labr$ and we transition to \emph{(3)}, and if $m$ occurs in $C$ it has
polarity $\labr$ and we transition to \emph{(1)}. \emph{(3)}
Symmetric to \emph{(2)}.
\end{proof}

In other words, as long as the external Opponent respects the
alternation discipline, interactions follow the familiar \emph{state
diagram} of interactions in alternating game semantics, shown in Figure
\ref{fig:state-diagram}. 
\begin{figure}
\begin{center}
\includegraphics[scale=.5]{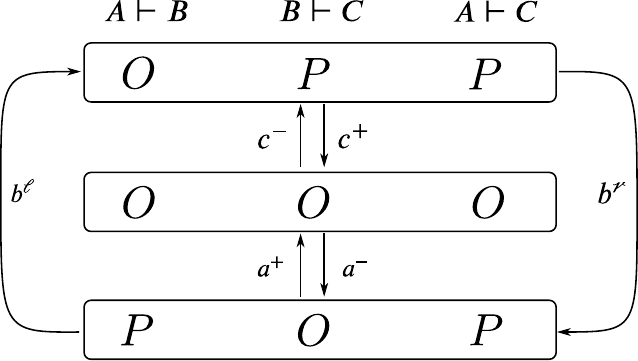}
\end{center}
\caption{State diagram for sequential interactions}
\label{fig:state-diagram}
\end{figure}
None of the interacting agents can be the first to break alternation, so
the interaction ends up fully alternating. It follows that
$\btau \odot \bsigma$ satisfies reachable sequentiality:

\begin{lem}\label{lem:comp_reach_seq}
Consider $\bsigma : A \vdash B$, $\btau : B \vdash C$ sequential causal
(pre)strategies. 

Then, $\btau \odot \bsigma$ satisfies \emph{reachable sequentiality}.
\end{lem}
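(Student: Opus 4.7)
The plan is to reduce reachable sequentiality of $\btau \odot \bsigma$ to that of the components, using Lemma \ref{lem:state-diagram} as the main lever. Concretely, I would take $t n^+ \in \NAlt(\btau \odot \bsigma)$ with $t \in \Alt(\btau \odot \bsigma)$, and attempt to show that $t$ must end in a negative move (or be empty); once this is established, alternation of $tn$ follows immediately from the fact that $n$ is positive.

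The first step is to lift $t$ to a witness $u \in \NAlt(\btau \inter \bsigma)$ with $u_\odot = t$. For this I would take the down-closure of $\ev{t}$ in $\btau \inter \bsigma$ (which adds exactly the synchronized $B$-events on which $\ev{t}$ depends); the resulting set is a configuration of $\btau \inter \bsigma$ by Proposition~\ref{prop:main_interaction}, and any linearization compatible with the ordering that $t$ imposes on visible events yields a suitable $u$. Because $n$ occurs in $A$ or $C$ and is therefore not synchronized, $un$ extends $u$ to an element of $\NAlt(\btau \inter \bsigma)$ with $(un)_\odot = tn$. Moreover, since $n$ is non-synchronized, its polarity in $\btau \odot \bsigma$ coincides with that of $n_\bsigma$ in $\bsigma$ (when $n$ occurs in $A$) or of $n_\btau$ in $\btau$ (when $n$ occurs in $C$); in either case, $n^+$ corresponds to a positive move of the relevant component.

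Now I would reason by contradiction. Suppose that $t$ ends in a positive move, so that $u_\odot$ is in state $O$. Applying Lemma \ref{lem:state-diagram} to $u$, we are forced into case \emph{(1)}: both $u_\bsigma$ and $u_\btau$ are alternating and in state $O$. Consider the sub-case where $n$ occurs in $A$. Then $u_\bsigma n_\bsigma \in \NAlt(\bsigma)$ with $u_\bsigma \in \Alt(\bsigma)$, so reachable sequentiality of $\bsigma$ would yield $u_\bsigma n_\bsigma \in \Alt(\bsigma)$. But $u_\bsigma$ in state $O$ is either empty or ends with a Player move, and $n_\bsigma$ is positive; so appending it either breaks negativity of the initial move or places two Player moves in a row, contradicting alternation. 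The sub-case where $n$ occurs in $C$ is symmetric, using reachable sequentiality of $\btau$. This forces $t$ to end negatively (or be empty), whence $tn \in \Alt(\btau \odot \bsigma)$.

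The only step requiring genuine care is the lifting of $t$ to $u$—one must check that a linearization of the down-closure in $\btau \inter \bsigma$ compatible with $t$ exists, which is a routine consequence of the event-based presentation of interaction together with the fact that events of $\btau \odot \bsigma$ inherit their causal order from $\btau \inter \bsigma$. Everything else is a direct bookkeeping exercise on the three cases of Lemma \ref{lem:state-diagram}, so I do not expect this proof to present any substantive obstacle beyond correctly invoking that lemma.
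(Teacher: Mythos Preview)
Your approach is essentially the paper's: lift to the interaction, invoke Lemma~\ref{lem:state-diagram}, and use reachable sequentiality of the components to rule out the cases where $u_\odot$ is in state $O$. The only difference is cosmetic organisation---you argue by contradiction on the state of $u_\odot$, whereas the paper runs through all three cases of Lemma~\ref{lem:state-diagram} directly.

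There is one small technical wrinkle in your lifting step. You propose to lift $t$ alone to $u$ (via the down-closure of $\ev{t}$ in $\btau\inter\bsigma$) and then assert that $un \in \NAlt(\btau\inter\bsigma)$. This need not hold: the event $n$ may have synchronized dependencies in $\btau\inter\bsigma$ that are not below any event of $\ev{t}$, and hence are absent from your $u$; in that case $\ev{u}\cup\{n\}$ is not down-closed. The paper avoids this by completing $tn$ as a whole to some $un \in \NAlt(\btau\inter\bsigma)$, so that all synchronized dependencies of $n$ are automatically included in $u$; since $n$ is visible and maximal in $[\ev{tn}]_{\btau\inter\bsigma}$, it can indeed be placed last. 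With this adjustment your argument goes through, and you then correctly have $u_\bsigma n_\bsigma \in \NAlt(\bsigma)$ (resp.\ $\btau$) to feed into the component's reachable sequentiality.
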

\begin{proof}
Consider $tn^+ \in \NAlt(\btau \odot \bsigma)$ s.t. $t \in
\Alt(\btau \odot \bsigma)$. We have either $n_\bsigma$ or $n_\btau$
defined and positive, say the former \emph{w.l.o.g.}. Let us complete $t
n$ to $u n \in \NAlt(\btau \inter \bsigma)$. We have $u_\odot = t \in
\Alt(\btau \odot \bsigma)$, so by may distinduish along the three cases
of Lemma \ref{lem:state-diagram}:

\emph{(1)} If $u_\bsigma, u_\btau, u_\odot$ are in state $O, O, O$. By
hypothesis, $u_\bsigma n^+_\bsigma \in \NAlt(\bsigma)$. Since
$\bsigma$ satisfies \emph{reachable sequentiality}, $u_\bsigma n_\bsigma
\in \Alt(\bsigma)$ as well, contradicting $u_\bsigma$ in state $O$.

\emph{(2)} If $u_\bsigma, u_\btau, u_\odot$ are in state $O, P, P$, 
as in \emph{(1)} this contradicts \emph{reachable sequentiality}.

\emph{(3)} If $u_\bsigma, u_\btau, u_\odot$ are in state $P, O, P$.
With $u_\odot = t$ in state $P$, $t n \in \Alt(\btau \odot \bsigma)$.
\end{proof}

It also follows that $\btau \odot \bsigma$ satisfies \emph{sequential
determinism}: Lemma \ref{lem:state-diagram} expresses that in an
alternatingly reachable interaction, only one agent has control at any
point. So any alternatingly reachable non-deterministic choice in $\btau
\odot \bsigma$ can be attributed to $\bsigma$ and $\btau$: 

\begin{lem}\label{lem:comp_seq_det}
Consider $\bsigma : A \vdash B$ and $\btau : B \vdash C$ sequential
causal (pre)strategies.

Then, $\btau \odot \bsigma$ satisfies \emph{sequential determinism}.
\end{lem}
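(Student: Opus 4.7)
First I would observe the following setup: any play $t n_1^+ \in \Alt(\btau \odot \bsigma)$ has $t$ nonempty and in state $P$, since all arenas are negative (so the first move cannot be positive). For $i \in \{1,2\}$, pick a witness $u_i \in \NAlt(\btau \inter \bsigma)$ of $t n_i$, and decompose it as $u_i = v_i \cdot h_i \cdot m_i$, where $v_i$ is the prefix ending at the last visible move of $t$, $h_i$ is a sequence of hidden moves in $B$, and $m_i$ is the final visible move projecting to $n_i$. By Lemma \ref{lem:state-diagram}, each prefix of $v_i h_i$ whose $\odot$-projection equals $t$ lies in case (2) or (3).

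The key observation is that in both of these cases exactly one of $\bsigma, \btau$ has ``control'': in case (2) $= (O, P, P)$ only $u_\btau$ is in state $P$, and in case (3) only $u_\bsigma$. From case (2), the only possible positive extensions of the interaction---whether a hidden $\labr$ move in $B$ transitioning to case (3), or a visible positive move in $C$ transitioning to case (1)---are positive extensions of $u_\btau$; by sequential determinism of $\btau$ such an extension is unique (if it exists). Case (3) is symmetric, using sequential determinism of $\bsigma$. Iterating, the entire sequence of positive extensions following any lift of $t$ is forced, so the first visible positive move reached is uniquely determined by this forced sequence.

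To conclude, this forcedness must yield $m_1 = m_2$ and hence $n_1 = n_2$. I would argue by an induction on $t$ showing that the ``maximal'' lift of $t$ (extending by all forced hidden transitions) is canonically defined up to inessential permutations of concurrent events; both $v_1 h_1$ and $v_2 h_2$ must therefore be prefixes of this maximal lift, so $m_1$ and $m_2$ are the same forced visible positive extension. The main obstacle lies precisely in this matching of witnesses: different lifts of $t$ may interleave hidden and visible moves differently and belong to different cases of the state diagram, so one must carefully show, using sequential determinism of $\bsigma, \btau$ together with \emph{reachable sequentiality} to keep the state diagram applicable at each step, that these lifts are related by uniquely determined hidden transitions and agree on the forced visible continuation.
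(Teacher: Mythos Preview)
Your intuition is right---the state diagram shows that in cases \emph{(2)} and \emph{(3)} exactly one player has control, and sequential determinism then forces the next move---but the execution has a real gap, which you yourself identify: you need the two witnesses $u_1, u_2$ of $t n_1, t n_2$ to agree well enough that the ``forced continuation'' argument applies to both simultaneously. Your proposed fix (a canonical maximal lift of $t$, unique up to ``inessential permutations'') would essentially be a unique-witness lemma for sequential causal interactions, and while something like that is provable, it is considerably more work than needed here and you have not spelled out the induction.

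The paper sidesteps this obstacle entirely. Rather than trying to match the two witnesses globally, it takes their \emph{greatest common prefix} $u'$, with $u' m_1 \prefix u_1 n_1$ and $u' m_2 \prefix u_2 n_2$ and $m_1 \neq m_2$. The visible restriction of $u'$ is some prefix $t' \prefix t$, so Lemma~\ref{lem:state-diagram} applies to $u'$. In case \emph{(1)} both $m_1, m_2$ must be the next negative move of $t$, hence equal---contradiction. In case \emph{(2)} both must have polarity $\labr$, so $(m_1)_\btau, (m_2)_\btau$ are positive extensions of the alternating play $u'_\btau$; sequential determinism of $\btau$ gives $(m_1)_\btau = (m_2)_\btau$, and local injectivity of $\Pi_\btau$ then forces $m_1 = m_2$---contradiction. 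Case \emph{(3)} is symmetric. This localises the divergence to a single point and reads off the contradiction directly, with no need to reason about canonical lifts or permutations.
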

\begin{proof}
Consider $t n_1^+, t n_2^+ \in \Alt(\btau\odot \bsigma)$ completed to
$u_1 n_1, u_2 n_2 \in \NAlt(\btau \inter \bsigma)$, and $u'$
the greatest common prefix of $u_1 n_1$ and $u_2 n_2$, with $u' m_1
\prefix u_1 n_1$ and $u' m_2 \prefix u_2 n_2$. Necessarily, the visible
restriction of $u'$ is a prefix $t' \prefix t$, so in particular $t' \in
\Alt(\btau \odot \bsigma)$.

We distinguish cases on Lemma \ref{lem:state-diagram} applied to $u'$.
For \emph{(1)}, $m_1$ and $m_2$ must both be the next
negative event appearing in $t$, so $m_1 = m_2$, contradiction. For
\emph{(2)}, $m_1$ and $m_2$ both have polarity $\labr$ and we
have $u'_\btau \in \Alt(\btau)$, $u'_\btau (m_1)_\btau^+, u'_\btau
(m_2)_\btau^+ \in \Alt(\btau)$. Hence $(m_1)_\btau = (m_2)_\btau$ since
$\btau$ satisfies sequential determinism -- so $m_1 = m_2$
by local injectivity of the projection $\Pi_\btau$; contradiction.
Finally, for \emph{(3)}, then the reasoning is symmetric.
\end{proof}

To conclude compositionality, we link
with composition of the alternating projections:

\begin{lem}\label{lem:alt_proj_comp}
Consider $\bsigma : A \vdash B$ and $\btau : B \vdash C$ sequential
causal (pre)strategies.

Then, $\AltStrat(\btau \odot \bsigma) = \AltStrat(\btau) \odot
\AltStrat(\bsigma)$.
\end{lem}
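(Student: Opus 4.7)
The plan is to prove the two inclusions separately, with Lemma~\ref{lem:state-diagram} doing the heavy lifting for $\subseteq$, and the argument for $\supseteq$ mirroring the composition part of the proof of Proposition~\ref{prop:nalt_fonc} restricted to alternating witnesses.

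For $\AltStrat(\btau \odot \bsigma) \subseteq \AltStrat(\btau) \odot \AltStrat(\bsigma)$, I would take $s \in \AltStrat(\btau \odot \bsigma)$ witnessed by $t \in \Alt(\btau \odot \bsigma)$ with $s = \pr_{\Lambda(\btau\odot\bsigma)}(t)$. By definition of $\btau\odot\bsigma$ as the hiding of $\btau\inter\bsigma$, $t$ lifts to some $u \in \NAlt(\btau \inter \bsigma)$ with $u_\odot = t$. Since $u_\odot \in \Alt(\btau\odot\bsigma)$, Lemma~\ref{lem:state-diagram} applies, yielding $u_\bsigma \in \Alt(\bsigma)$ and $u_\btau \in \Alt(\btau)$. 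Displaying $u$ via $\pr_{\Lambda(\bsigma)}$ and $\pr_{\Lambda(\btau)}$ yields an interaction witness on $(A\lin B) \lin C$ whose restrictions lie in $\AltStrat(\bsigma)$ and $\AltStrat(\btau)$, and whose restriction to $A\lin C$ is $s$, establishing $s \in \AltStrat(\btau)\odot\AltStrat(\bsigma)$.

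For $\AltStrat(\btau) \odot \AltStrat(\bsigma) \subseteq \AltStrat(\btau \odot \bsigma)$, I would take $s \in \AltStrat(\btau)\odot\AltStrat(\bsigma)$ with witness $w$ whose restrictions $w \restrict A, B \in \AltStrat(\bsigma)$ and $w \restrict B, C \in \AltStrat(\btau)$. By Lemma~\ref{lem:uniq_caus_wit}, these lift to unique $t^\bsigma \in \Alt(\bsigma)$ and $t^\btau \in \Alt(\btau)$. They match on $B$ only up to symmetry in general; as in the proof of Proposition~\ref{prop:nalt_fonc}, I would invoke Proposition~\ref{prop:sync_sym} to replace them by symmetric copies $y^\bsigma \in \Alt(\bsigma)$, $y^\btau \in \Alt(\btau)$ matching on the nose in $B$ (here using that symmetries preserve polarity and length, hence alternation). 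Zipping along $w$ yields $v \in \NAlt(\btau \inter \bsigma)$ with $v_\bsigma \in \Alt(\bsigma)$, $v_\btau \in \Alt(\btau)$, and whose hidden part $v_\odot$ displays to $s$; since $s$ is alternating in $A \lin C$ and the display map preserves polarities, $v_\odot \in \Alt(\btau\odot\bsigma)$, giving $s \in \AltStrat(\btau\odot\bsigma)$.

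The main obstacle is the mismatch between causal witnesses and alternating witnesses in the $\supseteq$ direction: the independently-extracted $t^\bsigma$ and $t^\btau$ need not agree on their copy indices in $B$, and the re-synchronisation must preserve alternation, which is exactly what Proposition~\ref{prop:sync_sym} combined with the polarity-preservation of symmetries provides. The $\subseteq$ direction, by contrast, reduces entirely to invoking Lemma~\ref{lem:state-diagram}, which is precisely the reason for having set up that lemma.
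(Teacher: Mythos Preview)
Your $\subseteq$ direction is exactly the paper's argument. For $\supseteq$, however, you take a detour the paper avoids. The paper simply observes that $\AltStrat(\btau)\odot\AltStrat(\bsigma) \subseteq \NAltStrat(\btau)\odot\NAltStrat(\bsigma)$, invokes Proposition~\ref{prop:nalt_fonc} to land in $\NAltStrat(\btau\odot\bsigma)$, and then notes that since display maps preserve polarity, a causal witness $t\in\NAlt(\btau\odot\bsigma)$ displaying to an alternating $s$ is itself alternating, hence $s\in\AltStrat(\btau\odot\bsigma)$.

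Your concern about a symmetry mismatch on $B$ is unfounded: the restrictions $w\restrict A,B$ and $w\restrict B,C$ come from a \emph{single} interaction $w$ on $(A\lin B)\lin C$, so their $B$-parts coincide on the nose. When you lift them to $t^\bsigma\in\Alt(\bsigma)$ and $t^\btau\in\Alt(\btau)$ via Lemma~\ref{lem:uniq_caus_wit}, the displays still agree exactly on $B$, so the resulting configurations are matching in the sense of Definition~\ref{def:caus_comp} with no symmetry involved. (You may be thinking of the proof of Proposition~\ref{prop:cg_to_gm}, where the pointer-play collapse genuinely introduces a symmetry mismatch; Proposition~\ref{prop:nalt_fonc} has no such issue.) Your argument is not wrong---invoking Proposition~\ref{prop:sync_sym} with an identity symmetry is harmless---but it obscures the fact that this direction is really just a one-line reduction to the non-alternating functoriality already established.
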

\begin{proof}
$\supseteq.$ If $s \in \AltStrat(\btau) \odot \AltStrat(\bsigma)$, then
it is in $\NAltStrat(\btau) \odot \NAltStrat(\bsigma)$, thus in
$\NAltStrat(\btau \odot \bsigma)$ by Proposition \ref{prop:nalt_fonc}.
As $s$ is alternating, we also have $s \in \AltStrat(\btau\odot
\bsigma)$.

$\subseteq.$ If $s \in \AltStrat(\btau\odot \bsigma)$, there is $t \in
\Alt(\btau \odot \bsigma)$ s.t. $s = \pr_{\Lambda(\btau \odot
\bsigma)}(t)$ completed to $v \in \NAlt(\btau \inter \bsigma)$.
We display $v$ to $u \in \NAltStrat(\btau) \inter
\NAltStrat(\bsigma)$ s.t. $u \restrict A, C = s$.  But then, by
Lemma \ref{lem:state-diagram}, $v_\bsigma \in \Alt(\bsigma)$ and
$v_\btau \in \Alt(\btau)$, so $u \restrict A, B$ and $u \restrict B, C$
are actually alternating, and $u \in \AltStrat(\btau) \inter
\AltStrat(\bsigma)$. Thus, $s \in \AltStrat(\btau) \odot
\AltStrat(\bsigma)$.
\end{proof}

\begin{prop}
There is a category $\CGseq$ of $-$-arenas and sequential strategies.

Moreover, there is a functor
$\AltStrat(-) : \CGseq \to \NegAltvis$, preserving $\simstrat$.
\end{prop}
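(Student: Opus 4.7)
The plan is to verify sequentiality is preserved by identities and composition, and then deduce functoriality of $\AltStrat(-)$ from the lemmas just established. Associativity and neutrality for $\CGseq$ will be inherited from $\CG$ as soon as sequentiality is stable under the categorical operations.

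First I would check that for every $-$-arena $A$, the copycat strategy $\bcc_A$ is sequential. Since $\bcc_A$ is forestial (it has no non-trivial conflict between events of the same polarity, and its immediate causal links either import $\imc_A$ or forward a negative move to its positive twin), any $t \in \NAlt(\bcc_A)$ already determines its unique causal witness in the event structure, so \emph{sequential determinism} reduces to determinism of the standard alternating copycat. For \emph{reachable sequentiality}, after an alternating prefix ending on an Opponent move, the only fresh Player moves available are twins of the last negative move, producing an alternating extension. Finally, $\AltStrat(\bcc_A) = \cc_A$ is the standard alternating copycat strategy of Section~\ref{sec:gs}, which is P-visible, giving \emph{sequential visibility}.

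Next I would handle stability of sequentiality under composition. Given sequential $\bsigma : A \vdash B$ and $\btau : B \vdash C$, Lemma \ref{lem:comp_reach_seq} gives \emph{reachable sequentiality} of $\btau \odot \bsigma$ and Lemma \ref{lem:comp_seq_det} gives \emph{sequential determinism}. For \emph{sequential visibility}, the key observation is that Lemma \ref{lem:alt_proj_comp} gives $\AltStrat(\btau \odot \bsigma) = \AltStrat(\btau) \odot \AltStrat(\bsigma)$, and by the already-established fact that $\AltStrat(\bsigma)$ and $\AltStrat(\btau)$ are P-visible alternating strategies, their composition lives in the lluf sub-cartesian closed category $\NegAltvis$ of Section \ref{subsec:vis_inn}, hence is P-visible. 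Since every alternating play of $\NAltStrat(\btau \odot \bsigma)$ is, by Lemma \ref{lem:state-diagram}, already an alternating play of $\AltStrat(\btau \odot \bsigma) = \AltStrat(\btau) \odot \AltStrat(\bsigma)$, the desired P-visibility follows. This closes $\CGseq$ as a category, since the categorical laws hold up to $\simstrat$ in $\CG$ and sequentiality does not affect them.

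The functoriality of $\AltStrat(-)$ then follows directly: we have shown $\AltStrat(\bsigma)$ is a P-visible alternating strategy whenever $\bsigma$ is sequential; $\AltStrat(\bcc_A) = \cc_A$ is identity preservation; Lemma \ref{lem:alt_proj_comp} is preservation of composition; and Lemma \ref{lem:alt_proj_pres_sym} is preservation of $\simstrat$. The main subtlety, which I would want to double-check, is the interplay between the copy-index convention in $\CG$ and the symmetry management of $\NegAltvis$: concretely, that the unique witness granted by Lemma \ref{lem:uniq_caus_wit} behaves coherently under the action of a positive iso of causal strategies, so that $\AltStrat$ factors through $\simstrat$-equivalence classes on both sides. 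This is exactly what Lemma \ref{lem:alt_proj_pres_sym} provides, with no further work needed.
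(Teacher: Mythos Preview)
Your proposal is correct and follows essentially the same approach as the paper: copycat is sequential by direct inspection; composition preserves sequentiality via Lemmas~\ref{lem:comp_reach_seq} and~\ref{lem:comp_seq_det} for the first two conditions and via Lemma~\ref{lem:alt_proj_comp} together with stability of P-visibility in $\NegAltvis$ for the third; and functoriality of $\AltStrat(-)$ is assembled from Lemmas~\ref{lem:alt_proj_comp} and~\ref{lem:alt_proj_pres_sym}. One small remark: your appeal to Lemma~\ref{lem:state-diagram} to show that every alternating $s \in \NAltStrat(\btau\odot\bsigma)$ already lies in $\AltStrat(\btau\odot\bsigma)$ is unnecessary---since display maps preserve polarity, any causal witness $t\in\NAlt(\btau\odot\bsigma)$ with $\pr(t)=s$ alternating is itself alternating, so $t\in\Alt(\btau\odot\bsigma)$ immediately.
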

\begin{proof}
\emph{Category.} It is straightforward that copycat is sequential.
Consider $\bsigma : A \vdash B$ and $\btau : B \vdash C$ sequential
causal strategies. By Lemma \ref{lem:comp_reach_seq}, $\btau \odot
\bsigma$ satisfies \emph{reachable sequentiality}. By Lemma
\ref{lem:comp_seq_det}, it satisfies \emph{sequential determinism}. By
Lemma \ref{lem:alt_proj_comp} and preservation of P-visible alternating
strategies under composition, it satisfies \emph{sequential visibility}.

\emph{Functorial projection.} Preservation of copycat is a direct
verification. Preservation of composition is Lemma
\ref{lem:alt_proj_comp}. Preservation of symmetry is Lemma
\ref{lem:alt_proj_pres_sym}.
\end{proof}

\subsubsection{Seely category}\label{subsubsec:seq_seely}
We now show that sequential causal strategies form a Seely category.

We use the
following notations, for $\bsigma : A \vdash B$ and $\btau : C \vdash
D$: if $s \in \NAlt(\bsigma \tensor \btau)$, then $s_\bsigma \in
\NAlt(\bsigma)$ and $s_\btau \in \NAlt(\btau)$ are the corresponding
restrictions, defined in the obvious way.
Stability of sequentiality under tensor uses a state
analysis as in Lemma \ref{lem:state-diagram}:

\begin{lem}
Consider $\bsigma : A \vdash B$ and $\btau : C \vdash D$ sequential
(pre)strategies.
For any $s \in \Alt(\bsigma \tensor \btau)$ then $s_\bsigma \in
\Alt(\bsigma)$ and $s_\btau \in \Alt(\btau)$. Moreover, we are in one
of:
\[
\begin{array}{ll}
\text{\emph{(1)}} &
\text{$s_\bsigma, s_\btau, s$ are respectively in state $O,O,O$,}\\
\text{\emph{(2)}} &
\text{$s_\bsigma, s_\btau, s$ are respectively in state $O, P, P$,}\\
\text{\emph{(3)}} & 
\text{$s_\bsigma, s_\btau, s$ are respectively in state $P, O, P$.}
\end{array}
\]
\end{lem}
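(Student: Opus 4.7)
The plan is to mirror closely the proof of Lemma \ref{lem:state-diagram}, replacing the analysis of synchronizations in the middle game $B$ with the straightforward observation that in $\bsigma \tensor \btau$ every event belongs purely to $\bsigma$ or to $\btau$ (no synchronization occurs). I would proceed by induction on the length of $s$. For the base case $s = \varepsilon$, all three plays are empty, hence in state $O$ and trivially alternating; case \emph{(1)} holds.

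For the inductive step, consider $sm \in \Alt(\bsigma \tensor \btau)$, and apply the induction hypothesis to $s$. Without loss of generality assume $m$ belongs to the left component, so $s_\bsigma m \in \NAlt(\bsigma)$ while $s_\btau$ is unchanged; the case where $m$ lies in $\btau$ is symmetric. The polarity of $m$ (in the ambient arena) is forced by the fact that $sm \in \Alt(\bsigma \tensor \btau)$, together with the state of $s$ given by the induction hypothesis. I distinguish three cases. \emph{Case (1):} $s_\bsigma, s_\btau, s$ all in state $O$. Then $m$ must be negative to keep $sm$ alternating; so $s_\bsigma m \in \Alt(\bsigma)$ transitions $\bsigma$ to state $P$, and we end up in case \emph{(3)}. \emph{Case (2):} $s_\bsigma, s_\btau, s$ in states $O, P, P$. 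Then $m$ must be positive; but then $s_\bsigma m^+ \in \NAlt(\bsigma)$ while $s_\bsigma \in \Alt(\bsigma)$ is in state $O$, so \emph{reachable sequentiality} of $\bsigma$ would force $s_\bsigma m \in \Alt(\bsigma)$, contradicting the fact that appending a positive move to an $O$-state yields two consecutive positive moves. Hence this subcase is impossible, meaning the next move must instead belong to $\btau$ — and by symmetric reasoning, that move transitions us to case \emph{(1)}. \emph{Case (3):} symmetric to \emph{(2)}; if $m \in \bsigma$ then $m$ is positive, $s_\bsigma m \in \Alt(\bsigma)$ by reachable sequentiality, and we land in case \emph{(1)}.

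The main (and essentially only) subtlety is the polarity bookkeeping in case \emph{(2)}/\emph{(3)}: one must observe that the global alternation of $sm$ constrains $m$ to a polarity incompatible with the local state of one of the components, and that \emph{reachable sequentiality} is exactly what rules this out. Since the tensor product has no genuine interaction between components, this case analysis is strictly simpler than the composition case of Lemma \ref{lem:state-diagram}, and no further structure (courtesy, securedness, receptivity) is needed beyond reachable sequentiality. The ensuing Seely structure on $\CGseq$ — sequentiality of $\bsigma \tensor \btau$, of pairings, and of all structural morphisms — will then follow by routine adaptations of the composition case treated in Section \ref{subsubsec:comp_seq_caus}.
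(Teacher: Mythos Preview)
Your proposal is correct and matches the paper's approach exactly: induction on $s$ using reachable sequentiality of the components, which the paper summarizes in one line. The only minor presentational wrinkle is that your ``without loss of generality $m \in \bsigma$'' framing clashes with case \emph{(2)} where you then conclude $m$ must lie in $\btau$; it would be cleaner to phrase this as a genuine case split on the component of $m$ within each state case, but the underlying argument is sound.
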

\begin{proof}
Straightforward by induction on $s$, using reachable sequentiality of
$\bsigma$ and $\btau$.
\end{proof}

Again, this is the familiar state diagram for alternating plays on a
tensor of strategies, see \emph{e.g.} \cite{harmer2004innocent}.
Finally, there is a state diagram for the functorial
action of the exponential -- as for tensor, if $s \in \NAlt(\oc
\bsigma)$, we write $s_{\grey{i}}$ for its restriction on copy index
$\grey{i}$.

\begin{lem}
Consider $\bsigma : A \vdash B$ be a sequential (pre)strategy.

For any $s \in \Alt(\oc \bsigma)$, then for any $\grey{i} \in
\mathbb{N}$, $s_{\grey{i}} \in \Alt(\bsigma)$; and we are in one of:
\[
\begin{array}{ll}
\text{\emph{(1)}} & 
\text{$s$ has state $O$, and for all $\grey{i} \in \mathbb{N}$,
$s_{\grey{i}}$ has state $O$,}\\
\text{\emph{(2)}} &
\text{$s$ has state $P$, and there exists a unique $\grey{i} \in
\mathbb{N}$ such that $s_{\grey{i}}$ has state $P$.}
\end{array}
\]
\end{lem}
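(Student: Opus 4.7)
The plan is to induct on the length of $s$, closely mirroring the state-diagram analyses of Lemma \ref{lem:state-diagram} and the analogous lemma for tensor proved just above. The key point specific to the exponential is that, although Opponent can freely interleave moves across copies of $\bsigma$, Player must always act in the unique copy currently awaiting a response; this will be forced by reachable sequentiality of $\bsigma$. The base case $s = \varepsilon$ is immediate and falls in case (1). For the inductive step, take $s = s'm \in \Alt(\oc \bsigma)$ with $s' \in \Alt(\oc \bsigma)$, and let $\grey{j}$ be the copy index at which $m$ lives. A preliminary routine check, using that configurations of $\oc \bsigma$ split as disjoint unions across copies and that minimal events of the $j$-th copy are negative (since they come from minimal events of $\bsigma$), shows that each $s_{\grey{i}} \in \NAlt(\bsigma)$.

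In case (1) for $s'$ (all copies and $s'$ itself in state $O$), alternation in $\oc \bsigma$ plus negativity of minimal events forces $m$ to be negative: either $s'$ is empty and $m$ is minimal hence negative, or $s'$ ends with a positive move and alternation forces $m$ negative. Then $s_{\grey{j}} = s'_{\grey{j}} m$ transitions from state $O$ to state $P$ while remaining alternating in $\bsigma$, all other copies are unchanged, and $s$ flips to state $P$; this yields case (2).

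In case (2) for $s'$ (with a unique $\grey{k}$ carrying state $P$), alternation forces $m$ to be positive; the crux is to show $j = k$. I argue by contradiction: if $j \neq k$ then $s'_{\grey{j}}$ is in state $O$. If $s'_{\grey{j}} = \varepsilon$, then $m$ is minimal in the $j$-th copy of $\oc \bsigma$, hence negative, contradicting positivity of $m$. If $s'_{\grey{j}}$ is nonempty, its last move is a Player move, so $s'_{\grey{j}} m \in \NAlt(\bsigma)$ exhibits two consecutive positive moves. Applying reachable sequentiality of $\bsigma$ to $s'_{\grey{j}} m^+$, with $s'_{\grey{j}} \in \Alt(\bsigma)$ by induction hypothesis, would force $s'_{\grey{j}} m \in \Alt(\bsigma)$, contradicting non-alternation. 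Hence $j = k$, and $s_{\grey{k}}$ transitions from state $P$ to state $O$ alternatingly, while all other copies remain in state $O$; this yields case (1).

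The only nontrivial step is this contradiction in case (2) preventing Player from switching to an inactive copy, which is exactly where reachable sequentiality of $\bsigma$ is invoked; the remainder is routine bookkeeping of states and polarities.
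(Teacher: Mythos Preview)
Your proof is correct and is exactly the detailed version of the paper's one-line argument (``straightforward by induction on $s$, using reachable sequentiality''): the state bookkeeping is routine, and the one substantive step---ruling out Player switching to an inactive copy in case (2)---is precisely where reachable sequentiality is invoked, just as you do. One small remark: your appeal to ``minimal events are negative'' uses the \emph{negative} axiom of causal \emph{strategies}, which is not part of the definition of a prestrategy; this is harmless since the lemma is only applied to strategies in building the Seely category $\CGseq$, and indeed without that axiom the statement would fail for general prestrategies.
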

\begin{proof}
Straightforward by induction on $s$, using reachable sequentiality of
$\bsigma$.
\end{proof}

As for composition, the preservation of \emph{reachable sequentiality}
and \emph{sequential determinism} under composition are immediate
applications. We omit the easy verifications that the alternating
projection is compatible with tensor and bang; from which -- as for
composition -- it follows that sequential visibility is preserved. All
structural strategies involved in the Seely category structure, being
variants of copycat, are easily proved sequential. Overall, we have: 

\begin{prop}\label{prop:func_seq}
There is a Seely category $\CGseq$ of $-$-arenas and sequential 
strategies.
Moreover,
$\AltStrat(-) : \CGseq \to \NegAltvis$ preserves $\simstrat$ and the
Seely structure.
\end{prop}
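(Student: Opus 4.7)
The plan is to build the Seely category structure on $\CGseq$ by checking that every ingredient of the Seely structure on $\CGwb$ restricts to sequential strategies, then verify that $\AltStrat(-)$ transports these ingredients to the corresponding structure on $\NegAltvis$. We already have from Section \ref{subsubsec:comp_seq_caus} that $\CGseq$ is a category, and the two state-diagram lemmas immediately preceding this proposition set up the key reasoning principles for the tensor and the exponential.

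First I would dispatch the structural morphisms: the unitor, associator, symmetry, projections, pairings, evaluation, $\bder$, $\bdig$, and Seely isomorphisms $\bmon^{2}$, $\bmon^{0}$ are all defined in $\CG$ as relabelings of copycat, which is directly checked to be sequential (it has no Player branching, and all its alternating plays are copycat plays and hence P-visible). Then for the binary operations, I would use the two state-diagram lemmas to conclude that \emph{reachable sequentiality} and \emph{sequential determinism} are preserved by $\tensor$ and $\oc$: in each case, the only Player move available after an alternating play of $\bsigma \tensor \btau$ (resp.\ $\oc\bsigma$) must come from exactly one of the components, where reachable sequentiality applies. \emph{Sequential visibility} then follows from the fact that P-visibility is preserved by the corresponding operations in $\NegAltvis$, once one knows $\AltStrat(-)$ commutes with tensor and bang. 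Finally, currying $\Lambda$ only changes the display map and does not touch the event structure or its alternating plays, so it preserves sequentiality trivially.

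For the second part, I would check that $\AltStrat(-)$ commutes with each of these constructions up to $\simstrat$. For composition this is Lemma \ref{lem:alt_proj_comp}; for tensor, pairing and $\oc$, the corresponding identities $\AltStrat(\bsigma \tensor \btau) = \AltStrat(\bsigma) \tensor \AltStrat(\btau)$, $\AltStrat(\tuple{\bsigma,\btau}) = \tuple{\AltStrat(\bsigma),\AltStrat(\btau)}$ and $\AltStrat(\oc \bsigma) = \oc \AltStrat(\bsigma)$ reduce, via the state-diagram lemmas, to the claim that an alternating play on the composite can be reconstructed uniquely from its alternating restrictions -- precisely what the state diagrams give. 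Preservation of structural morphisms follows from the unit case: the image of copycat $\bcc_A$ under $\AltStrat(-)$ is, by direct inspection, the alternating copycat $\cc_A$ on the same arena, because the alternating plays of $\bcc_A$ are exactly the copycat plays (see the definition in Appendix \ref{app:alt_id} and Lemma \ref{lem:cc_pcov}). Preservation of $\simstrat$ has already been shown as Lemma \ref{lem:alt_proj_pres_sym}, and from there the naturality and coherence diagrams for the Seely structure hold up to $\simstrat$ in $\NegAltvis$ because they do so in $\CGseq$ and $\AltStrat(-)$ sends them pointwise.

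The one place I expect mild friction is the coherence between the monoidal closure of $\CGseq$ and that of $\NegAltvis$, since the former is built on $A \vdash B$ while the latter uses $A \lin B$; but pointedness of sequential strategies (inherited from $\CG$, where we assume all strategies pointed) makes the bijection $\Lambda_{A,B,C}$ purely a relabeling, so it leaves alternating plays unchanged and $\AltStrat(-)$ commutes with it on the nose. Once all these pieces fit together, the Seely axioms in $\CGseq$ follow from those in $\CGwb$ restricted to sequential strategies, and $\AltStrat(-)$ becomes a Seely functor to $\NegAltvis$.
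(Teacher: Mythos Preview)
Your proposal is correct and follows essentially the same route as the paper: use the state-diagram lemmas for $\tensor$ and $\oc$ to lift \emph{reachable sequentiality} and \emph{sequential determinism}, derive \emph{sequential visibility} from compatibility of $\AltStrat(-)$ with tensor and bang together with preservation of P-visibility in $\NegAltvis$, and observe that all structural morphisms are copycat variants and hence sequential. The paper's own argument is stated more tersely but is identical in substance; your added remarks on $\Lambda$ and the $A\vdash B$ versus $A\lin B$ mismatch are reasonable elaborations of points the paper leaves implicit.
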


\subsection{Full Abstraction for $\IA$} Next, we fine-tune $\CGseq$ to
get full abstraction for $\IA$.

\subsubsection{Interpretation of $\IA$} It only remains to prove that
the interpretation of the primitives of $\IA$, \emph{i.e.} all
primitives of $\IPA$ except for the parallel let, are sequential. We
have:

\begin{lem}
The strategies $\mathbf{seq}, \mathbf{succ}, \mathbf{if},
\mathbf{iszero}, \mathbf{let}, \mathbf{assign}, \mathbf{deref},
\mathbf{grab}, \mathbf{release}$ and the prestrategies
$\bcell_n$ and $\bsem_n$ are sequential.
Moreover, for each of those (pre)strategies $\bsigma$,
$\AltStrat(\bsigma)$ is the corresponding alternating strategy from
Sections~\ref{subsubsec:intr_pcf} and \ref{subsubsec:newrs}.
\end{lem}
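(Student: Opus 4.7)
The plan is to proceed by direct inspection of each strategy, leveraging the fact that all listed (pre)strategies are, by design, essentially alternating objects: their causal structure is either a forest with polarities strictly alternating along each branch, or (for $\bcell_n$ and $\bsem_n$) literally built from alternating plays. The three conditions of Definition \ref{def:seq_nalt_strat} are then all straightforward to check, and the matching with the alternating strategies of Sections \ref{subsubsec:intr_pcf} and \ref{subsubsec:newrs} is read off the same figures.

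First I would dispatch the basic $\PCF$ combinators and the interference access combinators $\mathbf{assign}, \mathbf{deref}, \mathbf{grab}, \mathbf{release}$. For each, the defining figure (Figures \ref{fig:intr_pcf}, \ref{fig:strat_pred}, \ref{fig:let}, \ref{fig:assign}, \ref{fig:deref}, \ref{fig:grab}, \ref{fig:release}) exhibits a finite forestial ess $\bsigma$ whose immediate causal edges strictly alternate in polarity and whose positive events have unique immediate negative predecessors (with a shared initial Opponent move). For such $\bsigma$, every $t \in \NAlt(\bsigma)$ is necessarily alternating, because any positive event whose negative cause has not just been played would contradict courtesy and forestiality; this yields \emph{reachable sequentiality}. \emph{Sequential determinism} is immediate from inspection: in each of these diagrams no configuration ever enables two positive events, so already $\NAlt(\bsigma)$ is deterministic. \emph{Sequential visibility} is also routine: the causal predecessor of a positive event in the strategy coincides with its $\imc_A$-justifier in the arena, so each P-ending alternating play is its own P-view.

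For $\mathbf{let}$ some care is needed because the strategy (Figure \ref{fig:let}) is infinite and features an $\omega$-branching Opponent choice over values $v$; but within each branch the structure is again an alternating chain, so the same argument applies branch by branch. For $\bcell_n$ and $\bsem_n$ I would appeal to their construction in Section \ref{subsubsec:intr_inter}: their events are (non-empty) \emph{alternating} plays of the original sequential strategies $\cell_n, \ssem_n$, ordered by prefix, with conflict on incompatible extensions. Hence $\NAlt(\bcell_n)$ and $\NAlt(\bsem_n)$ are in bijection with the alternating plays of $\cell_n, \ssem_n$, from which reachable sequentiality and sequential determinism are transparent, and sequential visibility holds vacuously as the arenas $\oc \var$ and $\oc \sem$ have no non-trivial P-views beyond those already exhibited.

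For the second statement, once sequentiality has been verified, $\AltStrat(\bsigma)$ is by definition $\{\pr_{\Lambda(\bsigma)}(t) \mid t \in \Alt(\bsigma)\}$; since in all cases above every $t \in \NAlt(\bsigma)$ is alternating, this is just the image under $\pr_{\Lambda(\bsigma)}$ of the entire play set, which is exactly the alternating strategy drawn in the corresponding figure from Section \ref{subsubsec:intr_pcf} or \ref{subsubsec:newrs}. The only point requiring more than one-line bookkeeping is $\mathbf{let}$, where one must check that the prefix-closure under $\Alt(-)$ recovers the memoization pattern of Figure \ref{fig:let}, and $\bcell_n, \bsem_n$, where the bijection between $\NAlt$-plays and alternating plays of $\cell_n, \ssem_n$ is already built in. I expect no real obstacle: the whole lemma is a book-keeping check, with the only mildly subtle point being to articulate cleanly why forestial strategies with alternating polarities admit no non-alternating play in $\NAlt$.
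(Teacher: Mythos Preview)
Your proposal is correct and matches the paper's approach, which is simply ``Routine verification'': you spell out in detail the case inspection that the paper leaves implicit. One small wobble: your justification for sequential visibility (``the causal predecessor of a positive event in the strategy coincides with its $\imc_A$-justifier in the arena'') is not literally true---e.g.\ in $\mathbf{seq}$, $\qu^+_{\grey{1}}$ has causal predecessor $\done^-_{\grey{0}}$ but arena-justifier $\qu^-$---though the conclusion still holds because in these chain-shaped strategies every Opponent move points to the immediately preceding Player move, so the P-view of any alternating play is the play itself.
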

\begin{proof}
Routine verification.
\end{proof}

It follows that we have an adequate interpretation of $\IA$ as
sequential strategies, and:

\begin{prop}\label{prop:alt_pres_intr}
For any $\Gamma \vdash M : A$ in $\IA$, we have
$\intr{M}_{\NegAltvis_\oc} = \Alt(\intr{M}_{\CGseq_\oc})$.
\end{prop}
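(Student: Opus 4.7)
The plan is to proceed by structural induction on the typing derivation of $\Gamma \vdash M : A$, exploiting the fact that $\AltStrat(-)$ is a morphism of Seely categories (Proposition \ref{prop:func_seq}) and that we have already identified the image under $\AltStrat$ of each primitive strategy in the preceding lemma. Note there is a minor notational slip in the statement: what is meant is $\AltStrat(\intr{M}_{\CGseq_\oc})$, the functorial unfolding established in Section \ref{subsubsec:seq_seely}.

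First I would dispatch the cases following the $\lambda$-calculus skeleton (variables, $\lambda$-abstraction, application, pairing and projections): since both $\intr{-}_{\NegAltvis_\oc}$ and $\intr{-}_{\CGseq_\oc}$ interpret these uniformly using the cartesian closed structure of their respective Kleisli categories, and since $\AltStrat : \CGseq \to \NegAltvis$ is a morphism of Seely categories preserving $\simstrat$, the induction hypothesis combines with preservation of tensor, pairing, currying, evaluation, dereliction, digging and the Seely isomorphisms to yield the equality directly. For the base cases corresponding to constants $\tskip, \ttrue, \tfalse, n$, the causal and alternating interpretations coincide trivially as both consist of the unique minimal-to-value alternating play.

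Next I would handle the $\IA$-specific primitives. For each basic combinator interpreting $\mathbf{seq}, \mathbf{if}, \mathbf{succ}, \mathbf{pred}, \mathbf{iszero}, \mathbf{let}, \mathbf{assign}, \mathbf{deref}, \mathbf{grab}, \mathbf{release}$, the preceding lemma precisely states that applying $\AltStrat$ to the causal primitive recovers the alternating primitive from Sections \ref{subsubsec:intr_pcf} and \ref{subsubsec:newrs}. Since the interpretations of $M;\,N$, $\ite{M}{N_1}{N_2}$, \emph{etc.} are defined by Kleisli composition with these combinators against tuples of subterm interpretations, functoriality of $\AltStrat$ finishes these cases. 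The cases of $\mkvar, \mksem$ follow similarly, using preservation of pairings. For $\newref\,x\!\!:=\!n\,\tin\,M$ and $\newsem\,x\!\!:=\!n\,\tin\,M$, the interpretation involves composing $\Lambda^\oc(\intr{M})$ with $\bcell_n$ or $\bsem_n$; the lemma again supplies that $\AltStrat(\bcell_n) = \cell_n$ and $\AltStrat(\bsem_n) = \ssem_n$, and functoriality completes the argument.

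The main obstacle, and the only case requiring genuine extra work, is recursion $\Y\,M$. The causal interpretation is defined (Section \ref{subsubsec:core_pcf}) as the least fixed point with respect to $\cleq$ of a continuous operator, whereas the alternating interpretation (Section \ref{subsubsec:recursion}) is a least fixed point under $\subseteq$. To bridge these I would verify that $\AltStrat(-)$ is continuous with respect to $\cleq$: since $\cleq$-increasing chains of sequential causal strategies keep their events, causal, conflict and symmetry relations coherent, their alternating projections form a $\subseteq$-increasing chain in $\NegAltvis$, and unfolding the definition of $\AltStrat$ via $\Alt(-)$ gives $\AltStrat(\bigvee_n \bsigma_n) = \bigcup_n \AltStrat(\bsigma_n)$. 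Combined with the fact that $\AltStrat$ preserves the continuous endo-operator $F$ arising in the construction of $\Y_A$ (which follows from preservation of the Seely structure already used above), this ensures $\AltStrat(\Y_A) \simstrat \Y_{\AltStrat(A)}$, and the induction concludes. A routine check that $\AltStrat$ sends $\bot_A$ to the minimal alternating strategy $\{\varepsilon\}$-at-ground-type closes the base of the chain.
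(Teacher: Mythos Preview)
Your proposal is correct and takes essentially the same approach as the paper: the paper's proof is the single line ``Straightforward by induction on the derivation $\Gamma \vdash M : A$'', and your write-up is precisely a careful elaboration of that induction, leaning on Proposition~\ref{prop:func_seq} for the structural cases and on the immediately preceding lemma for the primitives. Your observation that the statement should read $\AltStrat(\intr{M}_{\CGseq_\oc})$ rather than $\Alt(\intr{M}_{\CGseq_\oc})$ is also correct.
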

\begin{proof}
Straightforward by induction on the derivation $\Gamma \vdash M : A$.
\end{proof}

The two adequate models of $\IA$, $\NegAltvis_\oc$ and
$\CGseq_\oc$, differ in crucial ways: $\CGseq_\oc$ is
much more expressive, and records intensional \emph{causal}
information. Secondly, in $\CGseq_\oc$ one can also
read the behaviour of the program under contexts outside
$\IA$.

However, $\CGseq_\oc$ is not fully abstract for $\IA$ -- we need
to deal with well-bracketing.

\subsubsection{Well-bracketing} \label{sec:wb_ia}
Ideally, we would hope for an interpretation-preserving functor
\[
\Alt(-) : \CGwbseq_\oc \to \NegAltviswb_\oc\,,
\]
\begin{figure}
\[
\xymatrix@R=-7pt@C=5pt{
(\tunit&\to&\tunit)&\to&(\tunit& \to& \tunit) &\to& \tunit\\
&&&&&&&&\qu^-\\
&&\qu^+ \ar@{.}@/^/[urrrrrr]\\
\qu^-   \ar@{.}@/^/[urr]\\
&&&&&&\qu^+
        \ar@{.}@/^/[uuurr]\\
&&&&\qu^-
        \ar@{.}@/^/[urr]\\
\done^+ \ar@{.}@/^/[uuu]
}
\]
\caption{Definition \ref{def:nalt_play_wb} is weaker than Definition
\ref{def:alt_play_wb}}
\label{fig:count_ex_alt_wb}
\end{figure}
but Definition \ref{def:nalt_play_wb} (for non-alternating
plays) is weaker than Definition \ref{def:alt_play_wb} (for
alternating plays) when applied on alternating plays, as illustrated in
Figure \ref{fig:count_ex_alt_wb}.
While Definition \ref{def:alt_play_wb} closely follows
the operational idea that calls and returns are handled by
a single stack, Definition \ref{def:nalt_play_wb}
restricts the hierarchical relationship between calls and returns.

Fortunately,
from a distinguishing test in $\CGwbseq$ one can extract a
characteristic \emph{complete} (see Section~\ref{subsubsec:fa_ia})
alternating play, which -- as we shall see -- is well-bracketed
as in Definition \ref{def:alt_play_wb}.  This is due to the following
lemma, a well-known observation:

\begin{lem}\label{lem:seg_pointer}
Let $s \in \Alt(A)$ be P- and O-visible. Assume that $s$ has the form
\[
s = \xymatrix@C=5pt{\dots & s_i &\dots & s_j\ar@{.}@/_/[ll] & \dots}
\]
where no further move points to $s_j$. Then, no move after $s_j$ can
point within
$s_i \dots s_j$.
\end{lem}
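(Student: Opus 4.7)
The plan is to argue by contradiction: assume some $s_k$ with $k > j$ points to an $s_\ell$ with $i \le \ell \le j$, and choose the smallest such $k$. The argument splits on the polarity of $s_k$; I would describe the case $s_k$ positive in detail, the case $s_k$ negative being strictly dual via O-visibility and the O-view.

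Assuming $s_k$ positive, P-visibility yields $s_\ell \in \pview{s_1 \dots s_{k-1}}$ (since $\pview{s_1 \dots s_k} = \pview{s_1 \dots s_{k-1}}\cdot s_k$ and $\ell < k$). The P-view is obtained by a backward traversal from $s_{k-1}$ which, at each step, either jumps from a negative move to its arena-justifier, or moves from a positive move to its immediate predecessor in the play. Since this traversal visits strictly decreasing positions but must reach $s_\ell$, there is a first step at which it crosses into the segment, going from some $s_{p_r}$ with $p_r > j$ to some $s_{p_{r+1}}$ with $p_{r+1} \le j$. I then analyze how this transition can occur. If $s_{p_r}$ is negative (jump step) with $p_{r+1} \ge i$, then $s_{p_r}$ is itself a move strictly between $s_j$ and $s_k$ whose pointer lands inside $[i,j]$, contradicting the minimality of $k$; and if $p_{r+1} < i$, the traversal is henceforth stuck at positions below $i$ and cannot subsequently reach $s_\ell$. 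Otherwise $s_{p_r}$ is positive (prefix step), forcing $p_r = j+1$, whence $s_j$ is negative by alternation in the play; the traversal then jumps from $s_j$ to its arena-justifier $s_i$, which must be positive by arena alternation, and thereafter descends to positions strictly less than $i$. Consequently $s_\ell$ can only be visited at $\ell = j$ or $\ell = i$.

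The final step, and the main obstacle, is to rule out both endpoints by careful polarity bookkeeping. If $\ell = j$, then $s_k$ points to $s_j$, directly contradicting the hypothesis that no later move points to $s_j$. If $\ell = i$, then since $s_k$ is positive and $s_i \imc_A s_k$, arena alternation forces $s_i$ to be negative, contradicting the conclusion just derived that $s_i$ is positive. The dual case $s_k$ negative proceeds identically with O-visibility substituted for P-visibility, the prefix/jump rules of the O-view swapping the roles of positive and negative, and the polarity arguments transposed; no new ideas are needed.
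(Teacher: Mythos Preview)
Your argument is correct and captures the same mechanism as the paper's (terse) proof, namely that P/O-visibility forces the relevant view after $s_j$ to skip over the segment, visiting at most $s_j$ and $s_i$ before dropping below $s_i$, with those two endpoints then excluded by the no-pointer hypothesis and a polarity clash respectively. One small point to tidy: your crossing analysis assumes the traversal starts strictly after $s_j$, i.e.\ $k-1 > j$; when $k = j+1$ the traversal begins at $s_j$ itself and there is no crossing step, though the conclusion $\ell \in \{i,j\}$ and the ensuing contradictions follow immediately. The paper organizes this slightly differently, first isolating exactly this base case (showing $s_{j+1}$ must point strictly before $s_i$) and then arguing inductively that no later view can revisit $s_{i+1}\dots s_j$, but the content is the same.
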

\begin{proof}
By P- or O-visibility, $s_{j+1}$ points strictly before $s_i$. 
Then no view can ever see $s_{i+1} \dots s_j$ -- so no move can point
there. Besides, $s_i$ can only be seen by the
player responsible for it, so no move can point to $s_i$. A
proof appears in \cite[Lemma 5]{DBLP:journals/apal/ClairambaultH10}.
\end{proof}

From this, we may easily deduce the following:

\begin{lem}\label{lem:complete_wb}
Consider $s \in \Alt(A)$ P- and O-visible where every question is
answered.

Then, it is well-bracketed in the sense of Definition
\ref{def:alt_play_wb}, and thus it is complete.
\end{lem}
\begin{proof}
Consider $s \in \Alt(A)$ complete but with a well-bracketing failure,
\emph{i.e.} as in:
\[
s = \xymatrix@C=5pt{\dots & \qu_1^\Qu& \dots &\qu_2^\Qu
&
\dots&
a^\An   \ar@{.}@/_1pc/[llll]& \dots}
\]
with $\qu_2$ unanswered when playing $a$. By \emph{answer-closing}, no
further move can point to $a$. Thus by Lemma \ref{lem:seg_pointer}, no
further move can point to $\qu_2$, contradicting that $\qu_2$ is
answered.
\end{proof}

The play of Figure \ref{fig:count_ex_alt_wb} is well-bracketed as in
Definition \ref{def:nalt_play_wb}, but it cannot be extended to a
complete play: the two questions covered will not be adressed ever
again.

\subsubsection{Full abstraction} From that, we may finally conclude:

\begin{thm}\label{th:naltwbseq_fa_ia}
The model $\CGwbseq$ is intensionally fully abstract for $\IA$.
\end{thm}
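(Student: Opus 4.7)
The plan is to reduce to Theorem~\ref{th:fa_ia} (intensional full abstraction of $\NegAltviswb_\oc$ for $\IA$) via the projection functor $\AltStrat(-)$. Adequacy for $\IA$ in $\CGwbseq_\oc$ will be inherited from Theorem~\ref{th:adequacy_ipa} since $\IA \subseteq \IPA$ and, by Proposition~\ref{prop:alt_pres_intr}, the interpretations agree with those of Section~\ref{sec:seq_effects} after applying $\AltStrat$. Soundness ($\intr{M} \simstrat \intr{N}$ in $\CGwbseq_\oc$ implies $M \obs N$) is then the standard argument: every operation of $\CGwbseq_\oc$ preserves $\simstrat$ by Proposition~\ref{prop:func_seq}, combined with adequacy.

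For completeness, I will argue contrapositively. Suppose $M \obs N$ but $\intr{M}_{\CGwbseq_\oc} \not\obs \intr{N}_{\CGwbseq_\oc}$. Then some distinguishing $\balpha \in \CGwbseq_\oc(\intr{A}, \tunit)$ gives, WLOG, $\balpha \odot_\oc \intr{M}_{\CGwbseq_\oc} \eval$ and $\balpha \odot_\oc \intr{N}_{\CGwbseq_\oc} \div$. Applying $\AltStrat$ and using Lemma~\ref{lem:alt_proj_comp} with Proposition~\ref{prop:alt_pres_intr}, I obtain the matching distinguishing behaviour between $\AltStrat(\balpha)$ and the interpretations $\intr{M}_{\NegAltvis_\oc}$, $\intr{N}_{\NegAltvis_\oc}$ in $\NegAltvis_\oc$. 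The one missing ingredient is that $\AltStrat(\balpha)$ need not be well-bracketed in the alt-wb sense of Definition~\ref{def:alt_play_wb}, so Theorem~\ref{th:fa_ia} does not apply directly.

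The plan to overcome this is to define a sub-strategy $\balpha^\sharp \subseteq \AltStrat(\balpha)$ consisting of those plays $s$ such that, for every positive prefix $ta^+ \sqsubseteq s$, if $t$ is alt-wb then $ta$ is alt-wb. I will check that $\balpha^\sharp$ is a P-visible alt-wb strategy in $\NegAltviswb_\oc$: prefix-closure, determinism and P-visibility are inherited, receptivity follows since negative extensions introduce no new positive prefixes, and alt-wb holds by construction. Divergence at $N$ is automatic because $\balpha^\sharp \subseteq \AltStrat(\balpha)$. For convergence at $M$, I will take the witness $\qu^-\done^+$ in $\AltStrat(\balpha) \odot_\oc \intr{M}_{\NegAltvis_\oc}$ and lift it to an interaction trace $v$ on $(\intr{A}\lin\tunit)\lin\tunit$; the visible restriction being complete, combined with $\intr{M}_{\NegAltviswb_\oc}$ being alt-wb and Lemma~\ref{lem:seg_pointer}, forces $v\restrict \oc\intr{A},\tunit$ to be itself complete, hence alt-wb by Lemma~\ref{lem:complete_wb}. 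This play therefore lies in $\balpha^\sharp$ and witnesses $\balpha^\sharp \odot_\oc \intr{M}_{\NegAltviswb_\oc} \eval$. Applying Theorem~\ref{th:fa_ia} then yields the desired contradiction with $M\obs N$.

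The hard step will be precisely this lifting: while causal well-bracketing of $\balpha$ gives only the weaker non-alternating well-bracketing of $\AltStrat(\balpha)$ (cf.\ Figure~\ref{fig:count_ex_alt_wb}), one must exploit both the completeness of the converging play \emph{and} the alt-wb discipline of the $\IA$-interpretation to certify that the distinguishing interaction remains within $\balpha^\sharp$. Everything else is an application of earlier results: the functoriality of $\AltStrat$, Proposition~\ref{prop:alt_pres_intr}, Lemmas~\ref{lem:seg_pointer} and~\ref{lem:complete_wb}, and Theorem~\ref{th:fa_ia}.
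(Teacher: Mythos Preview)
Your approach is correct and matches the paper's: project via $\AltStrat$, then repair the well-bracketing mismatch by showing that the converging play $s$ is complete (P-visible, O-visible, all questions answered) and applying Lemma~\ref{lem:complete_wb}. The paper restricts $\alpha' = \AltStrat(\balpha)$ to (the symmetry closure of) prefixes of the single play $s$ rather than forming your maximal alt-wb substrategy $\balpha^\sharp$, but this is cosmetic; one point to sharpen is that completeness of $s$ follows from \emph{non-alternating} well-bracketing of both $\alpha'$ and $\intr{M}^\dagger$ (not just alt-wb of the latter) together with O-visibility of $s$ from P-visibility of $\intr{M}^\dagger$---Lemma~\ref{lem:seg_pointer} enters only inside the proof of Lemma~\ref{lem:complete_wb}, not as a direct step here.
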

\begin{proof}
Let $\vdash M, N : A$ be terms in $\IA$, and assume that $\intr{M} \not
\obs \intr{N}$, \emph{i.e.} there is a test $\balpha \in
\CGwbseq_\oc(\intr{A}, \intr{\tunit})$ such that $\balpha \odot_\oc
\intr{M} \neq \balpha \odot_\oc \intr{N}$ -- assume \emph{w.l.o.g.} that
$\balpha \odot_\oc \intr{M}$ converges while $\balpha \odot_\oc \intr{M}$
diverges. Writing $\alpha' = \AltStrat(\alpha)$, it follows that
\[
\alpha' \odot_\oc \intr{M}_{\NegAltviswb_\oc}\eval
\qquad
\qquad
\alpha' \odot_\oc \intr{N}_{\NegAltviswb_\oc}\div\,,
\]
but $\alpha'$ may not be well-bracketed as in Definition \ref{def:alt_strat_wb}.
Consider $s \in \alpha'$ involved in $\alpha' \odot_\oc \intr{M} \eval$
-- until the rest of the proof, $\intr{M}$ is the
interpretation in $\NegAltviswb_\oc$. The initial question of $s$ has an
answer, thus as $\intr{M}$ and $\alpha'$ are well-bracketed in the sense
of \ref{def:nalt_strat_wb}, all its questions are answered. It is
P-visible and O-visible since both $\intr{M}$ and $\alpha'$ are
P-visible. Hence, it is complete by Lemma \ref{lem:complete_wb} (in
particular well-bracketed as in Definition \ref{def:alt_strat_wb}).

Consider $\alpha'$ restricted to (plays symmetric to) prefixes of
$s$. Now $\alpha'$ is well-bracketed as in Definition
\ref{def:alt_play_wb}, and it distinguishes $\intr{M}$ and $\intr{N}$,
hence $M \not \obs N$ by Theorem \ref{th:fa_ia}.
\end{proof}

\subsection{Sequential Innocence} \label{subsec:seq_inn_pcf}
A causal strategy $\bsigma : A$ is \textbf{sequential
innocent} if it is both \emph{sequential} and \emph{parallel innocent};
those form a Seely category $\CGwbseqinn$.

We already know that $\CGwbseqinn$ supports an adequate
interpretation of $\PCF$. For (intensional) full abstraction, we shall
prove that $\AltStrat(-)$ sends sequential innocent strategies to
innocent alternating strategies as in Definition \ref{def:innocence},
and rely on Theorem \ref{th:fa_pcf}. 

\subsubsection{Causal analysis of sequential innocence} First, we shall
see that sequential innocent causal strategies are really
representations of the \emph{P-view forests} of Section~\ref{subsec:vis_inn}.

\begin{lem}\label{lem:seqinn_forest}
Consider $\bsigma : A$ a
\emph{sequential}, \emph{parallel innocent} causal strategy.

Then, $\bsigma$ is an O-branching alternating forest.
\end{lem}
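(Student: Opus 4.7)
The plan is to verify the three properties of an O-branching alternating forest in turn. Alternation follows immediately from the causal strategy axioms: if $s_1 \imc_\bsigma s_2$ had the same polarity, then one of $\pol(s_1) = +$ or $\pol(s_2) = -$ would hold, so by \emph{courtesy} $\pr_\bsigma(s_1) \imc_A \pr_\bsigma(s_2)$, contradicting the \emph{alternating} condition on $A$. For O-branching, suppose $s_1 \conflict_\bsigma s_2$ is a minimal conflict with $s_1$ positive; by minimality of the conflict, $([s_1]_\bsigma \setminus \{s_1\}) \cup [s_2]_\bsigma$ is consistent, and since $s_1$ is positive this set contains every negative event of $[s_1]_\bsigma \cup [s_2]_\bsigma$. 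The \emph{causal determinism} clause of parallel innocence then forces $[s_1]_\bsigma \cup [s_2]_\bsigma$ itself to be consistent, contradicting $s_1 \conflict_\bsigma s_2$.

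The forest property is the main obstacle, and I would establish it by strong induction on $|[e]_\bsigma|$, proving that every $e \in \ev{\bsigma}$ has at most one immediate $\imc_\bsigma$-predecessor. The case of negative $e$ is Lemma~\ref{lem:just_pred}. For positive $m^+$, suppose toward a contradiction that $m_1 \neq m_2$ are distinct immediate predecessors, necessarily both negative by alternation; they must be $\leq_\bsigma$-incomparable, since $m_1 <_\bsigma m_2 <_\bsigma m$ (or symmetrically) would violate $m_1 \imc_\bsigma m$. Applying the induction hypothesis to every event of $[m_1]_\bsigma \cup [m_2]_\bsigma$ (all have strictly smaller downset than $m$), each such event has a unique immediate predecessor, so the downsets $[m_1]_\bsigma$ and $[m_2]_\bsigma$ are totally ordered chains. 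This gives canonical gccs $\hat\rho_1, \hat\rho_2 \in \gcc(\bsigma)$ ending at $m_1, m_2$, both starting at $\init(m)$ since $\bsigma$ is pointed. Incomparability of $m_1, m_2$ means neither chain is contained in the other, so the common initial prefix of $\hat\rho_1$ and $\hat\rho_2$ has some length $k$ with $1 \leq k < \min(|\hat\rho_1|, |\hat\rho_2|)$; let $n$ denote its last element and $p_1 := \hat\rho_1[k]$, $p_2 := \hat\rho_2[k]$ the first events at which the two chains diverge.

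Parallel innocence applied to the distinct gccs $\hat\rho_1 \imc m$ and $\hat\rho_2 \imc m$ then forces the least distinct events $p_1, p_2$ to be positive, and by alternation their common predecessor $n$ is negative. Form $t := \hat\rho_1[0] \dots \hat\rho_1[k-1]$: as a down-closed prefix of the chain $[m_1]_\bsigma$ its underlying set is a configuration of $\bsigma$, and the sequence is alternating, so $t \in \Alt(\bsigma)$. Applying the induction hypothesis to each of $p_1, p_2$ (both strictly below $m$), their unique immediate predecessors are $n$, so all of their $\leq_\bsigma$-predecessors lie in $\ev{t}$; thus $t p_1, t p_2 \in \Alt(\bsigma)$ are two positive extensions of $t$ with $p_1 \neq p_2$, contradicting \emph{sequential determinism}. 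This closes the induction; combined with $\bsigma$ being pointed, uniqueness of immediate predecessors yields the forest structure. The main technical subtlety I anticipate is justifying that the initial segment of the canonical chain yields a genuine configuration of $\bsigma$, which is exactly where the induction hypothesis must be deployed, ruling out hidden causal predecessors of events in $[m_i]_\bsigma$ lying outside the chain.
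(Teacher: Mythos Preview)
Your forest argument is essentially the paper's, repackaged as strong induction rather than a minimal-counterexample argument: both locate the first forking point below a putative merge, use pre-innocence to force the divergent events to be positive, observe that the common prefix is a chain and hence an alternating play, and then invoke sequential determinism. One minor quibble: for the negative case you cite Lemma~\ref{lem:just_pred}, which is stated for strategies on $A \vdash B$; the version applicable to $\bsigma : A$ is Lemma~\ref{lem:aux_imp_court}, together with the observation that if $\pr_\bsigma(e^-)$ is minimal in $A$ then $e$ itself is minimal in $\bsigma$ by courtesy.

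The gap is in O-branching. You prove, via causal determinism, that minimal conflicts in $\bsigma$ lie between negative events. That is true, but it is not the property the paper establishes nor the one used downstream (see the proof of Lemma~\ref{lem:pview_dep}): the paper's O-branching is \emph{causal}, namely that a negative event has at most one $\imc_\bsigma$-successor. Your conflict argument says nothing about the situation $m^- \imc_\bsigma m_1^+$, $m^- \imc_\bsigma m_2^+$ with $m_1, m_2$ distinct but \emph{compatible}. Fortunately, once the forest property is in hand this is immediate and is exactly ``the same reasoning as above'' the paper alludes to: $[m^-]_\bsigma$ is a chain, hence an alternating play $t$; by forest each $m_i^+$ has $m^-$ as its sole predecessor, so $[m_i]_\bsigma = \ev{t} \cup \{m_i\}$ and $t\,m_1, t\,m_2 \in \Alt(\bsigma)$; sequential determinism then forces $m_1 = m_2$. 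You should add this step explicitly.
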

\begin{proof}
First, we prove that for all $m \in \ev{\bsigma}$, its set of
dependencies $[m]_\bsigma$ is a total order.

Seeking a contradiction, take
$m' \in \ev{\bsigma}$ minimal with $m' \imc_\bsigma m_1$
and $m' \imc_\bsigma m_2$ distinct, all within $[m]_\bsigma$.
By minimality, $[m']_\bsigma$ is a total order, \emph{i.e.} a gcc.  By
Lemma \ref{lem:app_caus_alt}, $m_1$ and $m_2$ have the same
polarity, opposite of $m'$. Consider $\rho_1 \in \gcc(\bsigma)$
a gcc for $m$ passing through $m' \imc_\bsigma m_1$, and
$\rho_2 \in \gcc(\bsigma)$ a gcc for $m$ passing through $m'
\imc_\bsigma m_2$. Then $\rho_1$ and $\rho_2$ have least distinct events 
$m_1$ and $m_2$; hence by pre-innocence $m_1$ and $m_2$ are positive.

Now, $m'$ is the only immediate dependency of $m_1$ and
$m_2$; indeed if there was $m'' \imc_\bsigma m_i$, then considering
$\rho' \imc m_i \in \gcc(\bsigma)$ passing through $m''$,
$\rho$ and $\rho'$ would fork at some event smaller than $m'$,
contradicting its minimality. Hence, $[m'] \cup \{m_i\} \in
\conf{\bsigma}$ for $i \in \{1, 2\}$.

Also writing $[m']$ for the play in $\Alt(\bsigma)$ with events in the
same order, we have
\[
[m'], ~[m'] m_1,~[m'] m_2 ~ \in ~ \NAlt(\bsigma)\,,
\]
but by Lemma \ref{lem:app_caus_alt}, $[m'] m_1^+$ and $[m']
m_2^+$ are alternating. By \emph{sequential determinism} of $\bsigma$,
it follows that $m_1 = m_2$, contradiction. So, for all $m \in
\ev{\bsigma}$, $[m]_\bsigma$ is a total order.

Thus $(\ev{\bsigma}, \leq_\bsigma)$ is a forest. Likewise, if $m^- \imc
m_1^+$ and $m^- \imc m_2^+$ in $\bsigma$, by
\emph{sequential determinism} and the same reasoning as above, $m_1
= m_2$, so $\bsigma$ is $O$-branching. Finally, as for any causal
strategy $\bsigma : A$ on $A$ alternating, we have
$\imc_\bsigma$ is alternating as well.
\end{proof}

Let us call a \textbf{branch} of sequential innocent $\bsigma :
A$ a $s = m_1 \dots m_n \in \Alt(\bsigma)$ s.t.
\[
m_1 \imc_\bsigma \dots \imc_\bsigma m_n \in \gcc(\bsigma)\,.
\]

Then, $\pr_\bsigma(s) \in \Alt(A)$, but there is more: by courtesy of
$\bsigma$, if $m_i^+ \imc_\bsigma m_{i+1}^-$ then $\pr_\bsigma(m_i)
\imc_A \pr_\bsigma(m_{i+1})$, \emph{i.e.} $\pr_\bsigma(m_{i+1})$
\emph{points to} $\pr_\bsigma(m_i)$ in $\pr_\bsigma(s)$. In other words,
$\pr_\bsigma(s)$ is actually a \emph{P-view}, \emph{i.e.} an alternating
play where Opponent always points to the previous move. One can
\emph{display} a sequential innocent causal strategy to a forest of
P-views as in Section~\ref{subsubsec:def_inn}: we have recovered, as the
causal structure of sequential innocent causal strategies, the
\emph{forest of P-views} $\pviews{\sigma}$, the ``causal presentation''
of alternating innocent strategies from
Proposition \ref{prop:causal_inn}. This connection confirms that the
\emph{causal strategies} of Section~\ref{sec:par_inn} are
generalizations of the sets of \emph{P-views} of traditional
innocence\footnote{This shows that the causal reasoning permitted by
traditional innocence, one of the main tools of traditional
game semantics, is \emph{not} inherently restricted to
innocence. This is a powerful observation, and much of the subsequent
line of work in concurrent games has consisted in exploring its
implications.}.
\begin{figure}
\begin{minipage}{.46\linewidth}
\[
\scalebox{.9}{$
\xymatrix@R=4pt@C=4pt{
&&~^\red{1}\qu^-
        \ar@{-|>}[d]\\
&&~^\red{2}\qu_{\grey{0}}^+
        \ar@{-|>}[dll]
        \ar@{-|>}[d]
        \ar@{-|>}[drr]
        \ar@{.}@/_.2pc/[u]
\\
~^\red{3}\qu_{\grey{0,0}}^-
        \ar@{-|>}[d]
        \ar@{.}@/^.2pc/[urr]
&&
~^\red{5}\qu_{\grey{0,1}}^-
        \ar@{.}@/_.2pc/[u]
        \ar@{-|>}[d]&&
~^\red{7}\done_{\grey{0}}^-
        \ar@{-|>}[d]
        \ar@{.}@/_.2pc/[ull]
\\
~^\red{4}\done_{\grey{0,0}}^+
        \ar@{.}@/^.2pc/[u]&&
~^\red{6}\done_{\grey{0,1}}^+
        \ar@{.}@/_.2pc/[u]&&
~^\red{8}\qu^+_{\grey{1}}
        \ar@{-|>}[dl]
        \ar@{-|>}[d]
        \ar@{-|>}[dr]
        \ar@{.}@/_1pc/[uuull]\\
&&&~^\red{9}\qu_{\grey{1,0}}^-
        \ar@{-|>}[d]
        \ar@{.}@/^.2pc/[ur]&
~^\red{11}\qu_{\grey{1,1}}^-
        \ar@{-|>}[d]
        \ar@{.}@/_.2pc/[u]&
~^\red{13}\done_{\grey{1,1}}^-
        \ar@{-|>}[d]
        \ar@{.}@/_.2pc/[ul]\\
&&&~^\red{10}\done_{\grey{1,0}}^+
        \ar@{.}@/_.2pc/[u]&
~^\red{12}\done_{\grey{1,1}}^+
        \ar@{.}@/_.2pc/[u]&
~^\red{14}\ttrue^+\ar@{.}@/_5.5pc/[uuuuulll]
}$}
\]
\caption{Innocence witness for Figure \ref{fig:plays_pointers}}
\label{fig:wit_caus_inn}
\end{minipage}
\hfill
\begin{minipage}{.45\linewidth}
\[
\scalebox{.9}{$
\xymatrix@R=8pt@C=4pt{
~\\
&&&\qu^-
        \ar@{.}[dll]
        \ar@{.}[d]
        \ar@{.}[drr]
        \ar@{-|>}@/_.1pc/[dll]\\
&\qu^+_{\grey{0}}       
        \ar@{.}[dl]
        \ar@{.}[d]
        \ar@{.}[dr]
        \ar@{-|>}@/_.1pc/[dl]
        \ar@{-|>}@/^.1pc/[d]
        \ar@{-|>}@/^.1pc/[dr]
&&\ttrue^+&&\qu^+_{\grey{1}}
        \ar@{.}[dl]
        \ar@{.}[d]
        \ar@{.}[dr]
        \ar@{-|>}@/_.1pc/[dl]
        \ar@{-|>}@/^.1pc/[d]
        \ar@{-|>}@/^.1pc/[dr]\\
\qu^-_{\grey{{0,0}}}
        \ar@{.}[d]
        \ar@{-|>}@/^.1pc/[d]
&\qu^-_{\grey{0,1}}
        \ar@{.}[d]
        \ar@{-|>}@/^.1pc/[d]
&\done^-_{\grey{0}}
        \ar@{-|>}[urrr]
        &&\qu^-_{\grey{1,0}}
        \ar@{-|>}@/^.1pc/[d]
        \ar@{.}[d]
&\qu^-_{\grey{1,1}}
        \ar@{-|>}@/^.1pc/[d]
        \ar@{.}[d]&\done^-_{\grey{1}}
        \ar@{-|>}[ulll]\\
\done^+_{\grey{0,0}}&\done^+_{\grey{0,1}}&&&
\done^+_{\grey{1,0}}&\done^+_{\grey{1,1}}\\~\\
}$}
\]
\caption{An alternative presentation}
\label{fig:wit_caus_inn_alt}
\end{minipage}
\end{figure}

\subsubsection{P-views are the causal dependency}
This gives two ways to get \emph{plays} from $\bsigma
: A$ sequential innocent: following Section~\ref{subsubsec:def_inn}, by
selecting those plays whose P-views appear in the causal representation;
and following Section~\ref{sec:par_inn}, as displays of alternating
plays over $\bsigma$ -- we must prove them identical.
Figure \ref{fig:wit_caus_inn} shows the
augmentation explored in the play of
\[
\intr{\lambda f^{\tunit \to \tunit}.\,f\,\tskip;\,f\,\tskip;\,\ttrue} :
\intr{(\tunit
\to \tunit) \to \tbool}
\]
in Figure \ref{fig:plays_pointers} -- the numbers in red
correspond to the order in which that augmentation is explored. The
reader should take some time to digest this picture, and in particular
observe that for each prefix of the play in Figure
\ref{fig:plays_pointers}, the P-view is exactly (the display of) the
branch leading to the corresponding move in the configuration. Opponent
could explore the same configuration in a different order, corresponding
to a different play with the same P-views. On the other hand, only
Opponent has any degree of freedom in this exploration: Player has ever
at most one possible move, that immediately caused by the last Opponent
move\footnote{Different explorations of the same augmentation may be
related by permuting contiguous OP pairs of moves. Deterministic
innocent strategies may be defined as those stable under the
permutations of OP pairs permitted by the arena: this is the idea behind
Melli\`es' presentation of innocence
\cite{DBLP:conf/concur/Mellies04}.}. 

As an aside, in Figure \ref{fig:wit_caus_inn_alt}, we give an
alternative presentation of the same augmentation, making explicit how
an augmentation of a sequential innocent strategy consists is the
underlying configuration (here, Figure \ref{fig:exp_conf}), enriched
with immediate causal links. The set of (isomorphism classes of)
configurations reached is essentially the information recorded by the
relational model; so this presentation shows plainly that innocent game
semantics consist in enriching the relational model with explicit causal
/ temporal information.

Back to the technical development, we must prove that if $\bsigma
: A$ is sequential innocent, then $\AltStrat(\bsigma)$ is innocent as in
Definition \ref{def:innocence}. This relies on a link between
P-views and the causal structure of $\bsigma : A$, which should be
expected in the light of Figure \ref{fig:wit_caus_inn}.

\begin{lem}\label{lem:pview_dep}
Consider $A, B$ $-$-arenas and $\bsigma : A \vdash B$ sequential
innocent.

Then, for any $t m \in \Alt(\bsigma)$, we have
$\pview{\pr_{\Lambda(\bsigma)}(t m)} =
\pr_{\Lambda(\bsigma)}([m]_\bsigma)$.
\end{lem}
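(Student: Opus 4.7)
The plan is to proceed by induction on the length of $tm$. By Lemma~\ref{lem:seqinn_forest}, $\bsigma$ is an O-branching alternating forest, so $[m]_\bsigma$ is totally ordered by $\leq_\bsigma$, and by Lemma~\ref{lem:app_caus_alt} the polarities alternate along this chain; its display $\pr_{\Lambda(\bsigma)}([m]_\bsigma)$ is therefore a well-defined alternating sequence in $A \lin B$, ending at $\pr_{\Lambda(\bsigma)}(m)$. In the base case $tm = m$, the move $m$ must be negative and minimal in $\bsigma$, $[m]_\bsigma = \{m\}$, and $\pview{\pr_{\Lambda(\bsigma)}(m)} = \pr_{\Lambda(\bsigma)}(m)$ by the last clause of Definition~\ref{def:pview}.

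For the inductive step, I split on the polarity of $m$. If $m$ is negative and non-minimal, Lemma~\ref{lem:just_pred} ensures that $\just(m) \imc_\bsigma m$ is the unique immediate predecessor of $m$ in $\bsigma$, whence $[m]_\bsigma = [\just(m)]_\bsigma \cup \{m\}$; moreover $\just(m)$ is positive and, by courtesy, $\pr_{\Lambda(\bsigma)}(\just(m))$ is the justifier of $\pr_{\Lambda(\bsigma)}(m)$ in $A\lin B$. The second clause of Definition~\ref{def:pview} then gives $\pview{\pr_{\Lambda(\bsigma)}(tm)} = \pview{\pr_{\Lambda(\bsigma)}(u)} \cdot \pr_{\Lambda(\bsigma)}(m)$, where $u$ is the prefix of $t$ ending at $\just(m)$, and the inductive hypothesis applied to $u$ (strictly shorter than $tm$) yields $\pr_{\Lambda(\bsigma)}([\just(m)]_\bsigma) \cdot \pr_{\Lambda(\bsigma)}(m) = \pr_{\Lambda(\bsigma)}([m]_\bsigma)$ as required. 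If $m$ is negative minimal, the last clause of Definition~\ref{def:pview} handles the P-view directly and the equation follows as in the base case. If $m$ is positive, $t$ is non-empty and ends in some negative event $n$; the first clause of Definition~\ref{def:pview} gives $\pview{\pr_{\Lambda(\bsigma)}(tm)} = \pview{\pr_{\Lambda(\bsigma)}(t)} \cdot \pr_{\Lambda(\bsigma)}(m)$, which by the inductive hypothesis equals $\pr_{\Lambda(\bsigma)}([n]_\bsigma) \cdot \pr_{\Lambda(\bsigma)}(m)$, and it remains to show $n = \just(m)$.

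The hard part will be precisely this last sub-claim, which is where sequential determinism is essential. Write $n' = \just(m)$ for the unique immediate predecessor of $m$ in $\bsigma$, and suppose for contradiction $n \neq n'$. Since $n' \in [m]_\bsigma \subseteq \ev{t}$ and $n$ is the last move of $t$, the event $n'$ appears strictly earlier in $t$ than $n$, followed immediately in $t$ by some positive event $e \in \ev{t}$. Let $v$ be the prefix of $t$ ending at $n'$: it belongs to $\Alt(\bsigma)$ by prefix-closure, and $v \cdot e \in \Alt(\bsigma)$ as a prefix of $t$. Moreover $v \cdot m \in \Alt(\bsigma)$: down-closure follows from $[m)_\bsigma = [n']_\bsigma \subseteq \ev{v}$, consistency is inherited from $tm \in \Alt(\bsigma)$, and alternation is granted by $n'$ negative and $m$ positive. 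Sequential determinism applied to $v$ then forces $m = e$, contradicting $m \notin \ev{t}$ while $e \in \ev{t}$. Hence $n = n'$, which gives $[m]_\bsigma = [n]_\bsigma \cup \{m\}$ and concludes the positive case.
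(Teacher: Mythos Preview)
Your proof is correct and follows essentially the same route as the paper: induction on the play, the negative case handled by courtesy (the justifier is the unique immediate predecessor), and the positive case requiring that the last move $n$ of $t$ be the immediate $\bsigma$-predecessor of $m$, established via sequential determinism. The only difference is in how the positive case is argued: the paper first proves an auxiliary invariant (even-length alternating plays have all maximal events positive, odd-length ones have exactly one maximal negative event) and deduces the claim from it using the O-branching property of Lemma~\ref{lem:seqinn_forest}; you instead give a direct contradiction by constructing $v \cdot m$ and $v \cdot e$ and invoking determinism. Your argument is slightly more economical. One small point of notation: in the positive case you write $\just(m)$ for the unique immediate predecessor of $m$ in $\bsigma$, but $\just$ is defined via the game structure (Definition~\ref{def:justifier1}), and for positive $m$ these need not coincide in general; your argument only uses the $\bsigma$-predecessor, so simply call it $n'$ without invoking $\just$.
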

\begin{proof}
In the statement above, we treat $[m]_\bsigma$ as the sequence
induced by its total ordering.

The crucial observation is that is $t m^- n^+ \in \Alt(\bsigma)$, then
necessarily $m^- \imc_\bsigma n^+$.
To prove that, we prove by induction on $t$ that for any $t \in
\Alt(\bsigma)$:
\emph{(1)} if $t$ has even length, then all maximal events of $\ev{t}
\in \conf{\bsigma}$ are positive; and \emph{(2)} if $t$ has odd length,
then $\ev{t} \in \conf{\bsigma}$ has \emph{exactly one} maximal negative
event. Indeed, for $t m^- \in \Alt(\bsigma)$, then $t$ has even length,
so $\ev{t}$ has all its maximal events positive. But then $\ev{t m^-}$
has exactly one maximal negative event, namely $m^-$. Likewise, for $t
m^+ \in \Alt(\bsigma)$, then $\ev{t}$ has exactly one maximal negative
event. Now, the immediate predecessor of $m$ must be negative. But if it
is not maximal in $\ev{t}$, this contradicts Lemma
\ref{lem:seqinn_forest}, and in particular the fact that $\bsigma$ is
$O$-branching. Therefore, the predecessor of $m$ must be the unique
maximal negative event of $\ev{t}$, and $\ev{t m}$ has all maximal
events positive as required.
Now, if $t m^- n^+$, then $\ev{t m^-}$ has exactly one maximal negative
event (namely $m^-$); while the maximal events of $\ev{t m^- n^+}$ are
all positive (and comprise $n^+$). Hence, $m^- \imc_\bsigma n^+$ as
required.
Likewise, if $t_1 m^+ t_2 n^- \in \Alt(\bsigma)$ s.t.
$\pr_{\Lambda(\bsigma)}(m) \imc_{A\lin B} \pr_{\Lambda(\bsigma)}(n)$ -- so
$\pview{\pr_{\Lambda(\bsigma)}(t_1 m t_2 n)} =
\pview{\pr_{\Lambda(\bsigma)}(t_1)}
\pr_{\Lambda(\bsigma)}(m) \pr_{\Lambda(\bsigma)}(n)$ then we must have
$m
\imc_\bsigma n$ by Lemma \ref{lem:app_aux_caus}. From these two
facts, the lemma is a direct verification by induction on $t$.
\end{proof}

\subsubsection{Innocent alternating unfolding}
From the above, we may now deduce:

\begin{prop}\label{prop:proj_pres_inn}
Consider $A, B$ $-$-arenas, and $\bsigma : A \vdash B$ a sequential
causal strategy.

If $\bsigma$ is parallel innocent, then $\AltStrat(\bsigma)$ is innocent
as
in Definition \ref{def:innocence}.
\end{prop}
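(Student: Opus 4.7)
Since sequential visibility already makes $\AltStrat(\bsigma)$ a P-visible alternating strategy, the task reduces to verifying the innocence clause of Definition \ref{def:innocence}: given $sa^+ \in \AltStrat(\bsigma)$ and $t \in \AltStrat(\bsigma)$ with $\pview{s} = \pview{t}$, we must show $ta^+ \in \AltStrat(\bsigma)$. By Lemma \ref{lem:uniq_caus_wit}, $sa^+$ lifts uniquely to some $u\cdot a' \in \Alt(\bsigma)$ and $t$ lifts uniquely to some $v \in \Alt(\bsigma)$. Using Lemma \ref{lem:seqinn_forest}, the positive event $a'$ has a unique $\imc_\bsigma$-predecessor, which must coincide with the last move of $u$ (as shown along the way in the proof of Lemma \ref{lem:pview_dep}); call it $m_k$, and let $n_j$ be the last move of $v$. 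Applying Lemma \ref{lem:pview_dep} to $u\cdot a'$ and to $v$ rewrites the hypothesis $\pview{s} = \pview{t}$ as an equality of displayed sequences $\pr_{\Lambda(\bsigma)}([m_k]_\bsigma) = \pr_{\Lambda(\bsigma)}([n_j]_\bsigma)$.

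The crux is a rigidity lemma: in a sequential innocent causal strategy, two grounded causal chains with equal display are equal. I would prove this by induction on length. Since $\bsigma$ is negative (Definition \ref{def:caus_strat}), a length-one gcc consists of a minimal negative event, which by receptivity is uniquely determined by its display. For the inductive step, $\rho\cdot m$ and $\rho'\cdot m'$ with equal display give $\rho = \rho'$ by induction, so $m$ and $m'$ share a common $\imc_\bsigma$-predecessor $k$; moreover $[k]_\bsigma = \rho$ is a configuration thanks to the forest structure from Lemma \ref{lem:seqinn_forest}. If $m$ is negative, receptivity applied to $[k]_\bsigma$ forces $m = m'$ from $\pr_\bsigma(m) = \pr_\bsigma(m')$; if $m$ is positive, the $O$-branching part of Lemma \ref{lem:seqinn_forest} forces $m = m'$. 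Applied to our setting, this yields $[m_k]_\bsigma = [n_j]_\bsigma$, hence $m_k = n_j$.

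To conclude, I check that $\ev{v} \cup \{a'\}$ is a configuration of $\bsigma$. Down-closure holds since $[a']_\bsigma = [m_k]_\bsigma \cup \{a'\} \subseteq \ev{v} \cup \{a'\}$, using downward closure of $\ev{v}$. Consistency follows from \emph{causal determinism}: the negative events of $\ev{v} \cup \{a'\}$ are exactly those of $\ev{v}$, which are consistent, so the whole set is. Finally $a' \notin \ev{v}$ because $a'$ strictly succeeds $m_k = n_j$, the last move of $v$. Thus $v\cdot a'$ is alternating, non-repetitive and covers $v$ in $\bsigma$, giving $v\cdot a' \in \Alt(\bsigma)$ and $ta^+ = \pr_{\Lambda(\bsigma)}(v\cdot a') \in \AltStrat(\bsigma)$. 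The main obstacle is the rigidity lemma of the middle paragraph: it is what allows the observational equality of P-views to be transported back to structural equality in $\bsigma$, something specific to the forest shape and $O$-branching guaranteed by sequential innocence.
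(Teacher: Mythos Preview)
Your proof is correct and follows essentially the same route as the paper. The one difference is that your ``rigidity lemma'' reproves from scratch what the paper obtains by a direct appeal to Lemma~\ref{lem:uniq_caus_wit}: since $[m_k]_\bsigma$ and $[n_j]_\bsigma$, viewed as sequences, lie in $\Alt(\bsigma)$ and both display to $\pview{s} = \pview{t}$, the unique-witness property immediately gives $[m_k]_\bsigma = [n_j]_\bsigma$. Your inductive argument via receptivity and $O$-branching is exactly the content of that lemma specialised to branches, so the detour is harmless but unnecessary.
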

\begin{proof}
First, we must show that $\AltStrat(\bsigma)$ is P-visible. In other
words, we must prove that for all $s \in \AltStrat(\bsigma)$, $\pview{s}
\in \Alt(A\lin B)$. For the empty play there is nothing to prove; so
consider $s a \in \AltStrat(\bsigma)$ and $t m \in \Alt(\bsigma)$ such
that $sa = \pr_{\Lambda(\bsigma)}(tm)$. Now, by Lemma
\ref{lem:pview_dep}, we have $\pview{sa} =
\pr_{\Lambda(\bsigma)}([m]_\bsigma)$, as plays -- therefore $\pview{sa}
\in \Alt(A\lin B)$ as required.

Now, we prove innocence. Let $s a^+, s' \in \AltStrat(\bsigma)$ such
that $\pview{s} = \pview{s'}$. By definition, there is $t m^+ \in
\Alt(\bsigma)$ and $t' \in \Alt(\bsigma)$ such that $s a =
\pr_{\Lambda(\bsigma)}(t m)$ and $s' = \pr_{\Lambda(\bsigma)}(t')$.

Now, by Lemma \ref{lem:pview_dep}, $\pview{s a} =
\pr_{\Lambda(\bsigma)}([m]_\bsigma)$ as plays, hence also
$\pview{s} =
\pr_{\Lambda(\bsigma)}([n]_\bsigma)$ for $n \imc_\bsigma m$. Again by
Lemma \ref{lem:pview_dep}, $\pview{s'} =
\pr_{\Lambda(\bsigma)}([n']_\bsigma)$ for
$n'$ the last move of $t'$. Since $\pview{s} = \pview{s'}$, by Lemma
\ref{lem:uniq_caus_wit}, $[n]_\bsigma = [n']_\bsigma$. So,
$\ev{t'} \cup \{m\}$ is down-closed. Finally, $\ev{t'} \cup \{m\}$ is
negatively compatible since $\ev{t'} \in \conf{\bsigma}$ and $m$ is
positive, hence $\ev{t'} \cup \{m\}$ is compatible as $\bsigma$
satisfies \emph{causal determinism}. Therefore, $t' m \in
\Alt(\bsigma)$, and $s' m = \pr_{\Lambda(\bsigma)}(t' m) \in
\AltStrat(\bsigma)$.
\end{proof}

Altogether, we have proved the following proposition:

\begin{prop}
There is a Seely category $\CGseqinn$ of $-$-arenas and sequential
strategies. Moreover, the alternating unfolding preserves
$\simstrat$ and the Seely category structure:
\[
\AltStrat(-) : \CGseqinn \to \NegAltinn\,.
\]
\end{prop}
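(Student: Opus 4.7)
The plan is to assemble this proposition by combining the two structural results already established. Specifically, we already have $\CGseq$ as a Seely category with an alternating unfolding functor $\AltStrat : \CGseq \to \NegAltvis$ preserving $\simstrat$ and the Seely structure (Proposition \ref{prop:func_seq}); and we have that parallel innocence is preserved by the relevant operations (via the constructions underlying $\CGwbinn$, noting that these preservation proofs are independent of well-bracketing). The category $\CGseqinn$ will be defined to have $-$-arenas as objects and causal strategies on $A \vdash B$ that are simultaneously sequential and parallel innocent as morphisms.

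The first step is to check that $\CGseqinn$ is closed under all the Seely operations. The structural morphisms---copycat, the symmetric monoidal associators, the Cartesian pairing/projection data, dereliction $\bder$, digging $\bdig$, and the Seely morphisms $\bmon^2, \bmon^0$---are all relabeled copycats; these were shown parallel innocent in the construction of $\CGwbinn$ (they are forestial, which trivially implies visibility and pre-innocence), and sequentiality is a direct check from the characterisation of $\bcc_A$. For closure of composition, sequentiality of $\btau \odot \bsigma$ is furnished by the state-analysis of Lemmas \ref{lem:comp_reach_seq}, \ref{lem:comp_seq_det} together with Lemma \ref{lem:alt_proj_comp}, while parallel innocence is provided by Propositions \ref{prop:comp_visible} (visibility) and \ref{prop:rig_comp} (pre-innocence). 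For tensor, pairing and $\oc(-)$, sequentiality is preserved by the respective state diagrams of Section \ref{subsubsec:seq_seely}, and parallel innocence is immediate since the underlying event structures are disjoint unions (for $\tensor$ and pairing) or iterated copies (for $\oc$) in which gccs project into a single component---so both visibility and the condition on least distinct moves reduce to the componentwise statements.

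For the functorial claim, we take $\AltStrat : \CGseq \to \NegAltvis$ from Proposition \ref{prop:func_seq} and restrict it to $\CGseqinn$. Proposition \ref{prop:proj_pres_inn} shows that the image of a sequential parallel innocent strategy is an innocent alternating strategy, so the restricted functor lands in $\NegAltinn$. Preservation of $\simstrat$ and the Seely structure is inherited from Proposition \ref{prop:func_seq}.

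The main obstacle here is purely bookkeeping; the mathematical content---that sequentiality and parallel innocence are each independently stable under composition, and that the alternating projection of a sequential innocent strategy recovers precisely an innocent alternating strategy in the sense of Definition \ref{def:innocence}---has already been dispatched. What needs care is checking that the intersection of the two stability classes behaves well under every constructor, and that structural morphisms (in particular, the Seely isomorphisms, which are the least trivial among the copycat variants) satisfy both conditions simultaneously; but in every case the verification reduces, as sketched above, to the forestial structure of the underlying event structures.
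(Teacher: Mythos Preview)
Your proposal is correct and follows essentially the same approach as the paper. The paper's proof is far terser---it simply observes that it suffices to show the functor of Proposition~\ref{prop:func_seq} sends sequential innocent causal strategies to innocent alternating strategies, which is Proposition~\ref{prop:proj_pres_inn}---leaving the closure of $\CGseqinn$ under the Seely operations implicit (inherited from the separate constructions of $\CGseq$ and the innocence preservation results), whereas you spell out those closure checks explicitly.
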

\begin{proof}
It suffices to prove that the functor of Proposition \ref{prop:func_seq}
sends sequential innocent causal strategies to innocent alternating
strategies, which we know by Proposition \ref{prop:proj_pres_inn}.
\end{proof}

We are now equipped to show full abstraction for $\PCF$.

\subsubsection{Full abstraction for $\PCF$}\label{subsubsec:caus_fa_pcf}
We show that $\CGwbseqinn$ is fully abstract for $\PCF$.

\begin{thm}
The model $\CGwbseqinn$ is intensionally fully abstract for $\PCF$.
\end{thm}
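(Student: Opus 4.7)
The plan is to mirror the proof of Theorem \ref{th:naltwbseq_fa_ia}, pushing a distinguishing test through the alternating projection $\AltStrat(-)$ to reduce to the classical full abstraction Theorem \ref{th:fa_pcf}. The soundness direction ($\intr{M} \obs \intr{N} \Rightarrow M \obs N$) is by adequacy: since $\PCF$ is a sublanguage of $\IPA$, Theorem \ref{th:adequacy_ipa} applies, and for any $\PCF$-context $C[]$ we have $\intr{C[M]} = \intr{C[-]} \odot_\oc \intr{M}$, so $\intr{M} \obs \intr{N}$ forces $C[M] \eval \iff C[N] \eval$.

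For the harder converse, I would assume $\intr{M}_{\CGwbseqinn} \not\obs \intr{N}_{\CGwbseqinn}$ and pick a test $\balpha \in \CGwbseqinn_\oc(\intr{A}, \intr{\tunit})$ with, say, $\balpha \odot_\oc \intr{M} \eval$ while $\balpha \odot_\oc \intr{N} \div$. Setting $\alpha' := \AltStrat(\balpha)$, Proposition \ref{prop:proj_pres_inn} gives that $\alpha'$ is innocent. Since $\AltStrat(-)$ preserves the Seely category structure (Proposition \ref{prop:func_seq}, specialised to its innocent subcategory) and agrees with the sequential innocent causal interpretation of each $\PCF$ combinator, an induction on terms exactly analogous to Proposition \ref{prop:alt_pres_intr} yields $\AltStrat(\intr{M}_{\CGwbseqinn_\oc}) = \intr{M}_{\NegAltinn_\oc}$, and similarly for $N$. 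Compatibility of $\AltStrat(-)$ with composition then produces $\alpha' \odot_\oc \intr{M}_{\NegAltinn_\oc} \eval$ and $\alpha' \odot_\oc \intr{N}_{\NegAltinn_\oc} \div$.

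To invoke Theorem \ref{th:fa_pcf}, I need $\alpha'$ to satisfy alternating well-bracketing (Definition \ref{def:alt_strat_wb}), whereas $\balpha$ only satisfies the weaker causal notion (Definition \ref{def:wb_caus_strat}). This comes for free on $\PCF$-type arenas via Proposition \ref{prop:inn_wb}: any innocent strategy on such an arena is automatically well-bracketed. So $\alpha' \in \NegAltinnwb_\oc$, and Theorem \ref{th:fa_pcf} applied to the separating test $\alpha'$ yields a $\PCF$-context distinguishing $M$ and $N$, giving $M \not\obs_\PCF N$; contradiction.

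The main technical obstacle is establishing the analogue of Proposition \ref{prop:alt_pres_intr} in the innocent setting: one must verify that the causal interpretation of the fixpoint combinator projects via $\AltStrat(-)$ to the usual alternating fixpoint, which is the only clause not reducing to inspection of a single strategy. This follows by continuity of $\AltStrat(-)$ with respect to $\cleq$, together with the fact that its action on the minimal $\bot_A$ is the empty alternating strategy; both are routine but require explicit check. A secondary point — that causal well-bracketing of $\balpha$ need not be directly transferable to alternating well-bracketing of $\alpha'$ — is neatly circumvented by Proposition \ref{prop:inn_wb}, which exploits the precise shape of $\PCF$-type arenas and is the reason $\PCF$ admits a cleaner treatment here than $\IA$.
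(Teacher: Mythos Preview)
Your proposal is correct and follows essentially the same route as the paper: project the distinguishing test via $\AltStrat(-)$, use Proposition \ref{prop:proj_pres_inn} to get innocence, invoke Proposition \ref{prop:inn_wb} to recover alternating well-bracketing for free on $\PCF$-type arenas, and conclude by Theorem \ref{th:fa_pcf}. Your concern about re-establishing Proposition \ref{prop:alt_pres_intr} in the innocent setting is unnecessary: that proposition is already stated for all of $\IA$, so it applies to $\PCF$ terms verbatim (the interpretations in $\NegAltvis_\oc$ and $\NegAltinn_\oc$ coincide on $\PCF$ terms), and in particular the fixpoint clause needs no separate treatment.
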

\begin{proof}
Let $\vdash M, N : A$ be terms in $\PCF$ s.t.
$\intr{M} \not \obs \intr{N}$, \emph{i.e.} there is $\balpha :
\oc \intr{A} \vdash \tunit$ sequential innocent and well-bracketed
such that, \emph{w.l.o.g.}, $\balpha \odot_\oc \intr{M} \eval$ and
$\balpha \odot_\oc \intr{N} \div$. Then, it follows 
\[
\AltStrat(\balpha) \odot_\oc \intr{M}_{\NegAlt_\oc} \eval\,,
\qquad
\qquad
\AltStrat(\balpha) \odot_\oc \intr{N}_{\NegAlt_\oc} \div\,,
\]
with $\AltStrat(\balpha)$ well-bracketed by Proposition
\ref{prop:inn_wb}. Hence, $M \not \obs N$ by Theorem \ref{th:fa_pcf}.
\end{proof}

\section{Finite Definability and Full Abstraction for \texorpdfstring{$\PCFpar$}{PCF//}}
\label{sec:definability}

We have now established the following intensional full abstraction
results:
\[  
\begin{array}{rcl}
\CGwb_\oc & \text{is fully abstract for} & \IPA\,,\\
\CGwb_\oc + \text{\emph{sequentiality}} &\text{is
fully abstract for} & \IA\,,\\
\CGwb_\oc + \text{\emph{parallel innocence}} +
\text{\emph{sequentiality}} &
\text{is fully
abstract for} & \PCF\,,
\end{array}
\]
and we are left with the one outstanding objective:
\[  
\begin{array}{rcl}
\CGwb_\oc + \text{\emph{parallel innocence}} & \text{is
fully abstract for} & \PCFpar\,.
\end{array}
\]

Unfortunately, this is also the most challenging of our full abstraction
results: whereas for the others we could leverage earlier
work, we must prove finite definability from scratch.

Proving finite definability for parallel innocent strategies involves
many steps. In Section~\ref{subsec:positional_collapse}, we introduce a more convenient
equivalence between parallel innocent strategies, \emph{positional
equivalence}. In Section~\ref{subsec:wb_pruning} we show it suffices
to consider tests that satisfy a stronger, \emph{causal}, form of
well-bracketing useful for definability. In Section~\ref{subsec:finite_tests} we introduce a notion of finiteness, and
show that finite tests suffice.
In Section~\ref{subsec:factor}, we show a factorization result, reducing
finite definability to that for first-order strategies.  In Section~\ref{subsec:fin_def}, we conclude the proof of finite definability.
Finally, in Section~\ref{subsec:fa_pcfpar}, we prove intensional full
abstraction for $\PCFpar$, concluding the technical contents of the
paper.

\subsection{The Positional Collapse} \label{subsec:positional_collapse}
Definability will hold only with respect to \emph{positional
equivalence}, a congruence amounting to an equal projection in the
relational model. 

\subsubsection{Positions of arenas} We will observe strategies only on
certain \emph{positions} of arenas.

\begin{defi}
Let $A$ be an arena, and $x \in \conf{A}$.

We say that $x$ is \textbf{complete} iff every question in $x$ has an
answer in $x$.
\end{defi}

\emph{Complete} configurations mirror the \emph{complete plays} of
Section~\ref{subsubsec:fa_ia} and onwards. In both cases,
all function calls have returned. The
difference, however, is that complete plays are sequential whereas
complete configurations are not: they are a ``static'' snapshot
presenting all calls and returns and their hierarchical relationships,
with no temporal information.

Besides, complete configurations 
also comprise the ad-hoc choice of copy indices for all replicable
moves. So we quotient them out via the following variation.

\begin{defi}
Let $A$ be an arena. The set of \textbf{positions} on $A$, ranged over
by $\x, \y,
\dots$, is:
\[
\coll A \quad=\quad \{x \in \conf{A} \mid \text{$x$
complete}\}/\sym_A\,.
\]

If $x \in \conf{A}$ is complete, we write $[x]_\sym \in \coll A$ for
its symmetry class.
\end{defi}

For instance, the configuration of Figure \ref{fig:exp_conf} is 
complete. The matching position is described intuitively by
removing the grey subscripts, \emph{i.e.} the copy indices.

\subsubsection{Positions of strategies} The \emph{positional collapse}
of a strategy is an explicit \emph{desequentialization}, obtained by
forgetting the chronological ordering of complete plays.

\begin{defi}\label{def:pos_strat}
Consider $A$ an arena, and $\bsigma : A$ a causal strategy.
The \textbf{positions} of $\bsigma$ are
\[
\coll \bsigma = \{\x \in \coll A \mid \exists x \in \conf{\bsigma},~\x =
[\pr_{\bsigma}(x)]_\sym\}\,.
\]
\end{defi}

For $\bsigma : A \vdash B$, positions are symmetry classes of parallel
compositions $x_A \parallel x_B$, also written $\x_A \parallel \x_B$ for
$\x_A = [x_A]_\sym$, $\x_B = [x_B]_\sym$. They correspond to pairs
$(\x_A, \x_B) \in (\coll A) \times (\coll B)$ -- so $\coll \bsigma$
gives a \emph{relation} from $\coll A$ to $\coll B$; accordingly
we also write $(\x_A, \x_B) \in \coll \bsigma$ for $\x_A \parallel
\x_B \in \coll \bsigma$.
\begin{figure}
\[
\int
\left(
\raisebox{45pt}{$
\scalebox{.8}{$
\xymatrix@R=2pt@C=5pt{
\tunit &\lin& \tunit &\lin& \tunit\\
&&&&\qu^-
        \ar@{-|>}[dllll]\\
\qu^+
        \ar@{.}@/^/[urrrr]
        \ar@{-|>}[d]\\
\done^-
        \ar@{-|>}[drr]\\
&&\qu^+
        \ar@{-|>}[d]
        \ar@{.}@/^/[uuurr]\\
&&\done^-
        \ar@{-|>}[drr]\\
&&&&\done^+
        \ar@{.}@/_/[uuuuu]
}$}$}\right)
=
\int
\left(
\raisebox{45pt}{$
\scalebox{.8}{$
\xymatrix@R=10pt@C=5pt{
\tunit &\lin& \tunit &\lin& \tunit\\
&&&&\qu^-
        \ar@{-|>}[dll]
        \ar@{-|>}[dllll]\\
\qu^+   \ar@{.}@/^/[urrrr]
        \ar@{-|>}[d]
&&\qu^+ \ar@{.}@/^/[urr]
        \ar@{-|>}[d]\\
\done^-
        \ar@{-|>}[drrrr]
        \ar@{.}@/^/[u]&&
\done^-
        \ar@{-|>}[drr]
        \ar@{.}@/^/[u]\\
&&&&\done^+
        \ar@{.}@/_/[uuu]
}$}$}\right)
=
\left\{
\raisebox{40pt}{$
\scalebox{.8}{$
\xymatrix@R=18pt@C=5pt{
\tunit &\lin& \tunit &\lin& \tunit\\
&&&&\qu^-\\
\qu^+   \ar@{.}@/^/[urrrr]
&&\qu^+ \ar@{.}@/^/[urr]&&
\done^+ \ar@{.}@/_/[u]
\\
\done^- \ar@{.}@/^/[u]&&
\done^- \ar@{.}@/^/[u]
}$}$}\right\}
\]
\caption{Quotienting out evaluation order}
\label{fig:quotient_order}
\end{figure}

We illustrate the construction in Figure \ref{fig:quotient_order}:
the example illustrates how, by only keeping complete positions, the
collapse forgets the evaluation order. We define:

\begin{defi}\label{def:pos_eq}
Two causal strategies $\bsigma, \btau : A$ are \textbf{positionally
equivalent} iff $\coll \bsigma = \coll \btau$.

We write $\bsigma \equiv \btau$ to denote the fact that $\bsigma$ and
$\btau$ are positionally equivalent.
\end{defi}

This is a drastic quotient, identifying sequential and parallel
evaluation.  It will help tremendously in our definability procedure,
which will not respect the evaluation order. Of course, the more drastic
the quotient, the more challenging the corresponding proof obligation
that it is preserved by operations on strategies -- and in particular by
composition. 

\subsubsection{Deadlocks} Stability of positional
equivalence by composition boils down to
\[
\coll(-): \CG \to \Rel
\]
being functorial, where $\Rel$ is the usual category of \emph{sets and
relations}.

For morphisms $\bsigma \in \CG(A, B)$, we have defined $\coll \bsigma
\in \Rel(\coll A, \coll B)$.  But for now, this
operation has no reason to preserve composition! In fact neither
inclusion holds: firstly, $\coll (\btau \odot \bsigma) \subseteq (\coll
\btau) \odot (\coll \bsigma)$ may fail as a complete configuration may
arise through an interaction involving a non-complete configuration on
$B$. We shall see later on that this can be salvaged by
restricting to \emph{well-bracketed} causal strategies, ensuring
that an interaction producing a complete configuration on $A \vdash C$
will only involve a complete configuration on $B$.

However the other direction also fails, and the diagnosis is more
serious.
\begin{figure}
\[
\left(
\raisebox{35pt}{$
\xymatrix@R=2pt@C=2pt{
\tunit & \lin &\tunit &\vdash &\tnat\\
&&&&\qu^-  \ar@{-|>}[dll]\\
&&\qu^+   \ar@{-|>}[dll]
        \ar@{-|>}[d]\\
\qu^-    \ar@{-|>}[d]
        \ar@{.}@/^/[urr]&&
\done^- \ar@{-|>}[dll]
        \ar@{-|>}[drr]
        \ar@{.}@/^/[u]\\
\done^+ \ar@{.}@/^/[u]&&&&
0^+\ar@{.}@/_/[uuu]
}$}
\right)
\quad
\odot
\quad
\left(
\raisebox{35pt}{$
\xymatrix@R=2pt@C=2pt{
\tunit & \lin & \tunit\\
&&\qu^-
        \ar@{-|>}[dll]\\
\qu^+   \ar@{.}@/^/[urr]
        \ar@{-|>}[d]\\
\done^- \ar@{-|>}[drr]
        \ar@{.}@/^/[u]\\
&&\done^+
        \ar@{.}@/_/[uuu]
}$}
\right)
\quad
=
\quad
\left(
\raisebox{35pt}{$
\xymatrix@R=2pt@C=2pt{
\tnat\\
\qu^-   
}$}
\right)
\]
\caption{Deadlocking composition of causal strategies}
\label{fig:deadlock}
\end{figure}
To construct $(\x_A, \x_C) \in (\coll \btau) \odot (\coll
\bsigma)$, one provides $\x_B \in \coll B$ mediating for
relational composition, so 
\[
(\x_A, \x_B) \in \coll \bsigma\,,
\qquad
\qquad
(\x_B, \x_C) \in \coll \btau\,,
\]
\emph{i.e.} with $x^\bsigma \in \conf{\bsigma}$ and
$x^\btau \in \conf{\btau}$ s.t. writing $\pr_\bsigma x^\bsigma =
x_A^\bsigma \parallel x^\btau_B$ and $\pr_\btau x^\btau = x^\btau_B
\parallel x^\btau_C$, $x^\bsigma_A \in \x_A$, $x^\btau_C \in
\x_C$, and $x^\bsigma_B, x^\btau_B \in \x_B$, so $x^\bsigma_B
\sym_B x^\btau_B$ match \emph{up to symmetry}. In other words, we must
provide a \emph{witness position} that both strategies agree (up
to symmetry) is \emph{reachable}.

On the other hand, composition of strategies is more rigid: not only
should the projections of $x^\bsigma$ and $x^\btau$ on $B$ match,
they should also arrive at this position \emph{in
the same chronological ordering}. This is not always possible: these two
notions of composition differ when interaction triggers a \emph{causal
deadlock}, \emph{i.e.} pairs of configurations that are \emph{matching}
but not \emph{secured} as in Definition \ref{def:caus_comp}.
Figure \ref{fig:deadlock} displays an example: the strategy
obtained by composition has no response to the initial Opponent move,
while relational composition authorizes $0^+$.

This strikes at the heart of the difference between game
and relational semantics: the former is dynamic hence
sensitive to deadlocks, while the latter is static.
This of course is what lets game semantics model languages
with non-commutative effects, but for us, very concretely, it means that
positional equivalence is in general not a congruence.

\subsubsection{The deadlock-free lemma} Our \emph{deus ex machina} is
visibility.  A powerful -- and at first unexpected -- consequence of
visibility is that any interaction between visible strategies is always
deadlock-free.  The consequence of visibility that our proof will
exploit repeatedly is:

\begin{lem}\label{lem:locality}
Consider $A, B$ $-$-arenas, $\bsigma : A \vdash B$  visible, and $m, m'
\in \ev{\bsigma}$ s.t. $m <_\bsigma m'$.

Then, $\just(m')$ is comparable with $m$ with respect to $\leq_\bsigma$.
\end{lem}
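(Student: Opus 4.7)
The plan is to extract from $\bsigma$ a grounded causal chain $\rho \in \gcc(\bsigma)$ containing both $m$ and $m'$, then invoke visibility in its reformulation as ``all gccs are closed under justifiers'' (as noted just after Definition \ref{def:visible}) to conclude that $\just(m') \in \rho$, and finally read off the comparability of $m$ and $\just(m')$ from the fact that $\rho$ is totally ordered by $\leq_\bsigma$.

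First, I would observe that $\just(m')$ is defined. Since $m <_\bsigma m'$, the event $m'$ is not minimal in $\bsigma$. If $\pol_\bsigma(m')=-$, Lemma \ref{lem:just_pred} gives $\just(m')\imc_\bsigma m'$ directly. If $\pol_\bsigma(m')=+$, then $\pr_\bsigma(m')$ is positive in $A\vdash B=A^\perp\parallel B$, hence is either a non-minimal event of $B$ or any event of $A$; in both cases a clause of Definition \ref{def:justifier1} applies and $\just(m')$ exists. Next I build the gcc $\rho$: the interval $[m']_\bsigma$ is finite by \emph{finite causes}, so inside this finite poset I pick a maximal chain from $m$ up to $m'$, extend it downwards by iteratively choosing $\imc_\bsigma$-predecessors until reaching a minimal event of $\bsigma$, and obtain $\rho=\rho_1\imc_\bsigma\cdots\imc_\bsigma\rho_n$ with $\rho_1$ minimal and $m,m'\in\rho$.

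Now I apply visibility. By Definition \ref{def:visible}, $\pr_\bsigma(\rho)\in\conf{A\vdash B}$, hence is down-closed. Since $\pr_\bsigma(\just(m'))\imc_{A\vdash B}\pr_\bsigma(m')$ (or is the image of $\init(m')\leq_\bsigma m'\in\rho$ in the minimal-in-$A$ clause), down-closure of $\pr_\bsigma(\rho)$ together with local injectivity of the display map forces $\just(m')\in\rho$. As $\rho$ is a $\leq_\bsigma$-chain containing both $m$ and $\just(m')$, these two events are comparable, which is the conclusion.

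The argument is essentially bookkeeping; the only place where care is required is the verification that $\just(m')$ is well-defined in all the polarity cases and the check that visibility, stated on arena-projections of gccs, genuinely reflects back to membership in $\rho$. Both follow from the small facts established above (Lemma \ref{lem:just_pred}, local injectivity, and forestiality of the arena ensuring that $\imc_{A\vdash B}$-predecessors are unique), so there is no real obstacle beyond assembling these ingredients in the right order.
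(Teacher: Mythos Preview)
Your proposal is correct and follows essentially the same route as the paper: build a gcc $\rho$ through $m$ ending at $m'$, then argue $\just(m')\in\rho$ and conclude by total order. The paper's proof is slightly crisper in the minimal-in-$A$ case, where it explicitly invokes pointedness of $\bsigma$ to identify $\init(m')$ with the first element $\rho_1$ of the gcc; your parenthetical ``or is the image of $\init(m')\leq_\bsigma m'\in\rho$'' is gesturing at the same fact but would benefit from naming pointedness, since down-closure of $\pr_\bsigma(\rho)$ alone does not force $\init(m')\in\rho$ (there is no $\imc_{A\vdash B}$-link from a minimal $B$-event to a minimal $A$-event).
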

\begin{proof}
Since $m <_\bsigma m'$, there is $\rho \imc m' \in \gcc(\bsigma)$ s.t.
$m \in \rho$. If $\pr_\bsigma(m')$ is minimal in $A$,
$\pr_\bsigma(\just(m'))$ is minimal in $B$, so $\just(m')$ is
minimal for $\leq_\bsigma$ by courtesy. But since $\bsigma$ is pointed,
$\just(m')$ is the initial move of $\rho$, obviously comparable
with $m$ as $\rho$ is totally ordered.

Else, by visibility $\just(m') \in \rho$. But $\rho$ is
totally ordered, so $m, \just(m')$ comparable.
\end{proof}

We shall prove that the composition of visible causal
strategies is deadlock-free. But first, we recall
the basic mechanisms of interactions between causal strategies.
Consider $\bsigma : A \vdash B$ and $\btau : B \vdash C$, and
configurations $x^\bsigma \in \conf{\bsigma}$, and $x^\btau \in
\conf{\btau}$ such that, writing $\pr_\bsigma x^\bsigma = x^\bsigma_A
\parallel x^\bsigma_B$ and $\pr_\btau x^\btau = x^\btau_B \parallel
x^\btau_C$, we have $x^\bsigma_B = x^\btau_B = x_B$, \emph{i.e.}
$x^\bsigma$ and $x^\btau$ are \emph{matching}. Then, recall from
Definition \ref{def:caus_comp} the bijection arising from their
synchronization:
\[
\varphi
\quad
:
\quad
x^\bsigma \parallel x^\btau_C 
\quad
\stackrel{\pr_\bsigma \parallel x^\btau_C}{\simeq} 
\quad
x^\bsigma_A \parallel x_B \parallel x^\btau_C 
\quad
\stackrel{x^\bsigma_A \parallel \pr_\btau^{-1}}{\simeq}
\quad
x^\bsigma_A \parallel x^\btau\,,
\]
whose graph is equipped with a relation importing the causal
dependencies from $\bsigma$ and $\btau$:
\[  
(l, r) \vartriangleleft (l', r') 
\qquad
\Leftrightarrow
\qquad
l <_{\bsigma \parallel C} l'
\quad
\vee
\quad
r <_{A \parallel \btau} r'\,.
\]

We saw in Definition \ref{def:caus_comp} and Proposition
\ref{prop:main_interaction} that $(x^\bsigma, x^\btau)$
corresponds to a configuration of the interaction $\btau\inter \bsigma$
exactly when this bijection is \emph{secured}, \emph{i.e.}
$\vartriangleleft$ is acyclic.

If $\bsigma : A \vdash B$ and $\btau : B \vdash C$ are visible, 
we claim that this is always the case. We reason by
contradiction: starting with a putative deadlock, we 
repeatedly push it down the causal dependency of the arena, until it
reaches a minimal event -- but those cannot appear in a cycle. Before
giving the formal proof, we showcase the reasoning on a simplified case.

Consider a \textbf{simple deadlock} in $\varphi$, given by
$p_1 = (l_1, r_1)$ and $p_2 = (l_2, r_2) \in \varphi$ such that
\[
l_1 <_{\bsigma \parallel C} l_2\,,
\qquad
r_2 <_{A \parallel \btau} r_1\,,
\]
an immediate causal incompatibility between $p_1$ and $p_2$. In other
words we have $p_1 
\vartriangleleft p_2$ and $p_2 \vartriangleleft p_1$, and we use
$p_1 \vartriangleleft_\bsigma p_2$ and $p_2
\vartriangleleft_\btau p_1$ to indicate the origin of the causal
constraint. Finally, we apply the same conventions for polarity of
elements of $\varphi$ as in Section~\ref{subsubsec:caus_conf_int}.

The first observation (skipped here) is that \emph{w.l.o.g.}, the polarities are as in
\[
\xymatrix{
p_1^\labr
        \ar@/^1pc/@{->}[rr]|{\vartriangleleft_\bsigma}&&
p_2^\labl
        \ar@/^1pc/@{->}[ll]|{\vartriangleleft_\btau}
}\,,
\]
where both occur in $B$ but not minimal in $B$ -- so we may
take $\just(p_i) = (\just(l_i), \just(r_i))$. By Lemma
\ref{lem:locality}, $l_1$ and $\just(l_2)$ are comparable
for $\bsigma$; while $r_2$ and $\just(r_1)$ are comparable for $\btau$.
If $p_1 \vartriangleleft_\bsigma \just(p_2)$ or $p_2
\vartriangleleft_\btau \just(p_1)$, then we respectively have one of the
cycles:
\[
\raisebox{18pt}{$
\xymatrix@R=0pt@C=30pt{
&\just(p_2)^\labr
        \ar@/^/@{->}[dr]|{\vartriangleleft_\btau}\\
p_1^\labr
        \ar@/^/@{->}[ur]|{\vartriangleleft_\bsigma}&&
p_2^\labl
        \ar@/^1.5pc/@{->}[ll]|{\vartriangleleft_\btau}
}$}
\qquad
\text{\emph{or}}
\qquad
\xymatrix@R=0pt@C=30pt{
p_1^\labr
        \ar@/^1.5pc/@{->}[rr]|{\vartriangleleft_\bsigma}&&
p_2^\labl
        \ar@/^/@{->}[dl]|{\vartriangleleft_\btau}\\
&\just(p_1)^\labl
        \ar@/^/@{->}[ul]|{\vartriangleleft_\bsigma}
}
\]
so simple deadlocks between $p_1$ and $\just(p_2)$; or between
$\just(p_1)$ and $p_2$. The cumulative \emph{depth} in $B$
has decreased. The case $p_1 = \just(p_2)$ or $p_2 =
\just(p_1)$ is easily discarded.

The last case has $\just(p_2)
\vartriangleleft_\bsigma p_1$ and $\just(p_1) \vartriangleleft_\btau
p_2$. But $p_1$ has polarity $\labr$, so by Lemma \ref{lem:just_pred}
the only immediate dependency in $\bsigma$ of $l_1^-$ is $\just(l_1)$.
So $\just(p_2) \vartriangleleft_\bsigma p_1$ factors as $\just(p_2)
\vartriangleleft_\bsigma \just(p_1) \vartriangleleft_\bsigma p_1$.
Symmetrically $\just(p_1) \vartriangleleft_\btau \just(p_2)$, so we
have:
\[
\xymatrix@R=0pt{
\just(p_1)
        \ar@/^1.5pc/@{->}[rr]|{\vartriangleleft_\btau}&&
\just(p_2)
        \ar@/^1.5pc/@{->}[ll]|{\vartriangleleft_\bsigma}
}\,,
\]
closer to the root of the arena. Repeating this we eventually
hit an impossible simple deadlock with a minimal event in $B$,
finally exposing the contradiction. So visibility \emph{structures}
the interaction around the dependency of the arena, giving us an
effective reasoning principle.

The proof of the deadlock-free lemma is the same in
essence, but challenging in form. Firstly, cycles in
$\vartriangleleft$ in Definition \ref{def:caus_comp} may have arbitrary
length. Secondly, in relational composition strategies synchronize on
\emph{symmetry classes} of configurations rather than concrete
configurations; so we must account for synchronization through symmetry.

\begin{lem}\label{lem:deadlock_free}
Consider $A, B, C$ $-$-arenas, $\bsigma : A \vdash B$ and $\btau : B
\vdash C$ visible causal strategies, $x^\bsigma \in \conf{\bsigma}$ and
$x^\btau \in \conf{\btau}$ with a symmetry $\theta : x^\bsigma_B \sym_B
x^\btau_B$.
Then, the composite bijection
\[
\varphi
\quad
:
\quad
x^\bsigma \parallel x^\tau_C 
\quad
\stackrel{\pr_\bsigma \parallel x^\btau_C}{\simeq} 
\quad
x^\bsigma_A \parallel x^\bsigma_B \parallel x^\btau_C 
\quad
\stackrel{x^\bsigma_A \parallel \theta \parallel x^\btau_C}{\simeq}
\quad
x^\bsigma_A \parallel x^\btau_B \parallel x^\btau_C
\quad
\stackrel{x^\bsigma_A \parallel \pr_\btau^{-1}}{\simeq}
\quad
x^\bsigma_A \parallel x^\btau\,,
\]
is secured, in the sense that the relation $\vartriangleleft$, defined
on the graph of $\varphi$ with
\[
(l, r) \vartriangleleft (l', r')
\]
whenever $l \mathrel{(<_\bsigma \parallel <_C)} l'$ or $r \mathrel{(<_A
\parallel <_\btau)} r'$, is acyclic\footnote{For $\theta$ an
identity, this exactly means that $x^\bsigma$ and $x^\btau$ satisfy
the
\emph{secured} condition of Definition \ref{def:caus_comp}.}.
\end{lem}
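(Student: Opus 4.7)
The plan is to extend the informal two-cycle analysis sketched in the paragraphs above Lemma \ref{lem:deadlock_free} to arbitrary cycles and to incorporate the symmetry $\theta$. First, the role of $\theta$ will be kept purely cosmetic: since symmetries of arenas are order-isomorphisms preserving polarities and labels, the justifier of an event in $B$ is transported by $\theta$ to the justifier of its image, and every appeal to Lemma \ref{lem:locality} or Lemma \ref{lem:just_pred} on the $\btau$-side goes through unchanged. Concretely, for each $p = (l,r) \in \varphi$ occurring in $B$ we may define $\just_\bsigma(p) = \just(l)$ and $\just_\btau(p) = \just(r)$ (and observe that their images through $\pr_\bsigma$ and $\pr_\btau$ still match through $\theta$, by the \emph{restriction} axiom of isomorphism families). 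From here on, $\theta$ is no longer mentioned.

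Next, I would normalise cycles. Since $\vartriangleleft = \vartriangleleft_\bsigma \cup \vartriangleleft_\btau$, grouping consecutive steps of the same side and using transitivity of $<_\bsigma, <_\btau, <_A, <_C$, any cycle becomes an alternating cycle of the form
\[
p_1 \vartriangleleft_\bsigma p_2 \vartriangleleft_\btau p_3 \vartriangleleft_\bsigma \dots \vartriangleleft_\btau p_{2k} \vartriangleleft_\bsigma p_1\,.
\]
A polarity analysis as in the two-cycle case, using courtesy of $\bsigma$ and $\btau$ plus the fact that $\vartriangleleft_\bsigma$-edges must cross distinct components, forces every $p_i$ to occur in $B$, with polarity alternating between $\labl$ and $\labr$. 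I would then attach to each such cycle the multiset $\mu = \{|[\pr_\bsigma(p_i)]| : 1 \le i \le 2k\}$ of the depths of its events in the forestial order of $B$, and order these multisets lexicographically.

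The core step is a descent: given a cycle with measure $\mu > 0$, produce another with strictly smaller $\mu$. Fix an edge, say $p_i \vartriangleleft_\bsigma p_{i+1}$ with $p_{i+1}$ non-minimal in $B$. By Lemma \ref{lem:locality}, $p_i$ and $\just_\bsigma(p_{i+1})$ are comparable for $<_\bsigma$. If $p_i <_\bsigma \just_\bsigma(p_{i+1})$, we replace the edge by $p_i \vartriangleleft_\bsigma \just_\bsigma(p_{i+1}) \vartriangleleft_\btau p_{i+2}$ (the second edge follows from the $\labl$-polarity of $p_{i+1}$ together with Lemma \ref{lem:just_pred} applied inside $\btau$, which factors the descent through the justifier); this strictly decreases the depth of the chosen vertex. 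The opposite case $\just_\bsigma(p_{i+1}) <_\bsigma p_i$ is the interesting one: by Lemma \ref{lem:just_pred} again, $p_{i+1}$ has a unique immediate predecessor in $\bsigma$, namely $\just_\bsigma(p_{i+1})$, so the edge $p_{i-1} \vartriangleleft_\btau p_i$ refines symmetrically and one obtains a cycle on the justifiers $\just(p_1), \dots, \just(p_{2k})$, again with strictly smaller $\mu$. The hard part, and the main obstacle, will be making rigorous the bookkeeping that shows that after each such local surgery the resulting cycle is still of the normalised alternating form and has strictly smaller measure, without any $\vartriangleleft_\bsigma$ or $\vartriangleleft_\btau$ edge degenerating into an identity.

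Since $\mu$ is well-founded, this descent eventually produces a cycle all of whose events are minimal in $B$. But a $\vartriangleleft_\bsigma$-edge between minimal events of $B$ would have its source minimal in $\bsigma$ (by receptivity of $\bsigma$ and courtesy, as there is nothing below in $B$ for it to depend on), which prevents it from being a strict inequality $<_\bsigma$. The same holds for $\vartriangleleft_\btau$, so no alternating cycle survives, contradicting the assumed existence of a $\vartriangleleft$-cycle and establishing securedness of $\varphi$.
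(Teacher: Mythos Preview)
Your proposal follows the same descent-via-justifiers strategy as the paper, but two steps need repair.

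First, the normalisation is underpowered. After collapsing consecutive $\vartriangleleft_\bsigma$- or $\vartriangleleft_\btau$-edges, the cycle need not lie entirely in $B$ with alternating $\labl/\labr$ polarities: a $\vartriangleleft_\bsigma$-edge may have both endpoints in $A$ (there is no ``cross components'' constraint on such edges), and nothing yet forbids a $p_i$ of the wrong polarity. The paper obtains both properties by taking a cycle minimal for the \emph{pair} $(\text{length}, \text{sum of depths})$: an excursion through $A$ or $C$ is shortcut by transitivity (reducing length), and a vertex of wrong polarity is replaced by its justifier (reducing depth, same length). Your descent measure therefore needs a length component as well, and the alternating-in-$B$ form emerges only \emph{after} these reductions, not before.

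Second, in your hard descent case you invoke Lemma~\ref{lem:just_pred} on $p_{i+1}$ to claim it has $\just_\bsigma(p_{i+1})$ as its unique immediate predecessor in $\bsigma$. But once polarities are normalised, the target of a $\vartriangleleft_\bsigma$-edge has polarity $\labl$, i.e.\ $l_{i+1}$ is \emph{positive} in $\bsigma$, and Lemma~\ref{lem:just_pred} applies only to negative events. The paper instead applies it to the \emph{source} $l_i$ (polarity $\labr$, hence negative in $\bsigma$): from $\just(l_{i+1}) <_\bsigma l_i^-$ one factors through the unique predecessor $\just(l_i)$, yielding $\just(p_{i+1}) \vartriangleleft_\bsigma \just(p_i)$---note the direction reversal. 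Only after establishing that this case holds for \emph{every} edge does one obtain the justifier cycle (running backwards) with strictly smaller depth. You also omit the boundary case $\just(p_{i+1}) = p_i$, which the paper treats separately by shortening the cycle via transitivity on the $\btau$-side.
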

\begin{proof}
We use \emph{polarities} $\labl, \labr$ or $-$ for elements of $\varphi$
(\emph{i.e.} pairs $(l,r)$) as in
Section~\ref{subsubsec:caus_conf_int}. We say $(l ,r)$ \emph{occurs
in  $A$, $B$ or $C$} in the obvious sense. We use a notion of
\emph{justifier} of a pair $(l ,r)$ non-minimal in $B$: 
as  $\theta$ is an order-isomorphism, $\pr_\bsigma(l)$ is minimal
in $B$ iff $\pr_\btau(r)$ is. If not, then $\just(l)$
and $\just(r)$ also match up to $\theta$ and $(\just(l), \just(r))$ must
be in $\varphi$ as well -- we write it $\just(l, r)$.  Suppose now 
$\vartriangleleft$ is \emph{not} secured, \emph{i.e.} there is $((l_1,
r_1),  \ldots , (l_n, r_n))$ with
\[
(l_1, r_1) \triangleleft  (l_2, r_2) \triangleleft   \ldots
\triangleleft  (l_n, r_n)
\triangleleft (l_1, r_1)\,,
\]
written $p_1 \triangleleft \dots \triangleleft p_n \triangleleft p_1$ --
the \textbf{length} of this cycle is $n$. First, \emph{w.l.o.g.} the cycle occurs
entirely in $B$. Assume it has minimal length. If it occurs
entirely in $A$ or $C$, then $(l_i)_{1\leq i \leq n}$ (resp. $(r_i)_{1
\leq i \leq n}$) is a cycle in $\bsigma$ (resp. $\btau)$, absurd. So,
it passes through $B$. Next, if \emph{e.g.}
\[
p_i^{(B)} \triangleleft p_{i+1}^{(C)} \triangleleft \dots \triangleleft
p_{j-1}^{(C)} \triangleleft p_j^{(B)}\,, 
\]
then it is easy to prove that $r_i < r_{i+1} < \dots <
r_{j-1} < r_j$, so that $p_i^{(B)} \triangleleft p_j^{(B)}$ and
the cycle can be shortened, contradicting its minimality -- the same
argument holds for $A$.

We restrict to cycles in $B$. The \textbf{depth} of $(l,r)$
is the length of the chain of justifiers to $(l_0, r_0)$ minimal in $B$
-- the depth of $(l_0, r_0)$ minimal in $B$ is $0$. The
\emph{depth} of the cycle is
\[
d = \sum_{1\leq i \leq n} \depth(l_i,r_i)\,,
\]
and we assume \emph{w.l.o.g.} the cycle minimal for the product order on
pairs $(n, d)$. In this proof, all arithmetic computations on indices
are done modulo $n$ (the length of the cycle).

Next, let us write $p_i \triangleleft_\bsigma p_{j}$ if $l_i
\mathrel{(<_\bsigma \parallel <_C)} l_j$ and $p_i \triangleleft_\btau
p_j$ symmetrically.
We notice that $\triangleleft_\bsigma$ and $\triangleleft_\btau$
alternate
-- if not we shorten the cycle by transitivity, contradicting
minimality. We assume \emph{w.l.o.g.} that $p_{2k} \triangleleft_\bsigma
p_{2k+1}$ and $p_{2k+1} \triangleleft_\btau p_{2k+2}$ for all $k$. But
then, $\pol(p_{2k}) = \labr$ and $\pol(p_{2k+1}) = \labl$ so that polarity
in the cycle is alternating as well. Indeed, assume \emph{e.g.}
\[
p_{2k+1} \triangleleft p_{2k+2}^\labl \triangleleft p_{2k+3}
\]
with $p_{2k+1} \triangleleft_\btau p_{2k+2}$ and $p_{2k+2}
\triangleleft_\bsigma p_{2k+3}$. Then,
$r_{2k+1} <_{A \parallel \btau} r_{2k+2}^-$. From its polarity,
$r_{2k+2}^-$
cannot be minimal in $B$. By Lemma \ref{lem:just_pred}, it has a
unique predecessor $\just(r_{2k+2})
\imc_{A\parallel \btau} r_{2k+2}$, so $r_{2k+1} <_{A \parallel \btau} r_{2k+2}^-$
factors as $r_{2k+1} <_{A \parallel \btau} \just(r_{2k+2}) \imc_{A
\parallel \btau} r_{2k+2}^-$.
Accordingly, $p_{2k+1} \triangleleft_\btau \just(p_{2k+2})
\triangleleft_\btau p_{2k+2}$ -- but dependencies in the game are
respected by both strategies, so $\just(p_{2k+2}) \triangleleft_\bsigma
p_{2k+2}$. So $\just(p_{2k+2}) \triangleleft_\bsigma
p_{2k+3}$, and we can replace the cycle fragment with
\[
p_{2k+1} \triangleleft \just(p_{2k+2}) \triangleleft p_{2k+3}
\]
which is still in $B$, has the same length but strictly smaller depth,
contradiction. The symmetric argument applies for $\bsigma$, so any
$p_{2k+1}$ has polarity $\labl$ and any $p_{2k+2}$ has polarity $\labr$.

Now we show the cycle cannot have an event minimal in $B$.
Seeking a contradiction, if
\[
p_{2k+1}^\labl \triangleleft_\btau p_{2k+2}^\labr \triangleleft_\bsigma
p_{2k+3}^\labl
\]
with $p_{2k+2}$ minimal in $B$, then $l_{2k+2} <_{\bsigma \parallel C}
l_{2k+3}$ with $l_{2k+2}$ minimal in $B$, but then $\pr_{\bsigma
\parallel C}(l_{2k+2}) <_{A\parallel B \parallel C} \pr_{\bsigma
\parallel C}(l_{2k+3})$. Indeed, if $\pr_{\bsigma \parallel
C}(l_{2k+2})$ is
minimal in $B$, $l_{2k+2}$ is (by courtesy) minimal in $\bsigma
\parallel C$. Likewise, since $l_{2k+3}$ occurs in $B$, $\pr_{\bsigma
\parallel C}(l_{2k+3})$ depends (for $\leq_{A\parallel B \parallel C}$)
on a unique $\pr_{\bsigma \parallel C}(l)$ minimal in $B$, where $l$
must
also be minimal in $\bsigma$. But since $\bsigma$ is pointed, $l_{2k+3}$
has a unique minimal dependency, hence $l = l_{2k+2}$ and $\pr_{\bsigma
\parallel C}(l_{2k+2}) <_{A\parallel B \parallel C} \pr_{\bsigma
\parallel C}(l_{2k+3})$ as claimed. But then, $r_{2k+2} <_{A\parallel
\btau} r_{2k+3}$, so
$p_{2k+1}^\labl \triangleleft_\btau p_{2k+2}^\labr \triangleleft_\btau
p_{2k+3}^\labl$
and again the cycle can be shortened by transitivity, contradicting its
minimality.

Now, we have proved a minimal cycle has a canonical form where the
strategies alternate, polarity alternates, all events are in $B$ and
non-minimal. 
Since $p_{2k}^\labr
\triangleleft_\bsigma p_{2k+1}^\labl$, writing $p = (l,r) =
\just(p_{2k+1})$, we have that $l = \just(l_{2k+1})$ as well. From Lemma
\ref{lem:locality}, we know that $l = \just(l_{2k+1})$ is comparable
with $l_{2k}$ in $\bsigma \parallel C$ (by visibility of $\bsigma$). If
$\just(l_{2k+1}) = l_{2k}$, then $r_{2k} \triangleleft_\btau r_{2k+1}$
as well. This gives $p_{2k-1} \triangleleft_\btau p_{2k+2}$,
contradicting minimality of the cycle. So $\just(l_{2k+1})\neq l_{2k}$.
Similarly, $\just(r_{2k+2})$ is comparable with $r_{2k+1}$ in $A
\parallel \btau$, but distinct.

Assume that we have $p_{2k} \triangleleft_\bsigma \just(p_{2k+1})$ for
some $k$. Since $\just(p_{2k+1}) \triangleleft_\btau p_{2k+1}
\triangleleft_\btau p_{2k+2}$ we can replace the cycle fragment $p_{2k}
\triangleleft p_{2k+1} \triangleleft p_{2k+2}$ with the cycle fragment
\[
p_{2k} \triangleleft \just(p_{2k+1}) \triangleleft p_{2k+2}
\]
which has the same length but smaller depth, absurd.
So, $\just(p_{2k+1}) \triangleleft_\bsigma p_{2k}$ for all $k$
(symmetrically, $\just(p_{2k+2}) \triangleleft_\btau p_{2k+1}$ for all
$k$). In particular,
$\just(l_{2k+1}) <_{\bsigma \parallel C} l_{2k}^-$
but by Lemma \ref{lem:just_pred}, $l_{2k}^-$ has a unique
immediate predecessor $\just(l_{2k})$. So, 
$\just(p_{2k+1}) \triangleleft_\bsigma \just(p_{2k})$
for all $k$; and likewise $\just(p_{2k+2}) \triangleleft_\btau
\just(p_{2k+1})$ for all $k$.
So we can replace the full cycle with
\[
\just(p_n) \triangleleft \just(p_{n-1}) \triangleleft  \ldots
\triangleleft \just(p_1) \triangleleft \just(p_n)
\]
which has the same length but smaller depth, absurd.
\end{proof}
 
Despite its relatively discreet role in the development, we regard the
deadlock-free lemma as one of our main contributions. It is a
powerful observation with far-reaching consequences in linking game
semantics and relational models. It also gives a lot of weight to the
notion of visibility, as a simple, well-behaved and fairly general
under-approximation of innocence.

\subsubsection{Preservation of composition}
For preservation of the positional collapse by composition, we
need one further lemma:
that any complete position is reachable by a \emph{well-bracketed} play.

\begin{lem}\label{lem:reach_inn_wb}
Take $\bsigma : A$ visible well-bracketed on $-$-arena $A$, 
$x^\bsigma \in \conf{\bsigma}$ with $\pr_\bsigma(x)$ complete.
Then, there is $t \in \NAlt(\bsigma)$ such that $\ev{t} = x^\bsigma$ and
$\pr_{\bsigma}(t)$ is well-bracketed.
\end{lem}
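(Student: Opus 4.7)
The plan is to construct $t$ by greedy scheduling and prove termination by induction on the number of remaining events. First, I would strengthen the statement to an inductive hypothesis: for any $t_0 \in \NAlt(\bsigma)$ with $\ev{t_0} \subseteq x^\bsigma$ and $\pr_\bsigma(t_0)$ well-bracketed, $t_0$ can be extended to some $t \in \NAlt(\bsigma)$ with $\ev{t} = x^\bsigma$ and $\pr_\bsigma(t)$ well-bracketed. The lemma then corresponds to the case $t_0 = \varepsilon$.

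The induction is on $|x^\bsigma \setminus \ev{t_0}|$. For the inductive step, I would pick an event $e \in x^\bsigma \setminus \ev{t_0}$ that is $\leq_\bsigma$-minimal in the remaining events, such that appending $e$ preserves well-bracketing of $\pr_\bsigma(t_0)$. The selection rule gives priority to answers: if some $\leq_\bsigma$-minimal remaining event is an answer $a$ to a question $q \in \ev{t_0}$ for which all questions in $\pr_\bsigma(t_0)$ justified by $q$ are already answered, schedule $a$; otherwise, schedule any $\leq_\bsigma$-minimal remaining question whose justifier is unanswered in $\pr_\bsigma(t_0)$ (opening a new call), or which is initial. Clearly, appending such an $e$ preserves well-bracketing, and the induction hypothesis then finishes the construction.

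The main obstacle, then, is proving that a valid $e$ always exists, \emph{i.e.}, the greedy procedure never gets stuck while $x^\bsigma \setminus \ev{t_0} \neq \emptyset$. One might worry about the following scenario: every $\leq_\bsigma$-minimal remaining event is a question whose justifier is already answered (violating \emph{fork}), or an answer to a question with remaining pending sub-questions (violating \emph{wait}). To rule this out, I would combine four ingredients: completeness of $\pr_\bsigma(x^\bsigma)$, ensuring every question in $x^\bsigma$ has its matching answer in $x^\bsigma$ (via \emph{answer-linear}); well-bracketing of $\bsigma$ itself, which prevents Player's positive events from being the first to create well-bracketing failures (Definition \ref{def:nalt_strat_wb}); \emph{courtesy} of $\bsigma$, guaranteeing that immediate causal links $\imc_\bsigma$ across polarities project to immediate causal links in $A$; and the \emph{answer-closing} axiom for arenas, ensuring answers are maximal.

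The crux of the non-stuck argument would be to consider a \emph{deepest} pending question $q$ in $\pr_\bsigma(t_0)$ (deepest with respect to the bracketing structure induced on $t_0$). By completeness, its answer $a_q$ lies in $x^\bsigma$; by $q$ being deepest, no question in $\pr_\bsigma(t_0)$ justified by $q$ is unanswered. Then I would trace the $\leq_\bsigma$-dependencies of $a_q$ outside $\ev{t_0}$ and argue, using courtesy and well-bracketing of $\bsigma$, that these dependencies themselves form a scheduling problem which either immediately yields a schedulable question (with justifier unanswered), or reduces to the same situation at a deeper question — and since $x^\bsigma$ is finite and the "depth" strictly decreases, this terminates, producing a valid $e$. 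This is the step where the hypotheses on $\bsigma$ do real work, and where I expect the technical details to be most delicate.
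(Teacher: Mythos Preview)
Your approach is genuinely different from the paper's, and the paper's argument is worth knowing because it sidesteps exactly the delicate combinatorics you flag at the end. Rather than scheduling greedily, the paper builds an auxiliary visible well-bracketed strategy $\btau : x^\bsigma_A \vdash \tunit$ whose causal order \emph{encodes} the well-bracketing discipline: it adds the immediate dependencies $(1,a_1)^{-,\An} \imc_\btau (1,a_2)^{+,\An}$ whenever $\just(a_2) \imc_A \just(a_1)$, plus links tying the initial question and final answer to $\tunit$. One then applies the deadlock-free lemma (Lemma~\ref{lem:deadlock_free}) to $\bsigma$ and $\btau$: since both are visible, their synchronization is automatically secured, and any linearization of the interaction projects to a well-bracketed $t$ on the $\bsigma$ side (well-bracketing of both $\bsigma$ and $\btau$ forces this). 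The whole argument is three lines once the deadlock-free lemma is in hand.

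Your greedy argument may well be completable, but as written it has a gap: you never invoke \emph{visibility} of $\bsigma$, relying only on courtesy and well-bracketing. Yet visibility is the hypothesis that does the heavy lifting in the paper's proof (it is what makes the deadlock-free lemma applicable), and your non-stuck argument is precisely the place where one would expect to need it. Concretely, the step ``trace the $\leq_\bsigma$-dependencies of $a_q$ outside $\ev{t_0}$ and argue \dots\ that these dependencies form a scheduling problem which either yields a schedulable question or reduces to a deeper question'' is where the structure of gccs in a visible strategy (Lemma~\ref{lem:gcc_pview}, and the locality property of Lemma~\ref{lem:locality}) would be needed to control how justifiers sit relative to the causal chain. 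Without visibility, an event blocking $a_q$ could have its justifier already answered, with no deeper pending question to recurse on. So either you must bring visibility into that step explicitly, or your argument is incomplete.
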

\begin{proof}
The idea is simple: since $\bsigma$ is well-bracketed, it suffices to
show that $\pr_\bsigma(x)$ is reachable by a well-bracketed Opponent.
But we can
set up causal constraints forcing Opponent to be well-bracketed,
formulated as a visible causal strategy, and apply the deadlock-free
lemma.

For $x \in \conf{\bsigma}$ s.t. $\pr_\bsigma(x)$ complete,
consider $x^\bsigma_A$ as an arena with trivial symmetry. We
build $\btau : x^\bsigma_A \vdash \tunit$ as
$\ev{\btau} = \ev{x^\bsigma_A \vdash \tunit}$, and $\leq_\btau$ as the
order of the arena enriched with:
\[  
\begin{array}{rclcl}
(2, \qu)^- &\imc_\btau& (1, q)^{+,\Qu} &\qquad& \text{if $q$ is an
initial question in $A$},\\
(1, a)^{-,\An} &\imc_\btau& (2, \done)^+ && \text{if $a$ answers an
initial question in $A$},\\
(1, a_1)^{-,\An} &\imc_\btau& (1, a_2)^{+,\An} && \text{if $\just(a_2)
\imc_A \just(a_1)$,} 
\end{array}
\]
resulting in $\btau$ visible well-bracketed. By Lemma
\ref{lem:deadlock_free}, there is a linear ordering of $\ev{x^\bsigma_A
\vdash \tunit}$ compatible with the constraints of both $\bsigma$ and
$\btau$. As both are well-bracketed, its projection on the left gives $s
\in \NAltStrat(\bsigma)$ such that $\ev{s} = x^\bsigma_A$ and $s$
well-bracketed as required.
\end{proof}                          

\begin{prop}\label{prop:vis_coll}
Consider $\bsigma : A \vdash B$ and $\btau : B \vdash C$ causal strategies.

If $\bsigma$ and $\btau$ are well-bracketed and visible, then
$\coll (\btau \odot \bsigma) = (\coll \btau) \odot (\coll \bsigma)$.
\end{prop}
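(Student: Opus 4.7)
The plan is to establish the equality by proving the two inclusions separately, with each direction invoking one of the two hypotheses: visibility (via the deadlock-free lemma) for $\supseteq$, and well-bracketing (via a justifier-chain argument) for $\subseteq$.

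For $\supseteq$, I would start from $\x_A \parallel \x_C \in (\coll \btau) \odot (\coll \bsigma)$ realised via a mediator $\x_B \in \coll B$, with witnesses $x^\bsigma \in \conf{\bsigma}$ and $x^\btau \in \conf{\btau}$ satisfying $\pr_\bsigma(x^\bsigma) = x_A \parallel x^\bsigma_B$ and $\pr_\btau(x^\btau) = x^\btau_B \parallel x_C$, together with a symmetry $\theta : x^\bsigma_B \sym_B x^\btau_B$. The key step is Lemma \ref{lem:deadlock_free}, which uses visibility of both strategies to guarantee that the composite bijection determined by $\theta$ is secured. With securedness in hand, Proposition \ref{prop:sync_sym} supplies reindexed $y^\bsigma \sym_\bsigma x^\bsigma$ and $y^\btau \sym_\btau x^\btau$ that are causally compatible on the nose; restricting the resulting configuration of $\btau \inter \bsigma$ to its visible events yields $z \in \conf{\btau \odot \bsigma}$, with $\pr_{\btau\odot\bsigma}(z) = y^\bsigma_A \parallel y^\btau_C$ complete (completeness being stable under symmetry), so $\x_A \parallel \x_C \in \coll(\btau \odot \bsigma)$.

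For $\subseteq$, I would take $z \in \conf{\btau \odot \bsigma}$ with $\pr(z) = x_A \parallel x_C$ complete; by Proposition \ref{prop:main_interaction}, $z$ corresponds to a causally compatible pair $(x^\bsigma, x^\btau)$ with common $B$-projection $x_B$. The goal reduces to showing $x_B$ is itself complete, since then $[x_B]_\sym$ mediates $\x_A \parallel \x_C$ in the relational composition. The heart of the argument is a justifier-chain trace: assuming for contradiction an unanswered question $q \in x_B$, one follows the chain $q_0 = q$, $q_{i+1} = \just(q_i)$ in $\NAlt(\btau \inter \bsigma)$ up to its termination at $q_n$ whose projection is minimal in $C$, and propagates ``unansweredness'' along the chain via the \emph{wait} condition --- if $q_{i+1}$ were answered, \emph{wait} would force $q_i$ to already be answered. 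This contradicts completeness of $x_C$.

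The main obstacle is the technical prerequisite for the trace: producing a linearization $u \in \NAlt(\btau \inter \bsigma)$ of $x^\btau \inter x^\bsigma$ along which both \emph{fork} and \emph{wait} hold relative to the full arena $A \parallel B \parallel C$. Since $\btau \odot \bsigma$ is visible (Proposition \ref{prop:comp_visible}) and well-bracketed, Lemma \ref{lem:reach_inn_wb} furnishes a well-bracketed linearization of $z$, which I plan to lift to $u$ by inserting the $B$-events coherently; here the fact that justifiers in the interaction never cross between the $\bsigma$- and $\btau$-components, combined with individual well-bracketing of each strategy, should ensure that the lifted $u$ inherits fork and wait on the full arena. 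Making this lifting precise --- in particular verifying that the insertion can be arranged to preserve both conditions --- is where I expect the bookkeeping to be most delicate.
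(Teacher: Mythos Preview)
Your proposal is correct and follows essentially the same strategy as the paper for both inclusions. The one place where the paper is cleaner is the lifting step you flag as delicate in the $\subseteq$ direction: rather than inserting $B$-events by hand, the paper invokes Proposition \ref{prop:nalt_fonc} (functoriality of $\NAltStrat$) to obtain an interaction witness $u$ with $u \restrict A,C = s$, and then argues directly by induction that $u$ is well-bracketed since $s$ is and both $\bsigma$, $\btau$ are; from there, since the initial question is answered and $u$ is well-bracketed, \emph{all} questions in $u$ are answered, giving completeness of $x_B$ in one stroke --- this is equivalent to your justifier-chain trace but packaged more succinctly.
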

\begin{proof}
$\subseteq$. Consider $(\x_A, \x_C) \in \coll (\btau \odot \bsigma)$.
By definition, $\x_A \parallel \x_C = \coll (\pr_{\btau \odot
\bsigma}(x^\btau \odot x^\bsigma))$ for $x^\btau \odot x^\bsigma \in
\conf{\btau \odot \bsigma}$ with $\pr_{\btau \odot \bsigma}(x^\btau
\odot x^\bsigma) = x_A \parallel x_C$ complete, and $\x_A = \coll x_A,
\x_C = \coll x_C$. By Lemma \ref{lem:reach_inn_wb}, there is $s \in
\NAltStrat(\btau \odot \bsigma)$ well-bracketed s.t. $\ev{s} =
\pr_{\Lambda(\btau \odot \bsigma)}(x^\btau \odot x^\bsigma)$. By
Proposition \ref{prop:nalt_fonc}, there is $u \in \NAltStrat(\btau)
\inter \NAltStrat(\btau)$ s.t. $u \restrict A, C = s$. Now, since
$\bsigma$ and $\btau$ are well-bracketed and $s$ is well-bracketed, it
is direct by induction that $u$ is well-bracketed.
Since $x_A \parallel x_C = \pr_{\btau \odot \bsigma}(x^\btau\odot
x^\bsigma)$ is complete, in particular the initial question has an
answer -- but as $u$ is well-bracketed, all questions in
$u$ are answered.  Writing $\pr_\bsigma(x^\bsigma) = x_A \parallel x_B$
and $\pr_\btau(x^\btau) = x_B \parallel x_C$, $x_B$ is
complete as well. But then, $[x_B]_\sym = \x_B \in \coll B$, so
$x^\bsigma$ witnesses $(\x_A, \x_B) \in \coll \bsigma$ and $x^\btau$
witnesses $(\x_B, \x_C) \in \coll \btau$; so $(\x_A, \x_C) \in (\coll
\btau) \odot (\coll \bsigma)$.

$\supseteq$.
Assume we have symmetry classes of complete configurations $\x_A,
\x_B$ and $\x_C$ s.t.
\[
(\x_A, \x_B) \in \coll \bsigma
\qquad
\qquad
(\x_B, \x_C) \in \coll \btau\,,
\]
so there are $x^\bsigma \in \conf{\bsigma}$ and $x^\btau
\in \conf{\btau}$ with $\pr_\bsigma(x^\bsigma) = x^\bsigma_A \parallel
x^\bsigma_B$, $\pr_\btau(x^\btau) = x^\btau_B \parallel x^\btau_C$, with
$x^\bsigma_A \in \x_A$, $x^\bsigma_B, x^\btau_B \in \x_B$, and
$x^\btau_C \in \x_C$. In particular, there is $\theta :
x^\bsigma_B \sym_B x^\btau_B$. Now,
\[
\varphi
\quad
:
\quad
x^\bsigma \parallel x^\tau_C 
\quad
\stackrel{\pr_\bsigma \parallel x^\btau_C}{\simeq} 
\quad
x^\bsigma_A \parallel x^\bsigma_B \parallel x^\btau_C 
\quad
\stackrel{x^\bsigma_A \parallel \theta \parallel x^\btau_C}{\simeq}
\quad
x^\bsigma_A \parallel x^\btau_B \parallel x^\btau_C
\quad
\stackrel{x^\bsigma_A \parallel \pr_\btau^{-1}}{\simeq}
\quad
x^\bsigma_A \parallel x^\btau\,,
\]
is secured by Lemma \ref{lem:deadlock_free}. They only match up to
symmetry, but by Proposition
\ref{prop:sync_sym}, there are $y^\bsigma \sym_\bsigma x^\bsigma$ and
$y^\btau \sym_\btau x^\btau$ such that $\pr_\bsigma(y^\bsigma) =
y^\bsigma_A \parallel y^\bsigma_B$, $\pr_\btau(y^\btau) = y^\btau_B
\parallel y^\btau_C$ with $y^\bsigma_B = y^\btau_B$ -- so $y^\btau
\odot y^\bsigma \in \conf{\btau \odot \bsigma}$. But since
$y^\bsigma \sym_\bsigma x^\bsigma$ and $y^\btau \sym_\btau x^\btau$, we
also know that $y^\bsigma_A \in \x_A$ and $y^\btau_C \in \x_C$ still, so
$y^\btau \odot y^\bsigma$ witnesses $(\x_A, \x_C) \in \coll (\btau \odot
\bsigma)$ as required.
\end{proof}

So positional equivalence of well-bracketed innocent causal
strategies is preserved under composition. The other constructions pose
no challenge. Altogether, we get:

\begin{cor}\label{cor:equiv_cong}
There is a Seely category $\CGwbvis/\equiv$ of $-$-arenas and
positional equivalence classes of visible well-bracketed strategies.
Finally, the canonical functor
\[
\CGwbvis \to \CGwbvis/\equiv
\]
preserves the interpretation of $\PCFpar$.
\end{cor}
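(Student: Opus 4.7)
The plan is to establish two things: that $\CGwbvis$ is closed under the Seely operations, and that positional equivalence $\equiv$ is a congruence with respect to these operations. For closure, stability of well-bracketing under composition follows from the functor of Proposition \ref{prop:nalt_fonc} and Proposition \ref{prop:nalt_smcc}, while stability of visibility under composition is Proposition \ref{prop:comp_visible}. For tensor, pairing, exponential, and the various structural morphisms (associators, unitors, symmetries, dereliction, digging, Seely isomorphisms), both visibility and well-bracketing can be checked by direct inspection: these are obtained either by disjoint union of strategies (which does not create new grounded causal chains nor new chronological possibilities) or by relabeling copycat strategies, which are forestial, hence trivially visible, and which propagate answers faithfully, hence well-bracketed.

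For the congruence property, the critical case is composition, which is exactly Proposition \ref{prop:vis_coll}. For the tensor product $\bsigma_1 \tensor \bsigma_2$ and pairing $\tuple{\bsigma_1, \bsigma_2}$, complete positions decompose as pairs of complete positions of each component and symmetries act componentwise, so that $\coll(\bsigma_1 \tensor \bsigma_2)$ depends only on $\coll \bsigma_1$ and $\coll \bsigma_2$, and similarly for pairing. For the functorial action $\oc \bsigma$ of the exponential, a complete position is a family of complete positions of $\bsigma$ indexed by copy indices, quotiented by arbitrary reindexings (following the symmetries of $\oc$ given in Section \ref{subsubsec:constr_arenas}); this again only depends on $\coll \bsigma$. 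The structural isomorphisms, being relabeled copycats, have singleton positional collapses as relations, so composition with them manifestly preserves $\equiv$. Once all operations are shown to factor through $\coll$, Definition \ref{def:pos_eq} immediately yields that $\equiv$ is a congruence, giving the quotient Seely category $\CGwbvis/\equiv$.

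For the second claim, we must verify that the interpretation of $\PCFpar$ actually lands in $\CGwbvis$, so that the canonical functor factors through it. Every sequential $\PCF$ combinator yields a forestial strategy, trivially visible and well-bracketed; the parallel let strategy $\mathsf{plet}_{\tx,\ty}$ of Figure \ref{fig:plet} can be inspected to see that every gcc traces a single linear call-return discipline with all moves pointing within the chain, hence is visible and well-bracketed. Recursion is handled by directed suprema of visible well-bracketed strategies; visibility and well-bracketing, being universally quantified conditions on gccs and on plays, are both preserved by $\cleq$-directed suprema. The canonical quotient functor $\CGwbvis \to \CGwbvis/\equiv$ preserves the Seely structure on the nose by construction of the quotient, and therefore preserves the inductive interpretation term by term. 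The only genuinely hard part of the whole argument, already discharged, is preservation of $\equiv$ under composition (Proposition \ref{prop:vis_coll}), whose core is the deadlock-free lemma (Lemma \ref{lem:deadlock_free}) and which depends essentially on visibility; the rest is bookkeeping.
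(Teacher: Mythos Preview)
Your proposal is correct and follows essentially the same approach as the paper, which treats this corollary as an immediate consequence of Proposition~\ref{prop:vis_coll} together with the remark that ``the other constructions pose no challenge.'' You have simply spelled out what those other constructions are and why they are routine, and correctly identified the deadlock-free lemma as the only substantive ingredient.
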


\subsubsection{On the relational model}\label{subsubsec:relational}
The above yields a functor to $\Rel$:

\begin{prop}\label{prop:coll_main}
The positional collapse defines a functor $\coll(-) : \CGwbvis \to
\Rel$.
\end{prop}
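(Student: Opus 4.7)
The plan is to verify the two functoriality conditions for $\coll(-) : \CGwbvis \to \Rel$. That $\coll A$ is a set (for arenas $A$) and $\coll \bsigma \subseteq \coll A \times \coll B$ is a relation (for $\bsigma \in \CGwbvis(A,B)$) is immediate from Definition \ref{def:pos_strat}, so the bulk of the work is preservation of composition and identities.

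For preservation of composition, almost nothing remains to do. Proposition \ref{prop:vis_coll} establishes precisely $\coll(\btau \odot \bsigma) = (\coll \btau) \odot (\coll \bsigma)$ under the assumption that $\bsigma$ and $\btau$ are visible and well-bracketed, which is exactly the hypothesis satisfied by morphisms of $\CGwbvis$. So for this clause I would simply invoke Proposition \ref{prop:vis_coll}; the deep content (deadlock-freedom via Lemma \ref{lem:deadlock_free} together with the well-bracketed reachability of Lemma \ref{lem:reach_inn_wb}) was already absorbed there.

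The remaining step is to check that $\coll \bcc_A = \id_{\coll A}$ for every $-$-arena $A$. My plan is to compute complete configurations of $\bcc_A$ directly from Definition \ref{def:copycat}. By Lemma \ref{lem:cc_pcov}, the $+$-covered configurations of $\bcc_A$ are exactly those of the form $x \parallel x$ for $x \in \conf{A}$. The point to verify is that any complete $y \in \conf{\bcc_A}$ is automatically $+$-covered and hence of this shape: the defining immediate causal link $(i,a)^- \imc_{\bcc_A} (1-i,a)^+$ means that any unforwarded negative event in $y$ presents itself as a question that has not been answered in $y$ (because the corresponding answer would force the presence of its positive counterpart), contradicting completeness. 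A symmetric argument handles answers on the positive side. Consequently complete configurations of $\bcc_A$ are in bijection with complete $x \in \conf{A}$ via $x \mapsto x \parallel x$, and after quotienting by symmetry we get $\coll \bcc_A = \{(\x,\x) \mid \x \in \coll A\} = \id_{\coll A}$.

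The hardest part of this proposition is really not in this proposition — it is Lemma \ref{lem:deadlock_free} and its packaging into Proposition \ref{prop:vis_coll}, without which composition preservation would fail dramatically (as discussed around Figure \ref{fig:deadlock}). Here the only new ingredient is the short computation of $\coll \bcc_A$, which is routine once one unfolds the structure of copycat.
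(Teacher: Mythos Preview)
Your proposal is correct and follows essentially the same approach as the paper: the paper's proof simply says ``Composition is Proposition \ref{prop:vis_coll}, while for identity it is a direct verification.'' You invoke Proposition \ref{prop:vis_coll} for composition exactly as the paper does, and your computation of $\coll \bcc_A$ via Lemma \ref{lem:cc_pcov} is precisely what that ``direct verification'' amounts to. One small remark: your phrase ``a symmetric argument handles answers on the positive side'' is a bit opaque for the case where the maximal negative event is itself an answer --- the clean argument there goes through the justifier $(i,a_0)^{+,\Qu}$, its copycat predecessor $(3-i,a_0)^{-,\Qu}$, and completeness forcing an answer that either equals $(3-i,a)$ (contradicting maximality) or conflicts with $(i,a)$ --- but this is routine and well within the spirit of the paper's ``direct verification''.
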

\begin{proof}
Composition is Proposition \ref{prop:vis_coll}, while for
identity it is a direct verification.
\end{proof}

Unfortunately, this functor is not compatible with the Seely category
structure. For negative
arenas $A, B$ we do have $\coll (A \tensor B) \iso (\coll A) \times
(\coll B)$; but for instance $\coll (A \with B) \not \iso (\coll A) +
(\coll B)$ because $\coll (A\with B)$ includes the empty position, which
we do not have enough information to send to the left or to the right.
Likewise, $\coll (\oc A) \not \iso \mathcal{M}_f(\coll A)$.  Considering
only non-empty configurations does not solve the issue, as we lose
$\coll (A \tensor B) \iso (\coll A) \times (\coll B)$.

This mismatch can be mitigated by focusing on the cartesian closed
Kleisli categories. Say that an $-$-arena $A$ is \textbf{strict} if all
its minimal (necessarily negative) events are in pairwise conflict --
all types and contexts of $\PCF$ are interpreted as strict $-$-arenas.
If $A$ is an arena, we write $\collo A$ for the set of non-empty
complete positions of $A$. Then: 

\begin{lem}\label{lem:rel_conf}
For $A, B, C, D$ $-$-arenas with $C$ strict there are bijections:
\[
\begin{array}{rcl}
\coll (A \tensor B) &\iso& (\coll A) \tensor (\coll B)\\
\collo (A \lin C) &\iso& (\coll A) \times (\collo C)
\end{array}
\qquad
\begin{array}{rcl}
\collo(A \with B) &\iso& \collo A + \collo B\\
\coll (\oc C) &\iso& \mathcal{M}_f(\collo C)\,.
\end{array}
\]
\end{lem}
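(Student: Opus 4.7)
The plan is to treat each of the four bijections in turn, working directly from the definitions of the arena constructions in Section~\ref{subsubsec:basic_constr} and Section~\ref{subsubsec:gen_arr}, and the polarised isomorphism families from Section~\ref{subsubsec:constr_arenas}. In each case I would exhibit a concrete map on configurations, show that it descends to complete configurations, check that it respects symmetries on the nose, and finally verify that the induced map on $\coll(-)$ (or $\collo(-)$) is a bijection. Preservation of the Question/Answer labeling through the arena constructions ensures that ``complete'' is computed componentwise, which is what makes each decomposition work.

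For $\coll(A \tensor B) \iso (\coll A) \tensor (\coll B)$, I would observe that any $x \in \conf{A \parallel B}$ decomposes uniquely as $x_A \parallel x_B$ with $x_A \in \conf{A}$ and $x_B \in \conf{B}$, that the Q/A labeling on $A \parallel B$ is inherited componentwise, so $x$ is complete iff both $x_A$ and $x_B$ are, and that by construction of $\tilde{A \parallel B}$ a symmetry decomposes as $\theta_A \parallel \theta_B$ with $\theta_A \in \tilde A$ and $\theta_B \in \tilde B$. The bijection on symmetry classes is then immediate. For $\collo(A \with B) \iso \collo A + \collo B$, the key point is that the conflict added to form $A \with B$ puts every event of $A$ in conflict with every event of $B$, so any non-empty consistent $x$ lives entirely in one component; completeness and symmetries are again inherited from that component.

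For $\collo(A \lin C) \iso (\coll A) \times (\collo C)$ with $C$ strict, the argument uses the definition of $A \lin C$ and Lemma~\ref{lem:conflict_lin}: a non-empty down-closed consistent $x \subseteq \ev{A \lin C}$ must project onto $x_C \in \conf C$, and if $x$ is complete then the initial question of $C$ must be answered, hence $x_C$ contains at least one minimal event of $C$. Strictness of $C$ says that the minimal events of $C$ are in pairwise conflict, so exactly one such $c_0$ occurs, and consequently (by the definition of $A \lin C$ as $(\parallel_{c \in \min C} A)^\perp \parallel C$ tamed by conflict) the events of $x$ on the $A$-side all lie in the single copy of $A$ indexed by $c_0$. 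This gives the map sending $x$ to the pair $(x_A, x_C)$; it is clearly a bijection from configurations, completeness decomposes componentwise, and the polarised symmetries on $A \lin C$ reduce to a symmetry of $A$ on the selected copy together with a symmetry of $C$ sending $c_0$ to some matching $c_0'$, yielding the bijection on positions. For $\coll(\oc C) \iso \mathcal M_f(\collo C)$, I would decompose $x \in \conf{\oc C}$ as $\parallel_{n \in \mathbb N} x_n$ with $x_n \in \conf C$ and $x_n = \emptyset$ for all but finitely many $n$, note that completeness decomposes componentwise and that strictness forces each non-empty $x_n$ to be already a single ``tree'' rooted at one minimal event (so each $x_n$ represents an element of $\collo C$), and finally use the description of $\tilde{\oc C}$ from Definition~\ref{def:bang}: a symmetry in $\oc C$ is a permutation $\pi$ of copy indices together with symmetries $\theta_n : x_n \sym_C y_{\pi(n)}$, which is exactly the data identifying two such families up to reindexing. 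The quotient by $\sym_{\oc C}$ is therefore the finite multiset of $\sym_C$-equivalence classes of the non-empty $x_n$, that is an element of $\mathcal M_f(\collo C)$.

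The main obstacle I expect is the symmetry bookkeeping in the last two cases: it has to be checked that the symmetries of $A \lin C$ and $\oc C$, as specified via the polarised decompositions of Section~\ref{subsubsec:constr_arenas}, recover \emph{exactly} the equivalence used on the right-hand side (cartesian product of symmetry classes, respectively finite multiset of symmetry classes). Once one unwinds the definition of $\tilde{A\lin C}$ via $\chi_{A,B}$ and of $\tilde{\oc C}$ as reindexings, this reduces to the observation that minimal events of $C$ can only be matched to minimal events of $C$ (so the choice of $c_0$ is symmetry-invariant in the arrow case) and that a symmetry of $\oc C$ is uniquely determined by a bijection between non-empty copies plus per-copy symmetries; the rest is routine.
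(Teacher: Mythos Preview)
The paper does not actually prove this lemma; it is stated without proof and used immediately afterwards (the surrounding text just says ``such a bijection is easily established for ground types'' and defers details to a forthcoming paper). Your plan is the natural one and is correct: decompose configurations along the arena construction, check that completeness and symmetry decompose accordingly, and read off the bijection on symmetry classes.

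Two minor remarks. In the $A \lin C$ case, your sentence ``if $x$ is complete then the initial question of $C$ must be answered, hence $x_C$ contains at least one minimal event of $C$'' is a detour: the reason $x_C$ is non-empty is simply that $x$ is non-empty and down-closed, and all minimal events of $A \lin C$ lie in $C$; completeness is needed only to ensure $x_C$ is itself complete. In the $\oc C$ case, your appeal to strictness (``forces each non-empty $x_n$ to be a single tree'') is not what drives the bijection. Each non-empty complete $x_n$ is already an element of $\collo C$ regardless of shape, and the symmetry on $\oc C$ from Definition~\ref{def:bang} identifies two families precisely when there is a bijection of non-empty indices matching symmetric components; this is the multiset equivalence on the nose. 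Strictness of $C$ is genuinely needed only in the $A \lin C$ clause (to pin down a unique minimal $c_0$ and hence a unique copy of $A$); in the $\oc C$ clause it is incidental to the statement as written.
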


For $B, C$ strict it follows that $\collo (\oc B \lin C) \iso
\mathcal{M}_f(\collo B) \times (\collo C)$, matching relational
semantics. For any $\PCF$ type $A$ we obtain $\collo \intr{A} \iso \intr{A}_\Rel$
-- such a bijection is easily established for ground types -- so
for the interpretation of $\PCF$ types, points of the web in the
relational model exactly correspond to non-empty complete symmetry
classes in the game semantics. This can be extended to an
interpretation-preserving functor from games to relations, but we leave
the details to a forthcoming paper on the quantitative collapse. 

\subsection{Well-Bracketed Pruning}
\label{subsec:wb_pruning}
We need a sufficiently constrained description of the causal shape of
strategies so that we might replicate it syntactically. While innocence
is a causal notion, well-bracketing is not, and it leaves the causal
shape too liberal. We shall now see how via a well-bracketed pruning a
more causal form of well-bracketing can be enforced.

\subsubsection{Causal well-bracketing}
Recall the idea of Theorem \ref{th:definability}:
each Player question $\qu^+$ corresponds to a call to a variable $x$.
Opponent questions pointing to $\qu^+$ correspond to $x$ calling an
argument; and Opponent answers to $\qu^+$ correspond to $x$ evaluating
to a return value. The crux of the definability argument is that these
subsequent possibilities of Opponent moves pointing to $\qu^+$
\emph{split} (the causal representation of) the innocent strategy under
scrutiny into sub-strategies independent from each other -- as they must
be, if they are to correspond to distinct branches of the desired syntax
tree.

Parallel innocent strategies enjoy the same ``splitting'' property, to
an extent:

\begin{lem}\label{lem:arguments}
Consider $A$ an arena, $\bsigma : A$ an innocent causal strategy,
and $q^+ \in \ev{\bsigma}$.

If $q^+ \imc_\bsigma m_1^-, q^+ \imc_\bsigma m_2^-$ distinct,
$\{m \in \ev{\bsigma} \mid m\geq_\bsigma m_1\}$, $\{m \in
\ev{\bsigma} \mid m \geq_\bsigma m_2\}$ disjoint.
\end{lem}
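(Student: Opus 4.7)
The plan is to prove this by contradiction via a minimal-merge argument exploiting pre-innocence. Suppose some $m^* \in \ev{\bsigma}$ satisfies both $m^* \geq_\bsigma m_1$ and $m^* \geq_\bsigma m_2$, and choose such an $m^*$ that is minimal for $\leq_\bsigma$. I will derive a contradiction in three short steps.

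First I would establish that $m^*$ is strictly above $m_1$ (and symmetrically $m_2$): if $m^* = m_1$, then $m_2 \leq_\bsigma m_1$; but $m_1$ is negative, so by Lemma~\ref{lem:just_pred} its unique immediate predecessor is $\just(m_1) = q^+$, which forces $m_2 \leq_\bsigma q^+$, contradicting $q^+ \imc_\bsigma m_2$. Next I would show $m^*$ must be positive. If $m^*$ were negative, Lemma~\ref{lem:just_pred} again makes $\just(m^*)$ the unique immediate predecessor of $m^*$; any strict predecessor of $m^*$ factors through $\just(m^*)$, so $m_1, m_2 \leq_\bsigma \just(m^*)$, contradicting the minimality of $m^*$.

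Now with $m^*$ positive and strictly above both $m_1$ and $m_2$, I would produce two distinct gccs $\rho_1 \imc m^*$ and $\rho_2 \imc m^*$ witnessing a pre-innocence violation. Pick any $\imc_\bsigma$-chain from $m_1$ up to an immediate predecessor of $m^*$ (using $m_1 <_\bsigma m^*$) and any $\imc_\bsigma$-chain from a minimal event up to $q^+$; concatenating these with $q^+ \imc_\bsigma m_1$ yields $\rho_1 \imc m^* \in \gcc(\bsigma)$. Build $\rho_2$ identically below $q^+$, but continue through $q^+ \imc_\bsigma m_2$ and up to $m^*$. The two gccs share a common initial segment up to and including $q^+$, and diverge immediately afterwards with $m_1$ on one side and $m_2$ on the other, so their least distinct events are the negative moves $m_1$ and $m_2$. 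This directly contradicts pre-innocence, which requires the least distinct moves of two distinct gccs ending at the positive event $m^*$ to be positive.

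The main subtlety is in the second step, where I push the putative merge point down to $\just(m^*)$: one must carefully use that a negative event has a unique $\imc_\bsigma$-predecessor so that every strict $\leq_\bsigma$-predecessor of $m^*$ indeed passes through $\just(m^*)$. The other steps are routine once the right extremal $m^*$ is chosen, and the construction of the two branching gccs uses only that $q^+ \imc_\bsigma m_1$ and $q^+ \imc_\bsigma m_2$ are available as distinct first steps above the common prefix.
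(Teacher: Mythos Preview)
Your proof is correct and is precisely the unpacking of what the paper means by ``Obvious by pre-innocence'': you take a minimal common upper bound $m^*$, use that negative events have a unique immediate predecessor to force $m^*$ positive, and then exhibit two gccs merging at $m^*$ whose least distinct moves are the negative $m_1, m_2$. One small remark: Lemma~\ref{lem:just_pred} is stated for $\bsigma : A \vdash B$, so for $\bsigma : A$ you should instead invoke Lemma~\ref{lem:aux_imp_court} (or Lemma~\ref{lem:pol_caus}) to get the unique immediate predecessor of a negative move; the content is the same.
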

\begin{proof}
Obvious by pre-innocence.
\end{proof}

Unfortunately, this is too weak. Indeed $q^+$ in the statement above
should correspond to a syntactic variable occurrence $x$, so that the
overall term has form $C[x\,M_1\,\dots\,M_n]$ where $x$ has arity $n$,
with the $M_i$ written in $\PCFpar$. As standalone pieces of syntax in a
language without interference they are indeed independent from each
other (as guaranteed semantically by Lemma \ref{lem:arguments}), but are
also independent from the context $C$ -- which Lemma \ref{lem:arguments}
does not capture. In fact, the current conditions on strategies do not
ensure this.

\begin{figure}
\[
\scalebox{.9}{$
\xymatrix@R=2pt@C=0pt{
(\tunit &\lin& \tunit) &\lin & \tunit & \lin & \tunit & \lin & \tunit\\
&&&&&&&&\grey{\qu^-}
        \ar@[grey]@{-|>}[dllllll]
        \ar@[grey]@{-|>}[dll]\\
&&\qu^+_1
        \ar@[grey]@{.}@/^/[urrrrrr]
        \ar@{-|>}[dll]&&&&
\grey{\qu^+}
        \ar@[grey]@{.}@/^/[urr]
        \ar@[grey]@{-|>}[d]\\
\qu^-   \ar@{.}@/^/[urr]
        \ar@{-|>}[drrrr]&&&&&&
\grey{\done^-}
        \ar@[grey]@{.}@/^/[u]
        \ar@[grey]@{-|>}[dll]\\
&&&&\qu^+_2
        \ar@[grey]@{.}@/^1.5pc/[uuurrrr]
}$}
\]
\caption{Non-locality of argument sub-strategies}
\label{fig:non_locality}
\end{figure}

In Figure \ref{fig:non_locality}, we show a counter-example: this
satisfies all the conditions for an augmentation of a well-bracketed
parallel innocent strategy.  Here, $\qu^+_1$ should correspond to a
variable call, and $\qu^-$ should initiate the exploration of its
argument, and independent sub-term.  But the subsequent Player move
$\qu^+_2$, the ``head variable occurrence'' of the argument sub-term,
also depends on a parallel thread. This behaviour is not realizable in
$\PCFpar$ and should be banned before any definability attempt. 

We now define which causal behaviour is deemed acceptable for $\PCFpar$.

\begin{defi}\label{def:caus_wb}
Consider $A$ an arena, and $\bsigma : A$ parallel innocent.

We say that $\bsigma$ is \textbf{causally well-bracketed} iff it
satisfies the two conditions below:
\[
\begin{array}{rl}
\text{\emph{wb-threads:}} & \text{for every $\rho \in \gcc(\bsigma)$,
$\rho$ is well-bracketed in the sense of Definition
\ref{def:alt_play_wb},}\\ 
\text{\emph{globular:}} & \text{for a diagram in $\bsigma$ with
$X = \{m_1, \dots, m_n\}$ and $Y = \{n_1, \dots, n_p\}$ disjoint}\\ 
&\qquad\qquad\qquad\qquad
$
\xymatrix@R=0pt@C=5pt{
&m_1^+  \ar@{-|>}[r]&
~       \ar@{.}[r]&
~       \ar@{-|>}[r]&
m_n^-   \ar@{-|>}[dr]\\
m_0^-   \ar@{-|>}[ur]
        \ar@{-|>}[dr]&&&&&
m^+\\
&n^+_1  \ar@{-|>}[r]&
~       \ar@{.}[r]&
~       \ar@{-|>}[r]&
n_p^-
        \ar@{-|>}[ur]
}
$\\
&\text{then every question in $X$ (resp. $Y$) is answered in $X$ (resp.
$Y$).} 
\end{array}
\]
\end{defi}

The Question/Answer labeling is implicitly imported from $\ev{A}$ to
$\ev{\bsigma}$ via $\pr_\bsigma$ -- likewise, for $m^\Qu, n^\An \in
\ev{\bsigma}$, we say that $n$ \textbf{answers} $m$ if $\pr_\bsigma(m)
\imc_A \pr_\bsigma(n)$, \emph{i.e.} $\pr_\bsigma(n)$ answers
$\pr_\bsigma(m)$.

We call such a diagram a \emph{globule}. 
This bans directly behaviours as in Figure \ref{fig:non_locality}.
Only Player can merge parallel threads, and only if he
is responsible for the fork (by parallel innocence). So, polarities
in the definition of globules are not restrictive. One can further
observe that globules always have Question/Answer assignments as in
\[
\xymatrix@R=0pt@C=10pt{
&m_1^{+,\Qu}    
        \ar@{-|>}[r]&
m_2^{-,\An}     
        \ar@{.}@/_1pc/[l]
        \ar@{-|>}[r]&
~       \ar@{.}[r]&
~       \ar@{-|>}[r]&
m_{2n-1}^{+,\Qu}
        \ar@{-|>}[r]&
m_{2n}^{-,\An}
        \ar@{.}@/_1pc/[l]
        \ar@{-|>}[dr]\\
m_0^-   \ar@{-|>}[ur]
        \ar@{-|>}[dr]&&&&&&&
m^+\\
&n_1^{+,\Qu}
        \ar@{-|>}[r]&
n_2^{-,\An}
        \ar@{-|>}[r]
        \ar@{.}@/^1pc/[l]&
~       \ar@{.}[r]&
~       \ar@{-|>}[r]&
n_{2p-1}^{-,\Qu}
        \ar@{-|>}[r]&
n_{2p}^{+,\An}
        \ar@{.}@/^1pc/[l]
        \ar@{-|>}[ur]
}
\]

Indeed, if $m_1$ was an answer, it would be maximal in
$\bsigma$ (by Lemma \ref{lem:app_aux_max_pos}, as answers are maximal in
$A$) and the merge would be impossible.  So $m_1$
is a question, and by \emph{globular} it has an answer in $\{m_1, \dots,
m_{2n}\}$. By courtesy this answer depends immediately on $m_1$ in
$\bsigma$ and so must be $m_2$. Repeating this we get the description
above.  Hence in causally well-bracketed strategies, parallel threads
that might merge follow a strict call/return discipline.

A variant of the proof of Proposition \ref{prop:rig_comp}, relying on
Lemma \ref{lem:forking}, shows that causal well-bracketing composes.
This was done in \cite{lics15} (see also
\cite{DBLP:phd/hal/Castellan17}), but it is not our route here: we must
prove full abstraction with respect to the \emph{same} well-bracketing
condition used before. We shall see that the situation is
similar as in Section~\ref{sec:wb_ia}: in \emph{complete}
configurations, which suffice for tests, the weaker well-bracketing
implies the stronger. 

\subsubsection{Strengthening well-bracketing} We will show that if
$\bsigma : A$ is innocent and well-bracketed (as in Definition
\ref{def:wb_caus_strat}) and $x^\bsigma \in \conf{\bsigma}$ is complete
(\emph{i.e.} all its questions have an answer within $x^\bsigma$), then
the corresponding augmentation is also \emph{causally} well-bracketed --
so restricting an innocent well-bracketed $\bsigma : A$ to its
completable part yields a positionally equivalent, causally
well-bracketed strategy; a procedure we call ``well-bracketed pruning''.

On arenas arising from types, gccs of innocent strategies are
\emph{already} well-bracketed -- this holds even without
well-bracketing.  The proof is morally as for Proposition
\ref{prop:inn_wb}.

\begin{lem}\label{lem:gcc_wb_aut}
Consider $A$ a type, $\bsigma : \intr{A}$ an innocent causal
strategy, and $\rho \in \gcc(\bsigma)$.

Then, $\rho$ is well-bracketed in the sense of Definition
\ref{def:alt_play_wb}.
\end{lem}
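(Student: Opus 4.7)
The plan is to mirror the argument of Proposition~\ref{prop:inn_wb}, using Lemma~\ref{lem:gcc_pview} as the bridge between gccs in a parallel innocent causal strategy and P-views in the alternating world. Since parallel innocence subsumes visibility, Lemma~\ref{lem:gcc_pview} yields that $\pr_\bsigma(\rho) \in \Alt(\intr{A})$ is a P-view. I would assume for contradiction that it is not well-bracketed: then, as in Proposition~\ref{prop:inn_wb}, it must contain a prefix of the form
\[
s_1 \dots \qu_1^{-,\Qu} \dots \qu_2^{-,\Qu} \dots a^{+,\An}
\]
where $a$ answers $\qu_1$ while $\qu_2$ is the pending question at the point of $a$. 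Let $m_1, m_2, m \in \rho$ be the events of $\bsigma$ displaying to $\qu_1, \qu_2, a$ respectively, so that $\just(m) = m_1$ in $\bsigma$.

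The idea is then to ``replay'' $\qu_2$ with a fresh copy index and produce an alternative gcc ending in a Player answer $m'$ to $\qu_1$ whose display coincides with that of $m$, contradicting local injectivity of $\pr_\bsigma$. Since $A$ is a type of $\PCF$, the arena $\intr{A}$ is built from the basic arenas $\tunit, \tbool, \tnat$ via $\with$ and $\to$, every arrow introducing a $\oc$ on the left; and $\qu_2$ is not minimal in $\intr{A}$ (P-views begin with a minimal question, namely $\qu_1$ or an earlier one). Hence $\qu_2$ sits as an Opponent-initial move of some banged sub-arena $\oc B$ occurring inside $\intr{A}$, and admits an arena-symmetric copy $\qu_2' \sym_{\intr{A}} \qu_2$ with distinct copy index. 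Starting from the configuration generated by the events of $\rho$ strictly below $m_2$, I would use $\sim$-receptivity to produce an event $m_2'$ of $\bsigma$ with $\pr_\bsigma(m_2') = \qu_2'$, and then iteratively extend by the reindexing of $m_k, \dots, m$ provided by the extension axiom for the isomorphism family, combined with the uniqueness granted by thinness and causal determinism for Player extensions. This yields a gcc $\rho'$ agreeing with $\rho$ up to $m_{k-1}$ and ending at an event $m'$ whose display is $a$ itself: indeed, $a$ is an answer whose justifier $\qu_1$ lies before the reindexing point, and in basic arenas answers are pairwise conflicting children of a single question, so $a$ has no copy-index freedom of its own.

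The main obstacle is to verify that $m$ and $m'$ coexist within a configuration of $\bsigma$, so that local injectivity applies. The two chains $\rho$ and $\rho'$ fork at the negative events $m_2 \neq m_2'$, so pre-innocence does not obstruct their joint consistency (it only forbids mergers above Opponent forks, while here any potential merger would only prevent the conflict); causal determinism further ensures no spurious Player conflict arises between the two branches. One then takes the up-closure of $\rho \cup \rho'$ inside $\bsigma$ as a common witnessing configuration, in which $m$ and $m'$ are two distinct events with $\pr_\bsigma(m) = \pr_\bsigma(m') = a$, contradicting local injectivity of the display map and establishing the lemma.
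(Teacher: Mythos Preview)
Your approach mirrors the paper's: both adapt Proposition~\ref{prop:inn_wb} to the causal setting by producing, via symmetry, a second event $m'$ answering the same question as $m$ with the same display, and then invoking local injectivity once $m$ and $m'$ live in a common configuration. Your observation that answers in $\PCF$ arenas carry no copy-index freedom, so that $\pr_\bsigma(m') = \pr_\bsigma(m)$, is correct.

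The genuine gap is the compatibility step. You write that one ``takes the up-closure of $\rho \cup \rho'$'' (you mean down-closure) and that pre-innocence plus causal determinism make this a configuration --- but this is precisely where the work lies, and your justification does not establish it. Writing $x = [m]_\bsigma$ and $y = [m']_\bsigma$, causal determinism reduces the problem to \emph{negative} compatibility; yet $x$ and $y$ may contain many events beyond the chains $\rho, \rho'$, and nothing you say rules out a negative event of $x$ conflicting with one of $y$. The paper handles this carefully: if $n_1^- \in x$ and $n_2^- \in y$ are in minimal conflict, then (Lemma~\ref{lem:app_pres_neg_imm_confl}) so are their displays; in arenas from $\PCF$ types, minimally conflicting events share a justifier $n$, which by Lemma~\ref{lem:arguments} must lie in the common part $x \cap y$ (the events below the fork at $m_2, m_2'$). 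By normality of prime configurations (pre-innocence), $n_1$ and $n_2$ are then the \emph{unique} Opponent successors of $n$ in $x$ and $y$ respectively, so the symmetry $x \sym_\bsigma y$ carries $n_1$ to $n_2$, hence $\pr_\bsigma(n_1) \sym_{\intr{A}} \pr_\bsigma(n_2)$ --- but in $\PCF$ arenas no two conflicting events are related by a symmetry. This chain of reasoning, specific to the structure of $\PCF$-type arenas, is the heart of the proof and is absent from your proposal.
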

\begin{proof}
Consider $\rho = \rho_1 \imc \dots \imc \rho_n \in \gcc(\bsigma)$ and
assume, seeking a contradiction, that $\rho_n$ is an answer not pointing
to the pending question. It follows that $\rho$ has the form:
\[
\xymatrix@C=0pt{
\rho_1 &\imc& \dots& \imc &\rho_i^{-, \Qu}& \imc &\dots &\imc
&\rho_j^{+,\Qu} &\imc&
\rho_{j+1}^{-,\Qu} &\imc& \dots& \imc& \rho_n^{+,\An} 
        \ar@{.}@/_1pc/[llllllllll]
}
\]
where $\rho_{j+1}^{-, \Qu}$ was pending. As $\intr{A}$
is the interpretation of a type, $\rho_{j+1}^{-,\Qu}$ is
sibling to countably many symmetric copies, \emph{i.e.} we may write it
as $b_{\grey{k}}$ with some copy index $\grey{k} \in \mathbb{N}$, and
consider a copy $b_{\grey{k+1}} \in \ev{\bsigma}$ with both
$b_{\grey{k}}$ and $b_{\grey{k+1}}$ pointing to $\rho_j$. Write also $m
= \rho_n$.

Consider $x^\bsigma \in \conf{\bsigma}$ with $\rho$ in $x^\bsigma$,
\emph{w.l.o.g.} we can assume that (the augmentation) $x^\bsigma$ has
top element $m$. Consider any $t \in \NAlt(x^\bsigma)$ s.t. $\ev{t}
= x^\bsigma$. Then $t$ may be written as $t_1 \cdot b_{\grey{k}} \cdot
t_2 \cdot m$ where without loss of generality, we may assume that all
events after $b_{\grey{k}}$ depend on it. Because $\rho$ is in
$x^\bsigma$, $m$ points in $t_1$. Now, by uniformity of
$\bsigma$, there is also some
\[
t' = t_1 \cdot b_{\grey{k+1}} \cdot t'_2 \cdot m' \in \NAlt(\bsigma)\,,
\]
such that $t \sym_\bsigma t'$, where (therefore) $m'$ points to the same
move as $m$ in $t_1$. There is some $y^\bsigma \in \conf{\bsigma}$ such
that $t' \in \NAlt(y^\bsigma)$, \emph{w.l.o.g.} assume $\ev{t'} =
y^\bsigma$. By innocence, any causal branching in $x^\bsigma$ and
$y^\bsigma$ is due to Player. Next, we observe that $x^\bsigma$ and
$y^\btau$ are negatively compatible. Indeed if $n_1^- \in
\ev{x^\bsigma}$, $n_2^- \in \ev{y^\bsigma}$ are in minimal conflict in
$\bsigma$, then by Lemma \ref{lem:app_pres_neg_imm_confl},
$\pr_\bsigma(n_1)$ and
$\pr_\bsigma(n_2)$ are in minimal conflict in $\intr{A}$. But by
property of arenas originating from types, this implies that $n_1$ and
$n_2$ have the same justifier $n$. As moves in $t_2 \cdot m$ depend on
$b_{\grey{k}}$ and moves in $t'_2 \cdot m'$ depend on $b_{\grey{k+1}}$,
by pre-innocence (Lemma \ref{lem:arguments}) these must be disjoint, so
$n$ appears in $t_1$. But again by pre-innocence, since $x^\bsigma$ has
top element $m$ and $y^\bsigma$ has top element $m'$, only one Opponent
move can point to $n$ in $x^\bsigma$; and likewise for $y^\bsigma$. As
$t \sym_\bsigma t'$ preserves pointers, that means that $n_1$ and $n_2$
must arrive at the same (chronological) index in $t, t'$, and so their
display in $\intr{A}$ are related by a symmetry in $\intr{A}$. But in
arenas arising from $\PCF$ types, no conflicting events can be related
by a symmetry, contradiction.
\end{proof}

So condition \emph{wb-threads} of Definition \ref{def:caus_wb} is
automatic, even without assuming well-bracketing.  However, condition
\emph{globular} is \emph{not} automatic as illustrated by Figure
\ref{fig:non_locality}.

We shall now prove \emph{globular} on \emph{complete} augmentations of
well-bracketed innocent strategies. Intuitively, the reason is simple.
Consider a globule
\[
\xymatrix@R=0pt@C=5pt{
&a_1^+  \ar@{-|>}[r]&
~       \ar@{.}[r]&
~       \ar@{-|>}[r]&
a_n^-   \ar@{-|>}[dr]\\
a_0^-   \ar@{-|>}[ur]
        \ar@{-|>}[dr]&&&&&
a^+\\
&b^+_1  \ar@{-|>}[r]&
~       \ar@{.}[r]&
~       \ar@{-|>}[r]&
b_p^-
        \ar@{-|>}[ur]
}
\]
in $x^\bsigma$. As $x^\bsigma$ is complete, every question is eventually
answered. But after the merge, it is too late for questions in the
$a_i$s and $b_j$s: if some $b$ such that $a \geq b$ answers $a_i$ then
by visibility $a_i$ must appear in all its gccs; but a gcc to $b$ may go
through the $b_j$s and avoids $a_i$ entirely. 
Of course, turning this idea into a proof takes some work. First we
prove:

\begin{lem}\label{lem:key_caus_wb}
Consider a $-$-arena $A$,  $\bsigma : A$ visible well-bracketed, and
$x^\bsigma \in \conf{\bsigma}$ complete.

For all $q^{-, \Qu}_1 \in x^\bsigma$, for all $q^{+,\Qu}_2 \in
x^\bsigma$ such that $\pr_\bsigma(q_1) \imc_A \pr_\bsigma(q_2)$, we
have:
\[
\xymatrix@R=-7pt{
&q_1^{-, \Qu}\\
q_2^{+,\Qu}
        \ar@{.}@/^/[ur]\\
a_2^{-,\An}
        \ar@{.}@/^/[u]
        \ar@{}[dr]|{\rotatebox{340}{$\leq_\bsigma$}}\\
&a_1^{+,\An}
        \ar@{.}@/_/[uuu]
}
\]
with $a_1$ and $a_2$ the (unique) answers to $q_1$ and $q_2$.
\end{lem}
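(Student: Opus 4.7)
The plan is to apply Lemma \ref{lem:reach_inn_wb} to obtain a well-bracketed non-alternating linearization of $x^\bsigma$, then to exploit well-bracketedness of $\bsigma$ together with completeness of $x^\bsigma$ to push the resulting chronological observation up to a causal ordering. First I would observe that $a_1$ and $a_2$ are uniquely determined in $x^\bsigma$ by the \emph{answer-linearity} condition on arenas (Definition \ref{def:qa_arena}) together with completeness of $x^\bsigma$, and that Lemma \ref{lem:just_pred} applied with $\just(q_2) = \just(a_1) = q_1$ and $\just(a_2) = q_2$ gives the strict causal inequalities $q_1 <_\bsigma q_2$, $q_1 <_\bsigma a_1$, and $q_2 <_\bsigma a_2$ that will hold in any linearization.

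I would next invoke Lemma \ref{lem:reach_inn_wb} to get $t \in \NAlt(\bsigma)$ with $\ev{t} = x^\bsigma$ and $\pr_\bsigma(t)$ well-bracketed in the sense of Definition \ref{def:nalt_play_wb}. The \emph{fork} condition applied at $q_2$ (using $\pr_\bsigma(q_1) \imc_A \pr_\bsigma(q_2)$) forces $q_1$ to be unanswered when $q_2$ is played, so $a_1$ must occur after $q_2$ in $t$. Then the \emph{wait} condition applied at $a_1$ (using $\pr_\bsigma(q_1) \imc_A \pr_\bsigma(a_1)$ together with the fact that $q_2$, a question justified by $q_1$, lies in the prefix of $t$ before $a_1$) forces its answer $a_2$ to lie before $a_1$ in $t$.

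To upgrade the chronological fact ``$a_2$ before $a_1$ in $t$'' to the causal conclusion $a_2 \leq_\bsigma a_1$, I would argue by contradiction: assume $a_2 \not\leq_\bsigma a_1$, so $a_2 \notin [a_1]_\bsigma$. A case split on whether $q_2 \leq_\bsigma a_1$ produces a play of $\bsigma$ exhibiting a first well-bracketing violation at a positive move, namely either wait failing at $a_1$ (when $q_2 \leq_\bsigma a_1$, as then $[a_1)_\bsigma$ contains $q_2$ but not $a_2$) or fork failing at $q_2$ (when $q_2 \not\leq_\bsigma a_1$, as then one can reach $[a_1]_\bsigma$ first, answering $q_1$ via $a_1$, and play $q_2$ afterwards). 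Either case contradicts well-bracketedness of $\bsigma$ via Definitions \ref{def:wb_caus_strat} and \ref{def:nalt_strat_wb}.

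The hard part will be constructing the required well-bracketed prefix up to the Player violation: the relevant sub-configurations (such as $[a_1)_\bsigma$) need not themselves be complete, so Lemma \ref{lem:reach_inn_wb} does not apply directly. I would handle this by adapting its proof, defining a tailored visible well-bracketed test strategy $\btau^*$ on $x^\bsigma_A \vdash \tunit$ that enforces both the Opponent-side well-bracketing constraints of Lemma \ref{lem:reach_inn_wb}'s $\btau$ and additional Opponent-to-Player dependencies forcing the eager scheduling required to expose the violation; then invoking Lemma \ref{lem:deadlock_free}, which crucially uses visibility of $\bsigma$, to secure the interaction and extract a concrete well-bracketed prefix witnessing the contradiction.
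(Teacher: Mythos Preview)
Your approach would work but is more convoluted than the paper's, and your first two paragraphs are dead weight: the chronological inequality ``$a_2$ before $a_1$ in some well-bracketed linearization of $x^\bsigma$'' is never used again once you restart with the contradiction argument. The paper goes straight to the contradiction and avoids your case split by working with the single configuration $y = [a_1]_\bsigma \cup [q_2]_\bsigma$. Since $a_2 \notin [a_1]_\bsigma$ by assumption and $a_2 \notin [q_2]_\bsigma$ trivially (as $q_2 <_\bsigma a_2$), this $y$ contains $q_1, q_2, a_1$ but not $a_2$. One then obtains a well-bracketed linearization $t$ with $\ev{t} = y$ and observes that well-bracketedness of $t$ forces $q_2$ to be answered inside $y$ --- via \emph{wait} at $a_1$ if $q_2$ precedes $a_1$ in $t$, or by ruling out the other order via \emph{fork} at $q_2$ --- contradicting \emph{answer-linear}. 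This single configuration subsumes both arms of your case analysis: whether or not $q_2 \leq_\bsigma a_1$, $y$ is a configuration of $\bsigma$ with the required contents.

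Your worry that Lemma~\ref{lem:reach_inn_wb} demands completeness is on the mark --- and in fact the paper itself invokes that lemma on $y$, which is not complete, so the looseness you flag in your own plan is already present in the paper's argument. The remedy is exactly what you sketch: rerun the construction from the \emph{proof} of Lemma~\ref{lem:reach_inn_wb} (a visible Opponent-side test strategy plus Lemma~\ref{lem:deadlock_free}, which needs only visibility, not completeness) directly on $y$. Doing this once for $y$ is strictly less work than building two tailored tests across a case split with additional scheduling constraints.
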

\begin{proof}
First, $a_1$ and $a_2$ exist as $x^\bsigma$ is complete, and are
unique by \emph{answer-linear}.

Seeking a contradiction, assume $a_2 \not \leq_\bsigma a_1$, \emph{i.e.}
$a_2 \not \in [a_1]_\bsigma$. As $a_2 \not
\in [q_2]_\bsigma$, writing $y = [a_1]_\bsigma \vee
[q_2]_\bsigma$, $y \in \conf{\bsigma}$ with $a_1, q_2 \in y$
and $a_2 \not \in y$. By Lemma \ref{lem:reach_inn_wb}, there is $t \in
\NAlt(y)$ well-bracketed s.t. $\ev{t} = y$.
So by \emph{wait} of Definition \ref{def:nalt_play_wb},
$q_2$ is answered in $\ev{t} = y$. But since $a_2 \not \in y$,
this means $q_2$ is answered twice in $x^\bsigma$, which
contradicts \emph{answer-linear}.
\end{proof}

Using that, we may now formalize and complete the intuitive argument
above.

\begin{lem}\label{lem:aug_comp_glob}
For $\bsigma : \intr{A}$ well-bracketed innocent and
$x^\bsigma \in \conf{\bsigma}$ complete, $x^\bsigma$ is
\emph{globular}.
\end{lem}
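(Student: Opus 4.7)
The natural strategy is to argue by contradiction: assume a globule in $x^\bsigma$ violating \emph{globular}, and derive a contradiction from well-bracketing, completeness, visibility, and Lemma~\ref{lem:key_caus_wb}. By the symmetric role of the two arms $X=\{m_1,\dots,m_n\}$ and $Y=\{n_1,\dots,n_p\}$, it will suffice to treat the case where some question $q\in X$ admits its (unique, by \emph{answer-linear}) answer $\bar q$ outside $X$; by completeness of $x^\bsigma$ and Lemma~\ref{lem:gcc_wb_aut}, $\bar q$ exists somewhere in $x^\bsigma$.

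The first step is to pin down the arena justifier of $a^+$. The two chains $\rho^X = \{m_0\}\cup X\cup\{a\}$ and $\rho^Y=\{m_0\}\cup Y\cup\{a\}$ are both gccs ending at $a$, and by visibility their displays are P-views of $A$, each containing $\pr_\bsigma(\just(a))$. Since $X\cap Y=\emptyset$, these two P-views share only $\pr_\bsigma(m_0)$ strictly below $\pr_\bsigma(a)$, forcing $\just_A(a)=\pr_\bsigma(m_0)$.

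The argument then splits on the Question/Answer label of $a$. In the easy case where $a^\An$ is an answer, $a$ answers $m_0$ in the arena, so by Lemma~\ref{lem:gcc_wb_aut} the gcc $\rho^X$ is well-bracketed in the sense of Definition~\ref{def:alt_play_wb}, and a terminal answer closing the bottom-of-stack question $m_0$ forces every intermediate question in the chain to be closed within the chain; this contradicts $q$ unanswered in $X$. In the harder case where $a^\Qu$ is a question, completeness gives an answer $\bar a\in x^\bsigma$ with $a\imc_\bsigma\bar a$ by Lemma~\ref{lem:just_pred}, and the gcc $\rho^Y$ extends to a gcc $\rho^Y\cup\{\bar a\}$ avoiding $X$ entirely past $m_0$. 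One then reduces the existence of $\bar q\notin X$ to a direct visibility clash: since $q=\just_A(\bar q)$ in the arena, every gcc of $\bar q$ must contain $q$ (by the local injectivity of $\pr_\bsigma$ on $[\bar q]_\bsigma$), but in the case $\bar q\geq_\bsigma a$ the gcc through $\rho^Y\cdot\bar a\cdot\dots\cdot\bar q$ misses $X$ past $m_0$, contradiction.

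The key remaining obstacle is the subcase $\bar q\not\geq_\bsigma a$, where the direct visibility argument above fails. Here one iterates Lemma~\ref{lem:key_caus_wb} along the chain $m_0\imc m_1\imc\dots\imc q$: each pair $(q_1^{-,\Qu},q_2^{+,\Qu})$ with $\pr_\bsigma(q_1)\imc_A\pr_\bsigma(q_2)$ yields an ordering $\bar q_2\leq_\bsigma \bar q_1$ on their answers in $x^\bsigma$. Combined with Lemma~\ref{lem:arguments} (which forces the sub-strategies above sibling negative children of any Player event to be disjoint), this produces a route lifting $\bar q$ back above $a$, reducing to the previous subcase. Carefully orchestrating these reductions — in particular, showing that the chain of Opponent-answer constraints from Lemma~\ref{lem:key_caus_wb} is compatible with $\bar q$ being neither above nor strictly below $a$ only in a way that still violates visibility along $\rho^Y\cdot\bar a$ — is the main technical load, with all the supporting disjointness facts supplied by pre-innocence and the structure of gccs already established.
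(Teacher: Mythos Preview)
Your overall strategy---contradiction via visibility, using completeness and Lemma~\ref{lem:key_caus_wb}---matches the paper's. But the execution has two problems, one minor and one serious.

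The minor one: you claim $\just_A(a) = \pr_\bsigma(m_0)$. The two gccs $\rho^X$ and $\rho^Y$ must be grounded, so they share a common prefix below $m_0$; their intersection is not just $\{m_0\}$ but everything $\leq_\bsigma m_0$. So you only get $\just(a)\leq_\bsigma m_0$, which is what the paper proves and uses. Your Case~1 argument (when $a$ is an answer) survives this correction.

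The serious gap is your subcase $\bar q \not\geq_\bsigma a$. You propose to ``iterate Lemma~\ref{lem:key_caus_wb}'' along the chain $m_0\imc\dots\imc q$ to lift $\bar q$ above $a$. This cannot work. Suppose $q=m_i$ is a \emph{Player} question; then its answer $\bar q$ is negative, so by courtesy $q\imc_\bsigma \bar q$, and since $\bar q\notin X$ we have $\bar q\neq m_{i+1}$. Now $q^+$ has two distinct negative immediate successors $\bar q$ and $m_{i+1}$, so by Lemma~\ref{lem:arguments} nothing is above both; in particular $a\geq m_{i+1}$ forces $a\not\geq\bar q$, and since $\bar q$'s unique predecessor is $q<a$, also $\bar q\not\geq a$. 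So $\bar q$ is \emph{necessarily} incomparable with $a$ in this case---your reduction target is unreachable. Moreover, Lemma~\ref{lem:key_caus_wb} requires $\pr_\bsigma(q_1)\imc_A\pr_\bsigma(q_2)$, an arena-justification relation that consecutive elements of the $\imc_\bsigma$-chain need not satisfy, so the proposed iteration has no footing.

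The paper sidesteps this entirely: rather than chasing $\bar q$, it applies Lemma~\ref{lem:key_caus_wb} \emph{once}, to the pair $(\just(a),\,a)$---the one pair guaranteed to satisfy the arena hypothesis. This yields that the answer to $a$ lies below the answer $b^+$ to $\just(a)$, so the gcc through $X$ extends past $a$ all the way to $b^+$. By Lemma~\ref{lem:gcc_wb_aut} this extended gcc is well-bracketed and closes $\just(a)\leq m_0$, so the open $m_i\in X$ must be answered somewhere in the segment after $a$. That answerer then also lies on the gcc through $Y$ with the same extension, which does not contain $m_i$---contradicting visibility. The point is: do not try to locate the answer to $m_i$ directly; force its existence \emph{above} $a$ via well-bracketing of a single long gcc, then switch branches.
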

\begin{proof}
Consider a \emph{globule} in $x^\bsigma$, \emph{i.e.} a diagram 
\[
\xymatrix@R=-7pt@C=5pt{
&m_1^+  \ar@{-|>}[r]&
~       \ar@{.}[r]&
~       \ar@{-|>}[r]&
m_n^-   \ar@{-|>}[dr]\\
m_0^-   \ar@{-|>}[ur]
        \ar@{-|>}[dr]&&&&&
m^+\\
&n^+_1  \ar@{-|>}[r]&
~       \ar@{.}[r]&
~       \ar@{-|>}[r]&
n_p^-
        \ar@{-|>}[ur]
}
\]
with $X = \{m_1, \dots, m_n\}$ and $Y = \{n_1, \dots, n_p\}$ disjoint.
Seeking a contradiction, consider $m_i^{\Qu} \in X$ unanswered in $X$.
First, we consider $q^- = \just(m^+)$. Since $\bsigma$ is visible,
$q^-$ appears in any gcc of $m$, so $q^-
\leq_\bsigma m_0^-$. But hence $m$ is a question, or a gcc
like
\[
\xymatrix@C=-5pt{
\dots &\imc& \dots &\imc &q^{-,\Qu}& \imc &\dots & \imc& m_0^- &\imc&
m_1^+ & \imc & \dots & \imc & m_n^-& \imc & m^{+,\An}
        \ar@{.}@/_1pc/[llllllllllll]
}
\]
fails well-bracketing as $m_i$ is unanswered, forbidden by Lemma
\ref{lem:gcc_wb_aut}. So $m^+$ has an answer $a^-$ in $x^\bsigma$. But
$q^-$ also has an answer $b^+$ in $x^\bsigma$, and by
Lemma \ref{lem:key_caus_wb}, $a^- \leq_\bsigma b^+$. So
altogether
\[
\xymatrix@R=10pt@C=-5pt{
\dots &\imc& \dots &\imc &q^{-,\Qu}& \imc &\dots & \imc& m_0^- &\imc&
m_1^+ & \imc & \dots & \imc & m_n^-& \imc & m^{+,\Qu}
        \ar@{.}@/_1pc/[llllllllllll]&
\imc& a^{-,\An} & \imc & \dots & \imc & b^{+,\An}
        \ar@{.}@/^1pc/[llllllllllllllllll]
}
\]
well-bracketed by Lemma \ref{lem:gcc_wb_aut}. So in  $m^+ \imc a^- \imc
\dots \imc b^+$, some move must answer $m_i$; and in particular point to
$m_i$. But $m_i$ does not appear in the gcc $\dots \imc m_0 \imc n_1
\imc \dots \imc n_p \imc m \imc a \imc \dots \imc b$, contradicting
visibility. Therefore, $X$ is complete.  
\end{proof}

\subsubsection{Well-bracketed pruning}
For $x \in \conf{\bsigma}$, write $x^+ \in \conf{\bsigma}$ for the
greatest $+$-covered configuration s.t. $x^+ \subseteq x$, obtained
by removing trailing Opponent moves.

\begin{prop}\label{prop:wb_pruning}
For $A$ a $-$-arena and $\bsigma : A$ a well-bracketed innocent
strategy, set 
\[
\ev{\comp(\bsigma)} = \cup \{x \in \conf{\bsigma} \mid x^+ \subseteq y
\in \conf{\bsigma} \text{ complete}\}\,,
\]
with all other components directly inherited from $\bsigma$.

Then $\comp(\bsigma) \equiv \bsigma : A$ is innocent, well bracketed, \emph{causally
well-bracketed}\footnote{A strategy may well be causally
well-bracketed without being well-bracketed: an example of that is a
strategy $\sigma : \tunit \vdash \tunit$ that simultaneously calls its
argument (but does nothing with the result) and returns.}.
\end{prop}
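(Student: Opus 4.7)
The plan is to exploit the monotonicity of the $+$-covered restriction $(-)^+$ to reduce all properties of $\comp(\bsigma)$ to corresponding properties of $\bsigma$, with causal well-bracketing following directly from Lemmas~\ref{lem:gcc_wb_aut} and~\ref{lem:aug_comp_glob}.

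First I would establish the key monotonicity: for any $x' \subseteq x$ in $\conf{\bsigma}$, $(x')^+ \subseteq x^+$, since $(x')^+$ is itself a $+$-covered sub-configuration of $x$ contained in the greatest one, $x^+$. From this, the natural presentation of the pruned strategy emerges as
\[
\conf{\comp(\bsigma)} = \{x \in \conf{\bsigma} \mid x^+ \subseteq y \text{ for some complete } y \in \conf{\bsigma}\},
\]
which is downward closed in $\bsigma$ and closed under sub-configurations. In particular $\ev{\comp(\bsigma)}$ is downward closed, so immediate causality, conflict, and the isomorphism family can be inherited coherently.

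Next, I would verify that $\comp(\bsigma)$ is a causal strategy. The critical point is that an Opponent extension does not disturb the $+$-part: if $x \in \conf{\comp(\bsigma)}$ and $x \vdash_\bsigma e^-$, then $(x \cup \{e\})^+ = x^+$, so $x \cup \{e\} \in \conf{\comp(\bsigma)}$. This immediately gives receptivity and $\sim$-receptivity. Courtesy and negativity are inherited from $\bsigma$ since $\comp(\bsigma)$ is a downward-closed sub-event-structure preserving immediate causality and minimal events. The thin axiom takes more care: given $\theta : x \sym_{\comp(\bsigma)} y$ and $x \vdash s_1^+$ in $\comp(\bsigma)$, the unique $s_2$ in $\bsigma$ with $\theta \vdash (s_1, s_2)$ must be shown to lie in $\ev{\comp(\bsigma)}$. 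For this, one transports a complete witness $y_1 \supseteq (x \cup \{s_1\})^+$ across the symmetry, applying the extension axiom of the isomorphism family to produce a complete configuration $y_2 \supseteq (y \cup \{s_2\})^+$.

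For positional equivalence, any complete $y \in \conf{\bsigma}$ satisfies $y^+ \subseteq y$ trivially, hence $y \in \conf{\comp(\bsigma)}$, giving $\coll\bsigma \subseteq \coll{\comp(\bsigma)}$; the reverse inclusion is automatic since $\comp(\bsigma) \subseteq \bsigma$. Visibility and pre-innocence are inherited because $\gcc(\comp(\bsigma)) \subseteq \gcc(\bsigma)$, and well-bracketing follows from $\NAltStrat(\comp(\bsigma)) \subseteq \NAltStrat(\bsigma)$.

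Finally, for causal well-bracketing: the \emph{wb-threads} condition is immediate from Lemma~\ref{lem:gcc_wb_aut}, which gives well-bracketing of gccs of innocent strategies on type arenas without assuming well-bracketing. For \emph{globular}: consider a globule $G$ in $\comp(\bsigma)$ with merge point $m^+$. Since $m$ is positive and $m \in \ev{\comp(\bsigma)}$, any witnessing configuration $x \ni m$ has $m \in x^+$, so $m$ lies in some complete configuration $y$ of $\bsigma$. Then $G \subseteq [m]_\bsigma \subseteq y$, and applying Lemma~\ref{lem:aug_comp_glob} to $y$ yields globularity of $y$, from which the answering property on both sides of $G$ follows directly. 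The main obstacle is the thin axiom, requiring the symmetry-transport argument sketched above; once that bookkeeping is settled, the rest of the proof reduces to systematic inheritance together with the two cited lemmas.
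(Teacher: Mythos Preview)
Your proposal is correct and follows essentially the same approach as the paper: the symmetry-transport argument you sketch for the thin axiom is precisely how the paper establishes that $\tilde{\comp(\bsigma)}$ is an isomorphism family (the paper phrases it as showing membership in $\conf{\comp(\bsigma)}$ is symmetry-invariant, which covers thin and extension at once), and the reduction of causal well-bracketing to Lemmas~\ref{lem:gcc_wb_aut} and~\ref{lem:aug_comp_glob} is identical. One minor point: your argument for positional equivalence is actually slightly simpler than the paper's---you observe that a complete $y\in\conf{\bsigma}$ satisfies $y^+\subseteq y$ with $y$ itself as witness, whereas the paper takes the extra step of proving (via Lemma~\ref{lem:key_caus_wb}) that complete configurations are $+$-covered; your shortcut is valid for $\coll\bsigma=\coll\comp(\bsigma)$, though the paper's stronger observation is used implicitly just after the proposition to conclude that $\comp(\bsigma)$ is \emph{complete}.
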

\begin{proof}
Most conditions are immediate consequences of those
from $\bsigma$. The only non-trivial property is that $\tilde{\bsigma}$
restricted to $\ev{\comp(\bsigma)}$ is still an isomorphism family.

First, we prove that for any $\theta : x_1
\sym_\bsigma x_2$, $x_1 \in \conf{\comp(\bsigma)}$ iff $x_2 \in
\conf{\comp(\bsigma)}$. Indeed, assume $x_1^+ \subseteq y_1 \in
\conf{\bsigma}$ complete. Then, since $\theta$ is an order-iso
that preserves polarities, by \emph{restriction} it restricts to
$\theta' : x_1^+ \sym_\bsigma x_2^+$. Now, by \emph{extension},
$\theta'$ extends to $\theta'' : y_1 \sym_\bsigma y_2$ for some $x_2^+
\subseteq y_2$. But since $\theta''$ preserves the Question/Answer
labeling, $y_2 \in \conf{\bsigma}$ is complete; hence $x_2 \in
\conf{\comp(\bsigma)}$ as required. From that, it is straightforward
that $\tilde{\comp(\bsigma)}$ comprising symmetries between
configurations of $\comp(\bsigma)$ is an isomorphism family.

For causal well-bracketing, $\comp(\bsigma)$ satisfies
\emph{wb-threads} by Lemma \ref{lem:gcc_wb_aut}. For \emph{globular},
taking a diagram as in Definition \ref{def:caus_wb}, $m^+$ appears in a
$+$-covered configuration of $\comp(\bsigma)$; hence in a complete
configuration of $\bsigma$. Thus, the condition follows by Lemma
\ref{lem:aug_comp_glob}.

Finally, we must show $\bsigma \equiv \comp(\bsigma)$ -- in fact, we
show both have the same complete configurations. For that, any complete
$x \in \conf{\bsigma}$ must also be $+$-covered: take $x \in
\conf{\bsigma}$ complete. If $x$ has a maximal negative event $m^-$,
since $x$ is complete, $m$ is an answer. But $\just(\just(m))$ is
a Question answered by some $a^{+,\An}$ in $x$ -- but then
$m\leq_\bsigma a$ by Lemma \ref{lem:key_caus_wb}, contradicting
maximality of $m$. Using this we conclude: clearly, any complete
configuration of $\comp(\bsigma)$ is a complete configuration of
$\bsigma$. Reciprocally, a complete $x \in
\conf{\bsigma}$ is $+$-covered, and thus also a (complete)
configuration of $\comp(\bsigma)$. So, $\coll \bsigma =
\coll \comp(\bsigma)$.
\end{proof}

By construction, we also have that $\comp(\bsigma)$ is
\textbf{complete}, in the following sense:

\begin{defi}
Consider $A$ an arena, and $\bsigma : A$ a causal strategy.

We say that $\bsigma$ is \textbf{complete}, if for any $x \in
\confp{\bsigma}$ there is $x \subseteq y \in \conf{\bsigma}$ complete.
\end{defi}

\subsection{Meager Form} \label{subsec:finite_tests}
As for sequential strategies, definability applies for \emph{finite}
strategies, defined through a notion of \emph{meager form}. But
to define meager forms we will first need to restrict to \emph{concrete
arenas} in the sense of Section~\ref{def:concrete-arena}, which we must
first update.  

\subsubsection{Updating concrete arenas}\label{subsubsec:concrete_pol_sym}
We enrich and update Definition \ref{def:concrete-arena}.

\begin{defi}\label{def:concrete-arena2}
A \textbf{concrete arena} is $(A, A^0, \lbl, \ind)$ with $A$ an arena,
$A^0$ \emph{meager arena},
\[
\lbl : \ev{A} \to \ev{A^0}\,,
\qquad
\qquad
\ind : \ev{A} \to \mathbb{N}
\]
two functions, satisfying, additionally to the conditions of Definition
\ref{def:concrete-arena}, the conditions:
\[
\begin{array}{rl}
\text{\emph{jointly injective:}} & \text{for $a_1, a_2 \in \ev{A}$, if
$\lbl(a_1) = \lbl(a_2)$, $\ind(a_1) = \ind(a_2)$,}\\
&\text{and $\pred(a_1) = \pred(a_2)$, then $a_1 = a_2$.}\\
\text{\emph{$\Qu$-wide:}} &
\text{for any $q_1^\Qu \in \ev{A}$ non-minimal, for any $\grey{n} \in
\mathbb{N}$, there is $q_2^{\Qu} \in \ev{A}$}\\
&\text{such that $\lbl(q_1) = \lbl(q_2)$, $\pred(q_1) =
\pred(q_2)$ and $\ind(q_2) = \grey{n}$.}\\
\text{\emph{$\An$-narrow:}} &
\text{for any $a^- \in \ev{A}$ minimal or $a^\An \in \ev{A}$, $\ind(a) =
\grey{0}$,}\\
\text{\emph{$\An$-conflicting:}} &
\text{if $a_1, a_2 \in \ev{A}$ are distinct, they are in minimal
conflict iff}\\
&\text{they are both minimal with the same polarity and copy index,
or}\\
&\text{they are both answers to the same question.}\\
\text{\emph{$+$-transparent:}} & \text{for $\theta : x \sym_A y$,
then $\theta \in \ptilde{A}$ iff for all $a^- \in x$, $\ind(\theta(a)) =
\ind(a)$.}\\
\text{\emph{$-$-transparent:}} & \text{for $\theta : x \sym_A y$,
then $\theta \in \ntilde{A}$ iff for all $a^+ \in x$, $\ind(\theta(a)) =
\ind(a)$.}
\end{array}
\]

We call $\ind(a)$ the \textbf{copy index} of $a$, and $\apred(-)$
is the (unique) \emph{immediate predecessor}.
\end{defi}

In arenas for ground types, all moves have copy index $\grey{0}$.
For $A\with B$, the copy index function is simply inherited.
For $A \to B$, the copy index of an initial $(1, (\grey{i},
a))$ in $A$ is simply $\grey{i}$ -- in all other cases it is inherited.
It is direct that all requirements are met.

\subsubsection{Meager innocent strategies}
We now introduce the causal counterpart of the meager alternating
innocent strategies of Section~\ref{subsubsec:def_inn}. Those are
parallel innocent strategies in the sense of Section~\ref{sec:par_inn},
but on a restricted arena authorizing only Player replications: 

\begin{prop}
Consider $A$ a concrete arena. Then, setting events:
\[
\ev{A^+} = \{a' \in \ev{A} \mid \forall a^- \leq_A a',~\ind(a) =
\grey{0}\}\,,
\]
with other components inherited from $A$, yields an arena $A^+$.
\end{prop}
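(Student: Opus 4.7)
The plan is to show that $A^+$ is obtained as a down-closed sub-arena of $A$ by restricting to events whose negative ancestors are all on copy index $\grey{0}$. The pivotal preliminary step is a down-closure lemma: if $a' \in \ev{A^+}$ and $b \leq_A a'$, then $b \in \ev{A^+}$. Indeed, for any $c^- \leq_A b$, transitivity gives $c^- \leq_A a'$, so by definition of $\ev{A^+}$ we have $\ind(c) = \grey{0}$; hence $b \in \ev{A^+}$. Once this is established, all the remaining work is bookkeeping.

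From down-closure, I would deduce that the induced structure $(\ev{A^+}, \leq_A{\restriction}, \conflict_A{\restriction}, \pol_A{\restriction})$ is an event structure with polarities: finite causes, conflict inheritance, and the two order axioms all transfer trivially because we have cut on an order-ideal. I would then verify the three defining conditions of an arena. For \emph{alternating}, observe that down-closure implies $a_1 \imc_{A^+} a_2$ coincides with $a_1 \imc_A a_2$, so the alternation of polarities is inherited. For \emph{forestial}, $\leq_{A^+}$ is the restriction of a forest order, hence still a forest. For \emph{race-free}, a minimal conflict in $A^+$ is also a minimal conflict in $A$ (any witness of non-minimality would live below and thus in $A^+$, again by down-closure), so race-freeness follows from that of $A$.

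It remains to endow $A^+$ with an isomorphism family (with a polarized decomposition). The natural candidate is $\tilde{A^+} = \{\theta \in \tilde A \mid \dom \theta, \cod \theta \in \conf{A^+}\}$, together with its positive and negative sub-families; since configurations of $A^+$ are exactly the configurations of $A$ contained in $\ev{A^+}$, the groupoid, restriction and extension axioms follow directly from those of $\tilde A$ (for extension, one uses that any symmetric extension of a configuration in $A^+$ lands in $A^+$ because $\theta$ preserves $\lbl$ and $\ind$ on negative events by $+$-transparency, hence preserves membership in $\ev{A^+}$). Preservation of polarities and the two conditions defining the polarized decomposition are immediate from those of $A$.

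The only step requiring genuine care is the extension axiom for $\tilde{A^+}$, since \emph{a priori} the extension in $\tilde A$ might leave $\ev{A^+}$; the argument is that the definition of $\ev{A^+}$ is a condition on copy indices of negative ancestors, and such copy indices are preserved by symmetries (directly for identity bijections, and by the $\pm$-transparency axioms of concrete arenas otherwise). This is the main subtlety; everything else reduces to the down-closure observation.
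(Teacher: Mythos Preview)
Your down-closure observation and the verification of the plain arena axioms are correct and match the paper's ``routine'' part. The gap is in your treatment of the extension axiom for $\tilde{A^+}$.

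You write that copy indices of negative ancestors ``are preserved by symmetries \dots\ by the $\pm$-transparency axioms''. That is not what those axioms say. The axiom \emph{$+$-transparent} is a \emph{characterization} of $\ptilde{A}$: a symmetry $\theta$ lies in $\ptilde{A}$ \emph{iff} it preserves $\ind$ on negative events. It does not say that an arbitrary $\theta \in \tilde A$ preserves $\ind$ on negatives --- and indeed it does not: in $\oc B$ with $B$ negative, a general symmetry may freely permute the copy indices of the (negative) initial events. So if you extend $\theta \in \tilde{A^+}$ inside $\tilde A$, the extension $\theta'$ may send a negative event of index $\grey{0}$ to one of index $\grey{1}$, and $\cod(\theta')$ need not lie in $\conf{A^+}$. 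Your argument, as stated, is circular: you invoke $+$-transparency to get index preservation on negatives, but that requires already knowing $\theta' \in \ptilde{A}$, which is exactly what is missing.

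The fix is the paper's key observation. First, any $\theta : x \sym_A y$ with $x, y \in \conf{A^+}$ is automatically in $\ptilde{A}$: every negative event in both $x$ and $y$ has index $\grey{0}$, so $\ind$ is preserved on negatives, and \emph{$+$-transparent} gives $\theta \in \ptilde{A}$. Second, $\ptilde{A}$ is itself an isomorphism family, so it enjoys extension: given $x \subseteq x' \in \conf{A^+}$, extend $\theta$ \emph{within $\ptilde{A}$} to $\theta' : x' \sym_A^+ y'$. Now $\theta' \in \ptilde{A}$ does preserve $\ind$ on negatives, and since all negatives in $x'$ have index $\grey{0}$, so do all negatives in $y'$; hence $y' \in \conf{A^+}$. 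This also yields the polarized decomposition for free: $\ptilde{A^+} = \tilde{A^+}$, and $\ntilde{A^+}$ reduces to identities (any $\theta \in \ntilde{A} \cap \tilde{A^+} \subseteq \ntilde{A} \cap \ptilde{A}$ is an identity by the first axiom of a polarized decomposition).
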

\begin{proof}
All verifications are straightforward. For symmetry, if $\theta : x
\sym_A y$ with $x, y \in \conf{A^+}$, then by Definition
\ref{def:concrete-arena2}, $\theta \in \ptilde{A}$. But likewise 
by Definition \ref{def:concrete-arena2}, for $\theta : x \sym_A^+ y$, 
$x \in \conf{A^+}$ iff $y \in \conf{A^+}$. Together those imply that the
restriction of $\tilde{A}$ to $A^+$ satisfies \emph{extension} -- other
axioms are easy. Finally, from Definition \ref{def:concrete-arena2}
again the polarized isomorphism families are $\ptilde{A^+} =
\tilde{A^+}$, and $\ntilde{A^+}$ restricted to identities. 
\end{proof}

So $A^+$ is $A$ where Opponent has only access to copy index $\grey{0}$.
This lets us define: 

\begin{defi}
Consider $A$ a concrete arena.

A \textbf{meager causal (pre)strategy} on $A$ is a causal (pre)strategy
$\bsigma : A^+$.
\end{defi}

As intended, this eliminates the infinity originating from Opponent
repetitions. As a side-effect, the isomorphism family of $\bsigma$
becomes trivial: as the only non-trivial symmetries of $A^+$ are
positive, it follows by condition \emph{thin} (see Lemma 3.28 of
\cite{cg2}) that symmetries of $\bsigma$ are reduced to identities -- so
a meager strategy is really a plain event structure.

Any causal strategy $\bsigma : A$ yields a meager strategy $\rf{\bsigma}
: A^+$, simply by restriction:

\begin{prop}
Consider $A$ a concrete arena, and $\bsigma : A$ any causal strategy.
Setting
\[
\ev{\rf{\bsigma}} = \{m' \in \ev{\bsigma} \mid \forall m^- \leq_\bsigma
m',~\ind(\pr_\bsigma(m)) = \grey{0}\}
\]
with other components inherited from $\bsigma$, yields a meager causal
strategy $\rf{\bsigma} : A^+$. 
\end{prop}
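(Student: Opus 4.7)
The plan is to verify, clause by clause, that $\rf{\bsigma}$ with the indicated components satisfies every defining condition of a causal strategy (Definition \ref{def:caus_strat}) on $A^+$. The starting observation is that $\ev{\rf{\bsigma}}$ is down-closed in $\bsigma$ for $\leq_\bsigma$: if $m'' \leq_\bsigma m'$ with $m' \in \ev{\rf{\bsigma}}$, then any negative $m \leq_\bsigma m''$ also lies below $m'$, so $\ind(\pr_\bsigma(m)) = \grey{0}$. Hence the inherited causality, conflict, and (restricted) symmetry genuinely endow $\rf{\bsigma}$ with the structure of an event structure with symmetry, and the restriction of $\pr_\bsigma$ inherits local injectivity and symmetry-preservation from $\bsigma$.

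Next I would verify that the restricted display map lands in $\ev{A^+}$: using that $\pr_\bsigma$ induces a polarity-preserving bijection $[m']_\bsigma \simeq [\pr_\bsigma(m')]_A$ (by local injectivity, preservation of causality, and down-closure), the condition ``$m' \in \ev{\rf{\bsigma}}$'' translates exactly to ``every negative $a \leq_A \pr_\bsigma(m')$ has $\ind(a) = \grey{0}$'', i.e., $\pr_\bsigma(m') \in \ev{A^+}$. From here, rule-abidingness, \emph{negative}, and \emph{courteous} transfer immediately from $\bsigma$. The \emph{thin} clause also follows: given $\theta : x \sym_{\rf{\bsigma}} y$ with $x \vdash_{\rf{\bsigma}} s_1^+$, the unique thin extension $y \vdash_\bsigma s_2^+$ provided by $\bsigma$ has all its negative dependencies in $y \subseteq \ev{\rf{\bsigma}}$ (since $s_2$ is positive), hence lies in $\rf{\bsigma}$. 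Similarly \emph{receptive}: for $\pr_\bsigma(x) \vdash_{A^+} a^-$, the receptive extension $x \vdash_\bsigma m$ provided by $\bsigma$ has $\ind(a) = \grey{0}$ (as $a \in A^+$) and its other negative dependencies in $x$, so $m \in \ev{\rf{\bsigma}}$.

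The delicate step, and the one I expect to be the main obstacle, is verifying that the inherited isomorphism family $\tilde{\rf{\bsigma}}$ still satisfies the \emph{extension} axiom (and, relatedly, that $\sim$-receptivity goes through). Given $\theta : x \sym_{\rf{\bsigma}} y$ and $x \subseteq x' \in \conf{\rf{\bsigma}}$, the extension in $\tilde{\bsigma}$ yields $\theta' : x' \sym_\bsigma y'$, but one must show $y' \in \conf{\rf{\bsigma}}$. I would proceed event by event: positive extensions are immediate, since the negative dependencies of the new event lie in the extending domain; negative extensions are controlled by the transparency conditions of Definition \ref{def:concrete-arena2}, which (as already used in the proof that $A^+$ is an arena) force any symmetry in $\tilde{A}$ between configurations of $A^+$ to be positive and hence to preserve $\ind = \grey{0}$ on negative events. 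Thus when $\sim$-receptivity of $\bsigma$ provides a negative extension of $\theta'$ matching a new negative $A^+$-event, its image automatically has copy index $\grey{0}$, and the corresponding extension in $\bsigma$ stays in $\rf{\bsigma}$.
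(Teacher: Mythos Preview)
Your proposal is correct and supplies exactly the detailed verification the paper leaves implicit (the paper's proof is simply ``All conditions are straightforward verifications''). One small imprecision: you claim $\pr_\bsigma$ induces a bijection $[m']_\bsigma \simeq [\pr_\bsigma(m')]_A$, but in general $\pr_\bsigma([m']_\bsigma)$ may strictly contain $[\pr_\bsigma(m')]_A$ (e.g.\ when $m'$ has several causally incomparable dependencies in a forestial arena); however, your conclusion only needs that every negative $a \leq_A \pr_\bsigma(m')$ lies in the image, which follows from down-closure of $\pr_\bsigma([m']_\bsigma)$, so the argument goes through unchanged.
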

\begin{proof}
All conditions are straightforward verifications.
\end{proof}

We call $\rf{\bsigma} : A^+$ the \textbf{meager form} of $\bsigma : A$.
As $A^+$ is closed under positive symmetry, it is immediate that
$\rf{-}$ preserves positive isomorphism: if $\bsigma \simstrat
\bsigma'$, then $\rf{\bsigma} \simstrat \rf{\bsigma'}$. As illustration,
we show in Figure \ref{fig:meager_strat} the meager form of
$\mathbf{plet}_{\tunit,\tunit}$ as defined in Section~\ref{subsubsec:intr_plet}. Unlike Figure \ref{fig:plet}, this is now not
merely a symbolic representation, but an exhaustive display of the full
event structure of the meager form. We shall see that as for
alternating innocent strategies, meager forms of parallel innocent,
causally well-bracketed strategies provide a way to give complete formal
descriptions of the full infinite strategy. 

Without parallel innocence, taking the meager form is a lossy operation.
In Figure \ref{fig:non_meager} we show a typical augmentation of the
causal strategy obtained as the interpretation of 
\[
\vdash 
\newref\,x\!\!:=\!0\,\tin\,\lambda f^{\tunit \to
\tunit}.\,f\,(\tlet{v}{!x}{x\!\!:=\!1;\,\assrt\,(v\,=_\tnat\,1)})
: (\tunit \to \tunit) \to \tunit\,,
\]
displaying behaviour that is lost when taking the meager form: any
interference between different copies of the same branch -- a behaviour
typically banned by parallel innocence.  

\begin{figure}
\begin{minipage}{.45\linewidth}
\[
\xymatrix@R=2pt@C=-4pt{
\oc ((\tunit &\with& \tunit) &\with &(( \tunit & \with& \tunit) &\to&
\tunit))
&\vdash& \tunit\\
&&&&&&&&&&\qu^-
        \ar@{-|>}[dllllllllll]
        \ar@{-|>}[dllllllll]\\
\qu^+_{\grey{0}}
        \ar@{-|>}[d]&&
\qu^+_{\grey{1}}
        \ar@{-|>}[d]\\
\done^-_{\grey{0}}
        \ar@{.}@/^.1pc/[u]
        \ar@{-|>}[drrrrrrrr]&&
\done^-_{\grey{1}}
        \ar@{.}@/^.1pc/[u]
        \ar@{-|>}[drrrrrr]\\
&&&&&&&&\qu^+_{\grey{3}}
        \ar@{-|>}[dllll]
        \ar@{-|>}[dll]
        \ar@{-|>}[d]\\
&&&&\qu^-_{\grey{3, 0}}
        \ar@{.}@/^/[urrrr]
        \ar@{-|>}[d]
        \ar@{~}[rr]&&
\qu^-_{\grey{3,0}}
        \ar@{.}@/^/[urr]
        \ar@{-|>}[d]&&
\done^-_{\grey{3}}
        \ar@{-|>}[drr]\\
&&&&
\done^+_{\grey{3, 0}}
        \ar@{.}@/^.1pc/[u]&&
\done^+_{\grey{3, 0}}
        \ar@{.}@/^.1pc/[u]&&&&
\done^+ \ar@{.}@/_/[uuuuu]
}
\]
\caption{The meager form of $\mathsf{plet}_{\tunit, \tunit}$}
\label{fig:meager_strat}
\end{minipage}
\hfill
\begin{minipage}{.45\linewidth}
\[
\xymatrix@R=10pt@C=0pt{
&(\tunit
        \ar@{}[rrrr]|\to&&&&
\tunit) \ar@{}[rrrr]|\to&&&&
\tunit\\
&&&&&&&&&\qu^-
        \ar@{-|>}[dllll]\\
&&&&&\qu^+_{\grey{0}}
        \ar@{.}@/^/[urrrr]
        \ar@{-|>}[dlllll]
        \ar@{-|>}[dlll]\\
\qu^-_{\grey{0},\grey{i}}
        \ar@{.}@/^/[urrrrr]
        \ar@{-|>}[d]
&&\qu^-_{\grey{0},\grey{j}}
        \ar@{.}@/^/[urrr]
        \ar@{-|>}[dll]\\
\done^+_{\grey{0},\grey{i}}
        \ar@{.}@/^/[u]\\~
}
\]
\caption{Lossy meager form}
\label{fig:non_meager}
\end{minipage}
\end{figure}

This lets us define finiteness for parallel innocent strategies as in
Section~\ref{subsubsec:def_inn}:

\begin{defi}
Consider $A$ a concrete arena, and $\bsigma : A$ a parallel innocent
strategy.

We say that $\bsigma$ is \textbf{finite} if the set
$\ev{\rf{\bsigma}}_+ = \{m \in \ev{\rf{\bsigma}} \mid \pol(m) = +\}$
is finite.
\end{defi}

If $\bsigma$ is finite, its \textbf{size} is the cardinal of
$\ev{\rf{\bsigma}}_+$.

\subsubsection{Finite tests suffice} The key mechanism behind the
reduction to finite tests is to be able to restrict a parallel innocent
strategy following a finite subset of its meager form.

Say $x \in \conf{\bsigma}$ is \textbf{normal} iff for all $m^+
\imc_\bsigma m_1^-$ and $m^+ \imc_\bsigma m_2^-$ in $x$, $m_1 = m_2$. We
show that every normal $x\in \conf{\bsigma}$ has a unique representative
in $\rf{\bsigma}$.

\begin{lem}\label{lem:meager_ev}
Consider $A$ a concrete arena, $\bsigma : A$ a causal strategy, and $x
\in \conf{\bsigma}$ normal.

There is a unique $\rf{x} \in \conf{\rf{\bsigma}}$ s.t. $x \sym_\bsigma
\rf{x}$, and $\theta_x : x \sym_\bsigma \rf{x}$ is unique.
\end{lem}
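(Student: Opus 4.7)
The plan is to proceed by induction on $|x|$. The empty case is immediate, taking $\rf{\emptyset} = \emptyset$ and $\theta_\emptyset = \emptyset$, both trivially unique. For the inductive step, write $x = y \cup \{m\}$ with $y \cov x$. Since normality is inherited by sub-configurations (two negative events immediately below the same positive event in $y$ are also in $x$), $y$ is normal, and the inductive hypothesis yields a unique $\rf{y} \in \conf{\rf{\bsigma}}$ together with a unique $\theta_y : y \sym_\bsigma \rf{y}$. The task is to extend $\theta_y$ to incorporate $m$, using the two canonical extension mechanisms of Definition \ref{def:caus_strat}: \emph{thin} for positive moves and $\sim$-receptivity for negative ones.

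For $\pol(m) = +$, the \emph{thin} axiom applied to $\theta_y$ and the extension $y \vdash_\bsigma m$ yields a unique $m^* \in \ev{\bsigma}$ with $\rf{y} \vdash_\bsigma m^*$ and $\theta_y \vdash_{\tilde{\bsigma}} (m, m^*)$. Setting $\rf{x} = \rf{y} \cup \{m^*\}$ and $\theta_x = \theta_y \cup \{(m, m^*)\}$, the condition $\rf{x} \in \conf{\rf{\bsigma}}$ reduces to checking that the negative events below $m^*$ all have copy index $\grey{0}$; this is immediate, since $m^*$ is positive, so these events all lie in $\rf{y}$, where they are acceptable by induction hypothesis.

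For $\pol(m) = -$, we rely on $\sim$-receptivity. The key step is to select a target $a^* \in \ev{A^+}$ so that $\pr_\bsigma(\theta_y) \vdash_{\tilde A} (\pr_\bsigma(m), a^*)$; once such an $a^*$ exists, $\sim$-receptivity delivers a unique $m^*$ with $\pr_\bsigma(m^*) = a^*$ and $\theta_y \vdash_{\tilde{\bsigma}} (m, m^*)$. Three sub-cases arise according to the shape of $m$. If $m$ is minimal in $\bsigma$, then $\An$-narrow forces $\ind(\pr_\bsigma(m)) = \grey{0}$ and we take $a^* = \pr_\bsigma(m)$, which by \emph{jointly injective} is the only minimal event with its label. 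If $m$ is a non-minimal answer, $\An$-narrow and \emph{answer-linear} uniquely determine $a^*$ as the answer to $\pr_\bsigma(\theta_y(\just(m)))$. If $m$ is a non-minimal question, $\Qu$-wide produces $a^*$ of copy index $\grey{0}$ with the correct label and immediate predecessor $\pr_\bsigma(\theta_y(\just(m)))$. In each case the resulting bijection is an order-iso preserving $\lbl$, hence lies in $\tilde A$ by \emph{transparent}, and $a^* \in \ev{A^+}$ guarantees $m^* \in \ev{\rf{\bsigma}}$ and $\rf{y} \cup \{m^*\} \in \conf{\rf{\bsigma}}$.

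The main obstacle is to ensure that in the non-minimal question subcase, the $\grey{0}$-reindexing yielding $a^*$ does not make $m^*$ coincide with an event already present in $\rf{y}$, which would contradict local injectivity and break the $\sim$-receptive extension. This is exactly where the normality hypothesis intervenes: if some $m' \in \rf{y}$ already satisfied $\pr_\bsigma(m') = a^*$, then by Lemma \ref{lem:just_pred} $\theta_y^{-1}(m')$ would be a negative event of $y$ distinct from $m$ yet sharing its immediate predecessor, so $x$ would contain two negative successors of a single positive move, contradicting normality. Uniqueness of $\rf{x}$ and $\theta_x$ flows from the same pieces: once $\rf{y}$ and $\theta_y$ are forced by induction, the requirement $m^* \in \ev{\rf{\bsigma}}$ pins down $\ind(\pr_\bsigma(m^*)) = \grey{0}$, which together with the label and predecessor constraints imposed by the order-iso property and \emph{jointly injective} determines $a^*$ uniquely, after which \emph{thin} (positive case) or $\sim$-receptivity (negative case) fixes $m^*$.
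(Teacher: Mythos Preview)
Your argument is correct and follows the same inductive construction as the paper for existence: both split on the polarity of the added event, use the positive extension mechanism for Player moves, and for Opponent moves adjust the copy index to $\grey{0}$ via $\Qu$-wide (or $\An$-narrow), invoking normality to rule out collisions. The only notable difference is in how uniqueness is handled: you thread it through the induction (using \emph{thin} for the positive step and the uniqueness clause of $\sim$-receptivity together with \emph{jointly injective} for the negative step), whereas the paper gives a direct global argument by composing two candidate symmetries $\theta_1, \theta_2$ to get $\theta_2 \circ \theta_1^{-1} : y_1 \sym_\bsigma y_2$, observing its display lies in $\ptilde{A}$ by $+$-transparency, and then appealing to an external lemma (Lemma~3.28 of \cite{cg2}) to conclude it is the identity. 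Your route is more self-contained; the paper's is shorter but leans on the cited result.
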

\begin{proof}
\emph{Existence.} By induction on $x$. Consider $x \longcov{m} y$. If
$m$ is positive, by \emph{extension} there is an
extension of $\theta_x$ with $(m, n)$. As $n$ is positive
its negative dependencies are in $\rf{x}$ so their display
have copy index $\grey{0}$, so $n \in \ev{\rf{\bsigma}}$. If
$m$ is negative, by \emph{extension} on $A$, there is
\[
\pr_\bsigma\,\theta_x \cup \{(\pr_\bsigma(m), a)\} : \pr_\bsigma(x) \cup
\{\pr_\bsigma(m)\} \sym_A \pr_\bsigma(\rf{x}) \cup \{a\}\,,
\]
with $a$ characterised by $\ind(a)$, $\lbl(a)$, and $\apred(a)$. If $m$
is an answer, so is $a$ and by \emph{$\An$-narrow}, $\ind(a) =
\grey{0}$. If $m$ is a Question, we may not have $\ind(a) = \grey{0}$.
But then by \emph{$\Qu$-wide}, there is $a' \in \ev{A}$ s.t. $\apred(a')
= \apred(a)$, $\lbl(a') = \lbl(a)$ and $\ind(a') = \grey{0}$; but we must
prove that $a'$ is not already in $\pr_{\bsigma}(\rf{x})$. If it is,
there is $\theta_x(m') \in \rf{x}$ s.t. $\pr_\bsigma(\theta_x(m')) =
a'$. By courtesy $m'$ and $m$ are negative in $y$ with the same predecessor,
contradicting normality of $y$. So $y$ extends with $a'$ with copy index
$\grey{0}$. By \emph{transparent}, $\pr_{\bsigma}(\theta_x) \cup
\{(\pr_\bsigma(m), a')\} \in \tilde{A}$.
So, by $\sim$-receptivity of $\bsigma$, $\theta_x$ extends with $(m,
\rf{m})$ in $\bsigma$ s.t. $\rf{m} \in \ev{\rf{\bsigma}}$ as
required.

\emph{Uniqueness.} For $y_1, y_2 \in \conf{\rf{\bsigma}}$ s.t. $\theta_1
: x \sym_\bsigma y_1$ and $\theta_2 : x \sym_\bsigma y_2$, then $\theta
= \theta_2 \circ \theta_1^{-1} : y_1 \sym_\bsigma y_2$. But copy
indices of Opponent events in $\pr_{\bsigma} y_1$ and $\pr_\bsigma y_2$
are $\grey{0}$, so by \emph{$+$-transparent}, $\pr_\bsigma \theta \in
\ptilde{A}$. By Lemma 3.28 of \cite{cg2}, $y_1 = y_2$ and $\theta =
\id$, so $\theta_1 = \theta_2$.
\end{proof}

This does not depend on parallel innocence -- which comes in when
transporting \emph{events}:

\begin{lem}
Consider $A$ a concrete arena, $\bsigma : A$ parallel innocent, and $m
\in \ev{\bsigma}$.

Then, there exists a unique $\rf{m} \in \ev{\rf{\bsigma}}$ such that
$[m]_\bsigma \sym_\bsigma [\rf{m}]_\bsigma$.
\end{lem}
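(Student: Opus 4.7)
The plan is to reduce to Lemma~\ref{lem:meager_ev} applied to the configuration $[m]_\bsigma \in \conf{\bsigma}$. The key step is to verify that $[m]_\bsigma$ is \emph{normal} in the sense of the previous lemma: for every positive $p^+ \in [m]_\bsigma$ and every pair $p \imc_\bsigma p_1^-$, $p \imc_\bsigma p_2^-$ with $p_1, p_2 \in [m]_\bsigma$, we want $p_1 = p_2$. This is where parallel innocence enters, via Lemma~\ref{lem:arguments}: if $p_1 \neq p_2$, then the upward cones $\{m' \in \ev{\bsigma} \mid m' \geq_\bsigma p_1\}$ and $\{m' \in \ev{\bsigma} \mid m' \geq_\bsigma p_2\}$ are disjoint. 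But $p_1, p_2 \in [m]_\bsigma$ means $m \geq_\bsigma p_1$ and $m \geq_\bsigma p_2$, placing $m$ in both cones, a contradiction. This argument is uniform in the polarity of $m$, so no additional work is needed when $m$ itself is negative.

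Having established normality, Lemma~\ref{lem:meager_ev} produces a unique $\rf{[m]_\bsigma} \in \conf{\rf{\bsigma}}$ together with a unique symmetry $\theta_m : [m]_\bsigma \sym_\bsigma \rf{[m]_\bsigma}$. I then set $\rf{m} := \theta_m(m)$. Since $\rf{\bsigma}$ is downward closed in $\bsigma$ (an event belongs to $\rf{\bsigma}$ iff all its negative ancestors have copy index $\grey{0}$, a property inherited below), we have $\rf{m} \in \ev{\rf{\bsigma}}$ and $[\rf{m}]_\bsigma \subseteq \ev{\rf{\bsigma}}$. Symmetries being order-isomorphisms, $\theta_m$ sends the unique maximum $m$ of $[m]_\bsigma$ to the unique maximum of its image, so in fact $[\rf{m}]_\bsigma = \theta_m([m]_\bsigma) = \rf{[m]_\bsigma}$ and $\theta_m : [m]_\bsigma \sym_\bsigma [\rf{m}]_\bsigma$, as required for existence.

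For uniqueness, suppose $\rf{m}' \in \ev{\rf{\bsigma}}$ also satisfies $[m]_\bsigma \sym_\bsigma [\rf{m}']_\bsigma$ via some symmetry $\varphi$. Downward closure of $\rf{\bsigma}$ gives $[\rf{m}']_\bsigma \in \conf{\rf{\bsigma}}$; applying the uniqueness clause of Lemma~\ref{lem:meager_ev} to the normal configuration $[m]_\bsigma$ forces both $[\rf{m}']_\bsigma = \rf{[m]_\bsigma}$ and $\varphi = \theta_m$. Since $\varphi$ sends the top element $m$ of $[m]_\bsigma$ to the top element $\rf{m}'$ of $[\rf{m}']_\bsigma$, we conclude $\rf{m}' = \varphi(m) = \theta_m(m) = \rf{m}$.

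The main obstacle is the normality step: it is exactly the point at which parallel innocence is used, and one must be careful to phrase it in a way (via the cone-disjointness of Lemma~\ref{lem:arguments}) that does not require $m$ itself to be positive, since we also need to transport negative events. The remainder of the argument is a clean application of Lemma~\ref{lem:meager_ev}, together with the easy observation that $\rf{\bsigma}$ is downward closed in $\bsigma$.
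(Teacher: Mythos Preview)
Your proof is correct and follows essentially the same route as the paper: establish that $[m]_\bsigma$ is normal via pre-innocence (the paper states this directly, you spell it out through Lemma~\ref{lem:arguments}), then apply Lemma~\ref{lem:meager_ev} and observe that symmetries are order-isomorphisms so the resulting configuration is prime. Your treatment is more explicit about uniqueness and about why the argument works regardless of the polarity of $m$, but the underlying argument is the same.
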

\begin{proof}
\emph{Pre-innocence} exactly states that the prime configuration
$[m]_\bsigma$ is normal. Hence, by Lemma \ref{lem:meager_ev}, there is a
unique $y \in \conf{\rf{\bsigma}}$ such that $[m]_\bsigma \sym_\bsigma
y$. But then, as symmetries are order-isomorphisms, $y$ is a prime
configuration $y = [\rf{m}]_\bsigma$ as required.
\end{proof}

Moreover, this assignment is preserved under symmetry:

\begin{lem}\label{lem:rf-stab-sym}
Consider $A$ a concrete arena, $\bsigma : A$ parallel innocent.

For any $\theta : x \sym_\bsigma y$ and $m \in x$, $\rf{m} =
\rf{\theta(m)}$.
\end{lem}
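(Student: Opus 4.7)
The plan is to argue by uniqueness, leveraging both the definition of $\rf{m}$ from the preceding lemma and the groupoid structure of the isomorphism family $\tilde{\bsigma}$.

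First I would observe that since symmetries are automatically order-isomorphisms (as remarked just after Definition \ref{def:isofam}), the symmetry $\theta : x \sym_\bsigma y$ sends the prime configuration $[m]_\bsigma$ bijectively onto $[\theta(m)]_\bsigma$. Applying the \emph{restriction} axiom of isomorphism families to $[m]_\bsigma \subseteq x$, this restriction $\theta|_{[m]_\bsigma} : [m]_\bsigma \sym_\bsigma [\theta(m)]_\bsigma$ is itself a symmetry in $\tilde{\bsigma}$.

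Next I would invoke the preceding lemma to obtain $\rf{\theta(m)} \in \ev{\rf{\bsigma}}$ together with a symmetry $\varphi : [\theta(m)]_\bsigma \sym_\bsigma [\rf{\theta(m)}]_\bsigma$. Composing in the groupoid $\tilde{\bsigma}$, the symmetry $\varphi \circ \theta|_{[m]_\bsigma}$ witnesses
\[
[m]_\bsigma \sym_\bsigma [\rf{\theta(m)}]_\bsigma\,,
\]
with $\rf{\theta(m)} \in \ev{\rf{\bsigma}}$. But by the uniqueness clause of the preceding lemma, there is a \emph{unique} element of $\ev{\rf{\bsigma}}$ whose prime configuration is symmetric to $[m]_\bsigma$, and that element is by definition $\rf{m}$. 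Therefore $\rf{m} = \rf{\theta(m)}$, as required.

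There is no real obstacle here: the whole argument is a one-line application of uniqueness, given the groupoid (composition and restriction) structure of $\tilde{\bsigma}$ and the fact that symmetries preserve the order and hence send prime configurations to prime configurations. Parallel innocence of $\bsigma$ enters only indirectly, through the preceding lemma which guarantees that the normality condition of Lemma \ref{lem:meager_ev} is met by prime configurations $[m]_\bsigma$ and $[\theta(m)]_\bsigma$, so that $\rf{m}$ and $\rf{\theta(m)}$ are defined in the first place.
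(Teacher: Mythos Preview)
Your proof is correct and follows essentially the same route as the paper: restrict $\theta$ to a symmetry $[m]_\bsigma \sym_\bsigma [\theta(m)]_\bsigma$, compose with the defining symmetries of $\rf{m}$ and $\rf{\theta(m)}$, and conclude by uniqueness. The only cosmetic difference is that you invoke the uniqueness clause of the preceding lemma directly, whereas the paper instead composes all three symmetries to obtain $\varphi : [\rf{m}]_\bsigma \sym_\bsigma [\rf{\theta(m)}]_\bsigma$, observes via \emph{$+$-transparent} that $\pr_\bsigma(\varphi) \in \ptilde{A}$, and then appeals to Lemma~3.28 of~\cite{cg2} to deduce that $\varphi$ is the identity---which is precisely the argument underlying that uniqueness clause (via Lemma~\ref{lem:meager_ev}). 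So the two proofs unwind to the same thing; yours just packages the final step more economically.
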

\begin{proof}
As $\theta$ is an order-iso, it restricts to $[m]_\bsigma \sym_\bsigma
[\theta(m)]_\bsigma$. Composition with $[m]_\bsigma \sym
[\rf{m}]_\bsigma$ and $[\theta(m)]_\bsigma \sym
[\rf{\theta(m)}]_\bsigma$ yields  $\varphi : [\rf{m}]_\bsigma
\sym_\bsigma [\rf{\theta(m)}]_\bsigma$, and $\pr_\bsigma(\varphi) \in
\ptilde{A}$ by \emph{$+$-transparent}. Hence, by Lemma 3.28 of
\cite{cg2}, $\varphi$ is an identity.
\end{proof}

From that, we may deduce the following:

\begin{cor}\label{cor:extr-finite}
For $A$ concrete, $\bsigma : A$ parallel innocent and
$\btau \cleq \rf{\bsigma}$ finite, 
\[
\ev{\bsigma \restrict \btau} = \{m \in \ev{\bsigma} \mid \rf{m} \in
\ev{\btau}\}
\]
with all other components inherited from $\bsigma$ yields a
finite innocent $\bsigma \restrict \btau : A$ s.t.
$\bsigma \restrict \btau \cleq \bsigma$.

Moreover, if $\bsigma$ is well-bracketed (resp. causally
well-bracketed), so is $\bsigma \restrict \btau$.
\end{cor}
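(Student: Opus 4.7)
The plan is to verify that $\bsigma \restrict \btau$, with the indicated events and with all other components (causality, conflict, symmetry, display map) inherited by restriction from $\bsigma$, satisfies every axiom of a finite parallel innocent causal strategy on $A$. Most conditions will transfer directly from $\bsigma$ via the inclusion $\bsigma \restrict \btau \cleq \bsigma$; the substantive verifications are down-closure of $\ev{\bsigma \restrict \btau}$ under $\leq_\bsigma$, stability of the symmetry under restriction, and receptivity, with the last being the main obstacle and requiring one to treat $\btau$ as a causal strategy on $A^+$.

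For down-closure, I would suppose $m' \leq_\bsigma m$ with $\rf{m} \in \ev{\btau}$. The canonical symmetry $\theta : [m]_\bsigma \sym_\bsigma [\rf{m}]_\bsigma$ arising in the construction of $\rf{m}$ is an order-isomorphism, so $\theta(m') \in [\rf{m}]_\bsigma$. Since $\rf{\bsigma}$ is down-closed in $\bsigma$ and $\rf{m} \in \ev{\rf{\bsigma}}$, we have $[\rf{m}]_\bsigma \subseteq \ev{\rf{\bsigma}}$, so $\rf{\theta(m')} = \theta(m')$; combining with Lemma~\ref{lem:rf-stab-sym} yields $\rf{m'} = \theta(m')$. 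Finally, $\btau \cleq \rf{\bsigma}$ implies $\ev{\btau}$ is down-closed in $\rf{\bsigma}$, so $\rf{m'} \in \ev{\btau}$, establishing $m' \in \ev{\bsigma \restrict \btau}$. Lemma~\ref{lem:rf-stab-sym} also ensures that the symmetry restricts well: for $\theta : x \sym_\bsigma y$ with $x \subseteq \ev{\bsigma \restrict \btau}$, we have $\rf{\theta(m)} = \rf{m} \in \ev{\btau}$ for every $m \in x$, so $y \subseteq \ev{\bsigma \restrict \btau}$; the isomorphism-family axioms are then inherited from $\tilde{\bsigma}$.

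The crux is receptivity. Given $x \in \conf{\bsigma \restrict \btau}$ with $\pr_\bsigma(x) \vdash_A a^-$, receptivity of $\bsigma$ yields a unique $s^- \in \ev{\bsigma}$ extending $x$ with $\pr_\bsigma(s) = a$; I must show $\rf{s} \in \ev{\btau}$. If $s$ is non-initial then Lemma~\ref{lem:just_pred} provides its unique immediate predecessor $\just(s) \imc_\bsigma s$, which lies in $x$, hence $\rf{\just(s)} \in \ev{\btau}$. Restricting the symmetry $[s]_\bsigma \sym_\bsigma [\rf{s}]_\bsigma$ shows that $\rf{s}$ is an Opponent extension of $[\rf{\just(s)}]_{\rf{\bsigma}}$ in $\rf{\bsigma}$, whose display has the same label and justifier as $a$ but copy index $\grey{0}$. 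Since $\btau$ is itself a causal strategy on $A^+$, its receptivity applied at $[\rf{\just(s)}]_{\rf{\bsigma}} \in \conf{\btau}$ supplies this extension, placing $\rf{s}$ in $\ev{\btau}$; the case of initial $s$ is similar, using that $\btau$ contains all minimal events of $A^+$. By Lemma~\ref{lem:rf-stab-sym} the argument is independent of which Opponent copy of $a$ is chosen. The remaining axioms ($\sim$-receptive, thin, courteous, negative) transfer from $\bsigma$ to its symmetry-closed, down-closed substructure.

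Pre-innocence, visibility, and causal determinism are conditions about gccs, justifiers, and negatively-compatible sets; any violation in $\bsigma \restrict \btau$ would lift verbatim to one in $\bsigma$, so all three transfer. Well-bracketing (Definition~\ref{def:wb_caus_strat}) follows from $\NAltStrat(\bsigma \restrict \btau) \subseteq \NAltStrat(\bsigma)$, and causal well-bracketing (\emph{wb-threads}, \emph{globular}) is inherited analogously since both conditions are formulated on configurations and gccs, all of which are configurations and gccs of $\bsigma$. For finiteness, events of $\rf{\bsigma \restrict \btau}$ are those $m \in \ev{\bsigma \restrict \btau}$ whose Opponent predecessors all have copy index $\grey{0}$, that is, $\ev{\bsigma \restrict \btau} \cap \ev{\rf{\bsigma}}$; since $\rf{m} = m$ for any $m \in \ev{\rf{\bsigma}}$, this intersection equals $\ev{\btau}$. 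Hence $\rf{\bsigma \restrict \btau} = \btau$ as event structures, and finiteness of $\btau$ yields finiteness of $\bsigma \restrict \btau$. The inclusion $\bsigma \restrict \btau \cleq \bsigma$ is immediate by construction.
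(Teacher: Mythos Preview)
Your proof is correct and aligns with the paper's approach: the paper's own proof is the single sentence ``The only non-trivial condition, \emph{extension} for $\tilde{\bsigma \restrict \btau}$, follows from Lemma~\ref{lem:rf-stab-sym},'' which is exactly the lemma you invoke for the symmetry. Your write-up is considerably more thorough---in particular, your careful treatment of receptivity (which does require $\btau$ to be a strategy on $A^+$, not merely a prestrategy) and your explicit computation $\rf{\bsigma \restrict \btau} = \btau$ for finiteness fill in steps the paper leaves implicit.
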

\begin{proof}
The only non-trivial condition, \emph{extension} for $\tilde{\bsigma
\restrict \btau}$, follows from Lemma \ref{lem:rf-stab-sym}. 
\end{proof}

From that, we may finally prove that finite tests suffice.

\begin{cor}\label{cor:finite_tests_suffice}
Consider $A$ a concrete $-$-arena, and $\bsigma_1, \bsigma_2 : A$
parallel innocent.

If there is $\balpha : \oc A \vdash \tunit$ parallel innocent and
well-bracketed, such that
\[
\balpha \odot_\oc \bsigma_1 \eval\,,
\qquad
\qquad
\balpha \odot_\oc \bsigma_2 \div\,,
\]
there is $\balpha' \cleq \balpha$ parallel innocent,
well-bracketed, and finite, s.t. $\balpha' \odot_\oc \bsigma_1 \eval$ and
$\balpha' \odot_\oc \bsigma_2 \div$.
\end{cor}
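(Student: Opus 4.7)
The plan is to extract, from the convergence witness for $\balpha \odot_\oc \bsigma_1$, a finite portion of the meager form $\rf{\balpha}$, and then use Corollary \ref{cor:extr-finite} to cut $\balpha$ down accordingly. First, since $\balpha \odot_\oc \bsigma_1 \eval$, there is a positive event in $\balpha \odot_\oc \bsigma_1$ displaying to $\done^+$ in $\tunit$. By Proposition \ref{prop:comp_pcov}, this event arises from a pair of causally compatible $+$-covered configurations $x^{\balpha} \in \confp{\balpha}$ and $x^{\bsigma_1^\dagger} \in \confp{\bsigma_1^\dagger}$, both necessarily finite.

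Next, I would form the finite set $F = \{\rf{m} \mid m \in x^{\balpha}\} \subseteq \ev{\rf{\balpha}}$, and let $\btau$ be the sub-event structure of $\rf{\balpha}$ whose events form the $\leq_{\rf{\balpha}}$-down-closure of $F$, with causality, conflict, symmetry and display map inherited from $\rf{\balpha}$. The down-closure is finite because each prime $[\rf{m}]_{\rf{\balpha}}$ for $m \in x^{\balpha}$ is finite (it is a configuration). One must check that $\btau \cleq \rf{\balpha}$, which boils down to down-closure and the fact that conflict and symmetry restrict transparently; symmetry is trivial in $\rf{\balpha}$, so this is routine. Applying Corollary \ref{cor:extr-finite} to $\bsigma := \balpha$ and to $\btau$, I obtain $\balpha' := \balpha \restrict \btau$ which is finite, parallel innocent, and well-bracketed; and by construction $\balpha' \cleq \balpha$.

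To conclude, two checks remain. First, $\balpha' \odot_\oc \bsigma_1 \eval$: by definition of $\balpha'$ in Corollary \ref{cor:extr-finite}, we have $m \in \ev{\balpha'}$ iff $\rf{m} \in \ev{\btau}$, so every $m \in x^{\balpha}$ satisfies $\rf{m} \in F \subseteq \ev{\btau}$; hence $x^{\balpha} \in \conf{\balpha'}$, and the same causally compatible pair $(x^{\balpha}, x^{\bsigma_1^\dagger})$ now witnesses convergence of $\balpha' \odot_\oc \bsigma_1$. Second, $\balpha' \odot_\oc \bsigma_2 \div$ follows from monotonicity of the categorical operations in $\cleq$ noted in Section \ref{subsubsec:core_pcf}: since $\balpha' \cleq \balpha$, we have $\balpha' \odot_\oc \bsigma_2 \cleq \balpha \odot_\oc \bsigma_2$, so a positive event in the former would also exist in the latter, contradicting divergence.

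The main obstacle is the bookkeeping in the middle step: verifying that the down-closure of $F$ does define a valid $\btau \cleq \rf{\balpha}$, and that Corollary \ref{cor:extr-finite} really yields a configuration correspondence making $x^{\balpha}$ appear as a configuration of $\balpha'$. Both rest on the fact that $\rf{-}$ is stable under symmetry (Lemma \ref{lem:rf-stab-sym}) and on the characterization of configurations of $\balpha \restrict \btau$ by membership of their meager images in $\ev{\btau}$; none of these steps require new ideas beyond those already established in Section \ref{subsec:finite_tests}.
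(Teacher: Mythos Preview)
Your proposal is correct and follows essentially the same approach as the paper: extract a finite configuration $x^{\balpha}$ from the convergence witness, take its meager image, build a finite $\btau \cleq \rf{\balpha}$ containing it, and apply Corollary~\ref{cor:extr-finite}. The paper's proof is terser (it just asserts the existence of such a $\btau$ rather than constructing it as a down-closure, and it argues divergence directly from $\balpha' \cleq \balpha$ rather than invoking monotonicity of composition), but the substance is identical.
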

\begin{proof}
Since $\balpha \odot_\oc \bsigma_1 \eval$, there is $x^\balpha \odot
x^{\bsigma_1} \in \conf{\balpha \odot \bsigma_1^\dagger}$ s.t.
$\pr_\balpha x^\balpha = x_A \parallel \{\qu^-, \done^+\}$. In
particular, $x^\balpha$ is finite. So the set $X = \{\rf{m} \mid m \in
x^\balpha\}$ is finite, so there is $\btau \cleq \rf{\balpha}$ a finite
meager strategy s.t. $X \subseteq \ev{\btau}$. By Corollary
\ref{cor:extr-finite}, $\balpha' = \balpha \restrict \btau : A$ is a
finite parallel innocent strategy s.t. $\balpha' \cleq \balpha$.
Moreover, by construction, $x^\balpha \in \conf{\balpha'}$ with the same
causal ordering as in $\balpha$, so that $\balpha' \odot_\oc \bsigma_1
\eval$ still. Finally, $\balpha' \odot_\oc \bsigma_2 \eval$ would
contradict $\balpha \odot_\oc \bsigma_2 \div$ since $\balpha' \cleq
\balpha$. As $\NAlt(\balpha') \subseteq \NAlt(\balpha)$, $\balpha'$ is
still well-bracketed. 
\end{proof}

In fact, meager forms of parallel innocent strategies can be expanded
back to the original strategy. This is not used in the technical
development, but we include it as Appendix \ref{app:exp_meager}.

\subsection{Factorization} \label{subsec:factor}
We focus on finite tests.  Unlike in the sequential argument of
Section~\ref{subsubsec:def_inn}, parallel innocent strategies have no ``first
Player move'' to reproduce first syntactically. Hence we organize our 
definability process differently. Its core is a \emph{factorization
result} (Corollary \ref{cor:factor_fst_arg}): namely, that every finite
test $\balpha : \oc (\with A_i) \vdash \tx$ may be obtained as 
\begin{eqnarray}
\balpha &\equiv& \fo(\balpha) \odot_\oc
\tuple{x_i\,\balpha_{k,1}\,\dots\,\balpha_{k,p_i} \mid i\in I,~k\in
K_i}\,,\label{eq:factor0}
\end{eqnarray}
with $\fo(\balpha)$ a strategy on a \emph{first-order type} and
$\balpha_{k,j}$ strictly smaller. This reduces finite definability to
that for finite \emph{first-order} strategies, dealt with in Section~\ref{subsubsec:fst_def}.

We first extract the components mentioned in
\eqref{eq:factor0}: the \emph{first-order substrategy} $\fo(\balpha)$,
and the \emph{argument substrategies} $\balpha_{k,j}$. 
\begin{figure}
\begin{minipage}{.46\linewidth}
\[
\scalebox{.9}{$
\xymatrix@C=-2pt@R=6pt{
\oc ((\tbool & \to & \tunit) &\with& ((\tunit & \to & \tbool) &\to&
\tbool))
&\vdash& \tbool\\
&&&&&&&&&& \red{\qu^-}
        \ar@[red]@{-|>}[dllllllll]
        \ar@[red]@{-|>}[dll]\\
&&\red{\qu}^{\red{+}}_{\grey{0}}
        \ar@{-|>}[dll]
        \ar@[red]@{-|>}[d]
&&&&&&\red{\qu}^{\red{+}}_{\grey{1}}
        \ar@{-|>}[dll]
        \ar@[red]@{-|>}[d]\\
\green{\qu}^{\green{-}}_{\grey{0,i}}
        \ar@{.}@/^/[urr]
        \ar@[green]@{-|>}[d]&&
\red{\done}^{\red{-}}_{\grey{0}}
        \ar@[red]@{.}@/^/[u]
        \ar@[red]@{-|>}[d]&&&&
\cyan{\qu}^{\cyan{-}}_{\grey{1,r}}
        \ar@{.}@/^/[urr]
        \ar@[cyan]@{-|>}[dll]
        \ar@[cyan]@{-|>}[drr]&&
\red{\ttrue}^{\red{-}}_{\grey{1}}
        \ar@[red]@{-|>}[dllllll]
        \ar@[red]@{.}@/^/[u]\\
\green{\ttrue}^{\green{+}}_{\grey{0,i}}
        \ar@[green]@{.}@/^/[u]&&
\red{\qu}^{\red{+}}_{\grey{2}}
        \ar@[red]@{-|>}[d]
        \ar@{-|>}[dll]&&
\cyan{\qu}^{\cyan{+}}_{\grey{1,r,0}}
        \ar@[cyan]@{.}@/^/[urr]
        \ar@[cyan]@{-|>}[d]&&&&
\cyan{\qu}^{\cyan{+}}_{\grey{2r+3}}
        \ar@[cyan]@{-|>}[d]\\
\purple{\qu}^{\purple{-}}_{\grey{2, k}}
        \ar@[purple]@{-|>}[d]
        \ar@{.}@/^/[urr]
&&\red{\done}^{\red{-}}_{\grey{2}}
        \ar@[red]@{.}@/^/[u]
        \ar@[red]@{-|>}[drrrrrrrr]&&
\cyan{\done}^{\cyan{-}}_{\grey{1,r,0}}
        \ar@[cyan]@{.}@/^/[u]
        \ar@[cyan]@{-|>}[drr]&&&&
\cyan{b}^{\cyan{-}}_{\grey{2r+3}}
        \ar@[cyan]@{.}@/^/[u]
        \ar@[cyan]@{-|>}[dll]\\
\purple{\tfalse}^{\purple{+}}_{\grey{2, k}}
        \ar@[purple]@{.}@/^/[u]
&&&&&&\cyan{b}^{\cyan{+}}_{\grey{1,r}}
        \ar@[cyan]@{.}@/^/[uuu]&&&&
\red{\ttrue}^{\red{+}}
        \ar@[red]@{.}@/_/[uuuuu]
}$}
\]
\end{minipage}
\hfill
\begin{minipage}{.46\linewidth}
\[
\scalebox{.9}{$
\xymatrix@C=-2pt@R=6pt{
\oc ((\tbool & \to & \tunit) &\with& ((\tunit & \to & \tbool) &\to&
\tbool))
&\vdash& \tbool\\
&&&&&&&&&& \red{\qu^-}
        \ar@[red]@{-|>}[dllllllll]
        \ar@[red]@{-|>}[dll]\\
&&\red{\qu}^{\red{+}}_{\grey{0}}
        \ar@{-|>}[dll]
        \ar@[red]@{-|>}[d]
&&&&&&\red{\qu}^{\red{+}}_{\grey{1}}
        \ar@{-|>}[dll]
        \ar@[red]@{-|>}[d]\\
\green{\qu}^{\green{-}}_{\grey{0,i}}
        \ar@{.}@/^/[urr]
        \ar@[green]@{-|>}[d]&&
\red{\done}^{\red{-}}_{\grey{0}}
        \ar@[red]@{.}@/^/[u]
        \ar@[red]@{-|>}[drrrrrrrr]&&&&
\cyan{\qu}^{\cyan{-}}_{\grey{1,r}}
        \ar@{.}@/^/[urr]
        \ar@[cyan]@{-|>}[dll]
        \ar@[cyan]@{-|>}[drr]&&
\red{\tfalse}^{\red{-}}_{\grey{1}}
        \ar@[red]@{-|>}[drr]
        \ar@[red]@{.}@/^/[u]\\
\green{\ttrue}^{\green{+}}_{\grey{0,i}}
        \ar@[green]@{.}@/^/[u]&&&&
\cyan{\qu}^{\cyan{+}}_{\grey{1,r,0}}
        \ar@[cyan]@{.}@/^/[urr]
        \ar@[cyan]@{-|>}[d]&&&&
\cyan{\qu}^{\cyan{+}}_{\grey{2r+3}}
        \ar@[cyan]@{-|>}[d]&&
\red{\tfalse}^{\red{+}}
        \ar@[red]@{.}@/_/[uuu]\\
&&
&&
\cyan{\done}^{\cyan{-}}_{\grey{1,r,0}}
        \ar@[cyan]@{.}@/^/[u]
        \ar@[cyan]@{-|>}[drr]&&&&
\cyan{b}^{\cyan{-}}_{\grey{2r+3}}
        \ar@[cyan]@{.}@/^/[u]
        \ar@[cyan]@{-|>}[dll]\\
&&&&&&\cyan{b}^{\cyan{+}}_{\grey{1,r}}
        \ar@[cyan]@{.}@/^/[uuu]&&&&
}$}
\]
\end{minipage}
\caption{A parallel innocent causally well-bracketed strategy}
\label{fig:ex_definability}
\end{figure}
We use as illustration the strategy with typical maximal augmentations
in Figure \ref{fig:ex_definability}. The \emph{first-order
sub-strategy}, in red, has events those depending on no Opponent
question besides the initial move: it is independent of Opponent's
exploration of the arguments, and is purely first-order.  The Player
questions in this first-order part play a special role; we call them
\emph{primary questions}. Intuitively, they correspond to occurrences of
variables not appearing in an argument to a variable call. In
Figure \ref{fig:ex_definability}, the primary questions are
$\red{\qu}^{\red{+}}_{\grey{0}}$, $\red{\qu}^{\red{+}}_{\grey{1}}$ and
$\red{\qu}^{\red{+}}_{\grey{2}}$.  Depending on their type, the primary
questions admit arguments that Opponent can access by playing
questions pointing to them. Parts of the strategy accessed in this way
are the \emph{argument sub-strategies} -- in Figure
\ref{fig:ex_definability} there are three,
respectively prompted by (\emph{i.e.} causally depending on)
$\green{\qu}^{\green{-}}_{\grey{0,i}}$,
$\purple{\qu}^{\purple{-}}_{\grey{2,k}}$ and
$\cyan{\qu}^{\cyan{-}}_{\grey{1,r}}$ and colored accordingly.
\begin{figure}
\[
\begin{array}{l}
f : \tbool \to \tunit, g : (\tunit \to \tbool) \to \tbool \vdash \\
\red{\mathbf{let}}
\left(
\begin{array}{l}
\red{x}\,\red{=}\,\red{f}\,\green{\ttrue}\\
\red{y}\,\red{=}\,\red{g}\,\color{cyan}(\lambda
z^{\tunit}.\,\mathbf{let}
\left(
\begin{array}{l}
u\,=\,z\\
v\,=\,g\,\bot
\end{array}
\right)\,\mathbf{in}\,v)
\end{array}
\right)
\,\red{\mathbf{in}}\,\red{(\mathbf{if}\,y\,\mathbf{then}\,(f\,}\purple{\tfalse}\red{;\,\ttrue)\,\mathbf{else}\,\tfalse)}
: \tbool
\end{array}
\]
\caption{Factorization and definability for Figure
\ref{fig:ex_definability}}
\label{fig:ex_def_term}
\end{figure}

The strategy of Figure \ref{fig:ex_definability} is exactly definable;
as illustration we show the term in Figure
\ref{fig:ex_def_term}, with subterms colored so as to match the four
components of the strategy\footnote{In the end, our definability process
will not quite give the term of Figure \ref{fig:ex_def_term} for the
strategy of Figure \ref{fig:ex_definability}, but a sequential version
as we do not know how to define first-order strategies in general -- see
Section~\ref{subsubsec:fst_def}.}.

The proof of factorization is organized as follows. In
Section~\ref{subsubsec:dec}, we extract the first-order part, and in
Section~\ref{subsubsec:flow} we reorganize it so as to be able to re-compose it
better. In Section~\ref{subsubsec:arg_substrat}, we extract the
argument substrategies. In Section~\ref{subsubsec:recomp}, we
conclude.

\subsubsection{Shallow substrategy}\label{subsubsec:dec} Consider
$A$ a type, and $\balpha : \intr{A}$ well-bracketed, parallel innocent,
causally well-bracketed, finite, and complete. We call such a strategy a
\textbf{test strategy}.

Necessarily $A$ has the form $A_1 \to \dots \to A_n \to \tx$ where $A_i
= A_{i,1} \to \dots \to A_{i, p_i} \to \tx_i$. Recall that $\tx, \tx_i$
range over ground types, \emph{i.e.} $\tunit$, $\tbool$ and $\tnat$. Up
to currying, we write
\[
\balpha : \oc (\with_{1\leq i \leq n} A_i) \vdash \tx\,,
\]
omitting semantic brackets. We often shorten the left hand side
part to $\oc (\with A_i)$, and reuse $A$ for the arena $\oc (\with A_i)
\vdash \tx$. Now, we start with the \emph{shallow substrategy}.

\begin{prop}
We define the \textbf{shallow substrategy} as having set of events
\[
\ev{\fst(\balpha)} = \{m \in \ev{\balpha} \mid \text{$[m]_\balpha$ has
at most one Opponent question}\}\,,
\]
and other components inherited from $\balpha$.
Then, $\fst(\balpha) : \oc (\with \tx_i) \vdash \tx$ is a test strategy.
\end{prop}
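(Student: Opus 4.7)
The plan is to verify each condition in the definition of a test strategy for $\fst(\balpha)$: that it is a causal strategy on the sub-arena $\oc(\with \tx_i) \vdash \tx$, well-bracketed, parallel innocent, causally well-bracketed, finite, and complete. Most conditions will transfer directly from $\balpha$ by virtue of the inherited data, but the display map and completeness will require genuine argument.

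First I would check that the restricted data defines a causal strategy displaying into $\oc(\with \tx_i) \vdash \tx$. Since $\ev{\fst(\balpha)}$ is downward-closed in $\ev{\balpha)}$ by construction, consistency, local injectivity, symmetry-preservation, thin, courtesy, and pointedness all transfer directly. A case analysis on the arena shows that any $m \in \ev{\fst(\balpha)}$ displays into the sub-arena: Opponent questions in $\oc(\with A_i) \vdash \tx$ other than the initial $\qu^-$ can only occur as initial moves of some $A_{i,j}$, so any event strictly below such a question would carry at least two Opponent questions in its causal history. For $\sim$-receptivity and receptivity, the key observation is that in the sub-arena $\oc(\with \tx_i) \vdash \tx$ every Opponent event is either the unique initial $\qu^-$ or an answer in some $\tx_i$; in either case the extension adds no Opponent question to the causal history, so the witness obtained from $\balpha$'s receptivity automatically lies in $\fst(\balpha)$.

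Next I would verify parallel innocence, well-bracketing, causal well-bracketing, and finiteness. Causal determinism and pre-innocence are conditions on configurations and gccs, which transfer to any downward-closed substructure; visibility of $\fst(\balpha)$ combines visibility of $\balpha$ with the display analysis above. For well-bracketing of $\NAltStrat(\fst(\balpha))$, a play $tm^+$ of $\fst(\balpha)$ lifts to a play of $\balpha$ with exactly the same event sequence, so its question/pending-question structure is unchanged between sub-arena and full arena, and well-bracketing of $tm^+$ in $\balpha$ transfers. Condition \emph{wb-threads} is automatic by Lemma~\ref{lem:gcc_wb_aut}, and \emph{globular} transfers directly: a globule in $\fst(\balpha)$ is also a globule in $\balpha$, and the branches along which its questions get answered lie in $\fst(\balpha)$ by downward-closure. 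Finiteness follows from $\ev{\rf{\fst(\balpha)}} \subseteq \ev{\rf{\balpha}}$ via the obvious compatibility of meager form with restriction.

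The main obstacle is completeness. Given $x \in \confp{\fst(\balpha)}$, completeness of $\balpha$ gives $x \subseteq y \in \conf{\balpha}$ complete; I would set $y_{\fst} = y \cap \ev{\fst(\balpha)}$, which by downward-closure is a configuration containing $x$. The delicate step is showing $y_{\fst}$ complete: for each question $q \in y_{\fst}$, its unique answer $a$ in $y$ must also lie in $\fst(\balpha)$. When $q$ is a primary Player question, $a$ is an Opponent answer and adds no new Opponent question, so $a \in \fst(\balpha)$. The hard case is the Player answer $v^+$ to the initial $\qu^-$, where I must prove $[v^+]_\balpha$ contains no Opponent question besides $\qu^-$. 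Arguing by contradiction, if $v^+$ depended on some non-initial Opponent question $\qu'^-$, then by visibility of $\balpha$ there is $\rho \in \gcc(\balpha)$ ending at $v^+$ and containing $\qu'^-$, well-bracketed by Lemma~\ref{lem:gcc_wb_aut}. Since $v^+$ answers $\qu^- \neq \qu'^-$, well-bracketing of $\rho$ forces $\qu'^-$ to be answered in $\rho$ by some $a'^+$ strictly before $v^+$; but then the $\imc_\balpha$-successor of $a'^+$ in $\rho$ would force, via courtesy, an arena dependency $a'^+ \imc_A m^-$ on $a'^+$ — impossible by \emph{answer-closing}, which makes answers maximal in the arena. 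This contradiction establishes $v^+ \in \fst(\balpha)$ and completes the completeness argument.
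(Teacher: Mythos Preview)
Your proposal is correct; the core argument matches the paper's, with one organizational difference in the completeness step worth noting.

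For completeness, you take $y_{\fst} = y \cap \ev{\fst(\balpha)}$ and verify it is complete by checking each question type, the hard case being that the Player answer $v^+$ to the initial $\qu^-$ lies in $\fst(\balpha)$. The paper instead takes $z = [a^+]_\balpha$ (with $a^+$ the answer to $\qu_0^-$) and uses Lemma~\ref{lem:key_caus_wb} to show $x \subseteq z$, then shows $z$ complete via well-bracketing of gccs, then $z \in \conf{\fst(\balpha)}$ via the same ``Player answers are maximal in gccs'' observation you use. Your route trades the appeal to Lemma~\ref{lem:key_caus_wb} for the trivial containment $x \subseteq y_{\fst}$, which is slightly more self-contained; the paper's choice gives a canonical (prime) witness. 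The underlying key fact---that a Player answer cannot have a successor in a gcc because courtesy would force an arena dependency contradicting \emph{answer-closing}---is identical in both.

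Two minor remarks. First, your phrase ``by visibility of $\balpha$ there is $\rho \in \gcc(\balpha)$ ending at $v^+$ and containing $\qu'^-$'' is slightly off: the existence of such a gcc follows directly from $\qu'^- \leq_\balpha v^+$ (extend any immediate-causality chain backwards to a minimal event); visibility is not needed there. Second, you do not explicitly verify that the restricted isomorphism family on $\fst(\balpha)$ satisfies the \emph{extension} axiom; the paper addresses this by observing that symmetries, being order-isomorphisms preserving polarity and Q/A labeling, preserve the defining predicate of $\fst(\balpha)$.
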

\begin{proof}
Write $\fst(A)$ for the arena $\oc (\with \tx_i) \vdash \tx$. First, for
each $m \in \ev{\fst(\balpha)}$, $\pr_\balpha(m) \in \ev{\fst(A)}$:
indeed, the least events in $\ev{A}$ but not in $\ev{\fst(A)}$
are Opponent questions. For \emph{extension}, as symmetries are
order-isos preserving polarities and Q/A labeling, they preserve
$\fst(\balpha)$. The conditions for a test strategy are direct by
restriction from $\balpha$.

We show $\fst(\balpha)$ complete. Consider $x \in
\confp{\fst(\balpha)}$. Since $\balpha$ is complete, there is $x
\subseteq y \in \conf{\balpha}$ complete. In particular, there is an
answer $a^+$ to the initial $\qu_0^-$. By Lemma
\ref{lem:key_caus_wb}, all moves of $x$ are below $a^+$ for
$\leq_\bsigma$. So setting $z = [a]_\bsigma$, $x \subseteq z$.
By Lemma \ref{lem:gcc_wb_aut}, all gccs leading to $a$ are
well-bracketed, so $[a]_\bsigma$ is complete. Finally, $[a]_\bsigma \in
\conf{\fst(\balpha)}$: indeed, if it has a non-initial
Opponent question $\qu^-$, it has an answer $b^+$ distinct from
$a^+$. But Player answers are maximal in gccs, contradicting
that any gcc can be extended to $a$.
\end{proof}

This captures the red part of Figure \ref{fig:ex_definability}.
All events of $\fst(\balpha)$ are in $\rf{\balpha}$: the negative
dependencies of $m \in \ev{\fst(\balpha)}$ are either answers or the
initial question, so by \emph{$\An$-narrow} their display has
copy index $\grey{0}$. Since $\balpha$ is finite, so is the set of
positive events of $\fst(\balpha)$. 

\subsubsection{The flow substrategy}\label{subsubsec:flow}
We must reconstruct $\balpha$ using the categorical structure of
$\CG$. But as distinct Player questions in the same $\tx_i$
may receive distinct argument substrategies, we need to relabel
$\fst(\balpha)$ to send those to distinct components. First we define:

\begin{defi}
A \textbf{primary question} of $\balpha$ is any $q^{\Qu,+} \in
\ev{\fst(\balpha)}$.
We write $\Q$ for the set of primary questions, and $\Q_i$ for the
primary questions displaying to $\tx_i$.
\end{defi}

By construction, $\Q = \uplus_{1\leq i \leq n} \Q_i$. As
$\ev{\fst(\balpha)}$ is finite, so is $\Q$, which allows us to set:

\begin{defi}\label{def:flow}
The \textbf{flow substrategy} $\flow(\balpha) : \bigotimes_{1\leq
i \leq n} \bigotimes_{q \in \Q_i} \tx_i \vdash \tx$ is $\fst(\balpha)$
with
\[
\begin{array}{rclcrcl}
\pr_{\flow(\balpha)}(m^{\Qu,-}) &=& (2,a) &\text{if}&
\pr_{\fst(\balpha)}(m^{\Qu,-})
&=& (2, a)\\
\pr_{\flow(\balpha)}(m^{\An,+}) &=& (2, a) &\text{if}&
\pr_{\fst(\balpha)}(m^{\An,+}) &=& (2, a)\\
\pr_{\flow(\balpha)}(m^{\Qu,+}) &=& (1, (i, (m, a))) &\text{if}&
\pr_{\fst(\balpha)}(m^{\Qu,+}) &=& (1, (\grey{j}, (i, a)))\\
\pr_{\flow(\balpha)}(m^{\An, -}) &=& (1, (i, (\just(m), a))) &\text{if}&
\pr_{\fst(\balpha)}(m^{\An, -}) &=& (1, (\grey{j}, (i, a)))\,,
\end{array}
\]
\emph{i.e.} sending each $q \in \Q$ and its answers to the copy of
$\tx_i$ specified by indices $i, q$.
\end{defi}

It is a test strategy.
We show in Figure \ref{fig:flow} the maximal augmentations of the flow
substrategy for Figure \ref{fig:ex_definability}, tagging each
component by the corresponding primary question.

\begin{figure}
\begin{minipage}{.45\linewidth}
\[
\xymatrix@C=-2pt@R=5pt{
\tunit^{\qu_{\grey{0}}^+} &\tensor& \tunit^{\qu_{\grey{2}}^+} &\tensor&
\tbool^{\qu_{\grey{1}}^+} & \vdash& \tbool\\
&&&&&&\qu^-
        \ar@{-|>}[dllllll]
        \ar@{-|>}[dll]\\
\qu^+   \ar@{-|>}[d]&&&&
\qu^+   \ar@{-|>}[d]\\
\done^- \ar@{-|>}[drr]
        \ar@{.}@/^/[u]&&&&
\ttrue^-\ar@{-|>}[dll]
        \ar@{.}@/^/[u]\\
&&\qu^+
        \ar@{-|>}[d]\\
&&\done^-
        \ar@{.}@/^/[u]
        \ar@{-|>}[drrrr]\\
&&&&&&\ttrue^+
        \ar@{.}@/_/[uuuuu]
}
\]
\end{minipage}
\hfill
\begin{minipage}{.45\linewidth}
\[
\xymatrix@C=-2pt@R=5pt{
\tunit^{\qu_{\grey{0}}^+} &\tensor& \tunit^{\qu_{\grey{2}}^+} &\tensor&
\tbool^{\qu_{\grey{1}}^+} & \vdash& \tbool\\
&&&&&&\qu^-
        \ar@{-|>}[dllllll]
        \ar@{-|>}[dll]\\
\qu^+   \ar@{-|>}[d]&&&&
\qu^+   \ar@{-|>}[d]\\
\done^- \ar@{-|>}[drrrrrr]
        \ar@{.}@/^/[u]&&&&
\tfalse^-\ar@{-|>}[drr]
        \ar@{.}@/^/[u]\\
&&&&&&\tfalse^+
        \ar@{.}@/_/[uuu]\\~\\~\\~
}
\]
\end{minipage}
\caption{Augmentations of the flow substrategy for Figure \ref{fig:ex_definability}}
\label{fig:flow}
\end{figure}

\subsubsection{The argument substrategies}\label{subsubsec:arg_substrat}
Next, we focus on the \emph{higher-order} structure, aiming to extract
the arguments to (the variable calls corresponding to) the primary
questions.

Fix a primary question $q \in \Q_i$. It displays to an initial event in
$A_i$, which is:
\[
\oc A_{i, 1} \lin \dots \lin~\oc A_{i, p_i} \lin \tx_i\,.
\]

Argument substrategies are accessed by Opponent questions pointing to
primary questions. Up to symmetry, there are $p_i$ Opponent questions
pointing to $q$, matching the $p_i$ arguments of $A_i$. From
now on, if $q \in \Q_i$ is a primary question and $q \imc_\bsigma
m^{\Qu,-}$ an Opponent question, we shall say that $m$ is \textbf{in
component $j$} if it displays to an initial move of $\oc A_{i,j}$. For
$q \in \Q_i$ and $1 \leq j \leq p_i$, we shall
extract the \emph{argument sub-strategy} $\alpha_{q, j}$ initiated by
Opponent questions pointing to $q$ in component $j$. As the
strategy provides the information for an argument of $A_i$ it must live
in $A_{i,j}$; but it can still access the context, so we aim
for:
\[
\balpha_{q, j} : \oc (\with A_i) \vdash\,\oc A_{i,j}\,.
\]

To do this, we assign to all events of $\balpha$ \emph{tags}, as
follows:

\begin{defi}\label{def:tag}
Consider $m \in \ev{\balpha}$. We write:
\[
\begin{array}{rclcl}
m &\inplus& \fst(\balpha)&\Leftrightarrow&
        \text{$[m]_\balpha$ comprises exactly one Opponent question,}\\
m &\inplus& \balpha_{q, j} &\Leftrightarrow&
        \text{there is $q \imc_\balpha n^{\Qu,-}$ in component $j$, such
that $n \leq_\balpha m$,} 
\end{array}
\]
where in the second clause $q \in \Q_i$ a primary question, and $1\leq
j \leq p_i$.
\end{defi}

Any $m \in \ev{\balpha}$ is tagged as either in $\fst(\balpha)$ or
in one argument sub-strategy. Crucially, each event receives
\emph{exactly one tag} -- this is where our structural constraints 
strike in:

\begin{lem}\label{lem:tag}
Every event $m \in \ev{\balpha}$ receives exactly one tag following
Definition \ref{def:tag}.
\end{lem}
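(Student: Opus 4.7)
My plan is to split the proof into existence and uniqueness of the tag, with the interesting work living in the uniqueness part. Throughout I will use that the arena $A = \oc(\with A_i) \vdash \tx$ has a \emph{unique} minimal event (the initial $\qu_0^-$ in $\tx$), hence by receptivity and pointedness so does $\balpha$; and that positive answers in $\balpha$ are maximal in $\balpha$ (by courtesy plus \emph{answer-closing} in $A$). I will also use the observation that a primary question $q \in \Q$ satisfies $\qu_0^- \imc_\balpha q$: since $[q]_\balpha$ contains exactly one Opponent question (the initial), and $\imc_\balpha$ alternates polarity by Lemma \ref{lem:app_caus_alt}, the gcc from $\qu_0^-$ to $q$ must be reduced to $\qu_0^- \imc_\balpha q$.

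For existence: if $[m]_\balpha$ contains exactly one Opponent question, then $m \inplus \fst(\balpha)$. Otherwise, pick $n \leq_\balpha m$ minimal among non-initial Opponent questions. By Lemma \ref{lem:just_pred}, $\just(n) \imc_\balpha n$ is its unique immediate predecessor, so $[\just(n)]_\balpha = [n]_\balpha \setminus \{n\}$. By minimality of $n$, the only Opponent question in $[\just(n)]_\balpha$ is $\qu_0^-$, hence $\just(n) \in \Q$ is primary; and $n$ lies in some component $j$, giving $m \inplus \balpha_{\just(n),j}$.

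For uniqueness, the case $m \inplus \fst(\balpha)$ versus $m \inplus \balpha_{q,j}$ is immediate: the witness $n \leq_\balpha m$ is a non-initial Opponent question distinct from $\qu_0^- \in [m]_\balpha$. The substantive case is $m \inplus \balpha_{q_1,j_1}$ and $m \inplus \balpha_{q_2,j_2}$ with witnesses $n_1 \neq n_2$. I choose gccs $\rho_1, \rho_2 \imc m$ passing through $n_1, n_2$ respectively; each factors as $\qu_0^- \imc q_k \imc n_k \imc \dots \imc m$ by the observation above. If $q_1 = q_2$, the least distinct events of $\rho_1, \rho_2$ are $n_1, n_2$, which are Opponent — this violates \emph{pre-innocence}. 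So $q_1 \neq q_2$, and the two gccs form a globule with common root $\qu_0^-$ and least distinct events $q_1^+, q_2^+$. This is where the main obstacle lies: extracting a contradiction from this globule. Here \emph{globular} (causal well-bracketing) forces the Player question $q_1$ to be answered by some $a_1^+$ inside the branch $X = \{q_1, n_1, \ldots\}$ of $\rho_1$ strictly below $m$; but $a_1^+$ is a positive answer, hence maximal in $\balpha$, contradicting $a_1^+ <_\balpha m$. Thus $q_1 = q_2$, and then the sub-case analysis collapses to $n_1 = n_2$ (otherwise pre-innocence fails as above), whence $j_1 = j_2$ too.

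The proof therefore hinges on the interplay between \emph{pre-innocence} (to force least distinct events of gccs to be positive Player questions, which must be primary since they are immediate successors of $\qu_0^-$) and \emph{globular} (to force answers to those Player questions inside their branches, which is then incompatible with maximality of positive answers). The existence part uses only visibility and the structure of $A$; no appeal to causal well-bracketing is needed there.
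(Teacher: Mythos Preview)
Your proof has two genuine gaps.

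First, the claim that every primary question $q$ satisfies $\qu_0^- \imc_\balpha q$ is false. Your justification is that $[q]_\balpha$ contains only one Opponent \emph{question}, but a gcc from $\qu_0^-$ to $q$ may pass through negative \emph{answers}. For instance, on the type $\tx_1 \to \tx_2 \to \tx$ the sequential strategy that first calls $x_1$, receives an answer, then calls $x_2$, has a primary question $q$ (the call to $x_2$) with $\qu_0^- \imc p^+ \imc a^- \imc q^+$ where $a^-$ is a negative answer. This undermines your claim that the gccs factor as $\qu_0^- \imc q_k \imc n_k \imc \dots$, and hence that the globule is rooted at $\qu_0^-$ with least distinct events $q_1, q_2$. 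The paper handles this by only claiming that one can choose the globule with $q$ somewhere in the branch $X$, using incomparability of $q, q'$; it does not pin down the root.

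Second, and more seriously, the answer $a_1$ to the Player question $q_1$ is \emph{negative}, not positive: in the arena, answers have polarity opposite to the question they answer. So your maximality argument (``$a_1^+$ is a positive answer, hence maximal'') does not apply. Once you know $a_1^-$ lies in $X$, the only way forward is to observe that $q_1 \imc_\balpha a_1^-$ (by Lemma~\ref{lem:just_pred}) and $q_1 \imc_\balpha n_1^-$ with both $a_1, n_1 \leq_\balpha m$; then Lemma~\ref{lem:arguments} forces $a_1 = n_1$, so $n_1$ is simultaneously a question and an answer, contradiction. But this is precisely the paper's argument, so your proposed simplification collapses back to it.

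Your existence argument and the reduction of the $q_1 = q_2$ case to pre-innocence (equivalently Lemma~\ref{lem:arguments}) are correct and match the paper.
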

\begin{proof}
First, each $m \in \ev{\balpha}$ receives \emph{at least} one tag.
Any $[m]_\balpha$ contains at least one Opponent question: the
initial move. If it contains exactly one Opponent question, $m
\inplus \fst(\balpha)$. Assume there are at least two. Take $n
\leq_\balpha m$ minimal s.t. it is a non-initial Opponent question.
Then its immediate predecessor is some $q \in \Q_i$; and so there
is $1 \leq j \leq p_i$ s.t. $m \inplus \balpha_{q, j}$.

We prove that $m$ receives \emph{at most} one tag.  Clearly if $m
\inplus \alpha_{q, j}$ for some $q \in \Q_i$ and $1\leq j \leq p_i$,
there are at least two Opponent questions in $[m]_\balpha$ so we cannot
have $m \inplus \fst(\balpha)$.  Assume $m \inplus \alpha_{q, j}$ and $m
\inplus \alpha_{q', j'}$ for $q \in \Q_i, q' \in \Q_{i'}$, $1\leq j \leq
p_i$ and $1\leq j' \leq p_{i'}$. We first show that $q = q'$; seeking a
contradiction assume they are distinct. But $q$ and $q'$ cannot be
comparable: if $q \leq_\balpha q'$, $[q']_\balpha$ has at least two
Opponent questions, contradicting $q' \in \ev{\fst(\balpha)}$.

Take $\rho \imc m, \rho' \imc m \in \gcc(\balpha)$, respectively
passing through $q$ and $q'$. Diagrammatically:
\[
\xymatrix@R=5pt@C=10pt{
&&&m_1  \ar@{-|>}[r]&
~       \ar@{.}[r]&
~       \ar@{-|>}[r]&
q       \ar@{-|>}[r]&
~       \ar@{.}[r]&
~       \ar@{-|>}[r]&
m_k     \ar@{-|>}[dr]\\
~\ar@{.}[r]&~\ar@{-|>}[r]&
m_0     \ar@{-|>}[ur]
        \ar@{-|>}[dr]&&&&&&&&
m_{k+1} \ar@{-|>}[r]&~\ar@{.}[r]&~\ar@{-|>}[r]&m\\
&&&n_1  \ar@{-|>}[r]&
~       \ar@{.}[r]&
~       \ar@{-|>}[r]&
q'      \ar@{-|>}[r]&
~       \ar@{.}[r]&
~       \ar@{-|>}[r]&
n_p
        \ar@{-|>}[ur]
}
\]
and since $q, q'$ are distinct, the diagram may be chosen with $X
= \{m_1, \dots, m_k\}$ and $Y = \{n_1, \dots, n_p\}$ disjoint. By
parallel innocence $m_1$ and $n_1$ are positive. By Lemma
\ref{lem:pol_caus} so must be $m_{k+1}$, so $m_k$ and $n_p$ are
negative. We have a \emph{globule} as in Definition
\ref{def:caus_wb}, so $X$ and $Y$ are complete, and in particular $q$ is
answered in $\rho$. Writing $q = m_i$, $m_{i+1}$ must answer
$q$. But since $m \inplus \alpha_{q, j}$, $q \imc_\balpha n^{Qu,-}
\leq_\balpha m $ where $n$ is a negative question in component $j$. Then
necessarily, $n = m_{i+1}$, or we get a contradiction with parallel
innocence. Thus $m_{i+1}$ is both a question and an answer,
contradiction. So, $q = q'$.  Finally, from Lemma \ref{lem:arguments}, $j = j'$. 
\end{proof}

This shows any $m \in \ev{\balpha}$ can always be attributed to
exactly \emph{one} of the sub-strategies we wish to extract.
Accordingly, the \emph{argument sub-strategies} will be defined with
events
\[
\ev{\balpha_{q ,j}} = \{m \in \ev{\balpha} \mid m \inplus \balpha_{\q,
j}\}\,,
\]
completed to ess by inheriting the components from $\balpha$, as will be
made explicit later.

The display map requires a careful reindexing of
events ending up on the right hand side, illustrated in
Figures \ref{fig:ev_substrat} and \ref{fig:sub_aug}.
For this we split $\ev{\balpha_{q, j}}$ in two
subsets: on the one hand, we have those events that depend
\emph{statically}, \emph{i.e.} with respect to $\leq_A$ (through
$\pr_\balpha$) on the primary question $q$ -- in Figure
\ref{fig:ev_substrat}, those are $\qu^-_{\grey{i,r}}, \qu^+_{\grey{1, r,
0}}, \done^-_{\grey{1, r, 0}}$ and $b^+_{\grey{1, r}}$. On the other
hand, the remaining events must follow from new calls to variables
in the context -- in Figure \ref{fig:ev_substrat}, those are
$\qu^+_{\grey{2r+3}}$ and $b^-_{\grey{2 r + 3}}$.
\begin{figure}
\begin{minipage}{.47\linewidth}
\[
\scalebox{.9}{$
\xymatrix@C=-2pt@R=5pt{
\oc ((\tbool & \to & \tunit) &\with& ((\tunit & \to & \tbool) &\to&
\tbool))
&\vdash& \tbool\\
&&&&&&&&&& \grey{\qu^-}
        \ar@[grey]@{-|>}[dllllllll]
        \ar@[grey]@{-|>}[dll]\\
&&\grey{\qu}^{\grey{+}}_{\grey{0}}
        \ar@[grey]@{-|>}[dll]
        \ar@[grey]@{-|>}[d]
&&&&&&\grey{\qu}^{\grey{+}}_{\grey{1}}
        \ar@[grey]@{-|>}[dll]
        \ar@[grey]@{-|>}[d]\\
\grey{\qu^-_{\grey{0,i}}}
        \ar@[grey]@{.}@/^/[urr]
        \ar@[grey]@{-|>}[d]&&
\grey{\done}^{\grey{-}}_{\grey{0}}
        \ar@[grey]@{.}@/^/[u]
        \ar@[grey]@{-|>}[d]&&&&
\qu^-_{\grey{1,r}}
        \ar@[grey]@{.}@/^/[urr]
        \ar@{-|>}[dll]
        \ar@{-|>}[drr]&&
\grey{\ttrue}^{\grey{-}}_{\grey{1}}
        \ar@[grey]@{-|>}[dllllll]
        \ar@[grey]@{.}@/^/[u]\\
\grey{\ttrue^+_{\grey{0,i}}}
        \ar@[grey]@{.}@/^/[u]&&
\grey{\qu}^{\grey{+}}_{\grey{2}}
        \ar@[grey]@{-|>}[d]
        \ar@[grey]@{-|>}[dll]&&
\qu^+_{\grey{1,r,0}}
        \ar@{.}@/^/[urr]
        \ar@{-|>}[d]&&&&
\qu^+_{\grey{2r+3}}
        \ar@{-|>}[d]\\
\grey{\qu^-_{\grey{2, k}}}
        \ar@[grey]@{-|>}[d]
        \ar@[grey]@{.}@/^/[urr]
&&\grey{\done}^{\grey{-}}_{\grey{2}}
        \ar@[grey]@{.}@/^/[u]
        \ar@[grey]@{-|>}[drrrrrrrr]&&
\done^-_{\grey{1,r,0}}
        \ar@{.}@/^/[u]
        \ar@{-|>}[drr]&&&&
b^-_{\grey{2r+3}}
        \ar@{.}@/^/[u]
        \ar@{-|>}[dll]\\
\grey{\tfalse^+_{\grey{2, k}}}
        \ar@[grey]@{.}@/^/[u]
&&&&&&b^+_{\grey{1,r}}
        \ar@{.}@/^/[uuu]&&&&
\grey{\ttrue}^{\grey{+}}
        \ar@[grey]@{.}@/_/[uuuuu]
}$}
\]
\caption{Events of $\ev{\balpha_{\qu_{\grey{1}}^+, 1}}$ of Figure
\ref{fig:ex_definability}}
\label{fig:ev_substrat}
\end{minipage}
\hfill
\begin{minipage}{.5\linewidth}
\[
\scalebox{.9}{$
\xymatrix@C=-5pt@R=7.5pt{
\oc ((\tbool & \to & \tunit) &\with& ((\tunit & \to & \tbool) &\to&
\tbool))
&\vdash& \oc (\tunit & \to & \tbool)\\
&&&&&&&&&&&&\qu^-_{\grey{r}}
        \ar@{-|>}[dllll]
        \ar@{-|>}[dll]\\
&&&&&&&&\qu^+_{\grey{2r + 3}}
        \ar@{-|>}[d]&&
\qu^+_{\grey{r,0}}
        \ar@{-|>}[d]
        \ar@{.}@/^/[urr]\\
&&&&&&&&b^-_{\grey{2r+3}}
        \ar@{.}@/_/[u]
        \ar@{-|>}[drrrr]&&
\done^-_{\grey{r,0}}
        \ar@{.}@/_/[u]
        \ar@{-|>}[drr]\\
&&&&&&&&&&&&b^+_{\grey{r}}
        \ar@{.}@/_/[uuu]\\~\\~\\~\\
}$}
\]
\caption{The corresponding aug. of $\balpha_{\qu_{\grey{1}}^+,1}$}
\label{fig:sub_aug}
\end{minipage}
\end{figure}
These two subsets are treated differently when defining the new
display map: the former are left unchanged, while the latter are
reindexed as in Figure \ref{fig:sub_aug}.

We introduce notations for the canonical embeddings of the set of moves
$\ev{A_i}$ and $\ev{A_{i,j}}$ into $\ev{A}$. More precisely, it will be
convenient, for each primary question $q \in \Q_i$, to write
\[
\inj_q(-) : \ev{A_i} \to \ev{A}
\]
the injection adding the sequence of tags addressing
$A_i$ within $A$, originating from the tagged disjoint unions
involved in all arena constructions -- in particular, it maps the
initial move of $A_i$ to $\pr_\balpha(q)$. Likewise,
$\inj_{q, j} : \ev{A_{i,j}} \to \ev{A}$ addresses the
$j$-th argument of $q$. Then:

\begin{defi}\label{def:pr_argument}
We define a display map for $\balpha_{q, j}$ by setting, for $m \in
\ev{\balpha_{q, j}}$:
\[
\begin{array}{rcllrcl}
\pr_{\balpha_{q, j}}(m) &=& \inj_\labr(a)
&\text{if}& \pr_\balpha(m) &=& \inj_{q, j}(a)\,,\\
\pr_{\balpha_{q, j}}(m) &=& \pr_\balpha(m)
&\text{otherwise,\!\!\!\!\!\!\!\!\!\!\!\!\!\!\!\!\!\!}
\end{array}
\]
where $\inj_\labl(a) = (1, a)$ and $\inj_\labr(a) = (2, a)$.
\end{defi}

Altogether, this lets us extract $\balpha_{q, j}$ as intended:

\begin{prop}
Consider $q \in \Q_i$ and $1 \leq j \leq p_i$.
The \textbf{argument substrategy} for $q, j$ is $(\ev{\balpha_{q, j}},
\leq_{q, j}, \conflict_{q, j}, \tilde{\balpha_{q, j}},
\pr_{\balpha_{q,j}})$, with components $\leq_{q, j}$ and $\conflict_{q,
j}$ the restrictions of $\balpha$, 
$\tilde{\balpha_{q, j}} = \{\theta \cap \ev{\balpha_{q, j}}^2 \mid \theta
\in \tilde{\balpha}\}$,
and $\pr_{\balpha_{q, j}}$ in Definition \ref{def:pr_argument}.

Then, $\balpha_{q, j} : \oc (\with A_i) \vdash\,\oc A_{i,j}$ is
well-bracketed, causally well-bracketed, parallel innocent.
\end{prop}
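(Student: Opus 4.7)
The plan is to verify the strategy axioms plus the three additional properties, leveraging two key structural observations about $\ev{\balpha_{q,j}}$: \emph{(i)} it is upward-closed in $\leq_\balpha$ (since Lemma~\ref{lem:tag} pins every event to exactly one tag, preserved under $\leq_\balpha$), so immediate causality $\imc_{\balpha_{q,j}}$ coincides with the restriction of $\imc_\balpha$ to $\ev{\balpha_{q,j}}$; and \emph{(ii)} $n$, the distinguished negative question with $q \imc_\balpha n$ in component $j$, is the \emph{unique} minimum of $\ev{\balpha_{q,j}}$ (any other minimum would need its $\leq_\balpha$-predecessor outside $\ev{\balpha_{q,j}}$, contradicting upward-closure). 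The second observation gives \emph{negative}, and allows us to extend each $\rho \in \gcc(\balpha_{q,j})$ to a gcc of $\balpha$ by prepending any chain $\kappa$ from a minimum of $\balpha$ to $q$ followed by $n$.

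First I would verify $\balpha_{q,j}$ is a causal strategy. The configurations and symmetries are naturally characterised as $x \cap \ev{\balpha_{q,j}}$ and $\theta \cap \ev{\balpha_{q,j}}^2$ for $x \in \conf{\balpha}$ and $\theta \in \tilde{\balpha}$, respectively; local injectivity, symmetry-preservation, thinness, courtesy, and conflict-inheritance then follow by restriction. The main obligation is \emph{rule-abiding}: for $y \in \conf{\balpha_{q,j}}$, $\pr_{\balpha_{q,j}}(y)$ is a configuration of $\oc (\with A_i) \vdash\,\oc A_{i,j}$. I split $y$ into events displaying to component $j$ of $q$ (which move to the right under $\inj_\labr$ and project to a configuration of $\oc A_{i,j}$ via $\balpha$'s rule-abiding) and context events (which keep their display in $\oc(\with A_i)$ on the left); the down-closure requirement across the two sides is vacuous because $\oc(\with A_i) \vdash \oc A_{i,j}$ is parallel. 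For $\sim$-receptivity and receptivity, new negative events to add in the target arena are transported to new negative extensions of matching configurations of $\balpha$ — right-hand extensions become new Opponent moves under $q$, left-hand extensions become new Opponent moves in the context — and we pull back using receptivity and $\sim$-receptivity of $\balpha$.

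Next, for parallel innocence and causal well-bracketing: \emph{pre-innocence} and \emph{causal determinism} are preserved because a merge (resp.~conflict) in $\balpha_{q,j}$ is also one in $\balpha$, and the corresponding extended gccs share the same least distinct events. \emph{Visibility} uses the extension $\kappa \cdot \rho$ described above: by visibility of $\balpha$, $\pr_\balpha(\kappa \cdot \rho) \in \conf{\intr{A}}$; for right-hand events in $\rho$, their arena justifiers live in component $j$ of $q$, hence cannot lie in $\kappa$ (which sits inside $\fst(\balpha)$), so they must lie in $\rho$; for context events whose arena justifiers are non-minimal, those justifiers are positive (by courtesy through Lemma~\ref{lem:just_pred}) and share the $\balpha_{q,j}$ tag, hence also lie in $\rho$. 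The \emph{globular} condition is simply inherited (a globule in $\balpha_{q,j}$ is already a globule in $\balpha$), and \emph{wb-threads} is proved similarly by extending gccs into $\balpha$ and invoking its well-bracketing.

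Finally, standard well-bracketing as in Definition~\ref{def:wb_caus_strat} follows by lifting plays: any $s \in \NAlt(\balpha_{q,j})$ extends (via the prefix reaching $n$) to a play of $\balpha$, and well-bracketing in the target arena tracks well-bracketing in $\intr{A}$ because the Question/Answer labellings agree on both sides of the relabelling. I expect the main obstacle to be the careful bookkeeping in the rule-abiding verification: the display map is genuinely bimodal (identity on context events, $\inj_\labr$-reindexed on component-$j$ events), and one must confirm that the resulting set remains down-closed and conflict-free in the target arena despite the fact that in $\balpha$ the same events all sat on the right of $\vdash$. Once that verification is in hand, the remaining properties are all inherited from $\balpha$ via the two structural observations stated at the outset.
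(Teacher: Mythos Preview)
Your verification follows the same route as the paper's brief ``routine verification'' (whose only highlighted point is that symmetries of $\balpha$, being order-isomorphisms displayed to symmetries of $A$, preserve the tag of Definition~\ref{def:tag}, which is what makes $\tilde{\balpha_{q,j}}$ an isomorphism family). Two inaccuracies should be fixed. First, observation~(ii) is false as stated: there is one minimal event of $\balpha_{q,j}$ for \emph{each} Opponent question $n$ with $q \imc_\balpha n$ in component $j$, one per copy index---this is precisely why the target arena carries $\oc A_{i,j}$ on the right rather than $A_{i,j}$. What does hold, and suffices for every use you make of (ii), is that $\balpha_{q,j}$ is \emph{pointed} (each event lies above a unique such $n$, by Lemma~\ref{lem:arguments}). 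Second, in the visibility argument, the assertion that justifiers of context events are positive is wrong in general (polarity alternates along $\imc_A$). The conclusion you want---that $\just(m)$ still carries the tag $\balpha_{q,j}$---is correct, but requires a short case analysis on the tag of $\just(m)$: if $\just(m) \inplus \fst(\balpha)$ then it is a primary question or an answer to one, and in either case $m$ is forced into $\fst(\balpha)$ or a different argument substrategy; if $\just(m) \inplus \balpha_{q',j'}$ then upward-closure (your observation~(i)) forces $(q',j') = (q,j)$. With that in hand, your visibility argument goes through.
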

\begin{proof}
A routine verification. The key point is that as symmetries of $\balpha$
are order-isos displayed to symmetries of $A$, it follows that
they preserve the tag as in Definition \ref{def:tag}.
\end{proof}

Finally, we get rid of the $\oc$ on the right hand side, using
\emph{dereliction} $\der_A : \oc A \vdash A$.

\begin{prop}\label{prop:arg_substrat}
Consider $q \in \Q_i$ and $1 \leq j \leq p_i$.
Then, $\balpha_{q, j}^\bullet = \der_{A_{i,j}} \odot \balpha_{q, j} :
\oc (\with A_i) \vdash A_{i,j}$ is a test strategy with size strictly
lesser than $\balpha$.
\end{prop}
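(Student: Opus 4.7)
The plan is to verify each of the five conditions required of a test strategy (well-bracketed, parallel innocent, causally well-bracketed, finite, complete), then the strict size inequality.

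First, the three structural conditions---well-bracketing, parallel innocence, causal well-bracketing---are inherited by composition. The previous proposition establishes these for $\balpha_{q,j}$ (modulo causal well-bracketing, handled below). The dereliction $\der_{A_{i,j}}$ is a relabelled copycat and satisfies all three trivially, so well-bracketing composes by Proposition~\ref{prop:nalt_fonc} combined with the fact that the non-alternating unfolding of a composition inherits well-bracketing from its factors; parallel innocence composes by Propositions~\ref{prop:comp_visible} and~\ref{prop:rig_comp}. For causal well-bracketing, one may either invoke its stability under composition (a variant of the forking lemma argument, as mentioned in Section~\ref{subsec:wb_pruning}), or, more simply, observe that once completeness is established, Proposition~\ref{prop:wb_pruning} applied to $\balpha_{q,j}^\bullet$ yields a positionally equivalent and causally well-bracketed strategy; since the factorization of Section~\ref{subsec:factor} is stated up to $\equiv$, it suffices to replace $\balpha_{q,j}^\bullet$ by its pruning without consequence.

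For finiteness, every positive event of $\rf{\balpha_{q,j}^\bullet}$ arises either from a positive event of $\balpha_{q,j}$ visible in the left component $\oc(\with A_i)$, or from one visible in $A_{i,j}$ (the right component, re-indexed along $\inj_{q,j} \mapsto \inj_\labr$), since the hiding performed by composition with dereliction removes only synchronised events on the middle $\oc A_{i,j}$ component. The count is therefore bounded by the number of positive events of $\balpha_{q,j}$, which in turn is bounded by that of $\balpha$; finiteness follows.

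For completeness---the subtle point, and the main obstacle---I take $x \in \confp{\balpha_{q,j}^\bullet}$ and lift it back through the composition with $\der_{A_{i,j}}$ to obtain a $+$-covered configuration $y$ of $\balpha$ containing $q$ together with at least one Opponent question $q \imc_\balpha n$ in component $j$ and all events of $x$ (viewed as tagged by $q,j$ via Lemma~\ref{lem:tag}). By completeness of $\balpha$, $y$ extends to a complete $z \in \conf{\balpha}$. The causal well-bracketing of $\balpha$ (condition \emph{globular}, established via Lemma~\ref{lem:aug_comp_glob}) then ensures that any question in $z$ tagged $q,j$ is answered by a move also tagged $q,j$: indeed, answers of Player questions pointing to $q$ (being maximal in their gccs by \emph{answer-closing}) remain in the same argument sub-component; and Opponent answers to $\qu^+$ inside a $\balpha_{q,j}$-thread are, by pre-innocence and \emph{globular}, constrained to lie in the same thread. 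Restricting $z$ to events tagged $q,j$, together with the initial move of $\oc A_{i,j}$ supplied by the dereliction, yields a complete configuration of $\balpha_{q,j}^\bullet$ extending $x$. Finally, for the strict size bound, the primary question $q$ itself lies in $\ev{\rf\balpha}_+$ (being in $\fst(\balpha)$, whose events have copy index $\grey 0$ on negative dependencies) but is not in $\ev{\balpha_{q,j}}$ by Definition~\ref{def:tag}, so it contributes no event to $\ev{\rf{\balpha_{q,j}^\bullet}}_+$; this strict loss of at least one positive event gives the inequality.
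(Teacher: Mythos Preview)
Your approach is correct and considerably more explicit than the paper's. The paper's proof is a two-line sketch: via Proposition~\ref{prop:comp_pcov} one identifies $\balpha_{q,j}^\bullet$ concretely as (isomorphic to) the sub-ess of $\balpha_{q,j}$ where Opponent opens only copy index $\grey{0}$ on the right-hand side; from this description, well-bracketing, parallel innocence, causal well-bracketing, and finiteness are inherited by restriction (these conditions are all about local shapes---gccs, globules, Player moves---that survive passage to a sub-ess). The strict size bound is then just the observation that $q$ lies in $\rf{\balpha}$ but not in $\balpha_{q,j}$, exactly as you argue.

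Your compositional route---invoking preservation theorems for well-bracketing and parallel innocence under $\odot$---is valid but longer, and for causal well-bracketing you end up in an awkward spot: compositional preservation is mentioned in the paper but not proved there, and the pruning workaround replaces the strategy. The sub-ess identification sidesteps this entirely: the previous proposition already gives causal well-bracketing for $\balpha_{q,j}$, and it descends to any sub-ess since \emph{wb-threads} and \emph{globular} quantify over gccs and diagrams present in the smaller structure.

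On completeness: the paper does not explicitly verify it, so your argument is a genuine addition. However you overcomplicate it. The invocation of \emph{globular} is unnecessary. It suffices to observe that (i) an answer always causally depends on the question it answers (via Lemma~\ref{lem:es_refl_caus}), hence inherits the tag $q,j$; and (ii) by Lemma~\ref{lem:arguments}, events above distinct Opponent questions $n \neq n'$ immediately after $q$ are disjoint, so restricting a complete configuration of $\balpha$ to the events above the single $n$ coming from $x$ yields a complete configuration of $\balpha_{q,j}^\bullet$ extending $x$.
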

\begin{proof}
Using Proposition \ref{prop:comp_pcov}, it is easy that
$\balpha_{q, j}^\bullet$ is positively isomorphic to the strategy
obtained from $\balpha_{q, j}$ by restricting the initial Opponent
question to copy index $\grey{0}$. This also informs an injection of
$\rf{\balpha_{q, j}^\bullet}$ into $\rf{\balpha}$ not reaching the
primary question $q$, from which follows the announced size constraint.
\end{proof}

Note $\balpha_{q, j}$ is recovered from $\balpha_{q,
j}^\bullet$, via the ``Bang lemma'' \cite{ajm} (see Appendix
\ref{app:bang}):

\begin{restatable}{lem}{bang}\label{lem:bang}
For concrete arenas $A, B$ with $B$ pointed and $\bsigma
\in \CG(\oc A, \oc B)$,
\[
(\der_B \odot \bsigma)^\dagger \simstrat \bsigma\,.
\]
\end{restatable}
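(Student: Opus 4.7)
The plan is to prove the isomorphism via the state-based characterisation of composition (Proposition~\ref{prop:comp_pcov}), exhibiting a bijection between $+$-covered configurations on both sides and verifying that it commutes with display maps up to positive symmetry, and then extending the argument to symmetries. Unfolding the promotion, the left-hand side is $\oc(\der_B \odot \bsigma) \odot \dig_A$. I would first compute its $+$-covered configurations by peeling off the three layers: by Lemma~\ref{lem:cc_pcov} (applied to the relabeled copycats $\dig_A$ and $\der_B$), their $+$-covered configurations admit a trivial description as symmetric pairs. By the definition of $\oc$ on morphisms as a relabelled simple parallel composition and Proposition~\ref{prop:comp_pcov} again, a $+$-covered configuration of $\oc(\der_B \odot \bsigma)$ decomposes canonically as a family $(z_n)_{n \in \mathbb{N}}$ of $+$-covered configurations of $\der_B \odot \bsigma$, finitely many of them non-empty.

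The central structural input is the pointedness of $B$: since $\oc B$ has exactly one minimal event per copy index, each $y \in \conf{\oc B}$ decomposes \emph{uniquely} as $\parallel_{n \in \mathbb{N}} y_n$ where each $y_n \in \conf{B}$ either contains the unique initial $(n, b_0)$ or is empty. The plan is to use this, combined with pointedness of the (pre)strategy $\bsigma$ (Definition~\ref{def:caus_pointed}), to define in each direction the candidate isomorphism. Given $x^\bsigma \in \confp{\bsigma}$ with $\pr_\bsigma(x^\bsigma) = x_{\oc A} \parallel \parallel_n x_{B, n}$, I would use $\init$ to partition $x^\bsigma$ into the events whose initial ancestor lies in the $n$-th copy of $\oc B$, and the residual events sitting above $\oc A$; then repackage this as a family on the left-hand side, synchronising the $\oc A$-parts along $\dig_A$. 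In the reverse direction, I would merge the families back. The synchronisations are secured for trivial reasons (the copycats impose only delay constraints along the identity bijection), so Proposition~\ref{prop:comp_pcov} applies and yields genuine configurations. I expect display-map compatibility to follow by construction, and a direct check shows that the copy-index reshuffling induced on the right-hand $\oc B$ amounts to a positive symmetry (only Player copy indices on $\oc B$ are relocated, negative copy indices on $\oc A$ being preserved through $\dig_A$).

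The hard part will be lifting this bijection to symmetries in $\tildep{-}$, because positive isomorphism (Definition~\ref{def:pos_iso}) demands strict preservation of Opponent copy indices, while the exponential and digging both involve permutations $\pi$ on copy indices. Following the definition of $\tilde{\oc -}$ in Section~\ref{subsubsec:constr_arenas}, I expect the crucial point to be that in $\ptilde{\oc(\der_B \odot \bsigma)}$ such permutations are constrained to the identity on non-empty components, so that the symmetry's action on copy indices is entirely absorbed by the reindexing performed by the bijection itself. Combining this with pointedness of $B$ (which forbids ``mixing'' copies of $\oc B$ across threads) and with Proposition~\ref{prop:sync_sym} to realign arbitrary symmetries into on-the-nose ones, I would conclude that the bijection on $\confp{-}$ extends to $\tildep{-}$, whence the two strategies are positively isomorphic by Lemma~\ref{lem:pcov3} (or equivalently by the uniqueness clause of Proposition~\ref{prop:comp_pcov}). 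The argument does not require $\bsigma$ to be innocent or well-bracketed, which is consistent with the lemma's unqualified statement.
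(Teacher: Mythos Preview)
Your plan has a real gap at the ``repackage this as a family'' step. Partitioning $x \in \confp{\bsigma}$ via $\init$ yields slices $x_n \in \confp{\bsigma_n}$, where $\bsigma_n$ is the sub-ess of $\bsigma$ rooted at the initial event of copy $n$ in $\oc B$. But unwinding the left-hand side as you do, a $+$-covered configuration there is essentially a family $(z_n)_n$ with each $z_n \in \confp{\der_B \odot \bsigma}$, and $\der_B \odot \bsigma \simstrat \bsigma_{\grey{0}}$ --- always the \emph{zeroth} slice. So your bijection needs, for each $n$, a canonical identification $\confp{\bsigma_n} \simeq \confp{\bsigma_{\grey{0}}}$. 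This is not free: it is precisely the content of Lemma~\ref{lem:bang_main} in the paper, which uses the reindexing lemma (Lemma~B.4 of~\cite{cg2}) to transport along the \emph{negative} symmetry on $\oc B$ swapping copy indices $\grey{0}$ and $\grey{n}$ of the initial Opponent move. Proposition~\ref{prop:sync_sym} alone does not suffice here (it gives existence, not the uniqueness needed for a well-defined order-isomorphism), so your bijection is not defined without this extra ingredient. A minor aside: there are no ``residual events sitting above $\oc A$'' --- since $\bsigma$ is negative and pointed, every event has its $\init$ in $\oc B$.

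The paper avoids the direct bijection. It first proves that all slices are positively isomorphic (Lemma~\ref{lem:bang_main}), deduces that any $\bsigma : \oc A \vdash \oc B$ is determined up to $\simstrat$ by its zeroth slice (Lemma~\ref{lem:bang_aux}), and then concludes via the Seely comonad law $\der_B \odot f^\dagger \simstrat f$ applied with $f = \der_B \odot \bsigma$. Your state-based route could be completed by importing Lemma~\ref{lem:bang_main} at the repackaging step, but once that lemma is in hand the paper's shortcut through the Seely laws is both shorter and cleaner than carrying out the full $\confp{-}/\tildep{-}$ calculation.
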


Summing up, from the original strategy
$\balpha : \oc (\with A_i) \vdash \tx$, we have now constructed:
\[
\begin{array}{rclcl}
\flow(\balpha) &:& \otimes_{1\leq i \leq n} \otimes_{q \in \Q_i}
\tx_i \vdash \tx\\
\balpha_{q, j}^\bullet &:& \oc (\with A_i) \vdash A_{i,j} 
&\qquad&
\text{for each $q \in \Q_i$ and $1\leq j \leq p_i$.}
\end{array}
\]

For each primary question $q \in \Q_i$, we use the cartesian closed
internal language to form
\[
x_1 : A_1, \dots, x_n : A_n \vdash x_i\,\balpha^\bullet_{q,
1}\,\dots\,\balpha^\bullet_{q, p_i} : \tx_i\,.
\]

Let us write $\balpha_q : \oc (\with A_i) \vdash \tx_i$ for the
resulting strategy.  Then, finally,
\begin{eqnarray}
\recomp(\balpha) ~~~ = ~~~ \flow(\balpha) \odot (\otimes_{1\leq i \leq
n}
\otimes_{q \in \Q_i} \balpha_{q}) \odot \bdelta_{\with A_i} &:& \oc
(\with A_i) \vdash
\tx\,,\label{eq:recomp} 
\end{eqnarray}
is our candidate to reconstruct $\balpha$. Here, for $B$ an arena and $n
\in \mathbb{N}$, we write $\bdelta_{B} : \oc B \vdash (\oc B)^{\tensor
n}$ for the obvious strategy (leaving $n$ implicit). In the sequel we
may only write $\bdelta$.

\subsubsection{Positions of $\recomp(\balpha)$}\label{subsubsec:recomp}
We expect that $\recomp(\balpha) \simstrat \balpha$, but we shall only
prove $\recomp(\balpha) \equiv \balpha$ --
this is simpler as positions compose relationally.

\begin{figure}
\begin{mathpar}
\scalebox{.95}{$
\inferrule
        { \x_A \in \coll A \\
          \x_B \in \coll B }
        { \x_A \tensor \x_B \in \coll (A\tensor B) }
$}
\and
\scalebox{.95}{$
\inferrule
        { \x_A \in \coll A \\
          \x_C \in \collo C }
        { \x_A \lin \x_C \in \collo (A\lin C) }
$}
\and
\scalebox{.95}{$
\inferrule
        { \x \in \collo A_i \\
          (i\in I) }
        { (i, \x) \in \collo (\with_{i\in I} A_i) }
$}
\and
\scalebox{.95}{$
\inferrule
        { (\x_C^i \in \collo C)_{i\in I} }
        { [\x_C^i \mid i \in I] \in \coll \oc C }
$}
\end{mathpar}
\caption{Syntax for positions for $A, B, C$ arenas with $C$ strict}
\label{fig:syn_pos}
\end{figure}
To help reason on positions, we adopt a syntax presented in Figure
\ref{fig:syn_pos}, following the bijections of  Lemma
\ref{lem:rel_conf}. We also write $\x_A \vdash \x_B \in \coll (A\vdash
B)$ for all $\x_A \in \coll A$ and $\x_B \in \coll B$. 

We now analyse the positions of the recomposition \eqref{eq:recomp}. We
start with:

\begin{lem}\label{lem:pos_delta}
For $B$ strict, the positions $\coll \bdelta_B$ of $\bdelta_B : \oc B
\vdash (\oc B)^{\tensor n}$ are exactly those 
\[
\left(\sum_{1 \leq i \leq n} \x_i \vdash \otimes_{1\leq i \leq n}
\x_i \right) 
\in
\int \left( \oc B \vdash (\oc B)^{\tensor n} \right)
\]
where $\x_i \in \coll \oc B$ for all $1 \leq i \leq n$.
\end{lem}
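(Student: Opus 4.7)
The plan is to prove this by unfolding the definition of $\bdelta_B$ as a relabeled copycat and then transporting the characterization of copycat positions through the relabeling, finally passing to symmetry classes. Recall that $\bdelta_B : \oc B \vdash (\oc B)^{\tensor n}$ is obtained from $\bcc_{\oc B}$ by relabeling the right-hand side via a bijection $\mathbb{N} \simeq \{1, \dots, n\} \times \mathbb{N}$: an event of $\oc B$ on the left with copy index $k$ corresponds, on the right, to an event in the $j$-th component of the tensor with copy index $\ell$, where $k$ encodes the pair $(j, \ell)$.

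First, I will use Lemma \ref{lem:cc_pcov} to characterize the $+$-covered configurations of $\bcc_{\oc B}$ as pairs $(x, x)$ for $x \in \conf{\oc B}$. Transporting through the relabeling, a $+$-covered configuration of $\bdelta_B$ is a pair of the form $(x, x_1 \parallel \cdots \parallel x_n)$ where $x \in \conf{\oc B}$ and each $x_j \in \conf{\oc B}$ is obtained by collecting the events of $x$ whose copy index encodes $(j, \cdot)$, then reindexing. Since $\bdelta_B$ is (a variant of) copycat, it is well-bracketed, so by Lemma \ref{lem:reach_inn_wb} every complete $+$-covered configuration gives a position; conversely every position comes from such a configuration. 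Thus the positions of $\bdelta_B$ are in bijection with complete pairs as above, quotiented by symmetry.

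Next I will compute the effect of taking symmetry classes. By Lemma \ref{lem:rel_conf}, $\coll(\oc B) \iso \mathcal{M}_f(\collo B)$: a symmetry class of $x \in \conf{\oc B}$ is the multiset of symmetry classes of its non-empty slices. Likewise, $\coll((\oc B)^{\tensor n}) \iso (\coll \oc B)^n$, so a position on the right is a tuple $(\x_1, \dots, \x_n)$ of multisets $\x_j = [x_j]_\sym$. The crucial observation is then that, under our bijection $\mathbb{N} \simeq \{1, \dots, n\} \times \mathbb{N}$, the non-empty slices of $x$ are the disjoint union of the non-empty slices of $x_1, \dots, x_n$ (up to reindexing); and since symmetries of $\oc B$ can reorder copy indices arbitrarily, the symmetry class of $x$ as a multiset is precisely the multiset sum $\sum_j \x_j$. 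This gives the desired characterization.

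The main (mild) obstacle is the bookkeeping in the symmetry argument of the third paragraph: one must check carefully that two configurations $x, x' \in \conf{\oc B}$ whose slices realize the same multiset of symmetry classes (after splitting according to the bijection) are themselves $\sym_{\oc B}$-equivalent, which follows from the explicit form of symmetries of $\oc B$ given in Definition \ref{def:bang} (any permutation of copy indices, combined with per-copy symmetries of $B$). Once this is in place, the four-way correspondence between positions of $\bdelta_B$, complete $+$-covered configurations, their symmetry classes, and tuples $(\x_1, \dots, \x_n)$ with $\sum_j \x_j$ on the left is immediate, completing the proof.
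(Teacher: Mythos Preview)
Your proposal is correct and follows essentially the same approach as the paper, which dismisses the lemma with a single line: ``By a direct variation of Lemma~\ref{lem:cc_pcov}.'' You have spelled out what that variation amounts to --- relabeling the copycat characterization through the bijection $\mathbb{N} \simeq \{1,\dots,n\}\times\mathbb{N}$ and then passing to symmetry classes via the multiset description of $\coll(\oc B)$ --- which is exactly the intended content.

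One minor remark: your invocation of Lemma~\ref{lem:reach_inn_wb} is unnecessary. Positions of $\bdelta_B$ are by Definition~\ref{def:pos_strat} just symmetry classes of displays of configurations that happen to be complete; there is no requirement of reachability by a well-bracketed play at this stage. For copycat-like strategies, a configuration with complete display is automatically $+$-covered (every question on one side being answered forces the matching move on the other side), so the reduction to $+$-covered configurations is immediate without appealing to well-bracketing.
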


By a direct variation of Lemma \ref{lem:cc_pcov}.
Next we analyse the positions of $\balpha_q$ for $q \in \Q_i$:

\begin{lem}\label{lem:pos_alphaq}
For any $q \in \Q_i$, the non-empty positions $\collo \balpha_q$
are exactly those of the form
\[
\left([(i, \y_{q, 1} \lin \dots \lin \y_{q, p_i} \lin \v_q)] +
\sum_{1\leq j \leq
p_i}
\x_{q, j} \right) \vdash \v_q
\in
\int (\oc (\with A_i) \vdash \tx_i)
\]
where for each $1\leq j \leq p_i$, $(\x_{q, j} \vdash \y_{q, j})
\in \coll \balpha_{q, j}$, and for $\v_q \in \collo \tx_i$.
\end{lem}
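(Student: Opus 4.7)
The plan is to unfold the definition of $\balpha_q$ through the cartesian closed structure of $\CG_\oc$ and transport it through the positional collapse functor of Proposition \ref{prop:coll_main}, using the structural bijections of Lemma \ref{lem:rel_conf} to compute positions componentwise in $\Rel$.

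First, I would make explicit that $\balpha_q \in \CG(\oc (\with A_k), \tx_i)$ is a composition, up to Seely structural morphisms, of the evaluation and projection $\evm, \bpi_i$, and the pairing of the promotions $(\balpha^\bullet_{q,j})^\dagger \simstrat \balpha_{q,j}$ provided by Lemma \ref{lem:bang}. All strategies involved are well-bracketed and visible (parallel innocence of $\balpha$ implies visibility), so Proposition \ref{prop:coll_main} applies and commutes with composition, tensor, and the Seely exponential via Lemma \ref{lem:rel_conf}.

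Second, I would compute positions in stages, reading off from the syntax of Figure \ref{fig:syn_pos}. A non-empty position of $\balpha_q$ must contain an answer $\v_q \in \collo \tx_i$; propagating this back through $\coll \evm$ and $\coll \bpi_i$ (which behave copycat-like, by variations on Lemma \ref{lem:cc_pcov} and Lemma \ref{lem:pos_delta}) forces the context position to contain \emph{exactly one} call to the $i$-th component, of the form $(i, \y_{q,1} \lin \cdots \lin \y_{q,p_i} \lin \v_q)$, with each $\y_{q,j} \in \coll \oc A_{i,j}$ coming from the promoted argument $\balpha_{q,j}$. Relational composition with the pairing then glues this single call with the context-side positions $\x_{q,j}$ arising from each $(\x_{q,j} \vdash \y_{q,j}) \in \coll \balpha_{q,j}$, summing multisetwise over the shared context $\oc (\with A_k)$ to yield $[(i, \y_{q,1} \lin \cdots \lin \y_{q,p_i} \lin \v_q)] + \sum_{j} \x_{q,j}$. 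Conversely, any such data is the image of a position assembled from positions of each $\balpha_{q,j}$ together with the one-call position of evaluation-at-$\bpi_i$, establishing the bijection.

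The main obstacle is the careful bookkeeping of Seely isomorphisms and of the multiset structure coming from $\coll \oc(-) \iso \mathcal{M}_f(\collo -)$: one must verify that the singleton multiset at the $i$-th component indeed arises from evaluation being used exactly once on the initial question, and that aggregation of the $\x_{q,j}$ respects the $\mathcal{M}_f$-sum. No genuinely new causal phenomenon intervenes here, because the deadlock-free lemma underlying Proposition \ref{prop:vis_coll} ensures that relational composition exactly matches categorical composition on positions, so the argument reduces to a relational calculation directly following the shape of the defining term $x_i\,\balpha^\bullet_{q,1}\,\cdots\,\balpha^\bullet_{q,p_i}$.
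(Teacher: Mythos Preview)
Your proposal is correct and takes essentially the same approach as the paper: decompose $\balpha_q$ via the Seely category structure into a composition involving $\bdelta$, the promotions $(\balpha^\bullet_{q,j})^\dagger \simstrat \balpha_{q,j}$ (Lemma \ref{lem:bang}), the projection $\intr{x_i}$, and $\evm$, then push the positional collapse through using Proposition \ref{prop:vis_coll} and the copycat-like characterisations of the structural morphisms (Lemma \ref{lem:pos_delta} and variants). The paper's proof is terser but cites exactly the same ingredients and performs the same relational calculation.
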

\begin{proof}
From the laws of Seely categories, $\balpha_q$ is positively isomorphic
to the composition
\[
\xymatrix@C=70pt{
\oc (\with A_i) 
        \ar[r]^{\bdelta\,\,\,\,\,\,\,\,\,}&
\otimes_{p_i+1} \oc (\with A_i)
        \ar[r]^{(\otimes_{1\leq j \leq p_i}
(\balpha_{q,j}^\bullet)^\dagger) \tensor \intr{x_i}\,\,\,\,\,\,\,\,\,}
&
(\otimes_{1\leq j \leq p_i} \oc A_{i,j}) \tensor A_i
        \ar[r]^{\,\,\,\,\,\,\,\,\,\evm}&
\tx_i
}
\]
in $\CG$. The characterisation follows from Proposition
\ref{prop:vis_coll}, Lemma \ref{lem:pos_delta}, Lemma \ref{lem:bang},
and a direct verification analogous to Lemma \ref{lem:pos_delta} for
other copycat-like strategies involved. 
\end{proof}

From all those, we may characterise the positions of $\recomp(\balpha)$
as

\begin{cor}\label{cor:recomp}
Non-empty positions of $\recomp(\balpha)$ are exactly those of the form
\[
\sum_{1\leq i \leq n} \sum_{q \in \Q'_i} \left([(i, \y_{q,1} \lin \dots
\lin
\y_{q,p_i} \lin \v_q)] + \sum_{1\leq j \leq p_i} \x_{q, j}\right)
\vdash
\v
\qquad
\in
\qquad
\int (\oc (\with A_i) \vdash \tx)
\]
where for all $1\leq i \leq n$, $\Q'_i$ is a subset of $\Q_i$, all
$\v_q$ are non-empty, and:
\[
\left(\tensor_{1 \leq i \leq n} \tensor_{q \in \Q'_i} \v_q \vdash
\v\right) \in \collo \flow(\balpha)\,,
\qquad
\qquad
\left((\x_{q, j} \vdash \y_{q, j}) \in \coll \balpha_{q, j}\right)_{q
\in \Q'_i, 1\leq j \leq p_i}\,.
\]
\end{cor}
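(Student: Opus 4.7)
The plan is to apply Proposition~\ref{prop:vis_coll} repeatedly to reduce the positions of $\recomp(\balpha)$ to a relational composition, then substitute in the explicit descriptions of the positions of the three components. All strategies involved -- $\bdelta$, $\balpha_q$ for $q \in \Q_i$, and $\flow(\balpha)$ -- are parallel innocent and (as noted earlier) well-bracketed, hence visible, so Proposition~\ref{prop:vis_coll} indeed applies. The bijections of Lemma~\ref{lem:rel_conf} turn the tensor products into cartesian products of sets of positions, so positions of $\otimes_{i,\,q \in \Q_i} \balpha_q$ are simply tuples $(\x_q^{\balpha} \vdash \y_q^{\balpha})_{q \in \cup_i \Q_i}$ with $(\x_q^\balpha \vdash \y_q^\balpha) \in \coll \balpha_q$.

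First I would fix a position $\x$ of $\flow(\balpha)$ and, by Lemma~\ref{lem:pos_delta} applied on the left and Lemma~\ref{lem:pos_alphaq} applied to each $\balpha_q$ (together with the tensor characterisation from Lemma~\ref{lem:rel_conf}), describe the positions of $(\otimes_{i,q} \balpha_q) \odot \bdelta$ as those of the shape $\sum_{i,\,q} \x_{q} \vdash \otimes_{i,\,q}(\x_q^\balpha \vdash \y_q^\balpha)$, where the $\x_q^\balpha \vdash \y_q^\balpha$ are positions of $\balpha_q$. Second, I would compose this with $\flow(\balpha)$: a position of the full composition then arises by picking $\v_q \in \collo \tx_i$ and $\v \in \collo \tx$ such that $(\otimes_{i,q} \v_q \vdash \v) \in \collo \flow(\balpha)$, and matching each intermediate $\v_q$ with the right-hand projection of the chosen position of $\balpha_q$.

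The matching on the intermediate interface is where the selection $\Q'_i \subseteq \Q_i$ appears. For each $q$ the chosen position of $\balpha_q$ is either empty or, by Lemma~\ref{lem:pos_alphaq}, of the explicit form $[(i, \y_{q,1} \lin \dots \lin \y_{q,p_i} \lin \v_q)] + \sum_j \x_{q,j} \vdash \v_q$ with $\v_q$ non-empty. The empty position on $\balpha_q$ forces the component on $\flow(\balpha)$'s interface at index $(i,q)$ to be empty too; conversely a non-empty choice fixes $\v_q$ to match. Collecting the indices where a non-empty position is chosen defines $\Q'_i$; the other summands contribute nothing on either side. Substituting the description of positions of $\balpha_q$ from Lemma~\ref{lem:pos_alphaq} then gives precisely the formula in the statement.

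I expect the main obstacle to be administrative rather than conceptual: carefully threading the copy-index arithmetic through the bangs on the left-hand side (the multisets $[(i, \y_{q,1} \lin \dots \lin \y_{q,p_i} \lin \v_q)]$ must be summed across all $q \in \Q'_i$ and across $i$), and verifying that the matching between $\v_q$ on the right of $\balpha_q$ and on the left of $\flow(\balpha)$ is sound under the tensor/bang bijections of Lemma~\ref{lem:rel_conf}. Everything reduces to routine rewriting once Proposition~\ref{prop:vis_coll} and the explicit position descriptions of $\bdelta$, $\balpha_q$ and $\flow(\balpha)$ are in place.
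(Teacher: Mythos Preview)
Your proposal is correct and follows exactly the approach the paper takes: the paper's proof is the one-liner ``Direct from Lemmas~\ref{lem:pos_delta}, \ref{lem:pos_alphaq} and Proposition~\ref{prop:vis_coll}'', and you have simply unpacked what that direct argument amounts to, including the bookkeeping around $\Q'_i$ arising from the empty/non-empty dichotomy in Lemma~\ref{lem:pos_alphaq}. One small remark: visibility follows already from parallel innocence (it is part of the definition), not from well-bracketing, but this does not affect the argument.
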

\begin{proof}
Direct from Lemmas \ref{lem:pos_delta}, \ref{lem:pos_alphaq}
and Proposition \ref{prop:vis_coll}.
\end{proof}

\subsubsection{Positions of $\balpha$}
We write $\confc{\balpha}$ for the complete, non-empty
configurations of $\balpha$. If $x \in \confc{\balpha}$, then
$\fst(x) = \{m \in x \mid m \inplus \fst(\balpha)\} \in
\confc{\fst(\balpha)}$
and we write $\Q^x = \Q \cap x$, and $\Q_i^x$ likewise. For each $q \in
\Q^x_i$ and $1 \leq j \leq p_i$, we also have
$x_{q, j} = \{m \in x \mid m \inplus \balpha_{q, j}\} \in
\conf{\balpha_{q,j}}$.

From Lemma \ref{lem:tag}, this informs 
$x = \fst(x) \uplus \left(\biguplus_{1\leq i \leq n} \biguplus_{q \in
\Q_i^x} \biguplus_{1\leq j \leq p_i} x_{q, j}\right)$.

We show that all complete non-empty
configurations of $\balpha$ arise in this way:

\begin{lem}\label{lem:char_pos_alpha}
This yields a bijection between $\confc{\balpha}$ and pairs
$(x, (x_{q, j})_{1\leq i \leq n, q \in \Q_i^{x}, 1\leq j
\leq p_i})$
where $x \in \confc{\fst(\balpha)}$, $x_{q, j} \in
\conf{\balpha_{q, j}}$ complete for all $1\leq i \leq n$, $q \in \Q_i^{x}$
and $1\leq j \leq p_i$. 

Moreover, writing $[x, (x_{q, j})_{i,q,j}] = x \uplus (\uplus_{i,q,j}
x_{q,j}) \in \conf{\balpha}$ this correspondence, we have
\[
\scalebox{.89}{$
\pr_{\balpha}\left([x, (x_{q, j})_{i,q,j}]\right)
=
\left(
\biguplus_{1\leq i \leq n} \biguplus_{q \in \Q^x_i}
\left(
 \left[ \inj_q(z_{q, 1} \lin \dots \lin z_{q, p_i} \lin
v_q) \right] \uplus 
\left[\biguplus_{1\leq j \leq p_i} \inj_\labl(y_{q, j}) \right]
\right)\right)\uplus \inj_\labr(v)
$}
\]
where we have, for all $1\leq i \leq n$, $q \in \Q_i^x$ and $1\leq j \leq p_i$:
\[
\pr_{\flow(\balpha)}(x) =
\inj_\labl \left(\tensor_{1 \leq i \leq n} \tensor_{q \in \Q^x_i}
v_q \right)
\uplus \inj_r(v)
\qquad
\qquad
\pr_{\balpha_{q, j}}(x_{q, j}) = 
\inj_\labl(y_{q, j}) \uplus \inj_\labr(z_{q, j})\,.
\]
\end{lem}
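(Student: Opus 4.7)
The plan is to lift the event-level partition of Lemma~\ref{lem:tag} to the configuration level by defining the forward map via restriction and the inverse by disjoint-union reassembly, verify each side preserves the configuration structure (down-closure, consistency, completeness), and finally compute the display formula by unwinding Definitions~\ref{def:flow} and~\ref{def:pr_argument}.

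For the forward map, given $x \in \confc{\balpha}$, I would set $\fst(x) = x \cap \ev{\fst(\balpha)}$ and $x_{q,j} = x \cap \ev{\balpha_{q,j}}$. Since the partial orders on $\fst(\balpha)$ and $\balpha_{q,j}$ are restrictions of $\leq_\balpha$, down-closure and consistency transfer provided tags are causally stable: the property ``$[m]_\balpha$ carries at most one Opponent question'' is clearly closed under causal prefixes, and the tag-uniqueness argument from Lemma~\ref{lem:tag} similarly forces $\balpha_{q,j}$-tagged events to have predecessors either in $\fst(\balpha)$ or with the same $\balpha_{q,j}$-tag. Completeness of each piece follows because the answer $a \in x$ to an unanswered question $q'$ in a component inherits the same tag-witness as $q'$, by the same causal-stability argument.

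For the inverse direction, assemble $y = \fst(x) \uplus \biguplus_{i, q \in \Q_i^x, j} x_{q,j}$ as a set; tag-uniqueness makes this a disjoint union. Down-closure of $y$ in $\balpha$ is the delicate step: for $m' \leq_\balpha m \in y$, case on tags, and note the same-component cases are handled by the corresponding factor's down-closedness. The genuinely mixed case is $m \in x_{q,j}$ with $m' \in \ev{\fst(\balpha)}$. I would prove $m' \leq_\balpha q$ via a globular/visibility analysis: the Opponent question $n$ with $q \imc_\balpha n \leq_\balpha m$ lies in $[m]_\balpha$, and if $m'$ were parallel to $n$ below $m$, the merge at $m$ would form a globule in the sense of Definition~\ref{def:caus_wb} containing the negative question $n$, contradicting the $(+\Qu, -\An)$-alternation structure of globules derived immediately after that definition; hence $m'$ and $n$ are comparable, and $m' \in \ev{\fst(\balpha)}$ forces $m' \leq_\balpha q$. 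Since $q \in \Q_i^x \subseteq \fst(x)$ by the indexing convention and $\fst(x)$ is down-closed, $m' \in \fst(x) \subseteq y$. Consistency of $y$ follows from causal determinism of $\balpha$: a minimal conflict in $y$ is negatively triggered, so it is inherited from a branching of the arena below a single Opponent event, whose tag propagates to both branches and places them in the same component. Completeness of $y$ is immediate from completeness of all constituent pieces.

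The maps are mutually inverse by tag-disjointness, giving the bijection. The display formula is then a direct calculation: $\pr_{\flow(\balpha)}$ re-routes each $m \in \fst(x)$ displaying to $\inj_q(a)$ into the correspondingly-tagged copy of $\tx_i$ (Definition~\ref{def:flow}), while $\pr_{\balpha_{q,j}}$ keeps events in the left-hand context unchanged and redirects events originally in $\oc A_{i,j}$ to the right-hand copy (Definition~\ref{def:pr_argument}). The main obstacle is exactly the mixed-case down-closure step: it is there that the globular condition established in Proposition~\ref{prop:wb_pruning} plays an indispensable role, formalizing why the well-bracketed pruning of Section~\ref{subsec:wb_pruning} was needed before attempting this decomposition.
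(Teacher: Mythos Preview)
Your proposal is correct and follows the same decomposition-and-reassembly approach as the paper, which simply cites the tag partition for the forward map and says the union is ``down-closed by construction and consistent by determinism'' for the inverse. You are more explicit than the paper on the one genuinely delicate point: down-closure of the reassembled configuration in the mixed case $m \in x_{q,j}$, $m' \imc_\balpha m$, $m' \inplus \fst(\balpha)$, where one must show $m' \leq_\balpha q$ --- and you correctly identify that this relies on the \emph{globular} clause of causal well-bracketing (the $(+\Qu,-\An)$ pattern of globule branches excludes the negative question $n$ from appearing there), which the paper's ``by construction'' leaves implicit; your phrasing of the globule setup (``the merge at $m$'') could be sharpened since the merge point may lie strictly below $m$, but the mechanism is right.
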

\begin{proof}
From $x \in \confc{\balpha}$, we get $(\fst(x), (x_{q, j})_{i,q,j})$
from the decomposition above. Reciprocally, from $(x, (x_{q,
j})_{i,q,j})$ we get a configuration in $\confc{\balpha}$ as
their union; it is down-closed by construction and consistent by
determinism of $\balpha$ (as any immediate negative conflict originates
from the arena, and hence appears in one of the components).  Finally,
the characterization of the display map follows from 
display maps of $\flow(\balpha)$ and $\balpha_{q, j}$.
\end{proof}

From this we may finally conclude the proof of factorization:

\begin{cor}\label{cor:fin_recomp}
The strategies $\balpha$ and $\recomp(\balpha)$ are positionally
equivalent.
\end{cor}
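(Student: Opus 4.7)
The plan is to prove $\coll \balpha = \coll \recomp(\balpha)$ by comparing the two characterizations we have established: Lemma~\ref{lem:char_pos_alpha} for complete configurations of $\balpha$, and Corollary~\ref{cor:recomp} for non-empty positions of $\recomp(\balpha)$. The empty position belongs to both $\coll \balpha$ and $\coll \recomp(\balpha)$ trivially, so we focus on non-empty positions and exhibit a matching between the data parametrising both characterisations, checking that the resulting display maps agree up to symmetry.

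First I would treat the forward inclusion. Given a non-empty $\x \in \coll \balpha$, pick a representative $x \in \confc{\balpha}$ with $[\pr_\balpha(x)]_\sym = \x$. By Lemma~\ref{lem:char_pos_alpha}, $x$ decomposes uniquely as $[\fst(x), (x_{q,j})_{i, q \in \Q_i^x, 1\leq j\leq p_i}]$ with $\fst(x) \in \confc{\fst(\balpha)}$ and each $x_{q,j} \in \conf{\balpha_{q,j}}$ complete. Taking symmetry classes, each $x_{q,j}$ delivers a position $(\x_{q,j} \vdash \y_{q,j}) \in \coll \balpha_{q,j}$; and $\fst(x)$, viewed through the relabeling of Definition~\ref{def:flow}, yields a non-empty position of $\flow(\balpha)$ of the form $(\tensor_{i} \tensor_{q \in \Q_i^x} \v_q) \vdash \v \in \collo \flow(\balpha)$, where $\v \in \collo \tx$ and each $\v_q \in \collo \tx_i$ records the answer to the primary question $q$ in $\fst(x)$ (non-empty by completeness of $\fst(x)$ and \emph{wb-threads}). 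Setting $\Q'_i := \Q_i^x$ and plugging this data into Corollary~\ref{cor:recomp}, a direct comparison with the display formula given in Lemma~\ref{lem:char_pos_alpha} shows that the resulting position of $\recomp(\balpha)$ is precisely $\x$.

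For the reverse inclusion, I would start from non-empty $\x \in \coll \recomp(\balpha)$ and use Corollary~\ref{cor:recomp} to extract subsets $\Q'_i \subseteq \Q_i$, a position $(\tensor_{i,q\in \Q'_i} \v_q) \vdash \v \in \collo \flow(\balpha)$, and positions $(\x_{q,j} \vdash \y_{q,j}) \in \coll \balpha_{q,j}$ for each $q \in \Q'_i$ and $1 \leq j \leq p_i$. Choosing representative configurations and using that $\flow(\balpha)$ and $\fst(\balpha)$ share the same underlying event structure, the position of $\flow(\balpha)$ is realised by some $y \in \confc{\fst(\balpha)}$ whose set of primary questions is exactly $\Q'_i$ in each component $i$ (this uses \emph{wb-threads} to guarantee that every primary question in $y$ is answered, hence contributes a non-empty $\v_q$). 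Picking representative complete configurations $x_{q,j} \in \conf{\balpha_{q,j}}$ for each $(\x_{q,j} \vdash \y_{q,j})$, Lemma~\ref{lem:char_pos_alpha} assembles $x := [y, (x_{q,j})] \in \confc{\balpha}$, and the display-map formula there shows $[\pr_\balpha(x)]_\sym = \x$.

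The main obstacle is of a bookkeeping nature: matching the exponential/tensor structure coming from the relabeling in $\flow(\balpha)$ against the tensor products and Seely isomorphisms implicit in the definition \eqref{eq:recomp} of $\recomp(\balpha)$, and verifying that the copy-index choices made when representing positions by configurations can always be aligned via symmetries. This is where the choice to prove only \emph{positional} equivalence rather than isomorphism pays off: since positions are symmetry classes and relational composition is deadlock-free by Lemma~\ref{lem:deadlock_free} and Proposition~\ref{prop:vis_coll}, we do not need to reconcile concrete copy indices across the components, we just need to check that the two display formulae produce the same symmetry class in $\coll A$.
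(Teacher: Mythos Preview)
Your proposal is correct and follows the same approach as the paper: you compare the characterization of non-empty positions of $\recomp(\balpha)$ from Corollary~\ref{cor:recomp} with that obtained by taking symmetry classes in Lemma~\ref{lem:char_pos_alpha}, and observe they agree. The paper's proof is the one-sentence version of what you spell out; your explicit treatment of both inclusions and the identification $\Q'_i = \Q^x_i$ is exactly the bookkeeping the paper leaves implicit.
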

\begin{proof}
Taking symmetry classes from Lemma \ref{lem:char_pos_alpha},
we obtain the same characterization of non-empty complete positions of
$\balpha$ as in Corollary \ref{cor:recomp}.
\end{proof}

\subsubsection{Syntactic factorization} Finally, we must reformulate the
result relying on the cartesian closed structure only. The
\textbf{first-order substrategy} $\fo(\balpha) \in \CG_\oc(\with_{1\leq
i\leq n} \with_{q \in \Q_i} \tx_i, \tx)$ is obtained in the obvious way
from $\flow(\balpha)$ using the Seely category structure. Using
Corollary \ref{cor:fin_recomp}, Proposition \ref{prop:arg_substrat}, and
laws of a Seely category, we conclude: 

\begin{cor}\label{cor:factor_fst_arg}
Any test strategy $\balpha : \oc (\with A_i) \vdash \tx$ factors as a
composition of test strategies
\[
\balpha \equiv \fo(\balpha) \odot_\oc
\tuple{x_i\,\balpha_{q,1}^\bullet\,\dots\,\balpha_{q,p_i}^\bullet \mid
1\leq i \leq n,~q \in \Q_i}\,,
\]
where for all $1\leq i \leq n, q \in \Q_i$ and $1\leq j \leq p_i$,
$\balpha_{q, j}^\bullet$ has size strictly lesser than $\balpha$.
\end{cor}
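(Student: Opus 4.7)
My plan is to turn the semantic recomposition identity of Corollary \ref{cor:fin_recomp} into the purely cartesian closed formula stated in Corollary \ref{cor:factor_fst_arg}, by systematically translating each ingredient of $\recomp(\balpha)$ through the Seely isomorphisms. The size bound is then immediate from Proposition \ref{prop:arg_substrat}, so the real content is rewriting \eqref{eq:recomp} in the Kleisli form $\fo(\balpha) \odot_\oc \tuple{\cdots}$.

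Concretely, I would start by recalling that
\[
\recomp(\balpha) = \flow(\balpha) \odot \left(\tensor_{1\leq i\leq n} \tensor_{q \in \Q_i} \balpha_q\right) \odot \bdelta_{\with A_i}\,,
\]
with $\flow(\balpha) : \tensor_{i,q} \tx_i \vdash \tx$ and $\balpha_q : \oc(\with A_i) \vdash \tx_i$. First I would define $\fo(\balpha) \in \CG_\oc(\with_{i,q} \tx_i, \tx)$ as the image of $\flow(\balpha)$ under the Seely isomorphism $\oc(\with_{i,q}\tx_i) \iso \tensor_{i,q} \oc \tx_i$, composed with the dereliction structure; this is the natural passage from strategies on a tensor of pointed objects to Kleisli morphisms on their cartesian product. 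Then I would observe that in the cartesian closed $\CG_\oc$, the Kleisli composition $\fo(\balpha) \odot_\oc \tuple{\balpha_q^{\mathsf{k}} \mid i,q}$ unfolds, after promotion and monoidal coherence, exactly to $\flow(\balpha) \odot (\tensor_{i,q} \balpha_q) \odot \bdelta$, up to positive iso. This is just a diagram chase of Seely naturality plus the equation $g \circ_\oc f = g \circ f^\dagger$.

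Next I must match each $\balpha_q$ with the internal-language expression $x_i\,\balpha_{q,1}^\bullet\,\dots\,\balpha_{q,p_i}^\bullet$. By construction of $\balpha_q$ in Section \ref{subsubsec:arg_substrat} and the Bang lemma (Lemma \ref{lem:bang}), we already have $\balpha_q \simstrat \evm \odot ((\tensor_j (\balpha_{q,j}^\bullet)^\dagger) \tensor \intr{x_i}) \odot \bdelta$, which is exactly the denotation of $\lambda \vec{x}.\,x_i\,\balpha_{q,1}^\bullet\,\dots\,\balpha_{q,p_i}^\bullet$ under cartesian closure. Combining this with the previous step yields the desired factorization up to $\simstrat$, hence \emph{a fortiori} up to $\equiv$. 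The strict size inequality for each $\balpha_{q,j}^\bullet$ is Proposition \ref{prop:arg_substrat}, and it remains to check that $\fo(\balpha)$ and each $\balpha_{q,j}^\bullet$ are themselves test strategies, which follows because tests (well-bracketed, parallel innocent, causally well-bracketed, finite, complete) are preserved by the operations used: restriction of $\balpha$ to $\fst(\balpha)$ and to an argument substrategy (already established), composition with derelictions and structural copycats, and the implicit pruning by $\comp(-)$ if needed to reinstate completeness on the right.

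The main obstacle I anticipate is purely bookkeeping: keeping track of the interplay between the monoidal formulation of $\recomp(\balpha)$ (which lives naturally in $\CG$) and the cartesian closed formulation stated in the corollary (which lives in $\CG_\oc$), in particular handling the Seely isomorphisms $\bmon^2 : \oc A \tensor \oc B \iso \oc(A \with B)$ and the promotion $(-)^\dagger$ without introducing spurious copy-index mismatches. I expect everything to go through up to positive isomorphism, so the weaker statement up to $\equiv$ is comfortable. Once this categorical shuffle is done, the corollary is essentially a repackaging of Corollary \ref{cor:fin_recomp} and Proposition \ref{prop:arg_substrat}.
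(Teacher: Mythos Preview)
Your proposal is correct and follows essentially the same route as the paper: the corollary is stated there as a direct consequence of Corollary~\ref{cor:fin_recomp}, Proposition~\ref{prop:arg_substrat}, and the laws of a Seely category, and your plan is precisely a more explicit unfolding of that Seely-category bookkeeping (translating $\recomp(\balpha)$ from its monoidal form into the Kleisli form $\fo(\balpha) \odot_\oc \tuple{\cdots}$). The only superfluous step is your mention of an ``implicit pruning by $\comp(-)$'': completeness of $\fo(\balpha)$ and of each $\balpha_{q,j}^\bullet$ is already established in the paper when $\fst(\balpha)$, $\flow(\balpha)$, and the argument substrategies are constructed, so no further pruning is needed.
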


\subsection{Finite Definability} \label{subsec:fin_def} Corollary
\ref{cor:factor_fst_arg} allows us to handle the higher-order structure,
it only remains to prove definability for first-order test strategies.

\begin{figure}
\[
\scalebox{.8}{$
\xymatrix@R=4pt@C=-3pt{
\oc (1 &\with& 1 & \with & 1 & \with & 1 & \with & 1) &\vdash& 1\\
&&&&&&&&&& \qu^-
        \ar@{-|>}[dllllllllll]
        \ar@{-|>}[dllllll]
        \ar@{-|>}[dll]\\
\qu^+_{\grey{0}}
        \ar@{-|>}[d]&&&&
\qu^+_{\grey{1}}
        \ar@{-|>}[d]&&&&
\qu^+_{\grey{2}}
        \ar@{-|>}[d]\\
\done^-_{\grey{0}}
        \ar@{.}@/^/[u]
        \ar@{-|>}[drr]
&&&&
\done^-_{\grey{1}}
        \ar@{.}@/^/[u]
        \ar@{-|>}[dll]
        \ar@{-|>}[drr]
&&&&
\done^-_{\grey{2}}
        \ar@{.}@/^/[u]
        \ar@{-|>}[dll]
\\
&&\qu^+_{\grey{3}}
        \ar@{-|>}[d]&&&&
\qu^+_{\grey{4}}
        \ar@{-|>}[d]\\
&&\done^-_{\grey{3}}
        \ar@{.}@/^/[u]
        \ar@{-|>}[drrrrrrrr]&&&&
\done^-_{\grey{4}}
        \ar@{.}@/^/[u]
        \ar@{-|>}[drrrr]\\
&&&&&&&&&&\done^+
        \ar@{.}@/_/[uuuuu]
}$}
\]
\caption{An undefinable first-order strategy}
\label{fig:ex_undef}
\end{figure}

\subsubsection{First-order definability}
\label{subsubsec:fst_def}
Not every first-order test strategy is exactly definable in $\PCFpar$.
For instance, that in Figure \ref{fig:ex_undef} is not series-parallel,
while it is fairly easy to prove that all $\PCFpar$-definable terms on
this type yield a series-parallel causal dependency. In general, 
we have not yet managed to properly understand which first-order
strategies are definable.  
Luckily, we do not need to. Indeed, given a test $\balpha$ it is 
sufficient to find $M$ such that $\intr{M}$ is \emph{positionally
equivalent} from $\balpha$. In $\PCFpar$, without interference, the
order of evaluation is unobservable; and positional equivalence is not
sensitive to it. So our definability process will simply sequentialize
$\balpha$, while preserving its positions. 

Consider $\Gamma = x_1 : \tx_1, \dots, x_n : \tx_n$, some ground type
$\tx$, and a test strategy:
\[
\balpha : \oc (\with \tx_i) \vdash \tx\,.
\]

If $\collo \balpha$ is empty, any diverging term $M$ will satisfy $\collo
\intr{M} = \emptyset$. Otherwise, there is some $x \in \confc{\balpha}$.
If $\balpha$ has no primary question, then $x =
\{q_0^{\Qu,-}, a^{\An,+}\}$, with $a$ answering $q_0$ -- write
$\pr_\balpha(a) = v$ some answer in $\tx$. But then by determinism of
$\balpha$, it cannot have any other move and $\balpha \simstrat
\intr{v}$. Otherwise, if $\balpha$ has a primary question, it has one $a
\in \Q$ which is \textbf{minimal}, \emph{i.e.} it only depends on the
initial move. But then $q$ appears in every $x \in \confc{\balpha}$:

\begin{lem}
For any minimal primary question $q$, for any $y \in \confc{\balpha}$,
we have $q \in y$.
\end{lem}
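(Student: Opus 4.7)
The plan is to argue by contradiction: suppose $q \notin y$. Since $y$ is non-empty and down-closed, and $\balpha$ has a unique minimal event $q_0^- \in \ev{\balpha}$ (as minimal events of $\balpha$ display to minimal events of $\tx$, of which there is only one), we have $q_0^- \in y$; by completeness of $y$, let $a_0^+ \in y$ be the unique answer to $q_0$. Next, as $q$ is a minimal primary question we have $q_0^- \imc_\balpha q$, so by receptivity applied to $\{q_0^-, q\} \in \conf{\balpha}$ and any answer move extending $\pr_\balpha(q)$ in the arena, some Opponent answer $v^-$ to $q$ exists in $\ev{\balpha}$.

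I would then show that $y' := y \cup \{q, v\} \in \confc{\balpha}$. Down-closure is immediate from $q_0^- \in y$ and $q \imc_\balpha v$. For consistency, by causal determinism of $\balpha$ it suffices to check negative compatibility; the only new negative event is $v$, and any minimal conflict of $v$ in $\balpha$ is forced by the arena to be with another Opponent answer to $q$, hence with an event that has $q$ in its causal past. Since $q \notin y$ and $y$ is down-closed, no such conflicting event can lie in $y$. Completeness of $y'$ is immediate: $q$ is the only newly added question, and it is answered by $v$.

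Finally, I apply Lemma~\ref{lem:key_caus_wb} to $y'$ with $q_1 = q_0^-$ and $q_2 = q$, using that $\pr_\balpha(q_0) \imc_A \pr_\balpha(q)$ as $q$ is minimal primary. By uniqueness of answers in $y'$, the lemma yields $v \leq_\balpha a_0$. Since $a_0 \in y$ and $y$ is down-closed, we obtain $v \in y$, and then $q \leq_\balpha v$ forces $q \in y$, contradicting the assumption.

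The main delicate point is the consistency check for $y'$: it depends crucially on the fact that causal determinism reduces global consistency to negative compatibility, together with a precise analysis of where minimal negative conflicts live in arenas arising from ground types (namely, between sibling answers to a common question). Once this is in place the argument is a clean application of Lemma~\ref{lem:key_caus_wb} to a complete configuration built by hand to contain both $q_0$ and $q$.
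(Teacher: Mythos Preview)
Your proof is correct and follows essentially the same strategy as the paper: use determinism to show $q$ is compatible with $y$, then invoke Lemma~\ref{lem:key_caus_wb} to place $q$ below the answer $a_0$ to the initial question, forcing $q \in y$ by down-closure. The paper's proof is terser---it simply writes ``by Lemma~\ref{lem:key_caus_wb} we have $q \leq_\balpha a$'' without making explicit the complete configuration on which the lemma is invoked---whereas you carefully manufacture one by adjoining an Opponent answer $v$ to $q$ via receptivity. This makes your argument more self-contained: it does not rely on $\balpha$ being complete as a test strategy, only on receptivity and causal determinism.

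One small point in your consistency check for $y'$: when you argue that $v$ cannot conflict with any $n^- \in y$, you implicitly assume the witnessing minimal conflict has $v$ itself as an endpoint. In principle a conflict $v \conflict_\balpha n$ could be inherited from some $v' <_\balpha v$, i.e.\ from $v' \in \{q_0, q\}$. But $q_0 \in y$ and, by determinism, $y \cup \{q\} \in \conf{\balpha}$, so neither $q_0$ nor $q$ conflicts with anything in $y$; hence the minimal conflict must indeed have $v$ as endpoint, and then (using determinism once more to rule out a positive $n'$) Lemma~\ref{lem:app_pres_neg_imm_confl} applies as you describe. This is a one-line fix, and your overall argument stands.
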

\begin{proof}
By determinism, $y \cup \{q_0, q\} \in \conf{\balpha}$. Since
$y$ is complete, there is $a^{\An,+} \in y$ such that $a$ answers $q_0$.
But then, by Lemma \ref{lem:key_caus_wb} we have $q \leq_\balpha a$. It
follows that $q \in y$ as required.
\end{proof}

Choose $q\in \Q_i$ minimal. As $q$ appears in all non-empty complete
configurations, it is safe to first make a call to $x_i$, then
branching on the possible return values. Since $\balpha$ is finite,
there is a finite set $V$ of values leading to an observable result.
Now, for each $v \in V$, we define $\balpha_{(q v)}$ the
\emph{residual} of $\balpha$ after $q$ yields value $v$; and then
proceed inductively. To define this residual, our first step is to
rename $\balpha$ to isolate this first call:

\begin{lem}\label{lem:sel_pq}
For $q \in \Q_i$ minimal, there is a test strategy
$\balpha_{(q)} : \oc ((\with_i \tx_i) \with \tx_i) \vdash \tx$
s.t.
\[
\begin{array}{rl}
\text{\emph{(1)}} & 
\text{for all $\x \vdash \w \in \collo \balpha_{(q)}$, then $\x = \x' +
[(n+1, \v)]$ such that $\x' + [(i, \v)] \vdash \w \in \collo \balpha$,}\\
\text{\emph{(2)}} &
\text{for all $\x \vdash \w \in \collo \balpha$, then $\x = \x' + [(i,
\v)]$ such that $\x' + [(n+1, \v)] \vdash \w \in \collo \balpha_{(q)}$.}
\end{array}
\]
\end{lem}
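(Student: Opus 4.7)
I construct $\balpha_{(q)}$ by composing $\balpha$ with a \emph{redirection} strategy. Let
\[
\brho \,:\, \oc \bigl((\with_j \tx_j) \with \tx_i\bigr) \,\vdash\, \oc (\with_j \tx_j)
\]
(with $j$ ranging over $\{1,\dots,n\}$) be the strategy that handles each Opponent call on the right independently, in the manner of an asynchronous copycat: a call at component $k$, copy index $\grey{m}$ is forwarded to the first factor of the left at component $k$ (under a fresh copy index), \emph{except} for the call at $(i,\grey{0})$, which is routed to the second factor $\tx_i$ of the left, at copy index $\grey{0}$. I then define $\balpha_{(q)} = \comp(\balpha \odot \brho)$, applying the well-bracketed pruning of Proposition~\ref{prop:wb_pruning}.

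\paragraph{Verification.} By inspection $\brho$ is well-bracketed, parallel innocent (each of its threads being a linear independent $\qu\qu v v$-sequence), visible, and complete, and these properties pass under composition in $\CGwbinn$ (Propositions~\ref{prop:comp_visible} and~\ref{prop:rig_comp}); the pruning then guarantees causal well-bracketing and completeness of $\balpha_{(q)}$. Finiteness is the delicate balance: although $\brho$ itself is infinite (branching over all Opponent answer values at each call), composition with the finite $\balpha$ exposes in the meager form only the finitely many positive events of $\balpha$ visible in $\tx$, together with one forwarding $\qu^+$ of $\brho$ per primary call of $\balpha$; all the infinite branching stays hidden on $\brho$'s right-hand side.

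\paragraph{Position correspondence and main obstacle.} By Proposition~\ref{prop:vis_coll} (the pruning preserving positions by Proposition~\ref{prop:wb_pruning}), $\collo \balpha_{(q)} = (\collo \balpha) \odot (\collo \brho)$. A direct analysis of $\brho$ shows that its positions are pairs $(\x_\ell, \x_r)$ satisfying $\x_\ell^k = \x_r^k$ for $k \neq i$, $\x_r^i = \x_\ell^i + \x_\ell^{n+1}$ with $\x_\ell^{n+1}$ a singleton, which immediately yields clause~(1) by relational composition. For clause~(2), the preceding lemma guarantees $q$ occurs in every complete nonempty configuration of $\balpha$, so the $i$-th component of any nonempty $\x \vdash \w \in \collo \balpha$ is nonempty; up to positive isomorphism we may arrange the contribution of $q$ at copy index $\grey{0}$, where $\brho$ performs the split. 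The main subtlety is that $\brho$ is parameterised by a copy index rather than by $q$ itself; since the lemma only asserts existence of $\balpha_{(q)}$, and all choices of redirection copy index yield positionally equivalent results by symmetry of $\balpha$, this suffices.
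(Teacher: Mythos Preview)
Your redirection strategy $\brho$ does not exist as a valid causal prestrategy: it violates \emph{$\sim$-receptivity} (Definition~\ref{def:caus_strat}). On its right-hand side $\oc(\with_j \tx_j)$, the initial Opponent questions at copy indices $\grey{0}$ and $\grey{1}$ in component $i$ are related by a negative symmetry of the arena; $\sim$-receptivity therefore requires the corresponding minimal events $s_0,s_1$ of $\brho$ to satisfy $\{(s_0,s_1)\}\in\tilde{\brho}$. But the extension axiom for isomorphism families then forces this symmetry to extend along the Player forwarding moves, and the forward of $s_0$ lands in component $n{+}1$ on the left while that of $s_1$ lands in component $i$. No symmetry of $\oc((\with_j\tx_j)\with\tx_i)$ relates distinct $\with$-components, so \emph{symmetry-preserving} fails for the extended bijection; hence $\{(s_0,s_1)\}\notin\tilde{\brho}$, contradicting $\sim$-receptivity. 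Since this condition is required even of prestrategies (and is needed for the interaction to form an ess), the composition $\balpha\odot\brho$ is not defined in the framework.

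This obstruction is not an accident of your particular $\brho$: uniformity forces any strategy composed on the left of $\balpha$ to treat all copy indices on the synchronisation interface the same way, so composition cannot isolate a \emph{specific event} $q$ of $\balpha$ via its display. The paper avoids composition entirely: $\balpha_{(q)}$ is defined with the \emph{same} ess as $\balpha$, changing only the display map so that $q$ and its answers are sent to the fresh component $n{+}1$. Because the relabelling acts on events rather than on displays, it has direct access to $q$; and because $q$ is positive with sole dependency the unique initial move $\qu_0^-$, condition \emph{thin} ensures every symmetry of $\balpha$ fixes $q$, so the modified display map still preserves symmetry. The position correspondence then follows at once from the previous lemma (every nonempty complete configuration contains $q$) and local injectivity.
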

\begin{proof}
The strategy $\balpha_{(q)}$ has components as for $\balpha$ except
the display map, which sends $q$ and its answers to the new
component. The characterisation of positions is straightforward.
\end{proof}

We obtain the residual $\balpha_{(q v)}$ as
$
\balpha_{(q v)} = \balpha_{(q)} \odot_\oc \tuple{\der_{\with
\tx_i}, v} : \oc (\with \tx_i) \vdash \tx$
writing $v : \oc (\with \tx_i) \vdash
\tx_i$ the constant strategy. In order to characterize its
positions, we note:

\begin{lem}\label{lem:pos_subst_v}
For any $v \in V$, the positions of $\tuple{\der_{\with \tx_i},
v}^\dagger$ are exactly those of the form
\[
\x \vdash \x + p \cdot [(n+1, \v)]
\quad
\in 
\quad
\smallint \left(\oc (\with \tx_i) \vdash \oc ((\with \tx_i) \with
\tx_i)\right)\,,
\]
where $p \cdot [(n+1, \v)]$ denotes the $p$-fold sum, and
for any $\x \in \coll (\oc (\with \tx_i))$ and $p \in \mathbb{N}$.
\end{lem}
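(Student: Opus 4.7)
The plan is to compute positions of $\tuple{\id_{\with \tx_i}, v}^\dagger$ compositionally, exploiting that $\coll(-) : \CGwbvis \to \Rel$ is a functor (Proposition \ref{prop:coll_main}) together with the canonical bijections of Lemma \ref{lem:rel_conf} used to transport between positions of composite arenas and their relational encoding. Since all strategies involved are visible and well-bracketed, Proposition \ref{prop:vis_coll} lets us freely compose positions relationally.

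First I would characterise the positions of each building block. The dereliction $\der_{\with \tx_i} : \oc(\with \tx_i) \vdash \with \tx_i$ is, by construction, a relabeling of copycat, so Lemma \ref{lem:cc_pcov} and the relabeling sending the right hand side to copy index $\grey{0}$ yield
\[
\collo \der_{\with \tx_i} \;=\; \{[(j, v')] \vdash (j, v') \mid (j, v') \in \collo(\with \tx_i)\}.
\]
The constant $v : \oc(\with \tx_i) \vdash \tx_i$ factors through the terminal projection to $\emptyar$, so its unique non-empty position is $\emptyset \vdash \v$. For the pairing $\tuple{\id_{\with \tx_i}, v} : \oc(\with \tx_i) \vdash (\with \tx_i) \with \tx_i$, since a non-empty complete configuration of $(\with \tx_i) \with \tx_i$ lies entirely in one summand, the definition of pairing (union of the two components, modulo the obvious relabeling) gives non-empty positions split into two cases: either $[(j, v')] \vdash (1, (j, v'))$ from $\der$, or $\emptyset \vdash (2, \v)$ from the constant.

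Next I would establish a standard ``multiset of threads'' characterisation for promotion: for $B$ strict and $\bsigma : \oc A \vdash B$, the non-empty positions of $\bsigma^\dagger : \oc A \vdash\,\oc B$ are exactly
\[
\left\{\sum_{k \in K} \x_k \vdash [\y_k \mid k \in K] \;\middle|\; K\text{ finite},\; (\x_k \vdash \y_k) \in \collo \bsigma \text{ for each } k \in K \right\}.
\]
This is obtained by writing $\bsigma^\dagger = \oc \bsigma \circ \bdig_A$ as in Section \ref{subsubsec:seely}, composing positions via Proposition \ref{prop:vis_coll}, and analysing the structural morphisms $\oc \bsigma$ and $\bdig_A$ through Lemma \ref{lem:rel_conf} (recall $\oc$ acts as $\mathcal{M}_f(-)$ on positions, and digging reindexes threads bijectively). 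The argument is entirely parallel to the one already used for $\bdelta$ in Lemma \ref{lem:pos_delta}.

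Finally, I would apply this characterisation to $\tuple{\id_{\with \tx_i}, v}$. A family of positions contributes, for each $k \in K$, either a singleton $[(j_k, v'_k)]$ on the left paired with $(1, (j_k, v'_k))$ on the right, or $\emptyset$ on the left paired with $(2, \v)$ on the right. Under the identification $(\with_{1 \le i \le n} \tx_i) \with \tx_i \iso \with_{1 \le j \le n+1} \tx'_j$ (with $\tx'_{n+1} = \tx_i$), summing the ``identity'' contributions yields the same multiset $\x$ on both sides, while the $p$ ``constant'' contributions produce exactly $p \cdot [(n+1, \v)]$ on the right only. This gives precisely $\x \vdash \x + p \cdot [(n+1, \v)]$, with $\x \in \coll(\oc(\with \tx_i))$ and $p \in \mathbb{N}$ arbitrary, as claimed. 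The only mild obstacle is the bookkeeping for the canonical bijections of Lemma \ref{lem:rel_conf} when splitting $K$ between identity and constant threads, but this is routine.
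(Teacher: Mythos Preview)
Your approach is correct, but it differs from the paper's. The paper does not compute positions of the pairing $\tuple{\id,v}$ and then promote; instead it uses the Seely factorisation
\[
\tuple{\id_{\with \tx_i}, v}^\dagger \;\simstrat\; \mon \odot (\id_{\with \tx_i}^\dagger \tensor v^\dagger) \odot \bdelta
\]
and computes $\coll$ of each factor separately: by the comonad law $\der^\dagger \simstrat \id_{\oc(\with \tx_i)}$ and functoriality of $\coll$, the first factor has positions $\x \vdash \x$; the second has positions $\emptyset \vdash p\cdot[\v]$; and the combination via $\bdelta$ and $\mon$ (both copycat-like, with positions read off from Lemma~\ref{lem:rel_conf}) yields the claimed form directly through Proposition~\ref{prop:vis_coll}.

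Your route instead establishes a general ``multiset of threads'' description of $\coll(\bsigma^\dagger)$ for arbitrary $\bsigma : \oc A \vdash B$ and then instantiates it. This is sound and the formula is correct, but it is more work than the paper's argument, which sidesteps analysing $\oc\bsigma$ and $\bdig$ in general by exploiting that one of the two components is the Kleisli identity (so its promotion is literally an identity). What your approach buys is a reusable lemma about promotion at the level of positions; what the paper's buys is brevity, since the Seely decomposition reduces everything to pieces whose positions are immediate.
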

\begin{proof}
Straightforward verification.
\end{proof}

Using this lemma, we link the positions of $\balpha$ and $\balpha_{(q
v)}$.

\begin{lem}
We have the following properties: 
\[
\begin{array}{rl}
\text{\emph{(1)}} &
\text{for any $\x \vdash \w \in \collo \balpha$, there is $\x = \x' +
[(i, \v)]$ such that $\x' \vdash \w \in \collo \balpha_{(qv)}$,}\\
\text{\emph{(2)}} &
\text{for any $\x \vdash \w \in \collo \balpha_{(qv)}$, we have $\x +
[(i, \v)] \vdash \w \in \collo \balpha$.}
\end{array}
\]
\end{lem}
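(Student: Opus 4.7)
\medskip

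The plan is to reduce both directions to a direct computation using Proposition~\ref{prop:vis_coll} (positional compositionality for visible well-bracketed strategies), together with the characterisations in Lemmas~\ref{lem:sel_pq} and~\ref{lem:pos_subst_v}, exploiting that the primary question $q$ fires exactly once. Concretely, since $\balpha_{(qv)} = \balpha_{(q)} \odot \tuple{\id_{\with \tx_i}, v}^\dagger$ after unfolding Kleisli composition, and since all components are visible and well-bracketed (being test strategies or built from copycat and constants), Proposition~\ref{prop:vis_coll} gives
\[
\collo \balpha_{(qv)} \;=\; \collo \balpha_{(q)} \;\circ\; \collo \tuple{\id_{\with \tx_i}, v}^\dagger
\]
as relational composition in $\Rel$. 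Lemma~\ref{lem:pos_subst_v} then fixes the shape of the mediating positions: any $\collo \tuple{\id, v}^\dagger$ position has the form $\x \vdash \x + p\cdot[(n+1,\v)]$ with $\v = v$ and $p \in \mathbb{N}$ arbitrary.

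For direction (1), I would take $\x \vdash \w \in \collo \balpha$, and let $v$ be the value that the unique primary question $q$ returns in this position (this is the $v$ indexing $\balpha_{(qv)}$; the statement is about that same value). By Lemma~\ref{lem:sel_pq}(2), we may write $\x = \x' + [(i,v)]$ so that $\x' + [(n+1,v)] \vdash \w \in \collo \balpha_{(q)}$. Applying Lemma~\ref{lem:pos_subst_v} with $p = 1$ furnishes $\x' \vdash \x' + [(n+1,v)] \in \collo \tuple{\id, v}^\dagger$, and composing via Proposition~\ref{prop:vis_coll} yields $\x' \vdash \w \in \collo \balpha_{(qv)}$ as required.

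For direction (2), take $\x \vdash \w \in \collo \balpha_{(qv)}$. Proposition~\ref{prop:vis_coll} provides a mediating $\y \in \coll (\oc((\with \tx_i) \with \tx_i))$ with $\x \vdash \y \in \collo \tuple{\id, v}^\dagger$ and $\y \vdash \w \in \collo \balpha_{(q)}$. By Lemma~\ref{lem:pos_subst_v}, $\y = \x + p\cdot[(n+1,v)]$ for some $p$. The crux is that $p = 1$: the last component of $\balpha_{(q)}$ is reserved exclusively for the single Player event $q$ and its answer, and in any non-empty complete configuration of $\balpha_{(q)}$ the event $q$ must fire exactly once, since $q$ is a minimal primary question and by Lemma~\ref{lem:key_caus_wb} the answer to the initial question $q_0$ forces $q$ to appear beneath it. Once $p = 1$ is established, Lemma~\ref{lem:sel_pq}(1) applied to $\x + [(n+1,v)] \vdash \w \in \collo \balpha_{(q)}$ produces $\x + [(i,v)] \vdash \w \in \collo \balpha$.

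The main obstacle I anticipate is the justification that $p = 1$ in part (2): this is where the machinery of Sections~\ref{subsec:wb_pruning} and~\ref{subsec:factor} is used crucially, specifically the fact that the primary question $q$ is a single event in $\ev{\balpha_{(q)}}$ (hence its one copy index on the dedicated $\tx_i$ component), together with completeness forcing it to fire. Everything else is routine unfolding of definitions.
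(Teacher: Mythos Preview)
Your approach is essentially the same as the paper's: decompose $\balpha_{(qv)}$ as $\balpha_{(q)} \odot \tuple{\id_{\with \tx_i}, v}^\dagger$, apply Proposition~\ref{prop:vis_coll}, and combine Lemmas~\ref{lem:sel_pq} and~\ref{lem:pos_subst_v}. The paper's own proof is just those three citations, so you are correct and on the same route.

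One minor simplification for your direction~(2): you argue $p=1$ by going back to the construction of $\balpha_{(q)}$ and invoking Lemma~\ref{lem:key_caus_wb}, but this is already packaged in Lemma~\ref{lem:sel_pq}(1). Since $\y \vdash \w \in \collo \balpha_{(q)}$, that lemma gives $\y = \y' + [(n+1,\v')]$ with $\y' + [(i,\v')] \vdash \w \in \collo \balpha$; as $\y'$ then lives in $\coll \oc(\with_{1\leq k \leq n} \tx_k)$ it has no $(n{+}1)$-component, so comparing with $\y = \x + p\cdot[(n+1,v)]$ forces $p=1$ and $\v'=v$ directly. Your longer justification is not wrong, just redundant.
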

\begin{proof}
By definition, we have $\balpha_{(q v)} = \balpha_{(q)} \odot
\tuple{\der_{\with \tx_i}, v}^\dagger$, so by Proposition
\ref{prop:vis_coll},
\[
\coll \balpha_{(q v)} = 
\coll (\balpha_{(q)}) \odot 
\coll (\tuple{\der_{\with \tx_i}, v}^\dagger)\,.
\]

The lemma directly follows by Lemmas \ref{lem:sel_pq} and
\ref{lem:pos_subst_v}.
\end{proof}

We have $\balpha_{(qv)}$ a test strategy with size strictly smaller
than that of $\balpha$. By IH there is 
\[
x_1 : \tx_1, \dots, x_n : \tx_n \vdash N_{(q v)} : \tx_i\,,
\]
for each $v \in V$, such that $\intr{N_{(q v)}} \equiv \balpha_{(q
v)}$. Finally, we define $x_1 : \tx_1, \dots, x_n : \tx_n
\vdash M : \tx$ as
\[
\begin{array}{l}
\mathbf{case}\,x_i\,\mathbf{of}\\
\hspace{20pt}v_1 \mapsto N_{(q v_1)}\\
\hspace{20pt}v_2 \mapsto N_{(q v_2)}\\
\hspace{20pt}\dots\\
\hspace{20pt}v_p \mapsto N_{(q v_p)}
\end{array}
\qquad
\stackrel{\text{def}}{=}
\qquad
\begin{array}{l}
\mathbf{let}\,x\,=\,x_i\,\mathbf{in}\\
\hspace{38pt}\mathbf{if}\,x\,=_\tx\,v_1\,\mathbf{then}\,N_{(q v_1)}\\
\hspace{16pt}\mathbf{else}\,\mathbf{if}\,x\,=_\tx\,v_2\,\mathbf{then}\,N_{(q
v_2)}\\
\hspace{40pt}\dots\\
\hspace{16pt}\mathbf{else}\,\mathbf{if}\,x\,=_\tx\,v_p\,\mathbf{then}\,N_{(q
v_p)}\\
\hspace{91pt}\mathbf{else}\,\bot
\end{array}
\]
where $V = \{v_1, \dots, v_p\}$, using the syntactic sugar introduced
in Section~\ref{subsec:sugar}. Write
\[
x_1 : \tx_1, \dots, x_n : \tx_n, x : \tx_i \vdash M' : \tx
\]
for the iterated if statement, \emph{i.e.} $M$ is
$\mathbf{let}\,x\,=\,x_i\,\mathbf{in}\,M'$. It remains to
analyze the positions of $\intr{M}$ and $\intr{M'}$ to show that
$\intr{M} \equiv \balpha$ as required. We start with the positions of
$\intr{M'}$.

\begin{lem}\label{lem:pos_mp}
Consider $\x \in \coll{\,\oc (\with \tx_i)}$ and $\w \in
\collo{\tx}$.

Then, $\x \vdash \w \in \collo{\intr{N_{(qv)}}}$ iff there is $p \in
\mathbb{N}$ such that $\x + p \cdot [(n+1, \v)] \vdash \w \in
\collo{\intr{M'}}$.
\end{lem}
\begin{proof}
It is a direct verification, amounting to the correctness of our
definition
for equality test and the usual laws for conditionals, that for any $v
\in V$ we have $\intr{M'} \odot_\oc \tuple{\der, v} = \intr{N_{(q v)}}$.
The claim then follows by Proposition \ref{prop:vis_coll} and Lemma
\ref{lem:pos_subst_v}.
\end{proof}

It remains to take the interpretation of the $\mathbf{let}$
construction into account. Recall that
\[
\intr{M} 
\quad = \quad
\mathsf{let}_{\tx_i,\tx} \odot_\oc \tuple{\pi_i, \Lambda^\oc(\intr{M'})} 
\quad:\quad
\oc (\with \tx_i) \vdash \tx
\]
where $\mathsf{let}_{\tx_i, \tx} : \oc (\tx_i \with (\oc \tx_i \lin
\tx)) \vdash \tx$. We characterize the positions of
$\mathsf{let}_{\tx_i, \tx}$ as follows.

\begin{lem}\label{lem:pos_let}
The non-empty positions of $\mathsf{let}_{\tx_i, \tx}$ are exactly those
of the form
\[
[(1, \v)] + [(2, ((p \cdot [\v]) \lin \w))] \vdash \w
\quad
\in
\quad
\coll (\oc (\tx_i \with (\oc \tx_i \lin \tx)) \vdash \tx)
\]
for $\v \in \collo \tx_i$, $\w \in \collo \tx$, and $p \in \mathbb{N}$.
\end{lem}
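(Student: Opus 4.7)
The plan is to analyze the complete configurations of the concrete causal strategy $\mathsf{let}_{\tx_i, \tx}$ whose typical maximal augmentation is depicted in Figure~\ref{fig:let}, and to compute their images through the display map modulo symmetry. Since $\mathsf{let}_{\tx_i,\tx}$ is sequential innocent (noted in Section~\ref{subsubsec:intr_inter}), every augmentation is obtained from a unique alternating trace, and Player's response at each stage is forced by sequential determinism. Moreover, $\mathsf{let}_{\tx_i,\tx}$ is well-bracketed, so by Lemma~\ref{lem:reach_inn_wb} and the definition of positions we only need to enumerate the complete configurations up to $\sym_{A}$.

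First I would argue that any non-empty complete configuration $x$ of $\mathsf{let}_{\tx_i,\tx}$ has precisely the causal shape from Figure~\ref{fig:let}: after the initial $\qu^-$ on $\tx$, Player's unique response is $\qu^+_{\grey{0}}$ on the first (memoized) component; Opponent returns some $v^-_{\grey{0}}$; Player then invokes the function component via $\qu^+_{\grey{1}}$; Opponent may invoke the argument some number $p \geq 0$ of times, using distinct copy indices $i_1,\dots,i_p$, each answered deterministically by $v^+_{\grey{1,i_k}}$ carrying the \emph{same} value $v$ (this is the memoization, forced by the definition of $\mathsf{let}_{\tx_i,\tx}$); eventually Opponent returns some $w^-_{\grey{1}}$ and Player answers $w^+$ at the top. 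That no other shape occurs—in particular that exactly one call to the memoized component appears and that every argument invocation is answered with $v$—follows directly from the causal and conflict structure of $\mathsf{let}_{\tx_i,\tx}$ together with completeness of $x$. Applying $\pr_{\mathsf{let}_{\tx_i,\tx}}$ to such an $x$ contributes $[(1,\v)]$ to the left (from the singleton multiset $[v]$ in $\oc\tx_i$), contributes $[(2,(p\cdot[\v])\lin\w)]$ for the function component (whose argument receives $p$ copies of $\v$ and which returns $\w$), and yields $\w$ on the right.

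Finally, I would take the symmetry quotient: permuting the Opponent copy indices $\{0,1,i_1,\dots,i_p\}$ yields $\sym_A$-equivalent configurations whose positions coincide, so each position is determined by the triple $(\v,\w,p) \in \collo \tx_i \times \collo \tx \times \mathbb{N}$; conversely, any such triple is realized by choosing canonical copy indices. This gives exactly the characterization claimed. The main obstacle is purely bookkeeping—making rigorous the passage from concrete Opponent copy indices to the multiset operation $p\cdot[\v]$—which is standard in this framework and follows the same template as the companion Lemma~\ref{lem:pos_delta} for $\bdelta$.
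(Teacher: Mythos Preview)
Your proposal is correct and takes essentially the same approach as the paper, which simply records the proof as ``a direct analysis of positions reached by complete configurations of $\mathsf{let}_{\tx_i, \tx}$''; you have spelled out precisely that analysis. One small remark: Figure~\ref{fig:let} depicts the \emph{alternating} strategy $\mathsf{let}$, while here you need the causal strategy obtained as a simplification of $\mathsf{plet}$ (Section~\ref{subsubsec:intr_plet}), but since the latter is sequential innocent its augmentations indeed have the same shape as those plays, so the argument goes through unchanged.
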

\begin{proof}
A direct analysis of positions reached by complete configurations of
$\mathsf{let}_{\tx_i, \tx}$.
\end{proof}

We can now wrap up, showing that $\intr{M}$ has the same
non-empty positions as $\balpha$.

\begin{lem}
Consider $\x \in \coll{\,\oc (\with \tx_i)}$ and $\w \in \collo{\tx}$.

Then, $\x \vdash \w \in \collo{\intr{M}}$ iff $\x = \x' + [(i,\v)]$ with
$\x' \vdash \w \in \collo{\intr{N_{(qv)}}}$.
\end{lem}
\begin{proof}
By Lemmas \ref{lem:pos_mp} and \ref{lem:pos_let}.
\end{proof}

So we have $\intr{M} \equiv \balpha$ as desired. Summing up, we have
proved:

\begin{prop}\label{prop:def_test_fst}
For $\balpha : \oc (\with \tx_i) \vdash \tx$ any test strategy, there is
\[
x_1 : \tx_1, \dots, x_n : \tx_n \vdash M : \tx
\]
a term of $\PCF$ (not using parallel evaluation) such that $\intr{M}
\equiv \balpha$.
\end{prop}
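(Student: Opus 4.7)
The plan is to proceed by strong induction on the size of the test strategy $\balpha$, relying crucially on positional equivalence (so that we may sequentialize without observable effect) together with the analysis of minimal primary questions already hinted at in the excerpt.

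First I would treat the degenerate cases. If $\collo \balpha = \emptyset$, any diverging term $\bot : \tx$ suffices, since $\collo \intr{\bot} = \emptyset$. If $\balpha$ has no primary question, then by determinism and completeness it must have a unique answer $v \in \tx$ to the initial $\qu_0^-$; in that case $M = v$ does the job. Otherwise I select a primary question $q \in \Q_i$ that is \emph{minimal} among primary questions (\emph{i.e.} only depending on $\qu_0$). A short lemma --- essentially an application of Lemma \ref{lem:key_caus_wb} together with determinism --- shows that any complete configuration $y$ of $\balpha$ must contain $q$: extend $y$ by $q$ via determinism, observe that the answer to $\qu_0$ in $y$ dominates $q$ by causal well-bracketing, hence $q \in y$. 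This licenses committing, at the top of $M$, to first evaluating $x_i$.

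The next step is to construct residuals. Using the relabeling strategy $\balpha_{(q)}$ of Lemma \ref{lem:sel_pq}, which isolates $q$ onto a fresh component, I define for each value $v \in \tx_i$ the residual
\[
\balpha_{(qv)} \;=\; \balpha_{(q)} \odot_\oc \tuple{\id_{\with \tx_i}, v} \;:\; \oc(\with \tx_i) \vdash \tx\,.
\]
This $\balpha_{(qv)}$ is again a test strategy (stability of well-bracketing, parallel innocence, causal well-bracketing, completeness, and finiteness under composition with a constant, which follow from the results established in the previous sections), and its size is strictly smaller than that of $\balpha$ because the events $q$ and its answer are no longer visible on the right-hand component. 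Since $\balpha$ is finite, the set $V \subseteq_f \tx_i$ of values $v$ for which $\balpha_{(qv)}$ has a non-empty complete position is finite. By induction hypothesis, for each $v \in V$ I obtain a $\PCF$ term $\Gamma \vdash N_{(qv)} : \tx$ with $\intr{N_{(qv)}} \equiv \balpha_{(qv)}$.

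Finally I assemble
\[
M \;=\; \mathbf{let}\,x = x_i\,\mathbf{in}\,\mathbf{if}\,x=_{\tx_i}v_1\,\mathbf{then}\,N_{(qv_1)}\,\mathbf{else}\,\cdots\,\mathbf{else}\,\bot\,,
\]
written using the $\mathbf{case}$ sugar of Section \ref{subsec:sugar}. Note that $M$ uses no parallel $\mathbf{let}$. The verification $\intr{M} \equiv \balpha$ reduces, via Proposition \ref{prop:vis_coll} and the characterizations of positions for the $\mathsf{let}$ combinator and for substitution by constants (Lemmas \ref{lem:pos_subst_v}, \ref{lem:pos_mp}, \ref{lem:pos_let}), to a bijective matching between non-empty complete positions of $\balpha$ and those of $\intr{M}$: a non-empty complete position of $\balpha$ must contain $q$, hence decomposes as $\x' + [(i,\v)] \vdash \w$ with $\x' \vdash \w$ a position of $\balpha_{(qv)}$, which in turn matches a position of $\intr{N_{(qv)}}$ by IH and then of $\intr{M}$ by the $\mathbf{let}$/$\mathbf{case}$ analysis; the other direction is symmetric.

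The main obstacle I anticipate is not the inductive structure itself but the bookkeeping of positions through the $\mathbf{let}$ construct: one must track precisely how copies of $x_i$ introduced by $\mathbf{let}$ are collated in the multiset position on $\oc \tx_i$, and check that the multiplicities contributed by the inner branch $N_{(qv)}$ combine correctly with the single call made to $x_i$ by the enclosing $\mathbf{let}$ to reproduce exactly the multiset on the $i$-th component of $\balpha$'s position. This is where the decision to only require \emph{positional} equivalence --- rather than positive isomorphism --- pays off: it is insensitive to the fact that our $M$ serializes the top call to $x_i$ before the rest of the computation, whereas $\balpha$ may well make all its calls in parallel.
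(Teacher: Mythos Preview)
Your proposal is correct and follows essentially the same approach as the paper: induction on size, handling the base cases (empty positions, no primary question), selecting a minimal primary question $q$, showing it lies in every complete configuration, forming the residuals $\balpha_{(qv)}$ via Lemma~\ref{lem:sel_pq} and composition with the constant $v$, invoking the induction hypothesis, and assembling $M$ as a $\mathbf{case}$ on $x_i$ with the positional verification carried out through Lemmas~\ref{lem:pos_subst_v}, \ref{lem:pos_mp}, and \ref{lem:pos_let}. Your anticipated obstacle about multiset bookkeeping through $\mathbf{let}$ is exactly what those lemmas handle, and your remark that positional equivalence absorbs the sequentialization is the key insight the paper also relies on.
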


\subsubsection{Finite definability} We may now conclude the proof of
finite definability.

\begin{cor}\label{cor:fin_def}
Let $\Gamma$ be a $\PCF$ context, $A$ a $\PCF$ type, and $\balpha :
\intr{\Gamma} \vdash \intr{A}$ a test strategy.

Then, there is $\Gamma \vdash M : A$ such that $\intr{M} \equiv
\balpha$.
\end{cor}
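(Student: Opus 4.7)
The plan is to proceed by strong induction on the size of $\balpha$, reducing the general case to the first-order result of Proposition~\ref{prop:def_test_fst} via the factorization of Corollary~\ref{cor:factor_fst_arg}. First I would handle the shape of the codomain. If $A = A_1 \to \dots \to A_n \to \tx$ with $n \geq 1$, then using the cartesian closed structure of $\CGwb_\oc$ I uncurry $\balpha$ to obtain a test strategy $\balpha' : \intr{\Gamma, x_1{:}A_1,\dots,x_n{:}A_n} \vdash \intr{\tx}$ of the same size (an easy check: visibility, well-bracketing, parallel innocence and completeness are stable under this relabeling, which only moves events between components of the display map). It then suffices to find $\Gamma, x_1{:}A_1,\dots,x_n{:}A_n \vdash M' : \tx$ with $\intr{M'} \equiv \balpha'$; the required term is $M = \lambda x_1^{A_1}.\dots\lambda x_n^{A_n}.\,M'$, using that $\lambda$-abstraction is interpreted via currying and that $\equiv$ is preserved by all Seely category operations (Corollary~\ref{cor:equiv_cong}).

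So I am reduced to the case of ground codomain $\tx$. Here I apply Corollary~\ref{cor:factor_fst_arg} to write
\[
\balpha \;\equiv\; \fo(\balpha) \odot_\oc \tuple{x_i\,\balpha_{q,1}^\bullet\,\dots\,\balpha_{q,p_i}^\bullet \mid 1\leq i \leq n,\, q\in \Q_i}\,,
\]
with $\fo(\balpha) \in \CG_\oc(\bigwith_{i,q\in\Q_i}\tx_i, \tx)$ a first-order test strategy, and each argument substrategy $\balpha_{q,j}^\bullet : \oc \intr{\Gamma} \vdash \intr{A_{i,j}}$ a test strategy of strictly smaller size. Since $\fo(\balpha)$ lives entirely between ground types, Proposition~\ref{prop:def_test_fst} directly provides a term
\[
y_{1,1}{:}\tx_1,\dots,y_{n,k_n}{:}\tx_n \vdash N : \tx
\]
(one variable $y_{i,q}$ per primary question $q \in \Q_i$) such that $\intr{N} \equiv \fo(\balpha)$. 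By the induction hypothesis applied to each $\balpha_{q,j}^\bullet$ (which has strictly smaller size), I obtain terms $\Gamma \vdash M_{q,j} : A_{i,j}$ with $\intr{M_{q,j}} \equiv \balpha_{q,j}^\bullet$.

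Combining these syntactically, I set
\[
M \;=\; N\bigl[\,y_{i,q} := x_i\,M_{q,1}\,\cdots\,M_{q,p_i}\,\bigr]_{1\leq i \leq n,\, q\in \Q_i}\,,
\]
a well-typed term of $\Gamma \vdash M : \tx$. By the standard decomposition of the denotation of a substitution in a cartesian closed category, $\intr{M}$ is positively isomorphic to $\intr{N} \odot_\oc \tuple{x_i\,\intr{M_{q,1}}\,\cdots\,\intr{M_{q,p_i}} \mid i,q\in \Q_i}$, where applications and tupling are taken in $\CGwb_\oc$. Now positional equivalence is a congruence for all these Seely category operations (Corollary~\ref{cor:equiv_cong}), so replacing $\intr{N}$ by $\fo(\balpha)$ and each $\intr{M_{q,j}}$ by $\balpha_{q,j}^\bullet$ preserves $\equiv$, and we conclude $\intr{M} \equiv \balpha$ using the factorization above.

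The main obstacle I expect is bookkeeping rather than conceptual: one must check that the uncurrying step actually preserves all five components of being a test strategy (the non-trivial ones being completeness and causal parallel innocence, both of which depend only on the underlying ess and are unaffected by relocating events within the display map), and that the syntactic substitution corresponds on the nose to the categorical factorization up to $\equiv$ -- in particular that repeated primary questions to the same $x_i$ (which in $\fo(\balpha)$ appear as distinct components of $\bigwith_{q\in\Q_i}\tx_i$) correspond syntactically to distinct occurrences of $x_i$, each applied to its own tuple of argument terms. This matches the structure of $\fo(\balpha)$ by Definition~\ref{def:flow} and the fact that in $\CGwb_\oc$, contraction on $\oc \intr{A_i}$ followed by distinct applications realizes exactly the relabeling performed when constructing the flow substrategy.
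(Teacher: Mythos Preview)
Your proposal is correct and follows essentially the same approach as the paper: strong induction on size, uncurrying to ground codomain, factorization via Corollary~\ref{cor:factor_fst_arg}, first-order definability via Proposition~\ref{prop:def_test_fst}, induction hypothesis for the argument substrategies, syntactic substitution to combine, and congruence of $\equiv$ (Corollary~\ref{cor:equiv_cong}) to conclude. The paper compresses your uncurrying step into the phrase ``up to currying'' and invokes the substitution lemma for cartesian closed categories where you spell out the denotation of the substituted term, but the argument is otherwise identical.
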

\begin{proof}
Up to currying, we write
$\balpha : \oc (\with_{1\leq i \leq n} A_i) \vdash \tx$,
writing $A_i = A_{i, 1} \to \dots \to A_{i, p_i} \to \tx_i$ for $1
\leq i \leq n$. We reason by induction on the size of $\balpha$.
By Corollary \ref{cor:factor_fst_arg}, $\balpha$ factors as
\[
\balpha \equiv \fo(\balpha) \odot_\oc
\tuple{x_i\,\balpha_{q,1}^\bullet\,\dots\,\balpha_{q,p_i}^\bullet \mid
1\leq i \leq
n,~q \in \Q_i}\,,
\]
with each $\Q_i$ finite, and
for $1\leq i \leq n$, $q \in \Q_i$ and $1\leq j \leq p_i$,
$\balpha_{q,j}^\bullet : \oc (\with A_i) \vdash A_{i,j}$
a test strategy of size strictly smaller than $\balpha$. By
induction hypothesis, there is
\[
x_1 : A_1, \dots, x_n : A_n \vdash N_{q, j} : A_{i,j}
\]
such that $\intr{N_{q, j}} \equiv \balpha_{q, j}$. Let us write $\Q_i
=
\{q_{i,1}, \dots, q_{i,{k_i}}\}$. By Proposition
\ref{prop:def_test_fst}
there is also
\[
x_{q_{1,1}} : \tx_1, \dots, x_{q_{1,k_1}} : \tx_1, \dots,
x_{q_{n,1}} : \tx_n, \dots, x_{q_{n,k_n}} : \tx_n 
\quad \vdash \quad
M_{\mathsf{fo}} : \tx
\]
such that $\intr{M_{\mathsf{fo}}} \equiv \fo(\balpha)$. Then, we define
the
term $x_1 : A_1, \dots, x_n : A_n \vdash M : \tx$ as
\[
x_1 : A_1, \dots, x_n : A_n \vdash
M_{\mathsf{fo}}[x_i\,N_{q_{i,l}, 1}\,\dots\,N_{q_{i,l}, p_i} /
x_{q_{i,l}}] : \tx\,.
\]

Then we may finally compute
\begin{eqnarray*}
\intr{M} &=& \intr{M_{\mathsf{fo}}} \odot_\oc
\tuple{\intr{x_i\,N_{q_{i,l}, 1}\,\dots\,N_{q_{i,l}, p_i}} \mid 1\leq
i \leq n, q_{i,l} \in \Q_i}\\
&\equiv& \intr{M_{\mathsf{fo}}} \odot_\oc
\tuple{x_i\,\intr{N_{q, 1}}\,\dots\,\intr{N_{q, n}} \mid 1\leq i \leq
n, q \in \Q_i}\\
&\equiv& 
\fo(\balpha) \odot_\oc
\tuple{x_i\,\balpha_{q,1}^\bullet\,\dots\,\balpha_{q,p_i}^\bullet \mid
1\leq i \leq
n,~q \in \Q_i}
\end{eqnarray*}
using the substitution lemma for cartesian closed
categories, compatibility of interpretation with the internal
language, the properties of $M_{\mathsf{fo}}$ and $N_{q,
j}$ and that $\equiv$ is a congruence.
\end{proof}

\subsection{Full Abstraction for \texorpdfstring{$\PCFpar$}{PCF//}} \label{subsec:fa_pcfpar}
We may now prove our final full abstraction result.

\begin{thm}
The model $\CGwbinn_\oc$ is intensionally fully abstract for $\PCFpar$.
\end{thm}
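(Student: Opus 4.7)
The plan is to follow exactly the template established for the three prior intensional full abstraction results in the paper (Theorems \ref{th:fa_pcf}, \ref{th:fa_ia}, \ref{th:fa_ipa}), combining adequacy (for soundness) with a definability argument (for completeness). Soundness is immediate: if $\intr{M} \obs \intr{N}$ in $\CGwbinn_\oc$ and $C[-]$ is any $\PCFpar$ context testing at $\tunit$, then $\intr{C[-]}$ is a parallel innocent well-bracketed test, so $\intr{C[M]} = \intr{C[-]} \odot_\oc \intr{M}$ converges iff $\intr{C[N]}$ does, and adequacy (which descends from Theorem \ref{th:adequacy_ipa} since $\PCFpar \subseteq \IPA$) transports this to $M \obs N$.

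The harder direction is by contrapositive. Assume a distinguishing test $\balpha \in \CGwbinn_\oc(\intr{A}, \tunit)$, \emph{w.l.o.g.}~with $\balpha \odot_\oc \intr{M}\eval$ and $\balpha \odot_\oc \intr{N}\div$. I will promote $\balpha$ to a $\PCFpar$ context realizing the same separation, in three canonical steps. First, by Corollary \ref{cor:finite_tests_suffice}, we may replace $\balpha$ by a finite parallel innocent well-bracketed strategy below it that still separates. Second, Proposition \ref{prop:wb_pruning} (well-bracketed pruning) replaces this finite strategy by $\comp(\balpha) \equiv \balpha$, which is additionally causally well-bracketed and complete; since $\comp(\balpha)$ is cut out of $\balpha$ by selecting events in completable configurations, it remains finite. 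At this point $\comp(\balpha)$ satisfies all the requirements of a \emph{test strategy} in the sense of Section \ref{subsec:factor}. Third, finite definability (Corollary \ref{cor:fin_def}) produces a $\PCFpar$ term $x : A \vdash M_\balpha : \tunit$ with $\intr{M_\balpha} \equiv \comp(\balpha)$.

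To close the argument, since positional equivalence is a congruence on $\CGwbvis/\equiv$ by Corollary \ref{cor:equiv_cong}, composition with $\intr{M}$ and $\intr{N}$ yields $\intr{M_\balpha[M/x]} \equiv \balpha \odot_\oc \intr{M}$ and $\intr{M_\balpha[N/x]} \equiv \balpha \odot_\oc \intr{N}$. The one subtle point to verify is that, because tests live at ground type $\tunit$ whose positions are just $\{\emptyset\}$ and $\{\emptyset, [\{\qu^-, \done^+\}]_\sym\}$, the apparent weakening from $\simstrat$ to $\equiv$ in definability is harmless: positional equivalence at $\tunit$ coincides with convergence equivalence. Hence $\intr{M_\balpha[M/x]}\eval$ while $\intr{M_\balpha[N/x]}\div$, and adequacy yields $M_\balpha[M/x]\eval$ while $M_\balpha[N/x]\div$, contradicting $M \obs N$. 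There is no genuine obstacle left to surmount since all the heavy machinery (the deadlock-free Lemma \ref{lem:deadlock_free}, factorization, finite definability, and well-bracketed pruning) has already been established; the proof is essentially a one-screen orchestration of these results in the correct order.
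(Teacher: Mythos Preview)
Your proposal is correct and follows essentially the same approach as the paper's own proof: reduce to a finite test via Corollary \ref{cor:finite_tests_suffice}, apply well-bracketed pruning (Proposition \ref{prop:wb_pruning}) to obtain a test strategy, invoke finite definability (Corollary \ref{cor:fin_def}), and conclude via the congruence of positional equivalence (Corollary \ref{cor:equiv_cong}) together with adequacy. Your explicit remark that positional equivalence at $\tunit$ coincides with convergence equivalence is a point the paper uses implicitly; otherwise the orchestration matches.
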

\begin{proof}
Consider $\vdash M, N : A$ such that $M \obs N$, and assume that
$\intr{M} \not \obs \intr{N}$, \emph{i.e.} there is a test $\balpha \in
\CGwbinn_\oc(\intr{A}, \intr{\tunit})$ such that,
\emph{w.l.o.g.},
$\balpha \odot_\oc \intr{M} \eval$ and $\balpha \odot_\oc \intr{N}
\div$.

By Corollary \ref{cor:finite_tests_suffice}, we assume 
$\balpha$ is additionally finite. We consider $\comp(\balpha)$ as
defined in Proposition \ref{prop:wb_pruning}. By construction it is
well-bracketed, parallel innocent, and finite. By Proposition
\ref{prop:wb_pruning} it is causally well-bracketed, so 
a \emph{test strategy}. Proposition \ref{prop:wb_pruning}
also states that $\comp(\balpha) \equiv \balpha$ which is a congruence
by Corollary \ref{cor:equiv_cong}, so 
\[
\comp(\balpha) \odot_\oc \intr{M} \eval\,,
\qquad
\qquad
\comp(\balpha) \odot_\oc \intr{N} \div\,.
\]

By Corollary \ref{cor:fin_def}, there is a term $x : A \vdash T :
\tunit$ such that $\intr{T} \equiv \comp(\balpha)$. Defining the context
$C[-] = (\lambda x^A.\,T)\,[-]$, it follows from the laws of cartesian
closed categories that
\[
\intr{C[M]} = \intr{(\lambda x^A.\,T)\,M} 
\simstrat \intr{T[M/x]}
\simstrat \intr{T} \odot_\oc \intr{M}
\equiv \comp(\balpha) \odot_\oc \intr{M} \eval\,,
\]
and likewise, $\intr{C[N]} \equiv \comp(\balpha) \odot_\oc
\intr{N}$. By Theorem \ref{th:adequacy_ipa}, $C[M] \eval$. By hypothesis
$M \obs N$, so $C[N] \eval$. By Theorem \ref{th:adequacy_ipa} again,
$\intr{C[N]} \eval$, hence $\comp(\balpha) \odot_\oc \intr{N} \eval$,
contradiction.
\end{proof}

We may finally answer our main question positively, with the following
theorem.

\begin{thm}
The model $\CGwb_\oc$ supports \emph{parallel innocence} and
\emph{sequentiality}, s.t.
\[
\begin{array}{rcl}
\CGwb_\oc          & \text{is fully abstract for} & \IPA\,,\\
\CGwbinn_\oc       & \text{is fully abstract for} & \PCFpar\,,\\
\CGwbseq_\oc       & \text{is fully abstract for} & \IA\,,\\
\CGwbseqinn_\oc    & \text{is fully abstract for} & \PCF\,,
\end{array}
\]
\end{thm}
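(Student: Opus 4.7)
The plan is essentially to observe that this statement is a summary theorem: each of its four clauses has already been established in the body of the paper, and the algebraic backbone (that parallel innocence and sequentiality each determine a sub-Seely category stable under the interpretation of the appropriate fragment) was set up in Sections~\ref{sec:par_inn} and~\ref{subsec:seq_ia}. So the proof reduces to citing four prior results and checking that the relevant sub-Seely categories exist and support the interpretation of the correct fragment, which is immediate from the Seely laws combined with the fact that every primitive of the fragment has been shown to lie in the corresponding subcategory.

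Concretely: the case of $\IPA$ is Theorem~\ref{th:fa_ipa}, obtained via the functorial unfolding $\PNStrat : \CG_{\mathsf{c}} \to \GM$ of Proposition~\ref{prop:cg_to_gm} together with Ghica--Murawski's finite definability (Proposition~\ref{prop:gm_def}). The case of $\IA$ is Theorem~\ref{th:naltwbseq_fa_ia}, which uses the alternating projection $\AltStrat : \CGseq \to \NegAltvis$ (Proposition~\ref{prop:func_seq}), the fact that a complete alternating play is automatically well-bracketed in the stronger sense of Definition~\ref{def:alt_play_wb} (Lemma~\ref{lem:complete_wb}), and the Abramsky--McCusker full abstraction (Theorem~\ref{th:fa_ia}). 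The case of $\PCF$ is the theorem of Section~\ref{subsubsec:caus_fa_pcf}: the alternating unfolding sends sequential innocent strategies to innocent alternating strategies (Proposition~\ref{prop:proj_pres_inn}), innocence automatically entails well-bracketing on $\PCF$ types (Proposition~\ref{prop:inn_wb}), and one invokes Hyland--Ong (Theorem~\ref{th:fa_pcf}).

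The only non-trivial clause is the $\PCFpar$ case, which is exactly the theorem proved immediately above. Its proof has consumed most of Section~\ref{sec:definability}: the positional collapse gives a tractable congruence $\equiv$ preserved by composition thanks to the deadlock-free lemma (Lemma~\ref{lem:deadlock_free}); well-bracketed pruning (Proposition~\ref{prop:wb_pruning}) upgrades well-bracketing to the causal form needed for definability while preserving $\equiv$; reduction to finite tests goes through meager forms (Corollary~\ref{cor:finite_tests_suffice}); factorization (Corollary~\ref{cor:factor_fst_arg}) decomposes a finite test as a first-order strategy composed with strictly smaller argument strategies; and finally Proposition~\ref{prop:def_test_fst} defines first-order tests up to positional equivalence by sequentializing them, which is harmless in the absence of interference.

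The only ``hard part'' remaining at this stage is bookkeeping: one must verify that every primitive appearing in the typing derivations of each fragment is interpreted by a strategy lying in the corresponding subcategory, so that the interpretation $\intr{-}$ factors through $\CGwbinn_\oc$, $\CGwbseq_\oc$, or $\CGwbseqinn_\oc$ as appropriate. For $\CGwbinn_\oc$ and $\PCFpar$ this is exactly the content of Section~\ref{sec:par_inn} (including the verification that $\mathsf{plet}_{\tx,\ty}$ is parallel innocent via inspection of Figure~\ref{fig:plet}); for $\CGwbseq_\oc$ and $\IA$ it is done in Section~\ref{subsec:seq_ia}; for $\CGwbseqinn_\oc$ and $\PCF$ it follows by combining the two. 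No additional argument is needed beyond assembling these pieces.
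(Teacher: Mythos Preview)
Your proposal is correct and matches the paper's approach exactly: this theorem is stated in the paper without proof, as a summary collecting the four intensional full abstraction results already established (Theorems~\ref{th:fa_ipa}, \ref{th:naltwbseq_fa_ia}, the theorem of Section~\ref{subsubsec:caus_fa_pcf}, and the $\PCFpar$ theorem immediately preceding it). Your account of the ingredients and of the bookkeeping about interpretations factoring through the relevant sub-Seely categories is accurate and, if anything, more explicit than what the paper itself provides.
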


Thus parallel innocence exactly bans interference, and sequentiality
exactly bans parallelism. Through this theorem, we have
successfully disentangled parallelism and interference.

\section{Conclusion}

It is puzzling that disentangling parallelism and interference requires
such an intricate machinery whereas the original \emph{semantic cube}
arose almost ``by accident'' from minor variations of the Hyland-Ong
model of $\PCF$. 

Our interpretation is that computational effects may be organized in
distinct categories. Some effects, such as interference and control,
bring more freedom as to how execution and its control flow explores a
piece of code. Once a sufficiently general mathematical description of
the control flow is given (such as the original Hyland-Ong setting for
sequential deterministic computation), this additional freedom may be
studied and characterized. In contrast, other effects such as
non-determinism and parallelism, affect the inherent structure of
execution itself: non-determinism quite explicitly so by generating
non-deterministic branching, and parallelism in a more DAG-like fashion
-- we refer to both under the umbrella name ``branching effects''. What
our paper demonstrates -- along with earlier papers on
non-deterministic innocence \cite{lics14,DBLP:conf/lics/TsukadaO15} --
is that if we aim to realize fully the ``unified semantic landscape'' of
Abramsky's programme, we must first deal with branching effects.
Non-branching effects should follow by characterizing the causal
patterns they allow\footnote{The line is not always so clear between
branching and non-branching effects: for instance, in a sequential
setting interference is non-branching, but in a parallel setting it
generates non-deterministic choice.}.

What other branching effects are around? One currently at the focus of
the semantics community is probabilistic choice. By itself, the
probabilistic branching structure is not much harder than
non-deterministic choice \cite{DBLP:journals/corr/TsukadaO14,lics18}.
However its interaction with non-deterministic and parallel branching is
a significant challenge, and one of the remaining scientific and
technical barriers for a truly unified game semantics landscape.

\paragraph{\textbf{Acknowledgments.}} First we would like to thank Glynn
Winskel. Though he has not participated in the developments in this
paper per se, we rely deeply on his pioneering work - and our subsequent
collaborations - on concurrent games. We owe a lot to his insight.

This work was supported by the ANR project DyVerSe
(ANR-19-CE48-0010-01); and by Labex MiLyon (ANR-10-LABX-0070) of
Universit\'e de Lyon, within the program ``Investissements d'Avenir''
(ANR-11-IDEX-0007), operated by the French ANR.

\bibliographystyle{alphaurl}
\bibliography{main}

\appendix

\newpage 

\section{Concurrent Games Toolbox}

\subsection{Event Structures and Maps} We start with recalling
\emph{maps of event structures}.

\begin{defi}\label{def:app_map_es}
Consider $E, F$ two es. A \textbf{map of event structures} from $E$ to
$F$ is a function $f : \ev{E} \to \ev{F}$ satisfying the following two
conditions:
\[
\begin{array}{rl}
\text{\emph{valid:}} & \text{for all $x \in \conf{E}$, we have $f\,x \in
\conf{F}$,}\\
\text{\emph{local injectivity:}} & \text{for all $e_1, e_2 \in x \in
\conf{E}$, if $f e_1 = f e_2$ then $e_1 = e_2$.}
\end{array}
\]

If $E$ and $F$ are esps, a \textbf{map of esps} is additionally required
to preserve polarities.
\end{defi}

\begin{lem}\label{lem:es_refl_caus}
Consider $f : E \to F$ a map of event structures, and $e_1, e_2 \in x \in \conf{E}$.

If $f(e_1) \leq_F f(e_2)$, then $e_1 \leq_E e_2$.
\end{lem}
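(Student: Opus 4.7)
The plan is to pass the inequality through the down-closure of $e_2$ and then exploit local injectivity on the configuration $x$.

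First I would observe that the principal down-set $[e_2]_E = \{e \in \ev{E} \mid e \leq_E e_2\}$ is a configuration of $E$: it is down-closed by definition, finite by the finite causes axiom, and consistent because $[e_2]_E \subseteq x$ (since $x$ is down-closed and $e_2 \in x$), so conflict-freeness is inherited from $x$. Thus by validity of $f$, the image $f([e_2]_E)$ lies in $\conf{F}$, and in particular it is down-closed in $F$.

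Now $f(e_2) \in f([e_2]_E)$ and by hypothesis $f(e_1) \leq_F f(e_2)$, so down-closure of $f([e_2]_E)$ gives some $e' \in [e_2]_E$ with $f(e') = f(e_1)$. Since $[e_2]_E \subseteq x$ we have $e' \in x$, and of course $e_1 \in x$ too, so both $e_1$ and $e'$ live in the single configuration $x$. Local injectivity of $f$ on $x$ forces $e' = e_1$, hence $e_1 \in [e_2]_E$, i.e.\ $e_1 \leq_E e_2$, as required.

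The argument is essentially routine once one spots that the right configuration to apply local injectivity to is $x$ itself (as opposed to, say, $[e_2]_E$, which need not contain $e_1$ a priori). The only mild subtlety is verifying that $[e_2]_E$ is genuinely a configuration — which is why it matters that $e_1, e_2$ are assumed to lie in a \emph{common} configuration $x$, giving consistency of $[e_2]_E$ for free. No appeal to polarities, symmetry, or any further structure is needed; this is purely a property of maps of plain event structures.
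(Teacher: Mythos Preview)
Your proof is correct and follows essentially the same route as the paper's: both use that $[e_2]_E$ is a configuration, hence $f([e_2]_E)$ is down-closed, produce a preimage $e' \in [e_2]_E$ of $f(e_1)$, and invoke local injectivity on $x$ to conclude $e' = e_1$. The only cosmetic difference is that the paper phrases it as a proof by contradiction while you argue directly.
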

\begin{proof}
Seeking a contradiction, assume we do not have $e_1 \leq_E e_2$. This
means that $e_1 \not \in [e_2]_E$. But $[e_2]_E \in \conf{E}$, so $f
[e_2]_E$ is down-closed. Moreover, $f e_2 \in f [e_2]_E$, so $f e_1 \in
f [e_2]_E$. So there is $e'_1 \in [e_2]_E$ such that $f e'_1 = f e_1$.
Finally, $[e_2]_E \subseteq x$, so $e_1, e'_1 \in x \in \conf{E}$.
Hence, by local injectivity, $e_1 = e'_1$. We deduce that $e_1 \in
[e_2]_E$ after all, contradiction.
\end{proof}

\subsection{Basic Properties of Strategies}
We gather some basic properties.

\begin{lem}\label{lem:app_caus_alt}
Consider $A$ an arena, $\bsigma : A$ a causal strategy, and $m, n \in
\ev{\bsigma}$.

If $m \imc_\bsigma n$, then $\pol(m) \neq \pol(n)$.
\end{lem}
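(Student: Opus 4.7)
The plan is to reduce the claim to the alternating property of arenas by using the courtesy condition to transport immediate causality from $\bsigma$ to $A$. Recall that polarity in $\bsigma$ is defined by $\pol_\bsigma(m) = \pol_A(\pr_\bsigma(m))$, and that arenas are alternating in the sense that $a_1 \imc_A a_2$ implies $\pol_A(a_1) \neq \pol_A(a_2)$.

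Suppose, seeking a contradiction, that $m \imc_\bsigma n$ and $\pol(m) = \pol(n)$. There are two symmetric cases: either both are positive, or both are negative. In the first case, the hypothesis $\pol_\bsigma(m) = +$ triggers the \emph{courteous} clause of Definition \ref{def:caus_strat}, which yields $\pr_\bsigma(m) \imc_A \pr_\bsigma(n)$. In the second case, the hypothesis $\pol_\bsigma(n) = -$ triggers the same clause, again yielding $\pr_\bsigma(m) \imc_A \pr_\bsigma(n)$. In either case, the alternating condition on arenas then forces $\pol_A(\pr_\bsigma(m)) \neq \pol_A(\pr_\bsigma(n))$, which contradicts $\pol_\bsigma(m) = \pol_\bsigma(n)$.

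There is no real obstacle here: the statement is essentially a direct unfolding of \emph{courtesy} plus the \emph{alternating} axiom of arenas, with the only subtlety being to notice that each of the two cases triggers courtesy via a different side of its polarity disjunction. The argument does not use \emph{receptivity}, symmetry, or anything specific to $\bsigma$ beyond courtesy and the definition of its polarity via $\pr_\bsigma$, so the same proof applies equally well to causal prestrategies between arenas once polarity is understood.
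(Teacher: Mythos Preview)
Your proof is correct and follows exactly the same approach as the paper: assume the polarities agree, invoke \emph{courteous} to push the immediate causality down to $A$, and contradict the \emph{alternating} axiom on arenas. The paper simply compresses your two cases into one sentence, but the reasoning is identical.
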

\begin{proof}
If $\pol(m) = \pol(n)$, then by \emph{courteous} we have $\pr_\bsigma(m)
\imc_A \pr_\btau(n)$ as well. But since arenas are alternating, this is
absurd.
\end{proof}

\begin{lem}\label{lem:app_aux_caus}
Consider $A$ an arena, $\bsigma : A$ a causal strategy, and $m,  n^- \in
\ev{\bsigma}$ compatible.

Then, $m \imc_\bsigma n$ iff $\pr_\bsigma(m) \imc_A \pr_\bsigma(n)$.
\end{lem}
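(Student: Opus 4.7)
The plan is to prove both directions separately, relying on fundamental properties of causal strategies already established in the paper.

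For the forward direction, assume $m \imc_\bsigma n$. Since $n$ is negative, \emph{courtesy} from Definition \ref{def:caus_strat} applies directly and yields $\pr_\bsigma(m) \imc_A \pr_\bsigma(n)$. This is immediate.

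For the backward direction, assume $\pr_\bsigma(m) \imc_A \pr_\bsigma(n)$, with $m, n \in x \in \conf{\bsigma}$ witnessing compatibility. Since $\pr_\bsigma(n)$ is non-minimal in $A$ (it has a predecessor), $n$ is non-initial, so by Lemma \ref{lem:just_pred}, $\just(n)$ exists and is the unique immediate predecessor of $n$ in $\bsigma$ (using that $n$ is negative), and satisfies $\pr_\bsigma(\just(n)) \imc_A \pr_\bsigma(n)$. Since $A$ is forestial, an event has at most one immediate predecessor for $\imc_A$, so $\pr_\bsigma(\just(n)) = \pr_\bsigma(m)$. Now observe that $\just(n) \in [n]_\bsigma \subseteq x$ by down-closure of $x$, so $m$ and $\just(n)$ both belong to $x$ and share the same image under $\pr_\bsigma$. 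By \emph{local injectivity} of the display map (Definition \ref{def:caus_strat}), $m = \just(n)$, hence $m \imc_\bsigma n$ as required.

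The main obstacle, if any, is simply correctly invoking compatibility to guarantee that the two candidate preimages of $\pr_\bsigma(m)$ live in a common configuration, so that local injectivity can be applied; but this follows straightforwardly from down-closure of configurations. No difficulties of substance arise.
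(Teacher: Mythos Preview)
Your proof is correct and follows essentially the same approach as the paper's: both directions rely on courtesy, the forestial shape of $A$, and local injectivity of the display map. The only minor imprecision is that Lemma~\ref{lem:just_pred} (and the notion $\just(-)$ from Definition~\ref{def:justifier1}) is stated for strategies $\bsigma : A \vdash B$ between $-$-arenas, whereas the present lemma concerns $\bsigma : A$ on an arbitrary arena; the paper therefore argues directly---picking any $m'' \imc_\bsigma n$ and applying courtesy---rather than invoking $\just(n)$. The content is identical, so this is a cosmetic rather than substantive gap.
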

\begin{proof}
\emph{If.} Assume $\pr_\bsigma(m) \imc_A \pr_\bsigma(n)$. Since
$[n]_\bsigma \in \conf{\bsigma}$ and $\pr_\bsigma$ is a map of event
structures, $\pr_\bsigma [n]_\bsigma \in \conf{A}$, so it is
down-closed. Thus, there is $m' \in [n]_\bsigma$ such that
$\pr_\bsigma(m') = \pr_\bsigma(m)$. In particular, $n$ cannot be
minimal, so there is $m'' \imc_\bsigma n$. By
\emph{courteous}, since $n$ is negative we have $\pr_\bsigma(m'') \imc_A
\pr_\bsigma(n)$. But $A$ is \emph{forestial}, so $\pr_\bsigma(m'') =
\pr_\bsigma(m)$. Now, since $m, n$ are compatible they appear in a
configuration $x \in \conf{\bsigma}$, and in particular $m, m'' \in x$.
Thus, $m = m''$ by local injectivity.
\emph{Only if.} If $m \imc_\bsigma n^-$, then $\pr_\bsigma(m) \imc_A
\pr_\bsigma(n)$ by \emph{courteous}.
\end{proof}

\begin{lem}\label{lem:aux_imp_court}
Consider $A$ an arena, $\bsigma : A$ a causal strategy, and $m^- \in
\ev{\bsigma}$ s.t. $\pr_\bsigma m$ non-minimal.
Then, there is a unique $n \imc_\bsigma m$.
\end{lem}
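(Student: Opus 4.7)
The plan is to split into existence and uniqueness, both leveraging the key observation that $A$ is \emph{forestial} together with the \emph{courteous} condition on $\bsigma$.

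For existence, since $\pr_\bsigma(m)$ is non-minimal in the arena $A$, there is some $a \in \ev{A}$ with $a \imc_A \pr_\bsigma(m)$. Consider $[m]_\bsigma \in \conf{\bsigma}$: its image $\pr_\bsigma([m]_\bsigma) \in \conf{A}$ is down-closed, and contains $\pr_\bsigma(m)$, so it must contain $a$. Hence there is $n \in [m]_\bsigma$ with $\pr_\bsigma(n) = a$. By Lemma \ref{lem:app_aux_caus} applied to $n$ and $m$ (which are compatible as both lie in $[m]_\bsigma$), we get $n \imc_\bsigma m$.

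For uniqueness, suppose $n_1 \imc_\bsigma m$ and $n_2 \imc_\bsigma m$. Since $\pol_\bsigma(m) = -$, condition \emph{courteous} applied to each immediate causal link forces $\pr_\bsigma(n_1) \imc_A \pr_\bsigma(m)$ and $\pr_\bsigma(n_2) \imc_A \pr_\bsigma(m)$. Because $A$ is forestial, the immediate predecessor of $\pr_\bsigma(m)$ in $A$ is unique, so $\pr_\bsigma(n_1) = \pr_\bsigma(n_2)$. Both $n_1$ and $n_2$ belong to the configuration $[m]_\bsigma \in \conf{\bsigma}$, so local injectivity of the display map yields $n_1 = n_2$.

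There is no real obstacle here: both halves are routine applications of the structural axioms on arenas and on causal strategies, essentially packaging together facts already used in the proof of Lemma \ref{lem:app_aux_caus}. The only subtle point is making sure the compatibility hypothesis of Lemma \ref{lem:app_aux_caus} is available when invoking it for existence, which is why we locate the witness $n$ inside the prime configuration $[m]_\bsigma$ rather than anywhere in $\ev{\bsigma}$.
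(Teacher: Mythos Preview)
Your proof is correct and follows essentially the same approach as the paper's: existence via down-closure of $\pr_\bsigma([m]_\bsigma)$ combined with Lemma~\ref{lem:app_aux_caus}, and uniqueness via \emph{courteous}, forestiality of $A$, and local injectivity. Your explicit remark about ensuring compatibility by working inside $[m]_\bsigma$ is a nice touch that the paper leaves implicit.
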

\begin{proof}
\emph{Existence.} Write $a \imc \pr_\bsigma(m)$, which is unique since
$A$ is \emph{forestial}. Since $[m]_\bsigma \in \conf{\bsigma}$, we have
$\pr_\bsigma [m]_\bsigma \in \conf{A}$, therefore it is down-closed and
must contain $a$. Therefore, there is $n \leq_\bsigma m$ such that
$\pr_\bsigma(n) = a$. By Lemma \ref{lem:app_aux_caus}, we then have $n
\imc_\bsigma m$ as required.

\emph{Uniqueness.} If $n_1 \imc_\bsigma m$ and $n_2 \imc_\bsigma m$,
by \emph{courteous} $\pr_\bsigma n_1 \imc_A \pr_\bsigma m$
and $\pr_\bsigma n_2 \imc_A \pr_\bsigma m$. But then, $\pr_\bsigma n_1 =
\pr_\bsigma n_2$ as $A$ is \emph{forestial}. So $n_1 = n_2$ as
$\pr_\bsigma$ is \emph{locally injective}.
\end{proof}

\begin{lem}\label{lem:pol_caus}
Consider $A$ an arena, $\bsigma : A$ a causal strategy, and $m, n_1, n_2\in
\ev{\bsigma}$.

If $n_1 \imc_\bsigma m$ and $n_2 \imc_\bsigma m$ with $n_1, n_2$
distinct, then $\pol(m) = +$. 
\end{lem}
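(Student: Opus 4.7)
The plan is to prove the contrapositive: if $\pol(m) = -$, then $m$ has at most one immediate predecessor in $\bsigma$. This essentially reduces to the already-established Lemma \ref{lem:aux_imp_court}, with a small case analysis to handle whether $\pr_\bsigma(m)$ is minimal in $A$ or not.

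First I would assume, for contradiction, that $\pol(m) = -$ while $n_1 \imc_\bsigma m$ and $n_2 \imc_\bsigma m$ are distinct. There are two cases. If $\pr_\bsigma(m)$ is minimal in $A$, then by \emph{courteous} applied to $n_1^? \imc_\bsigma m^-$ (which is allowed since $m$ is negative), one must have $\pr_\bsigma(n_1) \imc_A \pr_\bsigma(m)$, but $\pr_\bsigma(m)$ has no predecessor in $A$, contradiction. If $\pr_\bsigma(m)$ is non-minimal in $A$, then Lemma \ref{lem:aux_imp_court} applies directly: there is a \emph{unique} $n \imc_\bsigma m$, contradicting the existence of two distinct immediate predecessors $n_1, n_2$.

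There is no real obstacle here since the heavy lifting was already done in Lemma \ref{lem:aux_imp_court}; the only thing worth being careful about is the minimal case, where Lemma \ref{lem:aux_imp_court} does not literally apply but the argument is even simpler, following directly from \emph{courteous}. In fact, both cases can be uniformly handled by noting that, for negative $m$, \emph{courteous} forces $\pr_\bsigma(n_i) \imc_A \pr_\bsigma(m)$ for $i \in \{1,2\}$, and then since $A$ is \emph{forestial} we conclude $\pr_\bsigma(n_1) = \pr_\bsigma(n_2)$; applying local injectivity of $\pr_\bsigma$ inside any configuration containing both $n_1$ and $n_2$ (such as $[m]_\bsigma$) gives $n_1 = n_2$, the desired contradiction.
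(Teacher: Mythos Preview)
Your proposal is correct, and the uniform argument you give in the final paragraph is essentially word-for-word the paper's own proof: assume $\pol(m)=-$, apply \emph{courteous} to get $\pr_\bsigma(n_i)\imc_A\pr_\bsigma(m)$, use that $A$ is forestial to conclude $\pr_\bsigma(n_1)=\pr_\bsigma(n_2)$, and then local injectivity in $[m]_\bsigma$ forces $n_1=n_2$. The initial detour via a case split and Lemma~\ref{lem:aux_imp_court} is unnecessary (indeed the ``minimal'' case is vacuous, since the existence of $n_1\imc_\bsigma m$ with $m$ negative already forces $\pr_\bsigma(m)$ to have a predecessor in $A$ by courtesy), but you correctly subsume it in the direct argument.
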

\begin{proof}
Seeking a contradiction, assume $\pol(m) = -$. Then, by
\emph{courteous}, we have $\pr_\bsigma(n_1) \imc_A \pr_\bsigma(m)$ and
$\pr_\bsigma(n_2) \imc_A \pr_\bsigma(m)$. As $A$ is \emph{forestial},
$\pr_\bsigma(n_1) = \pr_\bsigma(n_2)$. As $n_1, n_2 \in [m]_\bsigma \in
\conf{\bsigma}$, we have $n_1 = n_2$ by local injectivity,
contradiction.
\end{proof}

\begin{lem}\label{lem:app_aux_max_pos}
Consider $A$ an arena, $\bsigma : A$ a causal strategy, and $m^+ \in x \in
\conf{\bsigma}$.

Then, $m$ is maximal in $x$ iff $\pr_\bsigma m$ is maximal in
$\pr_\bsigma x$.
\end{lem}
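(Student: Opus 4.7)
My plan is to prove each direction by contraposition, using two complementary tools: Lemma \ref{lem:es_refl_caus}, which says maps of event structures locally reflect the order on configurations, and the \emph{courteous} axiom from Definition \ref{def:caus_strat}, which relates immediate causality in $\bsigma$ to that in the arena.

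For the direction that does not require positivity, I would assume $m$ is maximal in $x$ but $\pr_\bsigma(m) <_A a$ for some $a \in \pr_\bsigma(x)$. Writing $a = \pr_\bsigma(n)$ for some $n \in x$, Lemma \ref{lem:es_refl_caus} applied within the configuration $x$ immediately yields $m \leq_\bsigma n$. Since $\pr_\bsigma(m) \neq a$ we have $m \neq n$, and hence $m <_\bsigma n$, contradicting maximality of $m$. Note that this half works for events of any polarity.

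For the other direction, I would suppose $\pr_\bsigma(m)$ is maximal in $\pr_\bsigma(x)$ and assume there is some $n \in x$ with $m <_\bsigma n$. Picking an immediate successor $n'$ of $m$ on a chain to $n$, we have $m \imc_\bsigma n'$ with $n' \in x$ by down-closure. Now the positivity of $m$ becomes crucial: since $\pol_\bsigma(m) = +$, the \emph{courteous} axiom applies to the edge $m \imc_\bsigma n'$ and yields $\pr_\bsigma(m) \imc_A \pr_\bsigma(n')$, contradicting maximality of $\pr_\bsigma(m)$ in $\pr_\bsigma(x)$.

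I do not anticipate any serious obstacle: this is a short structural lemma bridging the causal orders of strategy and arena. The only conceptual point worth noting is that the positivity hypothesis is used in exactly one place, reflecting the asymmetry of \emph{courteous}, which constrains immediate causal links only when the source is positive or the target is negative; a dual statement would hold for a negative $m$ under the opposite polarity convention.
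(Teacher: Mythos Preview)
Your proposal is correct and follows essentially the same approach as the paper: Lemma~\ref{lem:es_refl_caus} for one direction, \emph{courteous} for the other. The only cosmetic difference is that the paper additionally invokes Lemma~\ref{lem:app_caus_alt} to record that the immediate successor $n'$ is negative before applying \emph{courteous}, whereas you apply \emph{courteous} directly via $\pol(m)=+$; your route is slightly leaner since the positivity hypothesis already suffices.
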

\begin{proof}
\emph{If.} Assume $\pr_\bsigma m$ is maximal in $\pr_\bsigma x$.
Seeking a contradiction, assume that $m \imc_\bsigma n$. By Lemma
\ref{lem:app_caus_alt}, $\pol(n) = -$. Therefore, by
\emph{courteous}, $\pr_\bsigma(m) \imc_A \pr_\bsigma(n)$,
contradiction.

\emph{Only if.} Straightforward by Lemma \ref{lem:es_refl_caus}.
\end{proof}

\begin{lem}\label{lem:app_pres_neg_imm_confl}
Consider $A$ an arena, $\bsigma : A$ a causal strategy, and $m^-_1, m^-_2
\in \ev{\bsigma}$.
If $m_1$ and $m_2$ are in minimal conflict in $\bsigma$, then
$\pr_\bsigma(m_1)$ and $\pr_\bsigma(m_2)$ are in minimal conflict in $A$.
\end{lem}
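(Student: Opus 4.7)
The plan is to reduce the problem to a clean application of the receptivity axiom by working with the shared predecessor configuration. Set $z = ([m_1]_\bsigma \setminus \{m_1\}) \cup ([m_2]_\bsigma \setminus \{m_2\})$. Minimal conflict of $m_1, m_2$ ensures no strict predecessor of $m_1$ is in conflict with a strict predecessor of $m_2$, so $z \in \conf{\bsigma}$. By the same reason $z \cup \{m_1\}, z \cup \{m_2\} \in \conf{\bsigma}$, while $z \cup \{m_1, m_2\} \not\in \conf{\bsigma}$ since $m_1 \conflict_\bsigma m_2$. Local injectivity on $z \cup \{m_i\}$ yields $\pr_\bsigma(m_i) \notin \pr_\bsigma(z)$, so each $\pr_\bsigma(m_i)$ genuinely extends $\pr_\bsigma(z)$ in $A$.

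First I would show $\pr_\bsigma(m_1) \neq \pr_\bsigma(m_2)$: both are new negative extensions of $\pr_\bsigma(z)$, and the uniqueness in receptivity of $\bsigma$ applied to $z$ would otherwise force $m_1 = m_2$. Next I would argue $\pr_\bsigma(m_1) \conflict_A \pr_\bsigma(m_2)$. The set $Y = \pr_\bsigma(z) \cup \{\pr_\bsigma(m_1), \pr_\bsigma(m_2)\}$ is down-closed as a union of the two down-closed sets $\pr_\bsigma(z \cup \{m_i\})$. If $Y$ were consistent, receptivity of $\bsigma$ applied to $z \cup \{m_1\}$ with the fresh negative extension $\pr_\bsigma(m_2)$ would yield some $s^-$ with $z \cup \{m_1\} \vdash_\bsigma s$ and $\pr_\bsigma(s) = \pr_\bsigma(m_2)$. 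Since subsets of configurations are configurations, $s$ also extends $z$, and then uniqueness of the receptivity extension of $z$ along $\pr_\bsigma(m_2)$ (witnessed already by $m_2$) forces $s = m_2$, contradicting $s \not\conflict m_1$. Hence $Y$ is inconsistent; but each $\pr_\bsigma(z \cup \{m_i\})$ is consistent, so the conflict must be between $\pr_\bsigma(m_1)$ and $\pr_\bsigma(m_2)$.

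For minimality, suppose towards a contradiction $a' <_A \pr_\bsigma(m_1)$ with $a' \conflict_A \pr_\bsigma(m_2)$ (the symmetric case is analogous). Down-closure of $\pr_\bsigma([m_1]_\bsigma) \in \conf{A}$ provides some $m' \in [m_1]_\bsigma$ with $\pr_\bsigma(m') = a'$, and by Lemma~\ref{lem:es_refl_caus} together with local injectivity (which rules out $m' = m_1$ since $a' \neq \pr_\bsigma(m_1)$) we get $m' <_\bsigma m_1$. I claim $m' \conflict_\bsigma m_2$: otherwise $m'$ and $m_2$ lie in a common $x \in \conf{\bsigma}$, whence $a', \pr_\bsigma(m_2) \in \pr_\bsigma(x) \in \conf{A}$ would be consistent, contradicting $a' \conflict_A \pr_\bsigma(m_2)$. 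But $m' <_\bsigma m_1$ with $m' \conflict_\bsigma m_2$ contradicts the minimality of the conflict between $m_1$ and $m_2$ in $\bsigma$.

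The main obstacle is the step showing conflict in $A$: the bijection between extensions in $\bsigma$ and $A$ given by receptivity must be wielded carefully to convert the conflict $m_1 \conflict_\bsigma m_2$ into an obstruction to simultaneously reaching $\pr_\bsigma(m_1)$ and $\pr_\bsigma(m_2)$ in the arena; once that is in place, minimality is a routine pull-back argument along $\pr_\bsigma$ using Lemma~\ref{lem:es_refl_caus}.
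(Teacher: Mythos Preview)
Your overall architecture matches the paper's: both apply receptivity at $z \cup \{m_1\} = [m_1]_\bsigma \cup [m_2)_\bsigma$ to produce a copy $s$ of $m_2$ compatible with $m_1$, then argue $s = m_2$ for a contradiction. Your treatment of minimality is more explicit than the paper's (which dismisses it as ``obvious''), and your separate step $\pr_\bsigma(m_1) \neq \pr_\bsigma(m_2)$ is a clean preliminary.

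There is, however, a genuine gap at the sentence ``Since subsets of configurations are configurations, $s$ also extends $z$''. Subsets of configurations are consistent, but need not be down-closed; so $z \cup \{s\} \subseteq z \cup \{m_1, s\}$ is not automatically a configuration. Concretely, you must rule out $m_1 <_\bsigma s$. This is exactly where courtesy and the forestial structure of $A$ are needed, and it is the work the paper does explicitly: since $s$ is negative, if it is non-minimal it has a unique predecessor $n \imc_\bsigma s$ (Lemma~\ref{lem:aux_imp_court}); by courtesy $\pr_\bsigma(n) \imc_A \pr_\bsigma(s) = \pr_\bsigma(m_2)$, and the unique predecessor $n'$ of $m_2$ has the same display by forestiality. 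Now $n' \in [m_2)_\bsigma \subseteq z$, and if $n = m_1$ then $\pr_\bsigma(m_1) = \pr_\bsigma(n') \in \pr_\bsigma(z)$, contradicting what you already established. Hence $n \in z$, so $[s)_\bsigma = [n]_\bsigma \subseteq z$ and $z \vdash_\bsigma s$ as you wanted. The paper carries out essentially this argument, but phrases the conclusion slightly differently (descending to $[n]_\bsigma$ rather than $z$ to invoke receptivity uniqueness). Once you fill this in, your proof is correct and essentially the same as the paper's.
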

\begin{proof}
We first prove that $\pr_\bsigma(m_1)$ and $\pr_\bsigma(m_2)$ are in
conflict. Seeking a contradiction, assume that it is not the case. Then,
as $m_1$ and $m_2$ are in \emph{minimal} conflict, we have 
$\pr_\bsigma([m_1]_\bsigma \cup [m_2]_\bsigma) \in \conf{A}$. Hence, by
\emph{receptive}, there is a unique $m'_2 \in \ev{\bsigma}$ such that
\[
[m_1]_\bsigma \cup [m_2)_\bsigma \vdash_\bsigma m'_2
\]
with $\pr_\bsigma(m'_2) = \pr_\bsigma(m_2)$ -- where $[m_2)_\bsigma =
\{n \in \ev{\bsigma} \mid n <_\bsigma m_2\}$.
But then, by Lemma \ref{lem:app_aux_caus}, $m_2$ is minimal in $\bsigma$
iff $m'_2$ is minimal in $\bsigma$, and so $m_2 =
m'_2$ by \emph{receptive}, contradicting that $m_1 \conflict_\bsigma
m_2$. Otherwise, consider $n \imc_\bsigma m_2$ and $n' \imc_\bsigma
m'_2$. Then, by \emph{courteous},
\[
\pr_\bsigma(n) \imc_A \pr_\bsigma(m_2)\,,
\qquad
\qquad
\pr_\bsigma(n') \imc_A \pr_\bsigma(m'_2)\,,
\]
but then $\pr_\bsigma(n) = \pr_\bsigma(n')$ since $A$ is
\emph{forestial}. But $n, n' \in [m_1]_\bsigma \cup [m_2)_\bsigma \in
\conf{\bsigma}$, so we must have $n = n'$ by locally injectivity. By
Lemma \ref{lem:aux_imp_court}, $n$ is the unique predecessor of $m_2$
and $m'_2$. So, $[n]_\bsigma \vdash_\bsigma m_2$ and $[n]_\bsigma
\vdash_\bsigma m'_2$ with the same image. So $m_2 = m'_2$,
contradiction. 

Finally, minimality of the conflict is obvious from that of $m_1$ and
$m_2$.
\end{proof}

\section{Construction of Alternating Strategies}
\label{app:alt}

In this second section of the appendix, we give more details on the
construction of $\NegAlt$.

First, a warning: quite a few superficial complications come from the
general construction $A \lin B$ for $B$ non-pointed, with 
morphisms from $A$ to $B$ being strategies on $A \lin B$. 
An alternative is to only consider $A \lin B$ for $B$ strict. Then we do
not have a symmetric monoidal closed category, only an exponential
ideal. This would be sufficient for the languages considered in this
paper, however, we opted to link with traditional categorical models.

\subsection{Basic Categorical Structure} We start by establishing the
categorical structure.

\subsubsection{Arrow arena} \label{app:arrow}
First, we give postponed proofs on the
construction of the arrow arena.

\conflictlin*
\begin{proof}
\emph{Existence.} We set $\conflict_{A\lin B}$ as the following
relation:
\[
\begin{array}{ccl}
(2, b_1) \conflict_{A \lin B} (2, b_2) &~\Leftrightarrow~&b_1 \conflict_{B} b_2\\
(1, (b, a_1)) \conflict_{A \lin B}  (1, (b, a_2))
&\Leftrightarrow & a_1 \conflict_A  a_2\\
(2, b_1) \conflict_{A \lin B} (1, (b_2, a))
&\Leftrightarrow&
b_1 \conflict_B b_2\\
(1, (b_1, a)) \conflict_{A \lin B} (2, b_2)
&\Leftrightarrow& b_1
\conflict_B b_2\\
(1, (b_1, a_1)) \conflict_{A \lin B}  (1, (b_2, a_2))
&\Leftrightarrow& (b_1 \conflict_B b_2) \vee (\min(a_1) = \min(a_2))
\vee (a_1 \conflict_A a_2)\,,
\end{array}
\]
where $b_1, b_2$ are assumed distinct, all other pairs left consistent,
and with $\min(a)$ the unique minimal antecedent of $a$. It is
routine that this conflict makes $A \lin B$ a $-$-arena.

Now, we check the additional condition. Consider $x \subseteq \ev{A\lin
B}$ down-closed, written as
\[
x = (\parallel_{b \in I} x_{A,b}) \parallel x_B
\]
where $I$ is a subset of the minimal events of $B$. Then, we show that
$x \in \conf{A \lin B}$ iff 
\[
\chi_{A, B}\,x = (\cup_{b \in I} x_{A,b}) \parallel x_B \in
\conf{A \vdash B}
\]
and $\chi_{A, B}$ is injective on $x$. \emph{Only if} is a direct
verification. For \emph{if}, if $\chi_{A, B}\,x \in \conf{A\lin B}$ then
the only possible conflict in $x$ is of the form, with $b_1 \conflict_B
b_2$ or $\min(a_1) = \min(a_2)$:
\[
(1, (b_1, a_1)) \conflict_{A\lin B} (1, (b_2, a_2))
\]

In the former case, by down-closure, $(2, b_1), (2, b_2) \in x$,
contradicting $\chi_{A, B}\,x \in \conf{A \vdash B}$. In the
latter case, by down-closure, $(1, (b_1, \min(a_1))), (1,
(b_2, \min(a_2))) \in x$ -- but they have the same image through
$\chi_{A, B}$, contradicting that it should be injective on $x$.
Uniqueness follows as with fixed events, an
event structure is determined by configurations.
\end{proof}

\subsubsection{Composition}\label{app:alt_comp}
Consider fixed $A, B$ and $C$ three $-$-arenas, and $\sigma : A \lin B$,
$\tau : B \lin C$ alternating prestrategies. Recall from the main
text:

\begin{defi}
A \textbf{pre-interaction} on $A, B, C$ is $u \in \ev{(A\lin
B) \lin C}^*$ satisfying:
\[
\begin{array}{rl}
\text{\emph{valid}:}& \forall 1\leq i \leq n,~\{s_1, \dots, s_i\} \in
\conf{(A \lin B) \lin C}
\end{array}
\]
\end{defi}

Remember that in $A \lin B$, events are either $(2, b)$ for $b
\in \ev{B}$, or $(1, (b, a))$ for $b \in \min(B)$ and $a \in \ev{A}$. By
convention, in this section, we write $[\labr, b]$ for $(2, b)$ and
$[\labl, a]_b$ for $(1, (b, a))$. In that case, $[\labr, b]$ is the unique
immediate predecessor of $[\labl, a]_b$, \emph{i.e.} its
\emph{justifier}.
Similarly, in $(A \lin B) \lin C$, events can be $(2, c)$ written $[\labr,
c]$; $(1, (c, (2, b)))$ written $[\labm, b]_c$; and $(1, (c, (1, (b,
a))))$ written $[\labl, a]_{b, c}$; we respectively say that they are
\textbf{in $C$}, \textbf{in $B$}, or \textbf{in $A$}.

\begin{figure}
\[
\scalebox{.75}{$
\begin{array}{rcl}
\varepsilon \restrict A, B &=& \varepsilon\\
u\,[\labr, c] \restrict A, B &=& u \restrict A, B\\
u\,[\labm, b]_c \restrict A, B &=& (u \restrict A, B)\,[\labr, b]\\
u\,[\labl, a]_{b,c} \restrict A, B &=& (u \restrict A, B)\,[\labl, a]_b
\end{array}$}
\,
\scalebox{.75}{$
\begin{array}{rcl}
\varepsilon \restrict B, C &=& \varepsilon\\
u\,[\labr, c] \restrict B, C &=& (u \restrict B, C)\,[\labr, c]\\
u\,[\labm, b]_c \restrict B, C &=& (u \restrict B, C)\,[\labl, b]_c\\
u\,[\labl, a]_{b,c} \restrict B, C &=& u \restrict B, C
\end{array}$}
\,
\scalebox{.75}{$
\begin{array}{rcl}
\varepsilon \restrict A, C &=& \varepsilon\\
u\,[\labr, c] \restrict A, C &=& (u \restrict A, C)\,[\labr, c]\\
u\,[\labm, b]_c \restrict A, C &=& u \restrict A, C\\
u\,[\labr, a]_{b,c} \restrict A, C &=& (u\restrict A, C)\,[\labl, a]_c
\end{array}$}
\]
\caption{Restrictions of pre-interactions}
\label{fig:restr_pre-int}
\end{figure}

Using this, we define in Figure \ref{fig:restr_pre-int} three
\textbf{restrictions} of a pre-interaction $u$, namely $u \restrict A, B
\in \ev{A \lin B}^*$, $u \restrict B, C \in \ev{B \lin C}^*$, and $u
\restrict A, C \in \ev{A \lin C}^*$. Now we set:

\begin{defi}
An \textbf{interaction} $u \in \tau \inter \sigma$ between $\sigma$ and
$\tau$ is a pre-interaction $u$ s.t.
\[
u \restrict A, B \in \sigma\,,
\qquad
u \restrict B, C \in \tau\,,
\qquad
u \restrict A, C \in \Alt(A\lin C)\,.
\]

The \textbf{composition} of $\sigma$ and $\tau$ is $\tau \odot \sigma =
\{u \restrict A, C \mid u \in \tau \inter \sigma\}$.
\end{defi}

The first step is to ensure that $\tau \odot \sigma$
is a prestrategy, and that if $\sigma$ and $\tau$ are strategies, then
so is $\tau \odot \sigma$. We start with the conditions of Definition
\ref{def:alt_strat}, postponing uniformity.
\emph{Non-empty} and \emph{prefix-closed} follow from those on
$\sigma$ and $\tau$. For \emph{deterministic}, we need more tools.
The main tool to study the interaction of alternating strategies is the
\emph{state analysis} of plays and interactions. Recall from Section~\ref{subsubsec:comp_seq_caus} that $s$ alternating is \textbf{in state
$O$} if it has even length, and \textbf{in state $P$} otherwise. Then,
we have the following property:

\begin{lem}\label{lem:state-alt}
If $u \in \tau \inter \sigma$, then we are in one of the following three
cases:
\[
\begin{array}{rl}
\text{\emph{(1)}} &
\text{$u \restrict A, B$, $u \restrict B, C$ and $u \restrict A, C$ are
respectively in state $O$, $O$, $O$.}\\
\text{\emph{(2)}} &
\text{$u \restrict A, B$, $u \restrict B, C$ and $u \restrict A, C$ are
respectively in state $O$, $P$, $P$.} \\
\text{\emph{(3)}} &
\text{$u \restrict A, B$, $u \restrict B, C$ and $u \restrict A, C$ are
respectively in state $P$, $O$, $P$.}
\end{array}
\]
\end{lem}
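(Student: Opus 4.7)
The plan is to prove the statement by induction on the length of the pre-interaction $u$. The base case is immediate: for $u = \varepsilon$ all three restrictions are empty, hence in state $O$, and we are in case \emph{(1)}.

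For the inductive step, suppose $u \in \tau \inter \sigma$ satisfies one of the three cases, and consider an extension $u\,m \in \tau \inter \sigma$. The event $m$ lives in exactly one of the three arenas $A$, $B$, or $C$, and according to the definitions of the restrictions in Figure \ref{fig:restr_pre-int}, this determines which of $u \restrict A,B$, $u \restrict B,C$, $u\restrict A,C$ get extended, namely:
\begin{itemize}
\item if $m$ is in $A$, then $u \restrict A,B$ and $u \restrict A,C$ grow by one event, and $u\restrict B,C$ is unchanged,
\item if $m$ is in $B$, then $u \restrict A,B$ and $u \restrict B,C$ grow by one event, and $u\restrict A,C$ is unchanged,
\item if $m$ is in $C$, then $u \restrict B,C$ and $u \restrict A,C$ grow by one event, and $u\restrict A,B$ is unchanged.
\end{itemize}

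The heart of the proof is the polarity analysis. Since the arrow construction $(-) \lin (-)$ flips polarities on the left argument, one computes that an event in $A$ has matching polarities in $u\restrict A, B$ and in $u\restrict A, C$ (both obtained as one flip of the polarity in $A$), and similarly an event in $C$ has matching polarities in $u\restrict B, C$ and in $u \restrict A, C$ (neither is flipped); but an event in $B$ contributes to $u\restrict A, B$ with its natural $B$-polarity, and to $u\restrict B, C$ with the \emph{opposite} polarity, so these two differ. Using that $u\,m \restrict A,B \in \sigma$, $u\,m\restrict B,C \in \tau$ and $u\,m \restrict A,C \in \Alt(A\lin C)$ are all alternating, the polarity of $m$ in each affected restriction is forced by the current state of that restriction.

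With this in hand, the result reduces to a finite case analysis. From case \emph{(1)} one verifies that a move in $B$ is impossible (it would require an $O$ move in one of $u\restrict A,B$, $u\restrict B,C$ and a $P$ move in the other, which is incompatible), while a move in $A$ transitions to case \emph{(3)} and a move in $C$ transitions to case \emph{(2)}; similarly from \emph{(2)} the only possible extensions land in \emph{(1)} (via $C$) or \emph{(3)} (via $B$), and from \emph{(3)} the possible extensions land in \emph{(1)} (via $A$) or \emph{(2)} (via $B$). This exactly reproduces the transitions of Figure \ref{fig:state-diagram} and closes the induction. The only real delicate point is bookkeeping the polarity flips between the three restrictions correctly; once that is set up, everything is mechanical.
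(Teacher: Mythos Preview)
Your proposal is correct and follows exactly the approach the paper indicates: the paper's own proof is simply ``Standard argument, direct by induction on $u$,'' and you have supplied precisely the details of that induction and the accompanying polarity/case analysis.
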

\begin{proof}
Standard argument, direct by induction on $u$.
\end{proof}

Next, we can prove the key property of the composition of alternating
strategies.

\begin{lem}\label{lem:alt_uniq_wit}
Consider $s \in \tau \odot \sigma$ of even length.

Then, there is a \emph{unique} \textbf{witness} $u \in \tau \inter
\sigma$ such that $u \restrict A, C = s$.
\end{lem}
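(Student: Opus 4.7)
The plan is to prove the statement by induction on the length of $s \in \tau \odot \sigma$. The base case $s = \varepsilon$ is immediate: the only witness is $u = \varepsilon$, since any non-empty witness would have a non-empty projection on $A \lin C$ (as pre-interactions start either in $C$, which is visible, or never — and by validity the first event is negative minimal, hence in $C$).

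For the inductive step, I would decompose $s = s' \cdot a \cdot b$ where $s'$ has even length. By the induction hypothesis there is a unique witness $u'$ for $s'$, and by Lemma \ref{lem:state-alt} we know $u'$ falls in case (1), i.e.\ both $u' \restrict A, B$ and $u' \restrict B, C$ are in state $O$. The move $a$ is negative in $A \lin C$ (since $s'$ is in state $O$), so it is either in $A$ or in $C$. In either case there is a unique way to extend $u'$ by $a$: if $a$ lies in $A$ then it corresponds to an $[\labl,a]_{b_0,c_0}$-event whose contextual tags $b_0, c_0$ are determined by the justifier of $a$ in $s'$ (and the corresponding justifiers tracked through $u'$), and symmetrically if $a$ lies in $C$. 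Receptivity of $\sigma$ (or $\tau$) guarantees that $u' \cdot a \restrict A, B \in \sigma$ (or $u' \cdot a \restrict B, C \in \tau$), so $u' \cdot a$ is a pre-interaction.

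Next I would extend $u' \cdot a$ by a (possibly empty) cascade of internal $B$-moves followed by the positive move $b$. By Lemma \ref{lem:state-alt}, after each internal extension the state is either $(O,P,P)$ or $(P,O,P)$, and in either case exactly one of $\sigma, \tau$ is in state $P$ and so must play the next move. Determinism of that strategy forces the next move uniquely: a positive move in $B$ (a hidden $[\labm, b']_{c'}$-event) or the visible move $b$. Since some witness for $s' \cdot a \cdot b$ exists by hypothesis ($s \in \tau \odot \sigma$), the deterministic cascade must terminate with the visible positive move matching $b$, and again the contextual tags are forced by the justifier structure traced through $u'$. This yields a unique extension of $u'$ to a witness $u$ for $s$.

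The main obstacle will be bookkeeping the contextual tags when extending by $a$ and by $b$, and certifying that the unique move predicted by determinism of $\sigma$ (or $\tau$) is indeed the image through the appropriate restriction of a move in the pre-interaction. This relies on the fact that the justifier of any non-minimal move in $\sigma$ or $\tau$ appearing in a play can be matched, through $u'$, to a unique earlier event in the pre-interaction — a fact that follows from local injectivity of the display maps and the explicit description of restrictions in Figure \ref{fig:restr_pre-int}. Once this is in place, uniqueness of $u$ follows from the uniqueness at each stage of the construction: the uniqueness of $u'$ (by IH), of the extension by $a$ (by receptivity and the forestial structure of the arrow arena), and of the cascade culminating in $b$ (by determinism of $\sigma$ and $\tau$).
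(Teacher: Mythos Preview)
Your proposal is correct and follows essentially the same reasoning as the paper, reorganized as an induction rather than a direct contradiction argument. The paper assumes two distinct witnesses $u_1, u_2$, takes their greatest common prefix $u'$, and uses Lemma~\ref{lem:state-alt} to show the divergence at $u'$ is impossible: in state~(1) the next move is forced by $s$, and in states~(2)/(3) the next move is forced by determinism of $\sigma$ or $\tau$ together with the conflict structure of $(A\lin B)\lin C$. Your inductive construction unfolds exactly the same case analysis, step by step from $u'$ the unique witness for $s'$.

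Two small remarks. First, your appeal to ``local injectivity of the display maps'' is misplaced: there are no display maps in this alternating setting. What actually pins down the extra tag when lifting a $\sigma$-move $n$ to the pre-interaction is the conflict in $(A\lin B)\lin C$ given by Lemma~\ref{lem:conflict_lin}: events $[\labm, b]_c$ and $[\labm, b]_{c'}$ with $c \neq c'$ have the same $\chi_{A\lin B, C}$-image and hence cannot coexist in $\ev{u'}$, forcing the $c$-tag uniquely---this is precisely the argument the paper gives. Second, for the extension by $a^-$ to be uniquely determined you implicitly use that if $a^-$ lies in $A$ then the underlying $a' \in \ev{A}$ is non-minimal (minimal events of the $-$-arena $A$ would be \emph{positive} in $A\lin C$), so its justifier in $A$ appears in $s'$ and traces through $u'$ to fix the $b$-tag. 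This is correct but worth making explicit.
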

\begin{proof}
\emph{Existence} is obvious by definition.

\emph{Uniqueness.} Seeking a contradiction, consider $u_1, u_2 \in \tau
\inter \sigma$ distinct such that $u_1 \restrict A, C = u_2 \restrict A,
C = s$. First, since $s$ has even length, $u_i \restrict A, C$ is in
state $O$, so $u_1$ and $u_2$ must be in state \emph{(1)} of Lemma
\ref{lem:state-alt}. It follows that their immediate prefix cannot be in
state \emph{(1)}, from which it follows last move is visible
(\emph{i.e.} in $A$ or $C$). So, $u_1$ and $u_2$ cannot be comparable
for the prefix order. Therefore, there is $u'$ maximal such that $u'
\prefix u_1$ and $u'\prefix u_2$, say we have $u' m_1 \prefix u_1$ and
$u' m_2 \prefix u_2$ for $m_1, m_2$ distinct.  By Lemma
\ref{lem:state-alt}, $u'$ is in one of the states \emph{(1)}, \emph{(2)}
or \emph{(3)}. If it is in state \emph{(1)}, then the next moves $m_1$
and $m_2$ are in $u \restrict A, C = s$, so they cannot be distinct. Say
$u'$ is in state \emph{(3)} -- the case \emph{(2)} is similar but
simpler. Necessarily, $m_1$ and $m_2$ are in $A$ or $B$, so that 
\[
u' m_1 \restrict A, B = (u' \restrict A, B) n_1 
\qquad
\qquad
u' m_2 \restrict A, B = (u' \restrict A, B) n_2
\]
are two plays of $\sigma$ with even length -- so that $n_1 = n_2$, by
\emph{determinism}. From the definition of restriction, the only case
where $m_1 \neq m_2$ with $n_1 = n_2$ is if $m_1 = [\labl, a]_{b,c}$,
$m_2 = [\labl, a]_{b, c'}$ with $c \neq c'$, so $n_1 = n_2 = [\labl,
a]_b$. Then, as $u'$ satisfies condition \emph{valid}, this
entails that their immediate dependencies $[\labm, b]_c, [\labm, b]_{c'}
\in \ev{u'}$ as well -- impossible since $[\labm, b]_c \conflict [\labm,
b]_{c'}$.
\end{proof}

It is then straightforward to prove that $\tau \odot \sigma$ satisfies
\emph{deterministic}.

\begin{prop}
We have that $\tau \odot \sigma : A \lin C$ is a prestrategy.

Moreover, if $\sigma$ and $\tau$ are strategies, then so is $\tau \odot
\sigma$.
\end{prop}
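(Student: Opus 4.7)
The statement bundles four conditions: non-empty, prefix-closed, deterministic (a prestrategy), plus receptive (a strategy when $\sigma, \tau$ are). The first two are easy bookkeeping and I would dispatch them quickly. Non-emptiness holds because the empty sequence is trivially a pre-interaction whose three restrictions lie in $\sigma$, $\tau$, and $\Alt(A\lin C)$. For prefix-closure, given $s \in \tau \odot \sigma$ witnessed by $u \in \tau \inter \sigma$ and $s' \prefix s$, I would isolate the shortest prefix $u' \prefix u$ with $u' \restrict A, C = s'$ (it exists since adding a prefix of a visible move to $u$ strictly increases $u \restrict A, C$); the three restrictions of $u'$ are prefixes of those of $u$ and hence still lie in $\sigma$, $\tau$, and $\Alt(A\lin C)$.

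The heart of the proof is determinism, and here I would lean on the unique witness lemma (Lemma \ref{lem:alt_uniq_wit}) combined with the state analysis of Lemma \ref{lem:state-alt}. Suppose $s a_1^+, s a_2^+ \in \tau \odot \sigma$. Alternation in $A\lin C$ forces $s$ to end with a negative move (or be empty but then $a_i^+$ violates negativity), so $s$ has odd length and $s a_i$ is even-length. By Lemma \ref{lem:alt_uniq_wit} each $s a_i$ has a unique witness $u_i$, sitting in state~(1). Write $s_0$ for the even-length prefix of $s$ (so $s = s_0$ or $s = s_0\cdot s_n^-$): its unique witness $u_0$ is itself in state (1), and if $s \neq s_0$ then $u_0 \cdot s_n$ is a witness for $s$; again by uniqueness this forms a common prefix of both $u_1$ and $u_2$. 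From that shared prefix we are in state (2) or (3), and each subsequent move in either $u_i$ is a positive move of $\sigma$ or $\tau$ (either an internal $B$-move ping-ponging between states (2) and (3), or the final visible emission), entirely determined by whichever of $\sigma, \tau$ is in state $P$ at that moment. Determinism of $\sigma$ and $\tau$ then forces the whole continuation to coincide, so $u_1 = u_2$ and in particular $a_1 = a_2$.

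Finally, receptivity when $\sigma, \tau$ are strategies is routine. Given $s \in \tau \odot \sigma$ and $s a^- \in \Alt(A \lin C)$, then $s$ has even length so its witness $u$ sits in state (1). The move $a$ lies in $A$ or in $C$; a negative event of $A \lin C$ projects to a negative event of the same polarity in $A\lin B$ (if in $A$) or $B\lin C$ (if in $C$), the other restriction being unchanged. Hence receptivity of $\sigma$ (resp. $\tau$) yields $(u \restrict A, B)\cdot a \in \sigma$ (resp. $(u\restrict B, C)\cdot a \in \tau$), so $u \cdot a \in \tau \inter \sigma$ is a witness for $s a$. The main technical obstacle will be the determinism step: the state analysis is elementary but must be carried out with enough care to verify that the two putative witnesses $u_1, u_2$ synchronize perfectly through the internal $B$-moves despite the three distinct states the interaction may traverse.
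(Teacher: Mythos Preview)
Your approach matches the paper's: non-empty and prefix-closed are immediate, determinism goes through the unique-witness lemma plus the state analysis (the paper phrases it as ``find the diverging point and attribute it to $\sigma$ or $\tau$'', which is exactly your common-prefix argument unwound), and receptivity lifts the incoming negative move into the interaction.

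The one place you are too quick is receptivity for a move in $A$. A negative event $[\labl,a]_c$ of $A\lin C$ does not ``project to'' a move of $A\lin B$: to extend the witness $u$ you must first lift it to $[\labl,a]_{b,c}$ in $(A\lin B)\lin C$, and the index $b\in\min(B)$ has to be recovered from $u$. The paper does this explicitly by tracking the justifier: since $a$ is negative in $A\lin C$ it is positive in $A$, hence non-minimal there (as $A$ is a $-$-arena), so its $A$-predecessor $a'$ appears in $s$ as $[\labl,a']_c$, hence in $u$ as some $[\labl,a']_{b,c}$; that $b$ is the one you need, and only then does $(u\restrict A,B)\,[\labl,a]_b\in\Alt(A\lin B)$ make sense and receptivity of $\sigma$ apply. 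Without this step your ``$u\cdot a\in\tau\inter\sigma$'' is not yet a well-formed pre-interaction. The same sloppiness appears when you write ``$u_0\cdot s_n$'' in the determinism argument, though there it is harmless since you are comparing two existing witnesses rather than constructing one.
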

\begin{proof}
To obtain a prestrategy, it remains that $\tau \odot \sigma$
satisfies \emph{deterministic}. Consider $sn_1^+, sn_2^+ \in
\tau \odot \sigma$. Consider $u_1 m_1, u_2 m_2 \in \tau \inter \sigma$
their unique witness as given by Lemma \ref{lem:alt_uniq_wit}. We
reason as in Lemma \ref{lem:alt_uniq_wit}: if there is a diverging point
between $u_1 m_1$ and $u_2 m_2$, by Lemma \ref{lem:state-alt} the
divergence can be attributed to either $\sigma$ or $\tau$, contradicting
determinism.

Now, assume that $\sigma$ and $\tau$ satisfy condition \emph{receptive}.
Consider $s \in \tau \odot \sigma$ such that $s m^- \in \Alt(A\lin C)$.
More precisely, assume that $m = [(\labl, a)]_c$ as the case in $C$ is
simpler. Consider now $u \in \tau \inter \sigma$ such that $u \restrict
A, C = s$; necessarily $u$ is in state \emph{(1)} of Lemma
\ref{lem:state-alt}. Now, as $m$ is negative its immediate predecessor
of $m$ in $A\lin C$ is some $[(\labl, a')]_c$ in $s$. Since $s = u
\restrict A, C$, it corresponds to some $[(\labl, a')]_{b, c}$ in $u$,
for some $b \in \min(B)$. But then, it is a direct verification that $(u
\restrict A, B) [(\labl, a)]_b \in \Alt(A\lin B)$, so $(u \restrict A, B)
[(\labl, a)]_b \in \sigma$ by \emph{receptive}. Therefore, $u [(\labl,
a)]_{b,c} \in \tau \inter \sigma$ witnessing that $s m \in \tau \odot
\sigma$ as required.  
\end{proof}

Note that this argument and the prior state analysis of Lemma
\ref{lem:state-alt} are also found in the proof of composition of
sequentiality for causal strategies in Section~\ref{subsubsec:comp_seq_caus}.   

\subsubsection{Associativity} We now sketch associativity,
which follows standard lines, see \emph{e.g.} \cite{harmer2004innocent}.
Let us fix $\sigma : A \lin B, \tau : B \lin C$ and $\delta : C \lin D$
three alternating prestrategies. 

The first step is to define a notion of interaction between three
strategies. First:

\begin{defi}
A \textbf{$3$-pre-interaction} on $A, B, C, D$ is $w \in
\ev{((A \lin B) \lin C) \lin D}^*$ s.t.
\[
\begin{array}{rl}
\text{\emph{valid}:}& \forall 1\leq i \leq n,~\{w_1, \dots, w_i\} \in
\conf{(A \lin B) \lin C}\,,
\end{array}
\]
where $w = w_1 \dots w_n$.
\end{defi}

For a $3$-pre-interaction $w$, we define $w \restrict A, B \in \ev{A\lin
B}^*$, $w \restrict B, C \in \ev{B\lin C}^*$, $w \restrict C, D \in
\ev{C \lin D}^*$ and $w \restrict A, D \in \ev{A \lin D}^*$ with the
obvious adaptation of Figure \ref{fig:restr_pre-int}. An
\textbf{interaction} of $\sigma, \tau$ and $\delta$ is a
$3$-pre-interaction $w$ s.t. 
\[
w \restrict A, B \in \sigma\,,
\qquad
w \restrict B, C \in \tau\,,
\qquad
w \restrict C, D \in \delta\,,
\qquad
w \restrict A, D \in \Alt(A \lin D)\,,
\]
written $w \in \delta \inter \tau \inter \sigma$. We have four
additional restrictions
$w \restrict A, B, C \in \tau \inter \sigma$, $w \restrict B, C, D \in
\delta \inter \tau$, $w \restrict A, C, D \in \delta \inter (\tau \odot
\sigma)$ and $w \restrict A, B, D \in (\delta \odot \tau) \inter
\sigma$, defined in the obvious way. 

Then, the key argument of associativity is the so-called ``zipping
lemma'':

\begin{lem}[Zipping]\label{lem:app_alt_zip}
Consider $u \in \delta \inter (\tau \odot \sigma)$ and $v \in \tau
\inter \sigma$ such that $u \restrict A, C = v \restrict A,
C$.

Then, there is a unique $w \in \delta \inter \tau \inter \sigma$ such
that $w \restrict A, C, D = u$ and $w \restrict A, B, C = v$.
\end{lem}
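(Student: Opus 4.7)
I would proceed by simultaneous induction on $|u| + |v|$. The intuition is that $u$ and $v$ together contain all the information needed to reconstruct $w$: $v$ provides the hidden $B$-moves that get erased in forming $\tau \odot \sigma$, while $u$ provides the $D$-moves that $v$ knows nothing about, and the two agree on the visible interface $A, C$ by hypothesis. Building $w$ step by step, the goal is to show that at each point there is a \emph{unique} next move that extends the currently constructed $w' \in \delta \inter \tau \inter \sigma$ so that $w'$ remains compatible with $u$ and $v$.

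The central technical device will be a $4$-way state analysis generalizing Lemma~\ref{lem:state-alt}: for any $w' \in \delta \inter \tau \inter \sigma$, I expect a handful of possible configurations of states for the four $2$-restrictions $w' \restrict A, B$, $w' \restrict B, C$, $w' \restrict C, D$, $w' \restrict A, D$, and the three $3$-restrictions $w' \restrict A, B, C$, $w' \restrict B, C, D$, $w' \restrict A, C, D$, each forcing control to lie with exactly one of $\sigma$, $\tau$, $\delta$, or the external Opponent. Given this classification, one determines the next move of $w$ as follows: if the next move of $u$ lies in $D$, take it (it is invisible to $v$); if the next move of $v$ lies in $B$, take it (it is invisible to $u$); otherwise the next move in $u$ and the next move in $v$ must both be a shared move in $A$ or $C$ (already equal since $u\restrict A,C = v\restrict A,C$), and that is the next move of $w$. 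Each such extension preserves validity in $((A\lin B)\lin C)\lin D$ by direct inspection of justifiers.

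For existence, I would verify that after the extension, the new $w'$ still satisfies $w' \restrict A, B \prefix \sigma$, $w' \restrict B, C \prefix \tau$, $w' \restrict C, D \prefix \delta$, and $w' \restrict A, D \in \Alt(A\lin D)$. The first three follow because $w'\restrict A, B$ must be a prefix of $u \restrict A, B, C \restrict A, B = v \restrict A, B$ (and analogously for the others), and prestrategies are prefix-closed; the fourth follows from the state analysis, which guarantees alternation on $A, D$ whenever a visible move is appended. For uniqueness, I would show by the same state analysis that any candidate $w$ projecting to $u$ on $A, C, D$ and to $v$ on $A, B, C$ is forced move by move, since at each step only one polarity/player can act next given the current state.

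The main obstacle I anticipate is the combinatorial bookkeeping of the state analysis itself: a $3$-interaction has many more possible state configurations than a $2$-interaction, and verifying that precisely one next move is forced in every case requires a careful enumeration. A secondary subtlety is that moves in $A$ carry \emph{two} justifier indices in the $3$-interaction (an initial move of $B$ and an initial move of $C$), so when a shared $A$-move appears next in both $u$ and $v$ one must check that their respective $B$- and $C$-indices have already been played consistently; this should follow from the validity of $u$ and $v$ together with the inductive invariant, but is the place where the proof is most delicate.
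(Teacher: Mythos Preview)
Your proposal is correct and follows essentially the same approach as the paper, whose proof is a one-line sketch (``by induction on $u$ --- by Lemma~\ref{lem:state-alt} the moves in $B$ from $v$ can be interleaved in a unique way; likewise there is a unique way to set their dependency''). Your elaboration via a combined state analysis is the natural way to make this rigorous; note that because the states of $u'$ and $v'$ are linked through $u'\restrict A,C = v'\restrict A,C$, there are in fact only four joint states (one for each of $\sigma$, $\tau$, $\delta$, and the external Opponent holding control), so the bookkeeping is lighter than you anticipate.
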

\begin{proof}
By induction on $u$ -- by Lemma \ref{lem:state-alt} the moves in $B$
from $v$ can be interleaved with those in $v$ in a unique way; likewise
there is a unique way to set their dependency.
\end{proof}

We also have the mirror image, zipping $u \in (\delta
\odot \tau) \inter \sigma$ with $v \in \delta \inter \tau$. Altogether,

\begin{prop}\label{prop:app_alt_assoc}
We have $(\delta \odot \tau) \odot \sigma = \delta \odot (\tau \odot
\sigma)$.
\end{prop}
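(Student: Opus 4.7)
The plan is to prove the two inclusions by symmetric arguments, using the zipping lemma (Lemma~\ref{lem:app_alt_zip}) as the main bridge between binary and ternary interactions. The idea is that any witness for an iterated composition can be lifted to a $3$-interaction on $A, B, C, D$, from which witnesses for the other iteration order can be extracted.

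For the inclusion $\delta \odot (\tau \odot \sigma) \subseteq (\delta \odot \tau) \odot \sigma$, I would start with $s \in \delta \odot (\tau \odot \sigma)$. By definition there is a witness $u \in \delta \inter (\tau \odot \sigma)$ such that $u \restrict A, D = s$. Its restriction $u \restrict A, C$ lies in $\tau \odot \sigma$, so by definition of composition there is $v \in \tau \inter \sigma$ with $v \restrict A, C = u \restrict A, C$. The zipping lemma then produces a unique $w \in \delta \inter \tau \inter \sigma$ with $w \restrict A, C, D = u$ and $w \restrict A, B, C = v$. Then I would verify that $w \restrict A, B, D$ belongs to $(\delta \odot \tau) \inter \sigma$: its restriction to $A, B$ equals $v \restrict A, B = u \restrict A, B \in \sigma$, its restriction to $A, D$ equals $u \restrict A, D = s \in \Alt(A \lin D)$, and its restriction to $B, D$ lies in $\delta \odot \tau$ with witness $w \restrict B, C, D$. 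Thus $s = w \restrict A, D \in (\delta \odot \tau) \odot \sigma$.

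The reverse inclusion is entirely symmetric, using the mirror image of the zipping lemma to combine a witness in $(\delta \odot \tau) \inter \sigma$ with its underlying witness in $\delta \inter \tau$ into a $3$-interaction, from which one extracts a witness in $\delta \inter (\tau \odot \sigma)$.

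The main technical point I anticipate is checking that the various restrictions of the $3$-interaction $w$ land in the appropriate sets, in particular that each intermediate binary restriction is alternating. This should follow directly from the ternary state analysis generalising Lemma~\ref{lem:state-alt}, together with the hypothesis that the ``outer'' restrictions $u \restrict A, D$ (resp.\ the corresponding one on the other side) are already alternating. Once the state analysis is in hand, no new ideas beyond the zipping lemma are required.
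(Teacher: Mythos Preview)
Your proposal is correct and follows essentially the same approach as the paper: take a witness $u$, get the inner witness $v$, zip them via Lemma~\ref{lem:app_alt_zip} into a $3$-interaction $w$, then re-restrict to obtain a witness for the other bracketing. One small slip: you write ``$u \restrict A, B$'', but $u$ lives on $(A \lin C) \lin D$ and has no $B$-component; the correct identity is $(w \restrict A, B, D) \restrict A, B = (w \restrict A, B, C) \restrict A, B = v \restrict A, B \in \sigma$, which is what you clearly intend. Your remark about needing a ternary state analysis to verify alternation of intermediate restrictions (e.g.\ $w \restrict B, D$) is a point the paper leaves implicit.
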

\begin{proof}
Consider $s \in \delta \odot (\tau \odot \sigma)$. It has a (unique)
witness $u \in \delta \inter (\tau \odot \sigma)$. Then, $u \restrict A,
C \in \tau \odot \sigma$, thus there is again $v \in \tau
\inter \sigma$ s.t. $u \restrict A, C = v \restrict A, C$. By Lemma
\ref{lem:app_alt_zip}, there is $w \in \delta \inter \tau \inter \sigma$
s.t. $w \restrict A, C, D = u$ and $w \restrict A, D, C = v$. But
then we may restrict $w$ to $w \restrict B, C, D \in \delta \inter \tau$, so
$w \restrict B, D \in \delta \odot \tau$. Moreover $w \restrict A,
B \in \sigma$ so $w \restrict A, B, D \in (\delta \odot \tau)
\inter \sigma$, from which $w \restrict A, D \in (\delta
\odot \tau) \odot \sigma$.
The other direction is symmetric.
\end{proof}

Note that associativity holds for prestrategies and does not depend on
\emph{receptive}.

\subsubsection{Identities} \label{app:alt_id}
Fix some $-$-arena $A$. For $s \in \Alt(A
\lin A)$, we define its restrictions
\[
\begin{array}{rcl}
\varepsilon \restrict \labl &=& \varepsilon\\
s\,[(\labl, a)]_{a'} \restrict \labl &=& (s \restrict \labl)\,a\\
s\,[(\labr, a)] \restrict \labr &=& s \restrict \labr
\end{array}
\qquad
\qquad
\begin{array}{rcl}
\varepsilon \restrict \labr &=& \varepsilon\\
s\,[(\labl, a)]_{a'} \restrict \labr &=& s \restrict \labr \\
s\,[(\labr, a)] \restrict \labr &=& (s \restrict \labr)\,a
\end{array}
\]
using which we may define the identity for alternating strategies:

\begin{defi}
The \textbf{copycat} $\cc_A : A \lin A$ comprises
all $s \in \Alt(A \lin A)$ s.t.
\[
\begin{array}{rl}
\text{\emph{(1)}}&
\text{for all $s' \prefix s$ of even length, $s' \restrict \labl = s'
\restrict \labr$,}\\
\text{\emph{(2)}}&
\text{for all $[(\labl, a)]_{a'} \in \ev{s}$, with $a \in \min(A)$, $a = a'$.}
\end{array}
\]
\end{defi}

Condition \emph{(2)} means that when playing a minimal event on the left
hand side, copycat justifies it with the same move on the right.
Such a condition is also required in Hyland-Ong games (though sometimes
mistakenly omitted). Copycat strategies provide identities: 

\begin{prop}
The $-$-arenas and alternating strategies form a category $\NegAlt$.
\end{prop}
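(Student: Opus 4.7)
The proposition has three separate obligations: first, that $\cc_A$ really is an alternating strategy for every $-$-arena $A$; second, that $\cc_B \odot \sigma = \sigma = \sigma \odot \cc_A$ for every alternating strategy $\sigma : A \lin B$; and third, associativity, which is already discharged by Proposition \ref{prop:app_alt_assoc}. Since composition has already been shown to preserve the strategy conditions (the paragraph preceding the statement), once these three pieces are in place we immediately obtain the category $\NegAlt$.

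I would first verify that $\cc_A \subseteq \Alt(A\lin A)$ defined above is an alternating strategy. Non-emptiness and prefix-closure are immediate from the definition. For determinism, the key remark is that at any even-length copycat prefix $s' \in \cc_A$, the next Player move after an Opponent move $a^-$ on one side is completely dictated by the copycat equation $s' a \restrict \labl = s' a \restrict \labr$: the player copies $a$ to the opposite side, justifying it either by the inherited arena justifier (which exists in the play by validity) or, when $a$ is minimal on the right, by the paired initial move on the left as required by condition \emph{(2)}. Receptivity holds because any Opponent extension automatically still satisfies the copycat equations and validity. Uniformity (to be established once symmetries are in place in Section \ref{subsec:replication}) will follow because the copycat condition is invariant under reindexing copy indices on both sides simultaneously.

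The core of the proof is the identity law, for which I would produce a length-preserving bijection $\Phi : \sigma \simeq \cc_B \odot \sigma$ given by a pointwise insertion procedure. Given $s \in \sigma$, I construct an interaction $u \in \cc_B \inter \sigma$ as follows: scanning $s$ from left to right, each move of $s$ occurring in $A$ is kept unchanged as a visible move of $u$; each Player move of $s$ in $B$ is immediately followed by its copycat reply in $u$ (a move in $C = B$ on the right of copycat), and each Opponent move of $s$ in $B$ is immediately preceded in $u$ by the copycat trigger that produces it. For minimal moves in $B$ the copycat condition \emph{(2)} forces the matching pointer to agree with the justifier chosen by $\sigma$, so the insertion is unambiguous. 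The state analysis of Lemma \ref{lem:state-alt} guarantees that the resulting $u$ is alternating and well-formed, with $u \restrict A, B = s$, $u \restrict B, C \in \cc_B$, and $u \restrict A, C$ simply $s$ with the $B$-moves deleted. Conversely, using Lemma \ref{lem:alt_uniq_wit} (uniqueness of witnesses) together with the determinism of $\cc_B$, any interaction with $\cc_B$ on the right can only arise from this insertion procedure, which yields the inverse of $\Phi$ and in particular $\cc_B \odot \sigma = \sigma$ on the nose. The symmetric argument, using that $\cc_A$ on the left simply forwards every $A$-move between $\sigma$ and the environment, gives $\sigma \odot \cc_A = \sigma$.

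The step I expect to require the most care is the bookkeeping around non-pointed $B$. In that setting, an event $[(\labl, a)]_b$ of $A \lin B$ with $a$ minimal in $A$ carries a threading pointer $b \in \min(B)$, and when composing $\sigma$ with $\cc_B$ this pointer has to be preserved as the initial copycat move in $B$ passes through the interaction. This is precisely the role of condition \emph{(2)} in the copycat definition; the verification is routine but pedantic, and amounts to checking that the insertion $\Phi$ respects all threading pointers from $\min(B)$. Once that is done, associativity of $\odot$ from Proposition \ref{prop:app_alt_assoc} completes the categorical axioms, yielding $\NegAlt$.
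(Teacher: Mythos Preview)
Your overall plan is correct and is essentially what the paper means by ``elementary'': one direction by constructing the synchronous witness, the other by analysing witnesses via the state lemma. Two points deserve correction, though.

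First, the sentence ``$u \restrict A, C$ simply $s$ with the $B$-moves deleted'' is wrong as stated. The whole point is that $u \restrict A, C$ equals $s$ under the identification $C \cong B$: each $B$-move of $s$ reappears in $u$ paired with its copycat image in $C$, and it is the \emph{middle} $B$-moves that are deleted when restricting to $A, C$, not the $B$-moves of $s$. If you literally deleted the $B$-moves of $s$ you would get only the $A$-moves, and $\cc_B \odot \sigma = \sigma$ would certainly fail. I suspect this is a slip in exposition rather than a misunderstanding, but it should be fixed.

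Second, your converse inclusion $\cc_B \odot \sigma \subseteq \sigma$ has a gap. You invoke Lemma~\ref{lem:alt_uniq_wit}, but that lemma only gives a unique witness for \emph{even-length} plays. For $t \in \cc_B \odot \sigma$ of odd length, the witness $u$ may end with an external Opponent move in $C$ that copycat has not yet forwarded to the middle $B$; in that situation $u \restrict A, B$ is strictly shorter than $t$, and it is \emph{not} true that ``any interaction with $\cc_B$ can only arise from the insertion procedure'' in the form you describe. The fix is standard: handle the even-length case via unique witness as you do, obtaining $t' \in \sigma$ for the even-length prefix $t'$ of $t$, and then conclude $t = t' a^- \in \sigma$ by \emph{receptivity} of $\sigma$. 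This is precisely where the hypothesis that $\sigma$ is a strategy (not merely a prestrategy) is used, matching the paper's earlier remark that composition admits identities only for strategies.
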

\begin{proof}
It remains that for any $\sigma : A \lin B$,
$\sigma = \cc_B \odot \sigma \odot \cc_A$, which is elementary.
\end{proof}

\subsection{Monoidal Closed Structure} We now describe the monoidal structure.

\subsubsection{Tensor product}\label{app:alt_tensor}
On $-$-arenas, we have defined $A\tensor B$ simply as $A\parallel B$.

For strategies, the critical step is a suitable notion of
restriction. More precisely, for $A_1, A_2, B_1, B_2$ $-$-arenas
and $s \in \Alt(A_1 \tensor A_2 \lin B_1 \tensor B_2)$, we give a
partial definition
\[
s \restrict A_1, B_1 \in \ev{A_1 \lin B_1}^*
\qquad
\qquad
s \restrict A_2, B_2 \in \ev{A_2 \lin B_2}^*
\]
in 
\begin{figure}
\[
\scalebox{.8}{$
\begin{array}{rcl}
\varepsilon \restrict A_1, B_1 &=& \varepsilon\\
s\,[(\labl, (1, a))]_{(1, b)} \restrict A_1, B_1 
	&=& (s \restrict A_1, B_1)\,[(\labl, a)]_b\\
s\,[(\labl, (2, a))]_{(i, b)} \restrict A_1, B_1
	&=& s \restrict A_1, B_1\\
s\,[(\labr, (1, b)]] \restrict A_1, B_1
	&=& (s \restrict A_1, B_1)\,[(\labr, b)]\\
s\,[(\labr, (2, b)]] \restrict A_1, B_1 
	&=& s \restrict A_1, B_1
\end{array}$}
\quad
\scalebox{.8}{$
\begin{array}{rcl}
\varepsilon \restrict A_2, B_2 &=& \varepsilon\\
s\,[(\labl, (1, a))]_{(i, b)} \restrict A_1, B_1 
        &=& s \restrict A_2, B_2\\
s\,[(\labl, (2, a))]_{(2, b)} \restrict A_2, B_2
        &=& (s \restrict A_2, B_2)\,[(\labl, a)]_b\\
s\,[(\labr, (1, b)]] \restrict A_2, B_2
        &=& s \restrict A_2, B_2\\
s\,[(\labr, (2, b)]] \restrict A_2, B_2
        &=& (s \restrict A_2, B_2)\,[(\labr, b)]
\end{array}$}
\]
\caption{Partial restrictions for the tensor}
\label{fig:restr_tensor}
\end{figure}
Figure \ref{fig:restr_tensor} -- partial, because \emph{e.g.} $[(\labl,
(1, a))]_{(2, b)} \restrict A_1, B_1$ is left undefined. We then set:

\begin{defi}
Consider $\sigma_1 : A_1 \lin B_1$ and $\sigma_2 : A_2 \lin B_2$
alternating strategies. Then:
\[
\sigma_1 \tensor \sigma_2 = \{s \in \Alt(A_1 \tensor A_2 \lin B_1
\tensor B_2) \mid \forall i \in \{1, 2\},~s \restrict A_i, B_i \in
\sigma_i\}\,,
\]
implying in particular that for each $s \in \sigma_1 \tensor \sigma_2$
and $i \in \{1, 2\}$, $s \restrict A_i, B_i$ is defined.
\end{defi}

By definition, $\sigma_1 \tensor \sigma_2$ satisfies \emph{non-empty}
and \emph{prefix-closed}. As for composition, \emph{determinism}
involves performing a state analysis expressing that at each point,
only one of $\sigma_1$ or $\sigma_2$ has control. We skip the details.
See \emph{e.g.} \cite{harmer2004innocent} for an analogous proof, 
also reflected in the proof in Section~\ref{subsubsec:seq_seely} that
sequential causal strategies are stable under tensor.  

\begin{prop}
Consider $\sigma_1 : A_1 \lin B_1$ and $\sigma_2 : A_2 \lin B_2$
alternating strategies. 

Then, $\sigma_1 \tensor \sigma_2 : A_1 \tensor A_2 \lin B_1 \tensor B_2$
is an alternating strategy. 
\end{prop}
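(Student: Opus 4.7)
The plan is to verify the four conditions of Definition \ref{def:alt_strat}. \emph{Non-empty} is immediate since the empty play restricts to empty plays on both sides, which lie in $\sigma_1, \sigma_2$. \emph{Prefix-closed} follows because restrictions of prefixes are prefixes of restrictions, so if $s \in \sigma_1 \tensor \sigma_2$ and $s' \prefix s$, then for each $i$, $s' \restrict A_i, B_i \prefix s \restrict A_i, B_i \in \sigma_i$, hence $s' \restrict A_i, B_i \in \sigma_i$ by prefix-closure of $\sigma_i$. I should also remark that the restrictions $s \restrict A_i, B_i$ are always defined on alternating plays: this is a routine verification exploiting that $s$ is alternating, so a move $[(\labl, (1, a))]_{(2,b)}$ cannot occur because its immediate arena dependency $[(\labr, (2, b))]$ (if $(1,a)$ is minimal) or the event corresponding to its justifier is incompatible with validity in $A_1 \tensor A_2 \lin B_1 \tensor B_2$.

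For \emph{deterministic}, the key ingredient is a state analysis analogous to Lemma \ref{lem:state-alt}: for any $s \in \sigma_1 \tensor \sigma_2$, the states of $s$, $s \restrict A_1, B_1$ and $s \restrict A_2, B_2$ are always in one of the three configurations $(O,O,O)$, $(P,P,O)$ or $(P,O,P)$, shown by induction on $s$. In particular, when $s$ is in state $O$, exactly one of the two components is in state $P$, and when $s$ is in state $P$ exactly one is also in state $P$ (the other being in $O$). So if $s a_1^+, s a_2^+ \in \sigma_1 \tensor \sigma_2$, then in both extensions the moves $a_1, a_2$ must be played in the \emph{same} tensor component (the unique one currently in state $O$ for the extension). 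Restricting to that component yields two positive extensions of the same play in $\sigma_i$, equal by determinism of $\sigma_i$; and then $a_1 = a_2$ since distinct tensor moves displaying to the same $A_i$ or $B_i$ move are distinguished only by their component tag. This same state analysis is what underlies the corresponding argument in Section \ref{subsubsec:seq_seely}, so the adaptation is routine.

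For \emph{receptive}, consider $s \in \sigma_1 \tensor \sigma_2$ and $s a^- \in \Alt(A_1 \tensor A_2 \lin B_1 \tensor B_2)$. The move $a$ is labelled by one of $[(\labl, (i, a'))]_{(i,b)}$ or $[(\labr, (i, b'))]$, which uniquely assigns it to component $i \in \{1, 2\}$; by construction of the restriction, $s a \restrict A_i, B_i = (s \restrict A_i, B_i)\,a^-_i$ where $a_i$ is the underlying $A_i \lin B_i$ move, while $s a \restrict A_{3-i}, B_{3-i} = s \restrict A_{3-i}, B_{3-i}$. The non-trivial point is that $(s \restrict A_i, B_i)\,a_i \in \Alt(A_i \lin B_i)$, which reduces to checking that $\{(s \restrict A_i, B_i)_1, \dots, a_i\} \in \conf{A_i \lin B_i}$, inherited from validity of $sa$ in the tensor arena, and alternation, inherited from alternation of $s a$ together with the state analysis (showing that $s \restrict A_i, B_i$ ends in a state of opposite polarity to $a_i$). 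Then $(s \restrict A_i, B_i) \, a_i \in \sigma_i$ by receptivity of $\sigma_i$, while the other restriction is unchanged, witnessing $sa \in \sigma_1 \tensor \sigma_2$.

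The main obstacle is the state analysis, which requires a careful induction and a slightly delicate case-by-case examination of the labels of events in $A_1 \tensor A_2 \lin B_1 \tensor B_2$ to confirm which component the current move belongs to and how it updates the state triple. Once this analysis is secured, determinism, receptivity and the well-definedness of restrictions all follow uniformly, in close analogy with Lemma \ref{lem:state-alt} for composition and with the tensor state diagram sketched in Section \ref{subsubsec:seq_seely}.
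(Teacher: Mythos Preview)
Your approach is the same as the paper's: state analysis to pin down which component is ``in control'', from which determinism and receptivity follow. The paper only sketches this and refers to \cite{harmer2004innocent} and Section~\ref{subsubsec:seq_seely}, so you are filling in exactly the intended details.

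That said, there are two slips in your write-up. First, your sentence ``when $s$ is in state $O$, exactly one of the two components is in state $P$'' contradicts your own state list $(O,O,O)$: when $s$ is in state $O$, \emph{both} restrictions are in state $O$. Second, in the determinism argument, the positive move $a_k^+$ must land in the component currently in state $P$ (Player's turn), not state $O$; you have this backwards. Once corrected, the argument goes through: for $sa_1^+, sa_2^+ \in \sigma_1 \tensor \sigma_2$, $s$ is in state $P$, so exactly one restriction is in state $P$, and both $a_1,a_2$ must extend that restriction, whence $a_1 = a_2$ by determinism of the corresponding $\sigma_i$.

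A minor point on restrictions being defined: your justification via alternation is not quite the reason. The problematic moves $[(\labl,(1,a))]_{(2,b)}$ \emph{can} occur in $\Alt(A_1\tensor A_2 \lin B_1\tensor B_2)$. For $s \in \sigma_1\tensor\sigma_2$ they are excluded by definition (restrictions must be defined). For receptivity, a negative such move would require its justifier---a positive $[(\labl,(1,a'))]_{(2,b)}$---to already be in $s$, which is impossible since $s$'s restrictions are defined. So the argument works, just not for the reason you state.
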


Fix $\sigma_1 : A_1 \lin B_1, \sigma_2 : A_2
\lin B_2, \tau_1 : B_1 \lin C_1$ and $\tau_2 : B_2 \lin C_2$. For $w \in
(\tau_1 \tensor \tau_2) \inter (\sigma_1 \tensor \sigma_2)$, we first
define partially $w \restrict A_1, B_1, C_1$ and $w \restrict A_2, B_2,
C_2$ analogously to Figure \ref{fig:restr_tensor} -- it is direct to
prove that $w \restrict A_1, B_1, C_1 \in \tau_1 \inter \sigma_1$ and $w
\restrict A_2, B_2, C_2 \in \tau_2 \inter \sigma_2$.

Functoriality is analogous to associativity in that it relies on a zipping lemma:

\begin{lem}\label{lem:zip_tensor}
Consider $u_1 \in \tau_1 \inter \sigma_1, u_2 \in \tau_2 \inter
\sigma_2$, and $s \in \Alt(A_1 \tensor A_2 \lin C_1 \tensor C_2)$ such
that $s \restrict A_1, C_1 = u \restrict A_1, C_1$ and $s \restrict A_2,
C_2 = u \restrict A_2, C_2$.

Then, there is a unique $w \in (\tau_1 \tensor \tau_2) \inter (\sigma_1
\tensor \sigma_2)$ such that
\[
s = w \restrict A_1 \tensor A_2, C_1 \tensor C_2\,,
\qquad
s\restrict A_1, B_1, C_1 = u_1\,,
\qquad
s \restrict A_2, B_2, C_2 = u_2\,.
\]
\end{lem}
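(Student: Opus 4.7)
The plan is to prove both existence and uniqueness by induction on the length of $s$, building the candidate interaction $w$ one move at a time while maintaining the invariant that the prefix $w' \prefix w$ built so far satisfies $w' \restrict A_1 \tensor A_2, C_1 \tensor C_2 = s'$ for some $s' \prefix s$, $w' \restrict A_1, B_1, C_1 = u_1'$ for some $u_1' \prefix u_1$, and $w' \restrict A_2, B_2, C_2 = u_2' \prefix u_2$. The workhorse will be the state analysis of Lemma \ref{lem:state-alt} applied independently to $u_1$ and $u_2$, which tells us at each stage which of $\sigma_i, \tau_i$, or the external environment is about to act in each component.

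At each inductive step, I consider the state of $u_1'$ and $u_2'$ in their respective interactions. If one of them, say $u_1'$, is in state \emph{(2)} or \emph{(3)} of Lemma \ref{lem:state-alt}, then its next move is positive and lies in $B_1$ (a hidden move for that component); this move must be appended next to $w'$, with tag obtained by applying the embedding $[(\labm, b)]_c \mapsto [(\labm, (1, b))]_{(1, c)}$ (and analogously for the two $B_1$-tags on the $(A, B)$ side). The hypotheses place no constraint on visible moves in component $2$ during this phase, and the state analysis forbids such moves on the $(A_2, C_2)$ side until $u_1'$ returns to state \emph{(1)}. If both $u_1'$ and $u_2'$ are in state \emph{(1)}, then $w' \restrict A_i, B_i, C_i$ is even-length for both $i$, hence by Lemma \ref{lem:state-alt} $w'$ itself is in state $O, O, O, O$; the next move of $s$ is the unique next visible move to append, and it matches the next move of $u_i \restrict A_i, C_i$ for the relevant $i$ by hypothesis, which determines its justifier in $w$ via the component embedding.

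Uniqueness follows from the same dispatch: at every stage, the three restrictions $w \restrict A_1, B_1, C_1$, $w \restrict A_2, B_2, C_2$ and $w \restrict A_1 \tensor A_2, C_1 \tensor C_2$ jointly force exactly one candidate next move, since hidden moves in $B_1$ versus $B_2$ live in disjoint tag ranges and cannot be permuted past each other (they are not comparable in the tensor arena only because they are independent, but courtesy of the restriction definitions forces the hidden $B_i$-moves to appear in $w$ before the next visible move on the $(A_i, C_i)$ side). The main obstacle will be bookkeeping with the nested tags $[(\labm, b)]_c$, $[(\labl, a)]_{b, c}$, etc., and checking that the component-wise justifiers assemble to a well-formed justifier in $((A_1 \tensor A_2) \lin (B_1 \tensor B_2)) \lin (C_1 \tensor C_2)$; I expect this to be essentially mechanical once one sets up the embeddings $A_i \lin B_i \hookrightarrow A_1 \tensor A_2 \lin B_1 \tensor B_2$ and their lifts to three-component pre-interactions, parallel to how Figure \ref{fig:restr_pre-int} was used in Lemma \ref{lem:app_alt_zip}.
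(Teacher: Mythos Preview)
The paper does not actually prove this lemma: it is stated without proof in Appendix~\ref{app:alt_tensor}, presumably because it is analogous to Lemma~\ref{lem:app_alt_zip} (whose proof is a one-line sketch). Your overall strategy---build $w$ by induction, using the state analysis of Lemma~\ref{lem:state-alt} to determine the next move---is exactly the intended one.

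That said, your case analysis contains a concrete error. You write that if $u_1'$ is in state \emph{(2)} or \emph{(3)}, ``its next move is positive and lies in $B_1$''. This is false: in state \emph{(2)}, $\tau_1$ is to move and may play either in $B_1$ (hidden) or in $C_1$ (visible); in state \emph{(3)}, $\sigma_1$ may play in $A_1$ or in $B_1$. You must split on whether the next move of $u_1$ is hidden or visible. If hidden, append it to $w'$ without advancing $s$; if visible, it must coincide with the next move of $s$, and you then advance both.

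The ingredient you are missing is the \emph{tensor} state analysis (alluded to in the paper just before Lemma~\ref{lem:zip_tensor}, and spelled out for causal strategies in Section~\ref{subsubsec:seq_seely}): for any $s \in \Alt(A_1 \tensor A_2 \lin C_1 \tensor C_2)$, at most one of $s \restrict A_1, C_1$ and $s \restrict A_2, C_2$ is in state $P$. Combined with Lemma~\ref{lem:state-alt} applied to each $u_i$, this ensures that $u_1'$ and $u_2'$ are never simultaneously in state \emph{(2)}/\emph{(3)}, so there is never a choice between a hidden $B_1$-move and a hidden $B_2$-move. This is precisely what makes uniqueness work, and it resolves the worry you raise in your uniqueness paragraph more cleanly than the argument you sketch there.
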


\begin{prop}
The construction $\tensor$ extends to a bifunctor
\[
\tensor : \NegAlt \times \NegAlt \to \NegAlt\,.
\]
\end{prop}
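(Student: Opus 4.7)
The plan is to unpack the bifunctoriality into the two usual conditions -- preservation of identities and preservation of composition -- and in each case reduce to the already-established lemmas.

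For identities, I would verify directly that $\cc_{A_1} \tensor \cc_{A_2} = \cc_{A_1 \tensor A_2}$ as sets of plays. On the left, a play $s \in \Alt((A_1 \tensor A_2) \lin (A_1 \tensor A_2))$ belongs to the tensor iff $s \restrict A_i, A_i \in \cc_{A_i}$ for $i \in \{1,2\}$, which by definition of copycat means that, at every even prefix, the projections to each $A_i$ coincide on the two sides, and the justifier-preservation condition holds in each component. Inspecting the definition of restrictions in Figure~\ref{fig:restr_tensor}, these two conditions together are equivalent to the single copycat condition on $A_1 \tensor A_2$, which gives the equality on the nose.

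For composition, fix strategies $\sigma_i : A_i \lin B_i$ and $\tau_i : B_i \lin C_i$ for $i \in \{1,2\}$, and I would show the set-theoretic equality
\[
(\tau_1 \tensor \tau_2) \odot (\sigma_1 \tensor \sigma_2)
=
(\tau_1 \odot \sigma_1) \tensor (\tau_2 \odot \sigma_2)
\]
by two inclusions, imitating the pattern of Proposition~\ref{prop:app_alt_assoc}. For the inclusion $\subseteq$, an $s$ on the left has a witness $w \in (\tau_1 \tensor \tau_2) \inter (\sigma_1 \tensor \sigma_2)$; restricting $w$ to $A_i, B_i, C_i$ yields $u_i \in \tau_i \inter \sigma_i$, so $s \restrict A_i, C_i = u_i \restrict A_i, C_i \in \tau_i \odot \sigma_i$, hence $s$ lies in the tensor of the compositions. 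For $\supseteq$, take $s$ in the right-hand side, so each $s \restrict A_i, C_i$ admits a witness $u_i \in \tau_i \inter \sigma_i$; Lemma~\ref{lem:zip_tensor} produces a unique $w \in (\tau_1 \tensor \tau_2) \inter (\sigma_1 \tensor \sigma_2)$ whose outer restriction recovers $s$, witnessing membership in the composition. Symmetric monoidal structure then follows because all the structural isos are variants of copycat and their naturality reduces to the same zipping argument.

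The main obstacle is really the bookkeeping around the partial restrictions in Figure~\ref{fig:restr_tensor}: one needs to know that whenever $s \restrict A_i, B_i$ is invoked in the course of the proof, it is actually defined, and that the three-level restriction $w \restrict A_i, B_i, C_i$ interacts correctly with $w \restrict A_1 \tensor A_2, B_1 \tensor B_2, C_1 \tensor C_2$. This is exactly the well-formedness content that Lemma~\ref{lem:zip_tensor} packages, and the state analysis underlying it (parallel to Lemma~\ref{lem:state-alt} and the proof of stability of sequentiality under tensor in Section~\ref{subsubsec:seq_seely}) ensures that moves from the two components never get tangled; once this is in hand, both directions of the equality become essentially definitional manipulations.
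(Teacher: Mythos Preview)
Your proposal is correct and follows essentially the same approach as the paper: the paper's proof simply states that preservation of identities is direct and that functoriality follows by Lemma~\ref{lem:zip_tensor}, and your two-inclusion argument using projection for $\subseteq$ and zipping for $\supseteq$ is exactly how that lemma is meant to be deployed. Your additional remarks on the symmetric monoidal structure and on the well-definedness of the partial restrictions go slightly beyond what the proposition itself asks, but they are accurate and anticipate what the paper does next.
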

\begin{proof}
Preservation of identities is direct. Functoriality
follows by Lemma \ref{lem:zip_tensor}
\end{proof}

Next, we complete the symmetric monoidal structure by providing the
structural natural isomorphisms.
We first introduce a few tools useful in giving a clean
definition of such structural isomorphisms, which are variants of the
copycat strategy.
We shall make use of certain morphisms, called \emph{renamings}, to act
on strategies.

\begin{defi}
A \textbf{renaming} from arena $A$ to $B$ is a function $f : \ev{A} \to
\ev{B}$ satisfying:
\[
\begin{array}{rl}
\text{\emph{validity:}} & \forall x \in \conf{A},~f x \in \conf{B}\\
\text{\emph{local injectivity:}} & \forall a_1, a_2 \in x \in
\conf{A},~f a_1 = f a_2 \implies a_1 = a_2\\
\text{\emph{polarity-preserving:}} & \forall a \in \ev{A},~\pol_B(f a) =
\pol_A(a)\\
\text{\emph{symmetry-preserving:}} & \forall \theta \in
\tilde{A} \text{(resp. $\ptilde{A}, \ntilde{A}$)},~f\,\theta \in
\tilde{B},\\
&\text{(resp. $\ptilde{B}, \ntilde{B}$)}\\
\text{\emph{strong-receptivity:}} & 
\text{for all $\theta \in \tilde{A}$, for all $f \theta \subseteq^-
\varphi \in \tilde{B}$,}\\
&\text{$\exists! \theta \subseteq^- \theta' \in
\tilde{A}$ $f \theta'= \varphi$}\\
\text{\emph{courtesy:}} & \forall a_1 \imc_A a_2,~(\pol_A(a_1) = + \vee
\pol_A(a_2) = -) \implies f a_1 \imc_B f a_2\,.
\end{array}
\]

We write $f : A \to B$ to mean that $f$ is a renaming from $A$ to $B$.
\end{defi}

This construction is imported from \cite{cg2}. Then we have:

\begin{defi}
Consider $\sigma : A \lin B$ and renamings $g : B \to B'$,
$f : A^\perp \to {A'}^\perp$. We set
\[
g \cdot \sigma \cdot f = \{g \cdot s \cdot f \mid s \in \sigma \} : A'
\lin B'
\]
where $g \cdot s \cdot f$ acts on $s$ event-wise, sending $[(\labl,
a)]_b$ to $[(\labl, f(a))]_{g(b)}$ and $[(\labr, b)]$ to $[(\labr, g(b))]$.
\end{defi}

It is direct that $g \cdot \sigma \cdot f$ is a strategy. The structural
isomorphisms that we aim to define are obtained by \emph{lifting}
renamings. Indeed if $f : A \to B$ is a renaming, we may define
\[
\overrightarrow{f} = f \cdot \cc_A : A \lin B\,.
\]
a renaming from copycat. Likewise, from $f : B^\perp \to A^\perp$ we set
$\overleftarrow{f} = \cc_B \cdot f : A \lin B$.

The main property satisfied by these constructions is the following
\emph{lifting lemma}.

\begin{lem}\label{lem:lifting}
Consider $\sigma : A \lin B$ a strategy, and $f : A^\perp \to
{A'}^\perp$, $g : B \to B'$ two renamings.
Then, we have $\overrightarrow{g} \odot \sigma \odot \overleftarrow{f} =
g \cdot \sigma \cdot f$.
\end{lem}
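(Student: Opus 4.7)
The plan is to prove both inclusions by analyzing interaction witnesses. The key observation is that copycat acts as an asynchronous forwarder, so in the triple interaction on $(((A' \lin A) \lin B) \lin B')$ between $\overleftarrow{f} = \cc_A \cdot f$, $\sigma$, and $\overrightarrow{g} = g \cdot \cc_B$, both copycat components are forced to ``forward'' each visible move between the outer and inner copies of $A$ (resp.\ $B$), applying the renaming $f$ (resp.\ $g$) on the outer side. Thus the hidden interaction data on $A$ and $B$ exactly mirrors (via $f,g$) what happens on $A'$ and $B'$.

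For the inclusion $g \cdot \sigma \cdot f \subseteq \overrightarrow{g} \odot \sigma \odot \overleftarrow{f}$, starting from $s \in \sigma$, I would construct an interaction witness $w$ on the triple arena by inserting, immediately after each Player move in $s$ on $B$, the corresponding copycat-forwarded move in $B'$ under $g$; dually, before each Opponent move on $B$ in $s$, inserting the matching move on $B'$ that copycat then forwards inwards. The same is done on the $A$-side via $f$. Checking that $w \restrict A',A \in \overleftarrow{f}$, $w \restrict A,B = s \in \sigma$, $w \restrict B,B' \in \overrightarrow{g}$, and $w \restrict A',B' = g \cdot s \cdot f$ reduces to the fact that $f$ and $g$ are polarity-preserving and courteous, so alternation and causal links are maintained.

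For the reverse inclusion, given $t \in \overrightarrow{g} \odot \sigma \odot \overleftarrow{f}$, I would apply Lemma~\ref{lem:app_alt_zip} twice (or use its ternary analogue) to obtain a witness $w$ on $(((A' \lin A) \lin B) \lin B')$ with $w \restrict A', A \in \overleftarrow{f}$, $w \restrict A,B \in \sigma$, and $w \restrict B,B' \in \overrightarrow{g}$. The definition of $\overleftarrow{f}$ as $\cc_A \cdot f$ then forces every $A'$-move in $w$ to be the $f$-image of the matching inner $A$-move (uniqueness granted by local injectivity and strong-receptivity of $f$, together with condition \emph{(2)} in the definition of copycat pinning down justifiers of minimal moves). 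Symmetrically for $g$. Setting $s = w \restrict A,B \in \sigma$, a direct induction on $w$ gives $t = w \restrict A',B' = g \cdot s \cdot f$.

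The main obstacle is managing the chronological interleaving inside $w$: copycat may delay each forwarding across arbitrarily many intermediate moves, so one cannot hope that $w$ is the straightforward ``synchronised'' interleaving of $s$ with two copies of copycat. The state analysis of Lemma~\ref{lem:state-alt} (applied to each of the two binary compositions) shows that at any point exactly one of the three agents has control, which lets the construction above proceed deterministically. Once this bookkeeping is in place, the equality reduces to the fact that renamed copycat is still a categorical identity up to the renaming, so composing with it simply amounts to acting by the renaming on $\sigma$.
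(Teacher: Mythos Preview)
Your proposal is correct and follows essentially the same approach as the paper. The paper's own proof is the single sentence ``A direct adaptation of the neutrality of copycat under composition,'' and what you have written is exactly that adaptation spelled out: witness construction for one inclusion, zipping plus the forced forwarding structure of (renamed) copycat for the other.
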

\begin{proof}
A direct adaptation of the neutrality of copycat under composition.
\end{proof}

Before we put these to use to construct the symmetric monoidal
structure, we deduce a few properties of lifted strategies. In the
statement below, for any renaming $f : A \to B$ which is additionally an
isomorphism, then we write $f^\perp : B^\perp \to A^\perp$ for its
inverse with polarities reversed -- it is immediate that it still
satisfies the conditions of a renaming.

\begin{prop}\label{prop:misc_lift}
Lifting is a functor from the category of renamings to $\NegAlt$.

Moreover, if $f : A \to B$ is an iso,
$\overrightarrow{f}$ is an iso; and we have $\overrightarrow{f}^{-1} =
\overrightarrow{f^{-1}}$ and $\overrightarrow{f} =
\overleftarrow{f^\perp}$.
\end{prop}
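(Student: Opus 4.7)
My plan is to prove the three assertions in turn, leveraging Lemma~\ref{lem:lifting} as the main engine.

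For functoriality, I would first check preservation of identities: by definition $\overrightarrow{\id_A} = \id_A \cdot \cc_A$, but $\id_A$ acts trivially event-wise, so $\overrightarrow{\id_A} = \cc_A$, which is the identity on $A$ in $\NegAlt$. For preservation of composition, given $f : A \to B$ and $g : B \to C$, I apply Lemma~\ref{lem:lifting} with $\sigma = \overrightarrow{f} : A \lin B$, right renaming $g : B \to C$, and left renaming $\id_{A^\perp}$. Using $\overleftarrow{\id_{A^\perp}} = \cc_A$ together with neutrality of copycat (Proposition~\ref{prop:cc_neutral}), this yields
\[
\overrightarrow{g} \odot \overrightarrow{f} \;=\; \overrightarrow{g} \odot \overrightarrow{f} \odot \overleftarrow{\id_{A^\perp}} \;=\; g \cdot \overrightarrow{f} \cdot \id_{A^\perp} \;=\; g \cdot (f \cdot \cc_A) \;=\; (g \circ f) \cdot \cc_A \;=\; \overrightarrow{g \circ f},
\]
where the fourth equality follows by event-wise inspection of the action of renamings.

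For the second part, assume $f : A \to B$ is an isomorphism of renamings, so that $f^{-1} : B \to A$ is also a renaming. By the functoriality just established,
\[
\overrightarrow{f} \odot \overrightarrow{f^{-1}} \;=\; \overrightarrow{f \circ f^{-1}} \;=\; \overrightarrow{\id_B} \;=\; \cc_B,
\qquad
\overrightarrow{f^{-1}} \odot \overrightarrow{f} \;=\; \overrightarrow{\id_A} \;=\; \cc_A,
\]
so $\overrightarrow{f}$ is an iso in $\NegAlt$ with inverse $\overrightarrow{f^{-1}}$.

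For the last equality $\overrightarrow{f} = \overleftarrow{f^\perp}$, I would unfold definitions. The strategy $\overrightarrow{f} = f \cdot \cc_A : A \lin B$ consists of plays of $\cc_A$ in which every move on the right-hand $A$ is relabeled via $f$; symmetrically, $\overleftarrow{f^\perp} = \cc_B \cdot f^\perp$ consists of plays of $\cc_B$ in which every move on the left-hand $B$ is relabeled via $f^{-1}$. Since $f$ is an iso, both constructions produce exactly the set of alternating plays on $A \lin B$ that behave as asynchronous forwarders along the bijection $f$ between events of $A$ and $B$ (with polarities flipped on the left). A routine event-wise check, using the defining conditions of copycat plays and the fact that $f$ preserves immediate causality, minimal events, and symmetry, shows that the two descriptions coincide.

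The only step requiring care is the composition case of functoriality, where one must verify that $g \cdot (f \cdot \cc_A)$ literally equals $(g \circ f) \cdot \cc_A$; this is however immediate from the event-wise definition of the renaming action, since the right-hand events of $\cc_A$ are simply relabeled in two successive passes.
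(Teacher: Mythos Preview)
Your proof is correct and follows exactly the approach the paper indicates (``By Lemma~\ref{lem:lifting} and direct verifications''), supplying the details the paper omits. One minor point: you cite Proposition~\ref{prop:cc_neutral} for neutrality of copycat, but that proposition is about causal strategies in $\CG$; the relevant fact for $\NegAlt$ is established separately in Appendix~\ref{app:alt_id}.
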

\begin{proof}
By Lemma \ref{lem:lifting} and direct verifications.
\end{proof}

With this, we may now define the structural isomorphisms
for the symmetric monoidal structure. We notice that for all arenas $A,
B, C$, there are invertible renamings:
\[
\begin{array}{rcrcl}
\rho_A &:& A \tensor 1 &\to& A\\
\lambda_A &:& 1 \tensor 1 &\to& A\\
\alpha_{A, B, C} &:& (A\tensor B) \tensor C &\to& A \tensor (B \tensor
C)\\
s_{A, B} &:& A \tensor B &\to& B \tensor A
\end{array}
\]
where $1$ is the empty arena, satisfying the coherence laws of a
symmetric monoidal category. The required structural isomorphisms are
simply obtained by lifting those as $\overrightarrow{\rho_A},
\overrightarrow{\lambda_A}, \overrightarrow{\alpha_{A, B, C}}$ and
$\overrightarrow{s_{A, B}}$. By Proposition \ref{prop:misc_lift}, the
required coherence diagrams are still satisfied.

\begin{prop}
The category $\NegAlt$ is a symmetric monoidal category.
\end{prop}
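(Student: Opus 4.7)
The plan is to exploit the infrastructure already built up, notably the lifting functor from renamings and the bifunctoriality of $\otimes$, so that the verification of the symmetric monoidal laws reduces to verifying the corresponding identities on the underlying renamings of arenas, which is elementary.

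First, I would verify that the four candidate structural morphisms $\overrightarrow{\rho_A}$, $\overrightarrow{\lambda_A}$, $\overrightarrow{\alpha_{A,B,C}}$ and $\overrightarrow{s_{A,B}}$ are indeed isomorphisms in $\NegAlt$. This is immediate from Proposition \ref{prop:misc_lift}: each underlying renaming $\rho_A$, $\lambda_A$, $\alpha_{A,B,C}$, $s_{A,B}$ is an iso of arenas (by direct verification, since they are just relabelings of events by the structural isos of disjoint unions), and lifting preserves isos with $\overrightarrow{f}^{-1} = \overrightarrow{f^{-1}}$.

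Next, naturality. For associativity, given $\sigma_i : A_i \lin B_i$ for $i = 1, 2, 3$, I must check that
\[
\overrightarrow{\alpha_{B_1, B_2, B_3}} \odot ((\sigma_1 \tensor \sigma_2) \tensor \sigma_3) = (\sigma_1 \tensor (\sigma_2 \tensor \sigma_3)) \odot \overrightarrow{\alpha_{A_1, A_2, A_3}}\,.
\]
By Lemma \ref{lem:lifting}, composition with a lifted renaming amounts to event-wise relabeling of plays. Thus both sides of the equation are obtained from plays of $\sigma_1 \tensor \sigma_2 \tensor \sigma_3$ (defined unambiguously via iterated restrictions) by applying the same arena-level bijection. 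The same argument handles naturality of $\overrightarrow{\rho_A}, \overrightarrow{\lambda_A}$ and $\overrightarrow{s_{A,B}}$.

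Finally, the coherence diagrams (pentagon, triangle, hexagon, and $\overrightarrow{s_{B,A}} \odot \overrightarrow{s_{A,B}} = \cc_{A \tensor B}$). Here functoriality of lifting from Proposition \ref{prop:misc_lift} is the key: each coherence diagram is built entirely out of lifted structural morphisms and tensors of identities, so by functoriality of $\overrightarrow{(-)}$ and bifunctoriality of $\tensor$, the diagrams commute in $\NegAlt$ iff the corresponding diagrams of renamings commute. The renaming-level diagrams are exactly the coherence diagrams of a symmetric monoidal category for the tagged disjoint union on $-$-arenas, which hold by direct verification of the definitions of $\rho, \lambda, \alpha, s$. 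The main (and essentially only) obstacle is a bit of bookkeeping to make sure that naturality squares are actually expressible as lifted diagrams; for naturality with arbitrary $\sigma_i$, rather than just renamings, one falls back on the restriction-based definition of $\tensor$ and the characterization in Lemma \ref{lem:lifting}, making the verification a routine induction on plays.
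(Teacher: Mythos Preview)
Your proposal is correct and follows essentially the same approach as the paper: the paper's proof is the single sentence ``It remains to prove naturality, easy from Proposition~\ref{prop:misc_lift} and direct verifications'', the coherence diagrams and iso property having already been dispatched in the text preceding the proposition via the same lifting argument you give. Your expansion of the naturality check via Lemma~\ref{lem:lifting} (reducing composition with lifted renamings to event-wise relabeling, then comparing the two relabelings of the iterated tensor) is exactly the ``direct verification'' the paper gestures at.
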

\begin{proof}
It remains to prove naturality, easy from Proposition
\ref{prop:misc_lift} and direct verifications. 
\end{proof}

\subsubsection{Cartesian products} $\NegAlt$ has cartesian products,
given by $\with$ on objects.

To perform the \textbf{pairing} of $\sigma_1 : A \lin B_1$ and $\sigma_2
: A \lin B_2$, we first build prestrategies
\[
\inj_1(\sigma_1) : A \lin B_1 \with B_2\,,
\qquad
\qquad
\inj_2(\sigma_2) : A \lin B_1 \with B_2
\]
defined by applying event-wise $\inj_i([(\labr, b)]) =
[(\labr, (i, b))]$, $\inj_i([(\labl, a)]_{b}) = [(\labl, a)]_{(i, b)}$.
We obtain $\tuple{\sigma_1, \sigma_2} = \inj_1(\sigma_1) \uplus
\inj_2(\sigma_2)$. It is direct that this yields a bijection:
\[
\tuple{-, -} : \NegAlt(A, B_1) \times \NegAlt(A, B_2) \to \NegAlt(A, B_1
\with B_2)
\]

The \textbf{projections} are $(\pi_1, \pi_2) = \tuple{-,
-}^{-1}(\id_{B_1 \with B_2})$, and we can verify
\[
\pi_1 \odot \tuple{\sigma_1, \sigma_2} = \sigma_1\,,
\qquad
\qquad
\pi_2 \odot \tuple{\sigma_1, \sigma_2} = \sigma_2
\]
which is enough to complete the cartesian structure of $\NegAlt$.

\subsubsection{Monoidal closure} First, for $-$-arenas $A, B$ and $C$,
there is a clear isomorphism
\[
(A \tensor B) \lin C \quad \iso \quad A \lin (B \lin C)
\]
which, applied event-wise, yields $\Lambda(-) : \NegAlt(A \tensor B, C)
\simeq \NegAlt(A, B \lin C)$. Then,
\[
\evm_{A, B} = \Lambda^{-1}(\id_{A\lin B}) : (A \lin B) \tensor A \lin B
\]
is the \textbf{evaluation} strategy. It is a verification akin to the
neutrality of copycat that for all $\sigma : A \lin (B
\lin C)$, we have $\evm_{A, C} \odot (\sigma \tensor B) =
\Lambda^{-1}(\sigma)$. 

\begin{prop}
$\NegAlt$ is a cartesian symmetric monoidal closed category.
\end{prop}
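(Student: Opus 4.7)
The plan is to verify monoidal closure by showing that the bijection $\Lambda_{A,B,C} : \NegAlt(A \tensor B, C) \simeq \NegAlt(A, B \lin C)$ constructed just above extends to a natural isomorphism, and that together with the previously established evaluation strategy $\evm_{B, C}$ it satisfies the universal property of the internal hom. Since we already have the symmetric monoidal structure and cartesian products from the preceding propositions, the only outstanding obligation is the monoidal closed part.

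First I would verify the universal property. Given $\sigma : A \tensor B \lin C$, we must show that $\evm_{B, C} \odot (\Lambda(\sigma) \tensor B) = \sigma$, and that $\Lambda(\sigma)$ is the unique such strategy. The equation is noted just before the statement in the form $\evm_{A, C} \odot (\sigma \tensor B) = \Lambda^{-1}(\sigma)$; this follows from a direct adaptation of the neutrality of copycat (Section \ref{app:alt_id}), since $\evm_{B, C}$ is essentially a copycat strategy up to relabeling of initial moves, and tensoring with $B$ and composing with it amounts to performing the same forwarding. This is a routine calculation I would carry out by unfolding the interaction and observing that the moves in $B \lin C$ on the middle component are forwarded pointwise. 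Uniqueness follows because $\Lambda$ is a bijection: any $\tau$ with $\evm_{B, C} \odot (\tau \tensor B) = \sigma$ must satisfy $\tau = \Lambda(\sigma)$ by applying $\Lambda$ on both sides and using $\Lambda(\evm_{B, C} \odot (\tau \tensor B)) = \tau$, which itself follows from the same calculation.

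Next I would establish naturality of $\Lambda_{A, B, C}$ in $A$ and $C$, \emph{i.e.} for $f : A' \lin A$ and $g : C \lin C'$, the square
\[
\Lambda_{A', B, C'}(g \odot \sigma \odot (f \tensor B)) = (g \lin B) \odot \Lambda_{A, B, C}(\sigma) \odot f\,,
\]
where $g \lin B$ is obtained from $g$ through the bijection. This is a diagram chase, established by applying $\Lambda^{-1}$ to both sides and reducing to associativity of composition and functoriality of $\tensor$ (Appendix \ref{app:alt_tensor}), both of which are already proved. Naturality in $B$ follows from the same reasoning. Together with the universal property this yields an adjunction $(- \tensor B) \dashv (B \lin -)$, which is exactly monoidal closure. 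The main potential obstacle is bookkeeping: since morphisms from $A$ to $B$ are strategies on $A \lin B$, and our notion of $\lin$ has the subtlety of duplicating $A$ for each minimal event of $B$, unfolding $\Lambda$ and the evaluation must be done carefully to make sure the event-wise relabeling is handled correctly. However no new conceptual difficulty arises: once this bookkeeping is made precise, everything reduces to neutrality of copycat, associativity, and functoriality of tensor, all established earlier.
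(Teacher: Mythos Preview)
Your proposal is correct and follows essentially the same approach as the paper. The paper gives no explicit proof for this proposition, simply stating it after having set up the bijection $\Lambda$, the evaluation $\evm$, and the key equation $\evm_{A,C} \odot (\sigma \tensor B) = \Lambda^{-1}(\sigma)$ (noted as ``a verification akin to the neutrality of copycat''); your write-up is a faithful expansion of exactly these ingredients into the standard verification of the adjunction.
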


\subsection{Symmetry}\label{app:alt_sym}
We now develop the structure pertaining to
symmetry.

\subsubsection{Basic structure}\label{app:basic_alt_sym}
The extension of the construction above with
symmetry and uniformity unfolds essentially as in AJM games \cite{ajm}.
We describe the main steps.

First of all, we ensure all structural morphisms are uniform.
This is the purpose of:

\begin{lem}
For $A$ and $B$ $-$-arenas and $f : A \to B$ a renaming, 
$\overrightarrow{f} \simstrat \overrightarrow{f}$.
\end{lem}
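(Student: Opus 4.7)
The plan is to verify the four conditions of Definition \ref{def:simstrat} for the per $\simstrat$, taking $\sigma = \tau = \overrightarrow{f}$. The starting point is Lemma \ref{lem:lifting}, which gives the concrete description $\overrightarrow{f} = f \cdot \cc_A$: a play $s \in \overrightarrow{f}$ is an alternating copycat play on $A \lin A$ whose right-hand events have been relabeled through $f$. I would first spell this out explicitly, noting in particular that $\overrightarrow{f}$ is deterministic and already satisfies \emph{receptive} (being a strategy).

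For the two \emph{receptive} clauses of $\simstrat$ (say $\to$-\emph{receptive}; the other is symmetric), suppose $sa^- \in \overrightarrow{f}$, $t \in \overrightarrow{f}$ and $sa^- \sym_{A \lin B} tb^-$. Since symmetries of $A\lin B$ are order-isomorphisms preserving polarities, $tb^- \in \Alt(A \lin B)$, so $tb^- \in \overrightarrow{f}$ by \emph{receptive} for the strategy itself. This requires nothing beyond the definition of a renaming.

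For the two \emph{simulation} clauses, the interesting content lives. Consider $\to$-simulation: take $sa^+ \in \overrightarrow{f}$ and $t \in \overrightarrow{f}$ with $s \sym_{A\lin B} t$. Determinism of $\overrightarrow{f}$ tells us that $a^+$ is the unique copycat response to the last Opponent move of $s$: it duplicates that move across the arrow, applying $f$ when the move crosses from $A$ to $B$. The symmetry $s \sym_{A\lin B} t$ identifies the last Opponent move of $t$ with that of $s$ up to the componentwise symmetries on $A$ and $B$. I would then produce $b^+$ as the copycat response to the last Opponent move of $t$, using receptivity of $\overrightarrow{f}$ where needed, and show $sa \sym_{A\lin B} tb$ by decomposing symmetries on $A\lin B$ into a pair of symmetries on $A$ and $B$ (via $\chi_{A,B}$, cf.\ Section~\ref{subsubsec:constr_arenas}) and invoking the \emph{symmetry-preserving} clause of the definition of a renaming. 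The case $\ot$-simulation is entirely symmetric.

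The main obstacle I anticipate is the bookkeeping in the simulation case when the Player move crosses from $A^\perp$ to $B$, so its identity is computed by applying $f$ to the justifier of the last Opponent move. Matching this against the analogous construction on $t$ up to symmetry is exactly where the \emph{symmetry-preserving} and \emph{strong-receptivity} properties of the renaming $f$ come in: they guarantee that $f$ transports symmetry classes on $A$ to symmetry classes on $B$ and that the induced assignment on Opponent-extended symmetries is functional, which is what is needed to promote the componentwise symmetry of the input to a genuine symmetry $sa \sym_{A\lin B} tb$ of the output play. Aside from this coherence check, all remaining verifications are routine restrictions to the copycat skeleton.
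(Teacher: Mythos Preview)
Your proposal is correct and takes essentially the same approach as the paper, which simply states that the result is ``straightforward by \emph{symmetry-preserving} and \emph{strong-receptivity} of renamings'' --- you have unpacked exactly those two properties in the simulation clauses, and your use of ordinary receptivity for the $\to,\ot$-\emph{receptive} clauses is fine since $\overrightarrow{f}$ is already known to be a strategy. One small inaccuracy: $\overrightarrow{f} = f \cdot \cc_A$ is the \emph{definition}, not a consequence of Lemma~\ref{lem:lifting}; but this does not affect the argument.
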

\begin{proof}
Straightforward by \emph{symmetry-preserving} and
\emph{strong-receptivity} of renamings. 
\end{proof}

Next, we show operations on strategies are compatible with $\simstrat$.
The delicate case is composition. Fix $A, B$ and $C$ $-$-arenas, and
write $I = (A \lin B) \lin C$. We shall give to events of $(A \lin B) \lin
C$ a \emph{polarity} corresponding to their role in an interaction: a
move $m$ is \emph{negative} if it is in $A$ or $C$ and is negative for
$A \lin C$, and has \emph{polarity $p$} otherwise.

The main tool is the following lifting of Definition \ref{def:simstrat}
to interactions:

\begin{defi}
Consider $\bsigma, \bsigma' : A \lin B$ and $\btau, \btau' : B \lin C$.
We write $\btau \inter \bsigma \simstrat \btau' \inter \bsigma'$ iff
\[
\begin{array}{rl}
\text{$\to$-\emph{simulation:}} & \forall u m^p \in \tau \inter
\sigma,~v \in \tau' \inter \sigma',~u \sym_I v \implies \exists n^p,~v n
\in \tau'\inter \sigma'~\&~u m \sym_I v n\\
\text{$\ot$-\emph{simulation:}} & \forall u \in \tau \inter \sigma,~v
n^p \in \tau' \inter \sigma',~u \sym_I v \implies \exists m,~u m \in
\tau \inter \sigma~\&~u m \sym_I v n\\
\text{$\to$-\emph{receptive:}} & \forall u m^- \in \tau \inter \sigma,~v \in
\tau' \inter \sigma',~u m^-\,\sym_I\,v n^- \implies v n^- \in \tau'
\inter \sigma'\\ 
\text{$\ot$-\emph{receptive:}} & \forall u \in \tau \inter \sigma,~v n^- \in
\tau'\inter \sigma',~u m^-\,\sym_I\,v n^- \implies u m^- \in \tau \inter
\sigma
\end{array}
\]
\end{defi}

\begin{lem}\label{lem:simstrat_inter}
Consider 
$\bsigma, \bsigma' : A \lin B$ and $\btau, \btau' : B \lin C$ such that
$\bsigma \simstrat \bsigma'$ and $\btau \simstrat \btau'$.

Then, $\btau \inter \bsigma \simstrat \btau' \inter \bsigma'$.
\end{lem}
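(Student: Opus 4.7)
The plan is to prove each of the four conditions of $\simstrat$ for interactions by induction on the length of $u$ (or symmetrically $v$), using Lemma \ref{lem:state-alt} to route each extension through the appropriate uniformity property of $\bsigma \simstrat \bsigma'$ or $\btau \simstrat \btau'$.

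The preparatory observation is that symmetries on interactions propagate to symmetries on restrictions: if $u \sym_I v$, then $u \restrict A, B \sym_{A \lin B} v \restrict A, B$ and $u \restrict B, C \sym_{B \lin C} v \restrict B, C$. This holds because the symmetries on $A \lin B$, $B \lin C$ and $I = (A \lin B) \lin C$ are all inherited from those on the underlying games $A, B, C$, and the restriction operation in Figure \ref{fig:restr_pre-int} acts componentwise on events, preserving justifiers. Moreover, since $u \sym_I v$ forces equal length, Lemma \ref{lem:state-alt} places $u$ and $v$ in the \emph{same} state among \emph{(1)}, \emph{(2)}, \emph{(3)}, so the two restrictions have matching parities.

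For $\to$-simulation, consider $u m^p \in \tau \inter \sigma$ and $v \in \tau' \inter \sigma'$ with $u \sym_I v$. I split on the location of $m$. If $m$ is visible (in $A$ or $C$), then by Lemma \ref{lem:state-alt} exactly one of $\sigma, \tau$ is responsible for $m$; applying $\to$-simulation of that strategy (using $\bsigma \simstrat \bsigma'$ or $\btau \simstrat \btau'$) to the corresponding restriction of $v$ yields a symmetric extension, which lifts uniquely back to an extension $v n \in \tau' \inter \sigma'$ with $u m \sym_I v n$. If $m$ is in $B$, it is positive for one strategy and negative for the other; I apply $\to$-simulation on the "positive" side to produce a candidate move $n$, then $\to$-receptivity on the "negative" side to confirm $n$ extends $v$ in the other component as well. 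Because both extensions project to the \emph{same} event in $B$ (up to symmetry, as forced by the simulation producing symmetric events), these zip into a single well-defined interaction event, much as in Lemma \ref{lem:alt_uniq_wit}. The case $\ot$-simulation is entirely dual, and $\to$-receptive and $\ot$-receptive are handled analogously by invoking the corresponding receptivity condition of the playing strategy and lifting.

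The main obstacle is the case where $m$ lies in $B$: here one must reconcile the move proposed by the $\to$-simulation of one side with the move extension forced by the $\to$-receptivity of the other side, and check that these agree so that together they form a single legitimate event of the interaction $\tau' \inter \sigma'$. This hinges on two facts: symmetries in $A, B, C$ are order-isomorphisms preserving immediate causality (so justifiers transport coherently), and the uniqueness of extensions in strategies guarantees that the two independent completions of $v$ produce matching events in $B$. Once this zipping step is in place, the inductive argument is routine and the remaining three conditions follow by essentially the same case analysis.
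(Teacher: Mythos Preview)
Your proposal is correct and follows essentially the same approach as the paper: for the $\to$-simulation clause with $m$ in $B$, apply $\to$-simulation of the strategy for which $m$ is positive to obtain the candidate $b'$, then use $\to$-receptivity of the other strategy to confirm the extension, with the justifier $c'$ recovered from the existing symmetry $u \sym_I v$. Your discussion of the $B$ case slightly over-complicates the ``reconciliation'': there is nothing to reconcile, since $\to$-receptivity does not independently propose a move but merely confirms that the move produced by simulation (together with its symmetric justifier) is accepted by the other side; the reference to Lemma~\ref{lem:alt_uniq_wit} and to ``uniqueness of extensions'' is not needed here.
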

\begin{proof}
\emph{$\to$-simulation.} Consider $u m^p \in \tau \inter \sigma$ and $v
\in \tau' \inter \sigma'$, s.t. $u \sym_v v$. Necessarily, $m$ is
positive for $\sigma$ or $\tau$, \emph{w.l.o.g.} say $\sigma$.
Again, we distinguish cases whether $m$ is in $A$ or
$B$. We consider $B$, which is the most interesting case -- so that
$m = [(\labm, b)]_c$ for some $c \in \min(C)$.

We project $u [(\labm, b)]_c \restrict A, B = (u \restrict A, B) [(\labr,
b)]^+ \in \sigma$, $v \restrict A, B \in \sigma'$ with $u \restrict A, B
\sym_{A\lin B} v \restrict A, B$. By \emph{$\to$-simulation} for
$\sigma$, there is $[(\labr, b')]$ s.t. $(v \restrict A, B) [(\labr,
b')] \in \sigma'$ and 
\begin{eqnarray}
(u \restrict A, B) [(\labr, b)] \sym_{A\lin B} (v \restrict A, B) [(\labr,
b')]\,.\label{eq:app}
\end{eqnarray}

However, we also have $(u \restrict B, C) [(\labl, b)]_c^- \in \tau$, $v
\restrict B, C \in \tau'$. We also have $u \restrict B, C \sym_{B \lin
C} v \restrict B, C$. Necessarily, $(\labr, c) \in \ev{u \restrict B, C}$
and there is a symmetric $(\labr, c') \in \ev{v \restrict B, C}$. Then,
from \eqref{eq:app} and the definition of symmetry on $A \lin B$ and $B
\lin C$,  
\[
(u \restrict B, C) [(\labl, b)]_c \sym_{B \lin C} (v \restrict B, C)
[(\labl, b')]_{c'}
\]
so by receptivity, $(v \restrict B, C) [(\labl, b')]_{c'} \in \tau'$. It
follows that $v [(\labm, b')]_{c'} \in \tau' \inter \sigma'$ with 
$u [(\labm, b)]_c \sym_{I} v [(\labm, b')]_{c'}$ as required. The
condition $\ot$-simulation is symmetric.

Finally, \emph{$\to$-receptive} and \emph{$\ot$-receptive} are similar
but simpler.
\end{proof}

Finally, we may deduce compatibility of $\simstrat$ with composition:

\begin{cor}
Consider 
$\bsigma, \bsigma' : A \lin B$ and $\btau, \btau' : B \lin C$ such that
$\bsigma \simstrat \bsigma'$ and $\btau \simstrat \btau'$.

Then, $\btau \odot \bsigma \simstrat \btau' \odot \bsigma'$.
\end{cor}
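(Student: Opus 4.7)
The plan is to reduce the compatibility of $\simstrat$ with composition to its compatibility with interaction, namely Lemma~\ref{lem:simstrat_inter}. To do so, we must lift the hypothesis $s \sym_{A\lin C} t$ between observable plays in $\tau\odot\sigma$ and $\tau'\odot\sigma'$ to a matching symmetry $u \sym_I v$ between their respective witnesses provided by Lemma~\ref{lem:alt_uniq_wit}; then invoke Lemma~\ref{lem:simstrat_inter}, and finally project the extension back to $A\lin C$.

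More precisely, to establish $\to$-simulation, take $sa^+ \in \tau\odot\sigma$ and $t \in \tau'\odot\sigma'$ with $s \sym_{A\lin C} t$. First I would extract the unique witness $u' \in \tau\inter\sigma$ with $u' \restrict A, C = sa$ (Lemma~\ref{lem:alt_uniq_wit}); by the state analysis of Lemma~\ref{lem:state-alt}, $u'$ ends with a visible P-move, so we may write $u' = u m$ with $m$ displaying to $a$. Dually, let $v \in \tau'\inter\sigma'$ be the unique witness with $v \restrict A, C = t$. The main step is then to prove, by induction on $u$, that $u \sym_I v$. Given the inductive hypothesis on a prefix, the state analysis of Lemma~\ref{lem:state-alt} tells us which agent ($\sigma$ or $\tau$) is responsible for the next move: for visible moves the symmetry is given by $s \sym_{A\lin C} t$, while for invisible (B-)moves one alternately uses $\to$-simulation or $\to$-receptive from $\sigma \simstrat \sigma'$ or $\tau \simstrat \tau'$ to extract the matching copy index on the hidden side, exactly as in the proof of Lemma~\ref{lem:simstrat_inter}. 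Finally, Lemma~\ref{lem:simstrat_inter} applied to $u \sym_I v$ and the positive extension $u m^+$ yields a positive $n$ with $vn \in \tau'\inter\sigma'$ and $u m \sym_I v n$, which restricts to $tb \in \tau' \odot \sigma'$ with $sa \sym_{A\lin C} tb$, where $b$ is the display of $n$ in $A\lin C$.

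The condition $\ot$-simulation is symmetric. For $\to$-receptive, given $sa^- \in \tau\odot\sigma$, $t \in \tau'\odot\sigma'$ and $sa \sym_{A\lin C} tb$ with $b$ negative, the witness for $sa$ is obtained via receptivity directly from the witness for $s$ (no hidden moves are added), and similarly on the primed side; then the same inductive lifting plus $\to$-receptive from Lemma~\ref{lem:simstrat_inter} produces the required membership. The condition $\ot$-receptive is symmetric.

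The main obstacle is clearly the inductive lifting of visible symmetry to full interaction symmetry: although Lemma~\ref{lem:simstrat_inter} provides the local bisimulation game on interactions, one must show that \emph{the} witnesses produced by the deterministic projection $u \mapsto u \restrict A, C$ are themselves related by $\sym_I$. The delicate point is that the witness $v$ for $t$ is fixed in advance (by uniqueness in Lemma~\ref{lem:alt_uniq_wit}), whereas the bisimulation game a priori only guarantees the \emph{existence} of some matching $v'$. To reconcile these, one invokes at each hidden step the determinism and strong (symmetry-)receptivity of $\sigma', \tau'$ to conclude that the $v'$ produced by simulation must coincide with the prefix of $v$ already fixed, so that the induction propagates on the actual witness rather than on some symmetric copy of it.
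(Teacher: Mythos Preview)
Your strategy is right and matches what the paper has in mind: the corollary follows by running the bisimulation of Lemma~\ref{lem:simstrat_inter} on interactions and projecting, with Lemma~\ref{lem:alt_uniq_wit} supplying the determinism needed to land on the given $t$. But there is a small slip in the setup. You invoke ``the unique witness $v$ with $v\restrict A,C = t$'', whereas Lemma~\ref{lem:alt_uniq_wit} only guarantees uniqueness for \emph{even}-length restrictions, and $t$ (like $s$) has odd length. If $v$ is taken to be the minimal such witness (stopping right after $t$'s last Opponent move), it will typically be strictly shorter than your $u$, which already contains the hidden $B$-moves leading up to $a$; so the statement $u \sym_I v$ is ill-posed for two fixed sequences of different lengths. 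Your last paragraph senses a tension but does not resolve this mismatch: at those extra hidden steps there is no ``prefix of $v$ already fixed'' to reconcile with.

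The clean fix is not to pin $v$ down in advance. Run the bisimulation of Lemma~\ref{lem:simstrat_inter} directly along the \emph{full} witness $u'$ for $sa$: at each negative step choose the corresponding move of $t$ (legitimate since $s \sym_{A\lin C} t$, then apply $\to$-receptive), and at each $p$-polarity step use $\to$-simulation. This yields $v' \in \tau'\inter\sigma'$ with $u' \sym_I v'$. Now $v'\restrict A,C \in \tau'\odot\sigma'$ agrees with $t$ on Opponent moves by construction, hence also on Player moves by determinism of $\tau'\odot\sigma'$---and \emph{this} is precisely where Lemma~\ref{lem:alt_uniq_wit} enters. Since $|v'\restrict A,C| = |sa| = |t|+1$, we get $v'\restrict A,C = tb$ with $sa \sym_{A\lin C} tb$, which is the required $\to$-simulation clause; the remaining clauses are handled analogously.
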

\begin{proof}
Direct from Lemmas \ref{lem:alt_uniq_wit} and \ref{lem:simstrat_inter}. 
\end{proof}

Lemma \ref{lem:alt_uniq_wit} is crucial: this fails if we do
not have a unique witness. This is the main reason why this approach to
uniformity does not extend to $\NNegAlt$.
Other operations on strategies are easily seen to be compatible with
$\simstrat$. Therefore, considering $\NegAlt$ as having as only
morphisms the uniform strategies (\emph{i.e.} self-equivalent
for $\simstrat$), it is equipped with an additional equivalence relation
$\simstrat$ with respect to which all operations are compatible.

\subsection{Seely category} We now provide the last ingredients to the
Seely category.

First, we need a functor $\oc : \NegAlt \to \NegAlt$. The
construction is similar to the tensor and defined by a suitable
restriction,
\begin{figure}
\[
\begin{array}{rclcl}
\varepsilon \restrict \grey{i} &=& \varepsilon\\
s\,[(\labl, (\grey{i}, a))]_{(\grey{i}, b)} \restrict \grey{i}
	&=& (s \restrict \grey{i})\,[(\labl, a)]_b\\
s\,[(\labl, (\grey{j}, a))]_{(\grey{k}, b)} \restrict \grey{i} 
	&=& s \restrict \grey{i} && (\grey{i} \neq \grey{j})\\
s\,[(\labr, (\grey{i}, b))] \restrict \grey{i}
	&=& (s \restrict \grey{i})\,[(\labr, b)]\\
s\,[(\labr, (\grey{j}, b))] \restrict \grey{i}
	&=& s \restrict \grey{i}&&(i\neq j)
\end{array}
\]
\caption{Partial restrictions for $\oc$}
\label{fig:rest_oc}
\end{figure}
given in Figure \ref{fig:rest_oc}. Armed with this, we set:

\begin{defi}
Consider $\sigma : A \lin B$ an alternating strategy. Then, we set:
\[
\oc \sigma = \{s \in \Alt(\oc A \lin \oc B \mid \forall \grey{i} \in
\mathbb{N},~s \restrict \grey{i} \in \sigma\}\,,
\]
implying in particular that for each $s \in \oc \sigma$ and $\grey{i}
\in \mathbb{N}$, $s \restrict \grey{i}$ is defined.
\end{defi}

This is really an infinitary tensor of $\sigma$. That this
yields $\oc \sigma : \oc A \lin \oc B$ an alternating
strategy, along with functoriality, are as for the tensor. We
skip the details. 

\begin{prop}
The construction $\oc$ extends to a functor
$\oc : \NegAlt \to \NegAlt$.
\end{prop}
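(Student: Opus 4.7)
The plan is to mirror step-by-step the construction used for the tensor product (Section~\ref{app:alt_tensor}), adapted to the exponential's infinitary restrictions of Figure~\ref{fig:rest_oc}. The argument breaks into three parts: well-definedness of $\oc \sigma$ as a strategy, preservation of identity, and preservation of composition; uniformity then follows by the same technique as in Section~\ref{app:basic_alt_sym}.

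First I would verify that $\oc \sigma : \oc A \lin \oc B$ is an alternating strategy. \emph{Non-empty} and \emph{prefix-closed} are immediate because restriction commutes with prefix and sends $\varepsilon$ to $\varepsilon$. For \emph{determinism} and \emph{receptive}, the key tool is an analogue of Lemma~\ref{lem:state-alt} for $\oc \sigma$: at any prefix $s \in \oc \sigma$, exactly one copy index $\grey{i}$ is ``in state P'' while all others are in state O, so any Player extension must occur in the unique active copy. Determinism then reduces to determinism of $\sigma$ on $s \restrict \grey{i}$. Receptivity is easy: a negative extension concerns a unique $\grey{i}$, is well-formed in $\sigma$ by pulling back through $\restrict \grey{i}$, and yields a well-formed move in $\oc A \lin \oc B$.

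The preservation of identities $\oc \cc_A = \cc_{\oc A}$ is a direct verification from the definitions of copycat and of the restrictions $\restrict \grey{i}$: both sides consist of alternating plays on $\oc A \lin \oc A$ whose even-length prefixes satisfy $s \restrict \labl = s \restrict \labr$, and reindexing initial moves correctly per copy index. The main work is preservation of composition, for which the key lemma will be an exponential zipping result analogous to Lemma~\ref{lem:zip_tensor}: given $(u_{\grey{i}})_{\grey{i} \in \mathbb{N}}$ with $u_{\grey{i}} \in \tau \inter \sigma$ and $s \in \Alt(\oc A \lin \oc C)$ such that $s \restrict \grey{i} = u_{\grey{i}} \restrict A, C$ for every $\grey{i}$, there is a unique $w \in (\oc \tau) \inter (\oc \sigma)$ with $w \restrict \oc A, \oc C = s$ and $w \restrict \grey{i} = u_{\grey{i}}$ for every $\grey{i}$. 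The inclusion $\oc(\tau \odot \sigma) \subseteq \oc\tau \odot \oc \sigma$ follows by extracting, from $s \in \oc (\tau \odot \sigma)$, witnesses $u_{\grey{i}}$ of each $s \restrict \grey{i} \in \tau \odot \sigma$ (using Lemma~\ref{lem:alt_uniq_wit}) and zipping. Conversely, any witness $w \in (\oc \tau) \inter (\oc \sigma)$ restricts to a witness in $\tau \inter \sigma$ at every copy index, giving $s \restrict \grey{i} \in \tau \odot \sigma$ for all $\grey{i}$.

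The main obstacle I anticipate is the bookkeeping in the zipping lemma: unlike the tensor, the exponential has countably many copies to interleave, and one must take care that only finitely many copies are active in any finite prefix, so that the inductive construction of $w$ at each length is well-posed. This is where the state analysis (one active copy at a time) pays off: it guarantees that each new event in $s$ commits to exactly one copy index, uniquely determining how to extend the aggregated interaction. Finally, uniformity and compatibility with $\simstrat$ follow by rerunning the argument of Lemma~\ref{lem:simstrat_inter} copy-by-copy, exploiting that symmetries of $\oc A$ (Definition~\ref{def:bang}) act componentwise up to a permutation $\pi$ of copy indices.
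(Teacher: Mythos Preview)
Your proposal is correct and matches the paper's approach: the paper simply remarks that $\oc\sigma$ is ``really an infinitary tensor of $\sigma$'' and that both well-definedness and functoriality ``are as for the tensor,'' with details skipped. You have accurately reconstructed what those details are, including the state analysis (one active copy at a time) and the infinitary zipping lemma analogous to Lemma~\ref{lem:zip_tensor}; the observation that only finitely many copies are active in any finite prefix is exactly the right point to make the induction go through.
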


The Seely structure also includes structural
morphisms, all defined through lifting. We define renamings
\begin{figure}
\begin{minipage}{.45\linewidth}
\[
\begin{array}{rcrcl}
\eta_A &:& A^\perp &\to &(\oc A)^\perp\\
&& a &\mapsto& (\grey{i}, a)\\\\\\
\mu_A &:& (\oc \oc A)^\perp &\to& (\oc A)^\perp\\
&&(\grey{i}, (\grey{j}, a))&\mapsto& (\grey{\tuple{i,j}}, a)\\\\
\end{array}
\]
\caption{Comonad renamings}
\label{fig:ren_com}
\end{minipage}
\hfill
\begin{minipage}{.5\linewidth}
\[
\begin{array}{rcrcl}
\mathsf{see}^\to_{A,B} && \oc A \tensor \oc B &\to& \oc (A \with B)\\
&& (1, (\grey{i}, a)) &\mapsto& (\grey{2i}, (1, a))\\
&& (2, (\grey{i}, b)) &\mapsto& (\grey{2i+1}, (2, a))\\\\
\mathsf{see}^\ot_{A, B} && \oc (A \with B) &\to & \oc A \tensor \oc B\\
&&(\grey{i}, (1, a)) &\mapsto& (1, (\grey{i}, a))\\
&&(\grey{i}, (2, b)) &\mapsto& (2, (\grey{i}, b))
\end{array}
\]
\caption{Seely renamings}
\label{fig:ren_seely}
\end{minipage}
\hfill
\end{figure}
in Figures \ref{fig:ren_com} and \ref{fig:ren_seely} (with $\tuple{-,
-} : \mathbb{N} \times \mathbb{N} \to \mathbb{N}$ any bijection), 
set
\[
\begin{array}{rcrcl}
\der_A = \overleftarrow{\eta_A} &:& \oc A &\lin& A\\
\dig_A = \overleftarrow{\mu_A} &:& \oc A & \lin & \oc \oc A\\
\mon_{A, B} = \overrightarrow{\mathsf{see}_{A, B}^\to} &:& \oc A \tensor
\oc B &\lin & \oc (A\with B)\\
\mon_{A, B}^{-1} = \overrightarrow{\mathsf{see}_{A, B}^\ot} &:& \oc (A
\with B) &\lin& \oc A \tensor \oc B\,,
\end{array}
\]
and all coherence and naturality properties follow from Lemma
\ref{lem:lifting} with direct verifications. A
number of those only hold up to $\simstrat$: for instance, $\mon_{A, B}$
and $\mon^{-1}_{A,B}$ are inverses up to $\simstrat$, but not up to
equality. Finally, $\mon^0 : \ees \lin \oc \ees$ is the empty
strategy. Altogether:

\begin{prop}
The category $\NegAlt$ is a Seely category.
\end{prop}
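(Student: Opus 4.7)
The plan is to exploit the lifting technology already set up: every structural morphism in sight is of the form $\overrightarrow{f}$ or $\overleftarrow{f}$ for some renaming $f$, so by Lemma~\ref{lem:lifting} and Proposition~\ref{prop:misc_lift} the categorical equations between composites of structural morphisms can be reduced to analogous equations between composites of renamings. Since renamings are just functions on events satisfying a small list of axioms, such equations are verified by straightforward event-wise computations. Note also that because composition and lifting are only compatible with $\simstrat$ up to the equivalence $\simstrat$ (copycat being neutral only up to $\simstrat$), all the Seely laws will hold up to $\simstrat$, which is the intended setting.

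Concretely, I would proceed in four steps. First, I would show that $(\oc, \bder, \bdig)$ is a comonad up to $\simstrat$: the three equations $\bder_{\oc A} \odot \bdig_A \simstrat \id_{\oc A}$, $\oc \bder_A \odot \bdig_A \simstrat \id_{\oc A}$, and $\bdig_{\oc A} \odot \bdig_A \simstrat \oc \bdig_A \odot \bdig_A$ reduce via Lemma~\ref{lem:lifting} (and the fact that $\oc \overleftarrow{f}$ may itself be written as a lift, by inspecting the definition of $\oc$ on morphisms) to the corresponding equations $\eta_A \circ \mu_A = \id$, $(\oc \eta_A) \circ \mu_A = \id$, and $\mu_{\oc A} \circ \mu_A = (\oc \mu_A) \circ \mu_A$ up to the choice of pairing $\tuple{-,-} : \mathbb{N}^2 \to \mathbb{N}$; all three are trivial. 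Naturality of $\bder$ and $\bdig$ follows from Proposition~\ref{prop:misc_lift} applied to the evident square of renamings.

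Second, I would show that $\bmon_{A,B}$ is an isomorphism: $\bmon_{A,B} \odot \bmon^{-1}_{A,B} \simstrat \id$ and conversely, which reduces to the fact that $\mathsf{see}^{\to}_{A,B}$ and $\mathsf{see}^{\ot}_{A,B}$ are inverse renamings. Naturality of $\bmon$ is again a diagram of renamings. Third, the Seely coherence laws---namely compatibility of $\bmon$ with associators, unitors, and symmetry of $\tensor$ and with projections $\bpi_i$ of $\with$, together with the coherence between $\bmon$ and $\bdig$ saying that $\oc(A \with B)$ inherits the comonoid structure from $\oc A \tensor \oc B$---all reduce to coherence diagrams between renamings of the forms in Figures \ref{fig:ren_com} and \ref{fig:ren_seely}, verified by direct computation on copy indices.

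The main obstacle, as usual in this kind of proof, will be the coherence between $\bdig$ and $\bmon$: the equation $\bmon_{A,B}^\dagger \odot (\bmon_{A,B} \odot \bdig_{A\with B})$ vs. $\bdig_A \tensor \bdig_B$ relates two $\oc$-nested reindexings and requires some bookkeeping on the pairing $\tuple{-,-}$ and the see-renamings. However, since both sides are lifts of renamings with the same action on events up to a bijection $\mathbb{N}\times\mathbb{N} \iso \mathbb{N}$, they agree up to a positive symmetry in $\oc\oc (A\with B)$, which is precisely what $\simstrat$ allows. Once this is settled, the remaining diagrams are routine. For full details the reader may consult the analogous construction in AJM games \cite{ajm}, to which our setting is directly parallel.
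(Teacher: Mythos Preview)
Your approach is essentially the same as the paper's: define the structural morphisms $\der_A, \dig_A, \mon_{A,B}, \mon_{A,B}^{-1}$ as lifts of explicit renamings, and reduce all coherence and naturality laws to equations between renamings via Lemma~\ref{lem:lifting} and Proposition~\ref{prop:misc_lift}, with the laws holding only up to $\simstrat$. The paper's proof is in fact terser than yours --- it simply says ``all coherence and naturality properties follow from Lemma~\ref{lem:lifting} with direct verifications'' --- so your more explicit breakdown into comonad laws, Seely iso, and the $\dig$/$\mon$ coherence is a reasonable expansion.

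Two small remarks. First, you write that $\mathsf{see}^{\to}_{A,B}$ and $\mathsf{see}^{\ot}_{A,B}$ are ``inverse renamings'': they are not literal inverses as functions (composing them reindexes $\grey{i}$ to $\grey{2i}$ or similar), which is precisely why $\mon_{A,B}$ and $\mon_{A,B}^{-1}$ are only inverse up to $\simstrat$ --- the paper makes this explicit. You clearly understand this given your later discussion of positive symmetries, but the phrasing should be adjusted. Second, a notational nit: in $\NegAlt$ the paper writes $\der, \dig, \mon$ (reserving the bold $\bder, \bdig, \bmon$ for the causal strategies of Section~\ref{sec:cg}).
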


\section{Construction of Non-Alternating Strategies}
\label{app:nalt}
Next, we construct the category $\NNegAlt$ of $-$-arenas and
non-alternating strategies. 

\subsection{Basic Categorical Structure} We first study composition and
copycat. For composition, the development is essentially a simpler
version of $\NegAlt$ as there is no determinism requirement. On the
other hand, copycat requires slightly more care.

\subsubsection{Composition} Consider fixed $A, B$ and $C$ three
$-$-arenas, and $\sigma : A \lin B$, $\tau : B \lin C$ non-alternating
prestrategies. Re-using pre-interactions and restrictions from before, we set:

\begin{defi}
An \textbf{interaction} $u \in \tau
\inter \sigma$, is a pre-interaction $u$ such that
\[
u \restrict A, B \in \sigma\,,
\qquad
u \restrict B, C \in \tau\,,
\]
and we automatically have $u \restrict A, C \in \NAlt(A\lin C)$.

The \textbf{composition} of $\sigma$ and $\tau$ is $\tau \odot \sigma =
\{u \restrict A, C \mid u \in \tau \inter \sigma\}$.
\end{defi}

It is clear this yields a non-alternating prestrategy, which only
requires \emph{non-empty} and \emph{prefix-closed}.
In composing non-alternating (pre)strategies, there is no analogue of
Lemma \ref{lem:state-alt}: all players can move anytime. The unique
witness property is lost: $s \in \tau \odot \sigma$ if there is
a witness $u \in \tau \inter \sigma$ such that $s = u \restrict A, C$,
but $u$ is in general not unique\footnote{Causal strategies in
$\CG$, with their explicit branching information, recover a unique witness
property.}.

That composition preserves strategies will follow from Propositions
\ref{prop:app_nalt_assoc} and \ref{prop:app_nalt_strat_char}. 

\subsubsection{Associativity} Fix $\sigma : A\lin B$, $\tau : B \lin C$
and $\delta : C \lin D$ non-alternating prestrategies. As above,
an \textbf{interaction} of $\sigma, \tau$ and $\delta$ is a
$3$-pre-interaction $w \in \ev{((A\lin B) \lin C) \lin D}^*$ such that
\[
w \restrict A, B \in \sigma\,,
\qquad
w \restrict B, C \in \tau\,,
\qquad
w \restrict C, D \in \delta\,,
\]
from which it follows automatically that $w \restrict A, D \in \NAlt(A\lin D)$.

Again, associativity relies on a ``zipping lemma''. 

\begin{lem}[Zipping]\label{lem:app_nalt_zip}
Consider $u \in \delta \inter (\tau \odot \sigma)$ and $v \in \tau
\inter \sigma$ such that $u \restrict A, C = v \restrict A,
C$.

Then, there is $w \in \delta \inter \tau
\inter \sigma$ such that $w \restrict A, C, D = u$ and $w \restrict A,
B, C = v$. 
\end{lem}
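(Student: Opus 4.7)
The plan is to construct $w$ by an explicit inductive interleaving of $u$ and $v$. I maintain indices $0 \leq i \leq n$ and $0 \leq j \leq p$, with $u = u_1 \cdots u_n$ and $v = v_1 \cdots v_p$, together with a partial $3$-pre-interaction $w'$, preserving the invariants $w' \restrict A, C, D = u_1 \cdots u_i$ and $w' \restrict A, B, C = v_1 \cdots v_j$. Initially $i = j = 0$ and $w' = \varepsilon$; the target is to reach $i = n$ and $j = p$. At each step I append a single event to $w'$, determined by inspection of $u_{i+1}$ and $v_{j+1}$: if $u_{i+1}$ lies in $D$, append the corresponding $D$-event to $w'$ and increment $i$; else if $v_{j+1}$ lies in $B$, append the corresponding $B$-event to $w'$ and increment $j$; otherwise both $u_{i+1}$ and $v_{j+1}$ lie in $A$ or $C$, in which case (using $u \restrict A, C = v \restrict A, C$ together with the two invariants) they label the same visible event, which lifts uniquely to an event of $((A \lin B) \lin C) \lin D$ that I append, incrementing both $i$ and $j$.

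The procedure always makes progress, since each step increments either $i$ or $j$, and it must terminate with $i = n$ and $j = p$: should $i < n$ but the next $u_{i+1}$ be visible while $j = p$, the invariant would force $u_{i+1}$ to be a visible event lying strictly beyond $v \restrict A, C$, contradicting $u \restrict A, C = v \restrict A, C$; a symmetric argument rules out the other corner case. It remains to check that each addition keeps $w'$ a valid pre-interaction, i.e.\ that $\ev{w'}$ after each step is a configuration of $((A \lin B) \lin C) \lin D$. This reduces to verifying that the dependencies in the arena of the newly added event are already present in $\ev{w'}$, which follows from the validity of $u$ (for $D$-events, and for the $C, D$-part of visible events) and of $v$ (for $B$-events, and for the $A, B, C$-part of visible events), via the two invariants.

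The main obstacle I expect is the synchronized case: when both $u_{i+1}$ and $v_{j+1}$ are visible, one must argue that they really correspond to the same event of $((A \lin B) \lin C) \lin D$, including agreement on justifiers threading through $B$, $C$ and $D$. The invariants guarantee that $w'$ restricts to identical prefixes on the $A, C$-part as viewed from either side, so the justifying $C$-events (and hence the $B$-threading recorded in $v$, and the $D$-threading recorded in $u$) are already in $w'$ with matching labels. Consequently, there is a unique well-typed lift, which makes the synchronization step well-defined and concludes the construction; the two required restriction identities then hold by design of the invariants maintained throughout.
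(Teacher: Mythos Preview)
Your construction is correct and is essentially a concrete instantiation of the paper's approach: the paper merely remarks that $u$ and $v$ together impose causal constraints on the events of $w$, and that $w$ may be taken to be any linearisation respecting those constraints, while you exhibit one such linearisation via an explicit greedy interleaving (prioritising $D$-moves from $u$, then $B$-moves from $v$, then synchronising on visible moves). The care you take with the synchronised case---checking that the $c$-components agree so that the $b$ and $d$ threading combine into a unique lift whose justifier is already present in $w'$---is exactly the verification the paper elides.
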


Again, $w$ is not unique. The interactions $u$ and $v$ impose causal
constraints on $\ev{w}$, and $w$ may be chosen as any ternary
interaction respecting those. As in the alternating case there is a
mirror lemma, and associativity follows as in Proposition
\ref{prop:app_alt_assoc}. 

\begin{prop}\label{prop:app_nalt_assoc}
We have $(\delta \odot \tau) \odot \sigma = \delta \odot (\tau \odot
\sigma)$.
\end{prop}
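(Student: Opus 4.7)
The plan is to mirror the proof of Proposition \ref{prop:app_alt_assoc} in the alternating case, the main difference being that in the non-alternating setting we no longer have the state analysis of Lemma \ref{lem:state-alt} nor unique witnesses; but we also no longer need them, since associativity is an \emph{equality of sets} of plays rather than anything finer. The only real ingredients are therefore the zipping lemma (Lemma \ref{lem:app_nalt_zip}) and its mirror image, together with the routine combinatorics of pre-interactions and restrictions.

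For the inclusion $\delta \odot (\tau \odot \sigma) \subseteq (\delta \odot \tau) \odot \sigma$: given $s \in \delta \odot (\tau \odot \sigma)$, take a witness $u \in \delta \inter (\tau \odot \sigma)$ with $u \restrict A, D = s$. Since $u \restrict A, C \in \tau \odot \sigma$, there is a further witness $v \in \tau \inter \sigma$ with $v \restrict A, C = u \restrict A, C$. Apply Lemma \ref{lem:app_nalt_zip} to obtain $w \in \delta \inter \tau \inter \sigma$ such that $w \restrict A, C, D = u$ and $w \restrict A, B, C = v$. Then the restrictions behave as expected: $w \restrict B, C, D \in \delta \inter \tau$ since $w \restrict B, C = v \restrict B, C \in \tau$ and $w \restrict C, D = u \restrict C, D \in \delta$, so $w \restrict B, D \in \delta \odot \tau$. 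Together with $w \restrict A, B = v \restrict A, B \in \sigma$, this gives $w \restrict A, B, D \in (\delta \odot \tau) \inter \sigma$, and thus $s = w \restrict A, D \in (\delta \odot \tau) \odot \sigma$.

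The reverse inclusion is symmetric, invoking the mirror version of the zipping lemma (zipping a witness in $(\delta \odot \tau) \inter \sigma$ with a witness in $\delta \inter \tau$ into a ternary interaction).

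The only subtle point, and the one I would flag as the main (mild) obstacle, is book-keeping: one must check carefully that the various $2$-ary restrictions of a $3$-pre-interaction $w$ agree with the corresponding restrictions of the $2$-pre-interactions $u, v$ under composition of restrictions (\emph{e.g.} $(w \restrict A, C, D) \restrict A, C = w \restrict A, C$, and similarly for all other combinations), and that no \emph{validity} condition fails along the way. These are all direct inductions on the sequence, exactly as in Figure \ref{fig:restr_pre-int}, and impose no difficulty beyond diligent case analysis. Crucially, unlike in the alternating case, we never need to argue that $w$ is unique, only that it exists --- which is precisely what the (non-unique) zipping lemma delivers.
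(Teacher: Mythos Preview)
Your proposal is correct and follows essentially the same approach as the paper: the paper's proof simply states that associativity follows as in Proposition \ref{prop:app_alt_assoc}, via the (non-unique) zipping lemma and its mirror, which is exactly the argument you give. Your remark that uniqueness of $w$ is not needed matches the paper's own observation.
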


Associativity works for prestrategies, \emph{i.e.} it does not
rely on \emph{receptive} and \emph{courteous}.

\subsubsection{Copycat} Non-alternating strategies are intended to
model \emph{asynchronous} concurrency -- accordingly, the
non-alternating copycat, still written $\cc_A : A \lin A$, is an
asynchronous forwarder. 
We first describe the configurations on $A \vdash A$
reached by copycat: those are $x \parallel y$ for $x, y \in
\conf{A}$, such that $y \scott x$, where $\scott$ is the
``Scott order'' \cite{DBLP:conf/fossacs/Winskel13} defined as
\[
y \scott x \quad \Leftrightarrow \quad y \supseteq^- x \cap y
\subseteq^+ x
\]
with polarities taken on $A$. Whenever $\cc_A$ receives a negative
event, it forwards it to the other side, but asynchronously: $y$ may
contain negative events not yet forwarded to $x$, and $x$ may contain
positive events (\emph{i.e.} negative for $A \vdash A$) not yet
forwarded to $y$. 

But copycat plays on $A \lin A$, not on $A \vdash A$. Remember from
Section~\ref{subsubsec:gen_arr} that there is
\[
\chi_{A, B} : (A \lin B) \to (A\vdash B)
\]
satisfying the axioms of a map of event structures. We may set:

\begin{defi}\label{def:asy_cc}
For $A$ any $-$-arena, we set $\cc_A$ to comprise all $s \in
\NAlt(A \lin A)$ s.t.:
\[
\begin{array}{rl}
\text{\emph{balanced:}} & 
	\text{for all $1\leq i \leq n$, writing $\chi_{A, A}\,\ev{s_1
\dots s_n} = x \parallel y$, we have $y \scott x$,}\\
\text{\emph{well-linked:}} &
	\text{for all $[(\labl, a)]_{a'}$, we have $a = a'$,}
\end{array}
\]
where $s = s_1\dots s_n$.
\end{defi}

It is direct that copycat is \emph{receptive} and
\emph{courteous}. In fact, it turns out that \emph{receptive} and
\emph{courteous} are \emph{exactly} the required conditions for copycat
to be neutral for composition:

\begin{prop}\label{prop:app_nalt_strat_char}
Consider $A$ and $B$ two $-$-arenas, and $\sigma : A \lin B$ a
prestrategy.

Then, $\sigma$ is a strategy if and only if $\cc_B \odot \sigma \odot
\cc_A = \sigma$.
\end{prop}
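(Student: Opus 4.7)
The plan is to prove two things: (a) for any prestrategy $\sigma : A \lin B$, the composition $\cc_B \odot \sigma \odot \cc_A$ is always a strategy (receptive and courteous); and (b) if $\sigma$ is already a strategy, then $\cc_B \odot \sigma \odot \cc_A = \sigma$. The equivalence then follows: the ``if'' direction comes from (a) since $\sigma = \cc_B \odot \sigma \odot \cc_A$ forces $\sigma$ to be a strategy; the ``only if'' direction is exactly (b).

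For (a), the key observation is that copycat acts as an asynchronous buffer. Given a witness $u \in \cc_B \inter \sigma \inter \cc_A$ and a restriction $s = u \restrict A, B$, any Opponent extension of $s$ can be handled by extending $u$ first through the copycat component on the appropriate side, which is itself receptive -- this gives \emph{receptive} for $\cc_B \odot \sigma \odot \cc_A$. For \emph{courteous}, if we have $s\,a\,b\,t \in \cc_B \odot \sigma \odot \cc_A$ with $(\pol(a) = + \vee \pol(b) = -)$ and $sb \in \NAlt(A\lin C)$, then the buffering of copycat lets us commute the corresponding events in the witness $u$, because the two events cannot be causally related in $A\lin C$ and the copycats always allow local permutations of moves that are already forwarded versus not forwarded. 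This is the routine ``copycat is asynchronous'' lemma.

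For (b), I would prove the two inclusions separately. For $\sigma \subseteq \cc_B \odot \sigma \odot \cc_A$: given $s \in \sigma$, build a witness $u$ that plays each move of $s$ immediately followed by its copycat image on the appropriate side -- by construction $u \restrict A, B \in \cc_A$ and $u \restrict B, C \in \cc_B$ (synchronous copycats are always in the strategy by the \emph{balanced} condition of Definition \ref{def:asy_cc}), and $u \restrict A, C = s$. The other inclusion $\cc_B \odot \sigma \odot \cc_A \subseteq \sigma$ is the main obstacle. Given $s \in \cc_B \odot \sigma \odot \cc_A$ witnessed by $u$, the middle restriction $u \restrict A, B \in \sigma$ is a ``delayed'' version of $s$: all moves of $s$ appear, but possibly reordered due to copycat buffering and with receptivity-driven negative moves adjusted. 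One shows $s \in \sigma$ by induction on $s$, using \emph{courteous} of $\sigma$ to commute positive moves forward (undoing any delay copycat introduced on the $\sigma$ side) and \emph{receptive} of $\sigma$ to add back any negative moves that copycat had absorbed on the buffer boundary.

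The hard part will be the careful bookkeeping in showing $\cc_B \odot \sigma \odot \cc_A \subseteq \sigma$: one must track precisely how a single move of $s$ corresponds to (possibly several) moves in the witness $u$, across three separate components, and verify that the reordering permitted by \emph{courteous} together with the extensions permitted by \emph{receptive} suffice to produce the play $s$ inside $\sigma$ itself. A clean way to handle this is to induct on the length of $u$ (rather than $s$), distinguishing cases on whether the last event of $u$ is a synchronization across a copycat boundary (in which case one applies courtesy to swap with adjacent buffered moves) or a fresh Opponent move in $A$ or $C$ (in which case one applies receptivity of $\sigma$).
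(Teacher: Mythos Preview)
Your approach is correct and essentially identical to the paper's: the paper also proves the ``if'' direction by observing directly that $\cc_B \odot \sigma \odot \cc_A$ is always receptive and courteous, and the ``only if'' direction by transforming a witness $w \in \cc_B \inter \sigma \inter \cc_A$ via permutations (using courtesy and receptivity of $\sigma$) into one where copycat forwards immediately, so that the outer restriction lies in $\sigma$. Your sketch is in fact more detailed than the paper's, which is quite terse here. One minor point: you have a few notational slips where you write $A \lin C$ and restrictions to $C$, but there is no $C$ in this statement---the composition $\cc_B \odot \sigma \odot \cc_A$ has type $A \lin B$, with the three-way interaction living over the outer $A$, inner $A$, inner $B$, outer $B$; just clean up the labels.
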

\begin{proof}[Sketch]
\emph{If.} It is direct that all non-alternating prestrategies $\cc_B
\odot \sigma \odot \cc_A$ are \emph{receptive} and \emph{courteous}. 
\emph{Only if.} Considering $w \in \cc_B \inter \sigma \inter \cc_A$,
using that $\sigma$ is \emph{receptive} and \emph{courteous}, $w$ can be
transformed by permuting into $w'$ with the same outer restriction, but
where all moves are immediately forwarded by copycat. It follows that the
outer restriction is in $\sigma$.
\end{proof}

This was noticed in \cite{gm}, and also holds for
causal strategies \cite{lics11}. We deduce:

\begin{cor}
The $-$-arenas and non-alternating strategies form a category
$\NNegAlt$.
\end{cor}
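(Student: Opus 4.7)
Three conditions must be checked to obtain a category: associativity of composition, the left/right identity laws, and closure of the class of morphisms under composition. Associativity is already supplied by Proposition~\ref{prop:app_nalt_assoc}, which is established at the level of prestrategies and therefore applies verbatim once we restrict to strategies. The copycat $\cc_A$ is itself a strategy: that it satisfies \emph{receptive} and \emph{courteous} is a direct inspection of Definition~\ref{def:asy_cc}. So the remaining work is to establish the identity laws and preservation of strategies under composition.

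My plan is to leverage Proposition~\ref{prop:app_nalt_strat_char} for the closure property, after first upgrading its two-sided equation to the two separate identity equations $\cc_B \odot \sigma = \sigma$ and $\sigma \odot \cc_A = \sigma$. For the first, I would argue at the level of plays: any $u \in \cc_B \inter \sigma$ consists of the moves of some $s \in \sigma$ together with copycat forwardings of $B$-moves that may be performed lazily. Using courtesy of $\cc_B$ together with receptivity and courtesy of $\sigma$, such a witness can be permuted into a normal form in which $\cc_B$ forwards immediately after each $B$-move of $\sigma$; the outer restriction of this normalized witness is then exactly $s$, giving $\cc_B \odot \sigma \subseteq \sigma$. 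The reverse inclusion is immediate by building an instantaneous-forwarding witness from each $s \in \sigma$. The symmetric argument establishes $\sigma \odot \cc_A = \sigma$.

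Once the identity laws are in place, preservation of the class of strategies under composition follows formally via Proposition~\ref{prop:app_nalt_strat_char}: for strategies $\sigma : A \lin B$ and $\tau : B \lin C$, associativity (Proposition~\ref{prop:app_nalt_assoc}) and the identity laws give
\[
\cc_C \odot (\tau \odot \sigma) \odot \cc_A \;=\; (\cc_C \odot \tau) \odot (\sigma \odot \cc_A) \;=\; \tau \odot \sigma,
\]
so Proposition~\ref{prop:app_nalt_strat_char} certifies $\tau \odot \sigma$ as a strategy, completing the construction of the category $\NNegAlt$.

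The main obstacle I anticipate is the play-level normalization needed for the separate identity laws: one must show that any lazy copycat forwarding in $\cc_B \inter \sigma$ can be permuted into eager forwarding without altering the outer restriction, which requires tracking the interaction between asynchronous delays on $B$ and the causal constraints recorded by the \emph{balanced} and \emph{well-linked} conditions of Definition~\ref{def:asy_cc}. This is the same style of permutation argument that underlies the \emph{only if} direction of Proposition~\ref{prop:app_nalt_strat_char} and could likely be distilled from it, but it is the genuinely combinatorial part of the proof.
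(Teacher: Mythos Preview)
Your proposal is correct and follows the same outline as the paper, which simply cites Propositions~\ref{prop:app_nalt_assoc} and~\ref{prop:app_nalt_strat_char}. The one place where you do more work than necessary is the separate identity laws: you plan to re-run the play-normalization argument, but they follow formally from the two-sided law already packaged in Proposition~\ref{prop:app_nalt_strat_char}. For any prestrategy $\rho$ one has the easy inclusions $\rho \subseteq \cc \odot \rho$ and $\rho \subseteq \rho \odot \cc$ (build an eager-forwarding witness), and composition is monotone. Hence for a strategy $\sigma : A \lin B$,
\[
\cc_B \odot \sigma \;\subseteq\; \cc_B \odot (\sigma \odot \cc_A) \;=\; \cc_B \odot \sigma \odot \cc_A \;=\; \sigma,
\]
and combined with $\sigma \subseteq \cc_B \odot \sigma$ this gives $\cc_B \odot \sigma = \sigma$; the other identity is symmetric. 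So the ``main obstacle'' you anticipate does not arise: no fresh combinatorics is needed beyond what Proposition~\ref{prop:app_nalt_strat_char} already encapsulates. Your deduction that $\tau \odot \sigma$ is a strategy via $\cc_C \odot (\tau \odot \sigma) \odot \cc_A = (\cc_C \odot \tau) \odot (\sigma \odot \cc_A) = \tau \odot \sigma$ is then exactly right.
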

\begin{proof}
Follows from Propositions
\ref{prop:app_nalt_assoc} and \ref{prop:app_nalt_strat_char}.
\end{proof}

\subsubsection{Further structure} As for $\NegAlt$, we have -- with the
same constructions:

\begin{prop}
$\NNegAlt$ is a cartesian symmetric monoidal closed category.
\end{prop}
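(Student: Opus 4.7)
The plan is to mirror the construction of $\NegAlt$ carried out in Appendix \ref{app:alt}, adapted to the non-alternating setting. On objects, set $A \tensor B = A \parallel B$ and retain $A \with B$ from before. For the tensor of strategies $\sigma_1 : A_1 \lin B_1$ and $\sigma_2 : A_2 \lin B_2$, define
\[
\sigma_1 \tensor \sigma_2 = \{s \in \NAlt(A_1 \tensor A_2 \lin B_1 \tensor B_2) \mid s \restrict A_i, B_i \in \sigma_i,~i \in \{1,2\}\}
\]
using the partial restrictions from Figure \ref{fig:restr_tensor}. Since $\sigma_1, \sigma_2$ are no longer deterministic, one need only verify \emph{non-empty}, \emph{prefix-closed}, \emph{receptive}, and \emph{courteous}; the state analysis from the alternating case is not required, simplifying the proof.

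The structural isomorphisms (associator, unitors, symmetry) are obtained by lifting the obvious invertible renamings $\alpha_{A,B,C}, \rho_A, \lambda_A, s_{A,B}$, except that the lifting now goes through the asynchronous copycat of Definition \ref{def:asy_cc}. The relevant analogue of Lemma \ref{lem:lifting} holds because Proposition \ref{prop:app_nalt_strat_char} characterizes strategies as exactly those prestrategies invariant under copycat composition; thus the equations $\overrightarrow{g} \odot \sigma \odot \overleftarrow{f} = g \cdot \sigma \cdot f$ can be verified by exhibiting an interaction where copycat immediately forwards each event, exploiting courtesy and receptivity of $\sigma$ to permute the witness into the desired form. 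Bifunctoriality of $\tensor$ follows from a zipping lemma analogous to Lemma \ref{lem:zip_tensor}: given $u_1 \in \tau_1 \inter \sigma_1$, $u_2 \in \tau_2 \inter \sigma_2$, and $s \in \NAlt(A_1 \tensor A_2 \lin C_1 \tensor C_2)$ with matching outer restrictions, there exists (non-uniquely) $w \in (\tau_1 \tensor \tau_2) \inter (\sigma_1 \tensor \sigma_2)$ realising the interleaving prescribed by $s$.

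Cartesian products proceed as before: the pairing $\tuple{\sigma_1, \sigma_2} = \inj_1(\sigma_1) \uplus \inj_2(\sigma_2)$ is the tagged disjoint union, and the projections $\pi_i$ are relabeled copycats; the universal property follows from a direct computation of $\pi_i \odot \tuple{\sigma_1, \sigma_2} = \sigma_i$, again appealing to Proposition \ref{prop:app_nalt_strat_char}. For monoidal closure, $A \lin B$ is the internal hom, the event-wise bijection $(A \tensor B) \lin C \iso A \lin (B \lin C)$ yields a currying $\Lambda : \NNegAlt(A \tensor B, C) \simeq \NNegAlt(A, B \lin C)$, and $\evm_{A,B} = \Lambda^{-1}(\id_{A \lin B})$ satisfies $\evm_{B, C} \odot (\Lambda(\sigma) \tensor B) = \sigma$ by a variation over the neutrality of copycat.

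The main obstacle is the absence of the unique-witness property (Lemma \ref{lem:alt_uniq_wit}): in $\NegAlt$, many verifications reduce to comparing witnesses move by move, but here a single outer play $s$ can arise from many interactions $w$. Fortunately, this uniqueness was primarily used in the alternating case to propagate determinism and uniformity; for the bare categorical structure proved here (without symmetry), the weaker existential zipping lemmas suffice, provided one is careful to always construct witnesses rather than reason about them as unique objects. Most verifications of commutative diagrams can in fact be reduced to equalities between outer plays, which are handled uniformly by lifting renamings and applying Proposition \ref{prop:app_nalt_strat_char}.
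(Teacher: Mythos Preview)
Your proposal is correct and takes essentially the same approach as the paper, which simply states ``with the same constructions'' as $\NegAlt$ and gives no further detail. In fact your sketch is considerably more explicit than the paper's own treatment, and your diagnosis of the main obstacle---the loss of unique witnesses, compensated by existential zipping and Proposition \ref{prop:app_nalt_strat_char}---is exactly the right observation.
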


We could easily adapt the developments of Section~\ref{subsec:seq_ia} to
show that there is a subcategory of \emph{sequential} non-alternating
strategies, which maps functorially to $\NegAlt$ (without uniformity).
However, our attempts to endow $\NNegAlt$ with symmetry failed, because
of the lack of \emph{unique witness} in compositions. It seems it could
be done by using a different approach to symmetry, namely 
Melli\`es' group-theoretic formulation of uniformity \cite{ag1}; this is
however left for future work.
We conclude this section with:

\naltsmcc*
\begin{proof}
It remains to prove that copycat strategies are well-bracketed and that
well-bracketing is preserved by operations on strategies, which is
a routine verification.
\end{proof}

\section{Thin Concurrent Games}

In this section, we give some proof for thin concurrent games. Notably, 
we detail the proofs for the characterisations of configurations of
interaction and composition used in this paper: they do not appear in
\cite{cg2}, as they were elaborated more recently.

\subsection{Representable functions} We investigate the functions
between domains of configurations representable as maps of event
structures.

\subsubsection{Representable functions between configurations}
We will be
interested in those functions between configurations 
arising from maps of event structures:

\begin{defi}
For $A, B$ es, a function $f : \conf{A} \to \conf{B}$
is \textbf{representable} if there is a (necessarily unique) map of
event structure $\hat{f} : A \to B$ s.t. for all $x \in \conf{A}$,
$\hat{f}(x) = f(x)$.
\end{defi}
\begin{proof}
For uniqueness, if $f_1, f_2 : A \to B$ have the same image for
configurations, we have
\[
f_1([a)_A) \longcov{f_1(a)} f_1([a]_A)\,,
\qquad
f_2([a)_A) \longcov{f_2(a)} f_2([a]_A)\,,
\]
where $[a)_A = \{a' \in \ev{A} \mid a' <_A a\}$. 

Since $f_1([a)_A) = f_2([a)_A)$ and $f_1([a]_A) = f_2([a]_A)$ this
entails $f_1(a) = f_2(a)$. 
\end{proof}

We shall use the following characterisation of representable functions:

\begin{lem}\label{lem:mapification}
For $A, B$ es, a function $f : \conf{A} \to \conf{B}$ is representable
iff it:
\[
\begin{array}{rl}
\text{\emph{preserves the empty set:}} &
	f(\emptyset) = \emptyset\\
\text{\emph{preserves covering:}} & 
	\text{for $x, y \in \conf{A}$, if $x \cov y$, then $f(x) \cov f(y)$}\\
\text{\emph{preserves unions:}} &
	\text{for $x, y \in \conf{A}$, if $x \cup y \in \conf{A}$ then
$f(x\cup y) = f(x) \cup f(y)$.}
\end{array}
\]
\end{lem}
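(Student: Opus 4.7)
My plan is to prove the two directions separately, with the "if" direction being the substantive one.

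For the "only if" direction, I would start from a representing map $\hat f : A \to B$ and verify the three properties directly. Preservation of $\emptyset$ is immediate. Preservation of covering follows from the fact that if $x \cov y = x \cup \{a\}$ with $a \notin x$, then $\hat f(y) = \hat f(x) \cup \{\hat f(a)\}$, and local injectivity of $\hat f$ on $y$ forces $\hat f(a) \notin \hat f(x)$ (any $a' \in x$ with $\hat f(a') = \hat f(a)$ would satisfy $a' = a$ by local injectivity, contradicting $a \notin x$). Preservation of unions follows because $\hat f$ is applied pointwise and $\hat f(x \cup y) = \hat f(x) \cup \hat f(y)$ as sets of events.

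For the "if" direction, given $f$ satisfying the three conditions, the plan is to define $\hat f$ on individual events and then check it represents $f$. For each $a \in \ev{A}$, observe $[a)_A \cov [a]_A$; by preservation of covering, $f([a)_A) \cov f([a]_A)$, so there is a unique $b \in \ev{B}$ such that $f([a]_A) = f([a)_A) \cup \{b\}$. Set $\hat f(a) := b$. The main task is then to show that $\hat f$ applied pointwise to any $x \in \conf A$ reproduces $f(x)$. The key step here is to pick any linear extension $a_1, \ldots, a_n$ of $x$ compatible with $\leq_A$ (so each $\{a_1, \ldots, a_i\}$ is a configuration), and prove by induction that the image under $f$ of $\{a_1, \ldots, a_i\}$ is $f(\{a_1, \ldots, a_{i-1}\}) \cup \{\hat f(a_i)\}$. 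This uses preservation of unions applied to $\{a_1,\ldots,a_{i-1}\} \cup [a_i]_A = \{a_1,\ldots,a_i\}$, combined with the definitional equation $f([a_i]_A) = f([a_i)_A) \cup \{\hat f(a_i)\}$ and the fact that $f$ preserves inclusion (a straightforward consequence of preservation of unions).

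Once this pointwise formula is established, validity of $\hat f$ as a map of event structures is immediate ($\hat f(x) = f(x) \in \conf B$ by hypothesis on $f$), and local injectivity on each configuration $x$ follows from the inductive argument above: at each step $\hat f(a_i)$ is a fresh element of $B$ not appearing in $f(\{a_1,\ldots,a_{i-1}\})$, so the enumeration of $\hat f(x)$ has exactly $|x|$ distinct elements.

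The main obstacle I anticipate is ensuring that $\hat f(a)$, defined via one covering $[a)_A \cov [a]_A$, really coincides with the "new element added" in \emph{any} covering sequence building up a configuration containing $a$. This is really a matter of showing well-definedness of the construction across different enumerations, but it falls out cleanly once preservation of unions is used to identify $\{a_1,\ldots,a_{i-1}\} \cup [a_i]_A$ with $\{a_1,\ldots,a_i\}$ and to deduce monotonicity of $f$ with respect to inclusion.
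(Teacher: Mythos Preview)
Your proposal is correct and follows the same overall strategy as the paper: define $\hat f(a)$ via the prime covering $[a)_A \cov [a]_A$, then verify that this recovers $f$ along any covering chain for $x$. The only difference is in how the inductive step is handled. The paper proves an intermediate claim that for \emph{any} covering $x \cov x \cup \{a\}$ one has $f(x \cup \{a\}) = f(x) \cup \{\hat f(a)\}$, by building a ladder of covering squares from $[a)_A \cov [a]_A$ out to $x \cov x \cup \{a\}$ and observing that $f$ preserves such squares. Your argument shortcuts this via the union decomposition $\{a_1,\ldots,a_i\} = \{a_1,\ldots,a_{i-1}\} \cup [a_i]_A$ together with monotonicity of $f$ to absorb $f([a_i)_A)$; this is more direct and avoids the square-chasing, at the cost of being slightly less explicit about why the ``new element'' is independent of context. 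Both routes are fine and roughly equivalent in length.
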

\begin{proof}
\emph{If.} For any $a \in \ev{A}$, write $[a)_A = [a]_A \setminus
\{a\}$. It is always a configuration of $A$, and $[a)_A \cov [a]_A$.
Because $f$ preserves covering, there is $b \in \ev{B}$ such that 
\[
f([a)_A) \longcov{b} f([a]_A)\,,
\]
\emph{i.e.} $f([a]_A) = f([a)_A) \cup \{b\}$, write $b = \hat{f}(a)$.

This defines a function on events. Next, we show that its direct
image of configurations agrees with $f$. We first claim that 
for all $x, y \in \conf{A}$ such that $x \cov y$ and
$y = x \cup \{a\}$, we have $f(y) = f(x) \cup \{\hat{f}(a)\}$ as well.
To prove this, consider the diagram
\[
\xymatrix@C=-3pt@R=0pt{
[a)_A	&\cov& x_1 & \cov &\dots & \cov & x_{n-1} &\cov & x\\
\rotatebox{270}{$\hspace{-10pt}\cov$}&&
\rotatebox{270}{$\hspace{-10pt}\cov$}&&&&
\rotatebox{270}{$\hspace{-10pt}\cov$}&&
\rotatebox{270}{$\hspace{-10pt}\cov$}\\
[a]_A	&\cov& y_1 & \cov & \dots& \cov & y_{n-1} & \cov & y
}
\]
where, for each $1\leq i < n$, $y_i = x_i \cup \{a\}$. Because $f$
preserves covering and unions, $f$ preserves covering squares, hence by
straightforward induction, for all $1\leq i < n$,
\[
f(x_i) \longcov{\hat{f}(a)} f(y_i)\,,
\]
so $f(y) = f(x) \cup \{\hat{f}(a)\}$ as required. 

Now, we can prove that for all $x \in \conf{A}$, we
have $f(x) = \hat{f}(x)$. Indeed, there is
\[
\emptyset \longcov{a_1} \dots \longcov{a_n} x
\]
a covering chain, which as $f$ preserves the empty set and by the
property just above, entails 
\[
\emptyset \longcov{\hat{f}(a_1)} \dots \longcov{\hat{f}(a_n)} f(x)\,,
\]
as well, showing that $f(x) = \hat{f}(x)$ as needed.
From this point it is trivial that $\hat{f}$ is a map of es:
preservation of configurations follows from $f : \conf{A}
\to \conf{B}$, and local injectivity from the preservation of covering.
\emph{Only if.} Immediate verification.
\end{proof}

\subsubsection{Representable functions with symmetry}
We extend this construction with symmetry.

\begin{defi}\label{def:rep_map_ess}
Let $A, B$ be ess. A function $f : \conf{A} \to \conf{B}$ is
\textbf{representable} if there is a (necessarily unique) map of ess
$\hat{f} : A \to B$ such that for all $x \in \conf{A}$, $\hat{f}(x) =
f(x)$.
\end{defi}

As above, we give a characterization of representable functions between
configurations.

\begin{lem}\label{lem:map_ess_rep}
Let $A, B$ be ess. Then, a function $f : \conf{A} \to \conf{B}$ is
\textbf{representable} iff it:
\[
\begin{array}{rl}
\text{\emph{preserves the empty set:}} &
        f(\emptyset) = \emptyset\\
\text{\emph{preserves covering:}} & 
        \text{for $x, y \in \conf{A}$, if $x \cov y$, then $f(x) \cov
f(y)$}\\
\text{\emph{preserves unions:}} &
        \text{for $x, y \in \conf{A}$, if $x \cup y \in \conf{A}$ then
$f(x\cup y) = f(x) \cup f(y)$.}\\
\text{\emph{preserves symmetry:}}&
\text{there is a (necessarily unique) monotone $\widetilde{f} :
\tilde{A} \to \tilde{B}$}\\ &\text{such that $\dom \circ \widetilde{f} =
f \circ \dom$ and $\cod \circ \widetilde{f} = f \circ \cod$.} 
\end{array}
\]
\end{lem}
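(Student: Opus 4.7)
The plan is to prove both directions by lifting Lemma \ref{lem:mapification} to the setting with symmetry, using the groupoid/restriction/extension axioms of isomorphism families as the main bridge.

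For the ``only if'' direction, suppose $f$ is representable by $\hat{f} : A \to B$ a map of ess. The underlying map of event structures yields the first three conditions by Lemma \ref{lem:mapification}. For the symmetry condition, I set $\widetilde{f}(\theta) = \{(\hat{f}(a_1), \hat{f}(a_2)) \mid (a_1, a_2) \in \theta\}$; this is in $\tilde{B}$ by the definition of a map of ess, is monotone by set-theoretic monotonicity, and commutes with $\dom, \cod$ by construction. Uniqueness of $\widetilde{f}$ will be obtained as part of the ``if'' argument.

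For the ``if'' direction, the first three conditions give by Lemma \ref{lem:mapification} a map of event structures $\hat{f} : A \to B$ such that $\hat{f}(x) = f(x)$ for all $x \in \conf{A}$. It then remains to show that $\hat{f}$ preserves symmetry, \emph{i.e.} that for all $\theta : x \sym_A y$, the set $\hat{f}(\theta) = \{(\hat{f}(a_1), \hat{f}(a_2)) \mid (a_1, a_2) \in \theta\}$ lies in $\tilde{B}$ (it will automatically be a bijection $f(x) \simeq f(y)$ by local injectivity of $\hat{f}$ on $x$ and $y$). The key step, and main obstacle, is to prove that $\widetilde{f}(\theta) = \hat{f}(\theta)$.

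I would prove this equality by induction on the cardinal of $\theta$, using the covering relation on symmetries. For $\theta = \emptyset$, both sides equal $\emptyset$ since $\widetilde{f}$ commutes with $\dom$ which sends $\emptyset$ to $\emptyset = f(\emptyset)$. For the inductive step $\theta' \cov \theta$, writing $\theta = \theta' \cup \{(a_1, a_2)\}$ with $\dom \theta = \dom \theta' \cup \{a_1\}$ and $\cod \theta = \cod \theta' \cup \{a_2\}$, monotonicity of $\widetilde{f}$ gives $\widetilde{f}(\theta') \subseteq \widetilde{f}(\theta)$. Then, as $\dom \widetilde{f}(\theta) = f(\dom \theta) = \hat{f}(\dom \theta') \cup \{\hat{f}(a_1)\}$ and analogously for $\cod$, the set $\widetilde{f}(\theta) \setminus \widetilde{f}(\theta')$ must be the singleton $\{(\hat{f}(a_1), \hat{f}(a_2))\}$. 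By induction hypothesis $\widetilde{f}(\theta') = \hat{f}(\theta')$, so $\widetilde{f}(\theta) = \hat{f}(\theta) \in \tilde{B}$ as required. This establishes preservation of symmetry and simultaneously delivers the uniqueness of $\widetilde{f}$ claimed in the statement, since the induction shows $\widetilde{f}$ is entirely determined by $\hat{f}$, itself determined by $f$.
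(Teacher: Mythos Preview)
Your proof is correct and follows essentially the same approach as the paper: both directions reduce to Lemma \ref{lem:mapification} for the underlying event-structure map, and the symmetry part is handled by showing $\widetilde{f}(\theta) = \hat{f}(\theta)$ via induction along a covering chain of symmetries, using monotonicity of $\widetilde{f}$ together with its compatibility with $\dom$ and $\cod$ to pin down the single new pair at each step. The only cosmetic difference is that the paper fixes a full covering chain $\emptyset \cov \theta_1 \cov \dots \cov \theta_n = \theta$ and inducts on the index, whereas you induct on the cardinal of $\theta$; these are the same argument.
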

\begin{proof}
\emph{If.}
If $f$ satisfies the first three axioms, it is representable
without symmetry. By Lemma \ref{lem:mapification} there is a map of
event structures $\hat{f} : A \to B$ s.t. for all $x\in \conf{A}$,
$\hat{f}(x) = f(x)$.

Now, assuming that $f$ preserves symmetry, we show that actually we must
have
\[
\widetilde{f}(\theta) = \{(\hat{f}(a), \hat{f}(a')) \mid (a, a') \in
\theta\}\,,
\]
\emph{i.e.} $\hat{f}(\theta)$, for all $\theta \in \tilde{A}$. Indeed,
consider a
covering chain for $\theta$, \emph{i.e.} a sequence in $\tilde{A}$:
\[
\emptyset = \theta_0 \longcov{(a_1, a'_1)} \theta_1 \longcov{(a_2, a'_2)} \dots
\longcov{(a_n, a'_n)} \theta_n = \theta\,.
\]

We show by induction on $0\leq i \leq n$ that $\widetilde{f}(\theta_i) =
\hat{f}(\theta_i)$. First, $\dom \circ \widetilde{f}(\emptyset) = f(\emptyset) =
\emptyset$, so $\widetilde{f}(\emptyset) = \emptyset =
\hat{f}(\emptyset)$. For $0 \leq i \leq n$, by IH,
$\widetilde{f}(\theta_i) = \hat{f}(\theta_i)$. We then have:
\begin{eqnarray*}
\dom(\widetilde{f}(\theta_{i+1})) &=& f(\dom(\theta_{i+1}))\\
&=& f(\dom(\theta_i) \cup \{a_{i+1}\})\\
&=& f(\dom(\theta_i)) \cup \{\hat{f}(a_{i+1})\}\,,
\end{eqnarray*}
and the symmetric reasoning shows 
$\cod(\widetilde{f}(\theta_{i+1})) =
\cod(\widetilde{f}(\theta_{i})) \cup \{\hat{f}(a'_{i+1})\}$
as well. But finally, we also have $\widetilde{f}(\theta_i) \subseteq
\widetilde{f}(\theta_{i+1})$ since $\widetilde{f}$ is monotone. So we
must have 
\begin{eqnarray*}
\widetilde{f}(\theta_{i+1}) &=& \widetilde{f}(\theta_i) \cup
\{(\hat{f}(a_{i+1}), \hat{f}(a'_{i+1}))\}\\
&=& \hat{f}(\theta_i) \cup \{(\hat{f}(a_{i+1}), \hat{f}(a'_{i+1}))\}\\
&=& \hat{f}(\theta_{i+1})\,,
\end{eqnarray*}
as required. Now this exactly means that for all $\theta \in \tilde{A}$
we have $\hat{f}(\theta) \in \tilde{B}$, and thus $\hat{f}$ is a map of
event structures with symmetry.
\emph{Only if.} Obvious.
\end{proof}

\subsection{Interaction and Composition} 
Fix $\bsigma : A \vdash B$ and
$\btau : B \vdash C$ (pre)strategies.

\subsubsection{Interaction} \label{app:interaction}
First, we characterise interaction in terms
of its configurations.

We start by recalling that by Lemma 3.12 of \cite{cg2}, there is a
pullback
\[
\xymatrix@R=20pt@C=0pt{
&\btau \inter \bsigma
        \ar[dl]_{\Pi_\bsigma}
        \ar[dr]^{\Pi_\btau}
        \pb{270}\\
\bsigma \parallel C
        \ar[dr]_{\pr_{\bsigma} \parallel C}&&
A \parallel \btau
        \ar[dl]^{A \parallel \pr_{\btau}}\\
&A\parallel B \parallel C
}
\]
in the category of event structures with symmetry, the \emph{interaction
pullback}. First, we have:

\begin{lem}\label{lem:proj_caus_comp}
For any $x\in \conf{\btau \inter \bsigma}$, writing
\[
\Pi_\bsigma\,x = x^\bsigma \parallel x^\btau_C \in \conf{\bsigma
\parallel C}
\qquad
\qquad
\Pi_\btau\,x = x^\bsigma_A \parallel x^\btau \in \conf{A\parallel
\btau}\,,
\]
then we have $x^\bsigma \in \conf{\bsigma}$ and $x^\btau \in
\conf{\btau}$ causally compatible.
\end{lem}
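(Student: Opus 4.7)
My plan is to deduce both parts of causal compatibility directly from the pullback description of $\btau\inter\bsigma$, by unpacking what it means for $\Pi_\bsigma$ and $\Pi_\btau$ to be maps of ess into $\bsigma\parallel C$ and $A\parallel \btau$ whose composites to $A\parallel B\parallel C$ agree.

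First I would establish the \emph{matching} condition. Since $\btau\inter\bsigma$ together with $\Pi_\bsigma,\Pi_\btau$ is a pullback cone over $\pr_\bsigma\parallel C$ and $A\parallel \pr_\btau$, we have the equality of composites $(\pr_\bsigma\parallel C)\circ\Pi_\bsigma = (A\parallel \pr_\btau)\circ\Pi_\btau$. Applying both sides to $x$, and using that maps of ess preserve direct images of configurations, yields the equality $x^\bsigma_A \parallel x^\bsigma_B \parallel x^\btau_C = x^\bsigma_A \parallel x^\btau_B \parallel x^\btau_C$ in $\conf{A\parallel B\parallel C}$, from which $x^\bsigma_B = x^\btau_B$ by projection onto $B$. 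In particular $x^\bsigma \in \conf{\bsigma}$ and $x^\btau\in\conf{\btau}$ are matching in the sense of Definition~\ref{def:caus_comp}, and the composite bijection $\varphi_{x^\bsigma,x^\btau}$ of that definition is well-defined.

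Next I would handle \emph{securedness}. The key observation is that every $x \in \conf{\btau\inter\bsigma}$ admits a linear extension compatible with $\leq_{\btau\inter\bsigma}$; choose one, say $p_1 <\dots < p_n$. Because $\Pi_\bsigma$ and $\Pi_\btau$ are maps of ess, they locally reflect causal dependency (Lemma \ref{lem:es_refl_caus}), so the images $\Pi_\bsigma(p_1),\dots,\Pi_\bsigma(p_n)$ and $\Pi_\btau(p_1),\dots,\Pi_\btau(p_n)$ give linear extensions of $\Pi_\bsigma(x) = x^\bsigma\parallel x^\btau_C$ in $\bsigma\parallel C$ and of $\Pi_\btau(x) = x^\bsigma_A\parallel x^\btau$ in $A\parallel \btau$. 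Because the classification of events of $\btau\inter\bsigma$ (occurring in $A$, $B$, or $C$) is exactly what makes $\varphi_{x^\bsigma,x^\btau}$ match up projections, the enumeration $p_1,\dots,p_n$ induces a total order on the graph of $\varphi_{x^\bsigma,x^\btau}$ which is a linear extension of $\vartriangleleft$; hence $\vartriangleleft$ is acyclic.

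The main obstacle I expect is bookkeeping rather than mathematical depth: one has to carefully unpack the three-way classification of events of $\btau\inter\bsigma$ (those occurring in $A$, in $B$ with synchronized projections, and in $C$), verify that pushing a linearization through $\Pi_\bsigma$ and $\Pi_\btau$ gives respectively linearizations of $x^\bsigma\parallel x^\btau_C$ and $x^\bsigma_A\parallel x^\btau$ whose pointwise pairing agrees with the graph of $\varphi_{x^\bsigma,x^\btau}$, and then check that both causality relations $<_{\bsigma\parallel C}$ and $<_{A\parallel \btau}$ used to define $\vartriangleleft$ are both absorbed by this common linear order. Once that identification is made explicit, acyclicity is immediate.
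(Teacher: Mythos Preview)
Your proposal is correct and takes essentially the same approach as the paper: both establish matching from the commuting pullback cone, and both deduce securedness from local reflection of causality (Lemma~\ref{lem:es_refl_caus}) along the projections $\Pi_\bsigma$ and $\Pi_\btau$. The paper phrases the securedness step slightly more directly---defining the bijection $\psi:c\mapsto(\Pi_\bsigma(c),\Pi_\btau(c))$ from $x$ to the graph of $\varphi_{x^\bsigma,x^\btau}$ and showing that $\psi(c)\vartriangleleft\psi(c')$ implies $c<_{\btau\inter\bsigma}c'$, so any $\vartriangleleft$-cycle would yield a cycle in a partial order---but your linearization argument is just the contrapositive of this and equally valid.
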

\begin{proof}
First, $x^\bsigma$ and $x^\btau$ are matching. For causal
compatibility, we consider 
\[
\varphi_{x^\bsigma, x^\btau}
\quad
:
\quad
x^\bsigma \parallel x^\tau_C 
\quad
\stackrel{\pr_\bsigma \parallel x^\btau_C}{\simeq} 
\quad
x^\bsigma_A \parallel x_B \parallel x^\btau_C 
\quad
\stackrel{x^\bsigma_A \parallel \pr_\btau^{-1}}{\simeq}
\quad
x^\bsigma_A \parallel x^\btau
\]
from Definition \ref{def:caus_comp}, with $(m, n)
\vartriangleleft (m', n')$ iff $m <_{\bsigma \parallel C} m'$ or 
$n <_{A \parallel \btau} n'$ which we must prove acyclic. 
Now, the function which to $c \in x$ associates $(\Pi_\bsigma(c),
\Pi_\btau(c))$ is a bijection
\[
\psi : x \simeq \varphi_{x^\bsigma, x^\btau}\,,
\]
and we now claim that for all $c, c' \in x$, if $\psi(c) \vartriangleleft
\psi(c')$, then $c <_{\btau\inter \bsigma} c'$. Indeed, say $c = (m, n)$ and $c' = (m',
n')$, with \emph{w.l.o.g.} $m <_{\bsigma \parallel C} m'$. Of course
then, $m = \Pi_\bsigma(c)$ and $m' = \Pi_{\btau}(c')$. Now, we use that
$\Pi_\bsigma$ is a map of event structures, and those always locally
reflect causality (Lemma \ref{lem:es_refl_caus}). Therefore,
we must have $c <_{\btau \inter \bsigma} c'$ as required.

Therefore a cycle for $\vartriangleleft$ in
$\varphi_{x^\bsigma, x^\btau}$ would induce a cycle for $<_{\btau\inter
\bsigma}$ in $x$, contradiction.
\end{proof}

\charinter*
\begin{proof}
\emph{Existence.} 
We must provide the two order-isomorphisms announced. Take $x^\bsigma
\in \conf{\bsigma}$ and $x^\btau \in \conf{\btau}$ causally compatible.
By Definition \ref{def:caus_comp}, the bijection
\[
\varphi_{x^\bsigma, x^\btau}
\quad
:
\quad
x^\bsigma \parallel x^\tau_C 
\quad
\stackrel{\pr_\bsigma \parallel x^\btau_C}{\simeq} 
\quad
x^\bsigma_A \parallel x_B \parallel x^\btau_C 
\quad
\stackrel{x^\bsigma_A \parallel \pr_\btau^{-1}}{\simeq}
\quad
x^\bsigma_A \parallel x^\btau
\]
is secured, \emph{i.e.} the relation 
$(m, n) \vartriangleleft (m', n') 
\Leftrightarrow
m \leq_{\bsigma \parallel C} m'
\vee
n \leq_{A \parallel \btau} n'
$
defined on the graph of $\varphi_{x^\bsigma, x^\btau}$ is acyclic, so
its reflexive transitive closure $\leq_{x^\bsigma, x^\btau}$ is a
partial order. This turns $\varphi_{x^\bsigma, x^\btau}$ into an event
structure, and in fact an ess with identity symmetries. Moreover, there
are obvious maps of ess
\[
\pi_\bsigma : \varphi_{x^\bsigma, x^\btau} \to \bsigma \parallel C
\qquad
\qquad
\pi_\btau : \varphi_{x^\bsigma, x^\btau} \to A \parallel \btau
\]
commuting with display maps, so by the universal property, 
$\tuple{\pi_\bsigma, \pi_\btau} : \varphi_{x^\bsigma, x^\btau} \to
\btau \inter \bsigma$ and
\[
x^\btau \inter x^\bsigma = \tuple{\pi_\bsigma, \pi_\btau}
(\ev{\varphi_{x^\bsigma, x^\btau}}) \in \conf{\btau \inter \bsigma}
\]
concludes the definition of the action of $(-\inter -)$ on causally
compatible pairs. Reciprocally, if $x \in \conf{\btau \inter \bsigma}$
then its projections yield $\Pi_\bsigma(x) = x^\bsigma \parallel
x^\btau_C$ and $\Pi_\btau(x) = x^\btau_A \parallel x^\btau$, and 
by Lemma \ref{lem:proj_caus_comp}, $x^\bsigma$ and $x^\btau$ are
causally compatible.
Finally, for $x^\bsigma$ and $x^\btau$ causally compatible,
\[
\Pi_\bsigma(x^\tau \inter x^\bsigma) = x^\bsigma \parallel x^\btau_C
\qquad
\qquad
\Pi_\btau(x^\btau \inter x^\bsigma) = x^\bsigma_A \parallel x^\btau
\]
by construction and if $x \in \conf{\btau \inter \bsigma}$, $x =
x^\btau \inter x^\bsigma$ by universal property of the pullback.
The projections are monotone, and the monotonicity of $(-\inter -)$
follows from the universal property. 
For symmetries, causal compatibility of $\theta^\bsigma \in
\tilde{\bsigma}$ and $\theta^\btau \in \tilde{\btau}$ amounts
to
\[
\xymatrix@R=5pt@C=5pt{
x^\bsigma \parallel x^\btau_C
	\ar@{}[dd]_{\theta^\bsigma \parallel \theta^\btau_C~}
	\ar@{}[rr]^{\Pi_\bsigma^{-1}} & \simeq & 
x^\btau \inter x^\bsigma 
	\ar@{}[rr]^{\Pi_\btau}&
\simeq& x^\bsigma_A \parallel x^\btau
	\ar@{}[dd]^{~\theta^\bsigma_A \parallel \theta^\btau}\\
\rotatebox{90}{$\iso$}&&&&\rotatebox{90}{$\iso$}\\
y^\bsigma \parallel y^\btau_C
	\ar@{}[rr]_{\Pi_\bsigma}&
\simeq&
y^\btau \inter y^\bsigma
	\ar@{}[rr]_{\Pi_\btau}&
\simeq&
y^\bsigma_A \parallel y^\btau
}
\]
commuting, inducing $\theta^\inter : x^\btau \inter
x^\bsigma \simeq y^\btau \inter y^\bsigma$. But symmetries on $\btau
\inter \bsigma$ are precisely those bijections $x^\btau
\inter x^\bsigma \simeq y^\btau \inter y^\bsigma$ projecting to
$\tilde{S\parallel C}$ and $\tilde{A \parallel T}$ as above, so
$\theta^\inter \in \tilde{\btau \inter \bsigma}$. Reciprocally,
$\theta^\inter \in \tilde{\btau \inter \bsigma}$ induces $\theta^\bsigma
\in \tilde{\bsigma}$ and $\theta^\btau \in \tilde{\btau}$ by
projections. That this yields an order-iso compatible
with $\dom$ and $\cod$ is direct.

\emph{Uniqueness.} Two event structures with symmetry
satisfying the hypotheses obviously have isomorphic domains of
configurations (and isomorphic domains of symmetries, in a compatible
manner). But such order-isomorphisms between
domains of configurations and symmetries are automatically representable
in the sense of Definition \ref{def:rep_map_ess}. Therefore, by Lemma
\ref{lem:map_ess_rep} the isomorphisms are generated by isomorphisms of
event structures with symmetry, as required. Preservation of display
maps holds for configurations by hypothesis; and by uniqueness in Lemma
\ref{lem:map_ess_rep}, two maps of ess with the same action on
configurations must be equal. Therefore, the isomorphism commutes with
display maps and is an isomorphism of pre-interactions as required.
\end{proof}

\subsubsection{Composition}\label{app:char_conf_comp}
We aim to prove Propositions \ref{prop:char_conf_comp} and
\ref{prop:comp_pcov}, exploiting Proposition
\ref{prop:main_interaction2}, along with a few extra lemmas. First, we
notice a connection between \emph{minimality} of causally compatible
pairs and that maximal events of interactions are visible. 

\begin{lem}
For $x^\bsigma \in \conf{\bsigma}$ and $x^\btau \in \conf{\btau}$
causally compatible, they are \emph{minimal} causally compatible
iff the maximal events of $x^\btau \inter x^\bsigma$ are
\textbf{visible}, \emph{e.g.} occur in $A$ or $C$.
\end{lem}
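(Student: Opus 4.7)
The plan is to reason through Proposition \ref{prop:main_interaction2}, which provides an order-isomorphism between causally compatible pairs and configurations of $\btau\inter\bsigma$. Write $z = x^\btau \inter x^\bsigma \in \conf{\btau \inter \bsigma}$. Recall that events of $\btau \inter \bsigma$ are classified into those occurring in $A$, in $B$ (synchronized), and in $C$, and that for $p$ occurring in $A$ (resp.\ $C$) only $p_\bsigma$ (resp.\ $p_\btau$) is defined, while for $p$ occurring in $B$, both $p_\bsigma$ and $p_\btau$ are defined. This classification controls exactly how configurations of $\btau\inter\bsigma$ project onto $\conf{\bsigma}$ and $\conf{\btau}$.

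For the direction ``visible maximal events $\Rightarrow$ minimal'', suppose $y^\bsigma \subseteq x^\bsigma$ and $y^\btau \subseteq x^\btau$ are causally compatible with $y^\bsigma_A = x^\bsigma_A$ and $y^\btau_C = x^\btau_C$, and let $w = y^\btau \inter y^\bsigma$. Monotonicity of $(-\inter-)$ from Proposition \ref{prop:main_interaction2} gives $w \subseteq z$ in $\conf{\btau \inter \bsigma}$. Any event in $z \setminus w$ must, by the classification above, occur in $B$ (since its projection would otherwise contribute to $x^\bsigma_A \setminus y^\bsigma_A$ or $x^\btau_C \setminus y^\btau_C$, both empty). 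If $z \setminus w$ is non-empty, pick $e$ maximal in $z \setminus w$ with respect to $\leq_{\btau\inter\bsigma}$. Down-closure of $w$ inside $z$ combined with maximality in $z \setminus w$ forces $e$ to be maximal in $z$ itself, contradicting the hypothesis that maximal events of $z$ are visible. Hence $w = z$, and projecting gives $x_B = y_B$.

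For the converse ``minimal $\Rightarrow$ visible maximal events'', I would argue contrapositively. Suppose $e \in z$ is maximal but occurs in $B$; then $e$ is synchronized, so both $e_\bsigma \in x^\bsigma$ and $e_\btau \in x^\btau$ are defined. Consider $w = z \setminus \{e\}$, which is a configuration of $\btau\inter\bsigma$ by maximality of $e$. Via the order-isomorphism of Proposition \ref{prop:main_interaction2}, $w$ corresponds to a causally compatible pair $(y^\bsigma, y^\btau)$ with $y^\bsigma \subseteq x^\bsigma$ and $y^\btau \subseteq x^\btau$. Since $e$ occurs in $B$, its projections contribute only to $x_B$, so $y^\bsigma_A = x^\bsigma_A$ and $y^\btau_C = x^\btau_C$, while $y_B = x_B \setminus \{\pr_\bsigma(e_\bsigma)\} \subsetneq x_B$. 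This contradicts minimality.

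The main subtlety, and the step deserving the most care, is the bookkeeping in the second direction: one must check that removing a single synchronized event $e$ from $z$ indeed yields a configuration of $\btau\inter\bsigma$ (true by maximality of $e$) and that the corresponding $(y^\bsigma, y^\btau)$ differs from $(x^\bsigma, x^\btau)$ exactly on one event in each component, both projecting to the same event of $B$ via $\pr_\bsigma$ and $\pr_\btau$ (so that $y_B$ is strictly smaller by exactly one element). This is an easy verification once unpacked using Proposition \ref{prop:main_interaction2}, but it is the step where Proposition \ref{prop:main_interaction2} is applied in the non-trivial direction.
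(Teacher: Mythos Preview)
Your proof is correct and follows essentially the same approach as the paper's. Both directions proceed by contradiction via a maximal event in the appropriate set; the only cosmetic difference is that the paper works directly with the projections $\Pi_\bsigma, \Pi_\btau$ and invokes local reflection of causality (Lemma~\ref{lem:es_refl_caus}) to argue the projected events $s \in x^\bsigma$ and $t \in x^\btau$ are maximal (hence removable), whereas you route the same step through the order-isomorphism of Proposition~\ref{prop:main_interaction2} by removing the event from $z$ first and reading off the smaller pair.
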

\begin{proof}
\emph{If.} Consider $y^\bsigma \in \conf{\bsigma}$ and $y^\btau \in
\conf{\btau}$ causally compatible such that $y^\bsigma \subseteq
x^\bsigma$, $y^\btau \subseteq x^\btau$, while $x^\bsigma_A =
y^\bsigma_A$ and $x^\btau_C = y^\btau_C$. Assume, seeking a
contradiction, that we have $y_B \subset x_B$ a strict inclusion.
Necessarily, we have $y^\btau \inter y^\bsigma \subset x^\btau \inter
x^\bsigma$. Take $m \in (y^\btau \inter
y^\bsigma) \setminus (x^\btau \inter x^\bsigma)$, \emph{w.l.o.g.} we can
assume that $m$ is maximal for $\leq_{\btau \inter \bsigma}$ in $x^\btau
\inter x^\bsigma$. By hypothesis, $m$ occurs in $A$ or $C$. But this
immediately contradicts the hypothesis that $x^\bsigma_A =
y^\bsigma_A$ and $x^\btau_C = y^\btau_C$.

\emph{Only if.} Assume $x^\bsigma, x^\btau$ are minimal causally
compatible, and take $m \in x^\btau \inter x^\bsigma$ maximal. Seeking a
contradiction, assume that $m$ occurs in $B$. So, projecting
\[
\Pi_\bsigma\,(x^\btau \inter x^\bsigma) = x^\bsigma \parallel
x^\btau_C\,,
\qquad
\qquad
\Pi_\btau\,(x^\btau \inter x^\bsigma) = x^\bsigma_A \parallel x^\btau\,,
\]
we have $\Pi_\bsigma m = (1, s)$ with $s \in x^\bsigma$ and $\Pi_\btau m
= (2, t)$ with $t \in x^\btau$. As maps of event structures locally
reflect causality (Lemma \ref{lem:es_refl_caus}), $s$ is maximal in
$x^\bsigma$ and $t$ is maximal in $x^\btau$. Hence, $y^\bsigma =
x^\bsigma \setminus \{s\} \in \conf{\bsigma}$ and $y^\btau = x^\btau
\setminus \{t\} \in \conf{\btau}$. By construction they are causally
compatible with the same projections to $A$ and $C$, contradicting
minimality of $x^\bsigma$ and $x^\btau$.
\end{proof}

We write $\confv{\btau \inter \bsigma}$ for the configurations whose
maximal events are visible and likewise for symmetries. Then, the lemma
above means that we can refine Proposition \ref{prop:main_interaction2}
to: 

\begin{prop}\label{prop:char_comp_half1}
The order-isomorphisms of Proposition \ref{prop:main_interaction2}
restrict to
\[
\begin{array}{rcrcl}
(- \inter -) \!\!
&:&
\!\!\{(x^\btau, x^\bsigma) \in \conf{\btau} \times \conf{\bsigma} \mid
\text{$x^\bsigma$ and $x^\btau$ \emph{minimal} caus. comp.}\}
\!\!
&\simeq&
\!\! \confv{\btau \inter \bsigma}\\
(- \inter -) \!\!
&:&
\!\!\{(\theta^\btau, \theta^\bsigma) \in \tilde{\btau} \times
\tilde{\bsigma} \mid \text{$\theta^\bsigma$ and $\theta^\btau$
\emph{minimal} caus. comp.}\} \!\!
&\simeq&
\!\! \tildev{\btau \inter \bsigma}
\end{array}
\]
\end{prop}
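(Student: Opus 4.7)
The plan is to leverage Proposition \ref{prop:main_interaction2} together with the preceding lemma characterizing minimality of causally compatible pairs in terms of the visibility of maximal events in the interaction. The two order-isomorphisms we seek are simply the restrictions of those in Proposition \ref{prop:main_interaction2}, so the bulk of the work is to verify that the restrictions have the claimed (co)domains, and that they still form order-isomorphisms.

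First I would handle the configuration bijection. By Proposition \ref{prop:main_interaction2}, we have an order-iso between causally compatible pairs $(x^\btau, x^\bsigma)$ and configurations of $\btau \inter \bsigma$. By the lemma just proved, a causally compatible pair is minimal iff the maximal events of $x^\btau \inter x^\bsigma$ all occur in $A$ or $C$, i.e.\ iff $x^\btau \inter x^\bsigma \in \confv{\btau \inter \bsigma}$. So restricting the bijection of Proposition \ref{prop:main_interaction2} to minimal pairs gives a bijection with $\confv{\btau \inter \bsigma}$. It is an order-iso because both the restriction's domain (pairs ordered componentwise) and codomain ($\confv{\btau \inter \bsigma}$ ordered by inclusion) inherit their order from the larger sets, on which we already have an order-iso.

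For symmetries, I would proceed analogously. The definition of causal compatibility for symmetries refers to causal compatibility of $\dom(\theta^\bsigma), \dom(\theta^\btau)$ (equivalently of $\cod(\theta^\bsigma), \cod(\theta^\btau)$), so \emph{minimal} causal compatibility of symmetries means minimality of these configuration pairs. Under the symmetry bijection of Proposition \ref{prop:main_interaction2}, $\theta^\btau \inter \theta^\bsigma$ has domain $\dom(\theta^\btau) \inter \dom(\theta^\bsigma)$ and codomain $\cod(\theta^\btau) \inter \cod(\theta^\bsigma)$ (it commutes with $\dom$ and $\cod$). Applying the preceding lemma to both domain and codomain, minimality of the pair $(\theta^\btau, \theta^\bsigma)$ exactly corresponds to $\theta^\btau \inter \theta^\bsigma \in \tildev{\btau\inter \bsigma}$. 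Restricting yields the second claimed order-iso.

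I do not expect a serious obstacle: the lemma immediately preceding the statement does all the conceptual work, pinning down the visibility condition as the exact counterpart of minimality. The only minor care needed is on the symmetry side, to confirm that the preceding lemma (stated for configurations) transports to symmetries by applying it to $\dom$ and $\cod$ separately, using that $(- \inter -)$ commutes with $\dom$ and $\cod$ as granted by Proposition \ref{prop:main_interaction2}.
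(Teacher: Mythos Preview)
Your proposal is correct and matches the paper's approach exactly: the paper does not give a separate proof but simply states that the preceding lemma (characterising minimality via visibility of maximal events) allows one to refine Proposition~\ref{prop:main_interaction2} to this statement. Your write-up spells out precisely this restriction argument, including the correct handling of symmetries via $\dom$ and $\cod$.
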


Now, it remains to link configurations of $\btau \odot \bsigma$ with
configurations of $\btau \inter \bsigma$ with visible maximal events,
and likewise for symmetries. First, we fix a few notations.
Any configuration $x \in \conf{\btau \inter \bsigma}$ yields a
configuration of the composition, its \textbf{hiding}, defined as
${{\proj{x}}} = x \cap \ev{\btau \odot \bsigma} \in \conf{\btau \odot
\bsigma}$. Reciprocally, if $x \in \conf{\btau \odot \bsigma}$ is a configuration
of the composition, its \textbf{witness} is defined as
\[
[x]_{\btau \inter \bsigma} = \{m \in \ev{\btau \inter \bsigma} \mid
\exists n \in x,~m\leq_{\btau \inter \bsigma} n\} \in \conf{\btau \inter
\bsigma}\,.
\]

The next point we make, is that interactions with visible maximal events
are exactly those arising as witnesses of configurations of the
composition. More precisely, we have:

\begin{lem}\label{lem:char_comp_half2}
There are order-isomorphisms compatible with $\dom, \cod$ and display
maps:
\[
\begin{array}{rcrclcl}
\proj{(-)} &:& \confv{\btau \inter \bsigma} &\simeq& \conf{\btau \odot
\bsigma} &:& [-]_{\btau \inter \bsigma}\\
\proj{(-)} &:& \tildev{\btau \inter \bsigma} &\simeq& \tilde{\btau \odot
\bsigma} &:& [-]_{\btau \inter \bsigma}
\end{array}
\]
\end{lem}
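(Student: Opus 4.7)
The plan is to verify, in order, that (i) the two maps are well-defined on configurations, (ii) they are mutually inverse, (iii) they are monotone, (iv) they extend to symmetries with the same properties, and (v) they commute with $\dom$, $\cod$ and display maps.

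First, I would check well-definedness. For $x \in \confv{\btau \inter \bsigma}$, the set $\proj{x} = x \cap \ev{\btau \odot \bsigma}$ is down-closed for $\leq_{\btau \odot \bsigma}$ because this order is defined as the restriction of $\leq_{\btau \inter \bsigma}$; consistency transports along the same restriction. Conversely, for $x \in \conf{\btau \odot \bsigma}$, the witness $[x]_{\btau \inter \bsigma}$ is down-closed by construction. Consistency requires a short argument: two events in $[x]_{\btau \inter \bsigma}$ lie below visible events in $x$, and any conflict between them would propagate by conflict inheritance (via Lemma~\ref{lem:imc_mconf_inter} and definitions in $\btau \inter \bsigma$) to a conflict between events of $x$ in $\btau \odot \bsigma$, contradicting $x \in \conf{\btau \odot \bsigma}$. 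The resulting witness has visible maximal events by definition.

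Next, I would check that the maps are mutually inverse. For $x \in \confv{\btau \inter \bsigma}$, the inclusion $[\proj{x}]_{\btau \inter \bsigma} \subseteq x$ follows from down-closure of $x$ and visibility of maximal events; conversely, every $m \in x$ lies below some maximal (hence visible) event, so $m \in [\proj{x}]_{\btau \inter \bsigma}$. For $x \in \conf{\btau \odot \bsigma}$, the equality $\proj{[x]_{\btau \inter \bsigma}} = x$ is immediate: the hiding keeps exactly the visible events, which are the events of $x$ itself (any event of $[x]_{\btau \inter \bsigma}$ that is visible must lie in $x$, using that $x$ is down-closed in $\btau \odot \bsigma$ and consists only of events visible in $\btau \inter \bsigma$). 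Monotonicity of both directions with respect to inclusion is direct.

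For symmetries, I would first transport the configuration-level bijections. Given $\theta : x \sym_{\btau \inter \bsigma} y$ in $\tildev{\btau \inter \bsigma}$, set $\proj{\theta} = \theta \cap (\ev{\btau \odot \bsigma})^2$; since symmetries are order-isomorphisms preserving polarities, and visibility is preserved along symmetries (because display maps in $\btau \inter \bsigma$ are symmetry-preserving, and ``occurring in $A$ or $C$'' is a property of $\pr_{\btau \inter \bsigma}$), $\proj{\theta}$ is a bijection between $\proj{x}$ and $\proj{y}$ and belongs to $\tilde{\btau \odot \bsigma}$ by the definition of symmetries on the composition in Section~\ref{sec:cg}. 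Conversely, any $\theta' \in \tilde{\btau \odot \bsigma}$ is, by definition, the restriction of some $\theta \in \tilde{\btau \inter \bsigma}$ extending it to configurations with visible maximal events; uniqueness of this $\theta$ (needed to make the assignment well-defined) would come from applying the configuration-level uniqueness via Lemma~\ref{lem:map_ess_rep}, or directly from the fact that the extension is forced by down-closure and that each event in $[x]_{\btau \inter \bsigma}$ is determined by its image below a maximal event. Commutation with $\dom$, $\cod$ and the display maps then follows from the definitions: hiding commutes with everything by restriction, and the witness is characterized as the unique extension with the given boundary.

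The main obstacle will be the symmetry direction — specifically, showing that extending a symmetry of $\btau \odot \bsigma$ to the witnesses in $\btau \inter \bsigma$ is well-defined and lands in $\tildev{\btau \inter \bsigma}$. This requires carefully using the definition of $\tilde{\btau \odot \bsigma}$ (as the set of $\theta$ for which some $\theta' \supseteq \theta$ lies in $\tilde{\btau \inter \bsigma}$) together with the unique-extension property granted by condition \emph{thin} for $\bsigma$ and $\btau$, so that the chosen extension is canonical and inverse to $\proj{(-)}$. Everything else amounts to unfolding definitions.
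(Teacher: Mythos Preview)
Your outline follows the same route as the paper: hiding and witness are mutually inverse order-isomorphisms on configurations by direct verification, and the symmetry case is where the work lies. Two of your lemma citations are off, though, and one of them matters.

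For consistency of $[x]_{\btau\inter\bsigma}$, Lemma~\ref{lem:imc_mconf_inter} is about immediate causality, not conflict; it does not help here. The right argument is simpler: $\conflict_{\btau\odot\bsigma}$ is the restriction of $\conflict_{\btau\inter\bsigma}$, so events of $x$ are pairwise compatible in $\btau\inter\bsigma$, and by conflict inheritance their down-closure is consistent. This is minor.

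For the uniqueness of the symmetry witness, Lemma~\ref{lem:map_ess_rep} does not do what you want: it characterises when a function on configurations is represented by a map of ess, and says nothing about uniqueness of an extension of a given symmetry from $\btau\odot\bsigma$ to $\btau\inter\bsigma$. Your second suggestion, via \emph{thin}, is the correct intuition. The paper does not spell this out but invokes Lemma~3.33 of \cite{cg2}, which states precisely that two symmetries of $\btau\inter\bsigma$ agreeing on their visible parts must coincide; that lemma is where \emph{thin} (together with $\sim$-receptivity) is used. Once you have this uniqueness, well-definedness of $[-]_{\btau\inter\bsigma}$ on symmetries, monotonicity, and the inverse property all follow, exactly as you say.
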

\begin{proof}
For configurations, to $x \in \confv{\btau \inter \bsigma}$ we associate
its hiding $\proj{x} \in \conf{\btau \odot \bsigma}$. Reciprocally, to
$x \in \conf{\btau \odot \bsigma}$, we associate its witness $[x]_{\btau
\inter \bsigma} \in \confv{\btau \inter \bsigma}$. Those operations
preserve inclusion, and it is an elementary verification that they are
inverses. 

For symmetries, any $\theta : x \sym_{\btau \inter \bsigma} y$ must
preserve visible events, so it induces by \emph{hiding}
\[
\proj{\theta} : \proj{x} \sym_{\btau \odot \bsigma} \proj{y}
\]
a symmetry on the composition; and hiding preserves inclusion.
Reciprocally, if $\theta : x \sym_{\btau \odot \bsigma} y$ then by
definition there is $\theta \subseteq \theta' : x' \sym_{\btau \inter
\bsigma} y'$. Necessarily, $[x]_{\btau \inter \bsigma} \subseteq x'$ and
$[y]_{\btau \inter \bsigma} \subseteq y'$. Since $\theta'$ is an
order-iso, by \emph{restriction} we may assume \emph{w.l.o.g.}
$\theta' : [x]_{\btau \inter \bsigma} \sym_{\btau \inter \bsigma}
[y]_{\btau \inter \bsigma}$.

But the witness $\theta'$ is \emph{unique}: by Lemma
3.33 of \cite{cg2}, if $\theta'' : [x]_{\btau \inter \bsigma}
\sym_{\btau \inter \bsigma} [y]_{\btau \inter \bsigma}$ is such that
$\proj{\theta''} = \proj{\theta'} = \theta$, then $\theta' = \theta''$.
So to $\theta : x \sym_{\btau \odot \bsigma} y$ we associate
this unique $\theta' : [x]_{\btau \inter \bsigma} \sym_{\btau \inter
\bsigma} [y]_{\btau \inter \bsigma}$. Monotonicity and that
these operations are inverses also follow immediately from uniqueness of
the witness symmetry, \emph{i.e.} Lemma 3.33 of \cite{cg2}.
\end{proof}

Composing the order-isomorphisms of Proposition
\ref{prop:char_comp_half1} and Lemma \ref{lem:char_comp_half2}, we get:

\charcomp*
\begin{proof}
\emph{Existence.} Direct by Proposition
\ref{prop:char_comp_half1} and Lemma \ref{lem:char_comp_half2}.

\emph{Uniqueness.} As in the proof of Proposition
\ref{prop:main_interaction2}.
\end{proof}

This means that any configuration $x\in \conf{\btau \odot \bsigma}$ may
be written uniquely as $x^\btau \odot x^\bsigma$ for $x^\bsigma \in
\conf{\bsigma}$ and $x^\btau \in \conf{\btau}$ minimal causally
compatible -- note that from the construction of the order-isomorphism
in the proposition above, we then have $x^\btau \inter x^\bsigma =
[x^\btau \odot x^\bsigma]_{\btau \inter \bsigma}$. 

\subsubsection{$+$-covered case}\label{app:pluscov}
Finally, it remains to prove Proposition \ref{prop:comp_pcov}. 

In essence, Proposition \ref{prop:comp_pcov} is a specialization of
Proposition \ref{prop:char_conf_comp} to strategies. Consider therefore
from now on that $\bsigma : A \vdash B$ and $\btau : B \vdash C$ are
\emph{strategies}. First of all, we show that for $+$-covered
configurations we can omit the minimality assumption. 

\begin{lem}\label{lem:pcov1}
Consider $x^\bsigma \in \confp{\bsigma}$ and $x^\btau \in \confp{\btau}$
causally compatible $+$-covered.

Then, $x^\bsigma$ and $x^\btau$ are \emph{minimal} causally compatible.
\end{lem}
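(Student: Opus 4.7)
The plan is to exploit the characterization proved earlier in this appendix that a causally compatible pair $x^\bsigma, x^\btau$ is minimal if and only if every maximal event of the interaction $x^\btau \inter x^\bsigma$ is visible (occurs in $A$ or $C$). Granted this, it suffices to prove that under the $+$-covered hypothesis no maximal event of the interaction can occur in $B$.

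I would proceed by contradiction: suppose $m \in x^\btau \inter x^\bsigma$ is maximal and occurs in $B$. Then $m$ is synchronized, with projections $m_\bsigma \in x^\bsigma$ and $m_\btau \in x^\btau$ such that $\pr_\bsigma(m_\bsigma) = (2, b)$ and $\pr_\btau(m_\btau) = (1, b)$ for some $b \in \ev{B}$. The first substep is to transfer maximality across the projections. Suppose $m_\bsigma <_\bsigma n$ for some $n \in x^\bsigma$; since $\Pi_\bsigma(x^\btau \inter x^\bsigma) = x^\bsigma \parallel x^\btau_C$, there is $n' \in x^\btau \inter x^\bsigma$ with $\Pi_\bsigma(n') = (1, n)$. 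As $\Pi_\bsigma$ is a map of event structures with symmetry, Lemma \ref{lem:es_refl_caus} (local reflection of causality) forces $m <_{\btau \inter \bsigma} n'$, contradicting maximality of $m$. Hence $m_\bsigma$ is maximal in $x^\bsigma$, and a symmetric argument yields that $m_\btau$ is maximal in $x^\btau$.

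Since $x^\bsigma$ and $x^\btau$ are $+$-covered, both $m_\bsigma$ and $m_\btau$ must then be positive. But by definition of the arrow arena, $\pol_\bsigma(m_\bsigma) = \pol_{A\vdash B}(2,b) = \pol_B(b)$ while $\pol_\btau(m_\btau) = \pol_{B \vdash C}(1,b) = -\pol_B(b)$, so these two polarities are opposite, providing the desired contradiction.

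The argument is almost entirely bookkeeping once the visibility-of-maxima characterisation of minimality is in hand; the main (minor) obstacle is simply being careful with the polarity conventions across $A \vdash B$ and $B \vdash C$ to conclude that a synchronized event forces the two projections to have opposite polarities in their respective strategies. No deeper reasoning — such as invoking the deadlock-free lemma or any property of symmetry — is needed here.
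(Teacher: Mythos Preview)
Your proof is correct and follows essentially the same approach as the paper. The paper works directly from the definition of minimality (assuming a strictly smaller causally compatible pair with the same $A,C$ projections and picking a maximal event in the difference), whereas you invoke the earlier characterisation that minimality is equivalent to all maximal events of $x^\btau \inter x^\bsigma$ being visible; but the heart of both arguments is identical --- project a maximal synchronized event to both sides, observe the projections are maximal via local reflection of causality, and derive a contradiction from the fact that their polarities are opposite while $+$-coveredness forces both to be positive.
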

\begin{proof}
Seeking a contradiction, assume $y^\bsigma$ and $y^\btau$ are causally
compatible such that $y^\btau \inter y^\bsigma \subset x^\btau \inter x
^\bsigma$ with $x^\bsigma_A = y^\bsigma_A$ and $x^\btau_C = y^\btau_C$.
Without loss of generality, consider $m \in (x^\btau \inter x^\bsigma)
\setminus (y^\btau \inter y^\bsigma)$ maximal for $\leq_{\btau\inter
\bsigma}$. By hypothesis, $m$ occurs in $B$. Therefore, projecting
\[
\Pi_\bsigma(m) = (1, s)
\qquad
\qquad
\Pi_\btau(m) = (2, t)\,,
\]
maximality of $m$ in $x^\btau \inter x^\bsigma$ entails via Lemma
\ref{lem:es_refl_caus} that $s$ is maximal in $x^\bsigma$ and $t$ maximal in
$x^\btau$. But necessarily, $s$ and $t$ has dual polarities:
\emph{w.l.o.g.} say that $\pol_\bsigma(s) = +$ and $\pol_\btau(t) = -$.
So, $t$ is negative maximal in $x^\btau$, contradicting
that $x^\btau \in \confp{\btau}$ is $+$-covered.
\end{proof}

So in a synchronization between $+$-covered configurations,
the maximal events are visible as if they are synchronized, they will be
both maximal and negative for one of the two players. Resulting
configurations of the composition are automatically $+$-covered:

\begin{lem}\label{lem:pcov2}
For $x^\bsigma \in \confp{\bsigma}, x^\btau \in \confp{\btau}$ 
causally compatible, $x^\btau \odot x^\bsigma \in \confp{\btau
\odot \bsigma}$.
\end{lem}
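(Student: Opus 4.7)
\medskip

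The plan is to reason by contradiction. Suppose $m \in x^\btau \odot x^\bsigma$ is maximal for $\leq_{\btau \odot \bsigma}$ and has polarity $-$ in the composition $\btau \odot \bsigma : A \vdash C$. First, by Lemma~\ref{lem:pcov1}, the pair $(x^\bsigma, x^\btau)$ is minimal causally compatible, so Proposition~\ref{prop:char_conf_comp} applies and $x^\btau \odot x^\bsigma \in \conf{\btau \odot \bsigma}$ is well-defined, with witness $[x^\btau \odot x^\bsigma]_{\btau \inter \bsigma} = x^\btau \inter x^\bsigma$. Moreover, by minimality, all maximal events of $x^\btau \inter x^\bsigma$ (for $\leq_{\btau \inter \bsigma}$) occur in $A$ or $C$.

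Next, I would lift the maximality of $m$ from $x^\btau \odot x^\bsigma$ to $x^\btau \inter x^\bsigma$. Suppose $m' \in x^\btau \inter x^\bsigma$ with $m <_{\btau \inter \bsigma} m'$. If $m'$ is visible, then $m' \in x^\btau \odot x^\bsigma$, contradicting maximality of $m$ there. Otherwise, $m'$ occurs in $B$; choose $m''$ maximal in $x^\btau \inter x^\bsigma$ with $m' \leq_{\btau \inter \bsigma} m''$. By minimality, $m''$ is visible, so $m'' \in x^\btau \odot x^\bsigma$ with $m <_{\btau \odot \bsigma} m''$, again a contradiction. Hence $m$ is maximal in $x^\btau \inter x^\bsigma$ as well.

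Now I would conclude by a polarity calculation. Assume without loss of generality that $m$ occurs in $A$ (the case of $C$ is symmetric, replacing $\bsigma$ by $\btau$). Writing $\pr_{\btau \odot \bsigma}(m) = (1, a)$ with $a \in \ev{A}$, the assumption $\pol_{\btau \odot \bsigma}(m) = -$ gives $\pol_A(a) = +$, since the polarity on the left of $A \vdash C$ is flipped. Let $m_\bsigma = \Pi_\bsigma(m)$ (understood as the second component, since $m$ occurs in $A$). By Lemma~\ref{lem:es_refl_caus}, as $\Pi_\bsigma$ is a map of ess and $m$ is maximal in $x^\btau \inter x^\bsigma$, $m_\bsigma$ is maximal in $x^\bsigma$. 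But $\pr_\bsigma(m_\bsigma) = (1, a)$, so $\pol_\bsigma(m_\bsigma) = -\pol_A(a) = -$. Thus $m_\bsigma$ is a maximal negative event of $x^\bsigma$, contradicting $x^\bsigma \in \confp{\bsigma}$. The main obstacle is simply to track polarities carefully through the two polarity-flips induced by the $A^\perp$ components on each side; apart from that, the argument is a bookkeeping exercise leveraging Lemmas~\ref{lem:pcov1} and~\ref{lem:es_refl_caus}.
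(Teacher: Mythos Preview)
Your proof is correct and follows essentially the same approach as the paper: lift maximality of $m$ from $x^\btau \odot x^\bsigma$ to $x^\btau \inter x^\bsigma$, observe that such a maximal event is visible (in $A$ or $C$), project via $\Pi_\bsigma$ or $\Pi_\btau$ using Lemma~\ref{lem:es_refl_caus}, and conclude from $+$-coveredness of the relevant component. The only cosmetic differences are that the paper argues directly rather than by contradiction and treats the case $m$ in $C$ rather than $A$, and it leaves the appeal to Lemma~\ref{lem:pcov1} implicit.
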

\begin{proof}
Consider $m \in x^\btau \odot x^\bsigma$ maximal. This means that $m$ is
also maximal in $x^\btau \inter x^\bsigma = [x^\btau \inter
x^\bsigma]_{\btau \inter \bsigma}$. Necessarily, $m$ occurs in $A$ or
$C$, \emph{w.l.o.g.} assume it occurs in $C$. Then, $\Pi_\btau(m)$ has
the form $(2, t)$ where using Lemma \ref{lem:es_refl_caus}, necessarily $t$
is maximal (for $\leq_\btau$) in $x^\btau$. But then, since $x^\btau$ is
$+$-covered, $t$ is positive -- hence, $m$ is positive as well.
\end{proof}

So causally compatible $+$-covered $x^\bsigma
\in \confp{\bsigma}$ and $x^\btau \in \confp{\btau}$ are 
minimal, and their composition yields $x^\btau \odot
x^\bsigma \in \confp{\btau \odot \bsigma}$ $+$-covered. We 
prove the converse:

\begin{lem}\label{lem:pcov3_appendix}
Consider $x^\bsigma \in \conf{\bsigma}$ and $x^\btau \in \conf{\btau}$
minimal causally compatible.

If $x^\btau \odot x^\bsigma \in \confp{\btau \odot \bsigma}$ is
$+$-covered, so are $x^\bsigma \in \confp{\bsigma}$ and $x^\btau \in
\confp{\btau}$.
\end{lem}
\begin{proof}
Consider $s \in x^\bsigma$ maximal. Necessarily, there is a unique $m
\in x^\btau \inter x^\bsigma$ such that $\Pi_\bsigma(m) = (1, s)$.
Assume first that $m$ is maximal in $x^\btau \inter x^\bsigma$. As
$x^\btau \inter x^\bsigma = [x^\btau \odot x^\bsigma]_{\btau \inter
\bsigma}$, if $m$ is maximal it must be visible and maximal in $x^\btau
\odot x^\bsigma$. Therefore, it is positive by hypothesis, and $s$ is
positive.

Otherwise, assume $m$ is not maximal, so there is some $m
\imc_{\btau \inter \bsigma} n$. By Lemma \ref{lem:caus_int}, 
\[
\Pi_\bsigma(m) \imc_{\bsigma \parallel C} \Pi_\bsigma(n)
\qquad
\text{\emph{or}}
\qquad
\Pi_\tau(m) \imc_{A \parallel \btau} \Pi_\btau(n)\,.
\]

If this is the former, then there is $s \imc_\bsigma s'$ with $s' \in
x^\bsigma$, absurd by maximality of $s$. If this is the latter, then two
cases arise. If $m$ occurs in $A$, then $\Pi_\btau(m) = (1, a)$ and
$\Pi_\btau(n) = (1, a')$ with $a \imc_A a'$. Likewise, $\Pi_\bsigma(m) =
(1, s)$ and $\Pi_\bsigma(n) = (1, s')$. But then by Lemma
\ref{lem:es_refl_caus}, we must have $s <_\bsigma s'$ contradicting again the
maximality of $s$. Finally, if $m$ occurs in $B$, then $\Pi_\btau(m) =
(2, t)$ and $\Pi_\btau(n) = (2, t')$ with $t \imc_\btau t'$. We split
cases one last time, depending on the polarity of $t$ in $\btau$. If $t$
is negative, then $s$ is positive in $\bsigma$ as required. Otherwise,
by courtesy of $\btau$ we must have $\pr_\btau(t) \imc_{B\vdash C}
\pr_\btau(t')$. In particular, $t'$ must also occur in $B$ and we must
have $\Pi_\bsigma(n) = (1, s')$ for $s' \in x^\bsigma$, with moreover
$\pr_\bsigma(s) \imc_{A\vdash B} \pr_\bsigma(s')$. Therefore, again by
Lemma \ref{lem:es_refl_caus}, we must have $s <_\bsigma s'$ contradicting
the maximality of $s$. 

The symmetric reasoning shows that any $t \in x^\btau$ maximal
in $x^\btau$ is positive.
\end{proof}

We are almost in position to prove Proposition \ref{prop:comp_pcov} --
the only missing piece is \emph{uniqueness}:

\lempcovtrois*
\begin{proof}
We extend $\psi$ to all configurations and all symmetries,
and conclude via Lemma \ref{lem:map_ess_rep}.

Let $x \in \conf{\bsigma}$. Consider $x^+ \in \conf{\bsigma}$ minimal
such that $x^+ \subseteq^- x$. Necessarily, $x^+ \in \confp{\bsigma}$,
so we may take $\psi(x^+) \in \confp{\btau}$. Now, since $\psi$ is
compatible with display maps, 
\[
\pr_\btau(\psi(x^+)) \subseteq^- \pr_\bsigma(x)\,,
\]
therefore by receptivity and courtesy (see Lemma 3.13 from \cite{cg1}),
there is a unique $y \in \conf{\btau}$ such that $\psi(x^+) \subseteq^-
y$ and $\pr_\btau(y) = \pr_\bsigma(x)$; we set $\psi(x) = y$.

We must show this extended $\psi$ preserves inclusion; we show it
preserves covering, and distinguish the positive and negative cases.
First, consider configurations in $\conf{\bsigma}$: 
\[
x \longcov{s^+} y\,,
\]
which means $x^+ \subseteq^- z \longcov{s^+} y^+$. Now, by
hypothesis $\psi(x^+) \subseteq \psi(y^+)$, so $\psi(x^+) \subseteq
\psi(y)$. Moreover, this inclusion must contain exactly one positive
event, write it $t^+ \in \psi(y) \setminus \psi(x^+)$ -- 
necessarily, $\pr_\btau(t) = \pr_\bsigma(s)$. Now, notice
$\pr_\btau(t)$ is maximal in $\pr_\btau(\psi(y))$: indeed, 
\[
\pr_\btau(\psi(x)) = \pr_\bsigma(x) \longcov{\pr_\bsigma(s)}
\pr_\bsigma(y) = \pr_\btau(\psi(y))
\]
by hypothesis. So, by courtesy, $t$ is maximal in $\psi(y)$ as well.
So, we have $z = \psi(y) \setminus \{t\} \in \conf{\btau}$. Moreover, 
$\pr_\btau(z) = \pr_{\btau}(\psi(y)) \setminus \{\pr_\btau(t)\} =
\pr_\bsigma(x)$. So, $\psi(x^+) \subseteq^- z$ with $\pr_\btau(z) =
\pr_\bsigma(x)$, therefore $z = \psi(x)$; and $\psi(x) \cov \psi(y)$ as
required. Considering now a negative extension, \emph{i.e.}
\[
x \longcov{s^-} y\,,
\]
it follows that $x^+ = y^+$ and $\psi(x) \cov \psi(y)$ is immediate by
receptivity. Altogether we have
\[
\psi : \conf{\bsigma} \to \conf{\btau}
\]
compatible with display maps and preserving inclusion. Likewise we
construct $\psi^{-1} : \conf{\btau} \to \conf{\bsigma}$ preserving
inclusion from its action on $+$-covered configurations. That they are
inverses is immediate from $\psi$ being a bijection between
$+$-covered configurations, and receptivity.

Now, we consider the action of $\psi$ on symmetries. If $\theta \in
\tilde{\bsigma}$, consider $\theta^+ \in \conf{\bsigma}$ minimal s.t.
$\theta^+ \subseteq^- \theta$ -- recall that as a symmetry, $\theta$ is
an order-iso preserving polarities, so this is well-defined. Now,
$\theta^+ \in \tildep{\bsigma}$, so that we may take $\psi(\theta^+) \in
\tildep{\btau}$ as for configurations. Again, since $\psi$ is compatible
with display maps, we have 
\[
\pr_{\btau}(\psi(\theta^+)) \subseteq^- \pr_\bsigma(\theta)\,.
\]

By receptivity and courtesy of $\btau$, there are unique extensions of
$\dom(\psi(\theta^+))$ and $\cod(\psi(\theta^+))$ to
$\psi(\dom(\theta))$ and $\psi(\cod(\theta))$, projecting via
$\pr_\btau$ to $\dom(\pr_\bsigma(\theta))$ and
$\cod(\pr_\bsigma(\theta))$ respectively. We get  
$\psi(\theta^+) \subseteq^- \Omega \in \tilde{\btau}$
by iterating $\sim$-receptivity for $\btau$, 
such that $\pr_\btau(\Omega) = \pr_\bsigma(\theta)$ -- which
characterises $\Omega$ as the composition
\[
\psi(\dom(\theta)) \stackrel{\pr_\btau}{\simeq} \dom(\pr_\bsigma(\theta))
\stackrel{\pr_\bsigma(\theta)}{\sym_A} \cod(\pr_\bsigma(\theta))
\stackrel{\pr_\btau}{\simeq} \psi(\cod(\theta))\,,
\]
so a \emph{unique} extension of $\psi(\theta^+)$ projecting to 
$\pr_\bsigma(\theta)$  -- we fix $\psi(\theta) = \Omega$. 
Monotonicity is immediate from compatibility with $\dom$ and $\cod$ and
that $\psi$ preserves covering $\cov$ on configurations.
Likewise we extend $\psi^{-1}$ to all symmetries; that $\psi$ and
$\psi^{-1}$ are inverses follows as they preserve covering
and are inverses on configurations. 
\end{proof}

We may now conclude our final characterization of composition:

\charcompbis*
\begin{proof}
\emph{Existence.} Simply a restriction of the isomorphisms of
Proposition \ref{prop:char_conf_comp}. By Lemma \ref{lem:pcov1},
causally compatible $x^\bsigma \in \confp{\bsigma}$ and $x^\btau \in
\confp{\btau}$ are automatically minimal, and by Lemma \ref{lem:pcov2},
$x^\btau \odot x^\bsigma \in \confp{\btau \odot \bsigma}$ is
$+$-covered. Reciprocally, if $x^\btau \odot x^\bsigma \in \confp{\btau
\odot \bsigma}$ is $+$-covered, then by Lemma \ref{lem:pcov3_appendix}, so are
$x^\bsigma \in \confp{\bsigma}$ and $x^\btau \in \confp{\btau}$. Since
the isomorphism $(-\odot -)$ is compatible with $\dom$ and $\cod$
and symmetries are order-isomorphisms, there observations apply to
symmetries. 
\emph{Uniqueness.} By Lemma \ref{lem:pcov3_main}.
\end{proof}

\subsection{Charactering immediate causality}\label{app:charcaus}

\lemcharcaus*
\begin{proof}
If $m \imc_{\btau \inter \bsigma} m'$, then $x = [m]_{\btau
\inter \bsigma} \setminus \{m, m'\} \in \conf{\btau \inter \bsigma}$.
Then we have
\[
x 
\quad
=
\quad
 x^\btau \inter x^\bsigma 
\quad
\longcov{m} 
\quad
y^\btau \inter y^\bsigma
\quad
\longcov{m'}
\quad
z^\btau \inter z^\bsigma 
\quad
=
\quad
[m]_{\btau \inter \bsigma}
\]
in $\conf{\btau \inter \bsigma}$, inlining the order-isomorphism of
Proposition \ref{prop:main_interaction}. Let us focus first on 
\[
x^\btau \inter x^\bsigma 
\quad
\longcov{m} 
\quad
y^\btau \inter y^\bsigma\,.
\]

Since $(-\inter -)$ is an order-iso, this yields a covering in the
partial order of causally compatible pairs, ordered by pairwise
inclusion. By compatibility with display maps these inclusions add
exactly one event in $A$, in $B$, or in $C$. This yields three cases:
\emph{(a)} $m$ occurs in $A$, $x^\tau = y^\tau$, and $x^\bsigma \cov
y^\bsigma$ adds one event $s \in \ev{\bsigma}$; \emph{(b)} $m$ occurs in
$C$, $x^\bsigma = y^\bsigma$ and $x^\btau \cov y^\btau$ adds one event
$t \in \ev{\btau}$; or \emph{(c)} $m$ occurs in $B$, $x^\bsigma \cov
y^\bsigma$ adds one event $s$ and $x^\btau \cov y^\btau$ adds one event
$t$ with $\pr_\bsigma(s) = (2, b)$ and $\pr_\btau(t) = (1, b)$.

Now, back to considering $m$ and $m'$. If both occur in $C$, then we
have
\[
x^\btau \inter x^\bsigma 
\quad
\longcov{m}
\quad
y^\btau \inter x^\bsigma
\quad
\longcov{m'}
\quad
z^\btau \inter x^\bsigma\,,
\]
with $x^\btau \longcov{t} y^\btau \longcov{t'} z^\btau$. If $t<_\btau
t'$, then $t \imc_\btau t'$ and we are done. Otherwise, we also have
\[
x^\btau \, \longcov{t'} \, u^\btau \, \longcov{t} \, z^\btau\,,
\]
and as $t'$ occurs in $C$ we also have $x^\bsigma$ and $u^\btau$
causally compatible. Therefore
\[
x^\btau \inter x^\bsigma 
\quad
\longcov{n'}
\quad
u^\btau \inter x^\bsigma
\quad
\longcov{n}
\quad
z^\btau \inter x^\bsigma
\]
for some $n, n' \in \ev{\btau \inter \bsigma}$ since $-\inter -$ is an
order-isomorphism, so $\{n, n'\} = \{m, m'\}$. But since $u^\btau \inter
x^\bsigma \neq y^\btau \inter x^\bsigma$ we must have $n = m$ and $n' =
m'$, contradicting $m <_{\btau \inter \bsigma} m'$.

The case where $m$ and $m'$ both occur in $A$ is symmetric. If $m$
occurs in $A$ and $m'$ in $C$,
\[
x^\btau \inter x^\bsigma
\quad
\longcov{m}
\quad
x^\btau \inter z^\bsigma
\quad
\longcov{m'}
\quad
z^\btau \inter z^\bsigma
\]
with $x^\bsigma \longcov{s} z^\bsigma$ and $x^\btau \longcov{t}
z^\btau$. But then $x^\bsigma$ and $z^\btau$ are also causally
compatible, and
\[
x^\btau \inter x^\bsigma
\quad
\cov
\quad
z^\btau \inter x^\bsigma
\quad
\cov
\quad
z^\btau \inter z^\bsigma
\]
which as above contradicts $m<_{\btau \inter \bsigma} m'$. The case
where $m$ occurs in $C$ and $m'$ occurs in $A$ is symmetric. Now, assume
$m$ occurs in $A$ and $m'$ occurs in $B$. Then we have
\[
x^\btau \inter x^\bsigma 
\quad
\longcov{m}
\quad
x^\btau \inter y^\bsigma
\quad
\longcov{m'}
\quad
z^\btau \inter z^\bsigma
\]
where $x^\bsigma \longcov{s} y^\bsigma$, $x^\btau \longcov{t'} z^\btau$,
and $y^\bsigma \longcov{s'} z^\bsigma$. If $s <_\bsigma s'$, then $s
\imc_\bsigma s'$ and we are done. Otherwise, we also have $x^\bsigma
\cov u^\bsigma \cov z^\bsigma$, $u^\bsigma$ and $z^\btau$ are also
causally compatible, and
\[
x^\btau \inter x^\bsigma
\quad
\cov
\quad
z^\btau \inter u^\bsigma
\quad
\cov
\quad
z^\btau \inter z^\bsigma\,,
\]
contradiction. All cases with one event occurring in $B$ and the other
in $A$ or $C$ are symmetric. Finally, assume both $m$ and $m'$ occur in
$B$. In that case, we have
\[
x^\bsigma 
\,
\longcov{s}
\,
y^\bsigma
\,
\longcov{s'}
\,
z^\bsigma\,,
\qquad
\text{and}
\qquad
x^\btau
\,
\longcov{t}
\,
y^\tau
\,
\longcov{t'}
\,
z^\btau\,.
\]

If we have $s <_\bsigma s'$ then $s \imc_\bsigma s'$ and we are done,
and likewise for $t \imc_\btau t'$. Otherwise, 
$x^\bsigma \cov u^\bsigma \cov z^\bsigma$ and $x^\btau \cov u^\btau \cov
z^\btau$, and
$
x^\btau \inter x^\bsigma
\cov
u^\btau \inter u^\bsigma
\cov
z^\btau \inter z^\bsigma$,
contradiction.
\end{proof}

\begin{lem}\label{lem:imc_aux}
Consider $m, m' \in \ev{\btau \inter \bsigma}$ such that $m
\imc_{\btau\inter \bsigma} m'$.

Then, if $m_\bsigma <_\bsigma m'_\bsigma$ we have $m_\bsigma
\imc_\bsigma m'_\bsigma$, and likewise for $\btau$.
\end{lem}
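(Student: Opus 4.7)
The plan is to reason by contradiction using the fact that the projection $\Pi_\bsigma : \btau \inter \bsigma \to \bsigma \parallel C$ is a map of event structures (as per Proposition \ref{prop:main_interaction2}) and hence \emph{locally reflects causality} via Lemma \ref{lem:es_refl_caus}. More concretely, suppose $m \imc_{\btau \inter \bsigma} m'$ and $m_\bsigma <_\bsigma m'_\bsigma$, but assume for contradiction that this causal link is not immediate in $\bsigma$, so that there is $s \in \ev{\bsigma}$ with $m_\bsigma <_\bsigma s <_\bsigma m'_\bsigma$ strictly.

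Consider then $x = [m']_{\btau \inter \bsigma} \in \conf{\btau \inter \bsigma}$, which by Proposition \ref{prop:main_interaction2} decomposes uniquely as $x^\btau \inter x^\bsigma$ with $x^\bsigma \in \conf{\bsigma}$; by construction, both $m_\bsigma$ and $m'_\bsigma$ lie in $x^\bsigma$ (they arise as $\Pi_\bsigma$-projections of $m$ and $m'$). Since $x^\bsigma$ is down-closed and $s <_\bsigma m'_\bsigma$, we have $s \in x^\bsigma$, and hence by the bijection induced by $\Pi_\bsigma$ between $x$ and $x^\bsigma \parallel x^\btau_C$, there is a (unique) $n \in x$ with $n_\bsigma = s$.

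Now $m, n, m'$ all lie in $x \in \conf{\btau \inter \bsigma}$, and we have $\Pi_\bsigma(m) <_{\bsigma \parallel C} \Pi_\bsigma(n) <_{\bsigma \parallel C} \Pi_\bsigma(m')$ with pairwise distinct images (by strictness of $<_\bsigma$). Applying Lemma \ref{lem:es_refl_caus} to $\Pi_\bsigma$ on the configuration $x$ yields $m \leq_{\btau \inter \bsigma} n \leq_{\btau \inter \bsigma} m'$, and the strictness is recovered from local injectivity of $\Pi_\bsigma$ on $x$. Hence $m <_{\btau \inter \bsigma} n <_{\btau \inter \bsigma} m'$, contradicting $m \imc_{\btau \inter \bsigma} m'$. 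The argument for $\btau$ via $\Pi_\btau$ is symmetric.

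The proof is essentially straightforward once one has the characterization of interactions via the pullback projections, so no real obstacle is anticipated: the only mild subtlety is being careful that $n, m, m'$ all live in a \emph{common} configuration, which is needed to invoke Lemma \ref{lem:es_refl_caus}, but this comes for free by placing them all in $[m']_{\btau \inter \bsigma}$.
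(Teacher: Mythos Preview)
Your proof is correct and reaches the same conclusion as the paper, though the packaging differs slightly. The paper argues directly via the covering characterisation: from $m \imc_{\btau\inter\bsigma} m'$ one gets a chain $x^\btau\inter x^\bsigma \longcov{m} y^\btau\inter y^\bsigma \longcov{m'} z^\btau\inter z^\bsigma$, which projects to $x^\bsigma \longcov{m_\bsigma} y^\bsigma \longcov{m'_\bsigma} z^\bsigma$ in $\conf{\bsigma}$ (using that $(-\inter-)$ is an order-iso), and then observes that in a two-step covering chain, if the two added events are causally ordered they must be immediately so. Your version instead lifts a putative intermediate event $s\in\ev{\bsigma}$ back into the interaction via $\Pi_\bsigma$ and invokes Lemma~\ref{lem:es_refl_caus} to derive $m <_{\btau\inter\bsigma} n <_{\btau\inter\bsigma} m'$, contradicting immediacy. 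Both arguments are sound and rest on the same underlying fact (the projections are maps of event structures); the paper's is marginally more direct, while yours makes the appeal to local reflection of causality more explicit.
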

\begin{proof}
If $m \imc_{\btau\inter \bsigma} m'$, then there is $x^\btau \inter
x^\bsigma, y^\btau \inter y^\bsigma$ and $z^\btau \inter z^\bsigma$ in
$\conf{\btau \inter \bsigma}$ such that
\[
x^\btau \inter x^\bsigma
\quad
\longcov{m}
\quad
y^\btau \inter y^\bsigma
\quad
\longcov{m'}
\quad
z^\btau \inter z^\bsigma\,,
\]
but if $m_\bsigma$ and $m'_\bsigma$ are defined then
$x^\bsigma \longcov{m_\bsigma} y^\bsigma \longcov{m'_\bsigma}
z^\bsigma$. If $m_\bsigma <_\bsigma m'_\bsigma$,  $m_\bsigma
\imc_\bsigma m'_\bsigma$.
\end{proof}

\subsection{The Bang Lemma}\label{app:bang}

Now, we prove the \emph{bang lemma} from AJM games
\cite{ajm}. Fix $A$ and $B$ two concrete arenas with $B$ pointed, and
$\bsigma : \oc A \vdash \oc B$ a causal strategy.

By \emph{receptive}, for each $\grey{i} \in \mathbb{N}$, there is a
unique $q_{\grey{i}} \in \min(\bsigma)$ such that
$\pr_{\bsigma}(q_{\grey{i}}) = (2, (\grey{i}, b))$ the unique minimal
move of $B$ of copy index $\grey{i}$. Let us write 
\[
\ev{\bsigma_{\grey{i}}} = \{m \in \ev{\bsigma} \mid q_{\grey{i}}
\leq_\bsigma m\}
\]
for the set of events of $\bsigma$ above $q_{\grey{i}}$. Since $\bsigma$
is pointed, it follows that for $\grey{i}, \grey{j} \in \mathbb{N}$
distinct, $\ev{\bsigma_{\grey{i}}}$ and $\ev{\bsigma_{\grey{j}}}$ are
disjoint. Likewise, since arenas are concrete, moves in immediate
conflict have the same predecessor -- it follows that if $m \in
\ev{\bsigma_{\grey{i}}}$ and $n \in \ev{\bsigma_{\grey{j}}}$, the
negative dependencies of $m$ and $n$ are compatible, hence $m$ and $n$
are compatible by \emph{determinism}. Therefore, we have
$\bsigma \iso \parallel_{\grey{i} \in \mathbb{N}} \bsigma_{\grey{i}}$
where $\bsigma_{\grey{i}}$ has a structure of ess directly imported from
$\bsigma$. Furthermore, $\bsigma_{\grey{i}} : \oc A \vdash B$
with the display map $\pr_{\bsigma_{\grey{i}}}$ defined in the obvious way.
The key argument is:

\begin{lem}\label{lem:bang_main}
For any $\grey{i}, \grey{j} \in \mathbb{N}$, we have $\bsigma_{\grey{i}}
\simstrat \bsigma_{\grey{j}}$.
\end{lem}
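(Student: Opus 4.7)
The plan is to construct a positive isomorphism $\varphi : \bsigma_{\grey{i}} \simstrat \bsigma_{\grey{j}}$ by exploiting the polarized symmetry structure of $\oc B$ together with the $\sim$-receptivity and thinness of $\bsigma$. The starting observation is that the transposition $\pi \in \varsigma(\mathbb{N})$ swapping $\grey{i}$ and $\grey{j}$ (and fixing every other copy index) induces on any $x^{\oc B} \in \conf{\oc B}$ a bijection $\theta_x : x^{\oc B} \sym_{\oc B} \pi_*(x^{\oc B})$ that reindexes only the minimal (hence negative) events of $\oc B$. By Definition~\ref{def:bang} and the definition of the polarized decomposition, this is a \emph{negative} symmetry, \emph{i.e.} $\theta_x \in \ntilde{\oc B}$. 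Tensoring with the identity on $\oc A$ yields negative symmetries of $\oc A \parallel \oc B$.

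First I would build the base of the correspondence. Applying $\sim$-receptivity of $\bsigma$ to the elementary extension of the empty symmetry by the pair $(\pr_\bsigma(q_{\grey{i}}), \pr_\bsigma(q_{\grey{j}}))$ (which is an Opponent symmetry since $\oc B$ is negative), one obtains $(q_{\grey{i}}, q_{\grey{j}}) \in \tilde{\bsigma}$ as a singleton symmetry, with $\pr_\bsigma(q_{\grey{j}})$ uniquely determined. Then I would inductively extend this singleton symmetry up through $\ev{\bsigma_{\grey{i}}}$: on negative extensions, $\sim$-receptivity produces a unique symmetric extension; on positive extensions, the \emph{thin} condition produces a unique symmetric extension. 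Since at every stage the extension is \emph{unique}, the resulting assignment $\varphi : \ev{\bsigma_{\grey{i}}} \to \ev{\bsigma_{\grey{j}}}$ is well-defined, \emph{i.e.} does not depend on the chosen order of extension, and the family of witnessing symmetries $\theta^{\bsigma}_x : x \sym_\bsigma \varphi(x)$ assemble coherently. That $\varphi$ lands in $\ev{\bsigma_{\grey{j}}}$ rather than some other component follows by tracking that $\pr_\bsigma(\varphi(q_{\grey{i}})) = \pr_\bsigma(q_{\grey{j}})$ and that $\bsigma$ is pointed, so above the image of $q_{\grey{i}}$ one stays above $q_{\grey{j}}$.

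Next, I would check that $\varphi$ is an isomorphism of event structures with symmetry. Bijectivity follows by performing the symmetric construction with the swap $\pi$ again (which is an involution) and using uniqueness to identify $\varphi \circ \varphi$ with the identity. Preservation of causality, conflict and the isomorphism family follows because each $\theta^{\bsigma}_x$ is a symmetry, hence a polarity-preserving order-isomorphism of configurations, and one may invoke Lemma~\ref{lem:map_ess_rep} (or a direct verification of preservation of covering and unions) to lift coherent data on configurations to a map of ess. Finally, I would verify the display condition: by construction $\pr_\bsigma(\varphi(m))$ differs from $\pr_\bsigma(m)$ only by the swap of copy indices $\grey{i}$ and $\grey{j}$ in $\oc B$, and since the display maps of $\bsigma_{\grey{i}}$ and $\bsigma_{\grey{j}}$ both strip the outer copy index on the right and target $\oc A \vdash B$, we in fact obtain $\pr_{\bsigma_{\grey{j}}} \circ \varphi = \pr_{\bsigma_{\grey{i}}}$ on the nose, which is stronger than $\sim^+$ and so certainly yields $\varphi \in (\bsigma_{\grey{i}} \simstrat \bsigma_{\grey{j}})$ in the sense of Definition~\ref{def:pos_iso}.

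The main obstacle will be the careful bookkeeping in the inductive construction of $\varphi$: one must argue that the choices made by interleaved applications of $\sim$-receptivity and thinness are \emph{globally coherent}, so that the resulting event-level function is independent of the covering path chosen through $\ev{\bsigma_{\grey{i}}}$. This is precisely where the uniqueness clauses in both conditions are indispensable, and where the apparatus of Section~\ref{app:interaction} on representable maps (Lemma~\ref{lem:map_ess_rep}) gives the cleanest packaging: it reduces the task to exhibiting a representable function $\conf{\bsigma_{\grey{i}}} \to \conf{\bsigma_{\grey{j}}}$ together with its symmetry-level counterpart $\tilde{\bsigma_{\grey{i}}} \to \tilde{\bsigma_{\grey{j}}}$, both of which are built directly from the pointwise action of the swap $\pi$ lifted through $\bsigma$.
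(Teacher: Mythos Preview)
Your overall architecture matches the paper's: build a configuration-level bijection by transporting along the negative symmetry that swaps copy indices $\grey{i}$ and $\grey{j}$ in $\oc B$, then invoke representability (Lemma~\ref{lem:map_ess_rep}) to get an ess-isomorphism. The paper does exactly this, packaging the transport step as an appeal to Lemma~B.4 of \cite{cg2}.

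However, there is a genuine error in your final display-map step, and it propagates backwards into the inductive construction. You claim that ``$\pr_\bsigma(\varphi(m))$ differs from $\pr_\bsigma(m)$ only by the swap of copy indices $\grey{i}$ and $\grey{j}$ in $\oc B$'', and hence that $\pr_{\bsigma_{\grey{j}}} \circ \varphi = \pr_{\bsigma_{\grey{i}}}$ on the nose. This is not justified. The \emph{thin} condition guarantees a unique extension $m' \in \ev{\bsigma}$ of the symmetry by a positive pair $(m,m')$, but it says nothing about $\pr_\bsigma(m')$ beyond the fact that $(\pr_\bsigma(m),\pr_\bsigma(m'))$ extends the displayed symmetry in $\tilde{\oc A \vdash \oc B}$. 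Since the initial moves of $\oc A$ are \emph{positive} for $\bsigma$, the strategy may well choose a different copy index there, so $\pr_\bsigma(m')$ and $\pr_\bsigma(m)$ can differ on the $\oc A$ side. Once that happens, your prescription ``identity on $\oc A$'' for the game-level target in subsequent $\sim$-receptivity steps becomes incoherent: an Opponent move in $\oc A$ justified by a reindexed Player move must itself be reindexed to yield a valid extension of the symmetry.

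The paper's proof avoids this pitfall by citing Lemma~B.4 of \cite{cg2}, which states precisely that for each $x_{\grey{i}} \in \conf{\bsigma_{\grey{i}}}$ there are a \emph{unique} $x_{\grey{j}}$ and $\psi : x_{\grey{i}} \sym_\bsigma x_{\grey{j}}$ such that $\pr_\bsigma\psi$ factors as $\theta^+ \circ \theta^-_{\grey{i},\grey{j}}$ with $\theta^-_{\grey{i},\grey{j}}$ the given negative swap and $\theta^+$ some \emph{positive} symmetry of $\oc A \vdash \oc B$. The positive factor $\theta^+$ absorbs exactly the possible reindexing of Player moves, and by \emph{$+$-transparent} it leaves the $\grey{j}$ index on $\oc B$ untouched, so $x_{\grey{j}} \in \conf{\bsigma_{\grey{j}}}$. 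The conclusion is then only $\pr_{\bsigma_{\grey{j}}} \circ \varphi \sim^+ \pr_{\bsigma_{\grey{i}}}$, which is all Definition~\ref{def:pos_iso} asks for. Your argument is easily repaired along these lines: drop the on-the-nose claim, let the game-level target in the induction be whatever is forced by the already-built symmetry (rather than a fixed ``identity on $\oc A$''), and observe that the resulting displayed symmetry, after stripping the outer $\oc B$ index, lies in $\ptilde{\oc A \vdash B}$.
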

\begin{proof}
We exploit Lemma \ref{lem:mapification} and build a (necessarily
representable) iso between the domains of
configurations, compatible with symmetry. 
Consider $x_{\grey{i}} \in \conf{\bsigma_{\grey{i}}}$, with 
\[
\pr_{\bsigma}(x_{\grey{i}}) = x_A \parallel \{\grey{i}\} \times x_B\,.
\]

From Definition \ref{def:concrete-arena2}, $x_A
\parallel \{\grey{i}\} \times x_B \sym_{\oc A \vdash \oc B}^- x_A
\parallel \{\grey{j}\} \times x_B$ with the obvious symmetry
$\theta_{\grey{i}, \grey{j}}^-$. We must transport $x_{\grey{i}}$ along this
negative symmetry $\theta_{\grey{i}, \grey{j}}$. By Lemma B.4 from
\cite{cg2}, there are unique $x_{\grey{j}} \in \conf{\bsigma}$ and
$\psi : x_{\grey{i}} \sym_\bsigma x_{\grey{j}}$ s.t.
$\pr_\bsigma \psi = \theta^+ \circ \theta_{\grey{i}, \grey{j}}^-$
for some 
\[
\theta^+ : x_A \parallel \{\grey{j}\} \times x_B \sym_{\oc A \parallel
\oc B}^+ y_A \parallel \{\grey{j}\} \times y_B
\]
where we know that $\grey{j}$ is unchanged, because of condition
\emph{$+$-transparent} of Definition \ref{def:concrete-arena2}.
Therefore, $x_\grey{j} \in \conf{\bsigma_{\grey{j}}}$ as
required. Monotonicity of this operation and the fact that it is a
bijection between configurations follow from the \emph{uniqueness}
clause for Lemma B.4 of \cite{cg2}; compatibility with symmetry follows
from composition with the symmetry $\psi$.

This induces an isomorphism of ess $\varphi : \bsigma_{\grey{i}}
\simstrat \bsigma_{\grey{j}}$ which we must still check is a positive
isomorphism. But for $x_{\grey{i}} \in \conf{\bsigma_{\grey{i}}}$, the
symmetry $\theta^+$ above entails 
\[
\pr_{\bsigma_{\grey{i}}}(x_{\grey{i}}) = x_A \parallel x_B
\quad
\stackrel{\varphi^+}{\sym_{\oc A \vdash B}^+} 
\quad
y_A \parallel y_B = \pr_{\bsigma_{\grey{j}}}(x_{\grey{j}})
\]
ensuring that the triangle commutes up to positive symmetry as required.
\end{proof}

Next we lift a positive isomorphism on one copy index to the whole
strategy: 

\begin{lem}\label{lem:bang_aux}
Consider $A$ and $B$ concrete $-$-arenas with $B$ pointed, and $\bsigma,
\btau : \oc A \vdash \oc B$.

If $\bsigma_{\grey{0}} \simstrat \btau_{\grey{0}}$, then $\bsigma
\simstrat \btau$.
\end{lem}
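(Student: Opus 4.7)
The plan is to decompose both strategies along the outer copy indices on $\oc B$ and build the desired positive isomorphism fibrewise.

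First I would record the parallel decomposition $\bsigma \iso \parallel_{\grey{i} \in \mathbb{N}} \bsigma_{\grey{i}}$ and $\btau \iso \parallel_{\grey{i} \in \mathbb{N}} \btau_{\grey{i}}$ established just before Lemma~\ref{lem:bang_main} (recall: pointedness forces disjointness of events lying above distinct $q_{\grey{i}}$, while compatibility of different fibres uses \emph{determinism} and the fact that immediate conflict in a concrete arena forces a common predecessor). Next, by Lemma~\ref{lem:bang_main} we have positive isomorphisms $\alpha_{\grey{i}} : \bsigma_{\grey{0}} \simstrat \bsigma_{\grey{i}}$ and $\beta_{\grey{i}} : \btau_{\grey{0}} \simstrat \btau_{\grey{i}}$ for all $\grey{i} \in \mathbb{N}$, with $\alpha_{\grey{0}}$ and $\beta_{\grey{0}}$ chosen to be identities. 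Composing with the hypothesised $\gamma : \bsigma_{\grey{0}} \simstrat \btau_{\grey{0}}$ yields isomorphisms of ess
\[
\varphi_{\grey{i}} = \beta_{\grey{i}} \circ \gamma \circ \alpha_{\grey{i}}^{-1} : \bsigma_{\grey{i}} \simstrat \btau_{\grey{i}}\,,
\]
and I would set $\varphi = \parallel_{\grey{i}\in \mathbb{N}} \varphi_{\grey{i}} : \bsigma \to \btau$, which is evidently a bijection of ess respecting causality, conflict and symmetry.

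It remains to check that $\pr_\btau \circ \varphi \sim^+ \pr_\bsigma$, and this is the only delicate point. For any $x \in \conf{\bsigma}$ we may decompose $x = \parallel_{\grey{i}\in \mathbb{N}} x_{\grey{i}}$ with $x_{\grey{i}} \in \conf{\bsigma_{\grey{i}}}$, and the componentwise positive symmetries $\pr_{\btau_{\grey{i}}} \varphi_{\grey{i}}(x_{\grey{i}}) \sym^+_{\oc A \vdash B} \pr_{\bsigma_{\grey{i}}}(x_{\grey{i}})$ assemble into a bijection on $\pr_\bsigma(x)$. The key observation is that by construction of the fibres, every event of $\bsigma_{\grey{i}}$ displays into the $\grey{i}$-th copy of $B$ on the right, and the same holds for $\btau_{\grey{i}}$; hence the outer permutation on $\oc B$ induced by $\pr_\btau \varphi(x)$ versus $\pr_\bsigma(x)$ is the identity on every non-empty fibre. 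This is precisely the requirement for lying in $\ptilde{\oc B}$ (Definition in Section~\ref{subsubsec:constr_arenas}), and since the positive isomorphism families on $\oc A^\perp$ and $\oc B$ both close under parallel composition of componentwise positive symmetries, the assembled bijection is in $\ptilde{\oc A \vdash \oc B}$ as required.

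The main obstacle is this last verification: one must check that the positive/negative polarized decomposition of symmetries interacts correctly with the $\parallel_{\grey{i}}$ assembly, in particular that concatenating positive symmetries on the fibres of $\oc B$ indexed by distinct $\grey{i}$ yields a positive symmetry rather than a general one. This reduces to the fact that such an assembled symmetry has underlying permutation $\pi = \id$ on non-empty components, so condition (3) of Definition~\ref{def:pol_sym} applies and places it in $\ptilde{\oc B}$; the symmetric argument on $(\oc A)^\perp$ treats the left-hand side.
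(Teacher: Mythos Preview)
Your approach is exactly the paper's---whose entire proof is the line ``we conclude by parallel composition''---and your expansion is essentially correct. There is one imprecision, though: the $(\oc A)^\perp$ side is \emph{not} handled by an argument symmetric to the $\oc B$ side. On $\oc B$ the assembled permutation is the identity because the decomposition is along the copies of $\oc B$; on $\oc A$ there is no such indexing, and you must check that the union $\bigcup_{\grey{i}} \theta^A_{\grey{i}}$ is a well-formed symmetry of $\oc A$ at all, i.e.\ that within a fixed $x\in\conf{\bsigma}$ the events of each copy $\grey{n}$ of $\oc A$ are sent to a single target copy. This holds because the fibres $x_{\grey{i}}$ occupy pairwise disjoint copy indices in $\oc A$: any $m \in x$ displaying to copy $\grey{n}$ lies (by Lemma~\ref{lem:es_refl_caus}) above the unique event of $x$ displaying to the minimal move of that copy---unique by \emph{locally pointed} for the concrete arena $A$---and pointedness then forces all of copy $\grey{n}$ into one fibre. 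With this in hand the assembled bijection on $\oc A$ has a well-defined permutation and componentwise lies in $\ntilde{A}$, hence in $\ntilde{\oc A} = \ptilde{(\oc A)^\perp}$ as required.
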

\begin{proof}
By Lemma \ref{lem:bang_main}, for $\grey{i} \in \mathbb{N}$, 
$\bsigma_{\grey{i}} \simstrat \bsigma_{\grey{0}} \simstrat
\btau_{\grey{0}} \simstrat \btau_{\grey{i}}$, we conclude by 
parallel composition.
\end{proof}

We may finally deduce the bang lemma:

\bang*
\begin{proof}
By Proposition \ref{prop:comp_pcov} and a reasoning analogous to
Proposition \ref{prop:cc_neutral}, we have
$\der_B \odot \bsigma \simstrat \bsigma_{\grey{0}}$.
Therefore, by Lemma \ref{lem:bang_aux}, for any two $\bsigma, \btau :
\oc A \vdash \oc B$, if $\der_B \odot \bsigma \simstrat \der_B \odot
\btau$, then $\bsigma \simstrat \btau$. The lemma follows directly from
that and the Seely category laws.
\end{proof}

\subsection{Expansion of Meager Strategies}
\label{app:exp_meager}

Consider $\bsigma : A$ parallel innocent. We first observe that any move
$m \in \ev{\bsigma}$ is determined by $\rf{m}$, along with the copy
index of its negative dependencies. An
\textbf{exponential slice} for $y \in \conf{\rf{\bsigma}}$ normal is an
assignment of copy indices for all negative questions of $y$ -- or more
precisely,
\[
\alpha : y^{\Qu,-} \to \mathbb{N}\,,
\]
with $y^{\Qu,-}$ the negative questions of $y$. To any $x \in
\conf{\bsigma}$ we have associated $\rf{x} \in \conf{\rf{\bsigma}}$. To
complete $\rf{x}$, we also associate to $x$ an exponential slice for
$\rf{x}$:
\[
\begin{array}{rcrcl}
\slice(x) &:& (\rf{x})^{\Qu,-} &\to& \mathbb{N}\\
&& n^- &\mapsto& \ind(\pr_\bsigma(\theta_x^{-1}(n)))\,.
\end{array}
\]

We now show how to reconstruct events of $\bsigma$ from $\rf{\bsigma}$
and an exponential slice.

\begin{lem}\label{lem:meager_ev2}
Consider $A$ a concrete arena, $\bsigma : A$ a causal strategy,
and $y \in \conf{\rf{\bsigma}}$.

For any $\alpha : y^{\Qu,-} \to \mathbb{N}$, there is a unique $x \in
\conf{\bsigma}$ such that $y = \rf{x}$, $\alpha = \slice(x)$, and $x
\sym_\bsigma y$. Moreover, the symmetry $\varphi_y : y \sym_\bsigma x$
is unique.
\end{lem}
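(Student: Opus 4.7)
The plan is to build $x$ and $\varphi_y : y \sym_\bsigma x$ jointly by induction along a covering chain $\emptyset = y_0 \cov y_1 \cov \dots \cov y_n = y$ in $\conf{\rf{\bsigma}}$, mimicking the inductive argument of Lemma \ref{lem:meager_ev} but in reverse, with the added data of the slice $\alpha$ guiding the choice of copy indices. At each stage I will have $x_i \in \conf{\bsigma}$ and a symmetry $\varphi_i : y_i \sym_\bsigma x_i$; the base case $i = 0$ is trivial (empty configurations, empty symmetry). Uniqueness of $(x, \varphi_y)$ will come from uniqueness of each inductive step.

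For the inductive step, assume $y_i \longcov{n} y_{i+1}$ and split on the polarity of $n$. If $n$ is positive, I apply condition \emph{thin} of Definition \ref{def:caus_strat} directly to $\varphi_i$: it yields a unique $m \in \ev{\bsigma}$ such that $\varphi_i \uplus \{(n, m)\} \in \tilde{\bsigma}$, and this defines $x_{i+1} = x_i \cup \{m\}$ and $\varphi_{i+1}$. If $n$ is negative, I first locate the required target $a' \in \ev{A}$ such that $\pr_\bsigma(\varphi_i) \vdash_{\tilde{A}} (\pr_\bsigma(n), a')$ with a prescribed copy index. Two subcases arise: if $n$ is minimal or an answer, condition \emph{$\An$-narrow} of Definition \ref{def:concrete-arena2} forces $\ind(a') = \grey{0}$, matching $\ind(\pr_\bsigma(n))$ since $n \in \ev{\rf{\bsigma}}$; existence of $a'$ follows from the \emph{extension} axiom for $\tilde{A}$, and uniqueness from \emph{jointly injective}. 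If $n$ is a non-initial negative question, I use \emph{$\Qu$-wide} to select the unique $a'$ with $\ind(a') = \alpha(n)$ and the correct predecessor and label. Once $a'$ is fixed, $\sim$-receptivity of $\bsigma$ supplies a unique $m \in \ev{\bsigma}$ with $\pr_\bsigma(m) = a'$ such that $\varphi_i \uplus \{(n, m)\} \in \tilde{\bsigma}$, giving $x_{i+1}$ and $\varphi_{i+1}$.

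It then remains to check the two equalities $y = \rf{x}$ and $\alpha = \slice(x)$. The latter is immediate from the construction: for each negative question $n^{\Qu,-} \in y$, the matching $m \in x$ was chosen with $\ind(\pr_\bsigma(m)) = \alpha(n)$, and $\varphi_y^{-1}$ coincides with $\theta_x$ by uniqueness in Lemma \ref{lem:meager_ev}. For the former, $\varphi_y : y \sym_\bsigma x$ witnesses that $x$ and $y$ are symmetric; since $y \in \conf{\rf{\bsigma}}$, uniqueness of $\rf{x}$ from Lemma \ref{lem:meager_ev} gives $y = \rf{x}$. Uniqueness of $x$ and of $\varphi_y$ propagates directly from uniqueness at each inductive step.

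The main obstacle I anticipate is handling the negative case cleanly: one must verify that invoking \emph{$\Qu$-wide} for a question actually produces an $a'$ not already occurring in the current image $\pr_\bsigma(x_i)$, so that $\sim$-receptivity can be applied to add a genuinely new event. This is where the assumption $y \in \conf{\rf{\bsigma}}$ (rather than merely $y \in \conf{\bsigma}$) earns its keep, via \emph{$+$-transparent} and \emph{$-$-transparent}: the symmetry $\varphi_i$ decomposes uniquely as a positive part over $A$ composed with a negative reindexing, so fresh copy indices for Opponent questions can always be chosen consistently.
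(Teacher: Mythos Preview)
Your inductive construction along a covering chain, using \emph{thin} for positive extensions and $\sim$-receptivity for negative ones with indices dictated by $\alpha$, is exactly what the paper means by ``same proof as for Lemma~\ref{lem:meager_ev}, setting up indices following $\alpha$ instead of $\grey{0}$''. Using \emph{thin} rather than \emph{extension} plus a separate global uniqueness argument (via Lemma~3.28 of \cite{cg2}, as in Lemma~\ref{lem:meager_ev}) is a harmless repackaging.

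The obstacle you flag is the right one, but your proposed resolution via the polarised decomposition of $\pr_\bsigma(\varphi_i)$ is off: $\alpha$ is \emph{given}, so there is no freedom to ``choose fresh indices consistently'' --- you must show the prescribed $a'$ is fresh. The clean fix directly mirrors the normality argument of Lemma~\ref{lem:meager_ev}. Suppose $a' = \pr_\bsigma(m)$ for some $m \in x_i$, and set $n' = \varphi_i^{-1}(m) \in y_i$. Since $\varphi_i$ is an order-iso and $m$ is negative, courtesy and local injectivity give $\just(n') = \just(n)$; \emph{transparent} gives $\lbl(\pr_\bsigma(n')) = \lbl(\pr_\bsigma(n))$; and crucially both $n, n'$ lie in $\ev{\rf{\bsigma}}$, so $\ind(\pr_\bsigma(n)) = \ind(\pr_\bsigma(n')) = \grey{0}$. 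Now \emph{jointly injective} forces $\pr_\bsigma(n) = \pr_\bsigma(n')$, hence $n = n'$ by local injectivity, contradicting $n \notin y_i$. So $y \in \conf{\rf{\bsigma}}$ is indeed what earns its keep here, but through \emph{jointly injective} rather than the polarised sub-families.
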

\begin{proof}
Same proof as for Lemma \ref{lem:meager_ev}, setting up indices
following $\alpha$ instead of $\grey{0}$.
\end{proof}

Just as Lemma \ref{lem:meager_ev}, for $\bsigma : A$ parallel innocent,
Lemma \ref{lem:meager_ev2} can be used to assemble an event $m \in
\ev{\rf{\bsigma}}$ and an exponential slice $\alpha : [m]^{\Qu,-} \to
\mathbb{N}$ into $n \in \ev{\bsigma}$ s.t. $\rf{n} = m$ and
$\slice([n]_\bsigma) = \alpha$. So together, Lemmas \ref{lem:meager_ev}
and \ref{lem:meager_ev2} establish a bijection between $\ev{\bsigma}$
and pairs $(m, \alpha)$ of $m \in \ev{\rf{\bsigma}}$ and an exponential
slice $\alpha : [m]_\bsigma^{\Qu,-} \to \mathbb{N}$. From this it seems
clear how to reconstruct $\bsigma$ from $\rf{\bsigma}$: first, we
reconstruct a partial order.

\begin{prop}\label{prop:iso-rf-exp}
For $A$ a concrete arena and $\bsigma : A^+$, we define a partial order
$\exp(\bsigma)$:
\[
\begin{array}{rcl}
\ev{\exp(\bsigma)} &=& \{(m, \alpha) \mid m \in
\ev{\bsigma}~\wedge~\alpha : [m]_\bsigma^{\Qu,-} \to \mathbb{N}\}\,\\
(m_1, \alpha_1) \leq_{\exp(\bsigma)} (m_2, \alpha_2) &\Leftrightarrow&
m_1 \leq_\bsigma m_2 ~\wedge~ \forall n^- \leq_\bsigma m_1,~\alpha_1(n)
= \alpha_2(n)\,.
\end{array}
\]

Then, for any $\bsigma : A$, $\bsigma$ and $\exp(\rf{\bsigma})$ are
isomorphic partial orders.
\end{prop}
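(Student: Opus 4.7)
The plan is to construct an explicit bijection $\Phi : \ev{\bsigma} \to \ev{\exp(\rf{\bsigma})}$ and verify that it preserves and reflects $\leq$. Throughout I assume, following the discussion preceding the statement, that $\bsigma$ is parallel innocent, so that the canonical map $\rf{-} : \ev{\bsigma} \to \ev{\rf{\bsigma}}$ and its companion symmetry $\theta_m : [m]_\bsigma \sym_\bsigma [\rf{m}]_{\rf{\bsigma}}$ from Lemma \ref{lem:meager_ev} are defined, together with the invariant that $\slice([m]_\bsigma)$ transported along $\theta_m$ yields an exponential slice $\alpha_m : [\rf{m}]_{\rf{\bsigma}}^{\Qu,-} \to \mathbb{N}$ (this transport is well-defined because symmetries are order-isos preserving polarities and, by Definition \ref{def:qa_arena}, Question/Answer labels).

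The definition is $\Phi(m) = (\rf{m}, \alpha_m)$. Bijectivity is then immediate: given any pair $(m', \alpha) \in \ev{\exp(\rf{\bsigma})}$, Lemma \ref{lem:meager_ev2} applied to $y = [m']_{\rf{\bsigma}}$ and the slice $\alpha$ produces a unique $x \in \conf{\bsigma}$ and a unique symmetry $\varphi_y : [m']_{\rf{\bsigma}} \sym_\bsigma x$; taking the top element of $x$ gives the unique antecedent of $(m', \alpha)$, and conversely any $m \in \ev{\bsigma}$ fits this description with $x = [m]_\bsigma$ and $\varphi_y = \theta_m^{-1}$ by uniqueness.

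The key technical step is preservation and reflection of causality. For preservation, if $m_1 \leq_\bsigma m_2$ then $[m_1]_\bsigma \subseteq [m_2]_\bsigma$; by the \emph{restriction} axiom for isomorphism families the symmetry $\theta_{m_2}$ restricts to a symmetry on $[m_1]_\bsigma$, and this restriction satisfies the hypotheses that characterize $\theta_{m_1}$ uniquely in Lemma \ref{lem:meager_ev}, hence must equal $\theta_{m_1}$. In particular $\rf{m_1} = \theta_{m_2}(m_1) \in [\rf{m_2}]_{\rf{\bsigma}}$, so $\rf{m_1} \leq_{\rf{\bsigma}} \rf{m_2}$, and since $\alpha_{m_1}$ and $\alpha_{m_2}$ are both read off the single symmetry $\theta_{m_2}$ they agree on $[\rf{m_1}]_{\rf{\bsigma}}^{\Qu,-}$. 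Reflection proceeds symmetrically: from $(\rf{m_1}, \alpha_{m_1}) \leq_{\exp(\rf{\bsigma})} (\rf{m_2}, \alpha_{m_2})$ the inclusion $[\rf{m_1}]_{\rf{\bsigma}} \subseteq [\rf{m_2}]_{\rf{\bsigma}}$ together with restriction gives some $x \subseteq [m_2]_\bsigma$ and a symmetry $x \sym_\bsigma [\rf{m_1}]_{\rf{\bsigma}}$; the hypothesis that the slices agree on negative questions below $\rf{m_1}$ forces $\slice(x) = \alpha_{m_1}$, so by the uniqueness clause of Lemma \ref{lem:meager_ev2} we obtain $x = [m_1]_\bsigma$ and hence $m_1 \leq_\bsigma m_2$.

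The principal obstacle is the careful coordination of the various symmetries $\theta_{m_i}$: showing that whenever $m_1 \leq_\bsigma m_2$, the symmetry $\theta_{m_2}$ literally restricts to $\theta_{m_1}$ on $[m_1]_\bsigma$. This rests on the uniqueness clauses in Lemmas \ref{lem:meager_ev} and \ref{lem:meager_ev2}, which themselves rely on Lemma 3.28 of \cite{cg2}: any discrepancy between two candidate symmetries would produce a non-identity positive symmetry on the meager strategy $\rf{\bsigma}$, forbidden by \emph{thin}. Everything else — well-definedness of the transported slice, compatibility with the discussion turning Lemmas \ref{lem:meager_ev} and \ref{lem:meager_ev2} into a bijection of events — is a bookkeeping exercise once this coordination is in place.
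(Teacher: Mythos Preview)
Your proof is correct and takes the same approach as the paper, which simply states that the result is a direct consequence of Lemmas~\ref{lem:meager_ev} and~\ref{lem:meager_ev2}; you have unpacked exactly what that direct consequence means by constructing the bijection $m \mapsto (\rf{m}, \alpha_m)$ and verifying order preservation and reflection via the uniqueness clauses of those lemmas. One small remark: $\slice([m]_\bsigma)$ is already defined in the paper as a function on $(\rf{[m]_\bsigma})^{\Qu,-} = [\rf{m}]_{\rf{\bsigma}}^{\Qu,-}$, so no transport along $\theta_m$ is needed to obtain $\alpha_m$.
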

\begin{proof}
Direct consequence of Lemmas \ref{lem:meager_ev} and
\ref{lem:meager_ev2}.
\end{proof}

It remains to
complete $\exp(\bsigma)$ into a causal strategy.
The display map is determined by a choice of copy index for positive
questions. Hence, for any $m^{\Qu,+} \in \ev{\bsigma}$, assume fixed
some
\[
f_m : \mathbb{N}^{[m]_\bsigma^{\Qu,-}} \to \mathbb{N}
\]
specifying, for each Player question, its copy index depending on
indices of Opponent questions in its causal dependency. For simplicity,
we assume this choice is globally injective, \emph{i.e.} $f_m, f_n$
have disjoint codomains for distinct $m, n \in \ev{\bsigma}$. 
The resulting strategy will not depend on the choice of the family
$(f_m)_{m \in \ev{\bsigma}^{\Qu,+}}$ up to positive isomorphism.

\begin{prop}
Consider $A$ a concrete arena, $\bsigma : A^+$ parallel innocent
causally well-bracketed.
We define a display map $\pr_{\exp(\bsigma)}$ for $\exp(\bsigma)$ by
induction, with image
\[
\begin{array}{rclcl}
\pr_{\exp(\bsigma)}(m^{\An}, \alpha) &=& a\,, &&\\
\pr_{\exp(\bsigma)}(m^{\Qu,-}, \alpha) &=& q &\text{s.t.}& \ind(q) =
\alpha(m)\,,\\
\pr_{\exp(\bsigma)}(m^{\Qu,+}, \alpha) &=& q &\text{s.t.}& \ind(q) =
f_m(\alpha)\,,
\end{array}
\]
where $a, q$ is the unique event of $A$ with label $\lbl(m)$,
predecessor $\pr_{\exp(\bsigma)}(\just(m, \alpha))$ with $\just(m,
\alpha)$ defined as $\just(m)$ with slice the restriction of $\alpha$,
label $\lbl(m)$; satisfying the additional constraint given. Further
components are:
\[
\begin{array}{rcl}
(m, \alpha) \conflict_{\exp(\bsigma)} (n, \beta)
&\Leftrightarrow& \text{$m \conflict_\bsigma n$, and $\alpha$ and
$\beta$ coincide on their common domain}\\
\theta \in \tilde{\exp(\bsigma)} &\Leftrightarrow& \text{$\theta : x \iso
y$ order-iso s.t. $\pi_1 = \pi_1 \circ \theta$.}
\end{array}
\]

Then, $\exp(\bsigma) : A$ is parallel innocent causally well-bracketed.
\end{prop}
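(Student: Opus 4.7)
The plan is to leverage the isomorphism between $\bsigma$ and $\exp(\rf{\bsigma})$ as partial orders (Proposition \ref{prop:iso-rf-exp}) together with the explicit description of symmetries to transfer all required properties. A key observation is that the second projection $\pi_1 : \ev{\exp(\bsigma)} \to \ev{\bsigma}$ behaves like an abstract meager form operation, and the symmetries of $\exp(\bsigma)$ are exactly reindexings leaving the meager part fixed.

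First I would verify that $\pr_{\exp(\bsigma)}$ is well-defined. For $(m^{\Qu,-}, \alpha)$ the target exists and is unique by \emph{$\Qu$-wide} of Definition \ref{def:concrete-arena2} applied to the arena $A$: we need an event of $A$ extending $\pr_{\exp(\bsigma)}(\just(m,\alpha))$ with copy index $\alpha(m)$. For answers we use \emph{$\An$-narrow}, and for positive questions we use \emph{$\Qu$-wide} together with global injectivity of the $f_m$ family to avoid collisions. I would then verify the axioms of a causal strategy mostly by transport through $\pi_1$: \emph{locally injective} follows from the uniqueness clause in Lemma \ref{lem:meager_ev2}; \emph{rule-abiding} follows by induction on configurations since the display map is constructed inductively to target events of $A$; \emph{courteous} and \emph{receptive} follow from the same properties on $\bsigma : A^+$ combined with the fact that the reindexing of copy indices is free along negative events by construction (\emph{$+$-transparent} in $A$).

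For the symmetry-related conditions the second projection's structure is essential. The symmetries of $\exp(\bsigma)$ are order-isomorphisms preserving $\pi_1$, which by construction are precisely the bijections freely reassigning exponential slices. This makes \emph{symmetry-preserving} and \emph{$\sim$-receptive} direct: given $\theta : x \sym_{\exp(\bsigma)} y$ and an Opponent extension in $A$ specifying a new copy index $\grey{n}$, the extension in $\exp(\bsigma)$ is obtained by extending the common meager part with a fresh event and setting the slice on that event to $\grey{n}$. The \emph{thin} condition holds because positive extensions in $\exp(\bsigma)$ have their copy index determined by the $f_m$ functions applied to the slice inherited from the domain of the symmetry, so the representative is unique.

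Finally, the last three conditions transfer through $\pi_1$ in a clean way: gccs of $\exp(\bsigma)$ project to gccs of $\bsigma$ (since $\pi_1$ preserves $\imc$ by construction of $\leq_{\exp(\bsigma)}$), and reciprocally any gcc of $\bsigma$ lifts uniquely once an exponential slice along it is fixed. Thus \emph{visibility} of $\exp(\bsigma)$ reduces to visibility of $\bsigma$ (the justifier of $(m,\alpha)$ is $(\just(m), \alpha|_{[\just(m)]_\bsigma^{\Qu,-}})$ which lies in any gcc containing $(m,\alpha)$ iff $\just(m)$ lies in the projected gcc). Similarly \emph{causal determinism} follows because immediate negative conflict in $\exp(\bsigma)$ projects to immediate negative conflict in $\bsigma$ by the definition of $\conflict_{\exp(\bsigma)}$. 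Pre-innocence, \emph{wb-threads}, and \emph{globular} all transfer by the same pattern: they are stated purely in terms of $\imc$ and polarity, both preserved by $\pi_1$. The main obstacle will be carefully managing the interaction between slices and symmetries in verifying \emph{extension} for $\tilde{\exp(\bsigma)}$ when the underlying extension in $\bsigma$ is positive: uniqueness of the extended symmetry (needed for \emph{thin}) requires that the positive event's copy index be a deterministic function of the slice restricted to its negative dependencies, which is precisely guaranteed by our fixed family $f_m$.
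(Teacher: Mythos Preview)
Your overall strategy---transporting the conditions of a causal strategy along the projection $\pi_1 : \exp(\bsigma) \to \bsigma$---is correct and is implicitly what the paper does too. Most of the clauses (courtesy, receptivity, visibility, pre-innocence, wb-threads, globular) do transfer cleanly for the reasons you give, and the symmetry/thin conditions are indeed routine once one notices that symmetries of $\exp(\bsigma)$ are precisely reindexings of exponential slices with $\pi_1$ fixed.

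There is, however, a genuine gap: you never verify that $\exp(\bsigma)$ is an event structure, and in particular that the declared conflict relation satisfies \emph{conflict inheritance}. This is precisely what the paper singles out as ``the only difficulty''. Concretely, suppose $(m,\alpha) \conflict_{\exp(\bsigma)} (n,\beta)$ and $(n,\beta) \leq_{\exp(\bsigma)} (n',\beta')$. We get $m \conflict_\bsigma n'$ from $\bsigma$, but we must also show that $\alpha$ and $\beta'$ coincide on $[m]_\bsigma^{\Qu,-} \cap [n']_\bsigma^{\Qu,-}$. We only know they coincide on $[m]_\bsigma^{\Qu,-} \cap [n]_\bsigma^{\Qu,-}$, so the issue is whether some negative question $p$ can satisfy $p \leq_\bsigma m$, $p \leq_\bsigma n'$, yet $p \not\leq_\bsigma n$. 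If so, $\alpha(p)$ and $\beta'(p)$ are unrelated and inheritance fails. Ruling this out genuinely uses \emph{globular}: such a $p$ would lie strictly inside one branch of a globule in $[n']_\bsigma$ (between the fork and the merge), but \emph{globular} forces every question in a branch to be answered within that branch, and an answer to a negative question is a positive answer, hence maximal in $\bsigma$---so the branch would have to terminate at $\just(m'')^{-1}$, forcing $p = \just(m'')$ to lie in every gcc by visibility, contradicting disjointness of the branches. Without this argument (or an equivalent one), the construction is not yet an event structure, so none of the subsequent verifications get off the ground. Your identified ``main obstacle'' (thin) is by contrast immediate from the definition of $f_m$.
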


The only difficulty is in handling conflict.  A positive iso $\varphi :
\bsigma_1 \simstrat \bsigma_2$ directly lifts to $\exp(\bsigma_1)
\simstrat \exp(\bsigma_2)$, applying $\varphi$ to moves while keeping
copy indices unchanged. Finally:

\begin{cor}
For $A$ a concrete arena, the operations $\rf{-}$ and $\exp(-)$ yield a
bijection between (positive isomorphism classes of) causally
well-bracketed strategies on $A$ and $A^+$. 
\end{cor}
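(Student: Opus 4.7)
The plan is to prove the two inverse laws $\exp(\rf{\bsigma}) \simstrat \bsigma$ and $\rf{\exp(\btau)} \simstrat \btau$, then observe that both $\rf{-}$ and $\exp(-)$ are well-defined on positive isomorphism classes (for $\rf{-}$ this is immediate because $A^+$ is closed under positive symmetry as noted in the paper; for $\exp(-)$, a positive iso $\varphi : \btau_1 \simstrat \btau_2$ lifts event-wise by mapping $(m,\alpha) \mapsto (\varphi(m), \alpha)$, and the paper explicitly remarks that the resulting strategy does not depend up to positive isomorphism on the choice of family $(f_m)$).

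For the first law, Proposition \ref{prop:iso-rf-exp} already supplies an order-isomorphism $\Phi : \ev{\bsigma} \iso \ev{\exp(\rf{\bsigma})}$, sending $m \in \ev{\bsigma}$ to the pair $(\rf{m}, \slice([m]_\bsigma))$ (using Lemmas \ref{lem:meager_ev} and \ref{lem:meager_ev2}). I would first upgrade $\Phi$ to an isomorphism of event structures by checking it preserves conflict: minimal conflicts in $\bsigma$ between $m_1, m_2$ come from minimal conflicts in $A$ (by Lemma \ref{lem:app_pres_neg_imm_confl}), hence correspond to the conflict in $\exp(\rf{\bsigma})$ as defined (same meager events, compatible slices on the shared prefix). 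For symmetry, any $\theta : x \sym_\bsigma y$ is by condition \emph{thin} a composition of a negative and a positive symmetry, and Lemma \ref{lem:rf-stab-sym} ensures $\rf{\theta(m)} = \rf{m}$, so $\theta$ transports to an order-iso in $\tilde{\exp(\rf{\bsigma})}$. Finally, to see $\Phi$ is a \emph{positive} iso, the display maps agree on negative events on the nose (by \emph{$+$-transparent}, copy indices of Opponent events match by construction of $\slice(-)$); on positive questions they agree up to the replacement of the actual Player copy index by $f_m$ of its Opponent slice, giving a positive symmetry $\pr_{\exp(\rf{\bsigma})} \circ \Phi \sim^+ \pr_\bsigma$ via \emph{$-$-transparent}.

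For the second law, fix $\btau : A^+$. The events of $\rf{\exp(\btau)}$ are those $(m,\alpha) \in \ev{\exp(\btau)}$ such that every negative predecessor has copy index $\grey{0}$; by construction of $\pr_{\exp(\btau)}$ on negative questions and \emph{$\An$-narrow} on answers, this forces $\alpha$ to be constantly $\grey{0}$. So the projection $(m,\alpha) \mapsto m$ gives a bijection $\ev{\rf{\exp(\btau)}} \simeq \ev{\btau}$, which I would verify preserves causality, conflict and symmetry directly from the definition of $\exp$. On the display side, negative events have copy index $\grey{0}$ in both (since we live inside $A^+$) and positive questions are sent to events with $\ind = f_m(\grey{0})$ in $\rf{\exp(\btau)}$ versus the original index of $m$ in $\btau$, which differ only on positive events, hence yield a positive isomorphism.

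The main obstacle will be the bookkeeping around the positive iso in the first law: when an event has a positive question as an ancestor, the slice recorded by $\Phi$ for a deeper Opponent event concerns only Opponent events in its negative prefix, not the chosen Player index above, so one must check that the pointwise reindexing of Player events by $f_m$ composes into a \emph{globally coherent} symmetry between the two display maps compatible with causality and conflict. This boils down to verifying that the axioms of Definition \ref{def:concrete-arena2}, together with condition \emph{thin}, propagate the local match between copy indices into a bona fide element of $\ptilde{A}$ on every configuration; the axiom \emph{$-$-transparent} is precisely what is needed here, so the argument should go through, but it requires care.
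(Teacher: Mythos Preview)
Your approach is essentially the same as the paper's, which is equally terse: it proves $\rf{\exp(\bsigma)} \simstrat \bsigma$ ``by construction'' and $\exp(\rf{\bsigma}) \simstrat \bsigma$ by lifting the order-isomorphism of Proposition~\ref{prop:iso-rf-exp} and checking it preserves the remaining structure. You have filled in the details of that verification correctly, including the identification of $\ev{\rf{\exp(\btau)}}$ with pairs $(m,\mathbf{0})$ and the conflict argument via Lemma~\ref{lem:app_pres_neg_imm_confl} (which applies because minimal conflicts in a causally deterministic strategy are always between negative events).

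One small slip: in your final paragraph you write that \emph{$-$-transparent} is ``precisely what is needed'' to obtain the positive symmetry. That is the wrong axiom. You first need \emph{transparent} (Definition~\ref{def:concrete-arena}) to see that the label-preserving order-iso between $\pr_\bsigma(x)$ and $\pr_{\exp(\rf{\bsigma})}(\Phi(x))$ lies in $\tilde{A}$ at all; then \emph{$+$-transparent} (not $-$) to conclude it lies in $\ptilde{A}$, since by construction of $\slice(-)$ the indices of negative events are preserved while only positive indices are replaced by the $f_m$ values. The axiom \emph{$-$-transparent} characterises negative symmetries and plays no role here.
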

\begin{proof}
For $\bsigma : A^+$, $\rf{\exp(\bsigma)} \simstrat \bsigma$ by
construction. For $\bsigma : A$, $\exp(\rf{\bsigma}) \simstrat \bsigma$
is obtained from the iso of Proposition \ref{prop:iso-rf-exp}
and a verification that this preserves further structure.
\end{proof}

\end{document}